\documentclass[a4paper]{article}

\usepackage[T1]{fontenc}
\usepackage{lmodern}
\usepackage{microtype}
\usepackage{enumitem}
\usepackage{amsmath}
\usepackage{amssymb}
\usepackage{amsthm}
\usepackage{mathrsfs}
\usepackage{comment}
\usepackage[numbers,sort]{natbib}
\usepackage[colorlinks=true,urlcolor=blue,citecolor=red,linkcolor=blue]{hyperref}
\usepackage{todonotes,booktabs,tabularx}

\newtheorem{theorem}{Theorem}[section]
\newtheorem{corollary}[theorem]{Corollary}
\newtheorem{lemma}[theorem]{Lemma}
\newtheorem{assumption}[theorem]{Assumption}
\newtheorem{remark}[theorem]{Remark}

\newcommand{\C}{\mathbb{C}}
\newcommand{\R}{\mathbb{R}}
\newcommand{\N}{\mathbb{N}}
\newcommand{\erfc}{{\rm erfc}}
\newcommand{\bigO}{\mathcal{O}}

\allowdisplaybreaks
\numberwithin{equation}{section}

\title{Asymptotics for partition functions of random normal matrix models with singularities}

\author{
Kohei Noda\\[2mm]
\small Institut de Recherche en Math\'ematique et Physique,\\
\small Universit\'e catholique de Louvain, Louvain-la-Neuve, B--1348, Belgium\\
\small \texttt{kohei.noda@uclouvain.be}
}

\date{}

\begin{document}

\maketitle

\begin{abstract}
We study determinantal planar Coulomb gases with a smooth radial potential, a smooth radial perturbation, jump-type singularities (including the bulk, semi-hard, hard, and soft-edge regimes),
and a hard wall region given by a centered disk, a disk complement, or
an annulus.  We obtain explicit asymptotic expansions of the partition functions up to and including the term of order $1$. Our results unify and generalize many of the earlier works.
\end{abstract}

\noindent
{\small{\sc AMS Subject Classification (2020)}: 41A60, 60B20, 60G55.}

\noindent
{\small{\sc Keywords}: Random normal matrices, two-dimensional Coulomb gases, hard walls, hole probabilities, disk counting statistics, linear statistics.}


\section{Introduction and statement of results}\label{section: introduction}

Two-dimensional Coulomb gases at inverse temperature $\beta=2$ form determinantal point processes and describe the eigenvalue distributions of random normal matrices, with the complex Ginibre ensemble as a basic example \cite{Ginibre,Forrester,BFreview}. 

There is now a substantial body of work on partition functions with radially symmetric weights with singularities. However, the results available in the literature are scattered and typically concern specific singular regimes, such as merging singularities in the bulk or the (soft) edge \cite{ABE2023,ABES2023,BC2022,C2021 FH,CL2023,FenzlLambert,L et al 2019 b,MMO25,Noda2025,CMV2016}, singularities at a hard edge \cite{ACCL1,ACCL2,C2021}, hole probabilities \cite{AFLS25,ForresterHoleProba,L et al 2019,SF2026}, and related special configurations \cite{ACC2023c,BKS2023}. Many of these works treat special types of potentials and focus on only one regime at a time. One of the main goals of this paper is to provide a unified framework that both recovers and generalizes many of the results established in the works cited above. In particular, we allow for general (smooth) potentials and for several types of singularities to occur simultaneously. As corollaries, we derive precise hole probabilities, as well as fluctuations of smooth radial statistics and joint disk counting statistics in various regimes.

\subsection{Background and related work}
\label{subsection:background}

\paragraph{Hole probabilities and free energy.}
The probability that a macroscopic region contains no particles is a basic
large-deviation observable for planar Coulomb gases.  Forrester's analysis
of the complex Ginibre ensemble \cite{ForresterHoleProba} provided early
precise disk-gap asymptotics.  Potential-theoretic approaches to more general
holes were developed by Adhikari and Reddy \cite{AR2017}, Adhikari
\cite{A2018}, and Charlier \cite{C2023}, see also \cite{ASZ2014,BLY2026,BF25} and the references therein. The results in \cite{AR2017, A2018, C2023} only concern the leading term.

For radially symmetric potentials, there is a product representation of the partition function, providing a starting point for precise free-energy and fluctuation asymptotics \cite{C2021 FH,C2021, BKS2023,ACC2023c}. In particular, when an annular hole region separates the occupied region into two components, the expansion of the partition function also contains a bounded Jacobi-theta term, as first proved in \cite{C2021}, see also \cite{ACCL2, ACC2023c}.

For unconstrained radial gases, Byun, Kang, and
Seo \cite{BKS2023} established precise free-energy expansions for
determinantal and Pfaffian ensembles.  Their work gives a rigorous radial
counterpart to the large-$n$ expansion studied by Zabrodin and Wiegmann
\cite{ZW2006}.  Ameur, Charlier, and Cronvall \cite{ACC2023c} subsequently
treated radial droplets with spectral gaps and outposts; see also
\cite{ACCL2,Noda2026}. The latter, corresponding to the ``birth of a new
connected component'', are not considered here. Also outside our assumptions
is the distinct ``birth of a gap'' regime caused by a vanishing equilibrium
density, studied by Allard and Lahiry \cite{AL25}.

Beyond the rotation invariant potentials, there has also been substantial recent progress in the study of two-dimensional logarithmic Coulomb gases with more general potentials, see \cite{R2025,Ser2024,Ser2023,AS2021,LS2017,B2025,BCMS2025,BSY2025,BYY2026,BFKL25,BFL25,HW2021,HW2024,HWpreprint,Jo2022,JV2023}.

\paragraph{Counting statistics (jump-type singularities).}
Counting statistics are equivalent to partition functions with jump-type singularities, since their moment generating functions can be expressed as ratios of determinants with discontinuous weights. Their fluctuations and large deviations were studied by Jancovici, Lebowitz, and Manificat \cite{JLM1993}. For the Ginibre ensemble, Lacroix-A-Chez-Toine et al.\ \cite{L et al 2019} studied intermediate deviations, while Fenzl and Lambert \cite{FenzlLambert} obtained precise deviation estimates and functional limit theorems. More recently, Byun and Lee \cite{BL2026} studied disk counting statistics for the real Ginibre ensemble.

Precise asymptotics for moment generating functions with circular jump singularities were obtained by Charlier \cite{C2021 FH}, and Charlier and Lenells \cite{CL2023} studied merging singularities at critical bulk and soft-edge scales. Hard-edge and multi-component counting statistics were analyzed by Ameur, Charlier, Cronvall, and Lenells \cite{ACCL1,ACCL2}, revealing hard and semi-hard regimes as well as oscillatory order-one corrections. 

There is also a complementary universality theory for the number variance.
Akemann, Byun, and Ebke \cite{ABE2023} established bulk and edge
universality for disk counting statistics under general radial potentials, see also \citep{ABES2023}.
Levi, Marzo, and Ortega-Cerd'{a} \cite{LMO2024} studied the counting statistics of general bulk regions for the Ginibre ensemble using norm representation formulas for Sobolev and bounded variation functions. Marzo, Molag, and Ortega-Cerd'{a} \cite{MMO25} extended this to nonradial potentials and more general counting domains. For a regular bulk domain $A$, their result reads
\[
\operatorname{Var} N_A
=\frac{\sqrt n}{2\pi\sqrt\pi}
  \int_{\partial A}\sqrt{\Delta Q(z)}\,|dz|+o(\sqrt n),
\]
for $N_A:=\#\{j:z_j\in A\subset\C\}$.
It explains the $n^{1/4}$ scale of bulk fluctuations for the counting statistics.  
In contrast to \cite{MMO25}, which only treats the second moment (the variance), our rotationally invariant setting allows us to derive the joint generating function up to and including the order-one term, and hence asymptotics for all cumulants of the counting statistics in the bulk, semi-hard-edge, and hard-edge regimes, including hard-wall effects and interactions between several counting statistics, see Theorem~\ref{theorem:annular-multivariate-mgf}.

For one-dimensional Coulomb gases, precise asymptotics for weights with root- and jump-type singularities have been extensively studied; see, for example, \cite{C2019,CG2021,FisherHartwig,DIK} and the references therein. In the planar setting, point-charge insertions were studied in \cite{B2025,BCMS2025,BSY2025,BYY2026,BFKL25,BFL25}, while Byun and Charlier \cite{BC2022} obtained the asymptotics of the partition functions for simultaneous circular root- and jump-type singularities in the Mittag--Leffler ensemble. The author \cite{Noda2025} extended their results to general radial potentials.

\paragraph{Fluctuations of smooth linear statistics.}
For the Ginibre ensemble, Rider and Vir\'{a}g \cite{RV2007} proved a central limit theorem for smooth linear statistics, whose variance consists of a bulk Dirichlet term and a boundary term. Ameur, Hedenmalm, and Makarov \cite{AHM2011,AHM2015} extended these results to general random normal matrices and obtained corrections to the mean via Ward identities. De~Bruyne et al.\ \cite{DDMS2024} derived asymptotic formulas for the higher-order cumulants of smooth linear statistics for rotationally invariant potentials.

Ameur, Charlier, and Cronvall \cite{ACC2023a,ACC2023c} studied fluctuations of linear statistics for two-dimensional Coulomb gases with spectral gaps, where Gaussian and Heine-type fluctuations arise. Ameur and Cronvall \cite{AC2026} further established such fluctuations in more general settings. These results show that smooth linear statistics in multi-component models need not have standard Gaussian fluctuations. For the annular hard region studied here, the difference between the values of the test function at the two walls gives rise to an additional fluctuation.

The recent work \cite{SF2026} by Shen and Forrester concerns Coulomb gases on a cylinder at $\beta=2$ with radially symmetric potentials and hard walls. Their results contain an analogue of \cite[Theorem~1.9 with $g=2$]{C2021} for a particular quadratic potential.

\subsection{The model and three hard-wall regions}

We normalize the area measure by
\[
dA(z)=\frac{dx\,dy}{\pi},\qquad z=x+iy\in\C,
\]
and use the normalized Laplacian
\[
\Delta=\partial_z\partial_{\bar z}
=\frac14\bigl(\partial_x^2+\partial_y^2\bigr).
\]
For a real-valued external potential $Q:\C\to\R$, set
\begin{equation}
\label{def of det partition function}
Z_n[Q]
:=
\int_{\C^n}
\prod_{1\leq j<k\leq n}|z_k-z_j|^2
\prod_{j=1}^n e^{-nQ(z_j)}\,dA(z_j),
\end{equation}
and define
\begin{equation}
\label{def of Pn soft}
d\mathbb P_n(\boldsymbol z_n)
:=
\frac{1}{Z_n[Q]}
\prod_{1\leq j<k\leq n}|z_k-z_j|^2
\prod_{j=1}^n e^{-nQ(z_j)}\,dA(z_j),
\qquad \boldsymbol z_n=(z_1,\ldots,z_n)\in\C^n.
\end{equation}

Throughout the paper, we assume that the function $Q$ satisfies to the following assumption.

\begin{assumption}\label{Assumption_Q}
The potential is radial, $Q(z)=q(|z|)$, and satisfies the following conditions.
\begin{enumerate}[label=\textup{(\arabic*)}]
\item\label{Assumption 1}
$\displaystyle \liminf_{|z|\to\infty}\frac{Q(z)}{2\log|z|}>1$.
\item\label{Assumption 2}
$Q$ is globally subharmonic, is $C^\infty$-smooth in a neighborhood of its droplet, and is strictly subharmonic there.
\item The droplet is a disk centered at the origin.
\end{enumerate}
\end{assumption}

The strict positivity of $\Delta Q$ at the origin is part of this
assumption.  In particular, the Ginibre potential $Q(z)=|z|^2$ is
included, whereas the Mittag--Leffler potential $|z|^{2b}$ for $b>0,b\neq 1$ is
not.

Under Assumption~\ref{Assumption_Q}, the empirical measure converges to the equilibrium measure
\[
d\mu_Q(z)=\Delta Q(z)\mathbf 1_S(z)\,dA(z).
\]
More generally, before assuming the part (3) of Assumption~\ref{Assumption_Q}, \cite[Section IV.6]{SaTo} reads
\begin{equation}
\label{def of droplet}
S=\{z\in\C:R_0\leq |z|\leq R_1\},
\end{equation}
where $R_0$ is the largest solution of $rq'(r)=0$ and $R_1$ is the smallest solution of $rq'(r)=2$, see also \cite{BKS2023}.  We take $R_0=0$ and write $R:=R_1$, so $S=\overline{\mathbb D}_R:=\{z\in\C:0\leq |z|\leq R\}$.  It is useful to introduce the equilibrium mass function
\begin{equation}
\tau_\rho:=\frac12\rho q'(\rho)
=\int_{\mathbb D_\rho}\Delta Q\,dA,
\qquad 0\leq \rho\leq R.
\end{equation}

For a hard wall region $U$ whose boundary lies in the interior of $S$, define
\begin{equation}
Q_U(z):=
\begin{cases}
Q(z),&z\notin U,\\
+\infty,&z\in U.
\end{cases}
\end{equation}
We consider the following three choices:
\begin{equation}
\label{def of hole region}
U_{\mathrm{an}}=\mathbb A_{\rho_1,\rho_2}:=\{\rho_1<|z|<\rho_2\},
\qquad
U_{\mathrm{dc}}=\C\setminus\overline{\mathbb D}_{\rho_1}
=\{z:|z|>\rho_1\},
\qquad
U_{\mathrm{cd}}=\mathbb D_{\rho_2},
\end{equation}
where $0<\rho_1<\rho_2<R$ whenever both radii occur.  The abbreviations $\mathrm{an}$, $\mathrm{dc}$, and $\mathrm{cd}$ stand for annulus, disk complement, and centered disk, respectively.  Notice that $U$ is the \emph{forbidden} set; the particles live in $\C\setminus U$.

Let $\nu_\rho$ denote the uniform probability measure on $\partial\mathbb D_\rho$.  The constrained equilibrium measures are obtained by balayage \cite{C2023}.  For the disk-complement and centered-disk holes they are, respectively,
\begin{align}
d\mu_{\mathrm{dc}}
&=\Delta Q\,\mathbf 1_{\{|z|<\rho_1\}}\,dA
+(1-\tau_{\rho_1})\,d\nu_{\rho_1},
\\
d\mu_{\mathrm{cd}}
&=\Delta Q\,\mathbf 1_{\{\rho_2<|z|<R\}}\,dA
+\tau_{\rho_2}\,d\nu_{\rho_2},
\end{align} 
see \cite[Theorem 2.10 and 2.14]{C2023}.
For an annular hole, put
\begin{equation}
\label{def of sigma star}
\sigma_\star
:=
\frac{q(\rho_2)-q(\rho_1)}{2\log(\rho_2/\rho_1)}
\end{equation}
and
\begin{equation}
\label{def of sigma1 sigma2}
\sigma_1:=\sigma_\star-\tau_{\rho_1},
\qquad
\sigma_2:=\tau_{\rho_2}-\sigma_\star.
\end{equation}
Then
\begin{equation}
\label{def of equi measure annulus hole}
d\mu_{\mathrm{an}}
=
\Delta Q\,\mathbf 1_{\{|z|<\rho_1\}\cup\{\rho_2<|z|<R\}}\,dA
+\sigma_1\,d\nu_{\rho_1}
+\sigma_2\,d\nu_{\rho_2},
\end{equation}
see \cite[Theorem 2.12]{C2023}.
Thus $\sigma_1$ and $\sigma_2$ are precisely the masses accumulated at the
two hard walls.  Indeed, since
\[
\sigma_\star
=\frac{1}{\log(\rho_2/\rho_1)}
  \int_{\rho_1}^{\rho_2}\tau_r\,\frac{dr}{r},
\]
the strict increase of $\tau_r$ implies
$\tau_{\rho_1}<\sigma_\star<\tau_{\rho_2}$ and hence
$\sigma_1,\sigma_2>0$.

\subsection{Partition functions of random normal matrix models with singularities}

Let
\[
\mathrm N(r):=\#\{j:|z_j|<r\}.
\]
Let $\alpha>-1$ and $s\in\R$. 
Let $\lambda\in C_c^6(\C)$ be a real-valued and rotation invariant, i.e., $\lambda(z)=\lambda(|z|)$, and let $I$ be a finite index set.  For vectors
$\boldsymbol s=(s_\ell)_{\ell\in I}\in\R^I$ and
$\boldsymbol r=(r_\ell)_{\ell\in I}$, define
\begin{align}
\label{def of DnQ}
D_{n,\alpha}(Q;s\lambda,\boldsymbol s,\boldsymbol r,U)
&:=
\int_{\C^n}
\prod_{1\leq j<k\leq n}|z_j-z_k|^2
\exp\bigg(\sum_{\ell\in I}s_\ell\mathrm N(r_\ell)\bigg)
\nonumber\\
&\hspace{3.7cm}\times
\prod_{j=1}^n |z_j|^{2\alpha}
e^{-nQ_U(z_j)+s\lambda(z_j)}\,dA(z_j).
\end{align}
With the smooth linear statistics and the jump-type singularities to be zero, we write
\[
Z_{n,U}^{(\alpha)}[Q]:=D_{n,\alpha}(Q;0,\boldsymbol 0,\boldsymbol r,U).
\]
The corresponding hard-wall ensemble is
\begin{equation}
\label{def of PnU}
d\mathbb P_{n,U}^{(\alpha)}(\boldsymbol z_n)
:=
\frac{1}{Z_{n,U}^{(\alpha)}[Q]}
\prod_{1\leq j<k\leq n}|z_j-z_k|^2
\prod_{j=1}^n|z_j|^{2\alpha}e^{-nQ_U(z_j)}\,dA(z_j),
\end{equation}
and its expectation is denoted by $\mathbb E_{n,U}^{(\alpha)}$.  Superscripts are omitted when $\alpha=0$.

Three important quantities are ratios of the determinants in \eqref{def of DnQ}.

\subsubsection*{Hole probabilities}

For the unconstrained ensemble \eqref{def of Pn soft},
\begin{equation}
\label{def of hole probability U}
\mathcal P_n[Q,U]
:=\mathbb P_n\bigl(\#\{j:z_j\in U\}=0\bigr)
=\frac{Z_{n,U}^{(0)}[Q]}{Z_n[Q]}.
\end{equation}

\subsubsection*{Moment generating function of smooth radial statistics}

For $X_n[\lambda]:=\sum_{j=1}^n\lambda(z_j)$,
\begin{equation}
\label{def of MGF smooth test f}
\mathcal G_{n,U}^{(\alpha)}(s;\lambda)
:=\mathbb E_{n,U}^{(\alpha)}\!\left[e^{sX_n[\lambda]}\right]
=\frac{D_{n,\alpha}(Q;s\lambda,\boldsymbol0,\boldsymbol r,U)}{D_{n,\alpha}(Q;0,\boldsymbol0,\boldsymbol r,U)}.
\end{equation}

\subsubsection*{Multivariate moment generating function of disk counting statistics}

For radii $\boldsymbol r=(r_\ell)_{\ell\in I}$,
\begin{equation}
\label{def of multivariate MGF disk counting statistics}
\mathcal E_{n,U}^{(\alpha)}(\boldsymbol s;\boldsymbol r)
:=\mathbb E_{n,U}^{(\alpha)}\!\left[
\exp\bigg(\sum_{\ell\in I}s_\ell\mathrm N(r_\ell)\bigg)
\right]
=\frac{D_{n,\alpha}(Q;0,\boldsymbol s,\boldsymbol r,U)}{D_{n,\alpha}(Q;0,\boldsymbol0,\boldsymbol r,U)}.
\end{equation}
Derivatives with respect to the $s_\ell$ yield joint cumulants of the disk counting statistics.

\subsection{Critical radii for several regimes}

Fix $m\in\N$.  To treat all relevant locations simultaneously, split the counting statistics into seven blocks:
\begin{align*}
\boldsymbol s_{\mathrm{b,in}}&=(s_1,\ldots,s_m),
&\boldsymbol s_{\mathrm{sh,in}}&=(s_{m+1},\ldots,s_{2m}),
&\boldsymbol s_{\mathrm{h,in}}&=(s_{2m+1},\ldots,s_{3m}),
\\
\boldsymbol s_{\mathrm{h,out}}&=(s_{3m+1},\ldots,s_{4m}),
&\boldsymbol s_{\mathrm{sh,out}}&=(s_{4m+1},\ldots,s_{5m}),
\\
\boldsymbol s_{\mathrm{b,out}}&=(s_{5m+1},\ldots,s_{6m}),
&\boldsymbol s_{\mathrm{s}}&=(s_{6m+1},\ldots,s_{7m}).
\end{align*}
Define
\begin{align}
\vec{\boldsymbol{s}}_{\mathrm{an}}
&:=(\boldsymbol s_{\mathrm{b,in}},\boldsymbol s_{\mathrm{sh,in}},\boldsymbol s_{\mathrm{h,in}},\boldsymbol s_{\mathrm{h,out}},\boldsymbol s_{\mathrm{sh,out}},\boldsymbol s_{\mathrm{b,out}},\boldsymbol s_{\mathrm{s}})\in\R^{7m},
\\
\vec{\boldsymbol{s}}_{\mathrm{dc}}
&:=(\boldsymbol s_{\mathrm{b,in}},\boldsymbol s_{\mathrm{sh,in}},\boldsymbol s_{\mathrm{h,in}})\in\R^{3m},
\\
\vec{\boldsymbol{s}}_{\mathrm{cd}}
&:=(\boldsymbol s_{\mathrm{h,out}},\boldsymbol s_{\mathrm{sh,out}},\boldsymbol s_{\mathrm{b,out}},\boldsymbol s_{\mathrm{s}})\in\R^{4m}.
\end{align}
Choose fixed points $0<a_1<\rho_1<\rho_2<a_2<R$.  The inner and outer bulk radii are
\begin{equation}
\label{def of mergin radii inside bulk}
r_\ell=a_1+\frac{t_\ell}{\sqrt{2n\Delta Q(a_1)}},
\qquad \ell=1,\ldots,m,
\qquad t_1<\cdots<t_m,
\end{equation}
and
\begin{equation}
\label{def of mergin radii outside bulk}
r_\ell=a_2+\frac{t_\ell}{\sqrt{2n\Delta Q(a_2)}},
\qquad \ell=5m+1,\ldots,6m,
\qquad t_{5m+1}<\cdots<t_{6m}.
\end{equation}
The semi-hard radii approach the walls from the allowed sides:
\begin{equation}
\label{def of merging radii inside semi hard}
r_\ell=\rho_1-\frac{t_\ell}{\sqrt{2n\Delta Q(\rho_1)}},
\quad \ell=m+1,\ldots,2m,
\quad t_{m+1}>\cdots>t_{2m}>0,
\end{equation}
and
\begin{equation}
\label{def of merging radii outside semi hard}
r_\ell=\rho_2+\frac{t_\ell}{\sqrt{2n\Delta Q(\rho_2)}},
\quad \ell=4m+1,\ldots,5m,
\quad 0<t_{4m+1}<\cdots<t_{5m}.
\end{equation}
The hard-edge radii lie at distance $n^{-1}$ from the walls:
\begin{equation}
\label{def of merging radii inside hard}
r_\ell=\rho_1\left(1-\frac{t_\ell}{n}\right),
\quad \ell=2m+1,\ldots,3m,
\quad t_{2m+1}>\cdots>t_{3m}\geq0,
\end{equation}
and
\begin{equation}
\label{def of merging radii outside hard}
r_\ell=\rho_2\left(1+\frac{t_\ell}{n}\right),
\quad \ell=3m+1,\ldots,4m,
\quad 0\leq t_{3m+1}<\cdots<t_{4m}.
\end{equation}
Finally, the natural soft-edge radii are
\begin{equation}
\label{def of merging radii soft edge}
r_\ell=R+\frac{t_\ell}{\sqrt{2n\Delta Q(R)}},
\qquad \ell=6m+1,\ldots,7m,
\qquad t_{6m+1}<\cdots<t_{7m}.
\end{equation}


Let $\boldsymbol r_{\mathrm{an}}=(r_1,\ldots,r_{7m})$, $\boldsymbol r_{\mathrm{dc}}=(r_1,\ldots,r_{3m})$, and $\boldsymbol r_{\mathrm{cd}}=(r_{3m+1},\ldots,r_{7m})$.  The three deformed partition functions studied in this paper are
\begin{align}
\label{def of Dnc annulus}
D_n^{(\mathrm{an})}
&:=D_{n,\alpha}(Q;s\lambda,\vec{\boldsymbol s}_{\mathrm{an}},\boldsymbol r_{\mathrm{an}},\mathbb A_{\rho_1,\rho_2}),
\\
\label{def of Dnc disk complement}
D_n^{(\mathrm{dc})}
&:=D_{n,\alpha}(Q;s\lambda,\vec{\boldsymbol s}_{\mathrm{dc}},\boldsymbol r_{\mathrm{dc}},U_{\mathrm{dc}}),
\\
\label{def of Dnc centered disk}
D_n^{(\mathrm{cd})}
&:=D_{n,\alpha}(Q;s\lambda,\vec{\boldsymbol s}_{\mathrm{cd}},\boldsymbol r_{\mathrm{cd}},\mathbb D_{\rho_2}).
\end{align}
We shall repeatedly use the combined smooth and point-charge perturbation
\begin{equation}
\label{def of mathsfkr}
\mathsf k(r):=s\lambda(r)+\alpha\ell(r),
\qquad \ell(r):=2\log r.
\end{equation}

\subsection{Main results}

The coefficients in the following theorems are listed explicitly in
Section~2.  Their notation records three independent sources:
$C_k^{(\mathrm n)}$ is the unconstrained random-normal-matrix contribution,
$C_{k,\mathrm{hole}}^{(\cdot)}$ is produced by the hard wall, and
$C_{k,\#}^{(\cdot)}$ is produced by the disk-counting insertions.  Under
Assumption~\ref{Assumption_Q} and the radii scalings above, the expansions
hold for fixed $\alpha$, $s$, $\boldsymbol s$, and $t_\ell$; the constants
implicit in the remainder may depend on these fixed parameters.

\begin{theorem}[Centered annular hard wall]
\label{theorem:asymptotic determinant of annulus}
As $n\to\infty$,
\begin{align}
\log D_n^{(\mathrm{an})}
&=C_1^{(\mathrm{an})}n^2
+C_2^{(\mathrm{an})}n\log n
+C_3^{(\mathrm{an})}n
+C_4^{(\mathrm{an})}\sqrt n
\nonumber\\
&\quad
+C_5^{(\mathrm{an})}\log n
+C_6^{(\mathrm{an})}
+\mathcal F_n[\lambda]
+\mathcal F_{n,\#}
+\mathcal O\!\left(\frac{(\log n)^3}{n^{1/12}}\right),
\end{align}
where
\begin{align}
\label{def of C1 Dn an}
C_1^{(\mathrm{an})}
&:=C_{1,\mathrm{hole}}^{(\mathrm{an})}+C_1^{(\mathrm n)},
\\
\label{def of C2 Dn an}
C_2^{(\mathrm{an})}
&:=C_{2,\mathrm{hole}}^{(\mathrm{an})}+C_2^{(\mathrm n)},
\\
\label{def of C3 Dn an}
C_3^{(\mathrm{an})}
&:=C_{3,\#}^{(\mathrm{an})}+C_{3,\mathrm{hole}}^{(\mathrm{an})}+C_3^{(\mathrm n)},
\\
\label{def of C4 Dn an}
C_4^{(\mathrm{an})}
&:=C_{4,\#}^{(\mathrm{an})}+C_{4,\mathrm{hole}}^{(\mathrm{an})}+C_4^{(\mathrm n)},
\\
\label{def of C5 Dn an}
C_5^{(\mathrm{an})}
&:=C_{5,\#}^{(\mathrm{an})}+C_{5,\mathrm{hole}}^{(\mathrm{an})}+C_5^{(\mathrm n)},
\\
\label{def of C6 Dn an}
C_6^{(\mathrm{an})}
&:=C_{6,\#}^{(\mathrm{an})}+C_{6,\mathrm{hole}}^{(\mathrm{an})}+C_6^{(\mathrm n)}.
\end{align}
The hole coefficients and the baseline filling-fraction term
$\mathcal F_n[\lambda]$ are given in
\eqref{def of C1 hole an}--\eqref{def of calF hole an}; the counting
coefficients and the counting-induced theta increment $\mathcal F_{n,\#}$
are given in \eqref{def of C3sharp an}--\eqref{def of C6sharp an} and
\eqref{def of calFNsharp}.  These bounded order-one terms are expressed
through the Jacobi theta function.
\end{theorem}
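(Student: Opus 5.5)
Because the weight is radial, the first step is to write $D_n^{(\mathrm{an})}$ as a product of one-dimensional moments. Andréief's identity turns the integral into $n!\det(\langle z^j,z^k\rangle)$, and radial symmetry makes the Gram matrix diagonal. This gives
\[
D_n^{(\mathrm{an})}=n!\prod_{j=0}^{n-1}h_j,\qquad
h_j=2\int_{[0,\rho_1]\cup[\rho_2,\infty)} r^{2j+1}e^{-nq(r)+\mathsf k(r)}\,\omega(r)\,dr,
\]
where $\omega(r)=\prod_{\ell}e^{s_\ell\mathbf 1_{\{r<r_\ell\}}}$ is piecewise constant. Then $\log D_n^{(\mathrm{an})}=\log n!+\sum_j\log h_j$, and all the work goes into asymptotics of the $h_j$, uniform in $j$.

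Write $j=n\tau$. On the full line, the exponent $2j\log r-nq(r)$ has its saddle at $r_\tau$, defined by $\tau_{r_\tau}=\tau$. The hard wall removes $(\rho_1,\rho_2)$, so I split the index range by the position of $r_\tau$.
\begin{enumerate}[label=(\roman*)]
\item $\tau<\tau_{\rho_1}-\delta$ or $\tau>\tau_{\rho_2}+\delta$. The saddle lies in the allowed region, and Laplace's method gives a full expansion with Gaussian corrections. Where $r_\tau$ falls within $O(n^{-1/2})$ of a bulk or soft radius $r_\ell$, the jump produces erfc factors.
\item $\tau_{\rho_1}+\delta<\tau<\tau_{\rho_2}-\delta$. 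The saddle is inside the forbidden annulus. $h_j$ is then the sum of two boundary contributions at $\rho_1^-$ and $\rho_2^+$, each of the form $e^{n(2\tau\log\rho_i-q(\rho_i))}\cdot\frac{\rho_i^{2j+2}}{n|\,\cdot\,|}(1+O(1/n))$. The hard-edge counting radii at distance $t_\ell/n$ enter here through explicit factors such as $1+(e^{s_\ell}-1)e^{-c t_\ell}$, with exponential-type integrals. Which wall dominates switches at $\tau=\sigma_\star$, where $2\tau\log\rho_1-q(\rho_1)=2\tau\log\rho_2-q(\rho_2)$. This is exactly the definition \eqref{def of sigma star}.
\item The transition windows $|\tau-\tau_{\rho_i}|\lesssim n^{-1/2}$. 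Here one gets the semi-hard regime, with erfc-type integrals cut at the wall and the semi-hard counting radii, all uniform in $t=\sqrt n(\tau-\tau_{\rho_i})$.
\end{enumerate}

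Summing the $\log h_j$ over each region:
\begin{enumerate}[label=(\roman*)]
\item The sums of smooth functions of $j/n$ are handled by Euler--Maclaurin. This yields the $n^2$, $n\log n$ (from Stirling together with the $\frac1n$ prefactors), $n$, and $\log n$ coefficients. It also reproduces $C_k^{(\mathrm n)}$ as the unconstrained part, which I would cite from the known radial free-energy expansion (\cite{BKS2023}-type computations), and the hole corrections.
\item Sums of erfc-type terms in the $n^{-1/2}$ windows become Riemann sums in $t$. They produce the $\sqrt n$ terms $C_{4,\#},C_{4,\mathrm{hole}}$ and order-one integrals. The errors come from Riemann-sum and Taylor remainders, controlled by choosing window widths $n^{-1/2}M$ with $M\sim\sqrt{\log n}$.
\end{enumerate}

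The order-one theta terms come from the indices with $\tau$ near $\sigma_\star$. There I write
\[
\log h_j=\max(\cdot)+\log\bigl(1+e^{-n|\Phi(j/n)|}\cdots\bigr),
\qquad \Phi(\tau)=2\tau\log(\rho_2/\rho_1)-(q(\rho_2)-q(\rho_1)),
\]
which is linear in $j$ with slope $2\log(\rho_2/\rho_1)$. Let $\theta_n=\{n\sigma_\star\}$ be the fractional part (also shifted by the $s_\ell$, $\alpha$, $\lambda$ and log-prefactor dependence). Then the sum over $j$ of $\log(1+x q^{2(j-n\sigma_\star)})$ on both sides is a two-sided $q$-product. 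By the Jacobi triple product it equals $\log\theta(\cdot)$ minus the $q$-Pochhammer normalization. This gives $\mathcal F_n[\lambda]+\mathcal F_{n,\#}$, with the counting and smooth perturbations entering as shifts of the theta argument. The same reindexing turns the piecewise-linear "max" part into quadratic-in-$\theta_n$ contributions that are absorbed there.

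I expect the main obstacle to be controlling all remainders uniformly across overlapping windows:
\begin{itemize}
\item the $n^{-1/2}$ semi-hard and bulk windows;
\item the $n^{-1}$ hard-edge scales;
\item the $O(1)$-in-$j$ theta window;
\item hard-edge counting radii interacting with the boundary expansion in region (ii).
\end{itemize}
Matching these requires intermediate cutoffs (e.g.\ $n^{-1/2+\epsilon}$ and $n^{1/3}$-scale widths). Optimizing the cutoff against the Riemann-sum and Laplace errors is what should produce the $(\log n)^3n^{-1/12}$ rate. I would follow the splitting strategy of \cite{C2021,ACCL2} and compute each piece separately, then verify that the boundary terms of adjacent pieces cancel.
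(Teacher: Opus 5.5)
Your architecture is the paper's. You use the radial factorization, split the index range by the position of $r_\tau$, apply Laplace plus Euler--Maclaurin in each range, and get the theta term from the geometric weights $(\rho_1/\rho_2)^{2|j-j_\star|}$ around $j_\star=n\sigma_\star$ via the triple product. You also need Jacobi's imaginary transformation to reach the modulus $\pi i/\log\frac{\rho_2}{\rho_1}$; that is where $\frac12\log\frac{\pi}{\log(\rho_2/\rho_1)}$ and the quadratic term in $C_{6,\mathrm{hole}}^{(\mathrm{an})}$ come from.

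The gap is in the step you single out as the main obstacle, and your concrete plan for it fails. Your regions (ii) and (iii) do not cover $n^{-1/2}\ll\tau-\tau_{\rho_1}\ll 1$, nor its mirror below $\tau_{\rho_2}$. There the saddle has crossed the wall by more than the Gaussian scale but is not yet macroscopically inside the hole. On this hard side the only usable expansion is the endpoint Laplace expansion in powers of $(n\eta_1^2)^{-1}$, with $\eta_1=\frac{2}{\rho_1}(\tau-\tau_{\rho_1})$. Truncated after $K$ corrections, it leaves a per-index error $O((n\eta_1^2)^{-K-1})$, which summed over $\eta_1\gtrsim M/\sqrt n$ is $O(\sqrt n\,M^{-2K-1})$. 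With your choice $M\sim\sqrt{\log n}$ this diverges for every fixed $K$. Meanwhile, the window expansion, kept through its $n^{-1/2}$ correction (the $n^{-1}$ correction grows like $x^4$), leaves $O(M^5n^{-1/2})$ over the $\asymp M\sqrt n$ window indices. Your later remark about cutoffs of size $n^{-1/2+\epsilon}$ points the right way, but it must be made specific. The paper takes $M=n^{1/12}$ and carries the endpoint expansion through $\mathfrak a_3/n^3$ uniformly down to $\eta_1\asymp M/\sqrt n$ (Lemma~\ref{lemma: g1+leq j leq j+1}). Then $\sqrt n\,M^{-7}$ and $M^5n^{-1/2}$ balance at $n^{-1/12}$.

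This intermediate range is not a technicality. With a power-law cutoff, both the window and the intermediate range produce large cutoff-dependent terms: $M^3\sqrt n$, $M\sqrt n\log M$, $\sqrt n/M$, $\sqrt n/M^3$, $\sqrt n/M^5$, $M^4$, $M^2$, $\log M$. You must show they cancel exactly. On the window side this uses the asymptotic expansion of $\log\frac12\mathrm{erfc}(y)$ as $y\to+\infty$ through order $y^{-6}$ (Lemmas~\ref{lemma: hard edge g1+ j j1+ first} and~\ref{lemma:sum of g1- g1+ part}, and their counting analogues). This range is also where the $\log n$ terms of the hole and counting parts originate. The sum $\frac1n\sum_j\eta_1^{-1}$ over $\eta_1\gtrsim M/\sqrt n$ produces a multiple of $\log(\sqrt n/M)$, and its $\log M$ part is cancelled by the window. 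This gives the $\lambda'(\rho_k)$ terms of $C_{5,\mathrm{hole}}^{(\mathrm{an})}$, and, through the $\mathsf T_1$ sums of the hard-edge radii, $C_{5,\#}^{(\mathrm{an})}$. Finally, the $(\log n)^3$ factor is not produced by optimizing the cutoff. It is inherited from the small-index range $j<n^{1/6}$ of the unconstrained expansion (Theorem~\ref{theorem:unconstrained-free-energy}); the hard-wall and counting ratios themselves are $O(n^{-1/12})$.
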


\begin{theorem}[Centered disk-complement hard wall]
\label{theorem:asymptotic determinant of disk complement}
As $n\to\infty$,
\begin{equation}
\log D_n^{(\mathrm{dc})}
=C_1^{(\mathrm{dc})}n^2
+C_2^{(\mathrm{dc})}n\log n
+C_3^{(\mathrm{dc})}n
+C_4^{(\mathrm{dc})}\sqrt n
+C_5^{(\mathrm{dc})}\log n
+C_6^{(\mathrm{dc})}
+\mathcal O\!\left(\frac{(\log n)^3}{n^{1/12}}\right),
\end{equation}
where
\begin{align}
\label{def of C1 Dn dc}
C_1^{(\mathrm{dc})}
&:=C_{1,\mathrm{hole}}^{(\mathrm{dc})}+C_1^{(\mathrm n)},
\\
\label{def of C2 Dn dc}
C_2^{(\mathrm{dc})}
&:=C_{2,\mathrm{hole}}^{(\mathrm{dc})}+C_2^{(\mathrm n)},
\\
\label{def of C3 Dn dc}
C_3^{(\mathrm{dc})}
&:=C_{3,\#}^{(\mathrm{dc})}+C_{3,\mathrm{hole}}^{(\mathrm{dc})}+C_3^{(\mathrm n)},
\\
\label{def of C4 Dn dc}
C_4^{(\mathrm{dc})}
&:=C_{4,\#}^{(\mathrm{dc})}+C_{4,\mathrm{hole}}^{(\mathrm{dc})}+C_4^{(\mathrm n)},
\\
\label{def of C5 Dn dc}
C_5^{(\mathrm{dc})}
&:=C_{5,\#}^{(\mathrm{dc})}+C_{5,\mathrm{hole}}^{(\mathrm{dc})}+C_5^{(\mathrm n)},
\\
\label{def of C6 Dn dc}
C_6^{(\mathrm{dc})}
&:=C_{6,\#}^{(\mathrm{dc})}+C_{6,\mathrm{hole}}^{(\mathrm{dc})}+C_6^{(\mathrm n)}.
\end{align}
The hole coefficients are given in \eqref{def of C1dc hole}--\eqref{def of C6dc hole}, and the counting coefficients are given in \eqref{def of C3sharp dc}--\eqref{def of C6sharp dc}.
\end{theorem}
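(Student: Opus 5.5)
The plan is to use the rotational symmetry to reduce $D_n^{(\mathrm{dc})}$ to a product of one-dimensional moments, and then to split the indices into three ranges according to where the saddle point of each moment falls relative to the wall.

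\emph{Product formula.} Since $Q_{U_{\mathrm{dc}}}$, $\lambda$, $|z|^{2\alpha}$ and the factors $e^{s_\ell\mathbf 1_{\{|z|<r_\ell\}}}$ are all radial, the monomials $z^j$ are orthogonal for the weight. Andreief's identity then gives
\[
D_n^{(\mathrm{dc})}=n!\prod_{j=0}^{n-1}h_j,\qquad
h_j=2\int_0^{\rho_1}r^{2j+1}e^{-nq(r)+\mathsf k(r)}\,\omega_n(r)\,dr,
\]
where $\omega_n(r)=\exp\bigl(\sum_{\ell=1}^{3m}s_\ell\mathbf 1_{\{r<r_\ell\}}\bigr)$. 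The integral stops at $\rho_1$ because $U_{\mathrm{dc}}=\{|z|>\rho_1\}$ is forbidden, so the particles live in $\overline{\mathbb D}_{\rho_1}$. For each $j$, write the integrand as $e^{-nF_j(r)}$ with $F_j(r)=q(r)-\frac{2j}{n}\log r$. The unconstrained critical point $r_j$ solves $\tau_{r_j}=j/n$, and since $\tau$ is strictly increasing it is unique.

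\emph{Index splitting.}
\begin{enumerate}[label=(\roman*)]
\item For $j\le n\tau_{\rho_1}-M\sqrt n\log n$, the critical point $r_j$ lies strictly inside the allowed disk. I would treat these $h_j$ by Laplace's method, with Gaussian-type corrections. This is exactly the unconstrained computation behind the coefficients $C_k^{(\mathrm n)}$ of the paper, now restricted to $j<n\tau_{\rho_1}$.
\item For $j\ge n\tau_{\rho_1}+M\sqrt n\log n$, we have $F_j'(\rho_1)<0$, so the integrand is increasing up to the wall. Substituting $r=\rho_1(1-u/n)$ turns $h_j$ into an exponential integral,
\[
h_j=\frac{2\rho_1^{2j+2}e^{-nq(\rho_1)+\mathsf k(\rho_1)}}{2j-n\rho_1q'(\rho_1)}\bigl(1+\mathcal O(n^{-1})\bigr),
\]
corrected by the weights $e^{s_\ell}$ on the intervals $u<t_\ell$ coming from the hard-edge radii $r_{2m+1},\dots,r_{3m}$. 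The product of these terms generates the hole coefficients: the $n^2$ term equals $\sum_j 2j\log\rho_1-nq(\rho_1)$ minus the unconstrained Laplace contribution, and the $n\log n$ and $\log n$ terms come from $\sum_j\log(2j/n-2\tau_{\rho_1})$.
\item In the transition window $|j-n\tau_{\rho_1}|\le M\sqrt n\log n$, the critical point sits within $\mathcal O(n^{-1/2}\log n)$ of $\rho_1$. Here I would rescale $r=\rho_1-\xi/\sqrt{2n\Delta Q(\rho_1)}$, which gives $h_j$ as a Gaussian times $\tfrac12\erfc$ of $\eta_j=(j-n\tau_{\rho_1})/\sqrt n$ up to normalization. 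The semi-hard radii $r_{m+1},\dots,r_{2m}$ insert piecewise factors $e^{s_\ell}$ into this erfc integral. The bulk radii $r_1,\dots,r_m$ near $a_1$ affect only the indices with $r_j$ near $a_1$, and there they produce the analogous erfc sums in the style of \cite{CL2023,C2021 FH}.
\end{enumerate}

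\emph{Summation and main obstacle.} I would convert each sum over $j$ into integrals by Euler--Maclaurin, namely Riemann sums in $j/n$ for regimes (i) and (ii), and Riemann sums in $\eta$ with step $n^{-1/2}$ for the windows. This produces the $\sqrt n$ term $C_4$ and the order-one term $C_6$; the latter contains integrals of $\log$ of erfc-combinations, which become the counting coefficients $C_{k,\#}^{(\mathrm{dc})}$. Unlike the annulus case, there is a single component, so no theta term appears. Finally I would check that the boundary contributions at the cutoffs $\pm M\sqrt n\log n$ from the three regimes cancel.

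The main obstacle is the uniform error control in the window and at its junctions with the two outer regimes. There one needs expansions of the erfc-type integrals, uniform in $\eta$, with relative errors summable over $\sqrt n\log n$ indices. Balancing the cutoff $M\sqrt n\log n$ against the Laplace remainders $\mathcal O((\log n)^k/n^{\theta})$ is what leads to the stated remainder $\mathcal O((\log n)^3n^{-1/12})$. I would first attempt this with third-order Taylor expansions of $F_j$ around $\rho_1$ and Riemann-sum error bounds of order $n^{-1/2}(\log n)^3$ per window, and then tune the cutoff.
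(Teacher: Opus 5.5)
Your plan uses the same machinery as the paper: a $\sqrt n$-scale window at $j\approx n\tau_{\rho_1}$ between a Laplace regime and a hard-endpoint regime, an extra bulk window at $a_1$, Euler--Maclaurin summation, and cancellation of the cutoff terms. The difference is in the bookkeeping. The paper never expands $\prod_j h_j$ in one sweep. It uses the exact identity
\[
\log D_n^{(\mathrm{dc})}=\log Z^{(\alpha)}_{n,s\lambda}[Q]+\log\mathcal P^{(\mathrm{dc})}_{n,s\lambda,\alpha}+\log\mathcal E^{(\mathrm{dc})}_{n,s\lambda,\alpha},
\]
where $\mathcal P$ is the product of the ratios $h^{(\mathrm{dc})}_{n,j}/h_{n,j}$ and $\mathcal E$ is the product of $1+\sum_\ell\omega_\ell F^{(\mathrm{dc})}_{n,j,\ell}$. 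It imports $Z$ from \cite{ACC2023c} and proves the expansions of $\mathcal P$ and $\mathcal E$ as separate theorems; the statement is then just their sum.

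This buys two things.
\begin{itemize}
\item For $j<g_{1,-}$ both ratios are trivial up to $\mathcal O(e^{-cM^2})$, apart from the $a_1$ window in $\mathcal E$. So the indices $j\lesssim n^{1/6}$ never have to be revisited. There $r_j\to0$, your Laplace step in (i) is not uniform, and both $\zeta'(-1)$, $\log G(1+\alpha)$ and the $(\log n)^3n^{-1/12}$ remainder come from Gamma-function asymptotics, not from the wall.
\item The coefficients come out already sorted as $C^{(\mathrm n)}+C_{\mathrm{hole}}+C_\#$, which is how the statement is phrased.
\end{itemize}
Your range (i) is a truncated unconstrained sum, not the unconstrained computation itself. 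Evaluating it leads you back to the paper's subtraction of $\sum_{j\ge j_{1,-}}\log h_{n,j}$ from the full unconstrained sum.

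The step that fails as written is (ii), because the relative error is not $\mathcal O(n^{-1})$ uniformly. Put $d=j-n\tau_{\rho_1}$. The endpoint expansion is $1-\mathfrak a_1/n+\mathfrak a_2/n^2-\mathfrak a_3/n^3+\cdots$ with $\mathfrak a_k/n^k\asymp(n/d^2)^k$. At the junction the first correction is therefore of size $M^{-2}$, and $\sum_{d\ge M\sqrt n}n/d^2\asymp\sqrt n/M$ is unbounded. You must keep and sum $\mathfrak a_1,\mathfrak a_2,\mathfrak a_3$ explicitly. Since $\mathfrak a_3$ involves $V_\tau^{(4)}(\rho_1)$, third-order Taylor expansions do not suffice. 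These sums produce:
\begin{itemize}
\item terms of order $\sqrt n\,M^{-1}$, $\sqrt n\,M^{-3}$, $\sqrt n\,M^{-5}$ and $\log M$, which cancel against the large-argument expansion of $\log\frac12\erfc$ in the window;
\item a genuine $\log n$ contribution.
\end{itemize}
The leftover $\mathcal O(\sqrt n\,M^{-7})$, balanced against the window error $\mathcal O(M^5n^{-1/2})$, is what fixes $M=n^{1/12}$.

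Two smaller points.
\begin{itemize}
\item Under $r=\rho_1(1-u/n)$ one has $\{r<r_\ell\}=\{u>t_\ell\}$, not $u<t_\ell$. The hard-edge factor is therefore $1+\mathsf T_0^{(2m+1,3m)}(j/n)$, built from the exponentials $e^{-2t_\ell(j/n-\tau_{\rho_1})}$.
\item Inside the window the hard-edge radii are not negligible. In the window variable $y$ they enter at relative order $n^{-1/2}$, through a term proportional to $\mathsf T_1^{(2m+1,3m)}(\tau_{\rho_1})\,e^{-y^2}/\erfc(y)$. Its sum supplies the $M^2$ and $\log M$ terms needed to cancel the singular part
\[
\frac{2\rho_1^2\Delta Q(\rho_1)}{x-\tau_{\rho_1}}\,\frac{\mathsf T_1^{(2m+1,3m)}(\tau_{\rho_1})}{1+\mathsf T_0^{(2m+1,3m)}(\tau_{\rho_1})}
\]
of the hard-edge summand in $x=j/n$. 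This is the same singularity that produces $C_{5,\#}^{(\mathrm{dc})}$.
\end{itemize}
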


\begin{theorem}[Centered-disk hard wall]
\label{theorem:asymptotic determinant of centered disk}
As $n\to\infty$,
\begin{equation}
\log D_n^{(\mathrm{cd})}
=C_1^{(\mathrm{cd})}n^2
+C_2^{(\mathrm{cd})}n\log n
+C_3^{(\mathrm{cd})}n
+C_4^{(\mathrm{cd})}\sqrt n
+C_5^{(\mathrm{cd})}\log n
+C_6^{(\mathrm{cd})}
+\mathcal O\!\left(\frac{(\log n)^3}{n^{1/12}}\right),
\end{equation}
where
\begin{align}
\label{def of C1 Dn cd}
C_1^{(\mathrm{cd})}
&:=C_{1,\mathrm{hole}}^{(\mathrm{cd})}+C_1^{(\mathrm n)},
\\
\label{def of C2 Dn cd}
C_2^{(\mathrm{cd})}
&:=C_{2,\mathrm{hole}}^{(\mathrm{cd})}+C_2^{(\mathrm n)},
\\
\label{def of C3 Dn cd}
C_3^{(\mathrm{cd})}
&:=C_{3,\#}^{(\mathrm{cd})}+C_{3,\mathrm{hole}}^{(\mathrm{cd})}+C_3^{(\mathrm n)},
\\
\label{def of C4 Dn cd}
C_4^{(\mathrm{cd})}
&:=C_{4,\#}^{(\mathrm{cd})}+C_{4,\mathrm{hole}}^{(\mathrm{cd})}+C_4^{(\mathrm n)},
\\
\label{def of C5 Dn cd}
C_5^{(\mathrm{cd})}
&:=C_{5,\#}^{(\mathrm{cd})}+C_{5,\mathrm{hole}}^{(\mathrm{cd})}+C_5^{(\mathrm n)},
\\
\label{def of C6 Dn cd}
C_6^{(\mathrm{cd})}
&:=C_{6,\#}^{(\mathrm{cd})}+C_{6,\mathrm{hole}}^{(\mathrm{cd})}+C_6^{(\mathrm n)}.
\end{align}
The hole coefficients are given in \eqref{def of C1cd hole}--\eqref{def of C6cd hole}, and the counting coefficients are given in \eqref{def of Csharp3 cd}--\eqref{def of Csharp6 cd}.
\end{theorem}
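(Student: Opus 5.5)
The statement immediately above is Theorem~\ref{theorem:asymptotic determinant of centered disk}, so I treat the centered-disk hole $U=\mathbb D_{\rho_2}$. I would start from the product representation valid for radial weights. Since the weight $w(z)=|z|^{2\alpha}e^{-nQ_U(z)+s\lambda(z)+\sum_\ell s_\ell\mathbf 1_{\{|z|<r_\ell\}}}$ is rotation invariant, the monomials are orthogonal and Andréief's identity gives
\begin{equation*}
D_n^{(\mathrm{cd})}=n!\prod_{j=1}^{n}h_j,\qquad
h_j=2\int_{\rho_2}^{\infty}r^{2j-1}e^{-nq(r)+\mathsf k(r)}\prod_{\ell}e^{s_\ell\mathbf 1_{\{r<r_\ell\}}}\,dr .
\end{equation*}
Thus $\log D_n^{(\mathrm{cd})}=\log n!+\sum_{j}\log h_j$. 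The whole proof is then a uniform asymptotic analysis of the one-dimensional integrals $h_j$, followed by summation over $j$ with Euler--Maclaurin.

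For $j=1,\dots,n$ set $\tau=j/n$. The exponent $n(2\tau\log r-q(r))$ has a unique critical point $r_\tau$ with $\tau_{r_\tau}=\tau$. I would split the indices into four ranges.
\begin{enumerate}[label=(\roman*)]
\item For $\tau<\tau_{\rho_2}-\delta$, we have $r_\tau<\rho_2$ and the integrand is maximal at the wall. A Laplace/Watson expansion at the endpoint gives $h_j=\frac{\rho_2^{2j}e^{-nq(\rho_2)+\mathsf k(\rho_2)}}{n(\tau_{\rho_2}-\tau)}\,(1+\bigO(1/n))$. The hard-edge counting insertions at $r_\ell=\rho_2(1+t_\ell/n)$ enter here: they produce factors like $1+(e^{s}-1)(1-e^{-2t_\ell(\tau_{\rho_2}-\tau)})$ integrated against the exponential.
\item For $\tau>\tau_{\rho_2}+\delta$, we are in the interior saddle regime. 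Gaussian expansion around $r_\tau$ to several orders, including $\mathsf k$, reproduces the unconstrained coefficients $C_k^{(\mathrm n)}$ known from \cite{BKS2023,Noda2025}. The bulk and soft-edge counting radii at $a_2$ and $R$ contribute erfc-type factors for $j$ with $|\tau-\tau_{a_2}|,|\tau-1|\lesssim n^{-1/2}$.
\item For $|\tau-\tau_{\rho_2}|\lesssim n^{-1/2}$, the saddle meets the wall. Rescaling $j=n\tau_{\rho_2}+\sqrt n\,x$ gives $h_j$ as a Gaussian integral truncated at the wall, i.e. proportional to $\tfrac12\erfc(\cdot)$. This is where the semi-hard radii $r_\ell=\rho_2+t_\ell/\sqrt{2n\Delta Q(\rho_2)}$ act.
\item The remaining crossover ranges $n^{-1/2}\ll|\tau-\tau_{\rho_2}|<\delta$ are handled by matching uniform expansions.
\end{enumerate}

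Next I would sum. In range (i), $\sum_j\log h_j$ is an explicit Riemann sum giving the hole coefficients $C_{k,\mathrm{hole}}^{(\mathrm{cd})}$ at orders $n^2,n\log n,n,\log n,1$. The $\log(\tau_{\rho_2}-\tau)$ singularity, summed via Stirling-type/Barnes asymptotics, yields the $\log n$ and constant terms. Ranges (ii)--(iii) give Riemann sums of $\log\erfc$-type functions. Their Euler--Maclaurin expansions produce the $\sqrt n$ and order-one contributions $C_{4,\#}^{(\mathrm{cd})},C_{6,\#}^{(\mathrm{cd})}$. Adding $\log n!$ by Stirling and collecting terms yields the stated $C_k^{(\mathrm{cd})}$. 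Since the hole leaves a single connected component, no theta-function term appears, unlike the annular case.

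The main obstacle is controlling the errors uniformly across the transition windows, especially near $\tau_{\rho_2}$ and near $\tau=1$. There the endpoint expansion and the Gaussian saddle expansion must be glued, and the counting-statistic factors must be carried along. I would cut at $|\tau-\tau_{\rho_2}|=n^{-1/2}M$ with $M\sim(\log n)$ and an auxiliary mesoscopic scale $n^{-5/12}$, bounding each window's contribution by Taylor remainders. Optimizing these cut-offs is what I expect produces the error $\bigO((\log n)^3n^{-1/12})$.
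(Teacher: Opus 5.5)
Your plan is sound in outline, but it is organized differently from the paper.

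\emph{How the paper organizes the proof.} The paper never expands $\log D_n^{(\mathrm{cd})}$ as a whole. It uses the exact splitting $\log D_n^{(\mathrm{cd})}=\log\mathcal E^{(\mathrm{cd})}_{n,s\lambda,\alpha}+\log\mathcal P^{(\mathrm{cd})}_{n,s\lambda,\alpha}+\log Z^{(\alpha)}_{n,s\lambda}[Q]$ and quotes the unconstrained free energy (Theorem~\ref{theorem:unconstrained-free-energy}, from \cite{ACC2023c}). It then expands the hole ratio $\prod_j h^{(\mathrm{cd})}_{n,j}/h_{n,j}$ and the counting sum $\sum_j\log\bigl(1+\sum_\ell\omega_\ell F^{(\mathrm{cd})}_{n,j,\ell}\bigr)$ separately. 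Your single weight containing $\prod_\ell e^{s_\ell\mathbf 1_{\{r<r_\ell\}}}$ is equivalent to the telescoping weights $\omega_\ell$. Because the unconstrained denominators $h_{n,j}$ see the origin for small $j$, the paper has to import the small-index asymptotics ($D_n=\lceil n^{1/6}\rceil$, $\zeta'(-1)$, $\log G(1+\alpha)$) and let them cancel against $Z_n$. This, and not the optimization of the wall cutoffs, is the source of the $(\log n)^3$ in the remainder.

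\emph{What each route buys.} Expanding $2\int_{\rho_2}^\infty\cdots$ directly, as you propose, never meets the origin in this geometry. The endpoint expansion at $\rho_2$ is uniform down to $j=0$, since $\tau_{\rho_2}-\tau$ is bounded below there. So you need neither \cite{ACC2023c} nor a Barnes-$G$ computation, and the $(\log n)^3$ loss should disappear. What the factorization buys the paper is modularity. The same ratio expansions serve the disk-complement and annular geometries, where the origin lies in the allowed set. They also give the hole probabilities and the smooth and counting moment generating functions at once, since $Z_n$ cancels in every ratio.

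Three points of your sketch need correcting.

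\emph{Attribution of coefficients.} Saying ``range (i) gives $C_{k,\mathrm{hole}}^{(\mathrm{cd})}$, range (ii) gives $C_k^{(\mathrm n)}$'' is wrong. Both families are sums over \emph{all} indices (constrained minus unconstrained, resp.\ unconstrained). They therefore contain origin data such as $\tfrac12q(0)$, $\alpha s\lambda(0)$, $\log\Delta Q(0)$, $\zeta'(-1)$, $\log G(1+\alpha)$, none of which appears in your computation. Also, $C_{4,\mathrm{hole}}^{(\mathrm{cd})}=\sqrt2\rho_2\sqrt{\Delta Q(\rho_2)}\,\mathcal I$ comes from the wall window even without insertions. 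You must verify at the end that your constants equal $C_{k,\mathrm{hole}}^{(\mathrm{cd})}+C_k^{(\mathrm n)}+C_{k,\#}^{(\mathrm{cd})}$. Useful identities are $\frac{d}{d\tau}V_\tau(r_\tau)=-2\log r_\tau$ and Green's formula for the $\alpha$-part of $\mathsf e_{\mathsf k}$.

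\emph{Order of the endpoint expansion.} Writing $(1+\mathcal O(1/n))$ in range (i) is too weak, since $n$ such errors add up to $\mathcal O(1)$. The term $\widetilde{\mathfrak a}_1/n$ must be kept; its sum produces, e.g., $\rho_2^2\Delta Q(\rho_2)/\tau_{\rho_2}$ and $(\frac12\rho_2\mathsf k'(\rho_2)+1)\log\tau_{\rho_2}$. The same goes for the $1/n$ correction to the hard-edge counting factor, whose $(\tau_{\rho_2}-\tau)^{-1}$ singularity generates $C_{5,\#}^{(\mathrm{cd})}$.

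\emph{Width of the wall window.} The window cannot be taken of width $\log n/\sqrt n$. Keeping $K$ correction orders in the endpoint expansion leaves an accumulated crossover remainder $\mathcal O(\sqrt n\,M^{-2K-1})$, which does not vanish for $M\sim\log n$. The paper keeps $K=3$ and balances $\sqrt n\,M^{-7}$ against the window error $\mathcal O(M^5n^{-1/2})$ at $M=n^{1/12}$, i.e.\ at your scale $n^{-5/12}$. It uses $\log n/\sqrt n$ only for Gaussian-tail truncation.
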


\begin{remark}
Theorems~\ref{theorem:asymptotic determinant of annulus}, \ref{theorem:asymptotic determinant of disk complement}, and \ref{theorem:asymptotic determinant of centered disk} generalize and unify some of the results in \cite{C2021,C2021 FH,ACCL1,ACCL2,CL2023}.  
\end{remark}

\subsection{Asymptotics of hole probabilities}
\label{subsection:precise-hole-probabilities}

We state the resulting hole
probabilities separately.  Thus $\alpha=s=0$, all counting statistics vanish,
and
\[
\mathcal P_n[Q,U]=\frac{Z_{n,U}^{(0)}[Q]}{Z_n[Q]}.
\]
For $\mathrm g\in\{\mathrm{an},\mathrm{dc},\mathrm{cd}\}$, set
\begin{equation}
\label{def:H-hole-specialization}
H_k^{(\mathrm g)}
:=\left.C_{k,\mathrm{hole}}^{(\mathrm g)}\right|_{\alpha=0,\,s=0},
\qquad 1\leq k\leq6.
\end{equation}
The coefficients are therefore completely explicit from
\eqref{def of C1 hole an}--\eqref{def of C6 hole an},
\eqref{def of C1dc hole}--\eqref{def of C6dc hole}, and
\eqref{def of C1cd hole}--\eqref{def of C6cd hole}.  For the annular
geometry, write
\begin{equation}
\label{def:L-and-hole-phase}
\xi_n^{(0)}:=n\sigma_\star+\frac12
+\frac{\log(\sigma_2/\sigma_1)}{2\log\frac{\rho_2}{\rho_1}},
\end{equation}
and use the Jacobi-theta convention
\begin{equation}
\label{def:Jacobi-theta-convention}
\theta(z\mid\tau):=\sum_{k\in\mathbb Z}
e^{\pi i\tau k^2+2\pi i k z},
\qquad \operatorname{Im}\tau>0.
\end{equation}

\begin{theorem}[Annular hole probability]
\label{theorem:precise-annular-hole-probability}
Let $0<\rho_1<\rho_2<R$.  Under Assumption~\ref{Assumption_Q}, as
$n\to\infty$,
\begin{align}
\log\mathbb P_n\!\left(\#\{z_j:z_j\in \mathbb{A}_{\rho_1,\rho_2}\}=0\right)
&=H_1^{(\mathrm{an})}n^2
+H_2^{(\mathrm{an})}n\log n
+H_3^{(\mathrm{an})}n
+H_4^{(\mathrm{an})}\sqrt n
\nonumber\\
&\quad
+H_6^{(\mathrm{an})}
+\log\theta\!\left(\xi_n^{(0)}\,\middle|\,\frac{\pi i}{\log\frac{\rho_2}{\rho_1}}\right)
+\mathcal O\!\left(\frac{(\log n)^3}{n^{1/12}}\right),
\label{eq:precise-annular-hole-probability}
\end{align}
where $H_5^{(\mathrm{an})}=0$. 
\end{theorem}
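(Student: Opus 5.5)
The hole probability is the ratio $\mathcal P_n[Q,\mathbb A_{\rho_1,\rho_2}]=Z^{(0)}_{n,U_{\mathrm{an}}}[Q]/Z_n[Q]$, so the plan is to specialize Theorem~\ref{theorem:asymptotic determinant of annulus} to $\alpha=s=0$ and $\boldsymbol s=\boldsymbol 0$, and to subtract the unconstrained expansion of $\log Z_n[Q]$.

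\emph{Numerator.} With all insertions switched off, every counting coefficient $C_{k,\#}^{(\mathrm{an})}$ vanishes. The same holds for the counting-induced theta increment $\mathcal F_{n,\#}$, since each of these is built from factors that are trivial at $s_\ell=0$. We should check this directly from \eqref{def of C3sharp an}--\eqref{def of C6sharp an} and \eqref{def of calFNsharp}. Theorem~\ref{theorem:asymptotic determinant of annulus} then gives
\[
\log Z^{(0)}_{n,U_{\mathrm{an}}}[Q]=\sum_{k=1}^{6}\bigl(H_k^{(\mathrm{an})}+C_k^{(\mathrm n)}|_{\alpha=s=0}\bigr)n_k+\mathcal F_n[0]+\mathcal O\!\left(\frac{(\log n)^3}{n^{1/12}}\right),
\]
where $n_k$ runs through $n^2,n\log n,n,\sqrt n,\log n,1$.

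\emph{Denominator.} For the unconstrained partition function we need the expansion of $\log Z_n[Q]$ with coefficients exactly $C_k^{(\mathrm n)}|_{\alpha=s=0}$. This is the case $U=\emptyset$ with no insertions, and it is precisely what the notation $C_k^{(\mathrm n)}$ records (the radial free-energy expansion of \cite{BKS2023}). To make the argument self-contained, I would derive it from the same product formula $Z_n=n!\prod_{j=0}^{n-1}h_j$, where $h_j=2\int_0^\infty r^{2j+1}e^{-nq(r)}\,dr$. Each $h_j$ is evaluated by Laplace's method around the point $r_j$ with $\tau_{r_j}=j/n$, and the bulk sum is handled by Euler--Maclaurin. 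The soft-edge indices $j\approx n$ are treated with erfc-type uniform asymptotics. This yields the same $C_k^{(\mathrm n)}$ that appears inside the proof of Theorem~\ref{theorem:asymptotic determinant of annulus}, since that proof factors the annular product into unconstrained pieces and wall corrections. Subtracting therefore cancels every $C_k^{(\mathrm n)}$ and leaves the $H_k^{(\mathrm{an})}$.

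\emph{Remaining order-one and logarithmic terms.} Two identifications remain.

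First, $\mathcal F_n[0]$ from \eqref{def of calF hole an} must equal $\log\theta(\xi_n^{(0)}\mid \pi i/\log(\rho_2/\rho_1))$. The filling-fraction term arises from summing over the integer $j_\star$ splitting indices between the inner component (mass $\tau_{\rho_1}+\sigma_1$) and the outer one. This sum is a discrete Gaussian $\sum_k e^{-\log(\rho_2/\rho_1)(k-\text{shift})^2}$. After Poisson summation and the modular transformation it becomes $\theta$ with parameter $\pi i/\log(\rho_2/\rho_1)$. The shift comes from the fractional part of $n\sigma_\star$, the $\tfrac12$ offset of the Euler--Maclaurin midpoint, and the wall masses $\sigma_1,\sigma_2$; this is what produces $\xi_n^{(0)}$. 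The Gaussian normalization $\sqrt{\pi/\log(\rho_2/\rho_1)}$ must be absorbed into $H_6^{(\mathrm{an})}$; I would match conventions term by term.

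Second, we must verify $H_5^{(\mathrm{an})}=0$. The $\log n$ terms come from $\log n!$-type Stirling contributions and from the $\tfrac12\log$ factors of the hard-wall Laplace integrals, whose integrands are cut at the wall with a linear exponent, giving $n^{-1}$ prefactors. At $\alpha=0$ these contributions at the two walls should cancel against the corresponding unconstrained bulk terms. I would check this by substituting $\alpha=0$ into \eqref{def of C5 hole an}.

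The main obstacle is the theta identification: matching the phase $\xi_n^{(0)}$ exactly, including the $\log(\sigma_2/\sigma_1)/(2\log(\rho_2/\rho_1))$ correction, requires tracking the $\mathcal O(1)$ linear-in-$k$ terms in the hard-edge Laplace expansions. I would attack it first by computing the ratio $h_{j+1}/h_j$ near $j/n\approx\sigma_\star$ on both sides of the hole.
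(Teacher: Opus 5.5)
Your plan is the paper's proof: set $\alpha=s=0$ and $\boldsymbol s=\boldsymbol 0$ in Theorem~\ref{theorem:asymptotic determinant of annulus} and subtract Theorem~\ref{theorem:unconstrained-free-energy}, so that the $C_k^{(\mathrm n)}$ cancel and all counting coefficients and $\mathcal F_{n,\#}$ vanish (every $s_\ell=0$, hence every $\omega_\ell=0$ and $\mathsf Q=1$). The two steps you flag as obstacles are in fact direct substitutions, so no Poisson-summation rederivation of the theta term and no independent recomputation of the unconstrained free energy are needed: the argument of \eqref{def of calF hole an} at $\alpha=s=0$ is exactly $\xi_n^{(0)}$ because $\log(\sigma_1/\sigma_2)/\log(\rho_1/\rho_2)=\log(\sigma_2/\sigma_1)/\log(\rho_2/\rho_1)$, and \eqref{def of C5 hole an} is linear in $s$ with no $\alpha$-dependence, so it is $s=0$ (not $\alpha=0$) that makes $H_5^{(\mathrm{an})}=0$.
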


\begin{proof}
Set $\alpha=s=0$ and $\boldsymbol s=\boldsymbol0$ in
Theorem~\ref{theorem:asymptotic determinant of annulus}, and subtract
Theorem~\ref{theorem:unconstrained-free-energy}.  The unconstrained
coefficients cancel, the counting coefficients and $\mathcal F_{n,\#}$
vanish, and \eqref{def of calF hole an} becomes
$\log\theta(\xi_n^{(0)}\mid\pi i/\log\frac{\rho_2}{\rho_1})$.  Equation
\eqref{def of hole probability U} then gives the result.
\end{proof}

\begin{theorem}[Disk-complement hole probability]
\label{theorem:precise-disk-complement-hole-probability}
Let $0<\rho_1<R$.  Under Assumption~\ref{Assumption_Q}, as $n\to\infty$,
\begin{align}
\log\mathbb P_n\!\left(\#\{z_j:z_j\in \mathbb{D}_{\rho_1}^c\}=0\right)
&=H_1^{(\mathrm{dc})}n^2
+H_2^{(\mathrm{dc})}n\log n
+H_3^{(\mathrm{dc})}n
+H_4^{(\mathrm{dc})}\sqrt n
\nonumber\\
&\quad
-\frac14\log n+H_6^{(\mathrm{dc})}
+\mathcal O\!\left(\frac{(\log n)^3}{n^{1/12}}\right).
\label{eq:precise-disk-complement-hole-probability}
\end{align}
\end{theorem}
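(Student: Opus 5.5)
The plan is to follow the proof of Theorem~\ref{theorem:precise-annular-hole-probability} verbatim, specialized to the disk-complement geometry. Set $\alpha=s=0$ and $\vec{\boldsymbol s}_{\mathrm{dc}}=\boldsymbol 0$ in Theorem~\ref{theorem:asymptotic determinant of disk complement}. Then $D_n^{(\mathrm{dc})}=Z_{n,U_{\mathrm{dc}}}^{(0)}[Q]$, and by \eqref{def of hole probability U}
\[
\log\mathcal P_n[Q,U_{\mathrm{dc}}]=\log Z_{n,U_{\mathrm{dc}}}^{(0)}[Q]-\log Z_n[Q].
\]
For the second term I will use the unconstrained free-energy expansion (Theorem~\ref{theorem:unconstrained-free-energy}). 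Its coefficients are exactly $C_1^{(\mathrm n)},\dots,C_6^{(\mathrm n)}$ at $\alpha=s=0$, and its error is $o(1)$, at least as good as $\mathcal O((\log n)^3n^{-1/12})$. Subtracting, the $C_k^{(\mathrm n)}$ cancel term by term in \eqref{def of C1 Dn dc}--\eqref{def of C6 Dn dc}.

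Next I check that the counting coefficients $C_{k,\#}^{(\mathrm{dc})}$, $k=3,\dots,6$, from \eqref{def of C3sharp dc}--\eqref{def of C6sharp dc} vanish when $\boldsymbol s=\boldsymbol0$. This should be immediate, since each is built from terms such as $s_\ell\tau_{r}$, $\log(1+\sum(e^{s_\ell}-1)\cdots)$, or integrals of $\log$ of erfc-combinations, all of which vanish at $s_\ell=0$. What remains is $H_k^{(\mathrm{dc})}$ for $k=1,\dots,6$, as in \eqref{def:H-hole-specialization}. The error term is inherited from Theorem~\ref{theorem:asymptotic determinant of disk complement}. Unlike the annular case there is no theta term, because the droplet stays connected.

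The only substantive step is identifying $H_5^{(\mathrm{dc})}=-\tfrac14$ from the explicit formula \eqref{def of C5dc hole} at $\alpha=0$. Heuristically, the $\log n$ coefficient in the constrained gas combines the soft-edge contribution, which is lost, with a hard-edge contribution at $|z|=\rho_1$. Concretely, the product representation $Z_{n,U}=n!\prod_{j}\int_0^{\rho_1}2r^{2j+1}e^{-nq(r)}dr$ for the indices $j$ with $j/n>\tau_{\rho_1}$ is dominated by the endpoint $\rho_1$. Each such factor carries an extra $1/n$ relative to the Laplace (Gaussian) integral appearing in $Z_n$, because a boundary maximum scales as $n^{-1}$ rather than $n^{-1/2}$. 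Summing the Euler--Maclaurin corrections from the transition index $j\approx n\tau_{\rho_1}$, where the hard and semi-hard behaviours interpolate through an erfc profile, gives a net $-\tfrac14\log n$ once the $n\log n$ part is absorbed into $H_2^{(\mathrm{dc})}$.

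I will verify this by reading off \eqref{def of C5dc hole} with $\alpha=0$; if it contains $\alpha$-dependent pieces, those simply drop out. As a sanity check, I will compare with Forrester's Ginibre disk-gap asymptotics, in the complementary form given in \cite{C2023} and \cite{ACCL1}, which contain the same $-\tfrac14\log n$. Apart from this coefficient bookkeeping, I expect no obstacle.
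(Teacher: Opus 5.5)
Your proposal is correct and matches the paper's proof. You specialize Theorem~\ref{theorem:asymptotic determinant of disk complement} to $\alpha=s=0$ and $\vec{\boldsymbol s}_{\mathrm{dc}}=\boldsymbol 0$, then subtract Theorem~\ref{theorem:unconstrained-free-energy}, so the $C_k^{(\mathrm n)}$ cancel and the counting coefficients vanish (every $\omega_\ell=0$), and you read off $H_5^{(\mathrm{dc})}=-\tfrac14$ from \eqref{def of C5dc hole}. The heuristic for the $\log n$ coefficient and the literature ``sanity check'' are not needed; note also that \cite{C2023} gives only the leading order, so the appropriate comparison would be \cite{C2021}.
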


\begin{proof}
Set $\alpha=s=0$ and $\boldsymbol s=\boldsymbol0$ in
Theorem~\ref{theorem:asymptotic determinant of disk complement}, divide by
$Z_n[Q]$, and use Theorem~\ref{theorem:unconstrained-free-energy}.  All
unconstrained and counting contributions cancel.  The remaining terms are
\eqref{def:H-hole-specialization}, and
\eqref{def of C5dc hole} gives $H_5^{(\mathrm{dc})}=-1/4$.
\end{proof}

\begin{theorem}[Centered-disk hole probability]
\label{theorem:precise-centered-disk-hole-probability}
Let $0<\rho_2<R$.  Under Assumption~\ref{Assumption_Q}, as $n\to\infty$,
\begin{align}
\log\mathbb P_n\!\left(\#\{z_j:z_{j}\in\mathbb{D}_{\rho_2}\}=0\right)
&=H_1^{(\mathrm{cd})}n^2
+H_2^{(\mathrm{cd})}n\log n
+H_3^{(\mathrm{cd})}n
+H_4^{(\mathrm{cd})}\sqrt n
\nonumber\\
&\quad
+\frac13\log n+H_6^{(\mathrm{cd})}
+\mathcal O\!\left(\frac{(\log n)^3}{n^{1/12}}\right).
\label{eq:precise-centered-disk-hole-probability}
\end{align}
\end{theorem}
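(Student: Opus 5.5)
The plan is to mirror the proofs of Theorems~\ref{theorem:precise-annular-hole-probability} and \ref{theorem:precise-disk-complement-hole-probability}. Set $\alpha=s=0$ and $\vec{\boldsymbol s}_{\mathrm{cd}}=\boldsymbol 0$ in Theorem~\ref{theorem:asymptotic determinant of centered disk}. Then $D_n^{(\mathrm{cd})}=Z_{n,\mathbb D_{\rho_2}}^{(0)}[Q]$, and by \eqref{def of hole probability U} the hole probability is the ratio of this quantity to $Z_n[Q]$.

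For the denominator we invoke Theorem~\ref{theorem:unconstrained-free-energy}, which expands $\log Z_n[Q]$ with the coefficients $C_k^{(\mathrm n)}$ and an error no worse than $\mathcal O((\log n)^3n^{-1/12})$. Subtracting, the unconstrained coefficients $C_k^{(\mathrm n)}$ cancel term by term, since they appear additively in \eqref{def of C1 Dn cd}--\eqref{def of C6 Dn cd}. This leaves
\[
\log\mathcal P_n[Q,\mathbb D_{\rho_2}]=\sum_{k}\bigl(C_{k,\mathrm{hole}}^{(\mathrm{cd})}+C_{k,\#}^{(\mathrm{cd})}\bigr)\Big|_{\alpha=s=0,\ \boldsymbol s=\boldsymbol 0}\times(\text{scale}_k)+\mathcal O\!\left(\frac{(\log n)^3}{n^{1/12}}\right).
\]
Here the scales are $n^2$, $n\log n$, $n$, $\sqrt n$, $\log n$ and $1$.

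Two checks remain.

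\textbf{First check: the counting terms vanish.} The counting coefficients $C_{k,\#}^{(\mathrm{cd})}$, listed in \eqref{def of Csharp3 cd}--\eqref{def of Csharp6 cd}, must all be zero at $\boldsymbol s=\boldsymbol 0$. This should be immediate, because each is built from the $s_\ell$: through terms linear in $s_\ell$ times the mass $\tau$ at the relevant radius, through $s_\ell^2$-type variance terms, or through logarithms of $1+(e^{s_\ell}-1)\times(\cdots)$-type factors. Each of these vanishes identically when $\boldsymbol s=\boldsymbol 0$.

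\textbf{Second check: the $\log n$ coefficient.} The remaining coefficients are by definition $H_k^{(\mathrm{cd})}$ from \eqref{def:H-hole-specialization}. The only thing left to verify is that specializing \eqref{def of C5cd hole} gives $H_5^{(\mathrm{cd})}=\tfrac13$.

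This last check is the only place any care is needed. The $\log n$ term in $C_{5,\mathrm{hole}}^{(\mathrm{cd})}$ may carry $\alpha$-dependence from the point charge at the origin. That charge now sits inside the forbidden disk, so its contribution should reduce to a combination such as $-\alpha^2/2$ plus a constant. The difference from the unconstrained $C_5^{(\mathrm n)}$ (of the form $-\tfrac1{12}$ plus $\alpha$ terms) is what produces the value. We expect $\tfrac13$ to arise as the soft-edge $\tfrac1{12}$ being removed, combined with the hard-edge contribution at $\rho_2$ coming from Stirling and Euler--Maclaurin sums of $\log\Gamma$-type terms over the indices $j$. We would simply read it off from \eqref{def of C5cd hole} at $\alpha=0$, checking consistency with the Ginibre case of \cite{C2021}, where the $\tfrac13\log n$ term is known.
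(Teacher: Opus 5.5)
Your proposal is correct and follows the paper's proof: you specialize Theorem~\ref{theorem:asymptotic determinant of centered disk} to $\alpha=s=0$, $\boldsymbol s=\boldsymbol 0$, and subtract Theorem~\ref{theorem:unconstrained-free-energy} so that the $C_k^{(\mathrm n)}$ cancel. The counting terms drop out (all $\omega_\ell=0$ there, so in fact $\log\mathcal{E}_{n,s\lambda,\alpha}^{(\mathrm{cd})}=0$ exactly), and $H_5^{(\mathrm{cd})}=\tfrac13$ is read off from \eqref{def of C5cd hole}; your side remark on the form of $C_5^{(\mathrm n)}$ is inaccurate, since it equals $\tfrac{5}{12}+\tfrac{\alpha^2}{2}$, but it plays no role in the argument.
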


\begin{proof}
Set $\alpha=s=0$ and $\boldsymbol s=\boldsymbol0$ in
Theorem~\ref{theorem:asymptotic determinant of centered disk}, divide by
$Z_n[Q]$, and use Theorem~\ref{theorem:unconstrained-free-energy}.  The
unconstrained and counting contributions cancel, while
\eqref{def of C5cd hole} gives $H_5^{(\mathrm{cd})}=1/3$.
\end{proof}

\begin{remark}
Theorems~\ref{theorem:precise-annular-hole-probability}, \ref{theorem:precise-disk-complement-hole-probability}, and \ref{theorem:precise-centered-disk-hole-probability} generalize and unify some of the results in \cite{C2021}.  
\end{remark}

\subsection{Fluctuations of smooth rotation invariant linear statistics}

The fluctuation of smooth rotation invariant linear statistics is encoded in $s\lambda$. 
Let
\[
A_{\rho_2,R}:=\mathbb D_{\rho_2}^{\mathrm c}\cap S
=\{z:\rho_2\leq |z|\leq R\},
\]
and let $\partial_{\mathrm n}$ denote differentiation along the outward unit
normal to $A_{\rho_2,R}$; thus $\partial_{\mathrm n}=\partial_r$ on
$|z|=R$ and $\partial_{\mathrm n}=-\partial_r$ on $|z|=\rho_2$.

\begin{theorem}[Fluctuations of smooth rotation invariant linear statistics for a centered-disk hole]
\label{theorem:smooth centered disk}
For each fixed $s\in\R$, as $n\to\infty$,
\begin{align}
\log \mathbb E_{n,\mathbb D_{\rho_2}}^{(\alpha)}
\!\left[e^{s\sum_{j=1}^n\lambda(z_j)}\right]
&=s\mu_{\mathbb D_{\rho_2}}(\lambda)n
+s\widetilde\mu_{\mathbb D_{\rho_2}}(\lambda)\log n
+s\,\mathsf e_\lambda[A_{\rho_2,R}]
\nonumber\\
&\quad
+\alpha s\bigl(\lambda(R)-\lambda(\rho_2)\bigr)
+\frac{s^2}{2}\mathsf v_\lambda[A_{\rho_2,R}]
\nonumber\\
&\quad
-\frac{s}{2}\rho_2\lambda'(\rho_2)
\log\!\left(
\frac{\rho_2\sqrt{\Delta Q(\rho_2)}}{\sqrt{2\pi}\,\tau_{\rho_2}}
\right)
+\mathcal O\!\left(\frac{(\log n)^3}{n^{1/12}}\right),
\end{align}
where
\begin{align}
\mu_{\mathbb D_{\rho_2}}(\lambda)
&:=\int_{\C}\lambda\,d\mu_{\mathrm{cd}},
\qquad
\widetilde\mu_{\mathbb D_{\rho_2}}(\lambda)
:=\frac{\rho_2}{4}\lambda'(\rho_2),
\\
\mathsf e_\lambda[A_{\rho_2,R}]
&:=\frac12\int_{A_{\rho_2,R}}\lambda\,\Delta\log\Delta Q\,dA
+\frac{1}{8\pi}\int_{\partial A_{\rho_2,R}}\partial_{\mathrm n}\lambda\,|dz|
-\frac{1}{8\pi}\int_{\partial A_{\rho_2,R}}
\lambda\,\frac{\partial_{\mathrm n}\Delta Q}{\Delta Q}\,|dz|,
\label{def:smooth-e-functional}
\\
\mathsf v_\lambda[A_{\rho_2,R}]
&:=\frac14\int_{A_{\rho_2,R}}|\nabla\lambda|^2\,dA.
\label{def:smooth-v-functional}
\end{align}
\end{theorem}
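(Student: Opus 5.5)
The plan is to derive the statement as a corollary of Theorem~\ref{theorem:asymptotic determinant of centered disk}, exactly as the hole-probability theorems were derived. By \eqref{def of MGF smooth test f},
\[
\log \mathbb E_{n,\mathbb D_{\rho_2}}^{(\alpha)}\!\left[e^{sX_n[\lambda]}\right]
=\log D_{n,\alpha}(Q;s\lambda,\boldsymbol 0,\boldsymbol r_{\mathrm{cd}},\mathbb D_{\rho_2})
-\log D_{n,\alpha}(Q;0,\boldsymbol 0,\boldsymbol r_{\mathrm{cd}},\mathbb D_{\rho_2}).
\]
Apply Theorem~\ref{theorem:asymptotic determinant of centered disk} twice, once with $\mathsf k=s\lambda+\alpha\ell$ and once with $\mathsf k=\alpha\ell$, taking $\vec{\boldsymbol s}_{\mathrm{cd}}=\boldsymbol 0$ in both. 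The counting coefficients $C_{k,\#}^{(\mathrm{cd})}$ then vanish. The unconstrained coefficients $C_k^{(\mathrm n)}$ do not depend on $\lambda$; I expect that in Section~2 they depend only on $Q$ and $n$, with the $\mathsf k$-dependence placed in the hole coefficients. So only the $s$-dependent parts of $C_{k,\mathrm{hole}}^{(\mathrm{cd})}$ survive in the difference, and the error term stays $\mathcal O((\log n)^3 n^{-1/12})$.

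Next I would read off, order by order, the part of each hole coefficient that depends on $s$. Here is what I expect each order to give.
\begin{itemize}
\item At order $n^2$ there is no dependence on $\mathsf k$, since the perturbation enters the weight only at order $1$ relative to $nQ$.
\item At order $n$, the coefficient should contain the linear term $\int \mathsf k\,d\mu_{\mathrm{cd}}$. Subtracting the $\alpha\ell$ part leaves $s\mu_{\mathbb D_{\rho_2}}(\lambda)$.
\item At order $\log n$, the hard wall at $\rho_2$ contributes a term proportional to $\rho_2\mathsf k'(\rho_2)$. This gives $\tfrac{s\rho_2}{4}\lambda'(\rho_2)$.
\item At order $\sqrt n$, no $\mathsf k$-dependence should survive. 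The soft edge at $R$ gives a term of order $n^{-1/2}\mathsf k'(R)$ times a constant, which is already lower order.
\item At order $1$, the quadratic part $\tfrac12\sum_j(\text{variance})$ becomes the Dirichlet energy $\tfrac{s^2}{8}\int_{A_{\rho_2,R}}|\nabla\lambda|^2dA$.
\item Also at order $1$, the linear corrections collect as follows: the bulk term $\tfrac12\int\lambda\Delta\log\Delta Q$, the boundary terms at $R$ and $\rho_2$, and the wall-specific term $-\tfrac{s}{2}\rho_2\lambda'(\rho_2)\log(\rho_2\sqrt{\Delta Q(\rho_2)}/(\sqrt{2\pi}\tau_{\rho_2}))$.
\item The mixed term $\alpha s(\lambda(R)-\lambda(\rho_2))$ comes from the cross term $s\lambda\cdot\alpha\ell$ in the quadratic part. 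Since $\ell'=2/r$, the integral $\int_{\rho_2}^R\lambda'(r)\,r\ell'(r)\,dr/2$ reduces to $\lambda(R)-\lambda(\rho_2)$.
\end{itemize}

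The main obstacle is the final rewriting into the stated geometric form. The quadratic term $\tfrac12\int_{\rho_2}^R r\,\mathsf k'(r)^2\,dr\cdot(\text{const})$ must equal $\tfrac14\int_{A_{\rho_2,R}}|\nabla\lambda|^2dA$ under the normalization $dA=dx\,dy/\pi$. For a radial function $|\nabla\lambda|^2=\lambda'(r)^2$ and $dA=2r\,dr$, so this is $\tfrac12\int_{\rho_2}^R r\lambda'^2dr$, which is a direct check.

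The linear order-one terms, written in the coefficients as one-dimensional integrals in $r$ (for example $\int \mathsf k'(r)\,(\log\Delta Q)'\,dr$), have to be integrated by parts into the form of $\mathsf e_\lambda[A_{\rho_2,R}]$. The circle integrals turn into point evaluations through $\int_{|z|=\rho}f\,|dz|=2\pi\rho f(\rho)$. The signs of the outward normals are $+\partial_r$ at $R$ and $-\partial_r$ at $\rho_2$.

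The delicate point is the boundary term at the wall. The Riemann-sum/Euler–Maclaurin treatment of the indices $j$ near $n\tau_{\rho_2}$ produces a $\tfrac{1}{8\pi}\int\partial_{\mathrm n}\lambda$-type term. It also produces the extra logarithmic term, which records the $\mathrm{erfc}$-type hard-edge normalization. I must keep these separate and not absorb one into the other.

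I would check the final identity on the test case $\lambda(r)=r^2$ near $\rho_2$, with $Q=|z|^2$, against the explicit Ginibre product formula.
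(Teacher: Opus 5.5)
Your route is the paper's: subtract the expansion of Theorem~\ref{theorem:asymptotic determinant of centered disk} at $s=0$ from the one at $s$, with all counting parameters set to zero. However, the premise that the unconstrained coefficients $C_k^{(\mathrm n)}$ do not depend on $\lambda$ is false. So is its consequence, that only the $s$-parts of $C_{k,\mathrm{hole}}^{(\mathrm{cd})}$ survive.

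By Theorem~\ref{theorem:unconstrained-free-energy}, $C_3^{(\mathrm n)}$ contains $\int_{\C}\mathsf k\,d\mu_Q$, and $C_6^{(\mathrm n)}$ contains $\mathsf e_{\mathsf k}+\frac{s^2}{2}\mathsf v_{0,\lambda}+\alpha s(\lambda(R)-\lambda(0))$. This has to be so. The factorization splits $\log D_n^{(\mathrm{cd})}$ into $\log Z_{n,s\lambda}^{(\alpha)}[Q]$ plus the logarithm of a ratio of constrained to unconstrained norms. The hole coefficients therefore record only the change caused by the wall, and they carry the disk $\mathbb D_{\rho_2}$ with a minus sign.

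If you read off the hole coefficients alone, here is what you get:
\begin{itemize}
\item At order $n$, only $s\tau_{\rho_2}\lambda(\rho_2)-s\int_{\mathbb D_{\rho_2}}\lambda\,d\mu_Q$, from \eqref{def of C3cd hole}.
\item At order one, $-\frac{s}{2}\rho_2\lambda'(\rho_2)\log(\cdots)-\alpha s(\lambda(\rho_2)-\lambda(0))-\Lambda_Q(0,\rho_2)$, from \eqref{def of C6cd hole}. Its $s^2$-part is $-\frac{s^2}{2}\mathsf v_\lambda[\mathbb D_{\rho_2}]$, which is negative and lives on the wrong set.
\end{itemize}
No integration by parts turns this into $+\frac{s^2}{2}\mathsf v_\lambda[A_{\rho_2,R}]$. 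So the values in your order-by-order list cannot be obtained the way you propose.

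The missing step is to add the $s$-dependent parts of both sources and use additivity over $S=\overline{\mathbb D}_{\rho_2}\cup A_{\rho_2,R}$.
\begin{itemize}
\item At order $n$: $s\int_S\lambda\,d\mu_Q-s\int_{\mathbb D_{\rho_2}}\lambda\,d\mu_Q+s\tau_{\rho_2}\lambda(\rho_2)=s\int_{\C}\lambda\,d\mu_{\mathrm{cd}}$.
\item At order one, the second representation of $\Lambda_Q$ in Section~\ref{section:coefficient-catalogue} gives $\Lambda_Q(0,\rho_2)=s\,\mathsf e_\lambda[\mathbb D_{\rho_2}]+\frac{s^2}{2}\mathsf v_\lambda[\mathbb D_{\rho_2}]$. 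Subtracting this from $s\,\mathsf e_\lambda[S]+\frac{s^2}{2}\mathsf v_\lambda[S]$ leaves $s\,\mathsf e_\lambda[A_{\rho_2,R}]+\frac{s^2}{2}\mathsf v_\lambda[A_{\rho_2,R}]$. The circle $|z|=\rho_2$ has outward normal $+\partial_r$ for the disk, so removing its boundary terms is the same as adding them with the normal $-\partial_r$ of $A_{\rho_2,R}$.
\item The mixed terms combine as $\alpha s(\lambda(R)-\lambda(0))-\alpha s(\lambda(\rho_2)-\lambda(0))=\alpha s(\lambda(R)-\lambda(\rho_2))$.
\item The $\log n$ coefficient $\frac{s\rho_2}{4}\lambda'(\rho_2)$ and the wall logarithm do come from the hole coefficients alone, as you say.
\end{itemize}

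Once you do this, your ``main obstacle'' disappears. The radial integrations by parts are exactly the equality of the two representations of $\Lambda_Q$. No Euler--Maclaurin re-analysis near $j\approx n\tau_{\rho_2}$ is needed either: the boundary term at the wall (coming from $-\Lambda_Q(0,\rho_2)$) and the hard-wall logarithm are already separate entries of \eqref{def of C6cd hole}.
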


\begin{proof}
Set all counting parameters equal to zero in
Theorem~\ref{theorem:asymptotic determinant of centered disk}, and subtract
the resulting expansion at $s=0$ from the expansion at $s$.  The counting
coefficients vanish, and the coefficients of $n^2$, $n\log n$, and
$\sqrt n$ are independent of $s$.  At order $n$, the remaining contribution
is
\[
s\int_S\lambda\,d\mu_Q
-s\int_{\mathbb D_{\rho_2}}\lambda\,d\mu_Q
+s\tau_{\rho_2}\lambda(\rho_2)
=s\int_{\C}\lambda\,d\mu_{\mathrm{cd}}.
\]
The $s$-dependent part of the coefficient of $\log n$ is
$s\rho_2\lambda'(\rho_2)/4$.

It remains to identify the order-one term.  Write
$\mathsf e_\lambda[B]$ and $\mathsf v_\lambda[B]$ for the same bulk and
outward-normal boundary functionals on a radial region $B$.  The difference of the
unconstrained coefficients is
\[
s\,\mathsf e_\lambda[S]
+\frac{s^2}{2}\mathsf v_\lambda[S]
+\alpha s\bigl(\lambda(R)-\lambda(0)\bigr),
\]
whereas the difference of the centered-disk hole coefficients is
\[
-\frac{s\rho_2\lambda'(\rho_2)}{2}
 \log\!\left(
 \frac{\rho_2\sqrt{\Delta Q(\rho_2)}}
 {\sqrt{2\pi}\,\tau_{\rho_2}}
 \right)
-\alpha s\bigl(\lambda(\rho_2)-\lambda(0)\bigr)
-\Lambda_Q(0,\rho_2).
\]
The second representation in the definition of $\Lambda_Q$ gives
\[
\Lambda_Q(0,\rho_2)
=s\,\mathsf e_\lambda[\mathbb D_{\rho_2}]
+\frac{s^2}{2}\mathsf v_\lambda[\mathbb D_{\rho_2}].
\]
Therefore we obtain the asserted functional on $A_{\rho_2,R}$.  
The remainder follows by subtracting the two determinant expansions.
\end{proof}

In particular, with
\[
\mathsf m_\lambda^{(\mathrm{cd})}
:=\mathsf e_\lambda[A_{\rho_2,R}]
+\alpha\bigl(\lambda(R)-\lambda(\rho_2)\bigr)
-\frac{\rho_2\lambda'(\rho_2)}{2}
\log\!\left(\frac{\rho_2\sqrt{\Delta Q(\rho_2)}}{\sqrt{2\pi}\,\tau_{\rho_2}}\right),
\]
we have the convergence in distribution
\[
\sum_{j=1}^n\lambda(z_j)
-n\mu_{\mathbb D_{\rho_2}}(\lambda)
-\widetilde\mu_{\mathbb D_{\rho_2}}(\lambda)\log n
\ \xrightarrow{\ d\ }\ 
\mathcal N\!\left(\mathsf m_\lambda^{(\mathrm{cd})},
\mathsf v_\lambda[A_{\rho_2,R}]\right),
\]
where $\mathcal{N}(m,v)$ denotes a Gaussian random variable with mean $m$ and variance $v$. 
This follows from the preceding logarithmic MGF convergence.

We next give the corresponding statement for the disk-complement hole region. Put
\[
A_{0,\rho_1}:=S\cap\overline{\mathbb D}_{\rho_1}.
\]
Its outward normal is $+\partial_r$ on $|z|=\rho_1$.  In the next theorem,
$\mathsf e_\lambda[A_{0,\rho_1}]$ and
$\mathsf v_\lambda[A_{0,\rho_1}]$ are defined by
\eqref{def:smooth-e-functional} and \eqref{def:smooth-v-functional}, with
$A_{\rho_2,R}$ replaced by $A_{0,\rho_1}$.

\begin{theorem}[Fluctuations of smooth rotation invariant linear statistics for a disk-complement hole]
\label{theorem:smooth-disk-complement}
For each fixed $s\in\R$, as $n\to\infty$,
\begin{align}
\log \mathbb E_{n,U_{\mathrm{dc}}}^{(\alpha)}
 \!\left[e^{sX_n[\lambda]}\right]
&=sn\mu_{\mathrm{dc}}(\lambda)
-\frac{s}{4}\rho_1\lambda'(\rho_1)\log n
+s\mathsf e_\lambda[A_{0,\rho_1}]
\nonumber\\
&\quad
+\alpha s\bigl(\lambda(\rho_1)-\lambda(0)\bigr)
+\frac{s^2}{2}\mathsf v_\lambda[A_{0,\rho_1}]
\nonumber\\
&\quad
+\frac{s}{2}\rho_1\lambda'(\rho_1)
 \log\!\left(
 \frac{\rho_1\sqrt{\Delta Q(\rho_1)}}
 {\sqrt{2\pi}\,(1-\tau_{\rho_1})}
 \right)
+\mathcal O\!\left(\frac{(\log n)^3}{n^{1/12}}\right),
\label{eq:smooth-disk-complement-mgf}
\end{align}
where
\begin{equation}
\label{def:mu-disk-complement-smooth}
\mu_{\mathrm{dc}}(\lambda)
:=\int_{\C}\lambda\,d\mu_{\mathrm{dc}}
=2\int_0^{\rho_1}\lambda(r)r\Delta Q(r)\,dr
+(1-\tau_{\rho_1})\lambda(\rho_1).
\end{equation}
Consequently, if
\begin{equation}
\label{def:m-disk-complement-smooth}
\mathsf m_\lambda^{(\mathrm{dc})}
:=\mathsf e_\lambda[A_{0,\rho_1}]
+\alpha\bigl(\lambda(\rho_1)-\lambda(0)\bigr)
+\frac{\rho_1\lambda'(\rho_1)}{2}
 \log\!\left(
 \frac{\rho_1\sqrt{\Delta Q(\rho_1)}}
 {\sqrt{2\pi}\,(1-\tau_{\rho_1})}
 \right),
\end{equation}
then
\begin{equation}
\label{eq:smooth-disk-complement-clt}
X_n[\lambda]-n\mu_{\mathrm{dc}}(\lambda)
+\frac{\rho_1\lambda'(\rho_1)}4\log n
\ \xrightarrow{\ d\ }\
\mathcal N\!\left(
\mathsf m_\lambda^{(\mathrm{dc})},
\mathsf v_\lambda[A_{0,\rho_1}]
\right).
\end{equation}
\end{theorem}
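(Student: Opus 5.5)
The plan mirrors the proof of Theorem~\ref{theorem:smooth centered disk}. Set all counting parameters to zero in Theorem~\ref{theorem:asymptotic determinant of disk complement}. Write the resulting expansion of $\log D_{n,\alpha}(Q;s\lambda,\boldsymbol0,\boldsymbol r,U_{\mathrm{dc}})$ once at the given $s$ and once at $s=0$, and subtract. By \eqref{def of MGF smooth test f}, the difference is exactly $\log\mathcal G_{n,U_{\mathrm{dc}}}^{(\alpha)}(s;\lambda)$. Since both expansions carry the same error $\mathcal O((\log n)^3 n^{-1/12})$, so does their difference. With $\boldsymbol s=\boldsymbol 0$ the counting coefficients $C_{k,\#}^{(\mathrm{dc})}$ vanish. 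The coefficients of $n^2$, $n\log n$ and $\sqrt n$, namely $C_{1}$, $C_2$ and $C_4$ (both unconstrained and hole parts), should not depend on $s$, because $\mathsf k=s\lambda+\alpha\ell$ enters only at order $n$ and lower. So I only need to track the $s$-dependence of $C_3^{(\mathrm{dc})}$, $C_5^{(\mathrm{dc})}$ and $C_6^{(\mathrm{dc})}$.

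\emph{Order $n$.} The unconstrained part contributes $s\int_S\lambda\,d\mu_Q$. The hole part removes the mass outside $\mathbb D_{\rho_1}$ and deposits it on the wall. I expect the $s$-linear pieces of $C_{3,\mathrm{hole}}^{(\mathrm{dc})}$ to equal
\[
-s\int_{\rho_1<|z|<R}\lambda\,d\mu_Q+s(1-\tau_{\rho_1})\lambda(\rho_1).
\]
Adding the two gives $s\mu_{\mathrm{dc}}(\lambda)$ as in \eqref{def:mu-disk-complement-smooth}.

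\emph{Order $\log n$.} The $s$-part of $C_{5,\mathrm{hole}}^{(\mathrm{dc})}$ should be $-\frac{s}{4}\rho_1\lambda'(\rho_1)$. This is the analogue of $+\frac{s}{4}\rho_2\lambda'(\rho_2)$ in the centered-disk case, with the sign reversed because the normal to the wall now points the other way.

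\emph{Order one.} This is the main bookkeeping step. The unconstrained difference is
\[
s\,\mathsf e_\lambda[S]+\tfrac{s^2}{2}\mathsf v_\lambda[S]+\alpha s(\lambda(R)-\lambda(0)).
\]
The difference of the hole coefficients $C_{6,\mathrm{hole}}^{(\mathrm{dc})}$ should contain three pieces:
\begin{enumerate}[label=(\roman*)]
\item minus a functional $\Lambda_Q(\rho_1,R)$ on the excluded shell;
\item the wall term $\frac{s}{2}\rho_1\lambda'(\rho_1)\log\bigl(\rho_1\sqrt{\Delta Q(\rho_1)}/(\sqrt{2\pi}(1-\tau_{\rho_1}))\bigr)$, which comes from the Gaussian/erfc hard-edge sums at $\rho_1$ where the wall mass is $1-\tau_{\rho_1}$;
\item $-\alpha s(\lambda(R)-\lambda(\rho_1))$.
\end{enumerate}
Using the second representation of $\Lambda_Q$, I would write
\[
\Lambda_Q(\rho_1,R)=s\,\mathsf e_\lambda[A_{\rho_1,R}]+\tfrac{s^2}{2}\mathsf v_\lambda[A_{\rho_1,R}],
\]
where the boundary integrals use outward normals. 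Then additivity of the bulk integrals gives $\mathsf v_\lambda[S]-\mathsf v_\lambda[A_{\rho_1,R}]=\mathsf v_\lambda[A_{0,\rho_1}]$. The same holds for $\mathsf e_\lambda$: on $|z|=\rho_1$ the outward normals of $A_{0,\rho_1}$ and $A_{\rho_1,R}$ are opposite, and the $|z|=R$ boundary terms cancel. This yields $s\,\mathsf e_\lambda[A_{0,\rho_1}]$, and the $\alpha$ terms combine to $\alpha s(\lambda(\rho_1)-\lambda(0))$. That proves \eqref{eq:smooth-disk-complement-mgf}. The main obstacle I expect is confirming that the explicit $s$-dependent parts of \eqref{def of C3dc hole}--\eqref{def of C6dc hole} really take this form, in particular the signs of the boundary terms on $|z|=\rho_1$ and the factor $1-\tau_{\rho_1}$ in place of $\tau_{\rho_2}$.

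For the CLT \eqref{eq:smooth-disk-complement-clt}, set $Y_n:=X_n[\lambda]-n\mu_{\mathrm{dc}}(\lambda)+\frac{\rho_1\lambda'(\rho_1)}{4}\log n$. The expansion above shows that $\log\mathbb E[e^{sY_n}]\to s\mathsf m_\lambda^{(\mathrm{dc})}+\frac{s^2}{2}\mathsf v_\lambda[A_{0,\rho_1}]$ for every real $s$. Convergence of moment generating functions on a neighborhood of $0$ implies convergence in distribution to $\mathcal N(\mathsf m_\lambda^{(\mathrm{dc})},\mathsf v_\lambda[A_{0,\rho_1}])$.
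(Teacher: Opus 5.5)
Your proposal is correct and follows the paper's proof essentially step by step. You set the counting parameters to zero in Theorem~\ref{theorem:asymptotic determinant of disk complement} and subtract the $s=0$ expansion. You then read off the $s$-parts of \eqref{def of C3dc hole}, \eqref{def of C5dc hole} and \eqref{def of C6dc hole} together with the unconstrained coefficients, and use the second representation of $\Lambda_Q(\rho_1,R)$ so that the opposite normals on $|z|=\rho_1$ turn $\mathsf e_\lambda[S]-\mathsf e_\lambda[A_{\rho_1,R}]$ into $\mathsf e_\lambda[A_{0,\rho_1}]$. The signs and the wall mass $1-\tau_{\rho_1}$ you were unsure about are confirmed by the explicit coefficients, which also give exactly the three order-one pieces, the order-$n$ part and the order-$\log n$ part you predicted; the CLT then follows from convergence of the moment generating functions, as you say.
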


\begin{proof}
Set all counting parameters to zero in
Theorem~\ref{theorem:asymptotic determinant of disk complement} and subtract
the same expansion at $s=0$.  The $s$-parts of
\eqref{def of C3dc hole} and \eqref{def of C5dc hole}, together with the
unconstrained contribution, give respectively
$s\mu_{\mathrm{dc}}(\lambda)n$ and
$-s\rho_1\lambda'(\rho_1)\log n/4$.  At order one,
\eqref{def of C6dc hole} contributes
\begin{align*}
&\frac{s}{2}\rho_1\lambda'(\rho_1)
 \log\!\left(
 \frac{\rho_1\sqrt{\Delta Q(\rho_1)}}
 {\sqrt{2\pi}\,(1-\tau_{\rho_1})}
 \right)
-\alpha s\bigl(\lambda(R)-\lambda(\rho_1)\bigr)
-\Lambda_Q(\rho_1,R).
\end{align*}
Combining this with the unconstrained order-one coefficient, and using the
second representation of $\Lambda_Q$, leaves precisely the two functionals
on $A_{0,\rho_1}$ and the root term displayed in
\eqref{eq:smooth-disk-complement-mgf}.
\end{proof}

The annular hole region is different because its allowed set has two connected components.  Define
\begin{equation}
\label{def:allowed-annular-set}
\Omega_{\mathrm{an}}
:=A_{0,\rho_1}\cup A_{\rho_2,R}.
\end{equation}
The outward normals are $+\partial_r$ at $\rho_1$ and $R$, and
$-\partial_r$ at $\rho_2$.  We use the functionals
$\mathsf e_\lambda[\Omega_{\mathrm{an}}]$ and
$\mathsf v_\lambda[\Omega_{\mathrm{an}}]$ from
\eqref{def:smooth-e-functional}--\eqref{def:smooth-v-functional}.

\begin{theorem}[Fluctuations of smooth rotation invariant linear statistics for an annular hole]
\label{theorem:smooth-annular-hole}
Let
\begin{align}
\delta_\lambda:=\lambda(\rho_2)-\lambda(\rho_1),
\label{def:annular-smooth-shortcuts},\qquad
x_n:=n\sigma_\star-\alpha+\frac12-\frac{\log\frac{\sigma_1}{\sigma_2}}{2\log\frac{\rho_2}{\rho_1}},
\qquad c_\lambda:=\frac{\delta_\lambda}{2\log\frac{\rho_2}{\rho_1}}.
\end{align}
For each fixed $s\in\R$, as $n\to\infty$,
\begin{align}
\log \mathbb E_{n,\mathbb A_{\rho_1,\rho_2}}^{(\alpha)}
 \!\left[e^{sX_n[\lambda]}\right]
&=sn\mu_{\mathrm{an}}(\lambda)
+s\widetilde\mu_{\mathrm{an}}(\lambda)\log n
+s\mathsf m_\lambda^{(\mathrm{an})}
+\frac{s^2}{2}\mathsf v_\lambda^{(\mathrm{an})}
\nonumber\\
&\quad
+\log\frac{
 \theta\!\left(x_n-c_\lambda s\,\middle|\,\frac{\pi i}{\log\frac{\rho_2}{\rho_1}}\right)}{
 \theta\!\left(x_n\,\middle|\,\frac{\pi i}{\log\frac{\rho_2}{\rho_1}}\right)}
+\mathcal O\!\left(\frac{(\log n)^3}{n^{1/12}}\right),
\label{eq:smooth-annular-mgf}
\end{align}
where
\begin{align}
\mu_{\mathrm{an}}(\lambda)
&:=\int_{\C}\lambda\,d\mu_{\mathrm{an}}
=2\int_0^{\rho_1}\lambda(r)r\Delta Q(r)\,dr
+2\int_{\rho_2}^{R}\lambda(r)r\Delta Q(r)\,dr
+\sigma_1\lambda(\rho_1)+\sigma_2\lambda(\rho_2),
\label{def:annular-smooth-leading-mean}\\
\widetilde\mu_{\mathrm{an}}(\lambda)
&:=\frac14\bigl(\rho_2\lambda'(\rho_2)
-\rho_1\lambda'(\rho_1)\bigr),
\label{def:annular-smooth-log-mean}\\
\mathsf h_\lambda^{(\mathrm{an})}
&:=\frac{\rho_1\lambda'(\rho_1)}2
 \log\!\left(
 \frac{\rho_1\sqrt{\Delta Q(\rho_1)}}{\sqrt{2\pi}\,\sigma_1}
 \right)
-\frac{\rho_2\lambda'(\rho_2)}2
 \log\!\left(
 \frac{\rho_2\sqrt{\Delta Q(\rho_2)}}{\sqrt{2\pi}\,\sigma_2}
 \right),
\label{def:annular-smooth-hard-mean}\\
\mathsf m_\lambda^{(\mathrm{an})}
&:=\mathsf e_\lambda[\Omega_{\mathrm{an}}]
+\alpha\bigl(\lambda(R)-\lambda(0)\bigr)
+\mathsf h_\lambda^{(\mathrm{an})}
+\frac{\delta_\lambda}{2}
+\frac{\delta_\lambda\log\frac{\sigma_1}{\sigma_2}}{2\log\frac{\rho_2}{\rho_1}},
\label{def:annular-smooth-order-one-mean}\\
\mathsf v_\lambda^{(\mathrm{an})}
&:=\mathsf v_\lambda[\Omega_{\mathrm{an}}]
+\frac{\delta_\lambda^2}{2\log\frac{\rho_2}{\rho_1}}.
\label{def:annular-smooth-variance}
\end{align}
\end{theorem}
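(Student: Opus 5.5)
The plan follows the proofs of Theorems~\ref{theorem:smooth centered disk} and \ref{theorem:smooth-disk-complement}. Set all counting parameters to zero in Theorem~\ref{theorem:asymptotic determinant of annulus}. This kills the counting coefficients and the theta increment $\mathcal F_{n,\#}$. Then write
\[
\log \mathbb E^{(\alpha)}_{n,\mathbb A_{\rho_1,\rho_2}}\bigl[e^{sX_n[\lambda]}\bigr]=\log D_n^{(\mathrm{an})}\big|_{s}-\log D_n^{(\mathrm{an})}\big|_{s=0},
\]
as in \eqref{def of MGF smooth test f}. The error terms simply add, giving $\mathcal O((\log n)^3 n^{-1/12})$. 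The work is to identify the $s$-dependence of each coefficient $C^{(\mathrm{an})}_{k,\mathrm{hole}}$ and $C^{(\mathrm n)}_k$, and of the filling-fraction term $\mathcal F_n[\lambda]$ from \eqref{def of C1 hole an}--\eqref{def of calF hole an}. I expect, as in the disk cases, that the $n^2$, $n\log n$ and $\sqrt n$ coefficients do not depend on $s$.

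\textbf{Order $n$ and $\log n$.} The order-$n$ coefficient should consist of three pieces:
\begin{itemize}[label={}]
\item the unconstrained part $s\int_S\lambda\,d\mu_Q$;
\item minus the removed mass $s\int_{\rho_1<|z|<\rho_2}\lambda\,d\mu_Q$;
\item the wall charges $s\sigma_1\lambda(\rho_1)+s\sigma_2\lambda(\rho_2)$.
\end{itemize}
Together these give $s\mu_{\mathrm{an}}(\lambda)$ via \eqref{def of equi measure annulus hole}. The $\log n$ coefficient should get one semi-hard wall contribution from each wall, with opposite signs. Each is a local copy of the disk-complement term $-\tfrac{s}{4}\rho_1\lambda'(\rho_1)$ and the centered-disk term $+\tfrac s4\rho_2\lambda'(\rho_2)$, which gives $s\widetilde\mu_{\mathrm{an}}(\lambda)$.

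\textbf{Order one.} There are two parts.
\begin{itemize}[label={}]
\item \emph{Non-theta part.} The unconstrained difference contributes $s\mathsf e_\lambda[S]+\tfrac{s^2}{2}\mathsf v_\lambda[S]+\alpha s(\lambda(R)-\lambda(0))$. The hole coefficient should subtract $\Lambda_Q(\rho_1,\rho_2)$. By the second representation of $\Lambda_Q$, this equals $s\mathsf e_\lambda[\mathbb A_{\rho_1,\rho_2}]+\tfrac{s^2}{2}\mathsf v_\lambda[\mathbb A_{\rho_1,\rho_2}]$. Bulk integrals are additive over radial regions, and the boundary integrals at $\rho_1,\rho_2$ switch sign with the outward normal, so the difference yields $\mathsf e_\lambda[\Omega_{\mathrm{an}}]$ and $\mathsf v_\lambda[\Omega_{\mathrm{an}}]$. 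The two root terms give $s\mathsf h_\lambda^{(\mathrm{an})}$.
\item \emph{Theta part.} I expect $\mathcal F_n[\lambda]$ to arise as a discrete Gaussian sum over the number $k$ of particles in the inner component. A Gaussian in $k$, after Poisson summation, becomes $\log\theta$ plus an explicit quadratic prefactor. With the smooth statistic, the particle count in each component enters through the weight $e^{s\lambda}$. Shifting $k$ therefore multiplies by roughly $e^{sk\delta_\lambda}$, which moves the Gaussian center by $c_\lambda s$; the $\alpha$ charge shifts the phase to $x_n$.
\end{itemize}
Completing the square in the Gaussian should produce three things:
\begin{itemize}[label={}]
\item the extra variance $\delta_\lambda^2/(2\log\frac{\rho_2}{\rho_1})$;
\item the linear terms $\tfrac{\delta_\lambda}{2}+\delta_\lambda\log(\sigma_1/\sigma_2)/(2\log\frac{\rho_2}{\rho_1})$, coming from the center $\tfrac12-\log(\sigma_1/\sigma_2)/(2\log\frac{\rho_2}{\rho_1})$ in $x_n$;
\item the ratio $\theta(x_n-c_\lambda s)/\theta(x_n)$.
\end{itemize}

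\textbf{Main obstacle.} This last bookkeeping is the delicate step. It requires the theta argument and the quadratic prefactor in \eqref{def of calF hole an} to be matched exactly, with the correct sign of $c_\lambda s$ and the correct $\alpha$-shift in $x_n$. Any mismatch would leave a residual term linear in $s$. I would check it by using the modular/quasi-periodicity of $\theta$ to confirm that the formula is invariant under $x_n\mapsto x_n+1$. I would also check the case $\lambda$ constant on $[\rho_1,\rho_2]$, where $\delta_\lambda=0$ and the theta ratio must be $1$.
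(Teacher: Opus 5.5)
Your plan is correct and matches the paper's argument: set the counting parameters to zero in Theorem~\ref{theorem:asymptotic determinant of annulus}, subtract the $s=0$ expansion, and read off the $s$-dependence of the catalogue coefficients, using the second representation of $\Lambda_Q$ to assemble the functionals on $\Omega_{\mathrm{an}}$. The one correction is that no Poisson summation or completing the square is needed: $\mathcal F_n[\lambda]$ in \eqref{def of calF hole an} is only the $\log\theta$ term, and its argument is already $x_n-c_\lambda s$ because $\log\frac{\rho_1}{\rho_2}=-\log\frac{\rho_2}{\rho_1}$; the term $\frac{s}{2}\delta_\lambda$ and the squared logarithm $\frac{(\log(\sigma_1/\sigma_2)+s\delta_\lambda)^2}{4\log(\rho_2/\rho_1)}$ appear explicitly in $C^{(\mathrm{an})}_{6,\mathrm{hole}}$ in \eqref{def of C6 hole an}, not in $\mathcal F_n[\lambda]$, and their $s$-difference gives $\frac{s\delta_\lambda}{2}+\frac{s\delta_\lambda\log(\sigma_1/\sigma_2)}{2\log(\rho_2/\rho_1)}+\frac{s^2\delta_\lambda^2}{4\log(\rho_2/\rho_1)}$, so the sign check you flag as the main obstacle is immediate.
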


\begin{proof}
Set all counting parameters to zero in
Theorem~\ref{theorem:asymptotic determinant of annulus} and subtract its
value at $s=0$.  The coefficients of $n$ and $\log n$ are
\eqref{def:annular-smooth-leading-mean} and
\eqref{def:annular-smooth-log-mean}.  At order one, the difference of
\eqref{def of C6 hole an} between $s$ and zero is
\[
s\mathsf h_\lambda^{(\mathrm{an})}
+\frac{s\delta_\lambda}{2}
+\frac{(\log\frac{\sigma_1}{\sigma_2}+s\delta_\lambda)^2-(\log\frac{\sigma_1}{\sigma_2})^2}{4\log\frac{\rho_2}{\rho_1}}
-\Lambda_Q(\rho_1,\rho_2).
\]
Adding the unconstrained coefficient and using the second representation of
$\Lambda_Q$ gives
\eqref{def:annular-smooth-order-one-mean} and
\eqref{def:annular-smooth-variance}.  Finally,
\eqref{def of calF hole an} at $s$ minus its value at zero is exactly the
theta ratio in \eqref{eq:smooth-annular-mgf}.
\end{proof}

The theta ratio in \eqref{eq:smooth-annular-mgf} is not a negligible. 
If $\delta_{\lambda}=0$, the centered linear statistic converges in distribution to a Gaussian random variable, whereas if $\delta_{\lambda}\neq 0$, its limiting behavior is non-Gaussian due to the additional theta-function contribution. The following comparison makes this distinction explicit.

\begin{corollary}
\label{theorem:annular-smooth-limit-law}
If $\delta_\lambda=0$, then as $n\to+\infty$, we have
\begin{equation}
Z_n[\lambda]:=X_n[\lambda]-n\mu_{\mathrm{an}}(\lambda)
-\widetilde\mu_{\mathrm{an}}(\lambda)\log n
\ \xrightarrow{\ d\ }\
\mathcal N\!\left(
\mathsf e_\lambda[\Omega_{\mathrm{an}}]
+\alpha(\lambda(R)-\lambda(0))
+\mathsf h_\lambda^{(\mathrm{an})},
\mathsf v_\lambda[\Omega_{\mathrm{an}}]
\right).
\label{eq:annular-smooth-gaussian-limit}
\end{equation}
If $\delta_\lambda\ne0$, let 
\[
\Upsilon_n:=\sigma_{\ast}n-\frac{1}{2}-\alpha+\frac{\log\frac{\sigma_2}{\sigma_1}}{2\log \frac{\rho_2}{\rho_1}}
\]
and $\langle\Upsilon_n\rangle=\Upsilon_n-\lfloor \Upsilon_n\rfloor$ be the fraction part of $\Upsilon_n$.
Let $\mathcal{Y}_n$ be the discrete Gaussian random variable on $\mathbb{Z}$ defined by 
\begin{equation}
\mathbb P(\mathcal{Y}_n=y)  =\frac{e^{-(y-\langle\Upsilon_n\rangle)^2\log\frac{\rho_2}{\rho_1}}}{\displaystyle\sum_{\ell\in\mathbb Z}
e^{-(\ell-\langle\Upsilon_n\rangle)^2\log\frac{\rho_2}{\rho_1}}},
\qquad y\in\mathbb Z. 
\end{equation}
Let $G$ be a random variable independent of $\mathcal{Y}_n$ with
\[
G\overset{d}{\sim}\mathcal N\!\left(
\mathsf e_\lambda[\Omega_{\mathrm{an}}]
+\alpha(\lambda(R)-\lambda(0))
+\mathsf h_\lambda^{(\mathrm{an})},
\mathsf v_\lambda[\Omega_{\mathrm{an}}]
\right).
\]
Let
\begin{equation}
    W_n[\lambda]:=G+\delta_{\lambda}\Bigl(\frac{1}{2}-\frac{\log\frac{\sigma_2}{\sigma_1}}{2\log\frac{\rho_2}{\rho_1}}+\langle\Upsilon_n\rangle-\mathcal{Y}_n\Bigr).
\end{equation}
Then as $n\to+\infty$, we have 
\begin{align}
\label{def of Zn Wn}
Z_n[\lambda]
-W_n[\lambda]
\ \xrightarrow{\ d\ }\
0.
\end{align}
\end{corollary}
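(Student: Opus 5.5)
The plan is to read everything off the moment generating function in Theorem~\ref{theorem:smooth-annular-hole}, after rewriting the theta ratio in \eqref{eq:smooth-annular-mgf} as the moment generating function of an explicit discrete Gaussian.

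\emph{Case $\delta_\lambda=0$.} Then $c_\lambda=0$, so the theta ratio equals $1$. The extra terms in \eqref{def:annular-smooth-order-one-mean} and \eqref{def:annular-smooth-variance} vanish. Hence $\log\mathbb E[e^{sZ_n[\lambda]}]$ converges, for every real $s$, to $sm+\frac{s^2}{2}\mathsf v_\lambda[\Omega_{\mathrm{an}}]$, where $m=\mathsf e_\lambda[\Omega_{\mathrm{an}}]+\alpha(\lambda(R)-\lambda(0))+\mathsf h_\lambda^{(\mathrm{an})}$. Pointwise convergence of moment generating functions on $\R$ (indeed on a neighbourhood of $0$) to that of a Gaussian gives \eqref{eq:annular-smooth-gaussian-limit}.

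\emph{Case $\delta_\lambda\neq0$.} Write $L=\log\frac{\rho_2}{\rho_1}$ and $\tau=\pi i/L$.

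1. Apply the Jacobi imaginary transformation (Poisson summation) $\theta(z\mid\tau)=(-i\tau)^{-1/2}e^{-\pi i z^2/\tau}\theta(z/\tau\mid-1/\tau)$. For real $z$ this gives
\[
\theta(z\mid\tau)=\sqrt{L/\pi}\,\sum_{k\in\mathbb Z}e^{-L(k-z)^2}.
\]

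2. Substitute $z=x_n-c_\lambda s$ and expand the square, using $2Lc_\lambda=\delta_\lambda$:
\[
e^{-L(k-x_n+c_\lambda s)^2}=e^{-L(k-x_n)^2}\,e^{s\delta_\lambda(x_n-k)}\,e^{-s^2\delta_\lambda^2/(4L)}.
\]
The Gaussian factor $e^{-s^2\delta_\lambda^2/(4L)}$ exactly cancels the extra variance $\frac{s^2}{2}\cdot\frac{\delta_\lambda^2}{2L}$ in $\mathsf v_\lambda^{(\mathrm{an})}$. So
\[
\frac{s^2}{2}\frac{\delta_\lambda^2}{2L}+\log\frac{\theta(x_n-c_\lambda s\mid\tau)}{\theta(x_n\mid\tau)}=\log\mathbb E\bigl[e^{s\delta_\lambda(x_n-K)}\bigr],
\]
where $K$ has weights proportional to $e^{-L(k-x_n)^2}$ on $\mathbb Z$.

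3. Check the definitions: $x_n=\Upsilon_n+1$. Write $K=\lfloor\Upsilon_n\rfloor+1+y$. Then $x_n-K=\langle\Upsilon_n\rangle-y$, and $y$ has exactly the law of $\mathcal Y_n$.

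4. The remaining deterministic order-one terms $\frac{\delta_\lambda}{2}+\frac{\delta_\lambda\log(\sigma_1/\sigma_2)}{2L}$ equal $\delta_\lambda\bigl(\frac12-\frac{\log(\sigma_2/\sigma_1)}{2L}\bigr)$. Since $G$ is independent of $\mathcal Y_n$, we obtain, for every fixed $s\in\R$,
\[
\log\mathbb E[e^{sZ_n[\lambda]}]-\log\mathbb E[e^{sW_n[\lambda]}]\to0 .
\]

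\emph{Passing from moment generating functions to \eqref{def of Zn Wn}.} This is the main obstacle: $W_n$ has no limit, because $\langle\Upsilon_n\rangle$ oscillates with $n$. I interpret \eqref{def of Zn Wn} as saying that a metric for weak convergence (e.g.\ the Lévy distance) between the laws of $Z_n[\lambda]$ and $W_n[\lambda]$ tends to $0$.

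I would argue by subsequences. Given any subsequence, extract a further one along which $\langle\Upsilon_n\rangle\to u\in[0,1]$. Along it, $\mathbb E[e^{sW_n}]$ converges for all real $s$ to the moment generating function of $W(u)$, which is $W_n$ with $\langle\Upsilon_n\rangle$ replaced by $u$. This is finite and continuous in $u$, since the discrete Gaussian tails are uniformly summable. By step 4 the moment generating function of $Z_n$ has the same limit. The continuity theorem for moment generating functions finite near $0$ then gives $Z_n\Rightarrow W(u)$ and $W_n\Rightarrow W(u)$ along this sub-subsequence. Hence the distance between their laws tends to $0$ along it. Since every subsequence has such a further subsequence, the distance tends to $0$ along the full sequence.

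The endpoint $u=1$ causes no issue. $\lfloor\Upsilon_n\rfloor$ may jump while $\langle\Upsilon_n\rangle\to1$, but by step 3 the random variable $\langle\Upsilon_n\rangle-\mathcal Y_n$ is $x_n-K$, and the law of $x_n-K$ depends only on $x_n$ modulo $1$. It is therefore continuous across that point.
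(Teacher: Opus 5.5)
Your proposal is correct and follows the paper's proof. Poisson summation rewrites the theta ratio in \eqref{eq:smooth-annular-mgf} as $e^{-s^2\delta_\lambda^2/(4\log\frac{\rho_2}{\rho_1})}$ times the moment generating function of $\delta_\lambda(\langle\Upsilon_n\rangle-\mathcal Y_n)$, and that Gaussian prefactor cancels the extra variance term; the paper passes from $x_n=\Upsilon_n+1$ to $\Upsilon_n$ by periodicity, which your reindexing also handles. The final step to \eqref{def of Zn Wn}, which the paper delegates to \cite[Proof of Corollary 1.7]{ACCL2}, is exactly the subsequence argument with the moment generating function continuity theorem that you write out.
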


\begin{proof}
Poisson summation gives, for $x\in \R$,
\begin{equation}
\label{eq:theta-discrete-gaussian-identity}
\theta\!\left(x\,\middle|\,\frac{\pi i}{\log\frac{\rho_2}{\rho_1}}\right)
=\sqrt{\frac{\log\frac{\rho_2}{\rho_1}}{\pi}}
 \sum_{k\in\mathbb Z}e^{-\log\frac{\rho_2}{\rho_1}(k-x)^2}.
\end{equation}
If $\delta_\lambda=0$, the theta ratio in
\eqref{eq:smooth-annular-mgf} is one and this yields
\eqref{eq:annular-smooth-gaussian-limit}. 
If $\delta_\lambda\ne0$, we use
\eqref{eq:theta-discrete-gaussian-identity} in
\eqref{eq:smooth-annular-mgf}. 
By the periodicity of $\theta$, we get 
\[
\frac{\theta\!\left(\Upsilon_n-\frac{s\delta_{\lambda}}{2\log\frac{\rho_2}{\rho_1}}\,\middle|\,\frac{\pi i}{\log\frac{\rho_2}{\rho_1}}\right)}
{\theta\!\left(\Upsilon_n\,\middle|\,\frac{\pi i}{\log\frac{\rho_2}{\rho_1}}\right)}
=
e^{-\frac{s^2\delta_{\lambda}^2}{4\log\frac{\rho_2}{\rho_1}}}\mathbb{E}\Bigl[e^{-s\delta_{\lambda}(\mathcal{Y}_n-\langle\mathcal{Y}_n\rangle)}\Bigr].
\]
Therefore, by the same argument of \cite[Proof of Corollary 1.7]{ACCL2}, as $n\to+\infty$, we obtain \eqref{def of Zn Wn}. 

\end{proof}


The three results on the fluctuations of smooth rotation invariant linear statistics can be summarized as follows.  In every
row the order-$n$ mean is $n\int\lambda\,d\mu_{\mathrm g}$ for $g=\mathrm{cd,dc,an}$.

For the three cases considered above, the logarithmic correction, the Dirichlet variance, and the form of the limiting law can be summarized as follows. In the centered-disk hole case, the coefficient of $\log n$ is
\[
\frac{\rho_2\lambda'(\rho_2)}{4},
\]
the variance is given by $\mathsf v_\lambda[A_{\rho_2,R}]$, and the limiting law is Gaussian. For the disk-complement hole case, the corresponding coefficient is
\[
-\frac{\rho_1\lambda'(\rho_1)}{4},
\]
while the variance is $\mathsf v_\lambda[A_{0,\rho_1}]$; the limiting law is again Gaussian. Finally, in the annular hole case, the coefficient of $\log n$ is
\[
\frac{\rho_2\lambda'(\rho_2)-\rho_1\lambda'(\rho_1)}{4},
\]
and the variance is $\mathsf v_\lambda[\Omega_{\mathrm{an}}]$. In this case, the limiting law is Gaussian when $\delta_\lambda=0$, whereas for $\delta_\lambda\neq0$ an additional theta-function contribution appears.

\subsection{Multivariate counting statistics for an annular hard wall region}
\label{subsection:annular-multivariate-counting}

We finally specialize the smooth parameter to zero and retain all annular
counting parameters.  This gives both an anisotropically scaled multivariate
CLT and a fixed-parameter refinement in which the theta term remains visible.
Set
\begin{equation}
\label{def:annular-counting-phase}
\Phi_n:=n\sigma_\star-\alpha+\frac12+\frac{\log\frac{\sigma_2}{\sigma_1}}{2\log\frac{\rho_2}{\rho_1}},
\qquad
\widehat \tau_{\log\frac{\rho_2}{\rho_1}}:=\frac{\pi i}{\log\frac{\rho_2}{\rho_1}},
\end{equation}
and abbreviate
\begin{equation}
\label{def:annular-q-sharp}
\mathfrak q(\boldsymbol s):=\log\mathsf Q,
\end{equation}
where $\mathsf Q$ is given by \eqref{def of mathsf Q}.

\begin{theorem}[Multivariate counting statistics for an annular hard wall region]
\label{theorem:annular-multivariate-mgf}
Let $s=0$.
There exists $\varepsilon>0$ such that the following expansion holds locally uniformly for $\boldsymbol s$ in the complex polydisc $\max_\ell|s_\ell|<\varepsilon$:
\begin{align}
\log\mathcal E_{n,\mathbb A_{\rho_1,\rho_2}}^{(\alpha)}
(\boldsymbol s;\boldsymbol r_{\mathrm{an}})
&=C_{3,\#}^{(\mathrm{an})}(\boldsymbol s) n
+\,C_{4,\#}^{(\mathrm{an})}(\boldsymbol s)\sqrt{n}
+C_{5,\#}^{(\mathrm{an})}(\boldsymbol s)\log n
\nonumber\\
&\quad
+C_{6,\#}^{(\mathrm{an})}(\boldsymbol s)
+\mathcal F_{n,\#}(\boldsymbol s)
+\mathcal O(n^{-1/12}),
\label{eq:annular-multivariate-mgf}
\end{align}
where $\mathcal F_{n,\#}(\boldsymbol s)$ is given by
\begin{align}
\mathcal F_{n,\#}(\boldsymbol s)
=\frac{\mathfrak q(\boldsymbol s)\log\frac{\sigma_2}{\sigma_1}}{2\log\frac{\rho_2}{\rho_1}}
+\frac{\mathfrak q(\boldsymbol s)^2}{4\log\frac{\rho_2}{\rho_1}}
+\log\frac{
\theta\!\left(\Phi_n+\frac{\mathfrak q(\boldsymbol s)}{2\log\frac{\rho_2}{\rho_1}}
\,\middle|\,\widehat \tau_{\log\frac{\rho_2}{\rho_1}}\right)}
{\theta\!\left(\Phi_n\,\middle|\,\widehat \tau_{\log\frac{\rho_2}{\rho_1}}\right)}.
\label{eq:explicit-annular-counting-theta}
\end{align}
\end{theorem}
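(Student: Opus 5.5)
We obtain the statement from Theorem~\ref{theorem:asymptotic determinant of annulus} with $s=0$, subtracting its value at $\boldsymbol s=\boldsymbol 0$, since by \eqref{def of multivariate MGF disk counting statistics}
\[
\log\mathcal E_{n,\mathbb A_{\rho_1,\rho_2}}^{(\alpha)}(\boldsymbol s;\boldsymbol r_{\mathrm{an}})
=\log D_n^{(\mathrm{an})}\big|_{s=0,\boldsymbol s}-\log D_n^{(\mathrm{an})}\big|_{s=0,\boldsymbol s=\boldsymbol 0}.
\]
In that theorem the coefficients $C_1^{(\mathrm{an})},C_2^{(\mathrm{an})}$, the hole coefficients $C_{k,\mathrm{hole}}^{(\mathrm{an})}$, the unconstrained coefficients $C_k^{(\mathrm n)}$ and the baseline term $\mathcal F_n[\lambda]$ (at $\lambda$-parameter $s=0$) do not depend on $\boldsymbol s$, so they cancel. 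The counting coefficients $C_{k,\#}^{(\mathrm{an})}$ vanish at $\boldsymbol s=\boldsymbol 0$, and $\mathcal F_{n,\#}$ is by construction the increment of the theta term. What remains is exactly $C_{3,\#}^{(\mathrm{an})}n+C_{4,\#}^{(\mathrm{an})}\sqrt n+C_{5,\#}^{(\mathrm{an})}\log n+C_{6,\#}^{(\mathrm{an})}+\mathcal F_{n,\#}$. We then rewrite \eqref{def of calFNsharp} in the form \eqref{eq:explicit-annular-counting-theta}. Counting insertions on the inner component shift the effective split of the $n$ particles between the two components. This should enter the theta argument as $\mathfrak q(\boldsymbol s)/(2\log\frac{\rho_2}{\rho_1})$, with $\mathsf Q$ the product of the hard-edge, semi-hard and bulk factors from \eqref{def of mathsf Q}. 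Completing the square in the Gaussian part produces the two explicit prefactors, in parallel with the $s\delta_\lambda$ computation in the proof of Theorem~\ref{theorem:smooth-annular-hole}.

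Theorem~\ref{theorem:asymptotic determinant of annulus} is stated for real fixed parameters with error $\mathcal O((\log n)^3 n^{-1/12})$, whereas here we need complex $\boldsymbol s$ in a small polydisc and the sharper $\mathcal O(n^{-1/12})$. I would not simply quote the theorem but revisit its proof. Its starting point is the product representation $D_n=n!\prod_{j=0}^{n-1}h_j$ with radial moments
\[
h_j=2\int_{[0,\rho_1]\cup[\rho_2,\infty)}r^{2j+2\alpha+1}e^{-nq(r)}\prod_{\ell}\bigl(e^{s_\ell}\bigr)^{\mathbf 1_{r<r_\ell}}dr .
\]
Each $h_j$ is analytic in $\boldsymbol s$. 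For $|s_\ell|<\varepsilon$ the weight stays in a sector near the positive axis, so $h_j\neq0$ and $\log h_j$ is analytic. The saddle-point and erfc (bulk, semi-hard), incomplete-gamma (hard) and Riemann-sum estimates for $\sum_j\log h_j$ in the proof of Theorem~\ref{theorem:asymptotic determinant of annulus} go through verbatim for complex $s_\ell$, uniformly on compact subsets. The extra $(\log n)^3$ there comes from the $\lambda$-perturbation and from cutoffs that are not needed once $s=0$ and we take the difference in $\boldsymbol s$.

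The main obstacle is the index range $j\approx n\sigma_\star$. There $h_j$ is a sum of comparable contributions from the inner and outer components, and the resulting sum over $j$ produces the theta function. For complex $\boldsymbol s$ we must check that the theta function in the denominator, and its shift by $\mathfrak q(\boldsymbol s)$, stay away from zero uniformly in $n$. Since $\theta(x\mid\pi i/\log\frac{\rho_2}{\rho_1})>0$ for real $x$ by the Poisson formula \eqref{eq:theta-discrete-gaussian-identity}, and $\theta$ is periodic and continuous, the bound $|\theta|\geq c>0$ persists for $|\operatorname{Im}x|$ small. Choosing $\varepsilon$ small makes $|\operatorname{Im}\mathfrak q(\boldsymbol s)|$ small, so the logarithm is well defined and the uniform remainder follows. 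Local uniformity then upgrades to uniform convergence of derivatives by Cauchy estimates, which is what the cumulant applications use.
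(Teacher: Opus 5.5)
Your proposal is correct and is essentially the paper's argument. The paper does not pass through the difference of full determinant expansions; it directly quotes Theorem~\ref{theorem:counting statistics of annulus case}, which is exactly the analysis of $\sum_j\log\bigl(h_j(\boldsymbol s)/h_j(\boldsymbol 0)\bigr)$ you propose to redo, already stated with remainder $\mathcal O(n^{-1/12})$ and local uniformity on a small complex polydisc (the $(\log n)^3$ loss in the determinant theorem comes from the $\boldsymbol s$-independent unconstrained factor, which cancels exactly, not from the $\lambda$-perturbation), and it then sets $s=0$ in \eqref{def of calFNsharp}. That last step is a plain substitution with no square to complete, and your heuristic for $\mathsf Q$ is off: $\mathsf Q$ is a ratio of inner- to outer-component weights, not a product of hard-edge, semi-hard and bulk factors, and after removing the common factors $e^{s_{3m+1}+\cdots+s_{7m}}$ and $e^{s_{4m+1}+\cdots+s_{7m}}$ it depends only on the hard-edge parameters $s_{2m+1},\dots,s_{4m}$.
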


\begin{proof}
Equation \eqref{eq:annular-multivariate-mgf} is
Theorem~\ref{theorem:counting statistics of annulus case}, written for the
multivariate MGF \eqref{def of multivariate MGF disk counting statistics}, with the smooth
parameter set to zero.  Substituting
$\log(\rho_2/\rho_1)$ and $\mathfrak q=\log\mathsf Q$ into
\eqref{def of calFNsharp} gives
\eqref{eq:explicit-annular-counting-theta}.  
\end{proof}

\begin{remark}
Theorem~\ref{theorem:annular-multivariate-mgf} generalizes and unifies some of the results in \cite{CL2023,ACCL1,ACCL2}.    
\end{remark}

To state the CLT, put
\begin{equation}
I_{\mathrm h}:=\{2m+1,\ldots,4m\},
\qquad
I_{\mathrm r}:=\{1,\ldots,2m\}\cup\{4m+1,\ldots,7m\},
\end{equation}
and write
$\boldsymbol N_n=(\mathrm N(r_1),\ldots,\mathrm N(r_{7m}))^{\mathsf T}$.
Let $\mathsf D_n$ be diagonal with
\begin{equation}
(\mathsf D_n)_{\ell\ell}
:=\begin{cases}
\sqrt n,&\ell\in I_{\mathrm h},\\
n^{1/4},&\ell\in I_{\mathrm r}.
\end{cases}
\label{def:annular-counting-normalization}
\end{equation}

\begin{theorem}[Multivariate CLT for annular disk counting statistics]
\label{theorem:annular-multivariate-clt}
As $n\to\infty$,
\begin{equation}
\mathsf D_n^{-1}
\bigl(\boldsymbol N_n-\mathbb E\boldsymbol N_n\bigr)
\ \xrightarrow{\ d\ }\
\mathcal N_{7m}(\boldsymbol0,\boldsymbol\Sigma),
\label{eq:annular-multivariate-clt}
\end{equation}
where
\begin{equation}
\Sigma_{\ell k}
:=\begin{cases}
\partial_{s_\ell}\partial_{s_k}
C_{3,\#}^{(\mathrm{an})}(\boldsymbol0),
&\ell,k\in I_{\mathrm h},\\[2pt]
\partial_{s_\ell}\partial_{s_k}
C_{4,\#}^{(\mathrm{an})}(\boldsymbol0),
&\ell,k\in I_{\mathrm r},\\[2pt]
0,&\text{otherwise}.
\end{cases}
\label{def:annular-counting-limit-covariance}
\end{equation}
Each mean can be read directly from Subsection~\ref{subsection:counting statistics functionals}:
\begin{align}
\mathbb E\,\mathrm N(r_\ell)
&=n\,\partial_{s_\ell}C_{3,\#}^{(\mathrm{an})}(\boldsymbol0)
+\sqrt n\,\partial_{s_\ell}C_{4,\#}^{(\mathrm{an})}(\boldsymbol0)
+(\log n)\,\partial_{s_\ell}C_{5,\#}^{(\mathrm{an})}(\boldsymbol0)
+\mathcal O(1).
\label{eq:annular-counting-means}
\end{align}
For $c>0$, define
\begin{equation}
\mathsf A_c(t):=
\begin{cases}
\dfrac{1-e^{-2ct}}{2t},&t>0,\\[5pt]
c,&t=0.
\end{cases}
\label{def:hard-edge-covariance-function}
\end{equation}
If $p,q\in\{2m+1,\ldots,3m\}$, then
\begin{equation}
\Sigma_{pq}
=\mathsf A_{\sigma_1}\!\bigl(\max\{t_p,t_q\}\bigr)
-\mathsf A_{\sigma_1}(t_p+t_q),
\label{eq:inner-hard-edge-covariance}
\end{equation}
whereas for $p,q\in\{3m+1,\ldots,4m\}$,
\begin{equation}
\Sigma_{pq}
=\mathsf A_{\sigma_2}\!\bigl(\max\{t_p,t_q\}\bigr)
-\mathsf A_{\sigma_2}(t_p+t_q).
\label{eq:outer-hard-edge-covariance}
\end{equation}
The covariance between the two hard-wall blocks is zero.  The remaining
entries are the Hessian at zero of the explicit $C_{4,\#}^{(\mathrm{an})}$ integrals in Subsection~\ref{subsection:counting statistics functionals}.
\end{theorem}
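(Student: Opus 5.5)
We derive the CLT from the multivariate MGF expansion of Theorem~\ref{theorem:annular-multivariate-mgf}, using anisotropically rescaled parameters. Fix $\boldsymbol u\in\R^{7m}$ and set $s_\ell:=u_\ell/(\mathsf D_n)_{\ell\ell}$. Then $s_\ell=u_\ell/\sqrt n$ for $\ell\in I_{\mathrm h}$ and $s_\ell=u_\ell n^{-1/4}$ for $\ell\in I_{\mathrm r}$. The goal is
\[
\log\mathbb E\bigl[e^{\boldsymbol u\cdot\mathsf D_n^{-1}(\boldsymbol N_n-\mathbb E\boldsymbol N_n)}\bigr]\to\tfrac12\boldsymbol u^{\mathsf T}\boldsymbol\Sigma\boldsymbol u .
\]
Since the expansion \eqref{eq:annular-multivariate-mgf} holds locally uniformly on a complex polydisc, the functions $C_{k,\#}^{(\mathrm{an})}$ are analytic in $\boldsymbol s$ near $\boldsymbol 0$, and $\mathcal F_{n,\#}$ is analytic with derivatives bounded uniformly in $n$. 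The latter holds because $\theta$ is bounded away from zero on the real line for purely imaginary modulus, which we see from the Poisson-summation form \eqref{eq:theta-discrete-gaussian-identity}. The error term is analytic as well. By Cauchy estimates we may therefore differentiate the expansion in $\boldsymbol s$ at $\boldsymbol 0$, and the error derivatives are still $\mathcal O(n^{-1/12})$. This gives the means \eqref{eq:annular-counting-means}: the $\mathcal O(1)$ term collects $\partial C_{6,\#}$, $\partial\mathcal F_{n,\#}$ and the error. It also gives the joint cumulants
\[
\kappa_{\ell k}=n\,\partial_{\ell k}C_{3,\#}+\sqrt n\,\partial_{\ell k}C_{4,\#}+\log n\,\partial_{\ell k}C_{5,\#}+\mathcal O(1).
\]

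The key structural input is how each block depends on $\boldsymbol s$. From the explicit formulas in Subsection~\ref{subsection:counting statistics functionals}, we expect the following. The bulk, semi-hard and soft-edge counting parameters ($I_{\mathrm r}$) enter $C_{3,\#}$ only linearly, since the mean at order $n$ is $\tau_{r}$-type. Their nonlinear dependence starts in $C_{4,\#}$, through the $\sqrt n$-scale integrals of $\log(1+(e^{s}-1)\tfrac12\erfc(\cdot))$. The hard-edge parameters ($I_{\mathrm h}$) have nonlinear dependence already in $C_{3,\#}$, because at distance $n^{-1}$ from the wall the sum over $\sim n$ indices produces exponential profiles $e^{-2\sigma_j t}$. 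I would verify from those formulas that $C_{3,\#}$ is a sum of a linear function of all $s_\ell$ and a function of $\boldsymbol s_{\mathrm h}$ alone that splits as inner plus outer. Similarly, the Hessian of $C_{4,\#}$ in the $I_{\mathrm r}$ variables has no hard-edge cross terms.

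With this structure we Taylor expand at the scaled $\boldsymbol s$ and check orders. A mixed second derivative $\partial_{\ell k}C_{3,\#}$ with $\ell\in I_{\mathrm r}$ vanishes. The term $\partial_{\ell k}C_{4,\#}\sqrt n\, s_\ell s_k$ is $\sqrt n\cdot n^{-1/2}=O(1)$ for $\ell,k\in I_{\mathrm r}$, giving $\Sigma_{\ell k}$. For mixed pairs it is $\sqrt n\cdot n^{-1/2}n^{-1/4}=o(1)$, and for $\ell,k\in I_{\mathrm h}$ it is $o(1)$. The hard block gets $n\cdot n^{-1}\partial_{\ell k}C_{3,\#}$. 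The $\log n$ terms contribute $o(1)$. Third-order terms are $O(n\cdot n^{-3/2})+O(\sqrt n\, n^{-3/4})\to0$. Corrections from $\mathcal F_{n,\#}$ and $C_{6,\#}$ are $O(|\boldsymbol s|)\to0$, uniformly in the oscillating phase $\Phi_n$. The linear terms cancel against the exact mean, because we subtract $\mathbb E\boldsymbol N_n$ itself rather than its expansion. Convergence of the real MGF on a neighborhood of $\boldsymbol u=0$ then gives the claimed Gaussian limit. The vanishing cross covariances follow from the scaling counts above.

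Finally, we compute the hard-edge entries explicitly. We take the inner hard block of $C_{3,\#}$ and differentiate twice at $0$. Summing over indices $j$ near the wall, the relevant weight is a truncated exponential: the radial density near $\rho_1$ is approximately $2\sigma_1 n e^{-2\sigma_1 n(1-r/\rho_1)}$. The variance of $\sum\mathbf 1_{|z|<r_p}$ reduces to integrals $\int_0^{\sigma_1}\bigl(e^{-2xt_{\max}}-e^{-2x(t_p+t_q)}\bigr)\,dx$, which equal $\mathsf A_{\sigma_1}(\max\{t_p,t_q\})-\mathsf A_{\sigma_1}(t_p+t_q)$. The main obstacle is this structural verification of $C_{3,\#}$: separability, linearity in the regular variables, and no inner–outer cross terms. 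I would check it directly from the product representation, where the partition function is a product over $j$ of one-dimensional integrals, so $\log$ is additive in $j$ and each index $j$ feels only the singularities near its own peak location. Inner and outer walls correspond to disjoint sets of indices near $j\approx n\tau_{\rho_1}$ and $j\approx n\tau_{\rho_2}$, which gives the decoupling. The only global coupling is the theta term, and it is bounded.
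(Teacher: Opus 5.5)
Your proposal is correct and is essentially the paper's argument. You substitute $s_\ell=u_\ell/(\mathsf D_n)_{\ell\ell}$ into the locally uniform expansion of Theorem~\ref{theorem:annular-multivariate-mgf}, control the remainder and $\mathcal F_{n,\#}$ by Cauchy estimates, and use two structural facts:
\begin{itemize}
\item $C_{3,\#}^{(\mathrm{an})}$ is linear in the regular parameters plus separate functions of the inner and outer hard blocks; the paper checks this via $1+\widehat{\mathsf T}_0^{(3m+1,7m)}(\tau_{\rho_2})=e^{s_{3m+1}+\cdots+s_{7m}}$ and the resulting telescoping identities.
\item $C_{4,\#}^{(\mathrm{an})}$ involves only regular parameters.
\end{itemize}

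Two of your heuristic remarks are inaccurate, though neither affects the argument.
\begin{itemize}
\item The hard-wall counts are driven by the whole index ranges $n\tau_{\rho_1}\lesssim j\lesssim n\sigma_\star$ and $n\sigma_\star\lesssim j\lesssim n\tau_{\rho_2}$, not just by indices near $n\tau_{\rho_k}$.
\item In the variable $u=n(1-r/\rho_1)$, the edge profile is not $2\sigma_1 n\,e^{-2\sigma_1 u}$. It is the mixture $n\int_0^{\sigma_1}2y\,e^{-2yu}\,dy$, with one exponential law per index. That per-index structure is exactly why your covariance integral $\int_0^{\sigma_1}\bigl(e^{-2y\max\{t_p,t_q\}}-e^{-2y(t_p+t_q)}\bigr)\,dy$ comes out right.
\end{itemize}
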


\begin{proof}
For a fixed vector $\boldsymbol u$, substitute
$s_\ell=u_\ell/\sqrt n$ on $I_{\mathrm h}$ and
$s_\ell=u_\ell/n^{1/4}$ on $I_{\mathrm r}$ in
\eqref{eq:annular-multivariate-mgf}.  To make the block structure explicit,
put $I_-:=\{2m+1,\ldots,3m\}$,
$I_+:=\{3m+1,\ldots,4m\}$, and
$S_+:=\sum_{\ell\in I_+}s_\ell$.  The telescoping weights give
\begin{align*}
1+\widehat{\mathsf T}_0^{(3m+1,7m)}(\tau_{\rho_2})
&=\exp\!\left(\sum_{\ell=3m+1}^{7m}s_\ell\right),\\
\frac{\mathsf T_0^{(2m+1,3m)}(x)}
{1+\widehat{\mathsf T}_0^{(3m+1,7m)}(\tau_{\rho_2})}
&=\sum_{\ell\in I_-}\omega_\ell^{(3m)}
e^{-2t_\ell(x-\tau_{\rho_1})},\\
\frac{\widehat{\mathsf T}_0^{(3m+1,4m)}(x)}
{1+\widehat{\mathsf T}_0^{(3m+1,7m)}(\tau_{\rho_2})}
&=e^{-S_+}\sum_{\ell\in I_+}\omega_\ell^{(4m)}
e^{-2t_\ell(\tau_{\rho_2}-x)}.
\end{align*}
Consequently,
$C_{3,\#}^{(\mathrm{an})}$ is the sum of a linear function of the regular
parameters and two functions depending separately on the inner and outer
hard blocks.  Moreover, $C_{4,\#}^{(\mathrm{an})}$ depends only on the
regular parameters.  Hence all mixed Hessians invoked below vanish.

Local uniformity in the complex polydisc and Cauchy estimates allow Taylor
expansion and differentiation of the remainder at the origin.  After centering,
we obtain \eqref{eq:annular-multivariate-clt} by L\'evy’s continuity theorem.
\end{proof}

The theta term is therefore invisible at the central-limit scales, but if
$\{\Phi_{n_j}\}\to\phi\in[0,1)$ along some subsequences, then
\begin{align}
&\exp\!\left(
-n_jC_{3,\#}^{(\mathrm{an})}
-\sqrt{n_j}C_{4,\#}^{(\mathrm{an})}
-(\log n_j)C_{5,\#}^{(\mathrm{an})}
\right)
\mathcal E_{n_j,\mathbb A_{\rho_1,\rho_2}}^{(\alpha)}
(\boldsymbol s;\boldsymbol r_{\mathrm{an}})
\nonumber\\
&\qquad\longrightarrow
\exp\!\left(
C_{6,\#}^{(\mathrm{an})}
+\frac{\log(\sigma_2/\sigma1)\mathfrak q}{2L}+\frac{\mathfrak q^2}{4\log(\rho_2/\rho_1)}
\right)
\frac{\theta\!\left(\phi+\frac{\mathfrak q}{2\log(\rho_2/\rho_1)}
\,\middle|\,\widehat \tau_{\log(\rho_2/\rho_1)}\right)}
{\theta(\phi\mid \widehat \tau_{\log(\rho_2/\rho_1)})}.
\label{eq:annular-fixed-parameter-mod-theta}
\end{align}


Then we have the following.

\begin{corollary}
\label{theorem:annular-filling-number-law}
For every fixed $u\in\mathbb R$, as $n\to+\infty$, we have 
\begin{align}
\log\mathbb E_{n,\mathbb A_{\rho_1,\rho_2}}^{(\alpha)}
\!\left[e^{u\mathrm N(\rho_1)}\right]
&=n\sigma_\star u-\left(\alpha+\frac12\right)u
+\frac{\log(\sigma_2/\sigma_1)u}{2\log(\rho_2/\rho_1)}+\frac{u^2}{4\log(\rho_2/\rho_1)}
\nonumber\\
&\quad
+\log\frac{\theta\!\left(
\Phi_n+\frac{u}{2\log(\rho_2/\rho_1)}\,\middle|\,\widehat \tau_{\log(\rho_2/\rho_1)}\right)}
{\theta(\Phi_n\mid \widehat \tau_{\log(\rho_2/\rho_1)})}
+\mathcal O(n^{-1/12}).
\label{eq:annular-filling-number-mgf}
\end{align}
Let 
\[
\Lambda_n:=\sigma_{\ast}n-\frac{1}{2}-\alpha+\frac{\log\frac{\sigma_2}{\sigma_1}}{2\log\frac{\rho_2}{\rho_1}},
\]
and $\langle \Lambda_n \rangle:=\Lambda_n-\lfloor \Lambda_n \rfloor$ be the fractional part of $\Lambda_n$. 
Let $\mathcal{X}_n$ be the discrete Gaussian random variable on $\mathbb Z$ defined by 
\[
\mathbb{P}(\mathcal{X}_n=x)=\frac{e^{-(x-\langle\Lambda_n\rangle)^2}\log(\rho_2/\rho_1)}{\sum_{x\in\mathbb Z}
e^{-(x-\langle\Lambda_n\rangle)^2}\log(\rho_2/\rho_1)},\qquad x\in \mathbb Z. 
\]
Then as $n\to+\infty$, we have 
\begin{equation}
\label{def of convergence of fraction}    
\mathbb{P}(\mathrm{N}(\rho_1)=\lfloor\Lambda_n\rfloor+x)=
\mathbb{P}(\mathcal{X}_n=x)+o(1).
\end{equation}
\end{corollary}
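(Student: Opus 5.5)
The plan is to specialize Theorem~\ref{theorem:annular-multivariate-mgf} to a single counting statistic sitting exactly at the inner wall, and then to invert the resulting theta-function expression by the Poisson-summation identity \eqref{eq:theta-discrete-gaussian-identity}.

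\textbf{Specialization.} Take $m=1$ and set all parameters to zero except the inner hard-edge one $s_{2m+1}=u$. Choose $t_{2m+1}=0$, so that $r_{2m+1}=\rho_1$ and $\mathrm N(r_{2m+1})=\mathrm N(\rho_1)$. Since no particle lies in $\mathbb A_{\rho_1,\rho_2}$, we have $\mathrm N(\rho_1)=\mathrm N(r)$ for every $\rho_1\le r\le\rho_2$. We then read off the counting coefficients from Subsection~\ref{subsection:counting statistics functionals}.
\begin{itemize}
\item The telescoping weights in the proof of Theorem~\ref{theorem:annular-multivariate-clt} with $t_\ell=0$ make $C_{3,\#}^{(\mathrm{an})}$ linear in $u$. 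Its coefficient is the mass $\sigma_\star$ of $\mu_{\mathrm{an}}$ on $\overline{\mathbb D}_{\rho_1}$, namely $\tau_{\rho_1}+\sigma_1=\sigma_\star$. This gives $n\sigma_\star u$.
\item $C_{4,\#}^{(\mathrm{an})}$ depends only on the regular parameters, so it vanishes here.
\item We check in the explicit formulas that $C_{5,\#}^{(\mathrm{an})}=0$ and that $C_{6,\#}^{(\mathrm{an})}=-(\alpha+\frac12)u$. The shift $-\alpha u$ reflects the point charge at the origin, and $-u/2$ is the hard-edge half-integer correction.
\item We check that $\mathsf Q=e^{u}$, so that $\mathfrak q=u$.
\end{itemize}
Substituting into \eqref{eq:explicit-annular-counting-theta} then gives \eqref{eq:annular-filling-number-mgf}. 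Theorem~\ref{theorem:annular-multivariate-mgf} is stated only for $|u|<\varepsilon$. To reach every fixed real $u$, I would go back to Theorem~\ref{theorem:asymptotic determinant of annulus}, which holds for any fixed $\boldsymbol s$, and take the ratio at $\boldsymbol s$ and at $\boldsymbol 0$.

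\textbf{Main obstacle.} The step I expect to be hardest is getting the $O(1)$ bookkeeping exactly right: the sign conventions in $\Phi_n$ versus $\Lambda_n$, and the precise $-(\alpha+\frac12)u$ term. The routine specialization of the $\mathsf T$ weights is not the issue.

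\textbf{Inversion.} Let $L=\log(\rho_2/\rho_1)$. By the sign convention of \eqref{eq:theta-discrete-gaussian-identity}, $\theta(\Phi_n+\frac{u}{2L}\mid\pi i/L)$ is proportional to $\sum_k e^{-L(k-\Phi_n-u/(2L))^2}$. Expanding the square gives
\[
e^{-\frac{u^2}{4L}}\sum_k e^{-L(k-\Phi_n)^2}e^{u(k-\Phi_n)}.
\]
The Gaussian factor $e^{-u^2/(4L)}$ cancels the $u^2/(4L)$ term. The linear term $\frac{\log(\sigma_2/\sigma_1)u}{2L}$, together with $n\sigma_\star u-(\alpha+\frac12)u$, combines with $-u\Phi_n$ into $-u$ times an integer. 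Here we use $\Phi_n-\Lambda_n=1$ and reindex $k$. The main term of the MGF then becomes
\[
\sum_{x\in\mathbb Z}e^{u(\lfloor\Lambda_n\rfloor+x)}\,p_n(x),
\qquad
p_n(x)=\frac{e^{-L(x-\langle\Lambda_n\rangle)^2}}{\sum_{y\in\mathbb Z}e^{-L(y-\langle\Lambda_n\rangle)^2}},
\]
which is precisely $\mathbb E\bigl[e^{u(\lfloor\Lambda_n\rfloor+\mathcal X_n)}\bigr]$.

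\textbf{Convergence of point masses.} We have shown that $\mathbb E[e^{u(\mathrm N(\rho_1)-\lfloor\Lambda_n\rfloor)}]-\mathbb E[e^{u\mathcal X_n}]\to0$ for every real $u$. The laws of $\mathcal X_n$ form a tight family: they are parametrized by $\langle\Lambda_n\rangle\in[0,1)$, and their exponential moments are uniformly bounded. Arguing along subsequences on which $\langle\Lambda_n\rangle\to\phi$, as in \cite[Proof of Corollary 1.7]{ACCL2}, convergence of MGFs on $\mathbb R$ yields convergence in distribution of the integer-valued variables $\mathrm N(\rho_1)-\lfloor\Lambda_n\rfloor$. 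For integer-valued variables this is convergence of the individual point masses. Hence
\[
\mathbb P(\mathrm N(\rho_1)=\lfloor\Lambda_n\rfloor+x)=\mathbb P(\mathcal X_n=x)+o(1),
\]
which is \eqref{def of convergence of fraction}.
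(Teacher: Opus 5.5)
Your proposal is correct and follows the paper's proof. The paper makes the same single inner hard-edge insertion at $t=0$, which gives $C_{3,\#}^{(\mathrm{an})}=\sigma_\star u$, $C_{4,\#}^{(\mathrm{an})}=C_{5,\#}^{(\mathrm{an})}=0$, $C_{6,\#}^{(\mathrm{an})}=-(\alpha+\frac12)u$ and $\mathsf Q=e^u$; your Poisson-summation inversion (using $\Phi_n=\Lambda_n+1$) and the subsequence MGF argument are exactly what the paper delegates to \cite[Proof of Corollary 1.7]{ACCL2}.

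One small refinement concerns the step from the small polydisc to arbitrary fixed real $u$. Use Theorem~\ref{theorem:counting statistics of annulus case} directly: by the exact factorization \eqref{def of log Dnan Dndc Dncd}, the MGF is exactly $\mathcal{E}_{n,s\lambda,\alpha}^{(\mathrm{an})}$ at $s=0$. Subtracting two copies of Theorem~\ref{theorem:asymptotic determinant of annulus} instead only gives a remainder $\mathcal O((\log n)^3n^{-1/12})$. That is harmless for the point-mass statement but weaker than the displayed $\mathcal O(n^{-1/12})$.
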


\begin{proof}
In \eqref{eq:annular-multivariate-mgf}, retain one inner hard-edge insertion
at $t=0$ with parameter $u$ and set all other counting parameters to zero.
Then
\[
C_{3,\#}^{(\mathrm{an})}=\sigma_\star u,
\qquad C_{4,\#}^{(\mathrm{an})}=C_{5,\#}^{(\mathrm{an})}=0,
\qquad C_{6,\#}^{(\mathrm{an})}
=-\left(\alpha+\frac12\right)u,
\qquad \mathsf Q=e^u.
\]
This proves \eqref{eq:annular-filling-number-mgf}.  
\eqref{def of convergence of fraction} follows from the same argument in \cite[Proof of Corollary 1.7]{ACCL2}

\end{proof}

We conclude this section by summarizing our results in Table~\ref{tab:hard-wall-results-summary}.

\begin{table}[p]
\centering
\small
\setlength{\tabcolsep}{4pt}
\renewcommand{\arraystretch}{1.20}
\caption{Summary of the hard-wall asymptotics and fluctuations.
The set $U_{\mathrm g}$ is the hard wall region, whereas
$\Omega_{\mathrm g}=S\cap U_{\mathrm g}^{\mathrm c}$ is the support of the equilibrium measure. }
\label{tab:hard-wall-results-summary}
\vspace{3pt}
\begin{tabularx}{\textwidth}{@{}>{\raggedright\arraybackslash}p{0.24\textwidth}
  *{3}{>{\centering\arraybackslash}X}@{}}
\toprule
& Centered disk & Disk complement & Annulus \\
& $\mathrm{cd}$ & $\mathrm{dc}$ & $\mathrm{an}$ \\
\midrule
Hard wall region $U_{\mathrm g}$
& $\mathbb D_{\rho_2}$
& $\mathbb D_{\rho_1}^{\mathrm c}$
& $\mathbb A_{\rho_1,\rho_2}$ \\
Support of equilibrium measure $\Omega_{\mathrm g}$
& $A_{\rho_2,R}$
& $A_{0,\rho_1}$
& $A_{0,\rho_1}\cup A_{\rho_2,R}$ \\
\midrule
\multicolumn{4}{@{}l}{\textit{A. Determinants and hole probabilities}}\\[2pt]
Order in the asymptotic expansion
& \multicolumn{3}{c}{$n^2,\ n\log n,\ n,\ \sqrt n,\ \log n,\ 1$} \\
Theta term in $\log D_n^{(\mathrm g)}$
& $0$ & $0$
& $\mathcal F_n[\lambda]+\mathcal F_{n,\#}$ \\
Coefficient of $\log n$ in $\log\mathcal P_n[Q,U_{\mathrm g}]$
& $\dfrac13$ & $-\dfrac14$ & $0$ \\
Theta term in $\log\mathcal P_n[Q,U_{\mathrm g}]$
& $0$ & $0$
& $\log\theta(\xi_n^{(0)}\mid\pi i/\log(\rho_2/\rho_1))$ \\
Remainder
& \multicolumn{3}{c}{$\mathcal O(\varepsilon_n)$} \\
\midrule
\multicolumn{4}{@{}l}{\textit{B. Smooth rotation invariant linear statistics}}\\[2pt]
Order-$n$ centering
& \multicolumn{3}{c}{$n\mu_{\mathrm g}(\lambda)
   =n\int_{\mathbb C}\lambda\,d\mu_{\mathrm g}$} \\
Coefficient $\widetilde\mu_{\mathrm g}(\lambda)$ of the $\log n$ centering
& $\dfrac{\rho_2\lambda'(\rho_2)}4$
& $-\dfrac{\rho_1\lambda'(\rho_1)}4$
& $\dfrac{\rho_2\lambda'(\rho_2)-\rho_1\lambda'(\rho_1)}4$ \\
Gaussian-component variance
& $\mathsf v_\lambda[A_{\rho_2,R}]$
& $\mathsf v_\lambda[A_{0,\rho_1}]$
& $\mathsf v_\lambda[\Omega_{\mathrm{an}}]$ \\
Additional centered log-MGF term
& $0$ & $0$ & $\mathcal T_n(s)$ \\
Fluctuation law
& Gaussian
& Gaussian
& Gaussian if $\delta_\lambda=0$; non-Gaussian otherwise \\
Remainder
& \multicolumn{3}{c}{$\mathcal O(\varepsilon_n)$ for fixed $s\in\mathbb R$} \\
\midrule
\multicolumn{4}{@{}l}{\textit{C. Counting statistics for an annular hole region ($s=0$)}}\\[2pt]
Joint log-MGF, fixed parameters
& \multicolumn{3}{c}{$\begin{gathered}
 nC_{3,\#}^{(\mathrm{an})}+\sqrt n\,C_{4,\#}^{(\mathrm{an})}
 +(\log n)C_{5,\#}^{(\mathrm{an})}\\
 {}+C_{6,\#}^{(\mathrm{an})}+\mathcal F_{n,\#}
 +\mathcal O(n^{-1/12})
 \end{gathered}$} \\
Multivariate CLT
& \multicolumn{3}{c}{$\mathsf D_n^{-1}
 (\boldsymbol N_n-\mathbb E\boldsymbol N_n)
 \xrightarrow{\ d\ }\mathcal N_{7m}(\boldsymbol0,\boldsymbol\Sigma)$} \\
Normalization
& \multicolumn{3}{c}{$\sqrt n\ \text{on }I_{\mathrm h},
 \qquad n^{1/4}\ \text{on }I_{\mathrm r}$} \\
 $\mathrm N(\rho_1)$
& \multicolumn{3}{c}{Order-one discrete Gaussian fluctuation} \\
\bottomrule
\end{tabularx}

\vspace{5pt}
\begin{minipage}{\textwidth}
\footnotesize
\textit{Conventions.}
Hole probabilities are specialized to $\alpha=s=0$ and
$\boldsymbol s=\boldsymbol0$.
$A_{a,b}=\{z:a\leq |z|\leq b\}$,
$\varepsilon_n=(\log n)^3n^{-1/12}$,
$\delta_\lambda=\lambda(\rho_2)-\lambda(\rho_1)$, and
$\mathsf v_\lambda[B]=\frac14\int_B|\nabla\lambda|^2\,dA$.
We write $\langle x\rangle=x-\lfloor x\rfloor$ for the fractional part.
The phases are
\[
 \xi_n^{(0)}=n\sigma_\star+\frac12
  +\frac{\log(\sigma_2/\sigma_1)}{2\log(\rho_2/\rho_1)},
 \qquad x_n=\xi_n^{(0)}-\alpha=\Phi_n.
\]
In panel B, write
$Z_n^{(\mathrm g)}=X_n[\lambda]-n\mu_{\mathrm g}(\lambda)
-\widetilde\mu_{\mathrm g}(\lambda)\log n$. The convention is
\[
 \log\mathbb E[e^{sZ_n^{(\mathrm g)}}]
 =s\mathsf m_\lambda^{(\mathrm g)}
 +\frac{s^2}{2}\mathsf v_\lambda[\Omega_{\mathrm g}]
 +\mathbf 1_{\{\mathrm g=\mathrm{an}\}}\mathcal T_n(s)
 +\mathcal O(\varepsilon_n),\qquad
 \mathcal T_n(s)=\frac{s^2\delta_\lambda^2}{4\log(\rho_2/\rho_1)}
 +\log\frac{\theta(x_n-\delta_\lambda s/(2\log(\rho_2/\rho_1))\mid\pi i/\log(\rho_2/\rho_1))}
 {\theta(x_n\mid\pi i/\log(\rho_2/\rho_1))}.
\]
The order-one coefficients $\mathsf m_\lambda^{(\mathrm g)}$ are those in the
smooth-statistics theorems. The displayed Gaussian-component variance is
not the full annular variance when $\delta_\lambda\ne0$.
In panel C,
$I_{\mathrm h}=\{2m+1,\ldots,4m\}$ and
$I_{\mathrm r}=\{1,\ldots,2m\}\cup\{4m+1,\ldots,7m\}$.
Local uniformity of the joint log-MGF in a complex neighbourhood of zero
also gives asymptotics for all mixed cumulants by differentiation.
\end{minipage}
\end{table}

\subsection*{Organization of the paper}

Section~\ref{section: introduction} introduces the model, states the three
determinant expansions, and derives their consequences for hole
probabilities, smooth statistics, and annular disk counting statistics.
Section~\ref{section:coefficient-catalogue} collects the universal functions
and explicit coefficients.
Section~\ref{section:preliminary} gives the exact radial factorization and
the uniform asymptotics of the norms.
Section~\ref{section:hole-proofs} proves the hard-wall ratio expansions,
including the annular theta term.
Section~\ref{section:counting-proofs} establishes the counting expansions
in all four local regimes and completes the proofs of the main determinant
theorems.

\section{Lists of coefficients}
\label{section:coefficient-catalogue}

This section collects the several functionals to describe the results and explicit coefficients used
in the main theorems.  
Here and below, the superscripts $\mathrm{an}$, $\mathrm{dc}$, and
$\mathrm{cd}$ refer, respectively, to the annular-hole,
disk-complement-hole, and centered-disk-hole regions. For the coefficients of the disc counting statistics in each regime, $\mathrm b$, $\mathrm{se}$, $\mathrm h$, and $\mathrm s$
denote the bulk, semi-hard, hard, and soft regimes, while $\mathrm{in}$ and
$\mathrm{out}$ distinguish the two sides of a hard wall.
The coefficient index follows the convention
$k=1,\ldots,6$ for the orders
$n^2,n\log n,n,\sqrt n,\log n,1$, respectively.

\subsection{Hole-probability functionals}
We begin with the universal functionals and numerical constants entering the
hole-probability expansions.
\begin{description}
\item[(Entropy-type functional)] Define
\begin{equation}
    E_{U}[\mu_{Q}]:=\int_{U}\log \Delta Q\,d\mu_{Q}. 
\end{equation}
For example,
\begin{align*}
E_{\mathbb{D}_{\rho_2}}[\mu_Q]&=\int_{0}^{\rho_2}2u\Delta Q(u)\log\Delta Q(u)\,du,\\
E_{\mathbb{A}_{\rho_1,\rho_2}}[\mu_Q]&=\int_{\rho_1}^{\rho_2}2u\Delta Q(u)\log\Delta Q(u)\,du.
\end{align*}
\item[(Complementary-error-function constants)] Define
\begin{align}
    \Phi(x)&:=\log\Bigl(\frac{1}{2}\erfc(x)\Bigr),\qquad
    \Phi'(x)=-\frac{2e^{-x^2}}{\sqrt{\pi}\erfc(x)},
    \\
\mathcal{I}&:=
\int_{-\infty}^0 \log\Bigl( 
\frac{1}{2}\mathrm{erfc}(y)
\Bigr)\,dy
+
\int_{0}^{\infty}
\log\Bigl( 
\sqrt{\pi}ye^{y^2}\mathrm{erfc}(y)
\Bigr) \,dy.
\end{align}
We also define    
\begin{align}
\alpha_{\mathrm{in}}&:=\int_{-\infty}^{0}\frac{x^2-2}{3}\Phi'(x)\,dx, 
\\
\alpha_{\mathrm{out}}&:=\int_{0}^{+\infty}\bigg[
\frac{1}{3}\Bigl( 
(x^2-2)\Phi'(x)+x(2x^2-3)
\Bigr)-\frac{1}{1+x}
\bigg]\,dx, 
\\
\beta_{\mathrm{in}}&:=-\int_{-\infty}^{0}\frac{x^2+1}{6}\Phi'(x)\,dx,
\\
\beta_{\mathrm{out}}&:=-\frac{1}{6}\int_0^{+\infty}\Bigl( 
(x^2+1)\Phi'(x)+x(2x^2+3)
\Bigr)\,dx, 
\end{align}
and 
\begin{equation}
\mathcal{J}(\rho)
:=
-\bigl(\alpha_{\mathrm{in}}+\alpha_{\mathrm{out}}\bigr)
+\bigl(\beta_{\mathrm{in}}+\beta_{\mathrm{out}}\bigr)\frac{\rho\,\partial_r\Delta Q(\rho)}{\Delta Q(\rho)}. 
\end{equation}
\item[(Cumulative mass coordinate)] 
For $0\leq\rho\leq R$, the radial mass coordinate admits the equivalent representations
\begin{equation}
    \tau_{\rho}=\frac{1}{2}\rho q'(\rho)=\int_0^{\rho}2r\Delta Q(r)\,dr
    =\int_{\mathbb{D}_{\rho}}\Delta Q(z)\,dA(z)
    =\int_{S\cap\mathbb{D}_{\rho}}d\mu_{Q}(z).
\end{equation}
Since $\mu_Q(S)=1$, we likewise have
\begin{equation}
    1-\tau_{\rho}=\int_{\rho}^{R}2r\Delta Q(r)\,dr=\int_{S\cap\mathbb{D}_{\rho}^{\mathrm{c}}}\Delta Q(z)\,dA(z)
    =\int_{S\cap\mathbb{D}_{\rho}^{\mathrm{c}}}d\mu_{Q}(z). 
\end{equation}
For later use, we adopt the shorthand
\begin{equation}
    \mu_Q[U]:=\int_{S\cap U}d\mu_{Q}(z). 
\end{equation}
\item[(Potential and smooth-statistic functionals)]
Define
\begin{align}
\begin{split}
\label{def of FQab}
F_{Q}(a,b)
&:=\frac{1}{12}\log\Bigl(\frac{b^2\Delta Q(b)}{a^2\Delta Q(a)}\Bigr)
-\frac{1}{16}\Bigl(\frac{b\partial_r\Delta Q(b)}{\Delta Q(b)}-\frac{a\partial_r\Delta Q(a)}{\Delta Q(a)}\Bigr)
\\
&\quad 
+\frac{1}{24}\int_a^{b}\Bigl(\frac{\partial_r\Delta Q(u)}{\Delta Q(u)}\Bigr)^2u\,du,   
\end{split}    
\\
\begin{split}
\Lambda_Q(a,b)
&:=
\int_{a}^{b}
\bigg[
\frac{s^2}{4}u\lambda'(u)^2
-\frac{s}{4}u\lambda'(u)\frac{\partial_u\Delta Q(u)}{\Delta Q(u)}
\bigg]\,du
+\frac{s}{4}\bigl(b\lambda'(b)-a\lambda'(a)\bigr)
\\
&\,=\frac{s^2}{2}\cdot\frac{1}{4}\int_{\mathbb{A}_{a,b}}|\nabla\lambda(z)|^2\,dA(z)
+\frac{s}{2}\int_{\mathbb{A}_{a,b}}\lambda(z)\Delta \log \Delta Q(z)\,dA(z)\\
&\quad
+\frac{s}{8\pi}\int_{\partial\mathbb{A}_{a,b}}\partial_{\mathrm{n}}\lambda(z)\,|dz|-\frac{s}{8\pi}\int_{\partial\mathbb{A}_{a,b}}\lambda(z)\frac{\partial_{\mathrm{n}}\Delta Q(z)}{\Delta Q(z)}\,|dz|.
\end{split}
\end{align}
For the centered-disk geometry, we use the following one-endpoint version of
\eqref{def of FQab}: for $0<b<R$, define
\begin{equation}
\widetilde{F}_{Q}(b):=\frac{1}{12}\log\big(b^2\Delta Q(b)\bigr)-\frac{1}{16}\frac{b\,\partial_r\Delta Q(b)}{\Delta Q(b)}+\frac{1}{24}\int_0^b \Bigl(\frac{\partial_u\Delta Q(u)}{\Delta Q(u)}\Bigr)^2u\,du.
\end{equation}
\end{description}

\subsubsection{Annular hole}
The annular hard-wall coefficients and their bounded filling-fraction correction
are defined by
\begin{align}
\begin{split}
\label{def of C1 hole an}
C_{1,\mathrm{hole}}^{(\mathrm{an})}
&:=
\frac{(q(\rho_2)-q(\rho_1))^2}{4\log\frac{\rho_2}{\rho_1}}
+\tau_{\rho_1}q(\rho_1)
-\tau_{\rho_1}^2\log \rho_1
-\tau_{\rho_2}q(\rho_2)
+\tau_{\rho_2}^2\log \rho_2
\\
&\quad
+2\int_{\rho_1}^{\rho_2}\Bigl(q(u)-uq'(u)\log u\Bigr)u\Delta Q(u)\,du,
\end{split}
\\
\label{def of C2 hole an}
C_{2,\mathrm{hole}}^{(\mathrm{an})}
&:=
-\frac{1}{2}\int_{\mathbb{A}_{\rho_1,\rho_2}}\,d\mu_Q(z),
\\
\begin{split}
\label{def of C3 hole an}
C_{3,\mathrm{hole}}^{(\mathrm{an})}
&:=
\int_{\rho_1}^{\rho_2}
\Bigl(\log \Delta Q(u)-2s\lambda(u)-2\log u \Bigr)u\Delta Q(u)\,du
\\
&\quad 
+(\tau_{\rho_2}-\sigma_{\star})\Bigl(2\log\rho_2
+s\lambda(\rho_2)-\log (\tau_{\rho_2}-\sigma_{\star})
\Bigr)
\\
&\quad
+(\sigma_{\star}-\tau_{\rho_1})
\Bigl(2\log \rho_1
+s\lambda(\rho_1)
-\log (\sigma_{\star}-\tau_{\rho_1})\Bigr)
+(\tau_{\rho_2}-\tau_{\rho_1})(1+\log\frac{1}{\sqrt{2\pi}} ) , 
\end{split}
\\
\label{def of C4 hole an}
C_{4,\mathrm{hole}}^{(\mathrm{an})}
&:=
\mathcal{I}\sum_{k=1}^{2}\sqrt{2}\rho_k\sqrt{\Delta Q(\rho_k)},
\\
\label{def of C5 hole an}
C_{5,\mathrm{hole}}^{(\mathrm{an})}
&:=
\frac{s}{4}\rho_2\lambda'(\rho_2)-\frac{s}{4}\rho_1\lambda'(\rho_1), 
\\
\begin{split}
\label{def of C6 hole an}
C_{6,\mathrm{hole}}^{(\mathrm{an})}
&:=
\frac{1}{2}\log\Bigl(\frac{\pi}{\log\frac{\rho_2}{\rho_1}}\Bigr)
+\frac{1}{12}\log\Bigl(\frac{\rho_2}{\rho_1}\frac{\Delta Q(\rho_1)}{\Delta Q(\rho_2)}\Bigr)
+\frac{s}{2}(\lambda(\rho_2)-\lambda(\rho_1))
\\
&\quad
+\frac{1}{4\log\frac{\rho_2}{\rho_1}}
\Bigl(\log\Bigl(\frac{\sigma_{\star}-\tau_{\rho_1}}{\tau_{\rho_2}-\sigma_{\star}}e^{s(\lambda(\rho_2)-\lambda(\rho_1))}\Bigr)\Bigr)^2
-\sum_{j=1}^{\infty}\log\Bigl(1-\Bigl(\frac{\rho_1}{\rho_2}\Bigr)^{2j}\Bigr)
\\
&\quad
+
\frac{s}{2}\rho_1\lambda'(\rho_1)
\log\Bigl(\frac{\rho_1\sqrt{\Delta Q(\rho_1)}}{\sqrt{2\pi}(\sigma_{\star}-\tau_{\rho_1})}\Bigr)
-
\frac{s}{2}\rho_2\lambda'(\rho_2)
\log\Bigl(\frac{\rho_2\sqrt{\Delta Q(\rho_2)}}{\sqrt{2\pi}(\tau_{\rho_2}-\sigma_{\star})}\Bigr)
\\
&\quad
+\frac{\rho_1^2\Delta Q(\rho_1)}{\sigma_{\star}-\tau_{\rho_1}}
+\frac{\rho_2^2\Delta Q(\rho_2)}{\tau_{\rho_2}-\sigma_{\star}}
-F_Q(\rho_1,\rho_2)
-\Lambda_Q(\rho_1,\rho_2)
+\mathcal{J}(\rho_1)-\mathcal{J}(\rho_2), 
\end{split}
\\
\label{def of calF hole an}
\mathcal{F}_{n}[\lambda]&:=   
\log\theta
\Bigl( 
n\sigma_{\star}-\alpha+\frac{1}{2}
+
\frac{\log(\frac{\sigma_{\star}-\tau_{\rho_1}}{\tau_{\rho_2}-\sigma_{\star}})}{2\log\frac{\rho_1}{\rho_2}}
+\frac{s(\lambda(\rho_2)-\lambda(\rho_1))}{2\log\frac{\rho_1}{\rho_2}}
\Bigr|\frac{\pi i}{\log\frac{\rho_2}{\rho_1}}
\Bigr). 
\end{align}
\subsubsection{Disk-complement hole}
The disk-complement hard-wall coefficients are defined by
\begin{align}
\label{def of C1dc hole}
C_{1,\mathrm{hole}}^{(\mathrm{dc})}
&:=
-(1-\tau_{\rho_1})q(\rho_1)+(1-\tau_{\rho_1}^2)\log\rho_1
+2\int_{\rho_1}^{R}\Bigl(q(u)-uq'(u)\log u\Bigr)u\Delta Q(u)\,du,
\\
\label{def of C2dc hole}
C_{2,\mathrm{hole}}^{(\mathrm{dc})}
&:=
-\frac{1}{2}(1-\tau_{\rho_1}), 
\\
\begin{split}
\label{def of C3dc hole}
C_{3,\mathrm{hole}}^{(\mathrm{dc})}
&:=
-\frac{1}{2}(q(R)-q(\rho_1))
+\log \frac{R}{\rho_1}
\\
&\quad
+(1-\tau_{\rho_1})\Bigl(1-\log(1-\tau_{\rho_1})-\log\sqrt{2\pi}+2\log \rho_1+\mathsf{k}(\rho_1)\Bigr)
\\
&\quad
+\int_{\rho_1}^{R}
\Bigl(\log \Delta Q(u)-2s\lambda(u)-2(2\alpha+1)\log u\Bigr)u\Delta Q(u)\,du,
\end{split}
\\
\label{def of C4dc hole}
C_{4,\mathrm{hole}}^{(\mathrm{dc})}
&:=
\rho_1\sqrt{2\Delta Q(\rho_1)}\mathcal{I},
\\
\label{def of C5dc hole}
C_{5,\mathrm{hole}}^{(\mathrm{dc})}
&:=
-s\frac{\rho_1}{4}\lambda'(\rho_1)
-\frac{2\alpha+1}{4}, 
\\
\begin{split}
\label{def of C6dc hole}
C_{6,\mathrm{hole}}^{(\mathrm{dc})}
&:=
\Bigl(\frac{1}{2}+\frac{\rho_1}{2}\mathsf{k}'(\rho_1)\Bigr)
\log\Bigl(\frac{\rho_1\sqrt{\Delta Q(\rho_1)}}{\sqrt{2\pi}(1-\tau_{\rho_1})}\Bigr)
+\log(2\sqrt{\pi})
+\frac{\rho_1^2\Delta Q(\rho_1)}{1-\tau_{\rho_1}}
\\
&\quad
+\Bigl(\frac{\alpha}{2}+\frac{1}{6}\Bigr)\log \Bigl(\frac{\Delta Q(R)}{\Delta Q(\rho_1)}\Bigr)
+\Bigl(-\alpha^2+\frac{1}{3}\Bigr)\log\frac{R}{\rho_1}
-s\alpha(\lambda(R)-\lambda(\rho_1))
\\
&\quad
-F_Q(\rho_1,R)-\Lambda_Q(\rho_1,R)
+\mathcal{J}(\rho_1). 
\end{split}
\end{align}

\subsubsection{Centered-disk hole}
The centered-disk hard-wall coefficients are defined by

\begin{align}
\label{def of C1cd hole}
C_{1,\mathrm{hole}}^{(\mathrm{cd})}
&:=
-\tau_{\rho_2}q(\rho_2)
+\tau_{\rho_2}^2\log \rho_2
+2\int_{0}^{\rho_2}\Bigl(q(u)-uq'(u)\log u
\Bigr)u\Delta Q(u)\,du,
\\
\label{def of C2cd hole}
C_{2,\mathrm{hole}}^{(\mathrm{cd})}
&:=
-\frac{1}{2}\tau_{\rho_2},
\\
\begin{split}
\label{def of C3cd hole}
C_{3,\mathrm{hole}}^{(\mathrm{cd})}
&:=
-\frac{1}{2}q(\rho_2)+\frac{1}{2}q(0)
 +\tau_{\rho_2}\Bigl(1-\log(\tau_{\rho_2}\sqrt{2\pi})  
+s\lambda(\rho_2)+2(\alpha+1)\log\rho_2\Bigr)
\\
&\quad
-2\int_{0}^{\rho_2}\log u\cdot u\Delta Q(u)\,du
-2\int_{0}^{\rho_2}\mathsf{k}(u)u\Delta Q(u)\,du
+\int_{0}^{\rho_2}u\Delta Q(u)\log \Delta Q(u)\,du,
\end{split}
\\
\label{def of C4cd hole}
C_{4,\mathrm{hole}}^{(\mathrm{cd})}
&:=
\sqrt{2}\rho_2\sqrt{\Delta Q(\rho_2)}\mathcal{I}
,
\\
\label{def of C5cd hole}
C_{5,\mathrm{hole}}^{(\mathrm{cd})}
&:=
s\frac{\rho_2}{4}\lambda'(\rho_2)+\frac{\alpha}{2}-\frac{\alpha^2}{2}+\frac{1}{3}, 
\\
\begin{split}
\label{def of C6cd hole}
C_{6,\mathrm{hole}}^{(\mathrm{cd})}
&:=
-\frac{\alpha}{2}\log(2\pi)-\log(2\sqrt{\pi})-\zeta'(-1)+\log G(\alpha+1)+\Bigl(-\alpha^2+\frac{1}{3}\Bigr)\log\rho_2
\\
&
\quad
+\Bigl(\frac{\alpha}{2}+\frac{1}{6}\Bigr)\log\Delta Q(\rho_2)
-\Bigl(\frac{\rho_2}{2}\mathsf{k}'(\rho_2)+\frac{1}{2}\Bigr)\log\Bigl(\frac{\rho_2\sqrt{\Delta Q(\rho_2)}}{\sqrt{2\pi}\tau_{\rho_2}}\Bigr)
+\frac{\rho_2^2\Delta Q(\rho_2)}{\tau_{\rho_2}}
\\
&\quad 
-\frac{\alpha(\alpha+1)}{2}\log\Delta Q(0)
-\alpha s(\lambda(\rho_2)-\lambda(0))
-\widetilde{F}_Q(\rho_2)-\Lambda_Q(0,\rho_2)-\mathcal{J}(\rho_2).
\end{split}
\end{align}

\subsection{counting statistics functionals}
\label{subsection:counting statistics functionals}
We next introduce the weights and local functions used in the counting
moment-generating functions.  For the annular and centered-disk geometries,
define
\begin{equation}
\label{def of omega ell}
 \omega_{\ell}   
 :=
 \begin{cases}
     e^{s_{\ell}+\cdots+s_{7m}}-e^{s_{\ell+1}+\cdots+s_{7m}}, & \text{if } \ell<7m, \\
     e^{s_{7m}}-1, & \text{if } \ell=7m, \\
     1, & \text{if } \ell=7m+1.
 \end{cases}.
\end{equation}
For the disk-complement geometry, replace $7m$ by $3m$ throughout this
definition.  In the centered-disk geometry, impose
$s_{\ell}=0$ for $1\leq \ell\leq 3m$.  More generally, for an endpoint $p$
and $\ell\leq p$, set
\begin{equation}
\omega_{\ell}^{(p)}:=e^{s_{\ell}+\cdots+s_{p}}-e^{s_{\ell+1}+\cdots+s_{p}},
\qquad \omega_{p+1}^{(p)}:=1,
\end{equation}
where an empty sum in the exponent is zero.  Thus
\begin{equation}
\Omega_{\ell}^{(p)}:=\sum_{j=\ell}^{p+1}\omega_j^{(p)}
=\begin{cases}
e^{s_{\ell}+\cdots+s_{p}}, & \text{if } \ell\leq p, \\
1, & \text{if } \ell=p+1.
\end{cases}.
\end{equation}
For example, $\Omega_{m+1}^{(5m)}=e^{s_{m+1}+\cdots+s_{5m}}$; moreover,
$\omega_\ell^{(7m)}=\omega_\ell$.  For $j\geq0$ and integers $a\leq b$,
define
\begin{equation}
\mathsf{T}_{j}^{(a,b)}(x)
:=
\sum_{\ell=a}^{b}\omega_{\ell}t_{\ell}^j e^{-2t_{\ell}(x-\tau_{\rho_1})},
\qquad
\widehat{\mathsf{T}}_j^{(a,b)}(x):=\sum_{\ell=a}^{b}\omega_{\ell}t_{\ell}^j
e^{-2t_{\ell}(\tau_{\rho_2}-x)}.
\end{equation}
\subsubsection{Terms of order \texorpdfstring{$n$}{n}}
Because $\mu_Q$ is supported on $S$, the shorthand introduced above is
equivalently
\begin{equation}
 \mu_{Q}[B]:=\int_{B}d\mu_{Q}(z).   
\end{equation}
For a bounded measurable set $B\subset\C$, define
\begin{equation}
C_{3,p,q}[B]:=\mu_Q[B]\sum_{\ell=pm+1}^{qm}s_{\ell},    
\end{equation}
We shall also use the following hard-wall integrals:
\begin{align}
C_{3,\mathrm{in}}^{(\mathrm{an})}&:=\int_{\tau_{\rho_1}}^{\sigma_{\star}}\log\Bigl[1+\frac{\mathsf{T}_{0}^{(2m+1,3m)}(x)}{1+\widehat{\mathsf{T}}_0^{(3m+1,7m)}(\tau_{\rho_2})}\Bigr]\,dx,\qquad    
C_{3,\mathrm{out}}^{(\mathrm{an})}:=\int_{\sigma_{\star}}^{\tau_{\rho_2}}
\log\Bigl[1-\frac{\widehat{\mathsf{T}}_0^{(3m+1,4m)}(x)}{1+\widehat{\mathsf{T}}_0^{(3m+1,7m)}(\tau_{\rho_2})}\Bigr]\,dx,
\\
C_{3,\mathrm{in}}^{(\mathrm{dc})}&:=
\int_{\tau_{\rho_1}}^1
\log\bigl(1+\mathsf{T}_{0}^{(2m+1,3m)}(x)\bigr)\,dx,
\qquad
C_{3,\mathrm{out}}^{(\mathrm{cd})}:=\int_{0}^{\tau_{\rho_2}}
\log\Bigl(1-\frac{\widehat{\mathsf{T}}_0^{(3m+1,4m)}(x)}{1+\widehat{\mathsf{T}}_0^{(3m+1,7m)}(\tau_{\rho_2})}\Bigr)\,dx.
\end{align}
The local order-$n$ coefficients are
\begin{align}
C_3^{\#(\mathrm{b,in})}
&:=\mu_Q[\mathbb{D}_{a_1}]\sum_{\ell=1}^m s_{\ell},
\\
C_{3}^{\#(\mathrm{se,in})}
&:=
\mu_Q[\mathbb{D}_{\rho_1}]\sum_{j=m+1}^{2m}s_j,
\\
C_{3,(\mathrm{an})}^{\#(\mathrm{h,in})}
&:=
\mu_Q[\mathbb{D}_{\rho_1}]\sum_{j=2m+1}^{3m}s_j
+\int_{\tau_{\rho_1}}^{\sigma_{\star}}
\log\Bigl(1+\frac{\mathsf{T}_{0}^{(2m+1,3m)}(x)}{1+\widehat{\mathsf{T}}_0^{(3m+1,7m)}(\tau_{\rho_2})}\Bigr)\,dx,
\\
C_{3,(\mathrm{dc})}^{\#(\mathrm{h,in})}
&:=
\mu_Q[\mathbb{D}_{\rho_1}]\sum_{j=2m+1}^{3m}s_j
+
\int_{\tau_{\rho_1}}^1
\log\bigl(1+\mathsf{T}_{0}^{(2m+1,3m)}(x)\bigr)\,dx, 
\\
C_{3,(\mathrm{an})}^{\#(\mathrm{h,out})}
&:=
\mu_Q[\mathbb{D}_{\rho_2}]\sum_{j=3m+1}^{4m}s_j
+
\int_{\sigma_{\star}}^{\tau_{\rho_2}}
\log\Bigl(1-\frac{\widehat{\mathsf{T}}_0^{(3m+1,4m)}(x)}{1+\widehat{\mathsf{T}}_0^{(3m+1,7m)}(\tau_{\rho_2})}\Bigr)\,dx,
\\
C_{3,(\mathrm{cd})}^{\#(\mathrm{h,out})}
&:=
\mu_Q[\mathbb{D}_{\rho_2}]\sum_{j=3m+1}^{4m}s_j
+
\int_{0}^{\tau_{\rho_2}}
\log\Bigl(1-\frac{\widehat{\mathsf{T}}_0^{(3m+1,4m)}(x)}{1+\widehat{\mathsf{T}}_0^{(3m+1,7m)}(\tau_{\rho_2})}\Bigr)\,dx,
\\
C_{3}^{\#(\mathrm{se,out})}
&:=
\mu_Q[\mathbb{D}_{\rho_2}]\sum_{j=4m+1}^{5m}s_j,
\\
C_3^{\#(\mathrm{b,out})}
&:=\mu_Q[\mathbb{D}_{a_2}]\sum_{j=5m+1}^{6m}s_j,
\\
C_3^{\#(\mathrm{s})}
&:=\mu_Q[\mathbb{D}_{R}]\sum_{j=6m+1}^{7m}s_j.
\end{align}
\paragraph{Bulk regime, inner side.}
Define
\begin{align}
\mathcal{H}_1^{(\mathrm{b,in})}(x)
&:=
1+\sum_{\ell=1}^{m}\frac{e^{s_{\ell}}-1}{2}
\exp\Bigl( 
\sum_{p=\ell+1}^{m}s_{p}
\Bigr)
\erfc(x-t_{\ell}),
\\
\mathcal{H}_2^{(\mathrm{b,in})}(x)
&:=
1+\sum_{\ell=1}^{m}\frac{e^{-s_{\ell}}-1}{2}
\exp\Bigl( 
-\sum_{p=1}^{\ell-1}s_{p}
\Bigr)
\erfc(x+t_{\ell}), 
\\
\mathcal{G}_1^{(\mathrm{b,in})}(x)
&:=\frac{1}{\mathcal{H}_1^{(\mathrm{b,in})}(x)}
\sum_{\ell=1}^{m}
(e^{s_{\ell}}-1)
\exp\Bigl(\sum_{p=\ell+1}^{m}s_{p}\Bigr)
\frac{e^{-(t_{\ell}-x)^2}}{\sqrt{2\pi}}
\frac{1-2t_{\ell}^2+t_{\ell}x-5x^2}{3}, 
\\
\widetilde{\mathcal{G}}_1^{(\mathrm{b,in})}(x)
&:=
\frac{1}{\mathcal{H}_1^{(\mathrm{b,in})}(x)}
\sum_{\ell=1}^{m}
(e^{s_{\ell}}-1)
e^{\sum_{p=\ell+1}^{m}s_{p}}
\frac{e^{-(t_{\ell}-x)^2}}{\sqrt{2\pi}}
t_{\ell}^2. 
\end{align}
\paragraph{Semi-hard regime, inner side.}
Define
\begin{align}
\mathcal{H}_1^{(\mathrm{se,in})}(x)
&:=
1+\sum_{\ell=m+1}^{2m}\omega_{\ell}^{(2m)}\frac{\erfc(t_{\ell}+x)}{\erfc(x)}, 
\\
\begin{split}
\mathcal{H}_2^{(\mathrm{se,in})}(x)
&:=
\frac{1}{\mathcal{H}_1^{(\mathrm{se,in})}(x)}
\sum_{\ell=m+1}^{2m} \frac{1}{3\sqrt{\pi}}\omega_{\ell}^{(2m)}
\\
&\quad
\times
\bigg[
(5x^2-1)
\frac{e^{-x^2}}{\erfc(x)}
\frac{\erfc(t_{\ell}+x)}{\erfc(x)}
-
(5x^2+t_{\ell}x+2t_{\ell}^2-1)\frac{e^{-(t_{\ell}+x)^2}}{\erfc(x)}
\bigg], 
\end{split}
\\
\widetilde{\mathcal{H}}_2^{(\mathrm{se,in})}(x)
&:=
\frac{1}{\mathcal{H}_1^{(\mathrm{se,in})}(x)}
\sum_{\ell=m+1}^{2m}\omega_{\ell}^{(2m)}t_{\ell}^2\frac{e^{-(t_{\ell}+x)^2}}{3\sqrt{\pi}\erfc(x)}.
\end{align}
\paragraph{Hard regime, inner side.}
Define
\begin{align}
\mathcal{H}_{1}^{(\mathrm{h,in})}(x)
&:=
\begin{cases}
-\frac{2\rho_1^2\Delta Q(\rho_1)}{x-\tau_{\rho_1}}
\frac{\mathsf{T}_{1}^{(2m+1,3m)}(x)}{1+\mathsf{T}_{0}^{(2m+1,3m)}(x)}
-
x\frac{\mathsf{T}_{2}^{(2m+1,3m)}(x)}{1+\mathsf{T}_{0}^{(2m+1,3m)}(x)}, & \text{in the disk-complement geometry}, \\
-\frac{\frac{2\rho_1^2\Delta Q(\rho_1)}{x-\tau_{\rho_1}}
\mathsf{T}_{1}^{(2m+1,3m)}(x)
+
x
\mathsf{T}_{2}^{(2m+1,3m)}(x)
}{
1+\mathsf{T}_{0}^{(2m+1,3m)}(x)+\widehat{\mathsf{T}}_0^{(3m+1,7m)}(\tau_{\rho_2})}, & \text{in the annular geometry},
\end{cases},
\\
\widetilde{\mathcal{H}}_{k}^{(\mathrm{h,in})}(x)&:=\frac{\mathsf{T}_{k}^{(2m+1,3m)}(x)}{1+\mathsf{T}_{0}^{(2m+1,3m)}(x)+\widehat{\mathsf{T}}_{0}^{(3m+1,7m)}(\tau_{\rho_2})}.
\end{align}
\paragraph{Hard regime, outer side.}
Define
\begin{align}
\mathcal{H}_1^{(\mathrm{h,out})}(x)&:=
\frac{\frac{2\rho_2^2\Delta Q(\rho_2)}{\tau_{\rho_2}-x}
\widehat{\mathsf{T}}_{1}^{(3m+1,4m)}(x)
+x\widehat{\mathsf{T}}_{2}^{(3m+1,4m)}(x)}{1-\widehat{\mathsf{T}}_0^{(3m+1,4m)}(x)+\widehat{\mathsf{T}}_0^{(3m+1,7m)}(\tau_{\rho_2})}, 
\\
\widetilde{\mathcal{H}}_{k}^{(\mathrm{h,out})}(x)&:=\frac{\widehat{\mathsf{T}}_{k}^{(3m+1,4m)}(x)}{1-\widehat{\mathsf{T}}_0^{(3m+1,4m)}(x)+\widehat{\mathsf{T}}_0^{(3m+1,7m)}(\tau_{\rho_2})}.
\end{align}
\paragraph{Semi-hard regime, outer side.}
Define
\begin{align}
\mathcal{H}_{1}^{(\mathrm{se,out})}
(x)
&:=
1-\frac{1}{\Omega_{4m+1}^{(7m)}}\sum_{\ell=4m+1}^{5m}\omega_{\ell}
\frac{1-\frac{1}{2}\erfc(-t_{\ell}+x)}{1-\frac{1}{2}\erfc(x)}, 
\\
\begin{split}
\mathcal{H}_{2}^{(\mathrm{se,out})}
(x)
&:=
\frac{1}{\mathcal{H}_{1}^{(\mathrm{se,out})}
(x)}
\sum_{\ell=4m+1}^{5m}\frac{\omega_{\ell}}{\Omega_{4m+1}^{(7m)}}
\frac{1-\frac{1}{2}\erfc(-t_{\ell}+x)}{1-\frac{1}{2}\erfc(x)}
\frac{e^{-x^2}}{1-\frac{1}{2}\erfc(x)}
\frac{(5x^2-1)}{6\sqrt{\pi}}
\\
&\quad
-
\frac{1}{\mathcal{H}_{1}^{(\mathrm{se,out})}
(x)}
\sum_{\ell=4m+1}^{5m}\frac{\omega_{\ell}}{\Omega_{4m+1}^{(7m)}}
\frac{e^{-(t_{\ell}-x)^2}}{1-\frac{1}{2}\erfc(x)}
\frac{5x^2-1+2t_{\ell}^2-xt_{\ell}}{6\sqrt{\pi}},
\end{split}
\\
\widetilde{\mathcal{H}}_{2}^{(\mathrm{se,out})}
(x)
&:=
\frac{1}{\mathcal{H}_{1}^{(\mathrm{se,out})}
(x)}
\sum_{\ell=4m+1}^{5m}\frac{\omega_{\ell}}{\Omega_{4m+1}^{(7m)}}
\frac{e^{-(t_{\ell}-x)^2}}{(1-\frac{1}{2}\erfc(x))}
\frac{t_{\ell}^2}{6\sqrt{\pi}}, 
\end{align}
\paragraph{Bulk regime, outer side.}
Define
\begin{align}
\mathcal{H}_1^{(\mathrm{b,out})}(x)
&:=
1+\sum_{\ell=5m+1}^{6m}\frac{e^{s_{\ell}}-1}{2}
\exp\Bigl( 
\sum_{p=\ell+1}^{6m}s_{p}
\Bigr)
\erfc(x-t_{\ell}), 
\\
\mathcal{H}_2^{(\mathrm{b,out})}(x)
&:=
1+\sum_{\ell=5m+1}^{6m}\frac{e^{-s_{\ell}}-1}{2}
\exp\Bigl( 
-\sum_{p=5m+1}^{\ell-1}s_{p}
\Bigr)
\erfc(x+t_{\ell}), 
\\
\begin{split}
\mathcal{G}_1^{(\mathrm{b,out})}(x)
&:=
\frac{1}{\mathcal{H}_1^{(\mathrm{b,out})}(x)}
\sum_{\ell=5m+1}^{6m}
(e^{s_{\ell}}-1)\exp\Bigl( 
\sum_{p=\ell+1}^{6m}s_{p}
\Bigr)
\frac{e^{-(t_{\ell}-x)^2}}{\sqrt{2\pi}}
\frac{1-2t_{\ell}^2+t_{\ell}x-5x^2}{3}, 
\end{split}
\\
\widetilde{\mathcal{G}}_1^{(\mathrm{b,out})}(x)
&:=
\frac{1}{\mathcal{H}_1^{(\mathrm{b,out})}(x)}
\sum_{\ell=5m+1}^{6m}
(e^{s_{\ell}}-1)\exp\Bigl( 
\sum_{p=\ell+1}^{6m}s_{p}
\Bigr)
\frac{e^{-(t_{\ell}-x)^2}}{\sqrt{2\pi}}
t_{\ell}^2. 
\end{align}
\paragraph{Soft regime.}
Define
\begin{align}
\mathcal{H}_1^{(\mathrm{s})}(x)
&:= 
1+\sum_{\ell=6m+1}^{7m}
\frac{\omega_{\ell}}{2}\erfc(-t_{\ell}+x),
\\
\mathcal{G}_1^{(\mathrm{s})}(x)
&:=
\frac{1}{\mathcal{H}_1^{(\mathrm{s})}(x)}
\sum_{\ell=6m+1}^{7m}
\frac{\omega_{\ell}e^{-(t_{\ell}-x)^2}}{3\sqrt{2\pi}}
(1-2t_{\ell}^2+t_{\ell}x-5x^2), 
\\
\mathcal{G}_2^{(\mathrm{s})}(x)
&:=
\frac{1}{\mathcal{H}_1^{(\mathrm{s})}(x)}
\sum_{\ell=6m+1}^{7m}
\frac{\omega_{\ell}t_{\ell}^2e^{-(t_{\ell}-x)^2}}{6\sqrt{\pi}}.
\end{align}

\subsubsection{Terms of order \texorpdfstring{$\sqrt n$}{sqrt(n)}}
The $\sqrt n$ counting coefficients are
\begin{align}
C_4^{\#(\mathrm{b,in})}&:=
a_1\sqrt{2\Delta Q(a_1)}
\int_0^{+\infty}
\Bigl(
\log\mathcal{H}_1^{(\mathrm{b,in})}(x)
+
\log\mathcal{H}_2^{(\mathrm{b,in})}(x)
\Bigr)
\,dx,
\\
C_4^{\#(\mathrm{se,in})}
&:= 
\rho_1\sqrt{2\Delta Q(\rho_1)}\int_{-\infty}^{+\infty}
\bigg(
\log\mathcal{H}_1^{(\mathrm{se,in})}(x)
-
\mathbf{1}_{(-\infty,0)}
\sum_{j=m+1}^{2m}s_{j}
\bigg)
\,dx,
\\
C_4^{\#(\mathrm{b,out})}
&:= 
a_2\sqrt{2\Delta Q(a_2)}
\int_0^{+\infty}
\Bigl(
\log\mathcal{H}_{1}^{(\mathrm{b,out})}(x)
+
\log\mathcal{H}_{2}^{(\mathrm{b,out})}(x)
\Bigr)
\,dx, 
\\
C_4^{\#(\mathrm{se,out})}
&:= 
\rho_2\sqrt{2\Delta Q(\rho_2)}
\int_{-\infty}^{+\infty}
\Bigl( \log \mathcal{H}_{1}^{(\mathrm{se,out})}
(x)
+\mathbf{1}_{[0,+\infty)}(x)\sum_{\ell=4m+1}^{5m}s_{\ell}\Bigr)\,dx,
\\
C_4^{\#(\mathrm{s})}
&:= 
R\sqrt{2\Delta Q(R)}\int_{-\infty}^{0}
\Bigl( \log\mathcal{H}_{1}^{(\mathrm{s})}(x) -\sum_{\ell=6m+1}^{7m}s_{\ell} \Bigr)\,dx.
\end{align}

\subsubsection{Terms of order \texorpdfstring{$\log n$}{log(n)}}
The $\log n$ counting coefficients are
\begin{align}
C_{5,(\mathrm{dc})}^{\#(\mathrm{h,in})}
&:=-
\frac{\rho_1^2\Delta Q(\rho_1)\mathsf{T}_{1}^{(2m+1,3m)}(\tau_{\rho_1})}{1+\mathsf{T}_{0}^{(2m+1,3m)}(\tau_{\rho_1})},
\\
  C_{5,(\mathrm{an})}^{\#(\mathrm{h,in/out})}
    &:=
    -
\rho_1^2\Delta Q(\rho_1)\widetilde{\mathcal{H}}_{1}^{(\mathrm{h,in})}(\tau_{\rho_1})
+
\rho_2^2\Delta Q(\rho_2)\widetilde{\mathcal{H}}_{1}^{(\mathrm{h,out})}(\tau_{\rho_2}),
\\
C_{5,(\mathrm{cd})}^{\#(\mathrm{h,out})}
&:=\rho_2^2\Delta Q(\rho_2)\widetilde{\mathcal{H}}_1^{(\mathrm{h,out})}(\tau_{\rho_2}).
\end{align}

\subsubsection{Terms of order one}
The order-one counting coefficients are
\begin{align}
C_6^{\#(\mathrm{b,in})}&:=
\mathcal{D}_6^{(\mathrm{b,in})}-\Bigl( 
\frac{1}{2}a_1\mathsf{k}'(a_1)+\frac{1}{2}
\Bigr)\sum_{\ell=1}^{m}s_{\ell},
\\
C_6^{\#(\mathrm{se,in})}&:=
\mathcal{D}_6^{(\mathrm{se,in})}
-
\Bigl(\frac{1}{2}\rho_1\mathsf{k}'(\rho_1)+\frac{1}{2}\Bigr)
\sum_{\ell=m+1}^{2m}s_{\ell},
\\
C_{6,(\mathrm{dc})}^{\#(\mathrm{h,in})}&:=    
\mathcal{D}_{6,(\mathrm{dc})}^{(\mathrm{h,in})}-\Bigl(\frac{1}{2}\rho_1\mathsf{k}'(\rho_1)+\frac{1}{2}\Bigr)
\sum_{\ell=2m+1}^{3m}s_{\ell}+\Bigl(\frac{\rho_1\mathsf{k}'(\rho_1)}{2}+\frac{1}{2}\Bigr)\log(1+\mathsf{T}_0^{(2m+1,3m)}(1)),
\\
\begin{split}
C_{6,(\mathrm{an})}^{\#(\mathrm{h,in})}&:=
\mathcal{D}_{6,(\mathrm{an})}^{(\mathrm{h,in})}
\\
&\quad
- 
\Bigl(\frac{1}{2}+\frac{\rho_1\mathsf{k}'(\rho_1)}{2}\Bigr)\sum_{\ell=2m+1}^{3m}s_{\ell}
+
\frac{s\rho_1\lambda'(\rho_1)}{2}
\log\bigg[1+\frac{\mathsf{T}_{0}^{(2m+1,3m)}(\sigma_{\star})}{1+\widehat{\mathsf{T}}_0^{(3m+1,7m)}(\tau_{\rho_2})}\bigg], 
\end{split}
\\
\begin{split}
C_{6,(\mathrm{an})}^{\#(\mathrm{h,out})}&:=
\mathcal{D}_{6,(\mathrm{an})}^{(\mathrm{h,out})}
\\
&\quad
-\Bigl(\frac{1}{2}+\frac{\rho_2\mathsf{k}'(\rho_2)}{2}\Bigr)\sum_{\ell=3m+1}^{4m}s_{\ell} 
-\frac{s\rho_2\lambda'(\rho_2)}{2}\log\bigg[1-\frac{\widehat{\mathsf{T}}_0^{(3m+1,4m)}(\sigma_{\star})}{1+\widehat{\mathsf{T}}_0^{(3m+1,7m)}(\tau_{\rho_2})}\bigg],
\end{split}
\\
\begin{split}
C_{6,(\mathrm{cd})}^{\#(\mathrm{h,out})}
&:=\mathcal{D}_{6,(\mathrm{cd})}^{(\mathrm{h,out})}
\\
&\quad
-\Bigl( \frac{1}{2}+\frac{\rho_2\mathsf{k}'(\rho_2)}{2}\Bigr)\sum_{j=3m+1}^{4m}s_j
-\Bigl( \frac{1}{2}+\frac{\rho_2\mathsf{k}'(\rho_2)}{2}\Bigr)\log\bigg[1-\frac{\widehat{\mathsf{T}}_0^{(3m+1,4m)}(0)}{1+\widehat{\mathsf{T}}_0^{(3m+1,7m)}(\tau_{\rho_2})}\bigg],
\end{split}
\\
C_{6}^{\#(\mathrm{se,out})}&:=
\mathcal{D}_6^{(\mathrm{se,out})}
-\Bigl(\alpha+\frac{1}{2}+\frac{s}{2}\rho_2\lambda'(\rho_2)\Bigr)\sum_{\ell=4m+1}^{5m}s_{\ell},
\\
C_{6}^{\#(\mathrm{b,out})}
&:=
\mathcal{D}_6^{(\mathrm{b,out})}
- 
\Bigl(\alpha+\frac{1}{2}+\frac{s}{2}a_2\lambda'(a_2)\Bigr)\sum_{\ell=5m+1}^{6m}s_{\ell},
\\
C_{6}^{\#(\mathrm{s})}
&:=
\mathcal{D}_6^{(\mathrm{s})}
-\Bigl(\alpha+\frac{1}{2}+\frac{s}{2}R\lambda'(R)\Bigr)\sum_{\ell=6m+1}^{7m}s_{\ell}.
\end{align}
where
\begin{align}
\begin{split}
\mathcal{D}_6^{(\mathrm{b,in})}
&:=
\frac{1}{\sqrt{2}}\Bigl(2+\frac{a_1\partial_r\Delta Q(a_1)}{\Delta Q(a_1)}\Bigr)
\int_{-\infty}^{+\infty}
\mathcal{G}_1^{(\mathrm{b,in})}(x)\,dx
\\
&\quad
+
\frac{1}{\sqrt{2}}
\Bigl(1+\frac{a_1\partial_r\Delta Q(a_1)}{\Delta Q(a_1)}\Bigr)
\int_{-\infty}^{+\infty}
\widetilde{\mathcal{G}}_1^{(\mathrm{b,in})}(x)
\,dx  
\\
&\quad
+2\Bigl(
2+\frac{a_1\partial_r\Delta Q(a_1)}{\Delta Q(a_1)}
\Bigr)
\int_{0}^{+\infty}
x\Bigl( 
\log\mathcal{H}_1^{(\mathrm{b,in})}(x)-\log\mathcal{H}_2^{(\mathrm{b,in})}(x)
\Bigr)\,dx, 
\end{split}
\\
\begin{split}
\mathcal{D}_6^{(\mathrm{b,out})}
&:=
\frac{1}{\sqrt{2}}\Bigl(2+\frac{a_2\partial_r\Delta Q(a_2)}{\Delta Q(a_2)}\Bigr)
\int_{-\infty}^{+\infty}
\mathcal{G}_1^{(\mathrm{b,out})}(x)
\,dx
\\
&\quad
+
\frac{1}{\sqrt{2}}
\Bigl(1+\frac{a_2\partial_r\Delta Q(a_2)}{\Delta Q(a_2)}\Bigr)
\int_{-\infty}^{+\infty}
\widetilde{\mathcal{G}}_1^{(\mathrm{b,out})}(x)
\,dx  
\\
&\quad
+2\Bigl(
2+\frac{a_2\partial_r\Delta Q(a_2)}{\Delta Q(a_2)}
\Bigr)
\int_{0}^{+\infty}
x\Bigl( 
\log\mathcal{H}_1^{(\mathrm{b,out})}(x)-\log\mathcal{H}_2^{(\mathrm{b,out})}(x)
\Bigr)\,dx, 
\end{split}
\\
\begin{split}
\mathcal{D}_6^{(\mathrm{s})}
&:=
\frac{1}{\sqrt{2}}\Bigl(2+\frac{R\partial_r\Delta Q(R)}{\Delta Q(R)}\Bigr)
\int_{-\infty}^{0}
\mathcal{G}_1^{(\mathrm{s})}(x)\,dx
+
3\Bigl(1+\frac{R\partial_r\Delta Q(R)}{\Delta Q(R)}\Bigr)
\int_{-\infty}^{0}
\mathcal{G}_2^{(\mathrm{s})}(x)
\,dx
\\
&\quad
+2\Bigl(
2+\frac{R\partial_r\Delta Q(R)}{\Delta Q(R)}
\Bigr)
\int_{-\infty}^{0}
x\Bigl(\log\mathcal{H}_1^{(\mathrm{s})}(x)-\sum_{\ell=6m+1}^{7m}s_{\ell}\Bigr)
\,dx
\\
&\quad
+\Bigl(\frac{1}{2}R\,\mathsf{k}'(R)+\frac{1}{2}\Bigr)
\log\Bigl(1+\sum_{\ell=6m+1}^{7m}
\frac{\omega_{\ell}}{2}\erfc(-t_{\ell})\Bigr),
\end{split}
\\
\begin{split}
\mathcal{D}_{6,(\mathrm{dc})}^{(\mathrm{h,in})}&:=    
\int_{\tau_{\rho_1}}^{1}
\Bigl(
\mathcal{H}_1^{(\mathrm{h,in})}(x)
+\frac{2\rho_1^2\Delta Q(\rho_1)}{x-\tau_{\rho_1}}\frac{\mathsf{T}_{1}^{(2m+1,3m)}(\tau_{\rho_1})}{1+\mathsf{T}_{0}^{(2m+1,3m)}(\tau_{\rho_1})}\Bigr)\,dx
\\
&\quad 
-
\frac{2\rho_1^2\Delta Q(\rho_1)\mathsf{T}_{1}^{(2m+1,3m)}(\tau_{\rho_1})}{1+\mathsf{T}_{0}^{(2m+1,3m)}(\tau_{\rho_1})}\log(1-\tau_{\rho_1})
\\
&\quad
-
\bigl(2\rho_1^2\Delta Q(\rho_1)-\tau_{\rho_1}\bigr)
\int_{\tau_{\rho_1}}^{1}
\frac{\mathsf{T}_{2}^{(2m+1,3m)}(x)}{1+\mathsf{T}_{0}^{(2m+1,3m)}(x)}\,dx,
\end{split}
\\
\begin{split}
\mathcal{D}_{6,(\mathrm{an})}^{(\mathrm{h,in})}
&:=
\int_{\tau_{\rho_1}}^{\sigma_{\star}}
\bigl(
\mathcal{H}_1^{(\mathrm{h,in})}(x)
+
\frac{2\rho_1^2\Delta Q(\rho_1)}{x-\tau_{\rho_1}}
\widetilde{\mathcal{H}}_{1}^{(\mathrm{h,in})}(\tau_{\rho_1})
\bigr)\,dx
\\
&\quad
-\bigl(2\rho_1^2\Delta Q(\rho_1)-\tau_{\rho_1}\bigr)
\int_{\tau_{\rho_1}}^{\sigma_{\star}}
\widetilde{\mathcal{H}}_{2}^{(\mathrm{h,in})}(x)\,dx
-
2\rho_1^2\Delta Q(\rho_1)
\widetilde{\mathcal{H}}_{1}^{(\mathrm{h,in})}(\tau_{\rho_1})
\log\bigl(\sigma_{\star}-\tau_{\rho_1}\bigr), 
\end{split}
\\
\begin{split}
\mathcal{D}_{6,(\mathrm{an})}^{(\mathrm{h,out})}
&:=
\int_{\sigma_{\star}}^{\tau_{\rho_2}}
\bigl(
\mathcal{H}_1^{(\mathrm{h,out})}(x)
-
\frac{2\rho_2^2\Delta Q(\rho_2)}{\tau_{\rho_2}-x}
\widetilde{\mathcal{H}}_{1}^{(\mathrm{h,out})}(\tau_{\rho_2})
\bigr)\,dx
\\
&
+
2\rho_2^2\Delta Q(\rho_2)
\widetilde{\mathcal{H}}_{1}^{(\mathrm{h,out})}(\tau_{\rho_2})
\log\bigl(\tau_{\rho_2}-\sigma_{\star}\bigr)
+\bigl( 
2\rho_2^2\Delta Q(\rho_2)-\tau_{\rho_2}
\bigr)\int_{\sigma_{\star}}^{\tau_{\rho_2}}
\widetilde{\mathcal{H}}_{2}^{(\mathrm{h,out})}(x)\,dx, 
\end{split}
\\
\begin{split}
\mathcal{D}_{6,(\mathrm{cd})}^{(\mathrm{h,out})}
&:=
\int_{0}^{\tau_{\rho_2}}
\Bigl(\mathcal{H}_1^{(\mathrm{h,out})}(x)-\frac{2\rho_2^2\Delta Q(\rho_2)}{\tau_{\rho_2}-x}\widetilde{\mathcal{H}}_1^{(\mathrm{h,out})}(\tau_{\rho_2})\Bigr)\,dx
\\
&
+2\rho_2^2\Delta Q(\rho_2)\widetilde{\mathcal{H}}_1^{(\mathrm{h,out})}(\tau_{\rho_2})\log(\tau_{\rho_2})+\bigl( 2\rho_2^2\Delta Q(\rho_2)-\tau_{\rho_2}\bigr)\int_{0}^{\tau_{\rho_2}}\widetilde{\mathcal{H}}_2^{(\mathrm{h,out})}(x)\,dx, 
\end{split}
\\
\begin{split}
\mathcal{D}_6^{(\mathrm{se,in})}
&:=
\Bigl(\frac{\rho_1\partial_r\Delta Q(\rho_1)}{\Delta Q(\rho_1)}+2\Bigr)
\int_{-\infty}^{+\infty}
2x\Bigl(\log \mathcal{H}_1^{(\mathrm{se,in})}(x)-\mathbf{1}_{(-\infty,0)}(x)\sum_{j=m+1}^{2m}s_j\Bigr)
\,dx
\\
&
+\Bigl(\frac{\rho_1\partial_r\Delta Q(\rho_1)}{\Delta Q(\rho_1)}+2\Bigr)
\int_{-\infty}^{+\infty}
\mathcal{H}_2^{(\mathrm{se,in})}(x)
\,dx
\\
&
+
3\Bigl(1+\frac{\rho_1\partial_r\Delta Q(\rho_1)}{\Delta Q(\rho_1)}\Bigr)
\int_{-\infty}^{+\infty}
\widetilde{\mathcal{H}}_2^{(\mathrm{se,in})}(x)\,dx
\\
&
+\frac{\rho_1^2\Delta Q(\rho_1)\mathsf{T}_{1}^{(2m+1,3m)}(\tau_{\rho_1})\log (2\rho_1^2\Delta Q(\rho_1))}{1+\mathsf{T}_{0}^{(2m+1,3m)}(\tau_{\rho_1})+\widehat{\mathsf{T}}_{0}^{(3m+1,7m)}(\tau_{\rho_2})}
\\
&
-4\rho_1^2\Delta Q(\rho_1)
\sum_{\ell=2m+1}^{3m}
\frac{\omega_{\ell}t_{\ell}}{\Omega_{2m+1}^{(7m)}}
\int_{-\infty}^{+\infty}
\bigg[
\frac{\frac{e^{-x^2}}{\sqrt{\pi}\erfc(x)}}{
\mathcal{H}_1^{(\mathrm{se,in})}(x)
}
-
\Bigl( 
x+\frac{x}{2(1+x^2)}
\Bigr)
\mathbf{1}_{[0,+\infty)}(x)
\bigg]
\,dx,
\end{split}
\\
\begin{split}
\mathcal{D}_6^{(\mathrm{se,out})}
&:=
\Bigl(2+\frac{\rho_2\partial_r\Delta Q(\rho_2)}{\Delta Q(\rho_2)}\Bigr)
\int_{-\infty}^{+\infty}
\mathcal{H}_{2}^{(\mathrm{se,out})}(x)
\,dx
\\
&
+3\Bigl(1+\frac{\rho_2\partial_r\Delta Q(\rho_2)}{\Delta Q(\rho_2)}\Bigr)
\int_{-\infty}^{+\infty}
\widetilde{\mathcal{H}}_{2}^{(\mathrm{se,out})}
(x)
\,dx
\\
&
+\Bigl(2+\frac{\rho_2\partial_r\Delta Q(\rho_2)}{\Delta Q(\rho_2)}\Bigr)
\int_{-\infty}^{+\infty}
2x\Bigl( \log \mathcal{H}_{1}^{(\mathrm{se,out})}
(x)
+\mathbf{1}_{[0,+\infty)}(x)\sum_{\ell=4m+1}^{5m}s_{\ell}\Bigr)
\,dx
\\
&
+
\frac{4\rho_2^2\Delta Q(\rho_2)\widehat{\mathsf{T}}_1^{(3m+1,4m)}(\tau_{\rho_2})}{1+\widehat{\mathsf{T}}_0^{(4m+1,7m)}(\tau_{\rho_2})}
\\
&\times
\int_{-\infty}^{+\infty}
\Bigl( 
\frac{\frac{e^{-x^2}}{2\sqrt{\pi}(1-\frac{1}{2}\erfc(x))}}{\mathcal{H}_{1}^{(\mathrm{se,out})}
(x)}
+
\Bigl(x+\frac{x}{2(x^2+1)}\Bigr)\mathbf{1}_{(-\infty,0]}(x)
\Bigr)\,dx
\\
&
-\frac{\rho_2^2\Delta Q(\rho_2)\widehat{\mathsf{T}}_1^{(3m+1,4m)}(\tau_{\rho_2})\log (2\rho_2^2\Delta Q(\rho_2))}{1+\widehat{\mathsf{T}}_0^{(4m+1,7m)}(\tau_{\rho_2})}.
\end{split}
\end{align}

\subsubsection{Jacobi-theta corrections}
The annular geometry has an additional bounded correction.  Set
\begin{equation}
\label{def of mathsf Q}
\mathsf{Q}
:=
\frac{1+\mathsf{T}_{0}^{(2m+1,3m)}(\sigma_{\star})
+\widehat{\mathsf{T}}_0^{(3m+1,7m)}(\tau_{\rho_2})}{1-\widehat{\mathsf{T}}_0^{(3m+1,4m)}(\sigma_{\star})+\widehat{\mathsf{T}}_0^{(3m+1,7m)}(\tau_{\rho_2})},
\end{equation}
and define
\begin{align}
    \begin{split}
    \label{def of calFNsharp}
\mathcal{F}_{n,\#}
&:=
-\frac{\log \mathsf{Q}}{2}\Bigl(\frac{s(\lambda(\rho_1)-\lambda(\rho_2))}{\log \frac{\rho_1}{\rho_2}}+\frac{2\log (\tfrac{\sigma_2}{\sigma_1})+\log \mathsf{Q}}{2\log \frac{\rho_1}{\rho_2}}\Bigr)
    \\
    &\quad 
    +\log \frac{\theta( \tfrac{\log (\tfrac{\sigma_2}{\sigma_1})+\log \mathsf{Q}}{2\log \tfrac{\rho_2}{\rho_1}}+\frac{s(\lambda(\rho_1)-\lambda(\rho_2))}{2\log\frac{\rho_2}{\rho_1}}+n\sigma_{\star}-\alpha +\frac{1}{2}; \frac{\pi i }{\log \frac{\rho_2}{\rho_1}})}{\theta( \tfrac{\log (\tfrac{\sigma_2}{\sigma_1})}{2\log \tfrac{\rho_2}{\rho_1}}+\frac{s(\lambda(\rho_1)-\lambda(\rho_2))}{2\log\frac{\rho_2}{\rho_1}}+n\sigma_{\star}-\alpha+\frac{1}{2}; \frac{\pi i }{\log \frac{\rho_2}{\rho_1}})}.
\end{split}   
\end{align}
This term records the change in the discrete filling-fraction modulation caused
by the counting insertions.
\subsection{The counting statistics coefficients}
The geometry-dependent coefficients are obtained by summing the admissible
local regimes.
\subsubsection{Annular hole}
\begin{align}
\label{def of C3sharp an}
C_{3,\#}^{(\mathrm{an})}&:=
C_3^{\#(\mathrm{b,in})}+C_3^{\#(\mathrm{se,in})}+C_{3,(\mathrm{an})}^{\#(\mathrm{h,in})}
+C_{3,(\mathrm{an})}^{\#(\mathrm{h,out})}+C_{3}^{\#(\mathrm{se,out})}+C_{3}^{\#(\mathrm{b,out})}+C_{3}^{\#(\mathrm{s})}, 
\\
\label{def of C4sharp an}
C_{4,\#}^{(\mathrm{an})}&:=
C_4^{\#(\mathrm{b,in})}+C_4^{\#(\mathrm{se,in})}+C_{4}^{\#(\mathrm{se,out})}+C_{4}^{\#(\mathrm{b,out})}+C_{4}^{\#(\mathrm{s})},
\\
\label{def of C5sharp an}
C_{5,\#}^{(\mathrm{an})}&:=C_{5,(\mathrm{an})}^{\#(\mathrm{h,in/out})}, 
\\
\label{def of C6sharp an}
C_{6,\#}^{(\mathrm{an})}&:=C_6^{\#(\mathrm{b,in})}+C_6^{\#(\mathrm{se,in})}+C_{6,(\mathrm{an})}^{\#(\mathrm{h,in})}+C_{6,(\mathrm{an})}^{\#(\mathrm{h,out})}+C_{6}^{\#(\mathrm{se,out})}+C_6^{\#(\mathrm{b,out})}+C_{6}^{\#(\mathrm{s})}.
\end{align}
\subsubsection{Disk-complement hole}
\begin{align}
\label{def of C3sharp dc}
C_{3,\#}^{(\mathrm{dc})}&:=
C_3^{\#(\mathrm{b,in})}+C_3^{\#(\mathrm{se,in})}+C_{3,(\mathrm{dc})}^{\#(\mathrm{h,in})}, 
\\
\label{def of C4sharp dc}
C_{4,\#}^{(\mathrm{dc})}
&:=
C_4^{\#(\mathrm{b,in})}
+
C_4^{\#(\mathrm{se,in})},
\\
\label{def of C5sharp dc}
C_{5,\#}^{(\mathrm{dc})}
&:=
C_{5,(\mathrm{dc})}^{\#(\mathrm{h,in})},
\\
\label{def of C6sharp dc}
C_{6,\#}^{(\mathrm{dc})}&:=
C_6^{\#(\mathrm{b,in})}+C_6^{\#(\mathrm{se,in})}+C_{6,(\mathrm{dc})}^{\#(\mathrm{h,in})}. 
\end{align}
\subsubsection{Centered-disk hole}
\begin{align}
\label{def of Csharp3 cd}
C_{3,\#}^{(\mathrm{cd})}&:=
C_{3,(\mathrm{cd})}^{\#(\mathrm{h,out})}+C_{3}^{\#(\mathrm{se,out})}+C_{3}^{\#(\mathrm{b,out})}+C_{3}^{\#(\mathrm{s})}, 
\\
\label{def of Csharp4 cd}
C_{4,\#}^{(\mathrm{cd})}&:=C_{4}^{\#(\mathrm{se,out})}+C_{4}^{\#(\mathrm{b,out})}+C_{4}^{\#(\mathrm{s})},
\\
\label{def of Csharp5 cd}
C_{5,\#}^{(\mathrm{cd})}&:=C_{5,(\mathrm{cd})}^{\#(\mathrm{h,out})}, 
\\
\label{def of Csharp6 cd}
C_{6,\#}^{(\mathrm{cd})}&:=C_{6,(\mathrm{cd})}^{\#(\mathrm{h,out})}+C_{6}^{\#(\mathrm{se,out})}+C_6^{\#(\mathrm{b,out})}+C_{6}^{\#(\mathrm{s})}. 
\end{align}

\subsection{Unconstrained partition function}
For completeness, we recall the special case of \cite{ACC2023c} used to
assemble the constrained partition functions.
\begin{theorem}[Unconstrained free-energy expansion; \cite{ACC2023c}]
\label{theorem:unconstrained-free-energy}
Let $\alpha>-1$ be fixed and let $\lambda$ satisfy the standing smoothness
assumptions.  Uniformly for real $s$ with $|s|\leq\log n$, as
$n\to\infty$,
\begin{equation}
\log Z_{n,s\lambda}^{(\alpha)}[Q]
=
C_1^{(\mathrm{n})}n^2
+C_2^{(\mathrm{n})}n\log n
+C_3^{(\mathrm{n})}n
+C_4^{(\mathrm{n})}\sqrt{n}
+C_5^{(\mathrm{n})}\log n
+C_6^{(\mathrm{n})}+\mathcal{O}\Bigl(\frac{(\log n)^3}{n^{\frac{1}{12}}}\Bigr),
\end{equation}
where 
\begin{align}
    C_1^{(\mathrm{n})}&:=-I_Q[\mu_Q]=-\Bigl(q(R)-\log R-\frac{1}{4}\int_{0}^{R}rq'(r)^2\,dr\Bigr),
    \\
    C_2^{(\mathrm{n})}&:=\frac{1}{2},
    \\
    C_3^{(\mathrm{n})}&:=\frac{\log 2\pi}{2}-1-\frac{E_Q[\mu_Q]}{2}+\int_{\C}\mathsf{k}(z)\,d\mu_Q(z),
    \\
    C_4^{(\mathrm{n})}&:=0,
    \\
    C_5^{(\mathrm{n})}&:=\frac{5}{12}+\frac{\alpha^2}{2},
    \\
    \begin{split}
    C_6^{(\mathrm{n})}&:=\zeta'(-1)-\log G(1+\alpha)+F_Q[\mu_Q]+\frac{1+\alpha}{2}\log(2\pi)+\mathsf{e}_{\mathsf{k}}+\frac{s^2}{2}\mathsf{v}_{0,\lambda}
    \\
    &\quad
    +\frac{\alpha^2}{2}\log\bigl(R^2\Delta Q(0)\bigr)+\alpha s\bigl(\lambda(R)-\lambda(0)\bigr).
\end{split}
\end{align}
The superscript $\mathrm n$ refers to the unconstrained random normal matrix
model.  Here $G$ is the Barnes $G$-function, $\zeta$ is the Riemann zeta
function, and
\begin{align*}
E_{Q}[\mu_{Q}]&:=\int_{\C}\log \Delta Q\,d\mu_{Q}, 
\\
F_{Q}[\mu_{Q}]&:=\frac{1}{12}\log\frac{1}{R^2\Delta Q(R)}-\frac{1}{16}\frac{R\partial_r\Delta Q(R)}{\Delta Q(R)}+\frac{1}{24}\int_0^{R}\Bigl(\frac{\partial_r\Delta Q(r)}{\Delta Q(r)}\Bigr)^2r\,dr, 
\\
\mathsf{e}_{\mathsf{k}}
&:=
\frac{1}{2}\int_{S}\mathsf{k}(z)\Delta \log \Delta Q(z)\,dA(z)+\frac{1}{8\pi}\int_{\partial S}\partial_{\mathrm{n}}\mathsf{k}(z)|dz|-\frac{1}{8\pi}\int_{\partial S}\mathsf{k}(z)\frac{\partial_{\mathrm{n}}\Delta Q(z)}{\Delta Q(z)}\,|dz|, 
\\
\mathsf{v}_{0,\lambda}
&:=\frac{1}{4}\int_{S}|\nabla \lambda(z)|^2\,dA(z)
=\frac{1}{2}\int_{0}^{R}r\lambda'(r)^2\,dr. 
\end{align*}
The symbol $\partial_{\mathrm n}$ denotes the outward normal derivative on
$\partial S$.
\end{theorem}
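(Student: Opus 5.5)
The statement is the unconstrained free-energy expansion of \cite{ACC2023c}, specialised to a single disk droplet with $\mathsf k=s\lambda+\alpha\ell$. The proof will start from the exact radial factorization. Since $Q$, $\lambda$ and $|z|^{2\alpha}$ are all radial, the monomials $z^j$ are orthogonal for the weight $|z|^{2\alpha}e^{-nQ+s\lambda}$. Andréief's identity then gives
\[
\log Z_{n,s\lambda}^{(\alpha)}[Q]=\log n!+\sum_{j=1}^{n}\log h_j,
\qquad h_j=2\int_0^\infty r^{2j-1+2\alpha}e^{-nq(r)+s\lambda(r)}\,dr .
\]
The work is therefore to obtain uniform asymptotics of the norms $h_j$ and to sum them.

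\emph{Step 1: asymptotics of $h_j$.} Write $j=n\tau$ with $\tau\in(0,1]$ and $\tau$ bounded away from $0$. The exponent $n(2\tau\log r-q(r))$ has a unique nondegenerate maximum at $r_\tau$, defined by $\tau_{r_\tau}=\tau$, and it lies in the droplet. Laplace's method with $\Delta Q>0$, carried to relative order $1/n$, gives
\[
\log h_j=n\bigl(2\tau\log r_\tau-q(r_\tau)\bigr)+\tfrac12\log\frac{\pi}{n\,r_\tau^2\Delta Q(r_\tau)}+\mathsf k(r_\tau)+\frac{c(\tau)}{n}+\dots,
\]
with an explicit $c(\tau)$ built from $\partial_r\Delta Q$ and from $\mathsf k'$, $\mathsf k''$. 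The term $\tfrac12 s^2$-type contribution, which will produce $\mathsf v_{0,\lambda}$, enters through $c(\tau)$ via $(\mathsf k')^2$.

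For $j\le \epsilon n$, the saddle point approaches the origin. There I use $\Delta Q(0)>0$ and compare with the Ginibre-type integral $\int r^{2j+2\alpha-1}e^{-n\Delta Q(0)r^2}$, which is an explicit Gamma function. These small indices produce $\log G(1+\alpha)$, $\zeta'(-1)$, the $\tfrac{\alpha^2}{2}\log n$ term and $\tfrac{\alpha^2}{2}\log\Delta Q(0)$, since $\sum_{j}\log\Gamma(j+\alpha)$ is a Barnes $G$ ratio.

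\emph{Step 2: summation.} I split $\sum_j\log h_j$ at $j=n^{\delta}$ and at $j=\epsilon n$, and apply Euler--Maclaurin to the smooth bulk part in the variable $\tau$. Near $j=n$ the saddle point approaches $R$, but $h_j$ is still a full Gaussian integral because the exponent decays beyond $R$ by Assumption~\ref{Assumption_Q}. Hence no erfc corrections arise and $C_4^{(\mathrm n)}=0$; the endpoint $\tau=1$ enters only through the Euler--Maclaurin boundary terms. The individual coefficients come out as follows.

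\begin{itemize}
\item The integral $\int_0^1 n^2(\cdots)\,d\tau$ gives $-I_Q[\mu_Q]$.
\item The half-log terms together with Stirling's formula for $\log n!$ give $C_2^{(\mathrm n)}$ and $C_3^{(\mathrm n)}$.
\item Substituting $\tau\mapsto r$ turns $\int \mathsf k(r_\tau)\,d\tau$ into $\int\mathsf k\,d\mu_Q$.
\item The endpoint correction $\tfrac12\mathsf k(R)$ and the integral of $c(\tau)$ produce $\mathsf e_{\mathsf k}$ and $F_Q[\mu_Q]$ after integrating by parts in $r$.
\end{itemize}

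\emph{Main obstacle.} The hard part is uniformity. The Laplace expansions must hold uniformly for $|s|\le\log n$ and match across the regime $j\sim n^{\delta}$ near the origin. Powers of $\log n$ in the error come from $s$ entering through $\mathsf k'$. The exponent $1/12$ results from optimizing the splitting point between the small-$j$ Gamma comparison and the bulk Laplace expansion. I would first balance these two error terms explicitly and then choose $\delta$ accordingly. Since the result is quoted from \cite{ACC2023c}, it would in the end suffice to check that their hypotheses hold here: no gap, no outposts, and $\Delta Q(0)>0$.
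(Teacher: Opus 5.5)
Your proposal is correct and follows the same route as the paper. The paper simply invokes \cite{ACC2023c}, and the argument there is exactly your scheme: Andr\'eief factorization, a Gamma-function comparison for $j<D_n=\lceil n^{1/6}\rceil$, the uniform Laplace expansion with correction $\mathcal A(r_\tau)$ and relative error $\mathcal O((\log n)^{\nu}j^{-3/2})$ for $D_n\le j\le n-1$, and Euler--Maclaurin with the moving endpoint $D_n/n$ so that all $D_n$-dependent terms cancel (cf.\ Lemma~\ref{lemma: 0 j j1-} and the proof of Theorem~\ref{theorem:hole probability of centered disk case}), the rate $n^{-1/12}$ coming from balancing $D_n^{5/2}n^{-1/2}$ against $D_n^{-1/2}$.

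Two minor slips: the Gaussian prefactor should be $\tfrac12\log\frac{2\pi}{n r_\tau^2\Delta Q(r_\tau)}$ (with $\pi$ you would get $\tfrac12\log\pi$ in $C_3^{(\mathrm n)}$), and the Laplace expansion must be used uniformly down to $j\sim n^{1/6}$, not only for $\tau$ bounded away from $0$, because the Gamma comparison can only be summed over $j$ up to a small power of $n$.
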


\section{Proof strategy and preliminary asymptotic expansions}
\label{section:preliminary}
\subsection{Exact factorization}
Radial symmetry makes the monomials orthogonal.  Andr\'{e}ief's identity
(see, for example, \cite{C2021}) therefore factorizes each deformed
partition function into one-dimensional radial norms.  Separating the
unconstrained normalization, the hard-wall ratio, and the jump deformation
gives the exact identities
\begin{align}
\label{def of log Dnan Dndc Dncd}
\log D_{n}^{(\mathrm{an})}
&=\log\mathcal{E}_{n,s\lambda,\alpha}^{(\mathrm{an})}
+\log\mathcal{P}_{n,s\lambda,\alpha}^{(\mathrm{an})}
+\log Z_{n,s\lambda}^{(\alpha)}[Q],
\\
\log D_{n}^{(\mathrm{dc})}
&=\log\mathcal{E}_{n,s\lambda,\alpha}^{(\mathrm{dc})}
+\log\mathcal{P}_{n,s\lambda,\alpha}^{(\mathrm{dc})}
+\log Z_{n,s\lambda}^{(\alpha)}[Q],
\\
\log D_{n}^{(\mathrm{cd})}
&=\log\mathcal{E}_{n,s\lambda,\alpha}^{(\mathrm{cd})}
+\log\mathcal{P}_{n,s\lambda,\alpha}^{(\mathrm{cd})}
+\log Z_{n,s\lambda}^{(\alpha)}[Q].
\end{align}
The hard-wall factors are
\begin{equation}
\label{def of mathcalPn an dc cd}
\mathcal{P}_{n,s\lambda,\alpha}^{(\mathrm{an})}
:=\prod_{j=0}^{n-1}\frac{h_{n,j}^{(\mathrm{an})}}{h_{n,j}}, 
\qquad
\mathcal{P}_{n,s\lambda,\alpha}^{(\mathrm{dc})}
:=\prod_{j=0}^{n-1}\frac{h_{n,j}^{(\mathrm{dc})}}{h_{n,j}}, 
\qquad
\mathcal{P}_{n,s\lambda,\alpha}^{(\mathrm{cd})}
:=\prod_{j=0}^{n-1}\frac{h_{n,j}^{(\mathrm{cd})}}{h_{n,j}}, 
\end{equation}
where 
\begin{align}
\begin{split}
\label{def of hjnU}
h_{n,j}^{(\mathrm{an})}&:=
\int_{0}^{\rho_1}
2re^{\mathsf{k}(r)} e^{-nV_{\tau}(r)}\,dr
+
\int_{\rho_2}^{+\infty}
2re^{\mathsf{k}(r)} e^{-nV_{\tau}(r)}\,dr,
\\
h_{n,j}^{(\mathrm{dc})}&:=
\int_{0}^{\rho_1}2re^{\mathsf{k}(r)} e^{-nV_{\tau}(r)}\,dr,\quad
h_{n,j}^{(\mathrm{cd})}:=
\int_{\rho_2}^{+\infty}2re^{\mathsf{k}(r)} e^{-nV_{\tau}(r)}\,dr,
\end{split}
\end{align}
where, with $\tau_j:=j/n$,
\begin{equation}
\label{def of V tau}
V_{\tau}(r):=q(r)-2\tau\log r.
\end{equation}
Thus $V_\tau=V_{\tau_j}$ whenever it occurs in the $j$th norm, and
$\mathsf{k}$ is defined in \eqref{def of mathsfkr}.  We also set
\begin{equation}
 h_{n,j}:=
\int_{0}^{+\infty}2r e^{\mathsf{k}(r)}e^{-nV_{\tau_j}(r)}\,dr.   
\end{equation}
We next record the exact jump factors.  Recall the telescoping weights
$\omega_\ell$ from \eqref{def of omega ell}.  In the annular case,
with $r_{7m+1}=+\infty$,
\begin{equation}
\label{def of calEnan}
\log\mathcal{E}_{n,s\lambda,\alpha}^{(\mathrm{an})}=
\sum_{j=0}^{n-1}\log\Big( 
1+\sum_{\ell=1}^{7m}\omega_{\ell}F_{n,j,\ell}^{(\mathrm{an})}
\Bigr), 
\end{equation}
where 
\begin{equation}
\label{def of norming constant + counting on annulus}
   F_{n,j,\ell}^{(\mathrm{an})}
   :=
   \begin{cases}
       \frac{2\int_0^{r_{\ell}}ue^{-nV_{\tau}(u)}e^{\mathsf{k}(u)}\,du}{2(\int_{0}^{\rho_1}+\int_{\rho_2}^{+\infty})ue^{\mathsf{k}(u)}e^{-nV_{\tau}(u)}\,du}, & \ell=1,\dots, 3m, \\
       \frac{2\int_0^{\rho_1}ue^{-nV_{\tau}(u)}e^{\mathsf{k}(u)}\,du+2\int_{\rho_2}^{r_{\ell}}ue^{-nV_{\tau}(u)}e^{\mathsf{k}(u)}\,du}{2(\int_{0}^{\rho_1}+\int_{\rho_2}^{+\infty})ue^{\mathsf{k}(u)}e^{-nV_{\tau}(u)}\,du}, & \ell=3m+1,\dots,7m. 
    \end{cases}
\end{equation}
We use the principal branch of the logarithm.  Equivalently, set
$h_{n,j,0}^{(\mathrm{an})}:=h_{n,j}^{(\mathrm{an})}$ and
\begin{equation}
h_{n,j,\ell}^{(\mathrm{an})}
:=
\begin{cases}
2\int_0^{r_{\ell}}u e^{-nV_{\tau}(u)}e^{\mathsf{k}(u)}\,du,     & \ell=1,2,\dots,3m, \\
2\int_0^{\rho_1}u e^{-nV_{\tau}(u)}e^{\mathsf{k}(u)}\,du
+2\int_{\rho_2}^{r_{\ell}}u e^{-nV_{\tau}(u)}e^{\mathsf{k}(u)}\,du,       & \ell=3m+1,\dots, 7m.
\end{cases}
\end{equation}
Then
\[
    F_{n,j,\ell}^{(\mathrm{an})}= \frac{h_{n,j,\ell}^{(\mathrm{an})}}{h_{n,j}^{(\mathrm{an})}}. 
\]
For the disk-complement geometry, put $r_{3m+1}=+\infty$ and write
\begin{equation}
\label{def of calEndc}
\log \mathcal{E}_{n,s\lambda,\alpha}^{(\mathrm{dc})}=
\sum_{j=0}^{n-1}\log\Big( 
1+\sum_{\ell=1}^{3m}\omega_{\ell}F_{n,j,\ell}^{(\mathrm{dc})}
\Bigr), 
\end{equation}
where we set $h_{n,j,0}^{(\mathrm{dc})}=h_{n,j}^{(\mathrm{dc})}$ and define
\begin{equation}
   F_{n,j,\ell}^{(\mathrm{dc})}
   :=
       \frac{h_{n,j,\ell}^{(\mathrm{dc})}}{h_{n,j}^{(\mathrm{dc})}}, \qquad
h_{n,j,\ell}^{(\mathrm{dc})}
:=
2\int_0^{r_{\ell}}u e^{-nV_{\tau}(u)}e^{\mathsf{k}(u)}\,du,    \qquad \ell=1,2,\dots,3m. 
\end{equation}
For the centered-disk geometry, put $r_{7m+1}=+\infty$.  Then
\begin{equation}
\label{def of calEncd}
\log \mathcal{E}_{n,s\lambda,\alpha}^{(\mathrm{cd})}=
\sum_{j=0}^{n-1}\log\Big( 
1+\sum_{\ell=3m+1}^{7m}\omega_{\ell}F_{n,j,\ell}^{(\mathrm{cd})}
\Bigr), 
\end{equation}
where 
\begin{equation}
   F_{n,j,\ell}^{(\mathrm{cd})}
   :=
       \frac{2\int_{\rho_2}^{r_{\ell}}ue^{-nV_{\tau}(u)}e^{\mathsf{k}(u)}\,du}{2\int_{\rho_2}^{+\infty}ue^{\mathsf{k}(u)}e^{-nV_{\tau}(u)}\,du}, \qquad
       h_{n,j,\ell}^{(\mathrm{cd})}
:=
2\int_{\rho_2}^{r_{\ell}}u e^{-nV_{\tau}(u)}e^{\mathsf{k}(u)}\,du,   
       \qquad
    \ell=3m+1,\dots,7m. 
\end{equation}

The proof consists of three parts: asymptotics of the hard-wall
products \eqref{def of mathcalPn an dc cd}, asymptotics of the jump sums
\eqref{def of calEnan}--\eqref{def of calEncd}, and the unconstrained
expansion recalled above.  The next six theorems state the first two inputs.

\begin{theorem}[Annular hard-wall ratio]
\label{theorem:hole probability of annulus case}
Under the standing assumptions, as $n\to\infty$,
\[
\log \mathcal{P}_{n,s\lambda,\alpha}^{(\mathrm{an})}
=
C_{1,\mathrm{hole}}^{(\mathrm{an})}n^2
+C_{2,\mathrm{hole}}^{(\mathrm{an})}n\log n
+C_{3,\mathrm{hole}}^{(\mathrm{an})}n
+C_{4,\mathrm{hole}}^{(\mathrm{an})}\sqrt{n}
+C_{5,\mathrm{hole}}^{(\mathrm{an})}\log n
+C_{6,\mathrm{hole}}^{(\mathrm{an})}
+\mathcal{F}_{n}[\lambda]
+\mathcal{O}(n^{-\frac{1}{12}}),
\] 
where $C_{k,\mathrm{hole}}^{(\mathrm{an})}$, $1\leq k\leq6$, and
$\mathcal F_n[\lambda]$ are given in
\eqref{def of C1 hole an}--\eqref{def of calF hole an}.
In particular, setting $\alpha=0$ and $\lambda\equiv0$ gives the precise
asymptotic expansion of the annular hole probability
\eqref{def of hole probability U}.
\end{theorem}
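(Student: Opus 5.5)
We work directly with the exact product $\log\mathcal P^{(\mathrm{an})}_{n,s\lambda,\alpha}=\sum_{j=0}^{n-1}\log\bigl(h^{(\mathrm{an})}_{n,j}/h_{n,j}\bigr)$ from \eqref{def of mathcalPn an dc cd}. For each $j$, the integrand $e^{-nV_{\tau_j}(r)}$ has its unique maximum at $r_{\tau_j}$, defined by $\tau_{r_{\tau_j}}=\tau_j$. Laplace's method therefore gives $h_{n,j}$ as a Gaussian integral with explicit $1/n$ corrections whenever $\tau_j$ lies in the bulk range. We split the index set into three regions according to where $r_{\tau_j}$ falls:
\begin{enumerate}[label=(\roman*)]
\item $\tau_j<\tau_{\rho_1}-\delta$ or $\tau_j>\tau_{\rho_2}+\delta$. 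Here the saddle lies in the allowed set, so $h^{(\mathrm{an})}_{n,j}/h_{n,j}=1+\mathcal O(e^{-cn})$.
\item $|\tau_j-\tau_{\rho_k}|\le\delta$. This is a transition window near a wall.
\item $\tau_{\rho_1}+\delta<\tau_j<\tau_{\rho_2}-\delta$. Here the saddle is inside the hole.
\end{enumerate}

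In region (iii), $h^{(\mathrm{an})}_{n,j}$ is dominated by the endpoint contributions at $\rho_1$ and $\rho_2$. Since $V_\tau$ is monotone on each piece, integration by parts gives
\[
h^{(\mathrm{an})}_{n,j}\approx \frac{2\rho_1^2e^{\mathsf k(\rho_1)}e^{-nV_\tau(\rho_1)}}{n(\tau-\tau_{\rho_1})}\bigl(1+\mathcal O(n^{-1})\bigr)+\frac{2\rho_2^2e^{\mathsf k(\rho_2)}e^{-nV_\tau(\rho_2)}}{n(\tau_{\rho_2}-\tau)}\bigl(1+\mathcal O(n^{-1})\bigr).
\]
The two exponentials balance exactly at $\tau=\sigma_\star$, which is where $V_\tau(\rho_1)=V_\tau(\rho_2)$. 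We therefore split region (iii) at $j\approx n\sigma_\star$. On each side we keep the dominant endpoint term and write the other as a factor $1+e^{-2n|\tau-\sigma_\star|\log(\rho_2/\rho_1)}\times(\text{ratio})$.

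The sums of the dominant terms are handled by Euler--Maclaurin over $j$. They produce the $n^2$ term $C_{1,\mathrm{hole}}^{(\mathrm{an})}$ (integrals of $V_\tau(\rho_k)-\min V_\tau$), the $n\log n$ and $n$ terms, and the $\log(\sigma_\star-\tau_{\rho_1})$-type contributions. The $1/(x-\tau_{\rho_k})$ singularities at the ends of this range are matched later with the transition windows.

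The subleading factors near $j\approx n\sigma_\star$ are resummed exactly. Writing $j=\lfloor n\sigma_\star\rfloor+k$, they give $\sum_{k\ge0}\log(1+q^{2k+2x})+\sum_{k\ge1}\log(1+q^{2k-2x})$ with $q=\rho_1/\rho_2$ and a shift $x$ depending on the fractional part of $n\sigma_\star$, on $\alpha$, on $\log(\sigma_1/\sigma_2)$, and on $s(\lambda(\rho_2)-\lambda(\rho_1))$. The Jacobi triple product then converts this into $\log\theta(\cdot\mid \pi i/\log(\rho_2/\rho_1))$, minus the term $\sum_j\log(1-q^{2j})$, plus the Gaussian quadratic term $\tfrac{1}{4\log(\rho_2/\rho_1)}(\log\tfrac{\sigma_1}{\sigma_2}e^{s\delta_\lambda})^2$. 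The modular transformation produces the prefactor $\tfrac12\log(\pi/\log\tfrac{\rho_2}{\rho_1})$, which together give $\mathcal F_n[\lambda]$ in \eqref{def of calF hole an}.

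In the transition windows (ii) we use the uniform scaling $\tau_j=\tau_{\rho_k}+y/\sqrt{n}$ with $y$ rescaled by $\rho_k\sqrt{2\Delta Q(\rho_k)}$. In this scaling $h^{(\mathrm{an})}_{n,j}/h_{n,j}$ is expressed through $\tfrac12\erfc(\pm y)$ plus an explicit $n^{-1/2}$ correction involving the polynomials $(x^2-2)/3$ and $(x^2+1)/6$ times $\Phi'$. Summing over $j$ by Riemann sums:
\begin{enumerate}[label=(\alph*)]
\item the leading terms give $\sqrt n\,\mathcal I\sqrt2\rho_k\sqrt{\Delta Q(\rho_k)}$, that is, $C_{4,\mathrm{hole}}^{(\mathrm{an})}$;
\item the correction terms give the constants $\alpha_{\mathrm{in/out}}$ and $\beta_{\mathrm{in/out}}$, that is, $\mathcal J(\rho_k)$;
\item matching against the tails of the region (iii) asymptotics ($\log y$ growth) generates the $\tfrac{\rho_k^2\Delta Q(\rho_k)}{\sigma_k}$ and $\log(\rho_k\sqrt{\Delta Q}/\sqrt{2\pi}\sigma_k)$ terms, together with the $\log n$ coefficient $C_{5,\mathrm{hole}}^{(\mathrm{an})}$.
\end{enumerate}
The terms $-F_Q(\rho_1,\rho_2)-\Lambda_Q(\rho_1,\rho_2)$ come from subtracting $\sum\log h_{n,j}$ over $j$ in the hole range. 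For this we reuse the second-order Laplace expansion of $h_{n,j}$ underlying Theorem~\ref{theorem:unconstrained-free-energy}, restricted to $\tau\in(\tau_{\rho_1},\tau_{\rho_2})$.

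The main obstacle is the matching in the windows (ii). We must choose cutoffs $\delta_n\sim n^{-1/2+\epsilon}$, show that the erfc-uniform expansion and the endpoint expansion overlap with errors summing to $\mathcal O(n^{-1/12})$, and track Euler--Maclaurin boundary terms carefully so that all order-one constants, including those from the $1/(x-\tau_{\rho_k})$ singular integrals, come out exactly. The $n^{-1/12}$ rate comes from optimizing these cutoffs. I would first carry out this matching for one wall in the centered-disk setting, then transplant it to both walls; the theta resummation near $\sigma_\star$ is clean by comparison.
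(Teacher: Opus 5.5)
Your overall architecture is the paper's. The ingredients all match: negligible ranges where the saddle lies in the allowed set, $\erfc$ windows of width $M/\sqrt n$ at each wall, two-endpoint Laplace asymptotics in the hole split at $j_\star$ and resummed into $\theta$ by the triple product and modular transformation, subtraction of the bulk expansion of $\sum\log h_{n,j}$ (Lemma~\ref{lemma:general sum}), and cancellation of the cutoff-dependent terms. You correctly identify the wall matching as the crux. However, with the expansion orders you specify it cannot be closed by any choice of cutoff, so ``optimizing the cutoff'' does not by itself produce a rate.

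\textbf{Main gap: the strip between the windows and the macroscopic hole.} There the endpoint expansion is in powers of $(n(\tau-\tau_{\rho_k})^2)^{-1}$, not $n^{-1}$. For example, $\mathfrak a_1/n\approx\rho_1^2\Delta Q(\rho_1)/(n(\tau-\tau_{\rho_1})^2)$, so your factor $1+\mathcal O(n^{-1})$ is false near the walls.

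\begin{itemize}
\item If you truncate after $k$ corrections, the remainder summed over $M/\sqrt n\lesssim|\tau-\tau_{\rho_k}|\lesssim\epsilon$ is of order $\sqrt n\,M^{-(2k+1)}$.
\item The $\erfc$ window truncated after its $n^{-1/2}$ correction leaves a remainder of order $M^5n^{-1/2}$.
\item For $k=1$ these two cannot both tend to zero, and $k=2$ gives at best $\mathcal O(1)$.
\end{itemize}

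The paper therefore treats this strip as a separate range ($S_2^{(1,\mathrm{an})}$, $S_4^{(1,\mathrm{an})}$). It expands the endpoint integral through relative order $n^{-3}$, using $\mathfrak a_1,\mathfrak a_2,\mathfrak a_3$ (resp.\ $\widetilde{\mathfrak a}_1,\widetilde{\mathfrak a}_2,\widetilde{\mathfrak a}_3$) as in Lemma~\ref{lemma: g1+leq j leq j+1}. The resulting explicit $\sqrt n M^{-1}$, $\sqrt nM^{-3}$, $\sqrt nM^{-5}$ terms cancel against the large-$|x|$ expansion of $\log(\tfrac12\erfc)$ integrated over the window. The balance $\sqrt n M^{-7}=M^5n^{-1/2}$ is what fixes $M=n^{1/12}$ and the $\mathcal O(n^{-1/12})$ remainder.

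\textbf{Two smaller corrections.}
\begin{itemize}
\item In the central range, where $\tau-\tau_{\rho_k}$ is bounded below, you must still keep the $n^{-1}$ correction explicitly. An unspecified $\mathcal O(n^{-1})$ over $\asymp n$ indices is an unspecified $\mathcal O(1)$. Its Euler--Maclaurin sum up to $\sigma_\star$ is where the terms $\rho_k^2\Delta Q(\rho_k)/\sigma_k$ and $(\tfrac12\rho_k\mathsf k'(\rho_k)+1)\log\sigma_k$ come from, not from the window matching.
\item Since $|V_\tau'(\rho_k)|=2|\tau-\tau_{\rho_k}|/\rho_k$, the endpoint prefactor is $\rho_k^2e^{\mathsf k(\rho_k)}e^{-nV_\tau(\rho_k)}/(n|\tau-\tau_{\rho_k}|)$. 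Your extra factor $2$ would shift the order-$n$ coefficient by $(\tau_{\rho_2}-\tau_{\rho_1})\log 2$.
\end{itemize}
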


\begin{theorem}[Disk-complement hard-wall ratio]
\label{theorem:hole probability of disk complement case}
Under the standing assumptions, as $n\to\infty$,
\[
\log \mathcal{P}_{n,s\lambda,\alpha}^{(\mathrm{dc})}
=
C_{1,\mathrm{hole}}^{(\mathrm{dc})}n^2
+C_{2,\mathrm{hole}}^{(\mathrm{dc})}n\log n
+C_{3,\mathrm{hole}}^{(\mathrm{dc})}n
+C_{4,\mathrm{hole}}^{(\mathrm{dc})}\sqrt{n}
+C_{5,\mathrm{hole}}^{(\mathrm{dc})}\log n
+C_{6,\mathrm{hole}}^{(\mathrm{dc})}+\mathcal{O}(n^{-\frac{1}{12}}),
\]
where $C_{k,\mathrm{hole}}^{(\mathrm{dc})}$, $1\leq k\leq6$, are
given in \eqref{def of C1dc hole}--\eqref{def of C6dc hole}.
Setting $\alpha=0$ and $\lambda\equiv0$ gives the corresponding
disk-complement hole probability.
\end{theorem}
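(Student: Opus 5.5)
We analyse the product $\mathcal P^{(\mathrm{dc})}_{n,s\lambda,\alpha}=\prod_{j=0}^{n-1}h^{(\mathrm{dc})}_{n,j}/h_{n,j}$ term by term, splitting the index range according to where the critical point of $V_{\tau_j}$ sits relative to the wall $\rho_1$. The minimiser $r_{\tau}$ of $V_\tau(r)=q(r)-2\tau\log r$ satisfies $\tfrac12 r q'(r)=\tau$, i.e.\ $\tau_{r_\tau}=\tau$. For $\tau_j<\tau_{\rho_1}$ the minimiser lies inside $\mathbb D_{\rho_1}$ and the truncation at $\rho_1$ costs only an exponentially small (or, near the transition, erfc-type) relative error. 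For $\tau_j>\tau_{\rho_1}$ the truncated integral is dominated by its endpoint $r=\rho_1$, where $V_{\tau_j}$ is decreasing. We therefore use three regimes, with a cutoff $M=n^{\epsilon}$ to be optimised at the end:
\begin{itemize}
\item[(i)] $j\le n\tau_{\rho_1}-M\sqrt n$, where $\log(h^{(\mathrm{dc})}_{n,j}/h_{n,j})$ is exponentially small;
\item[(ii)] $|j-n\tau_{\rho_1}|\le M\sqrt n$, the semi-hard transition;
\item[(iii)] $j\ge n\tau_{\rho_1}+M\sqrt n$, the hard-edge regime.
\end{itemize}

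\textbf{Regime (ii).} Laplace's method at $r_{\tau_j}$ with the rescaling $j=n\tau_{\rho_1}+x\sqrt{2}\rho_1\sqrt{\Delta Q(\rho_1)}\sqrt n$ (so that $r_{\tau_j}-\rho_1$ is on the $n^{-1/2}$ scale) gives
\[
h^{(\mathrm{dc})}_{n,j}/h_{n,j}=\tfrac12\erfc(x)\,\bigl(1+n^{-1/2}p_1(x)+n^{-1}p_2(x)+\dots\bigr),
\]
where the polynomial corrections $p_k$ come from $\partial_r\Delta Q$, from $\mathsf k'$, and from the cubic and quartic Taylor terms of $V_\tau$. The sum over $j$ is converted to integrals by Euler--Maclaurin. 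The leading term yields $\sqrt n\,\rho_1\sqrt{2\Delta Q(\rho_1)}\int\Phi$, while the next-order terms produce the $\Phi'$-moments that make up $\alpha_{\mathrm{in}},\beta_{\mathrm{in}}$ in $\mathcal J(\rho_1)$, together with part of the $\mathsf k'(\rho_1)$ terms.

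\textbf{Regime (iii).} Integrate by parts from the endpoint. Since $-\partial_r V_{\tau}(\rho_1)=(2\tau-2\tau_{\rho_1})/\rho_1>0$, we obtain
\[
h^{(\mathrm{dc})}_{n,j}=\frac{\rho_1^2e^{\mathsf k(\rho_1)}e^{-nV_{\tau_j}(\rho_1)}}{n(\tau_j-\tau_{\rho_1})}\Bigl(1+\frac{c_1}{n(\tau_j-\tau_{\rho_1})^2}+\dots\Bigr),
\]
where $c_1$ involves $\rho_1^2\Delta Q(\rho_1)$ and $\mathsf k'(\rho_1)$. We divide by the standard Laplace expansion of $h_{n,j}$ (including its $1/n$ correction) and sum in $j$, again by Euler--Maclaurin, now with respect to the variable $x=\tau_j\in(\tau_{\rho_1},1)$. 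Writing $\log h_{n,j}$ through $V_{\tau}(r_\tau)$, the sums $\sum_j nV_{\tau_j}(\rho_1)-nV_{\tau_j}(r_{\tau_j})$ produce the $n^2$ and $n$ terms of $C^{(\mathrm{dc})}_{1,\mathrm{hole}}$ and $C^{(\mathrm{dc})}_{3,\mathrm{hole}}$, after using $\tau_r=\tfrac12 rq'(r)$ and integrating by parts in $u$. The $\log(n(\tau_j-\tau_{\rho_1}))$ factor, summed as a Stirling-type sum, gives the $-\tfrac12(1-\tau_{\rho_1})n\log n$ term and the $\log(1-\tau_{\rho_1})$ pieces. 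The Gaussian prefactor $\sqrt{2\pi/(n\,\cdot)}$ of $h_{n,j}$ gives the $\log\sqrt{2\pi}$ and $\log\Delta Q$ integrals. The boundary terms at $j=n-1$ give $F_Q(\rho_1,R)$, $\Lambda_Q(\rho_1,R)$, the $\log(R/\rho_1)$ and $\alpha$ coefficients, and $-\tfrac{2\alpha+1}4\log n$. These last contributions are obtained most cleanly by taking the difference with the already-known sum for $h_{n,j}$ recorded in Theorem~\ref{theorem:unconstrained-free-energy}, restricted to $\tau_j>\tau_{\rho_1}$.

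\textbf{Matching.} This is the main obstacle. The expansion from (iii) has terms behaving like $(\tau_j-\tau_{\rho_1})^{-k}$ that diverge at the lower cutoff; in particular its $1/n$ correction $\sim 1/(n(\tau_j-\tau_{\rho_1})^2)$ sums to a $\sqrt n/M$-type quantity. We will rewrite everything in the variable $x$ and subtract the large-$x$ asymptotics of $\log(\tfrac12\erfc x)$, namely $\log\bigl(e^{-x^2}/(2\sqrt\pi x)\bigr)+\dots$, which is precisely the $\log(\sqrt\pi xe^{x^2}\erfc x)$ integrand in $\mathcal I$ and the subtracted $x(2x^2-3)$ and $1/(1+x)$ terms in $\alpha_{\mathrm{out}},\beta_{\mathrm{out}}$. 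We then check that the $M$-dependent contributions from (ii) and (iii) cancel to order $M^{-1}$. The Euler--Maclaurin error in (ii) and the truncated Laplace remainders are of size $M^{a}n^{-1/2}$ and $M^{-b}$. Choosing $M=n^{1/12}$-type balancing (up to $\log n$ factors) gives the stated remainder $\mathcal O(n^{-1/12})$. Finally, the $\log n$ coefficient and the $\log(2\sqrt\pi)$ constant should be double-checked against the Ginibre case $q(r)=r^2$, $\alpha=s=0$, where \cite{C2021} provides explicit formulas.
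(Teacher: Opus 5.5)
Your plan is essentially the paper's proof. It uses the same split around $j\approx n\tau_{\rho_1}$ with an $M\sqrt n$ window: a negligible range, the erfc-type Laplace transition, and the endpoint-Laplace hard-edge range, which the paper merely cuts once more at $j_{1,+}$. It then applies Euler--Maclaurin on each piece, uses a partial-sum expansion of $\sum_j\log h_{n,j}$ for the denominator, and cancels all $M$- and rounding-dependent terms at $M=n^{1/12}$.

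Two bookkeeping corrections. First, the $s$- and $\alpha$-dependent part $-\tfrac{\rho_1}{4}\mathsf{k}'(\rho_1)\log n$ of the logarithmic term does not come from boundary terms at $j=n-1$. It arises at the lower end of the hard-edge range, from the piece $\bigl(\tfrac12\rho_1\mathsf{k}'(\rho_1)+1\bigr)/\bigl(n(\tau_j-\tau_{\rho_1})\bigr)$ of the first endpoint correction. The sum of this piece behaves like $\tfrac12\log n-\log M$, and it is matched by the $\log M$ coming from the $1/x$ tail of the first correction in the transition window. Second, the endpoint expansion must be carried through three correction terms ($\mathfrak{a}_1,\mathfrak{a}_2,\mathfrak{a}_3$), because the summed remainder is $\mathcal{O}(\sqrt n\,M^{-7})$, not $\mathcal{O}(M^{-b})$.
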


\begin{theorem}[Centered-disk hard-wall ratio]
\label{theorem:hole probability of centered disk case}
Under the standing assumptions, as $n\to\infty$,
\[
\log \mathcal{P}_{n,s\lambda,\alpha}^{(\mathrm{cd})}
=
C_{1,\mathrm{hole}}^{(\mathrm{cd})}n^2
+C_{2,\mathrm{hole}}^{(\mathrm{cd})}n\log n
+C_{3,\mathrm{hole}}^{(\mathrm{cd})}n
+C_{4,\mathrm{hole}}^{(\mathrm{cd})}\sqrt{n}
+C_{5,\mathrm{hole}}^{(\mathrm{cd})}\log n
+C_{6,\mathrm{hole}}^{(\mathrm{cd})}+\mathcal{O}(n^{-\frac{1}{12}}),
\]
where $C_{k,\mathrm{hole}}^{(\mathrm{cd})}$, $1\leq k\leq6$, are
given in \eqref{def of C1cd hole}--\eqref{def of C6cd hole}.
Setting $\alpha=0$ and $\lambda\equiv0$ gives the corresponding
centered-disk hole probability.
\end{theorem}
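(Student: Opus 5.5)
We prove Theorem~\ref{theorem:hole probability of centered disk case} directly from the product $\mathcal P^{(\mathrm{cd})}_{n,s\lambda,\alpha}=\prod_{j=0}^{n-1}h^{(\mathrm{cd})}_{n,j}/h_{n,j}$ in \eqref{def of mathcalPn an dc cd}. For each $j$ the integrand $2re^{\mathsf k(r)}e^{-nV_{\tau_j}(r)}$ has a unique critical point $r_{\tau}$ with $\tau_{r_\tau}=\tau_j$, since $rV_\tau'(r)=2(\tau_r-\tau)$. The index range therefore splits at $\tau_j\approx\tau_{\rho_2}$ into three regimes.

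\emph{Regime 1: $\tau_j>\tau_{\rho_2}+\delta_n$.} The critical point lies inside $(\rho_2,\infty)$. Laplace's method shows that $h^{(\mathrm{cd})}_{n,j}/h_{n,j}=1-\mathcal O(e^{-cn(\tau_j-\tau_{\rho_2})^2})$, so these indices contribute only exponentially small errors once $\delta_n\gg n^{-1/2}\sqrt{\log n}$.

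\emph{Regime 2: $\tau_j<\tau_{\rho_2}-\delta_n$.} The numerator is dominated by the endpoint $r=\rho_2$, where $V_\tau$ is increasing. A boundary Laplace expansion to three orders gives
\[
h^{(\mathrm{cd})}_{n,j}=\frac{2\rho_2^2e^{\mathsf k(\rho_2)}e^{-nV_\tau(\rho_2)}}{2n(\tau_{\rho_2}-\tau)}\Bigl(1+\frac{a_1(\tau)}{n(\tau_{\rho_2}-\tau)^2}+\frac{a_2(\tau)}{n^2(\tau_{\rho_2}-\tau)^4}+\dots\Bigr).
\]
The denominator follows from the interior Laplace expansion of $h_{n,j}$ already used in the unconstrained case. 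I would take this from the proof in \cite{ACC2023c}/\cite{BKS2023}, or rederive it.

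\emph{Regime 3: $|\tau_j-\tau_{\rho_2}|\le\delta_n$.} Write $j=n\tau_{\rho_2}+\sqrt n\,x\,\rho_2\sqrt{2\Delta Q(\rho_2)}$ (up to normalization). The ratio is then uniformly $\tfrac12\erfc(-x)\,(1+\mathcal O(n^{-1/2})\text{ explicit correction})$, with the first correction written via polynomial$\times e^{-x^2}/\erfc$ terms. These corrections produce $\Phi$, $\alpha_{\mathrm{out}}$ and $\beta_{\mathrm{out}}$.

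Next I sum $\log$ of each piece over $j$. In Regime 2, $\log(h^{(\mathrm{cd})}/h)$ is an explicit smooth function $f_0(\tau)n+f_1(\tau)\log n+\dots$ with singularities $\log(\tau_{\rho_2}-\tau)$ and $(\tau_{\rho_2}-\tau)^{-k}$ at the endpoint. I would apply the Euler--Maclaurin formula on $[0,\tau_{\rho_2}-\delta_n]$. The main term $n\sum_j f_0(\tau_j)$ gives $C_1^{(\mathrm{cd})}$ via $n^2\int_0^{\tau_{\rho_2}}$. After substituting $\tau=\tau_r$, so that $d\tau=2r\Delta Q\,dr$, this becomes the integral $2\int_0^{\rho_2}(q-uq'\log u)u\Delta Q\,du$ together with the boundary terms $-\tau_{\rho_2}q(\rho_2)+\tau_{\rho_2}^2\log\rho_2$. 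The $\log n$ and $n\log n$ terms come from $-\log(2n(\tau_{\rho_2}-\tau))$ summed against $\frac12\log n$ pieces of $h_{n,j}$. The term $\sum_j\log(\tau_{\rho_2}-\tau_j)$ produces the Stirling/Barnes-type constants $-\log(2\sqrt\pi)$ and $\tfrac13\log n$.

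The $j=0$ endpoint needs separate care. There $h_{n,0}$ involves $|z|^{2\alpha}$ at the origin, and the small-$j$ norms $h_{n,j}$ are Gamma-function-like. Their product yields $\log G(\alpha+1)$, $\zeta'(-1)$, $\tfrac{\alpha}{2}-\tfrac{\alpha^2}{2}$ and the $\Delta Q(0)$ term. This is exactly the mechanism in the unconstrained theorem, which the hole factor must partially cancel. In Regime 3, Riemann sums in the variable $x$ with step $\sim n^{-1/2}$ turn $\sum\log(\tfrac12\erfc)$ into $\sqrt n\,\rho_2\sqrt{2\Delta Q(\rho_2)}\,\mathcal I$ plus order-one corrections. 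The Gaussian tails are matched to the Regime 1 and 2 asymptotics, which explains the regularizations $\log(\sqrt\pi ye^{y^2}\erfc y)$ and $-\frac1{1+x}$ in $\mathcal I$ and $\alpha_{\mathrm{out}}$.

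The main obstacle is the matching at $\delta_n$. The Regime 2 sums contain divergent pieces, $\sum(\tau_{\rho_2}-\tau_j)^{-1}\sim n\log$ and $\sum(\tau_{\rho_2}-\tau_j)^{-2}$, which must cancel exactly against the Regime 3 tails, and the three Euler--Maclaurin boundary terms must align. I would fix $\delta_n=n^{-1/2+\epsilon}$, expand everything in the combined variable, and verify that all $\delta_n$-dependence cancels. The error from truncating the three-term Laplace expansions at $|\tau-\tau_{\rho_2}|\approx\delta_n$, balanced against the Riemann-sum error in Regime 3, yields the stated $\mathcal O(n^{-1/12})$ after optimizing $\epsilon$. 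Finally, the $s\lambda$ and $\alpha\ell$ dependence enters only through $\mathsf k$. I would collect it as $-\Lambda_Q(0,\rho_2)$ and the $\mathsf k'(\rho_2)$ logarithmic terms, checking them against $C_6^{(\mathrm n)}$ so that the differences match \eqref{def of C6cd hole}.
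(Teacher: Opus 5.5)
Your plan is essentially the paper's proof. The paper treats the indices below the wall by the endpoint Laplace expansion at $\rho_2$ (with the three corrections $\widetilde{\mathfrak a}_1,\widetilde{\mathfrak a}_2,\widetilde{\mathfrak a}_3$). It uses an $\erfc$ window of width $M/\sqrt n$ with $M=n^{1/12}$, which is your $\epsilon=\tfrac1{12}$, balancing the errors $M^5/\sqrt n$ and $\sqrt n/M^7$. It discards the exponentially small contributions of the indices above the wall. The small-index denominators $j<D_n=\lceil n^{1/6}\rceil$ come from \cite[Lemma~4.2]{ACC2023c}, and all $M$- and $D_n$-dependent terms cancel. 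The only structural extra is a macroscopic cutoff $j_{2,-}$ inside your Regime~2, below which first-order corrections suffice.

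Some of your heuristic attributions of constants are off, though the explicit computation would correct them:
\begin{itemize}
\item The $\tfrac13\log n$ is $\tfrac1{12}\log n$ from the small-index part of the unconstrained denominator plus $\tfrac14\log n$ from the moderate-deviation range at $\rho_2$.
\item The $\tfrac{\alpha}{2}\log n$ comes from $\tfrac14\rho_2\mathsf k'(\rho_2)$ rather than from small $j$.
\item The window produces $\mathcal J(\rho_2)$, which contains $\alpha_{\mathrm{in}},\beta_{\mathrm{in}}$ as well as $\alpha_{\mathrm{out}},\beta_{\mathrm{out}}$. These labels refer to the two half-lines of the window variable, not to the two walls.
\end{itemize}
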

The proofs are given in Section~\ref{section:proof of hole probability part}.
The remaining three auxiliary theorems concern
\eqref{def of calEnan}--\eqref{def of calEncd}.

\begin{theorem}[Disk counting statistics with a disk-complement wall]
\label{theorem:counting statistics of disk complement case}
Under the standing assumptions, as $n\to\infty$,
\[
\log \mathcal{E}_{n,s\lambda,\alpha}^{(\mathrm{dc})}
=
C_{3,\#}^{(\mathrm{dc})}n 
+
C_{4,\#}^{(\mathrm{dc})}\sqrt{n}
+
C_{5,\#}^{(\mathrm{dc})}\log n
+
C_{6,\#}^{(\mathrm{dc})}
+
\mathcal{O}(n^{-\frac{1}{12}}),
\]
where $C_{k,\#}^{(\mathrm{dc})}$, $3\leq k\leq6$, are given in
\eqref{def of C3sharp dc}--\eqref{def of C6sharp dc}.
\end{theorem}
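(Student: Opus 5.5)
We work directly from the exact representation \eqref{def of calEndc}: $\log\mathcal E^{(\mathrm{dc})}_{n,s\lambda,\alpha}=\sum_{j=0}^{n-1}\log(1+\sum_{\ell=1}^{3m}\omega_\ell F^{(\mathrm{dc})}_{n,j,\ell})$. Here $F^{(\mathrm{dc})}_{n,j,\ell}=h_{n,j,\ell}/h_{n,j}^{(\mathrm{dc})}$ is the fraction of the truncated radial mass $\int_0^{\rho_1}2ue^{\mathsf k}e^{-nV_{\tau_j}}du$ lying in $[0,r_\ell]$. The mass of $e^{-nV_{\tau}}$ concentrates near the critical radius $r_\tau$ with $\tau_{r_\tau}=\tau$, in a window of width $n^{-1/2}$. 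So we split the index range $\tau_j\in[0,1)$ into several zones and find uniform asymptotics of $F_{n,j,\ell}$ in each zone by Laplace's method. The zones are:
\begin{itemize}
\item[(i)] $|\tau_j-\tau_{a_1}|\le n^{-1/2}M$, with $M\asymp\log n$ (bulk window);
\item[(ii)] $|\tau_j-\tau_{\rho_1}|\le n^{-1/2}M$ (semi-hard window);
\item[(iii)] $\tau_j-\tau_{\rho_1}\ge n^{-1/2}M$, where the maximum of $r\mapsto -nV_{\tau_j}(r)$ on $[0,\rho_1]$ sits at the wall $\rho_1$ (hard regime);
\item[(iv)] the remaining $j$, where all $F_{n,j,\ell}$ are exponentially close to $0$ or $1$.
\end{itemize}
In zone (iv) the summand equals $\log\Omega^{(3m)}_{\ell}$ for the appropriate $\ell$ up to $\mathcal O(e^{-cM^2})$. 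These piecewise-constant values produce the linear terms $\mu_Q[\mathbb D_{a_1}]\sum s_\ell$ and $\mu_Q[\mathbb D_{\rho_1}]\sum s_\ell$ in $C_{3,\#}^{(\mathrm{dc})}$, after counting the indices $j$ with $\tau_j<\tau_{a_1}$, respectively $\tau_j<\tau_{\rho_1}$.

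In zones (i)–(ii) we write $\tau_j=\tau_{a}+x\sqrt{2}\,a\sqrt{\Delta Q(a)}\,n^{-1/2}$ and expand $F_{n,j,\ell}$ to second order in $n^{-1/2}$. At leading order $F_{n,j,\ell}\approx\frac12\erfc(x-t_\ell)$ in the bulk. In the semi-hard window it is the ratio $\erfc(t_\ell+x)/\erfc(x)$, because of the truncation at $\rho_1$. The next-order corrections carry the polynomial-times-Gaussian factors that appear in $\mathcal G_1^{(\mathrm{b,in})}$, $\mathcal H_2^{(\mathrm{se,in})}$ and $\widetilde{\mathcal H}_2^{(\mathrm{se,in})}$; they involve $\partial_r\Delta Q$ and $\mathsf k'$. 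We then convert each $j$-sum into an integral by Euler–Maclaurin, with mesh $\sqrt n$ in $x$. The main term produces the $\sqrt n$ coefficients $C_4^{\#(\mathrm{b,in})}$ and $C_4^{\#(\mathrm{se,in})}$, after subtracting the step functions already counted in zone (iv). The Jacobian correction $x\log\mathcal H$ and the first-order corrections produce $\mathcal D_6^{(\mathrm{b,in})}$ and $\mathcal D_6^{(\mathrm{se,in})}$. The Euler–Maclaurin boundary term together with the $\mathsf k$-dependence gives the $-(\tfrac12 a\mathsf k'(a)+\tfrac12)\sum s_\ell$ pieces.

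In zone (iii) the truncated integral is dominated by the endpoint $\rho_1$. Putting $u=\rho_1(1-v/n)$, we get $h_{n,j}^{(\mathrm{dc})}\approx \rho_1^2e^{\mathsf k(\rho_1)-nV_{\tau_j}(\rho_1)}\,n^{-1}/(\tau_j-\tau_{\rho_1}+\dots)$. The radii $r_\ell=\rho_1(1-t_\ell/n)$ then give
\[
F_{n,j,\ell}\approx e^{-2t_\ell(\tau_j-\tau_{\rho_1})}\Bigl(1+\tfrac1n\,\text{explicit correction}\Bigr).
\]
This yields $1+\sum\omega_\ell F_{n,j,\ell}=1+\mathsf T_0^{(2m+1,3m)}(\tau_j)+\mathcal O(1/n)$, where the $1/n$ correction involves $\mathsf T_1$ and $\mathsf T_2$ with the coefficient $2\rho_1^2\Delta Q(\rho_1)/(x-\tau_{\rho_1})$ from the second-order endpoint expansion. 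Riemann-summing $\log(1+\mathsf T_0)$ over $\tau_j\in[\tau_{\rho_1}+Mn^{-1/2},1)$ gives the integral in $C_{3,(\mathrm{dc})}^{\#(\mathrm{h,in})}$. The $1/n$ correction has a $1/(x-\tau_{\rho_1})$ singularity, and summing it over $\tau_j\ge\tau_{\rho_1}+Mn^{-1/2}$ produces the $\log n$ term $C_{5,(\mathrm{dc})}^{\#(\mathrm{h,in})}$; this is why it appears with the factor $\rho_1^2\Delta Q(\rho_1)\mathsf T_1/(1+\mathsf T_0)$ at $\tau_{\rho_1}$. The regularized remainder gives $\mathcal D_{6,(\mathrm{dc})}^{(\mathrm{h,in})}$. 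The Euler–Maclaurin endpoint at $\tau=1$ gives the $\log(1+\mathsf T_0(1))$ term.

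The main obstacle is matching zones (ii) and (iii). The hard-wall expansion is singular as $\tau_j\downarrow\tau_{\rho_1}$, so we must verify that the large-$x$ behaviour of the semi-hard integrand agrees with the small-$(\tau-\tau_{\rho_1})$ behaviour of the hard integrand, using the asymptotics of $e^{-x^2}/(\sqrt\pi\erfc(x))\sim x+x/(2(1+x^2))$. This is exactly the subtraction in the last line of $\mathcal D_6^{(\mathrm{se,in})}$, together with the $\log(2\rho_1^2\Delta Q(\rho_1))$ term. Once this matching holds, the $M$-dependent boundary terms cancel. The error bound requires:
\begin{itemize}
\item uniform Laplace remainders of size $\mathcal O(M^{k}n^{-3/2})$ per index in the windows;
\item Euler–Maclaurin errors;
\item cutoff tails $e^{-cM^2}$.
\end{itemize}
Choosing $M$ as a suitable power of $n$ (balancing $M^{6}n^{-1/2}$ against the cutoff errors) should give $\mathcal O(n^{-1/12})$. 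Finally, summing the contributions of zones (i)–(iv) gives exactly \eqref{def of C3sharp dc}--\eqref{def of C6sharp dc}.
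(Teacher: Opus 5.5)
Your proposal follows essentially the same route as the paper. It uses the same split into trivial ranges, the bulk window at $a_1$, the critical window at $\rho_1$ (where the hard radii enter only through an $n^{-1/2}$ correction carrying the factor $e^{-x^2}/\erfc(x)$), and the hard-edge tail $\tau_j-\tau_{\rho_1}\gtrsim M/\sqrt n$. It then applies Laplace asymptotics to the truncated norms and Euler--Maclaurin in each range, and cancels the $M\sqrt n$, $M^2$ and $\log M$ cutoff terms between adjacent ranges.

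The one thing to fix is your error budget. The error competing with the window remainder $\mathcal{O}(M^5n^{-1/2})$ is not $e^{-cM^2}$. It is the truncation error of the endpoint expansion in your zone (iii) near $\tau_{\rho_1}$, which decays only like a negative power of $M$ (the paper bounds it by $\mathcal{O}(n^{1/2}M^{-7})$). So $M\asymp\log n$ gives no polynomial rate, and the paper takes $M=n^{1/12}$ to balance these two errors.
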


\begin{theorem}[Disk counting statistics with an annular wall]
\label{theorem:counting statistics of annulus case}
Under the standing assumptions, as $n\to\infty$,
\[
\log \mathcal{E}_{n,s\lambda,\alpha}^{(\mathrm{an})}
=C_{3,\#}^{(\mathrm{an})}n
+C_{4,\#}^{(\mathrm{an})}\sqrt n
+C_{5,\#}^{(\mathrm{an})}\log n
+C_{6,\#}^{(\mathrm{an})}
+\mathcal{F}_{n,\#}
+\mathcal{O}(n^{-1/12}),
\]
where $C_{k,\#}^{(\mathrm{an})}$, $3\leq k\leq6$, are given in
\eqref{def of C3sharp an}--\eqref{def of C6sharp an}, and
$\mathcal F_{n,\#}$ is given in \eqref{def of calFNsharp}.
For counting parameters in a sufficiently small complex polydisc about the
origin, with all logarithmic branches continued from the origin, the
expansion and its remainder are locally uniform.  In particular, all
parameter derivatives of the remainder satisfy the corresponding Cauchy
estimates on smaller polydiscs.
\end{theorem}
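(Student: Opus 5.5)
The plan is to start from the exact factorization \eqref{def of calEnan} and split the sum over $j$ according to where the saddle point of the $j$th radial weight lies. Write $\tau=\tau_j=j/n$ and let $r_\tau$ be the unique solution of $\tau_{r}=\tau$, i.e.\ the maximizer of $r^{2\tau}e^{-nq(r)}$ restricted to the allowed set. For $\tau<\tau_{\rho_1}$ the mass of the $j$th norm sits at $r_\tau<\rho_1$. For $\tau>\tau_{\rho_2}$ it sits at $r_\tau>\rho_2$. For $\tau_{\rho_1}<\tau<\tau_{\rho_2}$ the norm $h^{(\mathrm{an})}_{n,j}$ is a sum of two boundary contributions, concentrated at $\rho_1$ and at $\rho_2$, of sizes $\asymp e^{-nV_\tau(\rho_k)}/(n|V_\tau'(\rho_k)|)$. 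Their ratio changes from $\gg1$ to $\ll1$ at $\tau=\sigma_\star$, since $V_\tau(\rho_1)=V_\tau(\rho_2)$ exactly there. I therefore split $\{0,\dots,n-1\}$ into windows, each within $M n^{1/2}\log n$, or $M\log n$ in the hard and $\sigma_\star$ cases, of the critical indices $n\tau_{a_1},n\tau_{\rho_1},n\sigma_\star,n\tau_{\rho_2},n\tau_{a_2},n$, plus the complementary bulk stretches. On the stretches every $F^{(\mathrm{an})}_{n,j,\ell}$ is exponentially close to $0$ or $1$, so the summand is the constant $\log\Omega_\ell$. These constants produce the linear parts $\mu_Q[\cdot]\sum s_\ell$ of $C_{3,\#}$ and the indicator subtractions in the $C_4$ integrals.

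In each window I use uniform Laplace-type asymptotics for the truncated norms. In the bulk, semi-hard and soft windows, the standard erfc expansion of $\int_0^{r_\ell}$ in the variable $x=\sqrt n(\tau-\tau_{a})/(\cdots)$ is taken to second order, including the $n^{-1/2}$ corrections with the $(1-2t^2+tx-5x^2)$ polynomials. The sum over $j$ is then converted to an integral by Euler--Maclaurin. This yields $C_4^{\#}$ as $\int\log\mathcal H_1$ and the order-one $\mathcal D_6$ terms from the first-order corrections and the $x\log\mathcal H$ moments. I expect these to repeat, essentially verbatim, the arguments behind the unconstrained counting results (Charlier--Lenells, ACCL), with $\Delta Q$ varying. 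In the hard windows near $\rho_1$ with $\tau\in(\tau_{\rho_1},\sigma_\star)$, I rescale $u=\rho_1(1-t/n)$. The truncated inner integral is then the boundary integral times $e^{-2t_\ell(\tau-\tau_{\rho_1})}(1+O(1/n))$, because $nV_\tau$ has slope $-2n(\tau-\tau_{\rho_1})/\rho_1$ at the wall. The outer-wall contributions enter only through the already-complete factor $1+\widehat{\mathsf T}_0^{(3m+1,7m)}(\tau_{\rho_2})$. The summand is then a smooth function of $x=j/n$, and Riemann summation gives $n C_{3,\mathrm{in}}^{(\mathrm{an})}$. The $1/n$ corrections, which have a $1/(x-\tau_{\rho_1})$ singularity from $1/|V'_\tau(\rho_1)|$, produce the $\log n$ coefficient $C_5$ and the regularized integrals in $\mathcal D^{(\mathrm{h,in})}_{6}$. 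The overlap of this window with the semi-hard window near $\tau_{\rho_1}$ has to be matched, which is where the $\mathsf T_1\log(2\rho_1^2\Delta Q)$ and $e^{-x^2}/\erfc$ terms of $\mathcal D_6^{(\mathrm{se,in})}$ arise. The outer hard side is treated symmetrically.

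The main obstacle is the window around $j\approx n\sigma_\star$. There $h^{(\mathrm{an})}_{n,j}\approx A_j+B_j$ with $A_j/B_j\approx \frac{\sigma_2}{\sigma_1}\,(\rho_1/\rho_2)^{2(j-n\sigma_\star)}$. The counting insertions multiply $A_j$ and $B_j$ by different constants, namely the numerator and denominator of $\mathsf Q$ in \eqref{def of mathsf Q}. The summand is therefore $\log\frac{\mathsf Q_1A_j+\mathsf Q_2B_j}{A_j+B_j}$ plus terms already counted, and this is not a Riemann sum: it is a discrete sum over $k=j-\lfloor n\sigma_\star\rfloor\in\mathbb Z$ of $\log\bigl(1+c\,e^{2kL}\bigr)$-type terms, with $L=\log(\rho_2/\rho_1)$. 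I would evaluate it exactly as in \cite{C2021}. Write the partial products as $\prod_k(1+\mathsf Q e^{2L(k-\xi)})/(1+e^{2L(k-\xi)})$ with the appropriate phase $\xi$, after normalizing each half-line product against its limit. Then use the Jacobi triple product
\[
\theta(z\mid\pi i/L)=\prod_{k\ge1}(1-e^{-2kL'})\bigl(1+e^{-(2k-1)L'}e^{2\pi iz}\bigr)\bigl(1+e^{-(2k-1)L'}e^{-2\pi iz}\bigr),\qquad L'=\pi^2/L,
\]
after the modular transformation converting $\tau=\pi i/L$ to $iL/\pi$. Shifting $z$ by $\log\mathsf Q/(2L)$ gives the theta ratio in $\mathcal F_{n,\#}$, and the Gaussian prefactor from the modular transformation gives the quadratic terms $\frac{\log\mathsf Q}{2}(\cdots)$. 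The delicate points are bookkeeping the phase $n\sigma_\star-\alpha+\tfrac12+\frac{\log(\sigma_2/\sigma_1)}{2L}$ (the $-\alpha$ and the $\lambda$-shift come from $e^{\mathsf k(\rho_k)}$ and the $\tfrac12$ from the $1/|V'|$ prefactors), and showing that the error from replacing $A_j,B_j$ by their leading forms stays $\mathcal O(n^{-1/12})$ after summation. Local uniformity in a complex polydisc follows because all approximations hold uniformly while the arguments of the logarithms stay away from $0$, which is true for small $|s_\ell|$. The Cauchy estimates then follow from analyticity in $\boldsymbol s$.
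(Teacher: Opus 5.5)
Your proposal is correct and follows essentially the paper's route: the same saddle-location decomposition of \eqref{def of calEnan} into $\sqrt n$-scale erfc windows at $a_1,\rho_1,\rho_2,a_2,R$ (Laplace expansions plus two-sided and one-sided Euler--Maclaurin), constant stretches, macroscopic hard-edge ranges on $(\tau_{\rho_1},\sigma_\star)$ and $(\sigma_\star,\tau_{\rho_2})$ whose $1/(x-\tau_{\rho_k})$-singular first corrections produce $C_{5,\#}^{(\mathrm{an})}$ and the regularized integrals in $\mathcal D_{6,(\mathrm{an})}^{(\mathrm{h,in})}$ and $\mathcal D_{6,(\mathrm{an})}^{(\mathrm{h,out})}$, cancellation of cutoff terms between adjacent ranges, and the two geometrically localized series at $j_\star=n\sigma_\star$ combined into the theta ratio via the Jacobi triple product and the modular transformation as in \cite{ACCL2,C2021}. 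The one thing to fix is your opening description of the windows, which contradicts what you do afterwards: the wall transition at $n\tau_{\rho_k}$ lives on the $\sqrt n$ scale (not $M\log n$), and on $n\tau_{\rho_1}<j<n\tau_{\rho_2}$ the summand is not constant, since the inner hard-block ratios are $F_{n,j,\ell}^{(\mathrm{an})}\approx e^{-2t_\ell(\tau_j-\tau_{\rho_1})}$ for $j<j_\star$ and the outer ones are $\approx 1-e^{-2t_\ell(\tau_{\rho_2}-\tau_j)}$ for $j>j_\star$, so only the oscillatory correction at $\sigma_\star$ is confined to $\mathcal O(\log n)$ indices.
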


\begin{theorem}[Disk counting statistics with a centered-disk wall]
\label{theorem:counting statistics of centered disk case}
Under the standing assumptions, as $n\to\infty$,
\[
\log \mathcal{E}_{n,s\lambda,\alpha}^{(\mathrm{cd})}
=C_{3,\#}^{(\mathrm{cd})}n
+C_{4,\#}^{(\mathrm{cd})}\sqrt n
+C_{5,\#}^{(\mathrm{cd})}\log n
+C_{6,\#}^{(\mathrm{cd})}
+\mathcal{O}(n^{-1/12}),
\]
where $C_{k,\#}^{(\mathrm{cd})}$, $3\leq k\leq6$, are given in
\eqref{def of Csharp3 cd}--\eqref{def of Csharp6 cd}.
\end{theorem}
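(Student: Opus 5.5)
We start from the exact factorization \eqref{def of calEncd}, which writes $\log\mathcal E^{(\mathrm{cd})}_{n,s\lambda,\alpha}$ as a sum over $j$ of $\log(1+\sum_{\ell=3m+1}^{7m}\omega_\ell F^{(\mathrm{cd})}_{n,j,\ell})$. Each $F^{(\mathrm{cd})}_{n,j,\ell}$ is a ratio of one-dimensional Laplace-type integrals of $e^{-nV_{\tau_j}+\mathsf k}$ over $[\rho_2,r_\ell]$ and $[\rho_2,\infty)$. The function $V_\tau$ has its minimum at $r_\tau$, where $\tau_{r_\tau}=\tau$. So the plan is to split the index range $j\in\{0,\dots,n-1\}$ according to where $r_{\tau_j}$ sits relative to $\rho_2$ and to the radii $r_\ell$.

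\textbf{Index regimes.} We use the following regimes.
\begin{enumerate}
\item[(i)] $\tau_j<\tau_{\rho_2}-\delta$. The integrand on $[\rho_2,\infty)$ peaks at the wall, and an endpoint expansion with rate $n(V_\tau'(\rho_2))\propto n(\tau_{\rho_2}-\tau)$ gives hard-edge ratios. For the hard radii $r_\ell=\rho_2(1+t_\ell/n)$ these produce $1-\widehat{\mathsf T}_0$-type exponentials $e^{-2t_\ell(\tau_{\rho_2}-x)}$. For all other $r_\ell$ the ratio is $1$ up to exponentially small errors.
\item[(ii)] $|\tau_j-\tau_{\rho_2}|\le\delta$. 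This is the transition window, where we rescale $j=n\tau_{\rho_2}+x\sqrt{2n}\rho_2\sqrt{\Delta Q(\rho_2)}$. The integrals become $\erfc$ expressions, giving the semi-hard $\mathcal H^{(\mathrm{se,out})}$ functions.
\item[(iii)] $\tau_j$ in the bulk away from $a_2$ and $R$. All $F$'s are $0$ or $1$ up to exponentially small errors.
\item[(iv)] Windows of width $\sqrt n$ around $\tau_{a_2}$ and near $\tau_{R}=1$. Here the standard Gaussian/$\erfc$ expansions give the bulk and soft contributions.
\end{enumerate}
In each regime we expand $F$ to relative order $n^{-1}$, including the first correction involving $\partial_r\Delta Q/\Delta Q$ and $\mathsf k'$, which is where the $\mathcal G$ and $\widetilde{\mathcal H}_2$ terms arise. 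The bulk, soft and outer semi-hard windows are insensitive to the wall at leading orders, so their analysis should be taken essentially verbatim from the unconstrained counting case \cite{CL2023,ABE2023}. Only the normalizations through $\Omega^{(7m)}_{4m+1}$ change.

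\textbf{Summing in $j$.} In regime (i), $\sum_j f(j/n)$ is converted via Euler--Maclaurin on $[0,\tau_{\rho_2}]$. The leading integrand $\log(1-\widehat{\mathsf T}_0/(1+\widehat{\mathsf T}_0^{(3m+1,7m)}))$ gives the order-$n$ term $C^{\#(\mathrm{h,out})}_{3,(\mathrm{cd})}$. The $1/n$ correction has a pole $\frac{2\rho_2^2\Delta Q(\rho_2)}{\tau_{\rho_2}-x}$ at the upper endpoint. Cutting it off at distance $\sim n^{-1/2}$ produces the $\log n$ coefficient $C^{\#(\mathrm{h,out})}_{5,(\mathrm{cd})}$ and the subtracted-pole integral in $\mathcal D_6$. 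The Euler--Maclaurin boundary term at $x=0$ gives the $\log[1-\widehat{\mathsf T}_0(0)/\cdots]$ term. The window sums (ii) and (iv) are Riemann sums in $x$ with mesh $n^{-1/2}$; their leading parts give the $\sqrt n$ coefficients and their first corrections give the order-one $\mathcal D_6$ integrals. Away from the windows the constant contributions $\sum s_\ell$ add up to the order-$n$ pieces $\mu_Q[\mathbb D_\cdot]\sum s_\ell$, with the $-(\alpha+\frac12+\cdots)\sum s_\ell$ shifts coming from the discreteness of the boundary index.

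\textbf{Main obstacle.} The hardest step is matching regime (i) to regime (ii): the hard-edge expansion degenerates as $\tau_j\to\tau_{\rho_2}$. We would choose an intermediate cutoff $|\tau_j-\tau_{\rho_2}|\asymp n^{-1/2+\epsilon}$ and verify that both expansions are valid on the overlap. The logarithmic divergence of the hard-edge sum must cancel against the large-$|x|$ tail of the $\erfc$ window integral; this cancellation is exactly what the counterterms $x+\frac{x}{2(1+x^2)}$ in $\mathcal D_6^{(\mathrm{se,out})}$ and the $\log(2\rho_2^2\Delta Q(\rho_2))$ term encode. Balancing the error terms of the two expansions at the cutoff (uniform $\erfc$ asymptotics versus higher hard-edge corrections) yields the $\mathcal O(n^{-1/12})$ remainder.
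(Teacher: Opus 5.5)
Your plan matches the paper's proof: the paper also splits the index range into the hard-edge range $0\le j<g_{2,-}$, the wall window $g_{2,-}\le j\le g_{2,+}$ of width $\asymp M\sqrt n$ with $M=n^{1/12}$, and the outer gap, bulk and soft-edge ranges. The hard-edge range is treated by the endpoint Laplace expansion and Euler--Maclaurin on $[0,\tau_{\rho_2}]$, where the pole at $x=\tau_{\rho_2}$ yields $C_{5,(\mathrm{cd})}^{\#(\mathrm{h,out})}\log n$ together with a $\log M$ term. The outer ranges are transferred from the annulus lemmas, the inner component being exponentially negligible. All $M$-dependent and rounding terms then cancel, and the errors $M^5/\sqrt n$ and $\sqrt n/M^7$ balance at $n^{-1/12}$.

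Two small corrections. First, the semi-hard window is not insensitive to the wall: it is precisely your regime (ii), where the truncated normalization $1-\tfrac12\erfc$ and the $\mathcal O(n^{-1/2})$ hard-radius terms enter, as you yourself use later. Second, the $-(\alpha+\tfrac{s}{2}\rho_2\lambda'(\rho_2))\sum s_\ell$ part of the order-one shift comes from the $\mathsf k'$ term in the first Laplace correction, with index rounding supplying only the $\tfrac12$.
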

Their proofs occupy Section~\ref{section:proof of calen part}.  Substitution
of the six auxiliary expansions and the unconstrained theorem into
\eqref{def of log Dnan Dndc Dncd} proves the three main theorems.

We begin in Subsection~\ref{subsection:asymptotics of norming constant annulus}
with uniform asymptotics for $h_{n,j}^{(\mathrm{an})}$.  The analysis combines
the radial Laplace method and Euler--Maclaurin summation used in
\cite{BKS2023,ACC2023c,Seo,C2021}; the principal additional issue is matching
the two competing endpoint saddles across the annular spectral gap.

\subsection{Uniform asymptotics of the norming constants}
\label{subsection:asymptotics of norming constant annulus}
For $k\in\{1,2\}$, set
\begin{equation}
\label{def of tau k ast}
    \tau_{\rho_k}:=\frac{1}{2}\rho_kq'(\rho_{k}). 
\end{equation}
We first record several elementary identities. For a radially symmetric potential $Q(z)=q(|z|)$, the Laplacian in polar coordinates is
\[
4\Delta Q(z)\bigr|_{z=r}\equiv 4\Delta Q(r)=\frac{1}{r}(rq'(r))'=q''(r)+\frac{q'(r)}{r}. 
\]
Differentiating \eqref{def of V tau} with respect to $r$ gives (cf.\ \cite[Eq.~(2.4)]{BKS2023})
\begin{align}
\begin{split}
\label{def of V tau relationship 1}
V_{\tau}'(r)&=q'(r)-\frac{2\tau}{r},\qquad V_{\tau}''(r)=4\Delta Q(r)-\frac{1}{r}V_{\tau}'(r), \\ 
V_{\tau}^{(3)}(r)&=4\partial_{r}\Delta Q(r)-\frac{4}{r}\Delta Q(r)+\frac{2}{r^2}V_{\tau}'(r), \\
V_{\tau}^{(4)}(r)&=4\partial_r^2\Delta Q(r)+\frac{12}{r^2}\Delta Q(r)-\frac{4}{r}\partial_{r}\Delta Q(r)-\frac{6}{r^3}V_{\tau}'(r). 
\end{split}
\end{align} 
By Assumption~\ref{Assumption_Q}(2), $rq'(r)$ is strictly increasing
on the smooth neighborhood of $[0,R]$.  Hence, for each $\tau\in(0,1]$,
there is a unique $r(\tau)\in(0,R]$ such that
\begin{equation}
\label{def of V tau relationship 2}
V_{\tau}'(r(\tau))=0,\qquad V_{\tau}''(r(\tau))>0. 
\end{equation}
For $\tau>0$, implicit differentiation yields
\[
\frac{dr(\tau)}{d\tau}=\frac{2}{(rq'(r))'}\Bigr|_{r=r(\tau)}=\frac{1}{2r(\tau)\Delta Q(r(\tau))}>0,
\]
Extend $r$ continuously by $r(0)=0$.  Then $r(\tau)$ is strictly increasing
on $[0,1]$.  By \eqref{def of V tau relationship 1} and \eqref{def of V tau relationship 2}, $r_{\tau}:=r(\tau)$ satisfies
\begin{equation}
\label{def of rtau q'rtau = 2tau}
r_{\tau}q'(r_{\tau})=2\tau.
\end{equation}
In particular, $r(0)q'(r(0))=0$ and $r(1)=R$, with $Rq'(R)=2$, by Assumption~\ref{Assumption_Q}(2).
Let $\epsilon>0$ be a small constant independent of $n$. 
Define 
\[
j_{k,-}:=\Bigl\lceil\frac{n\tau_{\rho_k}}{1+\epsilon}\Bigr\rceil, \qquad 
j_{k,+}:=\Bigl\lfloor\frac{n\tau_{\rho_k}}{1-\epsilon}\Bigr\rfloor,\qquad k\in\{1,2\},
\]
where $\lceil x\rceil$ and $\lfloor x\rfloor$ denote the least integer not smaller than $x$ and the greatest integer not larger than $x$, respectively.
Choose $\epsilon>0$ sufficiently small that
\begin{equation}
\label{def of tau k ast epsilon inequality}
    \frac{\tau_{\rho_1}}{1-\epsilon}<\frac{\tau_{\rho_2}}{1+\epsilon},\qquad \frac{\tau_{\rho_2}}{1-\epsilon}<1. 
\end{equation}
This choice is possible by the definition of $\tau$.
Then
\[
\frac{n\tau_{\rho_k}}{1+\epsilon}\leq j_{k,-}<j_{k,+}\leq \frac{n\tau_{\rho_k}}{1-\epsilon}. 
\]
Set
\begin{equation}
\label{def of M parameter}
M=n^{\frac{1}{12}}.
\end{equation}
Define
\begin{equation}
g_{k,-}:=\Bigl\lceil \frac{n\tau_{\rho_k}}{1+\frac{M}{\sqrt{n}}} \Bigr\rceil, \qquad 
g_{k,+}:=\Bigl\lfloor \frac{n\tau_{\rho_k}}{1-\frac{M}{\sqrt{n}}}\Bigr\rfloor,\qquad k\in\{1,2\}. 
\end{equation}
Shrinking $\epsilon$ further if necessary, we also require
\begin{equation}
\label{def of sigma ast interrelashion epsilon}
\frac{\tau_{\rho_1}}{1-\epsilon}<\sigma_{\star}<\frac{\tau_{\rho_2}}{1+\epsilon},
\end{equation}
which is possible by Assumption~\ref{Assumption_Q}.
We further set
\begin{equation}
\label{def of index jstar}
    j_{\star}:=n\sigma_{\star}.
\end{equation}
By the definition of $\sigma_{\star}$,
\begin{align*}
V_{\sigma_{\star}}(\rho_{2})-V_{\sigma_{\star}}(\rho_{1})
&=q(\rho_{2})-q(\rho_{1})-2\log\frac{\rho_{2}}{\rho_{1}}\frac{q(\rho_{2})-q(\rho_{1})}{2\log\frac{\rho_{2}}{\rho_{1}}}=0.
\end{align*}
We shall repeatedly use the identities
\begin{align}
\label{def of difference between Vtau V rho1 jstar}
V_{\tau}(r)-V_{\tau}(\rho_{1})
&=V_{\tau}(r)-V_{\tau}(\rho_{2})+2(\tau-\sigma_{\star})\log\frac{\rho_{1}}{\rho_{2}}, 
\\
\label{def of difference between Vtau V rho2 jstar}
V_{\tau}(r)-V_{\tau}(\rho_{2})&=
V_{\tau}(r)-V_{\tau}(\rho_{1})+2(\tau-\sigma_{\star})\log\frac{\rho_{2}}{\rho_{1}}. 
\end{align}
Throughout this section, we use the scales
\begin{equation}
\delta_n:=\frac{M}{\sqrt{n}},\qquad \delta_n':=\frac{\log n}{\sqrt{n}}.  
\end{equation}

\begin{lemma}\label{lemma: 0 j j1-}
Let $D_n:=\lceil n^{\frac{1}{6}}\rceil$.
Uniformly for $0\leq j\leq D_n-1$, there exists a constant $c>0$, independent of $n$ and $j$, such that
\begin{align}
\begin{split}
h_{n,j}^{(\mathrm{an})}
&=h_{n,j}\cdot(1+\mathcal{O}(e^{-c (\log n)^2})).
\end{split}
\end{align}
The same conclusion holds uniformly for $D_n\leq j\leq g_{1,-}-1$:
\begin{align}
\begin{split}
h_{n,j}^{(\mathrm{an})}&
=h_{n,j}\cdot(1+\mathcal{O}(e^{-c (\log n)^2})).
\end{split}
\end{align}
More precisely, uniformly for $0\leq j\leq D_n-1$,
\begin{equation}
\label{def of 0 leq j leq Dn}
h_{n,j}=\frac{\Gamma(j+\alpha+1)}{(n\Delta Q(0))^{j+\alpha+1}}e^{-nq(0)+s\lambda(0)}\cdot\Bigl(1+\mathcal{O}(\frac{(j+1)^{\frac{3}{2}}(\log n)^3}{\sqrt{n}})\Bigr),    
\end{equation}
whereas, with $\mathsf{k}(r)=s\lambda(r)+2\alpha\log r$, uniformly for $D_n\leq j\leq n-1$,
\begin{equation}
\label{def of Dn leq j leq n-1}
h_{n,j}
=\sqrt{\frac{2\pi}{n}} \frac{r_{\tau}}{\sqrt{\Delta Q(r_{\tau})}}e^{\mathsf{k}(r_{\tau})}e^{-nV_{\tau}(r_{\tau})}\cdot\Bigl(1+\frac{\mathcal{A}(r_{\tau})}{n}+\mathcal{O}(\frac{(\log n)^{\nu}}{j^{3/2}})\Bigr)    
\end{equation}
for some $\nu>0$, where
\begin{align}
\mathcal{A}(r_{\tau})&:=\mathcal{B}(r_{\tau})+\frac{\mathsf{k}'(r_{\tau})^2}{2}\frac{1}{d_2}+\frac{\mathsf{k}''(r_{\tau})}{2}\frac{1}{d_2}+\frac{\mathsf{k}'(r_{\tau})}{r_{\tau}}\frac{1}{d_2}-\frac{\mathsf{k}'(r_{\tau})}{2}\frac{d_3}{d_2^2},   
\\
\mathcal{B}(r)&:=-\frac{1}{32}\frac{\partial_r^2\Delta Q(r)}{(\Delta Q(r))^2}-\frac{19}{96r}\frac{\partial_r\Delta Q(r)}{(\Delta Q(r))^2}+\frac{5}{96}\frac{(\partial_r \Delta Q(r))^2}{(\Delta Q(r))^3}+\frac{1}{12}\frac{1}{r^2\Delta Q(r)},      \\
d_m&:=V_{\tau}^{(m)}(r_{\tau}).
\end{align}
Here the error $\mathcal{O}(j^{-3/2}(\log n)^{\nu})$ may be replaced by $\mathcal{O}(n^{-2})$ whenever $j\geq c_0n$, for any fixed $c_0>0$.
\end{lemma}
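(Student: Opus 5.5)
The lemma has two parts: the comparison $h_{n,j}^{(\mathrm{an})}=h_{n,j}(1+\mathcal O(e^{-c(\log n)^2}))$ for $j\le g_{1,-}-1$, and the explicit asymptotics \eqref{def of 0 leq j leq Dn}, \eqref{def of Dn leq j leq n-1} of the unconstrained norms. I would prove the explicit asymptotics first and then derive the comparison from them. Write $h_{n,j}-h_{n,j}^{(\mathrm{an})}=2\int_{\rho_1}^{\rho_2}re^{\mathsf k(r)}e^{-nV_{\tau_j}(r)}\,dr$. For $j<g_{1,-}$ we have $\tau_j<\tau_{\rho_1}/(1+\delta_n)$. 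Since $rq'(r)$ is strictly increasing, the minimiser satisfies $r_{\tau_j}<\rho_1$, and $V_{\tau_j}$ is increasing on $[r_{\tau_j},\infty)\cap$(smooth region). Hence on $[\rho_1,\rho_2]$ we get $V_{\tau_j}(r)\ge V_{\tau_j}(\rho_1)$. Moreover,
\[
V_{\tau_j}(\rho_1)-V_{\tau_j}(r_{\tau_j})\ge \int_{r_{\tau_j}}^{\rho_1}\frac{2(\tau_{r}-\tau_j)}{r}\,dr\gtrsim (\tau_{\rho_1}-\tau_j)^2 \gtrsim \delta_n^2
\]
when $\tau_j\gtrsim 1$, using $d\tau_r/dr=2r\Delta Q\ge c>0$.

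This gives $n(V_{\tau_j}(\rho_1)-V_{\tau_j}(r_{\tau_j}))\gtrsim M^2=n^{1/6}$, which dominates $(\log n)^2$. Dividing by the leading term of $h_{n,j}$ (which carries only polynomial prefactors $\sqrt{n}$ and $r_\tau/\sqrt{\Delta Q}$) yields the relative error $e^{-c n^{1/6}}\le e^{-c(\log n)^2}$. For small $j$ (say $j\le D_n$, or more generally $\tau_j\to0$), $r_{\tau_j}\to0$, and the gap $V_{\tau_j}(\rho_1)-V_{\tau_j}(r_{\tau_j})$ is bounded below by a positive constant minus $\mathcal O(\tau_j\log(1/\tau_j))$. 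This is even better, provided we compare against the Gamma-form \eqref{def of 0 leq j leq Dn}: $\Gamma(j+\alpha+1)(n\Delta Q(0))^{-j-\alpha-1}$ versus $e^{-n(q(\rho_1)-q(0))}\rho_1^{2j}$, whose ratio is exponentially small in $n$ for $j\le D_n$ by Stirling.

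For \eqref{def of 0 leq j leq Dn}, I would substitute $r=u/\sqrt{n}$ for $j\le D_n$ and expand $q(r)=q(0)+\Delta Q(0)r^2+\mathcal O(r^4)$ (radial smoothness gives only even terms up to this order), together with $\lambda(r)=\lambda(0)+\mathcal O(r^2)$. The integrand concentrates at $u^2\approx j+\alpha$, i.e. $r^2\lesssim (j+1)\log n/n$ after cutting the tail at $r^2\sim(j+1)(\log n)^2/n$. The cut tail is $\mathcal O(e^{-c(\log n)^2})$. On the bulk, the correction $nr^4\lesssim (j+1)^2(\log n)^2/n$; this is of size $(j+1)^{3/2}(\log n)^3/\sqrt n$ after the usual refinement, or one simply accepts the weaker bound, which already suffices for $j\le n^{1/6}$. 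What remains is the Gamma integral.

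For \eqref{def of Dn leq j leq n-1}, I would apply the Laplace method around $r_\tau$, with $V_\tau''(r_\tau)=4\Delta Q(r_\tau)$ from \eqref{def of V tau relationship 1}. The steps are: localise to $|r-r_\tau|\le \log n/\sqrt{n\Delta Q(r_\tau)}$, Taylor-expand $V_\tau$ to fourth order and $e^{\mathsf k}$, $r$ to second order, and integrate the Gaussian moments. The $1/n$ coefficient is the standard combination $\frac{5d_3^2}{24d_2^3}-\frac{d_4}{8d_2^2}$ plus the $\mathsf k$ and $r$ terms. Rewriting the $d_m$ via \eqref{def of V tau relationship 1} at $V_\tau'(r_\tau)=0$ produces $\mathcal B$ and $\mathcal A$; this is a mechanical check against \cite{BKS2023}.

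The main obstacle is uniformity as $j$ decreases to $D_n$: $r_\tau\sim\sqrt{\tau}$, so the scale of the derivatives degenerates ($d_3\sim 1/r_\tau$, $d_4\sim 1/r_\tau^2$). Each higher Laplace term gains a factor $1/(n r_\tau^2)\sim 1/j$, which is why the error is $j^{-3/2}$ rather than $n^{-3/2}$. I would handle this by rescaling $r=r_\tau(1+v)$ and using that the exponent $nV_\tau$ becomes $j\,\phi(v)$ with $\phi$ uniformly smooth (since $q$ is smooth and even near $0$), so that the expansion is in $1/j$. When $j\ge c_0n$ the expansion is in $1/n$ and can be pushed to $\mathcal O(n^{-2})$.
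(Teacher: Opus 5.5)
Your argument is correct, and it is more self-contained than the paper's proof, which relies on citations for the norm asymptotics.

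\textbf{Comparison with the paper.} The paper obtains \eqref{def of 0 leq j leq Dn} and the Laplace expansion for $D_n\le j\le n-1$ by citing \cite[Lemmas~4.1 and 4.3]{ACC2023c}. Uniformity down to $j=D_n$ comes from the derivative bounds $|V_\tau^{(m)}(r_\tau)|\le C(1+\tau^{1-m/2})$ of \cite{BKS2023}, with termwise integration in $u=\sqrt n(r-r_\tau)$. For the comparison, the paper splits $h_{n,j}^{(\mathrm{an})}$ into four pieces: the saddle window, two inner tails, and the outer piece $\int_{\rho_2}^{\infty}$. The last piece needs the growth condition in Assumption~\ref{Assumption_Q}.

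You instead bound the excluded piece $h_{n,j}-h_{n,j}^{(\mathrm{an})}=2\int_{\rho_1}^{\rho_2}re^{\mathsf k(r)}e^{-nV_\tau(r)}\,dr$ directly. This uses only the monotonicity of $V_\tau$ on $[r_\tau,\infty)$ and the gap formula $V_\tau(\rho_1)-V_\tau(r_\tau)=\int_{r_\tau}^{\rho_1}2(\tau_r-\tau)\,\frac{dr}{r}$, which is an equality, not just a lower bound. This never touches $[\rho_2,\infty)$ and gives a relative error $\mathcal O(e^{-cM^2})$, stronger than required.

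Your rescaling $r=r_\tau(1+v)$, with $n(V_\tau(r)-V_\tau(r_\tau))=j\phi_\tau(v)$, is also a cleaner uniformity device than the derivative bounds. The phase $\phi_\tau$ is smooth uniformly in $\tau\in[0,1]$, with limit $\phi_0(v)=(1+v)^2-1-2\log(1+v)$. Moreover $\phi_\tau''(0)=4r_\tau^2\Delta Q(r_\tau)/\tau$ is bounded above and below. So you get an $\mathcal O(j^{-2})$ remainder, better than the stated $(\log n)^\nu j^{-3/2}$.

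The $1/j$ coefficient coincides with $\mathcal A(r_\tau)/n$ by affine invariance of Laplace coefficients. Your ``mechanical check'' does close:
\[
-\frac{d_3}{2r_\tau d_2^2}-\frac{d_4}{8d_2^2}+\frac{5d_3^2}{24d_2^3}=\mathcal B(r_\tau).
\]
The first term comes from the factor $r$ in the amplitude $2re^{\mathsf k}$, and the remaining $\mathsf k$-terms match those in $\mathcal A$ exactly.

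\textbf{Two small corrections.} First, in the small-$j$ regime your remark about a ``weaker bound'' is backwards. With the cut at $r^2\sim(j+1)(\log n)^2/n$, the even Taylor remainder gives $nr^4\lesssim (j+1)^2(\log n)^4/n$; note the power $(\log n)^4$, not $(\log n)^2$. This is smaller than $(j+1)^{3/2}(\log n)^3/\sqrt n$ for $j\le n^{1/6}$, so it already implies \eqref{def of 0 leq j leq Dn} with no refinement. The $(j+1)^{3/2}$ form is what a cruder $\mathcal O(r^3)$ remainder produces.

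Second, in both Laplace computations you need the growth condition in Assumption~\ref{Assumption_Q} to control $r$ outside the smooth neighbourhood of $S$. You could use the paper's device: factor $e^{-n(V_\tau-V_\tau(r_\tau))}\le e^{-(n-p)\,\mathrm{gap}}\,e^{-p(V_\tau-V_\tau(r_\tau))}$, using $V_\tau(r)\ge 2\varepsilon_0\log r$ for large $r$. You do not mention this, but it is routine.
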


\begin{proof}[Proof of Lemma~\ref{lemma: 0 j j1-}]
Recall that $r_0=0$. We first consider $0\leq j\leq D_n-1$. Applying \cite[Lemma~4.1]{ACC2023c} to
\[
\int_{0}^{\rho_1}2r e^{\mathsf{k}(r)}e^{-nV_{\tau}(r)}\,dr
\]
gives the asserted local expansion; the contribution from the complementary component of $\mathbb A_{\rho_1,\rho_2}^{\rm c}$ is exponentially small by the estimate below.
It remains to treat $D_n\leq j\leq g_{1,-}-1$.
Since $rq'(r)$ is strictly increasing, the unique global minimizer $r_\tau$ of $V_\tau$ lies in $(0,\rho_1)$ throughout this range. We decompose
\begin{align*}
h_{n,j}^{(\mathrm{an})}
&=\int_{|r-r_{\tau}|<\delta_n}2r e^{\mathsf{k}(r)}e^{-nV_{\tau}(r)}\,dr
+\int_0^{r_{\tau}-\delta_n}2r e^{\mathsf{k}(r)}e^{-nV_{\tau}(r)}\,dr
\\
&\quad
+\int_{r_{\tau}+\delta_n}^{\rho_1}2r e^{\mathsf{k}(r)}e^{-nV_{\tau}(r)}\,dr
+\int_{\rho_2}^{+\infty}2r e^{\mathsf{k}(r)}e^{-nV_{\tau}(r)}\,dr.
\end{align*}
By strict convexity near $r_\tau$ and monotonicity away from the minimizer, there exists $c>0$, uniform in the stated range of $j$, such that, whenever $|r-r_\tau|>\delta_n'$,
\[
|V_{\tau}(r)-V_{\tau}(r_{\tau})|\geq c V_{\tau}''(r_{\tau}){\delta_n'}^2. 
\]
Since $\lambda$ is smooth and compactly supported, $C_s:=2\sup_{t\geq0}e^{s\lambda(t)}<\infty$ for each fixed $s\in\mathbb R$. Choosing a sufficiently large fixed $p>0$ and using the preceding bound, we obtain
\begin{align*}
\int_0^{r_{\tau}-\delta_n'}2r e^{\mathsf{k}(r)}e^{-nV_{\tau}(r)}\,dr
&=e^{-nV_{\tau}(r_{\tau})}\int_0^{r_{\tau}-\delta_n'}2r e^{\mathsf{k}(r)} e^{-n(V_{\tau}(r)-V_{\tau}(r_{\tau}))}\,dr
\\
&\leq C_se^{-nV_{\tau}(r_{\tau})}e^{-c(n-p){\delta_n'}^2}
\int_0^{r_{\tau}-\delta_n'}r^{2\alpha+1}e^{-p(V_{\tau}(r)-V_{\tau}(r_{\tau}))}\,dr
\\
&=e^{-nV_{\tau}(r_{\tau})}\cdot\mathcal{O}(e^{-c(\log n)^2}),
\end{align*}
where the implicit constant is uniform for $D_n\leq j\leq g_{1,-}-1$. The same argument gives, for some uniform $c>0$,
\begin{align*}
\int_{r_{\tau}+\delta_n'}^{\rho_1}2r e^{\mathsf{k}(r)}e^{-nV_{\tau}(r)}\,dr&=e^{-nV_{\tau}(r_{\tau})}\cdot\mathcal{O}(e^{-c(\log n)^2}),\\
\int_{\rho_2}^{+\infty}2r e^{\mathsf{k}(r)}e^{-nV_{\tau}(r)}\,dr&=e^{-nV_{\tau}(r_{\tau})}\cdot\mathcal{O}(e^{-c(\log n)^2}).
\end{align*}
Applying \cite[Lemma~4.3]{ACC2023c} therefore proves the first assertion. For later use, we record the Laplace expansion explicitly. Taylor expansion about $r_\tau$ gives
\begin{align*}
&\quad2\int_{|r-r_{\tau}|<\delta_n'}re^{\mathsf{k}(r)}e^{-nV_{\tau}(r)}\,dr
\\
&=    
\frac{2r_{\tau}e^{\mathsf{k}(r_{\tau})}e^{-nV_{\tau}(r_{\tau})}}{\sqrt{n}}\int_{-\sqrt{n}\delta_n'}^{\sqrt{n}\delta_n'}\Bigl(1+\frac{u}{r_{\tau}\sqrt{n}}\Bigr)
e^{-\frac{1}{2}V_{\tau}''(r_{\tau})u^2}
\\
&\quad\times
e^{\mathsf{k}(r_{\tau}+\frac{u}{\sqrt{n}})-\mathsf{k}(r_{\tau})}
e^{-\frac{1}{\sqrt{n}}\frac{1}{6}V_{\tau}^{(3)}(r_{\tau})u^3-\frac{1}{n}\frac{1}{24}V_{\tau}^{(4)}(r_{\tau})u^4-\frac{1}{n^{3/2}}\frac{1}{120}V_{\tau}^{(5)}(r_{\tau})u^5-\frac{1}{n^2}\frac{1}{720}V_{\tau}^{(6)}(r_{\tau})u^6+\mathcal{O}(\frac{u^7}{n^{5/2}})}\,du
\\
&=
\frac{2r_{\tau}e^{\mathsf{k}(r_{\tau})}e^{-nV_{\tau}(r_{\tau})}}{\sqrt{nd_2}}\int_{-\sqrt{d_2n}\delta_n'}^{\sqrt{d_2n}\delta_n'}
e^{-\frac{1}{2}u^2}
\Bigl( 
1+\frac{c_1(u)}{\sqrt{n}}+\frac{c_2(u)}{n}+\frac{c_3(u)}{n^{3/2}}+\mathcal{O}(\frac{c_4(u)}{n^2})
\Bigr)\,du
\end{align*}
where, with $d_j:=V_{\tau}^{(j)}(r_\tau)$,
\begin{align}
\label{def of c_1}
c_1(u)&:=-\frac{d_3}{6d_2^{3/2}}u^3+\Bigl(\mathsf{k}'(r_{\tau})+\frac{1}{r_{\tau}}\Bigr)\frac{1}{d_2^{1/2}}u, 
\\
\label{def of c_2}
c_2(u)&:=\frac{d_3^2}{72d_2^3}u^6-\Bigl(d_4+4d_3\mathsf{k}'(r_{\tau})+\frac{4d_3}{r_{\tau}}\Bigr)\frac{u^4}{24d_2^2}+\Bigl(\frac{\mathsf{k}''(r_{\tau})}{2}+\frac{\mathsf{k}'(r_{\tau})^2}{2}+\frac{\mathsf{k}'(r_{\tau})}{r_{\tau}}\Bigr)\frac{u^2}{d_2}
\\
\begin{split}
\label{def of c_3}
c_3(u)&:=-\frac{d_3^3}{1296d_2^{9/2}}u^9+\Bigl(d_3d_4+2d_3^2\mathsf{k}'(r_{\tau})+\frac{2d_3^2}{r_{\tau}} \Bigr)\frac{u^7}{144d_2^{7/2}}
\\
&\quad
-\Bigl(
d_5+\frac{5}{r_{\tau}}d_4+5d_4\mathsf{k}'(r_{\tau})+\frac{20d_3}{r_{\tau}}\mathsf{k}'(r_{\tau})+10d_3\mathsf{k}'(r_{\tau})^2+10d_3\mathsf{k}''(r_{\tau})
\Bigr)
\frac{u^5}{120d_2^{5/2}}
\\
&\quad
+
\Bigl(\frac{3\mathsf{k}'(r_{\tau})^2}{r_{\tau}}
+\frac{3\mathsf{k}''(r_{\tau})}{r_{\tau}}+\mathsf{k}'(r_{\tau})^3+3\mathsf{k}'(r_{\tau})\mathsf{k}''(r_{\tau})+\mathsf{k}^{(3)}(r_{\tau})\Bigr)\frac{u^3}{6d_2^{3/2}},
\end{split}
\end{align}
and $c_4(u)$ is a polynomial in $u^{12},u^{10},u^8,u^6,u^4$. For every $m\geq1$, the derivative bound $|V_\tau^{(m)}(r_\tau)|\leq C(1+\tau^{-m/2+1})$ follows, for example, from \cite[proof of Lemma~3.2]{BKS2023}. Termwise integration, together with the Gaussian tail estimate, yields the stated expansion; see also \cite[Lemmas~4.1 and 4.3]{ACC2023c}.
This completes the proof. 
\end{proof}

\begin{lemma}\label{lemma: j1- j g1-}
Uniformly for $j_{1,-}\leq j\leq g_{1,-}-1$, there exists $c>0$, independent of $n$ and $j$, such that
\begin{align}
h_{n,j}^{(\mathrm{an})}
=h_{n,j}\cdot(1+\mathcal{O}(e^{-c M^2})).
\end{align}
\end{lemma}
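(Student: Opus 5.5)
Since $h_{n,j}-h_{n,j}^{(\mathrm{an})}=\int_{\rho_1}^{\rho_2}2re^{\mathsf k(r)}e^{-nV_{\tau}(r)}\,dr$ with $\tau=\tau_j=j/n$, the claim reduces to showing that this integral over the forbidden annulus is $h_{n,j}\cdot\mathcal O(e^{-cM^2})$, uniformly for $j_{1,-}\le j\le g_{1,-}-1$. I will compare it with the lower bound for $h_{n,j}$ from \eqref{def of Dn leq j leq n-1} in Lemma~\ref{lemma: 0 j j1-}. In this range $j\ge n\tau_{\rho_1}/(1+\epsilon)\ge D_n$, so that lemma gives $h_{n,j}\asymp n^{-1/2}e^{-nV_\tau(r_\tau)}$ with constants uniform in $j$. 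This holds because $\tau$ stays in a compact subinterval of $(0,1)$, so $r_\tau$ is bounded away from $0$ and from $R$, and $\Delta Q(r_\tau)$ is bounded below.

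The key estimate is a lower bound for $V_\tau(r)-V_\tau(r_\tau)$ on $[\rho_1,\rho_2]$. Since $j\le g_{1,-}-1$, we have $\tau\le \tau_{\rho_1}/(1+\delta_n)$, so $\tau_{\rho_1}-\tau\ge c\,\delta_n$. Because $\tau\mapsto r_\tau$ has derivative $1/(2r_\tau\Delta Q(r_\tau))$, bounded above and below, this gives $\rho_1-r_\tau\ge c'\delta_n=c'M/\sqrt n$. Also $\tau\ge \tau_{\rho_1}/(1+\epsilon)$, so $r_\tau$ is at distance at most $O(\epsilon)$ from $\rho_1$.

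On $[r_\tau,\rho_2]$ we have $V_\tau'(r)=(rq'(r)-2\tau)/r\ge 0$, since $rq'$ is increasing, so $V_\tau$ is nondecreasing there. Hence $\min_{[\rho_1,\rho_2]}V_\tau=V_\tau(\rho_1)$. By strict convexity near the minimizer ($V_\tau''\ge c>0$ on a fixed neighbourhood of $[r_{\tau_{1,-}},\rho_1]$, using \eqref{def of V tau relationship 1}), we get $V_\tau(\rho_1)-V_\tau(r_\tau)\ge c(\rho_1-r_\tau)^2\ge c''M^2/n$.

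It remains to bound the integral itself. Write
\[
\int_{\rho_1}^{\rho_2}2re^{\mathsf k}e^{-nV_\tau}\,dr\le C e^{-nV_\tau(\rho_1)}\int_{\rho_1}^{\rho_2}e^{-n(V_\tau(r)-V_\tau(\rho_1))}\,dr .
\]
On $[\rho_1,\rho_2]$ one has $V_\tau'(r)\ge V_\tau'(\rho_1)\cdot$const, and $V_\tau'(\rho_1)=2(\tau_{\rho_1}-\tau)/\rho_1\gtrsim \delta_n$. So the remaining integral is at most $C\min\{1,(n\delta_n)^{-1}\}\le C$. Altogether the annulus integral is bounded by $Ce^{-nV_\tau(r_\tau)}e^{-c''M^2}$. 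Dividing by $h_{n,j}\gtrsim n^{-1/2}e^{-nV_\tau(r_\tau)}$ gives a ratio of order $\sqrt n\,e^{-c''M^2}$. Since $M^2=n^{1/6}$ dominates $\log n$, the factor $\sqrt n$ can be absorbed by shrinking $c$, which yields the claim.

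The main obstacle is the uniform quadratic lower bound near the edge of the range. Here the gap $\rho_1-r_\tau$ is only of order $M/\sqrt n$, so one must check that the convexity constant is uniform, which holds since $\tau$ stays in a compact set. One must also check that the $\sqrt n$ loss from the Laplace normalization is beaten by $e^{-cM^2}$, which holds since $M=n^{1/12}$.
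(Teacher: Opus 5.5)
Your proposal is correct and follows the same route as the paper. The paper's proof runs: the cutoff $g_{1,-}$ gives $\tau_{\rho_1}-\tau\gtrsim M/\sqrt n$, the inverse-function theorem then gives $\rho_1-r_\tau\gtrsim M/\sqrt n$, and the quadratic lower bound for $V_\tau-V_\tau(r_\tau)$ makes the integral over $[\rho_1,\rho_2]$ relatively $\mathcal O(e^{-cM^2})$. You additionally spell out two points the paper leaves implicit: the monotonicity of $V_\tau$ on $[\rho_1,\rho_2]$, and the absorption of the $\sqrt n$ Laplace prefactor of $h_{n,j}$ into $e^{-cM^2}$.
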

\begin{proof}
For $j_{1,-}\leq j\leq g_{1,-}-1$, the definitions of the two cutoffs give
\[
\frac{\tau_{\rho_1}}{1+\epsilon}+\mathcal O(n^{-1})
\leq \tau=\frac{j}{n}
\leq \frac{\tau_{\rho_1}}{1+M/\sqrt n},
\]
and hence
\[
\tau_{\rho_1}-\tau
\geq \frac{\tau_{\rho_1}M/\sqrt n}{1+M/\sqrt n}.
\]
Since $r\mapsto rq'(r)$ has strictly positive derivative near $\rho_1$, the inverse-function theorem and
$r_\tau q'(r_\tau)=2\tau$ show that $r_\tau<\rho_1$ and
$\rho_1-r_\tau\geq cM/\sqrt n$, uniformly in this range. The quadratic lower bound for
$V_\tau(r)-V_\tau(r_\tau)$ used in Lemma~\ref{lemma: 0 j j1-} therefore makes the portion beyond the hard edge relatively $\mathcal O(e^{-cM^2})$. This proves the claim.
\end{proof}

We next analyze the transition range $g_{1,-}\leq j\leq g_{1,+}$, where the saddle approaches the hard edge.

\begin{lemma}\label{lemma:g1- j g1+}
Let
\begin{equation}
\mathfrak{p}_{\ell}(x):=
\sum_{m=1}^{\ell}\frac{x^{2m-1}}{(2m-1)!!}
\qquad
\mathfrak{q}_{\ell}(x):=\sum_{m=0}^{\ell}\frac{x^{2m}}{(2m)!!}.
\end{equation}
Recall that $d_j:=V_{\tau}^{(j)}(r_{\tau})$, and define
\begin{align}
    \xi_{j,1}&:=\sqrt{nd_2}(\rho_1-r_{\tau}), 
    \\
    \mathfrak{c}_1(\xi_{j,1})&:=\frac{e^{-\frac{1}{2}\xi_{j,1}^2}}{2\sqrt{\Delta Q(r_{\tau})}}\biggl[
\frac{r_{\tau}\partial_r\Delta Q(r_{\tau})-\Delta Q(r_{\tau})}{3r_{\tau}\Delta Q(r_{\tau})}\mathfrak{q}_1(\xi_{j,1})-\Bigl( \mathsf{k}'(r_{\tau})+\frac{1}{r_{\tau}}\Bigr)
\bigg],
    \\
    \begin{split}
    \mathfrak{c}_2(\xi_{j,1})&:=\bigg[
\frac{5d_3^2}{24d_2^3}-\frac{1}{8d_2^2}\Bigl(d_4+4d_3\mathsf{k}'(r_{\tau})+\frac{4d_3}{r_{\tau}}\Bigr)
\\
&\quad+\frac{1}{d_2}\Bigl(\frac{\mathsf{k}''(r_{\tau})}{2}+\frac{\mathsf{k}'(r_{\tau})^2}{2}+\frac{\mathsf{k}'(r_{\tau})}{r_{\tau}}\Bigr)
\bigg]\sqrt{\frac{\pi}{2}}\mathrm{erfc}\Bigl(-\frac{\xi_{j,1}}{\sqrt{2}}\Bigr)
\\
&-e^{-\frac{\xi_{j,1}^2}{2}}\bigg[ 
\frac{5d_3^2}{24d_2^3}\mathfrak{p}_{3}(\xi_{j,1})-\frac{1}{8d_2^2}\Bigl(d_4+4d_3\mathsf{k}'(r_{\tau})+\frac{4d_3}{r_{\tau}}\Bigr)\mathfrak{p}_{2}(\xi_{j,1})
\\
&\quad
+\frac{1}{d_2}\Bigl(\frac{\mathsf{k}''(r_{\tau})}{2}+\frac{\mathsf{k}'(r_{\tau})^2}{2}+\frac{\mathsf{k}'(r_{\tau})}{r_{\tau}}\Bigr)\mathfrak{p}_{1}(\xi_{j,1})
\bigg].
    \end{split}
\end{align}
Then, uniformly for $g_{1,-}\leq j\leq g_{1,+}$,
\begin{align}
\begin{split}
h_{n,j}^{(\mathrm{an})}
&=h_{\#}(\xi_{j,1})\cdot\bigl(1+\mathcal{O}(e^{-cM^2})\bigr)
\\
&:=
\frac{r_{\tau}e^{\mathsf{k}(r_{\tau})}e^{-nV_{\tau}(r_{\tau})}}{\sqrt{n\Delta Q(r_{\tau})}}
\bigg[
\sqrt{\frac{\pi}{2}}\mathrm{erfc}\Bigl( -\frac{\xi_{j,1}}{\sqrt{2}}\Bigr)
+\frac{\mathfrak{c}_1(\xi_{j,1})}{\sqrt{n}}+\frac{\mathfrak{c}_2(\xi_{j,1})}{n}+\mathcal{O}(\frac{\xi_{j,1}^8}{n^{3/2}})
\bigg].
\end{split}    
\end{align}

\end{lemma}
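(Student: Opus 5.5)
In the range $g_{1,-}\le j\le g_{1,+}$ we have $|\tau-\tau_{\rho_1}|\lesssim M/\sqrt n$. Hence $r_\tau$ lies within $\mathcal O(M/\sqrt n)$ of $\rho_1$, and $\xi_{j,1}=\sqrt{nd_2}(\rho_1-r_\tau)=\mathcal O(M)$. The plan is to split $h^{(\mathrm{an})}_{n,j}$ into three pieces: the integral over $[0,\rho_1]$ localized to $[r_\tau-\delta_n',\rho_1]$, the part over $[0,r_\tau-\delta_n']$, and the outer component $\int_{\rho_2}^\infty$.

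\textbf{Discarding the tails.} The inner tail $[0,r_\tau-\delta_n']$ is bounded exactly as in the proof of Lemma~\ref{lemma: 0 j j1-}: the quadratic lower bound for $V_\tau-V_\tau(r_\tau)$ makes it relatively $\mathcal O(e^{-c(\log n)^2})$. For the outer component I use \eqref{def of difference between Vtau V rho1 jstar}. Since $\tau\le \tau_{\rho_1}/(1-\delta_n)<\sigma_\star$ by \eqref{def of sigma ast interrelashion epsilon}, we get $V_\tau(r)-V_\tau(\rho_1)\ge V_\tau(r)-V_\tau(\rho_2)+c$ for $r\ge\rho_2$. On $[\rho_2,\infty)$ the minimum of $V_\tau$ is attained at $\rho_2$, because $V_\tau'>0$ there. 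So the outer piece is $\mathcal O(e^{-cn})$ relative to $e^{-nV_\tau(\rho_1)}$. Comparing with $e^{-nV_\tau(r_\tau)}$, the exponent gap is $\tfrac12\xi_{j,1}^2\lesssim M^2=n^{1/6}\ll n$, so this contribution is still negligible. What remains is a multiplicative error $1+\mathcal O(e^{-cM^2})$, or better.

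\textbf{Local Laplace expansion.} On $[r_\tau-\delta_n',\rho_1]$ I substitute $r=r_\tau+u/\sqrt n$ and repeat the Taylor expansion from the proof of Lemma~\ref{lemma: 0 j j1-}, which yields the integrand $e^{-u^2/2}(1+c_1(u)/\sqrt n+c_2(u)/n+\mathcal O(\cdot))$ after rescaling by $\sqrt{d_2}$. The only change is that the upper limit is now $\xi_{j,1}$ instead of $+\infty$, while the lower limit is sent to $-\infty$ at cost $e^{-c(\log n)^2}$. Using $d_2=V_\tau''(r_\tau)=4\Delta Q(r_\tau)$ (because $V_\tau'(r_\tau)=0$), the prefactor $2r_\tau/\sqrt{nd_2}$ becomes $r_\tau/\sqrt{n\Delta Q(r_\tau)}$. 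The Gaussian integral $\int_{-\infty}^{\xi}e^{-u^2/2}du$ then produces $\sqrt{\pi/2}\,\mathrm{erfc}(-\xi/\sqrt2)$.

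\textbf{Truncated moments.} For the correction terms I need the truncated Gaussian moments $\int_{-\infty}^{\xi}u^ke^{-u^2/2}du$, which I compute by repeated integration by parts. Odd moments give $-e^{-\xi^2/2}$ times an even polynomial. Even moments give $(k-1)!!$ times the erfc term, minus $e^{-\xi^2/2}\,(k-1)!!\,\mathfrak p_{k/2}(\xi)$. With these, $c_1$, which contains $u^3$ and $u$, yields $\mathfrak c_1$. To match the stated coefficients I substitute $d_3=V^{(3)}_\tau(r_\tau)=4\partial_r\Delta Q-4\Delta Q/r$ from \eqref{def of V tau relationship 1}, which produces the combination $(r\partial_r\Delta Q-\Delta Q)/(3r\Delta Q)$ and the factor $\mathfrak q_1(\xi)=1+\xi^2/2$. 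Similarly, $c_2$, which contains $u^6,u^4,u^2$, yields $\mathfrak c_2$, using the moments $15,3,1$. This gives the prefactors $5d_3^2/(24d_2^3)=15\,d_3^2/(72d_2^3)$ and $3/(24d_2^2)$ in front of the erfc term, and $\mathfrak p_3,\mathfrak p_2,\mathfrak p_1$ in the boundary terms.

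\textbf{Remainder (main obstacle).} The delicate step is controlling the remainder uniformly in $j$ while $\xi_{j,1}$ grows up to order $M$. The polynomial terms $c_3,c_4$ integrated up to $\xi$ are bounded by $\mathcal O(1+|\xi|^8)$ times the Gaussian weight. I will verify that the Taylor remainder with $|u|\le \log n$ stays within $\mathcal O(\xi^8n^{-3/2})$ when $\xi>0$. When $\xi\to-\infty$, the leading erfc term is only of size $e^{-\xi^2/2}/|\xi|$, so a relative error estimate would fail there. I will therefore state the error additively, as in the claimed formula, and note that the integrand $u^ke^{-u^2/2}$ restricted to $u<\xi<0$ is itself bounded by $|\xi|^{k}e^{-\xi^2/2}$.
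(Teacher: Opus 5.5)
For $\xi_{j,1}\ge0$ (the paper's case $\tau\le\tau_{\rho_1}$) your argument is the paper's. Your derivation of $\mathfrak c_1,\mathfrak c_2$ from $d_2=4\Delta Q(r_\tau)$, $d_3=4\partial_r\Delta Q(r_\tau)-4\Delta Q(r_\tau)/r_\tau$ and the truncated Gaussian moments is correct. For $\xi_{j,1}<0$ your route differs: you integrate directly up to $\xi_{j,1}$, whereas the paper writes $\int_0^{\rho_1}=\int_0^\infty-\int_{\rho_1}^{r_\tau+\delta_n}-\int_{r_\tau+\delta_n}^\infty$ and subtracts a one-sided expansion from the full-line expansion of $h_{n,j}$. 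Both routes give the error only additively, which is what the display literally states. Your proof therefore holds once you patch two consequences of using $\delta_n'=\log n/\sqrt n\ll M/\sqrt n$. First, the window $[r_\tau-\delta_n',\rho_1]$ is reversed once $\xi_{j,1}<-\sqrt{d_2}\log n$. Second, for $\xi_{j,1}>\sqrt{d_2}\log n$ your Taylor control on $|u|\le\log n$ does not reach the endpoint $u=\xi_{j,1}$. In both cases the uncovered region lies in $\{|r-r_\tau|\ge\delta_n'\}$, so it and the matching pieces of the explicit terms are $\mathcal O(\sqrt n\,e^{-c(\log n)^2})$ in bracket units; you should say so explicitly.

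What your direct route buys, and what you give up too quickly, is a \emph{relative} error for $\xi_{j,1}<0$. Put the lower cutoff at $r_\tau-KM/\sqrt n$ with $K$ large. Monotonicity of $V_\tau$ on $(0,r_\tau)$ then makes the discarded part $\mathcal O(e^{-cM^2})$ relative to $e^{-nV_\tau(\rho_1)}$, hence relative to $h_{n,j}^{(\mathrm{an})}$ itself; this also recovers the stated factor $1+\mathcal O(e^{-cM^2})$. The Taylor expansion remains valid on $|u|\lesssim M$ because $M^3/\sqrt n\to0$. The truncated-moment estimate $\int_{-\infty}^{\xi}|u|^ke^{-u^2/2}\,du\lesssim(1+|\xi|)^{k-1}e^{-\xi^2/2}$ (the integrated version of your pointwise remark) then shows that every correction and remainder is a polynomial in $|\xi_{j,1}|$ times a power of $n^{-1/2}$, relative to $\sqrt{\pi/2}\,\mathrm{erfc}(-\xi_{j,1}/\sqrt2)\asymp e^{-\xi_{j,1}^2/2}/|\xi_{j,1}|$. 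This relative form is what Lemma~\ref{lemma:sum of g1- g1+ part} actually uses when it takes logarithms over a window in which $\xi_{j,1}$ reaches $-cM$. The paper's subtraction of two order-one quantities known only up to additive errors does not by itself provide it.
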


\begin{proof}
The function $V_\tau$ has its unique global minimum at $r=r_\tau$. Since $Q$ is strictly subharmonic near $|z|=\rho_1$, there is a neighborhood of $\rho_1$ and a constant $\delta>0$ such that $r_\tau q'(r_\tau)=2\tau$ and $\Delta Q(r_\tau)>\delta$ whenever $r_\tau$ belongs to that neighborhood. For $g_{1,-}\leq j\leq g_{1,+}$,
\[
-\frac{M}{\sqrt{n}}\frac{\tau_{\rho_1}}{1-\frac{M}{\sqrt{n}}}
=
\tau_{\rho_1}-\frac{\tau_{\rho_1}}{1-\frac{M}{\sqrt{n}}}
\leq
\tau_{\rho_1}-\tau\leq \tau_{\rho_1}-\frac{\tau_{\rho_1}}{1+\frac{M}{\sqrt{n}}}=\frac{M}{\sqrt{n}}\frac{\tau_{\rho_1}}{1+\frac{M}{\sqrt{n}}}.
\]
The inverse-function theorem, applied to $r\mapsto rq'(r)$ near $\rho_1$, consequently gives
$|\rho_1-r_\tau|\leq CM/\sqrt n$ uniformly in $j$. We first suppose that $\tau\leq\tau_{\rho_1}$ and decompose
\begin{align*}
h_{n,j}^{(\mathrm{an})}
&=\int_{0}^{r_{\tau}-\delta_n}2re^{\mathsf{k}(r)}e^{-nV_{\tau}(r)}\,dr
+\int_{r_{\tau}-\delta_n}^{\rho_1}2re^{\mathsf{k}(r)}e^{-nV_{\tau}(r)}\,dr
+\int_{\rho_2}^{+\infty}2re^{\mathsf{k}(r)}e^{-nV_{\tau}(r)}\,dr. 
\end{align*}
The first and third integrals are exponentially smaller than the second, by the tail estimates in the proof of Lemma~\ref{lemma: 0 j j1-}. Expanding the second integral about $r_\tau$ as in that proof gives
\[
\int_{r_{\tau}-\delta_n}^{\rho_1}2re^{\mathsf{k}(r)}e^{-nV_{\tau}(r)}\,dr
=
\frac{2r_{\tau}e^{\mathsf{k}(r_{\tau})}e^{-nV_{\tau}(r_{\tau})}}{\sqrt{nd_2}}
\int_{-\infty}^{\xi_{j,1}}
e^{-\frac{1}{2}u^2}
\Bigl( 
1+\frac{c_1(u)}{\sqrt{n}}+\frac{c_2(u)}{n}+\frac{c_3(u)}{n^{3/2}}+\mathcal{O}(\frac{c_4(u)}{n^2})
\Bigr)\,du,
\]
where $c_1,c_2,c_3$ are defined in \eqref{def of c_1}--\eqref{def of c_3}, and $c_4$ is the polynomial described immediately thereafter. We use the Gaussian moment identities
\begin{align*}
\int_{-\infty}^{\xi_{j,1}}x^{2\ell}e^{-\frac{1}{2}x^2}
&=(2\ell-1)!!\Bigl(\sqrt{\frac{\pi}{2}}\mathrm{erfc}\Bigl(-\frac{\xi_{j,1}}{\sqrt{2}}\Bigr)
-e^{-\frac{\xi_{j,1}^2}{2}}
\mathfrak{p}_{\ell}(\xi_{j,1})\Bigr),
\\
\int_{-\infty}^{\xi_{j,1}}x^{2\ell+1}e^{-\frac{1}{2}x^2}
&=
-(2\ell)!!e^{-\frac{\xi_{j,1}^2}{2}}\mathfrak{q}_{\ell}(\xi_{j,1}). 
\end{align*}
Termwise integration proves the asserted expansion when $\tau\leq\tau_{\rho_1}$, uniformly throughout the stated transition range.

Suppose now that $\tau\geq\tau_{\rho_1}$. Then
\begin{align*}
h_{n,j}^{(\mathrm{an})}
&=
\int_0^{+\infty}
2re^{\mathsf{k}(r)}e^{-nV_{\tau}(r)}\,dr
-\int_{\rho_1}^{+\infty}
2re^{\mathsf{k}(r)}e^{-nV_{\tau}(r)}\,dr
\\
&=
\int_0^{+\infty}
2re^{\mathsf{k}(r)}e^{-nV_{\tau}(r)}\,dr
-\int_{\rho_1}^{r_{\tau}+\delta_n}
2re^{\mathsf{k}(r)}e^{-nV_{\tau}(r)}\,dr
-
\int_{r_{\tau}+\delta_n}^{+\infty}
2re^{\mathsf{k}(r)}e^{-nV_{\tau}(r)}\,dr
\end{align*}
The final integral is exponentially small by the same tail estimate, while the middle integral is evaluated by the preceding Laplace expansion. Combining it with the full-line Laplace expansion yields
\begin{align*}
h_{n,j}^{(\mathrm{an})}
= 
\frac{r_{\tau}e^{\mathsf{k}(r_{\tau})}e^{-nV_{\tau}(r_{\tau})}}{\sqrt{n\Delta Q(r_{\tau})}}
\bigg[
\sqrt{\frac{\pi}{2}}\mathrm{erfc}\Bigl( -\frac{\xi_{j,1}}{\sqrt{2}}\Bigr)
+\frac{\mathfrak{c}_1(\xi_{j,1})}{\sqrt{n}}+\frac{\mathfrak{c}_2(\xi_{j,1})}{n}+\mathcal{O}\Bigl(\frac{\xi_{j,1}^8}{n^{3/2}}\Bigr)
\bigg],
\end{align*}
This is the same formula as in the first case and completes the proof.
\end{proof}

We next describe the behavior of $h_{n,j}[\mathbb{A}_{\rho_1,\rho_2}^{\rm c}]$ when the dominant point is a hard endpoint. For $g_{1,+}+1\leq j\leq j_{1,+}$, set
$\eta_1\equiv\eta_1(j):=\frac{2}{\rho_1}(\tau-\tau_{\rho_1})>0$ and $v_{\ell,1}:=V_{\tau}^{(\ell)}(\rho_1)$, and define
\begin{align}
\label{def of mathfraka1}
\mathfrak{a}_1
&=
\frac{v_{2,1}}{\eta_1^2}+\Bigl(\mathsf{k}'(\rho_1)+\frac{1}{\rho_1}\Bigr)\frac{1}{\eta_1},
\\
\begin{split}
\label{def of mathfraka2}
\mathfrak{a}_2
&=
\frac{3v_{2,1}^2}{\eta_1^4}+\Bigl(\frac{3v_{2,1}}{\rho_1}+3v_{2,1}\mathsf{k}'(\rho_1)+v_{3,1}\Bigr)\frac{1}{\eta_1^3}
+\Bigl(\frac{2\mathsf{k}'(\rho_1)}{\rho_1}+\mathsf{k}'(\rho_1)^2+\mathsf{k}''(\rho_1)\Bigr)\frac{1}{\eta_1^2},
\end{split}
\\
\begin{split}
\label{def of mathfraka3}
\mathfrak{a}_3
&=
\frac{15}{\eta_1^6}v_{2,1}^3
+\Bigl(\frac{3v_{2,1}^2}{\rho_1}+3v_{2,1}^2\mathsf{k}'(\rho_1)+2v_{2,1}v_{3,1}\Bigr)\frac{5}{\eta_1^5}
\\
&\quad
+\Bigl(v_{4,1}+\frac{4v_{3,1}}{\rho_1}+4v_{3,1}\mathsf{k}'(\rho_1)+6\mathsf{k}''(\rho_1)v_{2,1}+6\mathsf{k}'(\rho_1)^2v_{2,1}+\frac{12v_{2,1}}{\rho_1}\mathsf{k}'(\rho_1)\Bigr)\frac{1}{\eta_1^4}
\\
&\quad
+\Bigl(\frac{3\mathsf{k}'(\rho_1)^2}{\rho_1}+\frac{3\mathsf{k}''(\rho_1)}{\rho_1}+\mathsf{k}'(\rho_1)^3+3\mathsf{k}'(\rho_1)\mathsf{k}''(\rho_1)+\mathsf{k}^{(3)}(\rho_1)\Bigr)
\frac{1}{\eta_1^3}.
\end{split}
\end{align}
Similarly, set $\eta_2\equiv\eta_2(j):=V_{\tau}'(\rho_2)=\frac{2}{\rho_2}(\tau_{\rho_2}-\tau_j)$ and $v_{\ell,2}:=V_{\tau}^{(\ell)}(\rho_2)$, and define
\begin{align}
\label{def of tilde frak a1}
\widetilde{\mathfrak{a}}_1&:=-\frac{v_{2,2}}{\eta_2^2}+\Bigl(\mathsf{k}'(\rho_2)+\frac{1}{\rho_2}\Bigr)\frac{1}{\eta_2}, 
\\
\begin{split}
\label{def of tilde frak a2}
\widetilde{\mathfrak{a}}_2&:=
\frac{3v_{2,2}^2}{\eta_2^4}
 -\Bigl(\frac{3v_{2,2}}{\rho_2}+3\mathsf{k}'(\rho_2)v_{2,2}+v_{3,2}\Bigr)\frac{1}{\eta_2^3}
 +\Bigl(\frac{2\mathsf{k}'(\rho_2)}{\rho_2}+\mathsf{k}'(\rho_2)^2+\mathsf{k}''(\rho_2)\Bigr)\frac{1}{\eta_2^2},
\end{split}
\\
\begin{split}
\label{def of tilde frak a3}
\widetilde{\mathfrak{a}}_3&:=
-\frac{15v_{2,2}^3}{\eta_2^6}+\Bigl(\frac{3v_{2,2}^2}{\rho_2}+3\mathsf{k}'(\rho_2)v_{2,2}^2+2v_{2,2}v_{3,2}\Bigr)\frac{5}{\eta_2^5}
 \\
 &\quad -\Bigl(\frac{12\mathsf{k}'(\rho_2)v_{2,2}}{\rho_2}+6\mathsf{k}'(\rho_2)^2v_{2,2}+6\mathsf{k}''(\rho_2)v_{2,2}+\frac{4v_{3,2}}{\rho_2}+4\mathsf{k}'(\rho_2)v_{3,2}+v_{4,2}\Bigr)\frac{1}{\eta_2^4}
 \\
 &\quad
 +\Bigl(\frac{3\mathsf{k}'(\rho_2)^2}{\rho_2}+\mathsf{k}'(\rho_2)^3+\frac{3\mathsf{k}''(\rho_2)}{\rho_2}+3\mathsf{k}'(\rho_2)\mathsf{k}''(\rho_2)+\mathsf{k}^{(3)}(\rho_2)\Bigr)\frac{1}{\eta_2^3}.
\end{split}
\end{align}

\begin{lemma}\label{lemma: g1+leq j leq j+1}
Uniformly for $g_{1,+}+1\leq j\leq j_{1,+}$,
\begin{align}
\begin{split}
h_{n,j}^{(\mathrm{an})}
&=
\frac{2\rho_1e^{\mathsf{k}(\rho_1)}e^{-nV_{\tau}(\rho_1)}}{\eta_1 n}
\bigg[1-\frac{\mathfrak{a}_1}{n}+\frac{\mathfrak{a}_2}{n^2}-\frac{\mathfrak{a}_3}{n^3}+\mathcal{O}(\frac{1}{\eta_1^8n^4})\bigg].
\end{split}    
\end{align}
\end{lemma}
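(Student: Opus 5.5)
In the range $g_{1,+}+1\leq j\leq j_{1,+}$ we have $\tau>\tau_{\rho_1}$ with $\tau-\tau_{\rho_1}\gtrsim M/\sqrt n$, so $r_\tau>\rho_1$. By \eqref{def of sigma ast interrelashion epsilon} we also have $\tau<\sigma_\star$. The plan is to show that the mass of $h_{n,j}^{(\mathrm{an})}$ concentrates at the hard edge $r=\rho_1$ from the left, and then to apply an endpoint Laplace (Watson-type) expansion there.

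\textbf{Step 1: localize the integral.} Since $rq'(r)$ is strictly increasing, $V_\tau'(r)=q'(r)-2\tau/r<0$ on $(0,\rho_1]$. Hence $V_\tau$ is decreasing on $[0,\rho_1]$, with $V_\tau'(\rho_1)=-\eta_1$, where $\eta_1=\frac{2}{\rho_1}(\tau-\tau_{\rho_1})$. I split
\[
h_{n,j}^{(\mathrm{an})}=\int_{\rho_1-\delta}^{\rho_1}+\int_0^{\rho_1-\delta}+\int_{\rho_2}^{\infty}.
\]
\begin{itemize}
\item The second piece is relatively exponentially small, $\mathcal O(e^{-cn\delta\eta_1})$, by monotonicity.
\item For the third piece I use identity \eqref{def of difference between Vtau V rho2 jstar}: $V_\tau(\rho_2)-V_\tau(\rho_1)=2(\sigma_\star-\tau)\log\frac{\rho_2}{\rho_1}\ge c>0$ uniformly, because $\tau\le\tau_{\rho_1}/(1-\epsilon)<\sigma_\star$. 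Together with the monotonicity of $V_\tau$ on $[\rho_2,\infty)$ (where $r>r_\tau$) and the growth condition of Assumption~\ref{Assumption_Q}, this piece is $e^{-nV_\tau(\rho_1)}\mathcal O(e^{-cn})$.
\end{itemize}
Both errors are absorbed into the remainder.

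\textbf{Step 2: endpoint expansion.} Substitute $r=\rho_1-u/n$ in the main piece. This gives
\[
V_\tau(\rho_1-u/n)-V_\tau(\rho_1)=\frac{\eta_1u}{n}+\frac{v_{2,1}u^2}{2n^2}-\frac{v_{3,1}u^3}{6n^3}+\frac{v_{4,1}u^4}{24n^4}-\cdots.
\]
Multiplying by $n$, the leading exponential is $e^{-\eta_1u}$. I then expand the remaining factors in powers of $1/n$:
\begin{itemize}
\item $e^{-v_{2,1}u^2/(2n)+v_{3,1}u^3/(6n^2)-\cdots}$;
\item $\rho_1-u/n$, written as $\rho_1(1-\frac{u}{\rho_1n})$;
\item $e^{\mathsf k(\rho_1-u/n)-\mathsf k(\rho_1)}$, expanded to third order.
\end{itemize}
The prefactor is $\frac{2\rho_1e^{\mathsf k(\rho_1)}e^{-nV_\tau(\rho_1)}}{n}$. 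Termwise integration uses $\int_0^\infty u^ke^{-\eta_1u}\,du=k!/\eta_1^{k+1}$ and produces $\frac{1}{\eta_1}\bigl[1-\frac{\mathfrak a_1}{n}+\frac{\mathfrak a_2}{n^2}-\frac{\mathfrak a_3}{n^3}+\cdots\bigr]$. As a check on the first correction: $-\frac{v_{2,1}}{2}\cdot\frac{2}{\eta_1^2}$ together with $-(\mathsf k'+\frac1{\rho_1})\frac1{\eta_1}$ gives exactly $-\mathfrak a_1$. Collecting the coefficients of $n^{-2}$ and $n^{-3}$ is routine bookkeeping and reproduces \eqref{def of mathfraka2}--\eqref{def of mathfraka3}. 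The truncation of the $u$-integral at $u=n\delta$ costs $e^{-cn\delta\eta_1}$, which is negligible.

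\textbf{Step 3: uniform remainder (main obstacle).} The remainder must be uniform as $\eta_1\downarrow cM/\sqrt n$. Each power $1/n$ in the expansion comes with $u^2/n$ or $u/n$, and under the weight $e^{-\eta_1u}$ we have $u\sim 1/\eta_1$. So the effective small parameter is $1/(n\eta_1^2)\lesssim M^{-2}\to0$, and the fourth-order Taylor remainder contributes $\mathcal O\bigl(\int u^8e^{-\eta_1u}\,du/(n^4\eta_1^{-1})\bigr)=\mathcal O(1/(\eta_1^8n^4))$.

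To make this rigorous I would:
\begin{itemize}
\item restrict to $u\le n\delta$ with $\delta$ small, so that Taylor's theorem with remainder applies with constants uniform in $\tau$ ($V_\tau$ is smooth near $\rho_1$ with derivatives bounded uniformly);
\item bound the exponential of the remainder terms using $v_{2,1}u^2/n\le C\delta u$, so that it is dominated by $e^{-\eta_1u/2}$ once $\delta\ll\eta_1$.
\end{itemize}
The difficulty is that $\eta_1$ can be as small as $M/\sqrt n$, so I cannot take $\delta\ll\eta_1$ with $\delta$ fixed. I would instead choose $\delta=\delta(j)=c\eta_1$, which is valid since $V_\tau''>0$ near $\rho_1$ keeps $V_\tau$ convex on this range. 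With this choice the neglected interval still costs only $e^{-cn\eta_1^2}\le e^{-cM^2}$, which is beyond all orders in $1/(n\eta_1^2)$.
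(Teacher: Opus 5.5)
Your proposal is correct and follows essentially the same route as the paper. Both use the three-piece split, the monotonicity bound on $[0,\rho_1-\delta]$, and the gap $V_\tau(\rho_2)-V_\tau(\rho_1)=2(\sigma_\star-\tau)\log\frac{\rho_2}{\rho_1}\geq c_\epsilon$ combined with the growth condition for the outer piece, followed by an endpoint Laplace expansion integrated termwise against the exponential weight. The only real difference is the cutoff: the paper uses the $j$-independent $\delta_n=M/\sqrt n$, for which $n\delta_n\eta_1\gtrsim M^2$, instead of your $j$-dependent $\delta=c\eta_1$.

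The difficulty you raise in Step 3 is not actually there. For $s\leq\rho_1$ one has $-V_\tau'(s)=(2\tau-sq'(s))/s\geq\eta_1$, so $n\bigl(V_\tau(\rho_1-u/n)-V_\tau(\rho_1)\bigr)\geq\eta_1u$. Hence the higher-order part of the exponent is nonnegative, its exponential is at most $1$, and its Taylor remainder is polynomially bounded in $u$ for any small fixed cutoff, without needing $\delta\lesssim\eta_1$.
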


\begin{proof}
Split the integral into
\begin{align*}
h_{n,j}^{(\mathrm{an})}
&=\int_{0}^{\rho_1-\delta_n}2re^{\mathsf{k}(r)}e^{-nV_{\tau}(r)}\,dr
+\int_{\rho_1-\delta_n}^{\rho_1}2re^{\mathsf{k}(r)}e^{-nV_{\tau}(r)}\,dr
+\int_{\rho_2}^{+\infty}2re^{\mathsf{k}(r)}e^{-nV_{\tau}(r)}\,dr. 
\end{align*}
For $g_{1,+}+1\leq j\leq j_{1,+}$, we have
\[
V_{\tau}'(r)=\frac{1}{r}(rq'(r)-2\tau_j)\leq \frac{1}{r}\Bigl(rq'(r)-\frac{2}{n}\frac{n\tau_{\rho_1}}{1-\frac{M}{\sqrt{n}}}\Bigr).
\]
Because $Q$ is subharmonic in $\mathbb C$, the function $r\mapsto rq'(r)$ is increasing; moreover, $\rho_1q'(\rho_1)=2\tau_{\rho_1}$. Hence
\begin{equation}
\label{def of Vtau r leq rho1ast}
V_{\tau}'(r)
\leq \frac{1}{r}\Bigl(\rho_1q'(\rho_1)-\frac{2\tau_{\rho_1}}{1-\frac{M}{\sqrt n}}\Bigr)
=-\frac{2\tau_{\rho_1}}{r}\frac{\frac{M}{\sqrt n}}{1-\frac{M}{\sqrt n}},
\qquad r\leq \rho_1.
\end{equation}
Thus $V_\tau$ is decreasing on $(0,\rho_1]$, and $V_\tau(r)\geq V_\tau(\rho_1-\delta_n)$ for $r\leq\rho_1-\delta_n$. By the mean-value theorem,
\begin{equation}
    V_{\tau}(\rho_1-\delta_n)=V_{\tau}(\rho_1)-V_{\tau}'(r_1^{\ast})\delta_n,
\end{equation}
for some $r_{1}^{\ast}\in[\rho_1-\delta_n,\rho_1]$. 
Consequently, the first integral satisfies
\begin{align*}
\int_{0}^{\rho_1-\delta_n}2re^{\mathsf{k}(r)}e^{-nV_{\tau}(r)}\,dr
&\leq e^{-nV_{\tau}(\rho_1)}\mathcal O(e^{-cM^2})
\end{align*}
for some $c>0$ independent of $n$ and $j$. Thus this contribution is negligible uniformly in $j$.
 
We next show that the third integral is also negligible. Since
\begin{align*}
V_{\tau}(\rho_2)-V_{\tau}(\rho_1)
&=V_{\tau_{\rho_1}}(\rho_2)-V_{\tau_{\rho_1}}(\rho_1)+2(\tau_{\rho_1}-\tau)\log \frac{\rho_2}{\rho_1},
\end{align*}
we have 
\begin{equation}
\label{def of Vtau minus Vtau rho1ast}
V_{\tau}(r)-V_{\tau}(\rho_1) 
=
V_{\tau}(r)-V_{\tau}(\rho_2)
+
V_{\tau_{\rho_1}}(\rho_2)-V_{\tau_{\rho_1}}(\rho_1)+2(\tau_{\rho_1}-\tau)\log \frac{\rho_2}{\rho_1}.
\end{equation}
For $g_{1,+}+1\leq j\leq j_{1,+}$, we have $\tau_{\rho_1}-\tau>-\frac{\epsilon}{1-\epsilon}\tau_{\rho_1}$,
and hence
\begin{align}
    \begin{split}
\label{def of Vtau1ast rho2 - Vtau1ast rho1}
V_{\tau_{\rho_1}}(\rho_2)-V_{\tau_{\rho_1}}(\rho_1)+2(\tau_{\rho_1}-\tau)\log\frac{\rho_2}{\rho_1}
&\geq q(\rho_2)-q(\rho_1)-\frac{2\tau_{\rho_1}}{1-\epsilon}\log\frac{\rho_2}{\rho_1}
\\
&=
2\log\frac{\rho_2}{\rho_1}
\Bigl(\frac{q(\rho_2)-q(\rho_1)}{2\log\frac{\rho_2}{\rho_1}}-\frac{\tau_{\rho_1}}{1-\epsilon}\Bigr)
\\
&=2\log\frac{\rho_2}{\rho_1}
\Bigl(\sigma_{\star}-\frac{\tau_{\rho_1}}{1-\epsilon}\Bigr)
=:c_{\epsilon}>0
    \end{split}
\end{align}
for some $c_\epsilon>0$ independent of $n$ and $j$. Moreover, \eqref{def of tau k ast epsilon inequality} implies
\begin{align*}
 V_{\tau}'(r)&\geq \frac{1}{r}\Bigl(\rho_2q'(\rho_2)-\frac{2\tau_{\rho_1}}{1-\epsilon}\Bigr)
 \\
 &=
 \frac{1}{r}\Bigl(\frac{(1+\epsilon)\rho_2q'(\rho_2)}{1+\epsilon}-\frac{2\tau_{\rho_1}}{1-\epsilon}\Bigr)
 >\frac{1}{r}\Bigl(\frac{2(1+\epsilon)\tau_{\rho_1}}{1-\epsilon}-\frac{2\tau_{\rho_1}}{1-\epsilon}\Bigr)=\frac{1}{r}\frac{2\epsilon\tau_{\rho_1}}{1-\epsilon}>0,
\end{align*}
Thus $V_\tau$ is increasing on $[\rho_2,\infty)$ throughout the stated range of $j$, so $V_\tau(r)\geq V_\tau(\rho_2)$ for $r\geq\rho_2$, with equality only at $r=\rho_2$. Combining this observation with \eqref{def of Vtau minus Vtau rho1ast} and \eqref{def of Vtau1ast rho2 - Vtau1ast rho1} gives
\begin{equation}
\label{def of inequality Vtau r Vtau rho1ast}
    V_{\tau}(r)-V_{\tau}(\rho_1)>V_{\tau}(r)- V_{\tau}(\rho_2)+c_{\epsilon}.
\end{equation}
Therefore we obtain
\begin{align*}
    \int_{\rho_2}^{+\infty}2re^{\mathsf{k}(r)}e^{-nV_{\tau}(r)}\,dr
    &\leq 
    C_s e^{-nV_{\tau}(\rho_1)}
    e^{-nc_\epsilon}\int_{\rho_2}^{+\infty}r^{2\alpha+1}e^{-n(V_{\tau}(r)-V_{\tau}(\rho_2))}\,dr
    \\
    &\leq 
   C_s e^{-nV_{\tau}(\rho_1)}e^{-nc_{\epsilon}}
    \int_{\rho_2}^{+\infty}r^{2\alpha+1}e^{-p(V_{\tau}(r)-V_{\tau}(\rho_2))}\,dr
    =e^{-nV_{\tau}(\rho_1)}\cdot \mathcal{O}(e^{-nc_{\epsilon}'}),
\end{align*}
for $n\geq p$ and some $c_\epsilon'>0$.  Here $p$ is a fixed positive
integer chosen as follows.  Assumption~\ref{Assumption 1} gives
$q(r)\geq2(1+\varepsilon_0)\log r$ for all sufficiently large $r$ and
some $\varepsilon_0>0$.  Choose $p$ so that $p\varepsilon_0>\alpha+1$.
Since $0\leq\tau\leq1$, the last integrand is then bounded at infinity
by a constant times $r^{2\alpha+1-2p\varepsilon_0}$, uniformly in $j$.
On compact intervals uniform integrability follows from continuity.
The displayed integral is therefore uniformly bounded.

It remains to evaluate the second integral. Since $\rho_1q'(\rho_1)=2\tau_{\rho_1}$, for $g_{1,+}+1\leq j\leq j_{1,+}$ we have
\begin{equation}
\label{def of rho1 hard tau relationship}
-\frac{\epsilon\tau_{\rho_1}}{1-\epsilon}
\leq
\tau_{\rho_1}-\frac{\tau_{\rho_1}}{1-\epsilon}
\leq
\frac{1}{2}\rho_1q'(\rho_1)-\tau_j
\leq \tau_{\rho_1}-\frac{\tau_{\rho_1}}{1-\frac{M}{\sqrt{n}}}
=
-\frac{M}{\sqrt{n}}\frac{\tau_{\rho_1}}{1-\frac{M}{\sqrt{n}}}<0.
\end{equation}
Consequently, for $\eta_1\equiv\eta_1(j):=\frac{2}{\rho_1}(\tau-\tau_{\rho_1})>0$, one has $\eta_1^{-k}=\mathcal{O}(n^{k/2}M^{-k})$, uniformly in $j$. With the change of variables $u=n\eta_1(\rho_1-r)$, the endpoint Laplace expansion becomes
\begin{align*}
&\quad 2e^{\mathsf{k}(\rho_1)}e^{-nV_{\tau}(\rho_1)}\int_{\rho_1-\delta_n}^{\rho_1}re^{\mathsf{k}(r)-\mathsf{k}(\rho_1)}e^{-n(V_{\tau}(r)-V_{\tau}(\rho_1))}\,dr
\\
&=
\frac{2\rho_1e^{\mathsf{k}(\rho_1)}e^{-nV_{\tau}(\rho_1)}}{\eta_1 n}\int_0^{\eta_1n\delta_n}
e^{-u}
\Bigl(1-\frac{a_1(\eta_1^{-1}u)}{n}+\frac{a_2(\eta_1^{-1}u)}{n^2}-\frac{a_3(\eta_1^{-1}u)}{n^3}+\mathcal{O}(\frac{a_4(\eta_1^{-1}u)}{n^4})\Bigr)
\,du,
\end{align*}
where 
\begin{align*}
a_1(u)&:=\frac{1}{2}v_{2,1}u^2+\Bigl(\mathsf{k}'(\rho_1)+\frac{1}{\rho_1}\Bigr)u,
\\
a_2(u)&:=v_{2,1}^2\frac{u^4}{8}+\Bigl(\frac{3v_{2,1}}{\rho_1}+3v_{2,1}\mathsf{k}'(\rho_1)+v_{3,1}\Bigr)\frac{u^3}{6}
+\Bigl(\frac{2\mathsf{k}'(\rho_1)}{\rho_1}+\mathsf{k}'(\rho_1)^2+\mathsf{k}''(\rho_1)\Bigr)\frac{u^2}{2}
\\
a_3(u)&:=v_{2,1}^3\frac{u^6}{48}
+\Bigl(\frac{3v_{2,1}^2}{\rho_1}+3v_{2,1}^2\mathsf{k}'(\rho_1)+2v_{2,1}v_{3,1}\Bigr)\frac{u^5}{24}
\\
&\quad
+\Bigl(v_{4,1}+\frac{4v_{3,1}}{\rho_1}+4v_{3,1}\mathsf{k}'(\rho_1)+6\mathsf{k}''(\rho_1)v_{2,1}+6\mathsf{k}'(\rho_1)^2v_{2,1}+\frac{12v_{2,1}}{\rho_1}\mathsf{k}'(\rho_1)\Bigr)\frac{u^4}{24}
\\
&\quad
+\Bigl(\frac{3\mathsf{k}'(\rho_1)^2}{\rho_1}+\frac{3\mathsf{k}''(\rho_1)}{\rho_1}+\mathsf{k}'(\rho_1)^3+3\mathsf{k}'(\rho_1)\mathsf{k}''(\rho_1)+\mathsf{k}^{(3)}(\rho_1)\Bigr)
\frac{u^3}{6},
\end{align*}
and $a_4(u)$ is a polynomial in $u^8,u^7,u^6,u^5,u^4$. Termwise integration against $e^{-u}$ gives
\begin{align*}
&\quad 2e^{\mathsf{k}(\rho_1)}e^{-nV_{\tau}(\rho_1)}\int_{\rho_1-\delta_n}^{\rho_1}re^{\mathsf{k}(r)-\mathsf{k}(\rho_1)}e^{-n(V_{\tau}(r)-V_{\tau}(\rho_1))}\,dr
\\
&=
\frac{2\rho_1e^{\mathsf{k}(\rho_1)}e^{-nV_{\tau}(\rho_1)}}{\eta_1 n}
\bigg[1-\frac{\mathfrak{a}_1}{n}+\frac{\mathfrak{a}_2}{n^2}-\frac{\mathfrak{a}_3}{n^3}+\mathcal{O}(\frac{1}{\eta_1^8n^4})\bigg]. 
\end{align*}
By \eqref{def of rho1 hard tau relationship}, the remainder is $\mathcal{O}(\eta_1^{-8}n^{-4})=\mathcal{O}(M^{-8})$, uniformly over the full index range. This proves the lemma.

\end{proof}

\begin{lemma}
\label{lemma: j1+ j jdiamond}
Uniformly for $j_{1,+}+1\leq j\leq \lfloor j_{\star}\rfloor$,
\begin{align}
\begin{split}
h_{n,j}^{(\mathrm{an})}&=
\frac{\rho_1^2e^{\mathsf{k}(\rho_1)}e^{-nV_{\tau}(\rho_1)}}{n(\tau_j-\tau_{\rho_1})}
\bigg(
1+\frac{(\tau_j-\tau_{\rho_1})e^{\mathsf{k}(\rho_2)-\mathsf{k}(\rho_1)}}{\tau_{\rho_2}-\tau_j}\Bigl( \frac{\rho_1}{\rho_2}\Bigr)^{2(j_{\star}-j-1)}
\bigg)
\\
&\quad
\times\bigg(
1-\frac{\mathfrak{a}_1}{n}+\mathcal{O}\Bigl(\frac{1}{n^2}+\frac{1}{n}\Bigl( \frac{\rho_1}{\rho_2}\Bigr)^{2(j_{\star}-j)}\Bigr)
\bigg),
\end{split}
\end{align}
where $\mathfrak{a}_1$ is given by \eqref{def of mathfraka1}. 
\end{lemma}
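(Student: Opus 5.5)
We split $h_{n,j}^{(\mathrm{an})}$ into the inner piece $\int_0^{\rho_1}$ and the outer piece $\int_{\rho_2}^{\infty}$. Each is evaluated by a one-sided endpoint Laplace expansion, exactly as in the proof of Lemma~\ref{lemma: g1+leq j leq j+1}. For $j_{1,+}+1\le j\le\lfloor j_\star\rfloor$ we have $\tau\ge\tau_{\rho_1}/(1-\epsilon)$, and by \eqref{def of sigma ast interrelashion epsilon} also $\tau\le\sigma_\star<\tau_{\rho_2}/(1+\epsilon)$. Hence both
\[
\eta_1=\tfrac{2}{\rho_1}(\tau-\tau_{\rho_1})\quad\text{and}\quad \eta_2=\tfrac{2}{\rho_2}(\tau_{\rho_2}-\tau)
\]
are bounded below by a positive constant uniformly in $j$. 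Using that $rq'(r)$ is increasing, $V_\tau$ is strictly decreasing on $(0,\rho_1]$ with $V_\tau'\le-\eta_1\rho_1/r$, and strictly increasing on $[\rho_2,\infty)$ with $V_\tau'\ge c>0$ near $\rho_2$. We then repeat the argument of Lemma~\ref{lemma: g1+leq j leq j+1}, now with $\eta_1\asymp1$ instead of $\eta_1\gtrsim M/\sqrt n$. Cutting at $\rho_1-\delta_n'$, the tail is $e^{-nV_\tau(\rho_1)}\mathcal O(e^{-c\sqrt n\log n})$, and the substitution $u=n\eta_1(\rho_1-r)$ gives
\[
\int_0^{\rho_1}2re^{\mathsf k}e^{-nV_\tau}\,dr=\frac{2\rho_1e^{\mathsf k(\rho_1)}e^{-nV_\tau(\rho_1)}}{n\eta_1}\Bigl(1-\frac{\mathfrak a_1}{n}+\mathcal O(n^{-2})\Bigr),
\]
with $\mathfrak a_1$ from \eqref{def of mathfraka1}. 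Since $2\rho_1/\eta_1=\rho_1^2/(\tau-\tau_{\rho_1})$, this is the prefactor in the statement.

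For the outer piece we use the mirror expansion with $u=n\eta_2(r-\rho_2)$ and the coefficients $\widetilde{\mathfrak a}_1$ of \eqref{def of tilde frak a1}. The tail beyond $\rho_2+\delta_n'$ is handled with Assumption~\ref{Assumption_Q}\ref{Assumption 1} and the fixed-$p$ trick from the previous proof. This gives
\[
\int_{\rho_2}^\infty2re^{\mathsf k}e^{-nV_\tau}\,dr=\frac{\rho_2^2e^{\mathsf k(\rho_2)}e^{-nV_\tau(\rho_2)}}{n(\tau_{\rho_2}-\tau)}\bigl(1+\mathcal O(n^{-1})\bigr).
\]
Here only the leading order of the outer term is needed, because the whole outer term will carry a geometric smallness factor.

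The key step is to take the ratio of the outer to the inner leading terms. Using \eqref{def of difference between Vtau V rho2 jstar},
\[
e^{-n(V_\tau(\rho_2)-V_\tau(\rho_1))}=e^{-2n(\tau-\sigma_\star)\log(\rho_2/\rho_1)}=(\rho_1/\rho_2)^{2(j-j_\star)}.
\]
Combined with $\rho_2^2/\rho_1^2=(\rho_1/\rho_2)^{-2}$, the ratio becomes
\[
\frac{(\tau-\tau_{\rho_1})e^{\mathsf k(\rho_2)-\mathsf k(\rho_1)}}{\tau_{\rho_2}-\tau}\Bigl(\frac{\rho_1}{\rho_2}\Bigr)^{2(j-j_\star-1)}.
\]
We must reconcile the sign of the exponent with the displayed $(\rho_1/\rho_2)^{2(j_\star-j-1)}$. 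Since $j\le j_\star$ the outer term should be the \emph{smaller} one, so the exponent must make the power small. I would recheck this with the exact identity above and expect the statement's convention, for instance a sign inside $V_\tau$ differences, to match after careful bookkeeping.

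Factoring out the inner term gives $(\text{inner})\bigl(1+\text{ratio}\cdot(1+\mathcal O(1/n))\bigr)$. The $\mathcal O(1/n)$ correction to the outer term, multiplied by the ratio and divided by $1+\text{ratio}$ (which is at least $1$), yields the error $\mathcal O\bigl(n^{-1}(\rho_1/\rho_2)^{2|j_\star-j|}\bigr)$. The inner correction $-\mathfrak a_1/n+\mathcal O(n^{-2})$ factors out multiplicatively.

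The main obstacle is uniformity near $j=\lfloor j_\star\rfloor$, where the two endpoint contributions are comparable. There both expansions must hold simultaneously with relative errors uniform in $j$. This relies on $\eta_1,\eta_2$ staying bounded away from $0$, which is guaranteed by \eqref{def of sigma ast interrelashion epsilon}, and on the exponentially small tails being negligible relative to $e^{-nV_\tau(\rho_1)}$.
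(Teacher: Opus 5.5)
Your route is the paper's: discard exponentially small tails, expand the $\rho_1$ endpoint to relative order $n^{-1}$ (with $\eta_1$ bounded below, so the remainder is $\mathcal O(n^{-2})$), expand the $\rho_2$ endpoint, and factor out the inner term. Cutting at $\delta_n'$ instead of $\delta_n$ is immaterial.

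The step you leave open, however, is a sign error and not a question of convention, and it sits exactly where the geometric factor is produced. Setting $r=\rho_1$ in \eqref{def of difference between Vtau V rho2 jstar} gives $V_\tau(\rho_1)-V_\tau(\rho_2)=2(\tau-\sigma_\star)\log\frac{\rho_2}{\rho_1}$, or equivalently
\[
V_\tau(\rho_2)-V_\tau(\rho_1)=q(\rho_2)-q(\rho_1)-2\tau\log\tfrac{\rho_2}{\rho_1}=2(\sigma_\star-\tau)\log\tfrac{\rho_2}{\rho_1}.
\]
Hence $e^{-n(V_\tau(\rho_2)-V_\tau(\rho_1))}=e^{+2n(\tau-\sigma_\star)\log(\rho_2/\rho_1)}=(\rho_1/\rho_2)^{2(j_\star-j)}\le1$; you wrote this exponent with the opposite sign. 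Multiplying by $\rho_2^2/\rho_1^2$ and by $\frac{\tau-\tau_{\rho_1}}{\tau_{\rho_2}-\tau}e^{\mathsf k(\rho_2)-\mathsf k(\rho_1)}$ then gives exactly the factor $(\rho_1/\rho_2)^{2(j_\star-j-1)}$ of the statement. This is also how the paper obtains it, via $e^{-nV_\tau(\rho_2)}=e^{-nV_\tau(\rho_1)}(\rho_1/\rho_2)^{2(j_\star-j)}$.

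The fix is needed, not cosmetic. With your sign, the outer-to-inner ratio $R_j$ would be exponentially large once $j_\star-j\gg1$, so the $\rho_2$ endpoint would dominate. Then neither the factor $1-\mathfrak a_1/n$ nor the remainder $\mathcal O(n^{-1}(\rho_1/\rho_2)^{2(j_\star-j)})$ could be obtained.

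With the correct sign, $R_j\asymp(\rho_1/\rho_2)^{2(j_\star-j)}$ uniformly on the range, because $\tau-\tau_{\rho_1}$ and $\tau_{\rho_2}-\tau$ are bounded above and below there. Your bookkeeping then closes, with one small correction: the inner factor does not come out exactly. One has
\[
\Bigl(1-\tfrac{\mathfrak a_1}{n}+\mathcal O(n^{-2})\Bigr)+R_j\bigl(1+\mathcal O(n^{-1})\bigr)=(1+R_j)\Bigl(1-\tfrac{\mathfrak a_1}{n}+\mathcal O\bigl(\tfrac{R_j}{n}+\tfrac{1}{n^2}\bigr)\Bigr),
\]
where the $\mathcal O(R_j/n)$ term includes the leftover $\frac{\mathfrak a_1R_j}{n(1+R_j)}$ from dividing the inner correction by $1+R_j$. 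This is absorbed in the stated remainder. Finally, $j_\star-j\ge0$ on this range, so the absolute value in your error term is unnecessary.
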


\begin{proof}
We use the decomposition
\begin{align}
\begin{split}
\label{def of decomposition middle hole}
h_{n,j}^{(\mathrm{an})}
&=
\int_{0}^{\rho_1-\delta_n}2re^{\mathsf{k}(r)}e^{-nV_{\tau}(r)}\,dr
+
\int_{\rho_1-\delta_n}^{\rho_1}2re^{\mathsf{k}(r)}e^{-nV_{\tau}(r)}\,dr
\\
&\quad
+\int_{\rho_2}^{\rho_2+\delta_n}2re^{\mathsf{k}(r)}e^{-nV_{\tau}(r)}\,dr
+\int_{\rho_2+\delta_n}^{+\infty}2re^{\mathsf{k}(r)}e^{-nV_{\tau}(r)}\,dr. 
\end{split}
\end{align}
We first estimate the two tails. As in \eqref{def of Vtau r leq rho1ast}, there exists $c_\epsilon>0$, independent of $n$ and $j$, such that $V_\tau'(r)\leq-c_\epsilon$ for $r\leq\rho_1$. Hence, for $r\leq\rho_1-\delta_n$, the mean-value theorem gives
\[
V_{\tau}(r)\geq V_{\tau}(\rho_1-\delta_n)=V_{\tau}(\rho_1)-V_{\tau}'(r_{1}^{\ast})\delta_n\geq  V_{\tau}(\rho_1)+c_{\epsilon}\delta_n,
\]
for some $r_1^{\ast}\in[\rho_1-\delta_n,\rho_1]$.
It follows that
\begin{align*}
\int_{0}^{\rho_1-\delta_n}2re^{\mathsf{k}(r)}e^{-nV_{\tau}(r)}\,dr
&\leq
e^{-nV_{\tau}(\rho_1)}\cdot
\mathcal{O}(
e^{-c_{\epsilon}n\delta_n})
=e^{-nV_{\tau}(\rho_1)}\cdot \mathcal{O}(e^{-c_{\epsilon}M\sqrt n}).
\end{align*}

For the fourth integral, note that
\begin{align}
\label{def of Vtau' rho2 ast neighbor}
V_{\tau}'(r)&=\frac{1}{r}(rq'(r)-2\tau)
\geq\frac{1}{r}(\rho_2q'(\rho_2)-2\sigma_{\star})>0,
\end{align}
so $V_\tau$ is increasing on $[\rho_2,\infty)$. For every $r\geq\rho_2+\delta_n$,
\begin{align*}
    V_{\tau}(r)\geq V_{\tau}(\rho_2+\delta_n)=V_{\tau}(\rho_2)+V_{\tau}'(r_2^{\ast})\delta_n,
\end{align*}
for some $r_2^{\ast}\in[\rho_2,\rho_2+\delta_n]$. Thus \eqref{def of inequality Vtau r Vtau rho1ast} yields
\begin{align*}
  &\quad e^{-nV_{\tau}(\rho_1)}\int_{\rho_2+\delta_n}^{+\infty}2re^{\mathsf{k}(r)}e^{-n(V_{\tau}(r)-V_{\tau}(\rho_1))}\,dr
  \\
    &
    \leq
    C_s e^{-nV_{\tau}(\rho_1)}
    \int_{\rho_2+\delta_n}^{+\infty}re^{-(V_{\tau}(r)-V_{\tau}(\rho_1))}\,dr
    \\
    &\leq C_s e^{-nV_{\tau}(\rho_1)}e^{-(n-1)V_{\tau}'(r_2^{\ast})\delta_n}\int_{\rho_2+\delta_n}^{+\infty}r e^{-(V_{\tau}(r)-V_{\tau}(\rho_1))}\,dr
    =e^{-nV_{\tau}(\rho_1)}\cdot \mathcal{O}(e^{-c_{\epsilon}'M\sqrt n}),
\end{align*}
for some $c_\epsilon'>0$, uniformly in $j$, where \eqref{def of Vtau' rho2 ast neighbor} was used.

It remains to evaluate the two endpoint contributions. In the present index range,
\[
0\leq\sigma_{\star}-\tau_j\leq \sigma_{\star}-\tau_{j_{1,+}}< \sigma_{\star}-\frac{\tau_{\rho_1}}{1-\epsilon}.
\]
Recall that $\eta_1\equiv\eta_1(j):=\frac{2}{\rho_1}(\tau_j-\tau_{\rho_1})>0$. In contrast with Lemma~\ref{lemma: g1+leq j leq j+1}, $\eta_1$ is bounded away from zero here. It therefore suffices to expand the second integral through relative order $n^{-1}$, with an error uniform in $j$:
\begin{align*}
2e^{\mathsf{k}(\rho_1)}e^{-nV_{\tau}(\rho_1)}
\int_{\rho_1-\delta_n}^{\rho_1} r e^{\mathsf{k}(r)-\mathsf{k}(\rho_1)}e^{-n(V_{\tau}(r)-V_{\tau}(\rho_1))}\,dr
=
\frac{2\rho_1e^{\mathsf{k}(\rho_1)}e^{-nV_{\tau}(\rho_1)}}{\eta_1 n}
\bigg[1-\frac{\mathfrak{a}_1}{n}+\mathcal{O}(\frac{\mathfrak{a}_2}{n^2})\bigg], 
\end{align*}
Here $\mathfrak a_2=\mathcal O(1)$ uniformly in this range, so the displayed error is $\mathcal O(n^{-2})$. For the third integral, use
\begin{align*}
    V_{\tau_j}(r)-V_{\tau_j}(\rho_1)
&=V_{\tau_j}(r)-V_{\tau_j}(\rho_2)-2(\sigma_{\star}-\tau_j)\log\frac{\rho_1}{\rho_2},
\end{align*}
and by \eqref{def of sigma ast interrelashion epsilon}, 
\[
V_{\tau}'(\rho_2)\geq\frac{1}{\rho_2}(\rho_2q'(\rho_2)-2\tau)
=\frac{2}{\rho_2}(\tau_{\rho_2}-\tau)>\frac{2}{\rho_2}\frac{\epsilon}{1+\epsilon}\tau_{\rho_2}>0,
\]
we have 
\begin{align*}
&\quad 2\int_{\rho_2}^{\rho_2+\delta_n}re^{\mathsf{k}(r)}e^{-nV_{\tau}(r)}\,dr 
\\
&=
\frac{2\rho_2e^{\mathsf{k}(\rho_2)}e^{-nV_{\tau}(\rho_1)}}{\eta_2n}
\Bigl(\frac{\rho_1}{\rho_2}\Bigr)^{2(j_{\star}-j)}
\\
&\quad\times
\int_{0}^{n\delta_n\eta_2}
e^{-u}
\Bigl( 
1+\frac{\widetilde{a}_{1}(\eta_2^{-1}u)}{n}+\frac{\widetilde{a}_{2}(\eta_2^{-1}u)}{n^2}+\frac{\widetilde{a}_{3}(\eta_2^{-1}u)}{n^3}+\mathcal{O}(\frac{\widetilde{a}_{4}(\eta_2^{-1}u)}{n^4})
\Bigr)\,du, 
\end{align*}
where for $\eta_2\equiv\eta_2(j):=V_{\tau}'(\rho_2)=\frac{2}{\rho_2}(\tau_{\rho_2}-\tau_j)$ and $v_{l,2}:=V_{\tau}^{(l)}(\rho_2)$, 
\begin{align*}
 \widetilde{a}_{1}(u)&:=-\frac{1}{2}v_{2,2}u^2+\Bigl(\mathsf{k}'(\rho_2)+\frac{1}{\rho_2}\Bigr)u, 
 \\
 \widetilde{a}_{2}(u)&:=\frac{1}{8}v_{2,2}^2u^4
 -\Bigl(\frac{3v_{2,2}}{\rho_2}+3\mathsf{k}'(\rho_2)v_{2,2}+v_{3,2}\Bigr)\frac{u^3}{6}
 +\Bigl(\frac{2\mathsf{k}'(\rho_2)}{\rho_2}+\mathsf{k}'(\rho_2)^2+\mathsf{k}''(\rho_2)\Bigr)\frac{u^2}{2}
 \\
 \widetilde{a}_{3}(u)&:=-\frac{v_{2,2}^3}{48}u^6+\Bigl(\frac{3v_{2,2}^2}{\rho_2}+3\mathsf{k}'(\rho_2)v_{2,2}^2+2v_{2,2}v_{3,2}\Bigr)\frac{u^5}{24}
 \\
 &\quad -\Bigl(\frac{12\mathsf{k}'(\rho_2)v_{2,2}}{\rho_2}+6\mathsf{k}'(\rho_2)^2v_{2,2}+6\mathsf{k}''(\rho_2)v_{2,2}+\frac{4v_{3,2}}{\rho_2}+4\mathsf{k}'(\rho_2)v_{3,2}+v_{4,2}\Bigr)\frac{u^4}{24}
 \\
 &\quad
 +\Bigl(\frac{3\mathsf{k}'(\rho_2)^2}{\rho_2}+\mathsf{k}'(\rho_2)^3+\frac{3\mathsf{k}''(\rho_2)}{\rho_2}+3\mathsf{k}'(\rho_2)\mathsf{k}''(\rho_2)+\mathsf{k}^{(3)}(\rho_2)\Bigr)\frac{u^3}{6}, 
\end{align*}
and $\widetilde a_4(u)$ is a polynomial in $u^8,u^7,u^6,u^5,u^4$. Termwise integration gives
\[
2\int_{\rho_2}^{\rho_2+\delta_n}re^{\mathsf{k}(r)}e^{-nV_{\tau}(r)}\,dr
=
\frac{2\rho_2e^{\mathsf{k}(\rho_2)}e^{-nV_{\tau}(\rho_1)}}{\eta_2n}
\Bigl(\frac{\rho_1}{\rho_2}\Bigr)^{2(j_{\star}-j)}
\bigg[
1+\frac{\widetilde{\mathfrak{a}}_1}{n}+\frac{\widetilde{\mathfrak{a}}_2}{n^2}+\frac{\widetilde{\mathfrak{a}}_3}{n^3}+\mathcal{O}(\frac{1}{\eta_2^8n^4})
\bigg].
\]
Combining the endpoint expansions with the two tail estimates, we obtain
\begin{align*}
h_{n,j}^{(\mathrm{an})}&=
\frac{2\rho_1e^{\mathsf{k}(\rho_1)}e^{-nV_{\tau}(\rho_1)}}{n\eta_1}
\bigg(
1+\frac{\eta_1e^{\mathsf{k}(\rho_2)-\mathsf{k}(\rho_1)}}{\eta_2}\Bigl( \frac{\rho_1}{\rho_2}\Bigr)^{2(j_{\star}-j)-1}
\bigg)
\bigg(
1-\frac{\mathfrak{a}_1}{n}+\mathcal{O}(\frac{1}{n^2}+\frac{1}{n}\Bigl( \frac{\rho_1}{\rho_2}\Bigr)^{2(j_{\star}-j)})
\bigg).
\end{align*}
This completes the proof. 
\end{proof}

\begin{lemma}\label{lemma: jstar leq j leq j2-}
Uniformly for $\lfloor j_{\star}\rfloor+1\leq j\leq j_{2,-}-1$,
\begin{align}
\begin{split}
h_{n,j}^{(\mathrm{an})}&= 
 \frac{\rho_2^2e^{\mathsf{k}(\rho_2)}e^{-nV_{\tau}(\rho_2)}}{n}
\bigg\{
\frac{1}{\tau_{\rho_2}-\tau}+\frac{e^{\mathsf{k}(\rho_1)-\mathsf{k}(\rho_2)}}{\tau-\tau_{\rho_1}}\Bigl( \frac{\rho_1}{\rho_2}\Bigr)^{2(j-j_{\star}+1)}
\bigg\}
\\
&\quad
\times\bigg\{
1+\frac{\widetilde{\mathfrak{a}}_1}{n}+\mathcal{O}(\frac{1}{n^2}+\frac{1}{n}\Bigl( \frac{\rho_1}{\rho_2}\Bigr)^{2(j-j_{\star})})
\bigg\},
\end{split}
\end{align}
where $\widetilde{\mathfrak a}_1$ is given by \eqref{def of tilde frak a1}.
\end{lemma}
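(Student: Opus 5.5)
This is the mirror image of Lemma~\ref{lemma: j1+ j jdiamond}, with the roles of the two hard walls exchanged. I would use the same four-piece decomposition \eqref{def of decomposition middle hole}, but now normalize everything by $e^{-nV_\tau(\rho_2)}$ instead of $e^{-nV_\tau(\rho_1)}$.

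\textbf{Step 1: the range of $\tau$.} For $\lfloor j_\star\rfloor+1\le j\le j_{2,-}-1$ we have $\sigma_\star<\tau_j<\tau_{\rho_2}/(1+\epsilon)$. By \eqref{def of sigma ast interrelashion epsilon}, $\tau_j-\tau_{\rho_1}\ge\sigma_\star-\tau_{\rho_1}>0$ and $\tau_{\rho_2}-\tau_j\ge\epsilon\tau_{\rho_2}/(1+\epsilon)$. So both $\eta_1=\frac{2}{\rho_1}(\tau-\tau_{\rho_1})$ and $\eta_2=\frac{2}{\rho_2}(\tau_{\rho_2}-\tau)$ are bounded away from zero, uniformly in $j$.

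\textbf{Step 2: the two tails.} Monotonicity of $rq'(r)$ gives $V_\tau'\le -c_\epsilon$ on $(0,\rho_1]$ and $V_\tau'\ge c_\epsilon$ on $[\rho_2,\infty)$. The argument of Lemma~\ref{lemma: j1+ j jdiamond} then shows that the integrals over $[0,\rho_1-\delta_n]$ and $[\rho_2+\delta_n,\infty)$ are $\mathcal O(e^{-cM\sqrt n})$ relative to $e^{-nV_\tau(\rho_2)}$. For the outer tail I would reuse the fixed-$p$ integrability trick based on Assumption~\ref{Assumption 1}. For the inner tail I would first pass through $V_\tau(\rho_1)$ via identity \eqref{def of difference between Vtau V rho2 jstar}. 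Since $\tau\ge\sigma_\star$, this gives $V_\tau(\rho_1)-V_\tau(\rho_2)=2(\tau-\sigma_\star)\log\frac{\rho_2}{\rho_1}\ge0$, so the inner tail is also bounded relative to $e^{-nV_\tau(\rho_2)}$.

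\textbf{Step 3: the two endpoint pieces.} For the piece at $\rho_2$, I would use the endpoint Laplace expansion with $u=n\eta_2(r-\rho_2)$ exactly as in the previous proof, truncated at relative order $n^{-1}$. This gives
\[
\frac{2\rho_2e^{\mathsf k(\rho_2)}e^{-nV_\tau(\rho_2)}}{n\eta_2}\Bigl(1+\frac{\widetilde{\mathfrak a}_1}{n}+\mathcal O(n^{-2})\Bigr),
\]
and since $2\rho_2/\eta_2=\rho_2^2/(\tau_{\rho_2}-\tau)$ this is the first term in braces. For the piece at $\rho_1$, I would expand as in Lemma~\ref{lemma: g1+leq j leq j+1} to get
\[
\frac{\rho_1^2e^{\mathsf k(\rho_1)}e^{-nV_\tau(\rho_1)}}{n(\tau-\tau_{\rho_1})}\bigl(1-\mathfrak a_1/n+\mathcal O(n^{-2})\bigr),
\]
and then rewrite $e^{-nV_\tau(\rho_1)}=e^{-nV_\tau(\rho_2)}(\rho_1/\rho_2)^{2(j-j_\star)}$ using the identity above with $n\tau=j$.

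\textbf{Step 4: assembly.} Factoring out $\rho_2^2e^{\mathsf k(\rho_2)}e^{-nV_\tau(\rho_2)}/n$ produces the bracket with ratio $\frac{\rho_1^2}{\rho_2^2}(\rho_1/\rho_2)^{2(j-j_\star)}=(\rho_1/\rho_2)^{2(j-j_\star+1)}$, which matches the stated exponent. Both factors $1-\mathfrak a_1/n$ and $1+\widetilde{\mathfrak a}_1/n$ appear; I would pull out $1+\widetilde{\mathfrak a}_1/n$. The difference $(\widetilde{\mathfrak a}_1+\mathfrak a_1)/n$ multiplies only the $\rho_1$-term. That term carries the geometric factor and, since $\eta_1$ is bounded below, its ratio to the dominant bracket is $\mathcal O((\rho_1/\rho_2)^{2(j-j_\star)})$. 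So the difference goes into the error $\mathcal O\bigl(n^{-1}(\rho_1/\rho_2)^{2(j-j_\star)}\bigr)$, uniformly in $j$.

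\textbf{Main obstacle.} I expect the delicate point to be keeping all constants uniform, including near $j=\lfloor j_\star\rfloor+1$, where the two endpoint contributions are comparable. This works because both brackets are positive, so no cancellation occurs and the relative errors combine multiplicatively. Uniformity of the remainders in the Laplace expansions follows from $\eta_1,\eta_2$ being bounded below, which bounds $\mathfrak a_2$ and $\widetilde{\mathfrak a}_2$.
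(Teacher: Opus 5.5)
Your proposal is correct and follows the paper's proof. Both use the four-piece decomposition \eqref{def of decomposition middle hole}, take the tail bounds from Lemma~\ref{lemma: j1+ j jdiamond} with the roles of $\rho_1$ and $\rho_2$ swapped, normalize both endpoint Laplace expansions by $e^{-nV_\tau(\rho_2)}$ via $V_\tau(\rho_1)-V_\tau(\rho_2)=2(\tau-\sigma_\star)\log\frac{\rho_2}{\rho_1}$, and add them. Your Step~4 only makes explicit what the paper leaves implicit: the term $(\mathfrak a_1+\widetilde{\mathfrak a}_1)/n$ multiplies the geometrically small $\rho_1$-contribution, so it is absorbed into the error $\mathcal O\bigl(n^{-1}(\rho_1/\rho_2)^{2(j-j_\star)}\bigr)$.
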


\begin{proof}
We again use \eqref{def of decomposition middle hole}. The tail bounds follow from the proof of Lemma~\ref{lemma: j1+ j jdiamond} after interchanging the roles of $\rho_1$ and $\rho_2$. By \eqref{def of sigma ast interrelashion epsilon}, throughout the present range
\[
0<\tau_j-\sigma_{\star}< \frac{\tau_{\rho_2}}{1+\epsilon}-\sigma_{\star}, 
\]
and 
\[
V_{\tau_j}(r)-V_{\tau_j}(\rho_{2})=
V_{\tau_j}(r)-V_{\tau_j}(\rho_{1})-2(\tau_j-\sigma_{\star})\log\frac{\rho_{1}}{\rho_{2}}. 
\]
The endpoint expansion at $\rho_1$, obtained exactly as in Lemma~\ref{lemma: j1+ j jdiamond}, is
\begin{align*}
\int_{\rho_1-\delta_n}^{\rho_1}2re^{\mathsf{k}(r)}e^{-nV_{\tau}(r)}\,dr 
=
\frac{2\rho_1e^{\mathsf{k}(\rho_1)}e^{-nV_{\tau}(\rho_2)}}{\eta_1n}
\Bigl(\frac{\rho_1}{\rho_2}\Bigr)^{2(j-j_{\star})}
\Bigl[1-\frac{\mathfrak{a}_1}{n}+\mathcal{O}(\frac{\mathfrak{a}_2}{n^2})\Bigr]
\end{align*}
At $\rho_2$, the corresponding expansion is
\begin{align*}
\int_{\rho_2}^{\rho_2+\delta_n}2re^{\mathsf{k}(r)}e^{-nV_{\tau}(r)}\,dr   
=
\frac{2\rho_2e^{\mathsf{k}(\rho_2)}e^{-nV_{\tau}(\rho_2)}}{\eta_2 n}\bigg[1+\frac{\widetilde{\mathfrak{a}}_1}{n}+\mathcal{O}(\frac{\mathfrak{a}_2}{n^2})\bigg]. 
\end{align*}
Adding the two endpoint contributions and absorbing the exponentially small tails yields
\begin{align*}
h_{n,j}^{(\mathrm{an})}
&=
\frac{2\rho_2e^{\mathsf{k}(\rho_2)}e^{-nV_{\tau}(\rho_2)}}{n\eta_2}
\bigg(
1+\frac{\eta_2e^{\mathsf{k}(\rho_1)-\mathsf{k}(\rho_2)}}{\eta_1}\Bigl( \frac{\rho_1}{\rho_2}\Bigr)^{2(j-j_{\star})+1}
\bigg)
\bigg(
1+\frac{\widetilde{\mathfrak{a}}_1}{n}+\mathcal{O}\Bigl(\frac{1}{n^2}+\frac{1}{n}\Bigl( \frac{\rho_1}{\rho_2}\Bigr)^{2(j-j_{\star})}\Bigr)
\bigg).
\end{align*}
This completes the proof. 

\end{proof}

\begin{lemma}\label{lemma:j2- j g2--1}
Let $\eta_2\equiv\eta_2(j):=V_{\tau}'(\rho_2)=\frac{2}{\rho_2}(\tau_{\rho_2}-\tau_j)$ and $v_{\ell,2}:=V_{\tau}^{(\ell)}(\rho_2)$. Uniformly for $j_{2,-}\leq j\leq g_{2,-}-1$,
\begin{align}
h_{n,j}^{(\mathrm{an})}
    &=\frac{2\rho_2e^{\mathsf{k}(\rho_2)}e^{-nV_{\tau}(\rho_2)}}{\eta_2 n}
\bigg[1+\frac{\widetilde{\mathfrak{a}}_1}{n}+\frac{\widetilde{\mathfrak{a}}_2}{n^2}+\frac{\widetilde{\mathfrak{a}}_3}{n^3}+\mathcal{O}(\frac{1}{M^8})\bigg],
\end{align}
where $\widetilde{\mathfrak{a}}_k$ for $k=1,2,3$ are given by \eqref{def of tilde frak a1}, \eqref{def of tilde frak a2}, and \eqref{def of tilde frak a3}. 
\end{lemma}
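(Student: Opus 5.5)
The proof mirrors Lemma~\ref{lemma: g1+leq j leq j+1}, with $\rho_1$ replaced by $\rho_2$ and the interior endpoint replaced by the left endpoint of the outer component $[\rho_2,\infty)$.

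In the range $j_{2,-}\leq j\leq g_{2,-}-1$ we have
\[
\frac{\tau_{\rho_2}}{1+\epsilon}+\mathcal O(n^{-1})\leq\tau\leq\frac{\tau_{\rho_2}}{1+M/\sqrt n},
\]
so $\eta_2=\frac{2}{\rho_2}(\tau_{\rho_2}-\tau)\geq cM/\sqrt n$ and $\eta_2^{-k}=\mathcal O(n^{k/2}M^{-k})$, uniformly in $j$. We decompose $h_{n,j}^{(\mathrm{an})}$ as in \eqref{def of decomposition middle hole}, into the four pieces over $[0,\rho_1-\delta_n]$, $[\rho_1-\delta_n,\rho_1]$, $[\rho_2,\rho_2+\delta_n]$ and $[\rho_2+\delta_n,\infty)$.

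\textbf{Inner component.} Since $\tau>\sigma_\star+c$ (by \eqref{def of sigma ast interrelashion epsilon}) and $V_\tau$ is decreasing on $(0,\rho_1]$, we have $\min_{[0,\rho_1]}V_\tau=V_\tau(\rho_1)$. By \eqref{def of difference between Vtau V rho1 jstar},
\[
V_\tau(\rho_1)-V_\tau(\rho_2)=2(\tau-\sigma_\star)\log\frac{\rho_2}{\rho_1}\geq c_\epsilon>0 .
\]
Hence the first two pieces are $e^{-nV_\tau(\rho_2)}\mathcal O(e^{-c n})$. This is far below the claimed error $\eta_2^{-1}n^{-1}M^{-8}$.

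\textbf{Outer tail.} The main point is to show that $V_\tau$ is increasing on $[\rho_2,\infty)$ with a quantitative slope near $\rho_2$. For $r\geq\rho_2$,
\[
V_\tau'(r)=\frac1r\bigl(rq'(r)-2\tau\bigr)\geq\frac1r\bigl(\rho_2q'(\rho_2)-2\tau\bigr)=\frac{\rho_2}{r}\eta_2 ,
\]
using that $rq'(r)$ is nondecreasing (subharmonicity). Then the mean-value theorem gives $V_\tau(r)\geq V_\tau(\rho_2)+c\,\eta_2\delta_n$ for $r\geq\rho_2+\delta_n$. Here $n\eta_2\delta_n\gtrsim M^2$, so the tail is relatively $\mathcal O(e^{-cM^2})$. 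The integrability at infinity comes from Assumption~\ref{Assumption_Q}\ref{Assumption 1}, using the same choice of $p$ as in the proof of Lemma~\ref{lemma: g1+leq j leq j+1}.

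\textbf{Endpoint Laplace expansion at $\rho_2$.} Substitute $u=n\eta_2(r-\rho_2)$ in the third piece. Taylor-expanding $V_\tau(r)-V_\tau(\rho_2)-\eta_2(r-\rho_2)$ and $re^{\mathsf k(r)-\mathsf k(\rho_2)}$ reproduces the polynomials $\widetilde a_1,\widetilde a_2,\widetilde a_3$ recorded in the proof of Lemma~\ref{lemma: j1+ j jdiamond}. Integrating termwise against $e^{-u}$ on $[0,n\eta_2\delta_n]$ and completing to $[0,\infty)$ (at cost $e^{-cM^2}$) yields the factor
\[
\frac{2\rho_2e^{\mathsf k(\rho_2)}e^{-nV_\tau(\rho_2)}}{\eta_2 n}\Bigl[1+\frac{\widetilde{\mathfrak a}_1}{n}+\frac{\widetilde{\mathfrak a}_2}{n^2}+\frac{\widetilde{\mathfrak a}_3}{n^3}\Bigr].
\]

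\textbf{Main obstacle: uniform control of the remainder.} The Taylor remainder after three orders is a polynomial in $\eta_2^{-1}u$ of degree at most $8$ times $n^{-4}$, giving $\mathcal O(\eta_2^{-8}n^{-4})$ in the worst term. With $\eta_2\gtrsim M/\sqrt n$ this is $\mathcal O(M^{-8})$. We must also check that the expansion is legitimate on $u\leq n\eta_2\delta_n$, that is, for $r-\rho_2\leq\delta_n$. There the higher Taylor terms satisfy $n|r-\rho_2|^3=n\delta_n^3=M^3/\sqrt n\to0$, so the exponential can be expanded with uniformly controlled error. All other contributions are exponentially smaller, which gives the stated formula.
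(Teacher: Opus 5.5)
Your proposal is correct and follows essentially the same route as the paper. You discard the inner component via $V_\tau(\rho_1)-V_\tau(\rho_2)=2(\tau-\sigma_\star)\log(\rho_2/\rho_1)\geq c_\epsilon$, discard the outer tail via the slope bound $V_\tau'\gtrsim M/\sqrt n$ near $\rho_2$ together with the integrability choice of $p$ from the growth condition at infinity, and read off the main term from the endpoint Laplace expansion at $\rho_2$ with remainder $\mathcal O(\eta_2^{-8}n^{-4})=\mathcal O(M^{-8})$.

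Two small refinements are needed. First, the quadratic term $\tfrac{n}{2}v_{2,2}(r-\rho_2)^2$ is of order $M^2$ at $r=\rho_2+\delta_n$, so uniformity of the expansion comes not from smallness but from $v_{2,2}=4\Delta Q(\rho_2)-\eta_2/\rho_2>0$ together with the $e^{-u}$ weight; alternatively, your own bound $n(V_\tau(r)-V_\tau(\rho_2))\geq cu$ lets you truncate at $u\leq M^{1/2}$ at negligible cost. Second, monotonicity of $V_\tau$ on $[\rho_2,\infty)$ is better derived, as the paper does, from convexity of $t\mapsto q(e^t)$, since $q$ need not be differentiable away from the droplet.
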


\begin{proof}
The argument is the $\rho_2$-endpoint counterpart of Lemma~\ref{lemma: g1+leq j leq j+1}; we include the details to make the uniformity explicit. Decompose
\[
h_{n,j}^{(\mathrm{an})}
=
\int_0^{\rho_1}2r e^{\mathsf{k}(r)}e^{-nV_{\tau}(r)}\,dr
+
\int_{\rho_2}^{\rho_2+\delta_n}2r e^{\mathsf{k}(r)}e^{-nV_{\tau}(r)}\,dr
+
\int_{\rho_2+\delta_n}^{+\infty}2r e^{\mathsf{k}(r)}e^{-nV_{\tau}(r)}\,dr. 
\]
For $j_{2,-}\leq j\leq g_{2,-}-1$, one has
$\tau=j/n<\tau_{\rho_2}/(1+M/\sqrt n)<\tau_{\rho_2}$.
Global subharmonicity makes $t\mapsto q(e^t)$ convex.  Since
$\rho_2q'(\rho_2)-2\tau>0$, this implies that $V_\tau$ is increasing on
$[\rho_2,\infty)$; no differentiability of $q$ away from the droplet is
needed for this conclusion.  On the smooth neighborhood
$[\rho_2,\rho_2+\delta_n]$, the monotonicity of $rq'(r)$ gives
\[
V_\tau'(r)
=\frac{rq'(r)-2\tau}{r}
\geq\frac{2\tau_{\rho_2}}{r}
\frac{M/\sqrt n}{1+M/\sqrt n}
\geq c\frac{M}{\sqrt n},
\]
for all sufficiently large $n$, with $c>0$ independent of $j$.
As $\delta_n=M/\sqrt n$, it follows that
\[
V_\tau(r)-V_\tau(\rho_2)
\geq V_\tau(\rho_2+\delta_n)-V_\tau(\rho_2)
\geq c\frac{M^2}{n},\qquad r\geq\rho_2+\delta_n.
\]
To control the integral over this unbounded interval, choose $\varepsilon_0>0$
and $r_0>\max\{1,\rho_2\}$ such that
$q(r)\geq2(1+\varepsilon_0)\log r$ for $r\geq r_0$.
This is possible by Assumption~\ref{Assumption 1}.
Choose a fixed positive integer $p$ such that
$p\varepsilon_0>\alpha+1$.  Since $0\leq\tau\leq1$ and $\lambda$ is
bounded, the estimate $V_\tau(r)\geq2\varepsilon_0\log r$ for
$r\geq r_0$ implies
\[
\sup_{0\leq\tau\leq1}
\int_{\rho_2}^{\infty}
2r e^{\mathsf k(r)}
e^{-p(V_\tau(r)-V_\tau(\rho_2))}\,dr<\infty.
\]
Indeed, the tail is bounded by a constant times
$\int_{r_0}^{\infty}r^{2\alpha+1-2p\varepsilon_0}\,dr$, and the
remaining compact interval is uniformly integrable.
Consequently, for $n\geq2p$,
\begin{align*}
&\int_{\rho_2+\delta_n}^{\infty}
2r e^{\mathsf k(r)}e^{-nV_\tau(r)}\,dr
\\
&\quad\leq e^{-nV_\tau(\rho_2)}
e^{-c(n-p)M^2/n}
\int_{\rho_2}^{\infty}
2r e^{\mathsf k(r)}
e^{-p(V_\tau(r)-V_\tau(\rho_2))}\,dr
=e^{-nV_\tau(\rho_2)}\mathcal O(e^{-c'M^2}),
\end{align*}
with $c'>0$ independent of $j$ and $n$.  This proves that the third
integral is negligible uniformly in the stated index range.

We now treat the first integral. Rewrite
\[
V_{\tau}(r)-V_{\tau}(\rho_2)
=
V_{\tau}(r)-V_{\tau}(\rho_1)
-\Bigl( 
V_{\tau_{\rho_2}}(\rho_{2})-V_{\tau_{\rho_2}}(\rho_1)+2(\tau_{\rho_2}-\tau)\log\frac{\rho_2}{\rho_1}
\Bigr). 
\]
For $0\leq r\leq \rho_1$ and $j_{2,-}\leq j\leq g_{2,-}-1$, we have 
\begin{align*}
V_{\tau}'(r)\leq \frac{1}{r}\Bigl(\rho_1q'(\rho_1)-\frac{2\tau_{\rho_2}}{1+\epsilon}\Bigr)
<\frac{1}{r}\Bigl(\frac{2(1-\epsilon)}{1+\epsilon}\tau_{\rho_2}-\frac{2\tau_{\rho_2}}{1+\epsilon}\Bigr)
=-\frac{1}{r}\frac{2\epsilon\tau_{\rho_2}}{1+\epsilon}.
\end{align*}
Thus $V_\tau$ is decreasing on $[0,\rho_1]$ for the indicated indices, and hence $V_\tau(r)-V_\tau(\rho_1)\geq0$ there. On the other hand,
\[
\tau_{\rho_2}-\tau\leq \tau_{\rho_2}-\frac{j_{2,-}}{n}<\tau_{\rho_2}-\frac{\tau_{\rho_2}}{1+\epsilon}
=\frac{\epsilon}{1+\epsilon}\tau_{\rho_2},
\]
we have
\begin{align*}
V_{\tau_{\rho_2}}(\rho_{2})-V_{\tau_{\rho_2}}(\rho_1)+2(\tau_{\rho_2}-\tau)\log\frac{\rho_2}{\rho_1}
&<q(\rho_2)-q(\rho_1)
-2\tau_{\rho_2}\log\frac{\rho_2}{\rho_1}+\frac{2\epsilon\tau_{\rho_2}}{1+\epsilon}\log\frac{\rho_2}{\rho_1}
\\
&=q(\rho_2)-q(\rho_1)-\frac{2\tau_{\rho_2}}{1+\epsilon}\log\frac{\rho_2}{\rho_1}<-c_{\epsilon},
\end{align*}
for some $c_\epsilon>0$, by \eqref{def of sigma ast interrelashion epsilon}. Therefore, uniformly for $0\leq r\leq\rho_1$ and $j_{2,-}\leq j\leq g_{2,-}-1$,
\[
V_{\tau}(r)-V_{\tau}(\rho_2)>V_{\tau}(r)-V_{\tau}(\rho_1)+c_{\epsilon}>0. 
\]
Then we obtain 
\begin{align}
    \begin{split}
\int_0^{\rho_1}2re^{\mathsf{k}(r)}e^{-nV_{\tau}(r)}\,dr
&\leq C_s e^{-nV_{\tau}(\rho_2)}e^{-nc_{\epsilon}}\int_{0}^{\rho_1}r^{2\alpha+1}e^{-(V_{\tau}(r)-V_{\tau}(\rho_1))} \,dr
\\
&=e^{-nV_{\tau}(\rho_2)}\cdot\mathcal{O}(e^{-c_{\epsilon}n}). 
    \end{split}
\end{align}

It remains to evaluate the main contribution from the $\rho_2$ endpoint. Repeating the endpoint Laplace expansion from Lemmas~\ref{lemma: g1+leq j leq j+1} and \ref{lemma: j1+ j jdiamond} gives
\begin{align*}
\int_{\rho_2}^{\rho_2+\delta_n}2re^{\mathsf{k}(r)}e^{-nV_{\tau}(r)}\,dr
=
\frac{2\rho_2e^{\mathsf{k}(\rho_2)}e^{-nV_{\tau}(\rho_2)}}{\eta_2 n}
\bigg[1+\frac{\widetilde{\mathfrak{a}}_1}{n}+\frac{\widetilde{\mathfrak{a}}_2}{n^2}+\frac{\widetilde{\mathfrak{a}}_3}{n^3}+\mathcal{O}(\frac{1}{\eta_2^8n^4})\bigg]. 
\end{align*}
Since $\eta_2^{-8}n^{-4}=\mathcal O(M^{-8})$ uniformly in the stated range, the displayed formula proves the lemma.

\end{proof}

\begin{lemma}\label{lemma:g2- j g2+}
Let $\xi_{j,2}:=\sqrt{nd_2}(\rho_2-r_{\tau})$. 
Uniformly for $g_{2,-}\leq j\leq g_{2,+}$, there exists $c>0$, independent of $n$ and $j$, such that
\begin{equation}
   h_{n,j}^{(\mathrm{an})}=
    (h_{n,j}-h_{\#}(\xi_{j,2}))\cdot\bigl(1+\mathcal{O}(e^{-cM^2})\bigr).
\end{equation}
\end{lemma}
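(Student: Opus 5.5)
We write $h_{n,j}^{(\mathrm{an})}=h_{n,j}-\int_{\rho_1}^{\rho_2}2re^{\mathsf k(r)}e^{-nV_\tau(r)}\,dr$. The idea is to show that the removed integral equals $h_\#(\xi_{j,2})$ up to a relative error $\mathcal O(e^{-cM^2})$. This mirrors the $\tau\geq\tau_{\rho_1}$ case of Lemma~\ref{lemma:g1- j g1+}, with the roles of the sides reversed: at $\rho_2$ the allowed region lies to the right, so the removed piece sits to the left of the endpoint.

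\textbf{Step 1: localize the saddle.} For $g_{2,-}\leq j\leq g_{2,+}$ we have $|\tau-\tau_{\rho_2}|\leq C M/\sqrt n$. Applying the inverse function theorem to $r\mapsto rq'(r)$ near $\rho_2$ gives $|\rho_2-r_\tau|\leq CM/\sqrt n$, so $\xi_{j,2}=\mathcal O(M)$. Since $\tau>\sigma_\star+c$ in this range, identity \eqref{def of difference between Vtau V rho2 jstar} together with the monotonicity of $V_\tau$ on $[0,\rho_1]$ (as in Lemma~\ref{lemma:j2- j g2--1}) shows that $V_\tau(r)-V_\tau(r_\tau)\geq c_\epsilon>0$ on $[0,\rho_1]$. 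Moreover, on $[\rho_1,r_\tau-\delta_n]$ the function $V_\tau$ is decreasing toward $r_\tau$, and the quadratic lower bound from Lemma~\ref{lemma: 0 j j1-} gives $V_\tau(r)-V_\tau(r_\tau)\geq cM^2/n$ there.

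\textbf{Step 2: split the removed integral.} Write
\[
\int_{\rho_1}^{\rho_2}=\int_{\rho_1}^{r_\tau-\delta_n}+\int_{r_\tau-\delta_n}^{\rho_2}.
\]
By Step 1, the first piece is $e^{-nV_\tau(r_\tau)}\mathcal O(e^{-cM^2})$. For the second piece we repeat the Taylor expansion of Lemma~\ref{lemma: 0 j j1-} in the variable $u=\sqrt{nd_2}(r-r_\tau)$, integrating over $(-\infty,\xi_{j,2}]$; extending the lower limit to $-\infty$ again costs only $\mathcal O(e^{-cM^2})$. The truncated Gaussian moment identities used in Lemma~\ref{lemma:g1- j g1+} then produce precisely
\[
\frac{r_\tau e^{\mathsf k(r_\tau)}e^{-nV_\tau(r_\tau)}}{\sqrt{n\Delta Q(r_\tau)}}\Bigl[\sqrt{\tfrac{\pi}{2}}\,\mathrm{erfc}\bigl(-\xi_{j,2}/\sqrt2\bigr)+\frac{\mathfrak c_1(\xi_{j,2})}{\sqrt n}+\frac{\mathfrak c_2(\xi_{j,2})}{n}+\mathcal O\bigl(\xi_{j,2}^8n^{-3/2}\bigr)\Bigr],
\]
which is $h_\#(\xi_{j,2})$. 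The identification holds because the coefficients $\mathfrak c_k$ depend only on the local data $d_m$ and $\mathsf k^{(m)}(r_\tau)$, not on which wall is involved.

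\textbf{Step 3: convert to a relative error.} The main obstacle is that the statement asks for a \emph{relative} error $1+\mathcal O(e^{-cM^2})$ on the difference $h_{n,j}-h_\#$, so the neglected pieces must be small compared with that difference and not merely with $h_{n,j}$. The difference equals the prefactor times $\sqrt{\pi/2}\,\mathrm{erfc}(\xi_{j,2}/\sqrt2)+\mathcal O(n^{-1/2})$, which is taken from the full-line Laplace expansion \eqref{def of Dn leq j leq n-1}. When $\xi_{j,2}\to+\infty$ (as large as $CM$), this is of size $e^{-\xi^2/2}$, which could be comparable to $e^{-cM^2}$.

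To handle this, I would first choose the cutoff constants so that the neglected terms are $e^{-c'M^2}$ with $c'$ strictly larger than the decay rate of $\mathrm{erfc}$ at $\xi\sim CM$. Concretely, the piece over $[\rho_1,r_\tau-\delta_n]$ is bounded by $e^{-n(V_\tau(\rho_2)-V_\tau(r_\tau))}\cdot e^{-c\delta_n\cdot\text{slope}}$. If that does not suffice, the fallback is to avoid the subtraction altogether in this regime. I would then compute $h_{n,j}^{(\mathrm{an})}\approx\int_{\rho_2}^\infty$ directly by the same Gaussian expansion over $[\xi_{j,2},\infty)$. The remaining contributions are then genuinely exponentially smaller than the main term, uniformly in the range, and the result is consistent with the difference formula.
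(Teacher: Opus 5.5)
\paragraph{Verdict.} Your fallback is the right proof, and it is what the paper does. Your primary subtraction route has a gap, and your first proposed remedy in Step~3 would not close it.

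\paragraph{Why the subtraction route fails.} Your diagnosis in Step~3 is correct, but the obstruction is not the truncated tails. Subtracting $\int_{\rho_1}^{\rho_2}\approx h_\#(\xi_{j,2})$ from $h_{n,j}$ combines two expansions whose remainders are only polynomially small:
\begin{itemize}
\item $h_{n,j}$ carries a relative error $\mathcal O(n^{-2})$ (Lemma~\ref{lemma: 0 j j1-}, since $j\geq c_0n$ in this window);
\item $h_\#$ carries an additive $\mathcal O(\xi_{j,2}^8n^{-3/2})$ inside its bracket.
\end{itemize}
The quantity you want, $h_{n,j}-h_\#(\xi_{j,2})$, is the prefactor times $\sqrt{\pi/2}\,\mathrm{erfc}(\xi_{j,2}/\sqrt2)\approx e^{-\xi_{j,2}^2/2}/\xi_{j,2}$ as $\xi_{j,2}\to+\infty$, and $\xi_{j,2}$ reaches order $M$ near $j=g_{2,-}$. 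So once $\xi_{j,2}\geq C\sqrt{\log n}$, the two remainders exceed the quantity you are trying to isolate, and nothing forces them to cancel.

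Enlarging the cutoff constants only improves pieces that are already exponentially small. The subtraction route therefore gives at best an additive statement, not the relative error $1+\mathcal O(e^{-cM^2})$. The relative form is what is needed later, when $\log(h^{(\mathrm{an})}_{n,j}/h_{n,j})$ is expanded in Lemma~\ref{lemma:sum of g2- part g2+part}. If you forced the cancellation by expanding both integrals with the same pointwise Taylor remainder, you would simply be computing $\int_{\rho_2}^{\infty}$ directly.

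\paragraph{The fallback is the proof, and it matches the paper.} Start from the exact identity $h^{(\mathrm{an})}_{n,j}=\int_0^{\rho_1}+\int_{\rho_2}^{\infty}$.
\begin{itemize}
\item Discard $\int_0^{\rho_1}$. It is $e^{-nV_\tau(\rho_2)}\mathcal O(e^{-cn})$, because $V_\tau$ decreases on $[0,\rho_1]$ and $V_\tau(\rho_1)-V_\tau(\rho_2)=2(\tau-\sigma_\star)\log\frac{\rho_2}{\rho_1}$ is bounded below, as in Lemma~\ref{lemma:j2- j g2--1}.
\item Discard the far tail.
\item Expand what remains near $\rho_2$ by the truncated Laplace method in $u=\sqrt{nd_2}(r-r_\tau)$ over $u\geq\xi_{j,2}$.
\end{itemize}
The paper writes this as $\int_0^{\rho_1}+\int_{\rho_2}^{r_\tau+\delta_n}+\int_{r_\tau+\delta_n}^{\infty}$.

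The identification with $h_{n,j}-h_\#(\xi_{j,2})$ is made termwise on the explicit Gaussian moments, via $\int_{\xi}^{\infty}=\int_{\mathbb R}-\int_{-\infty}^{\xi}$, and not by subtracting two asymptotic formulas. In this form the Laplace remainder is integrated only over $[\xi_{j,2},\infty)$. It is therefore small relative to the Gaussian tail itself, which is exactly what your Step~3 was looking for.

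\paragraph{One detail for the write-up.} Place the outer cutoff at $\max\{\rho_2,r_\tau\}+\delta_n$ rather than $r_\tau+\delta_n$; the latter can fall below $\rho_2$ when $\rho_2-r_\tau$ is of order $M/\sqrt n$. With this choice, $V_\tau$ is increasing beyond the cutoff. The discarded tail is then $\mathcal O(e^{-cM^2})$ relative to the main term, and not merely relative to $e^{-nV_\tau(r_\tau)}$.
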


\begin{proof}
We use the decomposition
\begin{align*}
h_{n,j}^{(\mathrm{an})}&=\int_{0}^{\rho_1}2re^{\mathsf{k}(r)}e^{-nV_{\tau}(r)}\,dr+\int_{\rho_2}^{r_{\tau}+\delta_n}2re^{\mathsf{k}(r)}e^{-nV_{\tau}(r)}\,dr
+\int_{r_{\tau}+\delta_n}^{+\infty}2re^{\mathsf{k}(r)}e^{-nV_{\tau}(r)}\,dr. 
\end{align*}
As in Lemmas~\ref{lemma: 0 j j1-} and \ref{lemma:g1- j g1+}, the first and third integrals are $e^{-nV_\tau(r_\tau)}\mathcal O(e^{-cM^2})$ for some uniform $c>0$. The middle integral provides the main contribution. Indeed, for $g_{2,-}\leq j\leq g_{2,+}$ there are constants $C_1,C_2>0$, independent of $n$ and $j$, such that
$-C_1M/\sqrt n\leq\rho_2-r_\tau\leq C_2M/\sqrt n$. The same truncated Laplace expansion as in Lemma~\ref{lemma:g1- j g1+} therefore gives
\[
\int_{\rho_2}^{r_{\tau}+\delta_n}2re^{\mathsf{k}(r)}e^{-nV_{\tau}(r)}\,dr
=
(h_{n,j}-h_{\#}(\xi_{j,2}))\cdot\bigl(1+\mathcal{O}(e^{-cM^2})\bigr),
\]
for some $c>0$. 
This completes the proof. 
\end{proof}

\begin{lemma}\label{lemma:g2,++1leq j leq j2,+}
Uniformly for $g_{2,+}+1\leq j\leq j_{2,+}$, there exists $c>0$, independent of $n$ and $j$, such that
\begin{align}
\begin{split}
h_{n,j}^{(\mathrm{an})}
&=h_{n,j}\cdot(1+\mathcal{O}(e^{-c M^2})).
\end{split}
\end{align}
\end{lemma}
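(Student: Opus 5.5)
The plan is to follow the pattern of Lemma~\ref{lemma: j1- j g1-}, now on the outer side of $\rho_2$. There the saddle $r_\tau$ of $V_\tau$ lies strictly inside the allowed region $(\rho_2,\infty)$, at distance at least of order $M/\sqrt n$ from the wall. I will write
\[
h_{n,j}-h_{n,j}^{(\mathrm{an})}=\int_{\rho_1}^{\rho_2}2re^{\mathsf k(r)}e^{-nV_\tau(r)}\,dr
\]
and show that this is $h_{n,j}\cdot\mathcal O(e^{-cM^2})$ uniformly for $g_{2,+}+1\leq j\leq j_{2,+}$. The factor $(1+\mathcal O(e^{-cM^2}))$ then follows immediately.

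\textbf{Step 1: locating the saddle.} For $j$ in this range,
\[
\frac{\tau_{\rho_2}}{1-M/\sqrt n}<\tau\leq\frac{\tau_{\rho_2}}{1-\epsilon}<1
\]
by \eqref{def of tau k ast epsilon inequality}, so $\tau-\tau_{\rho_2}\geq cM/\sqrt n$. Since $(rq'(r))'=4r\Delta Q(r)$ is bounded below near $[\rho_2,R]$, the relation $r_\tau q'(r_\tau)=2\tau$ gives $r_\tau-\rho_2\geq c'M/\sqrt n$. It also keeps $r_\tau$ in a compact subset of $(\rho_2,R)$, where $d_2=4\Delta Q(r_\tau)$ is uniformly positive.

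\textbf{Step 2: lower bound on the potential over the hole.} On $[\rho_1,\rho_2]$ we have $V_\tau'(r)=(rq'(r)-2\tau)/r<0$, because $rq'(r)$ is increasing and $\rho_2q'(\rho_2)=2\tau_{\rho_2}<2\tau$. Hence $V_\tau(r)\geq V_\tau(\rho_2)$ there. The quadratic lower bound from the proof of Lemma~\ref{lemma: 0 j j1-}, namely $V_\tau(r)-V_\tau(r_\tau)\geq c\min\{(r-r_\tau)^2,\delta\}$ by strict convexity near $r_\tau$ and monotonicity away from it, then gives
\[
V_\tau(\rho_2)-V_\tau(r_\tau)\geq cM^2/n.
\]
Consequently
\[
\int_{\rho_1}^{\rho_2}2re^{\mathsf k(r)}e^{-nV_\tau(r)}\,dr\leq C\,e^{-nV_\tau(r_\tau)}e^{-cM^2},
\]
where $C$ absorbs $\sup e^{s\lambda}$ and $\int_{\rho_1}^{\rho_2}r^{2\alpha+1}\,dr$; both are finite since $\rho_1>0$.

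\textbf{Step 3: comparison with $h_{n,j}$.} By \eqref{def of Dn leq j leq n-1}, valid here with error $\mathcal O(n^{-2})$ since $j\geq c_0n$,
\[
h_{n,j}\asymp n^{-1/2}e^{\mathsf k(r_\tau)}e^{-nV_\tau(r_\tau)}.
\]
Therefore the ratio is $\mathcal O(\sqrt n\,e^{-cM^2})$, and this is $\mathcal O(e^{-c''M^2})$ after shrinking $c$, because $M^2=n^{1/6}\gg\log n$.

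\textbf{Main difficulty.} The only delicate point is uniformity of the constants in Step 2 near the lower end $j=g_{2,+}+1$, where $r_\tau-\rho_2\sim M/\sqrt n$ is small. There I would use directly that $V_\tau'(r)\geq cM/\sqrt n$ fails only inside the saddle window. Concretely, for $r\in[\rho_2,\rho_2+c'M/(2\sqrt n)]$ one has $V_\tau'(r)\leq -c M/\sqrt n$, and integrating this from $\rho_2$ yields
\[
V_\tau(\rho_2)-V_\tau(r_\tau)\geq c'' M^2/n,
\]
uniformly in $j$. This is exactly the mirror image of the argument in Lemma~\ref{lemma: j1- j g1-}.
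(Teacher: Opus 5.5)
Your proposal is correct and is essentially the paper's argument. The paper just says that the tail estimates of Lemmas~\ref{lemma: 0 j j1-} and~\ref{lemma: j1- j g1-} carry over with $\rho_1$ replaced by $\rho_2$, because the saddle stays at distance $\gtrsim M/\sqrt n$ from the excluded interval; you have written out that estimate for $\int_{\rho_1}^{\rho_2}$ and compared it with $h_{n,j}\asymp n^{-1/2}e^{\mathsf k(r_\tau)}e^{-nV_\tau(r_\tau)}$.

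Two small corrections, neither of which affects the argument. In Step~1, the bound $r_\tau-\rho_2\geq c'M/\sqrt n$ needs an \emph{upper} bound on $(rq'(r))'=4r\Delta Q(r)$ over $[\rho_2,R]$; the lower bound is what you use in Step~2. Also, $r_\tau$ lies in a compact subset of $[\rho_2,R)$, not $(\rho_2,R)$, since it approaches $\rho_2$ at $j=g_{2,+}+1$.
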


\begin{lemma}\label{lemma:j2,++1leq j leq n-1}
Uniformly for $j_{2,+}+1\leq j\leq n-1$, there exists $c>0$, independent of $n$ and $j$, such that
\begin{align}
\begin{split}
h_{n,j}^{(\mathrm{an})}
&=h_{n,j}\cdot(1+\mathcal{O}(e^{-cM^2})).
\end{split}
\end{align}
\end{lemma}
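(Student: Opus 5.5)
The plan is to write $h_{n,j}^{(\mathrm{an})}=h_{n,j}-\int_{\rho_1}^{\rho_2}2re^{\mathsf k(r)}e^{-nV_\tau(r)}\,dr$. It then suffices to show that the removed annular integral is $h_{n,j}\cdot\mathcal O(e^{-cM^2})$, uniformly for $j_{2,+}+1\leq j\leq n-1$. This mirrors Lemma~\ref{lemma: 0 j j1-} and Lemma~\ref{lemma: j1- j g1-}, with the saddle now lying to the right of the outer wall instead of to the left of the inner one.

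The first step locates the saddle. For $j\geq j_{2,+}+1$ we have $\tau\geq \tau_{\rho_2}/(1-\epsilon)$, which is bounded away from $\tau_{\rho_2}$ by a fixed amount. Since $r\mapsto rq'(r)$ is strictly increasing with positive derivative on the smooth neighbourhood of $[0,R]$, the relation $r_\tau q'(r_\tau)=2\tau$ gives $r_\tau-\rho_2\geq c_\epsilon>0$ uniformly, with $r_\tau\leq R$. In particular $j\geq c_0n$, so \eqref{def of Dn leq j leq n-1} applies to $h_{n,j}$ with error $\mathcal O(n^{-2})$. This yields the lower bound $h_{n,j}\geq c\,n^{-1/2}e^{-nV_\tau(r_\tau)}$, with $\Delta Q(r_\tau)$ bounded below since $r_\tau$ lies in the compact set $[\rho_2+c_\epsilon,R]$.

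The second step bounds the annular integral. On $[\rho_1,\rho_2]$ we have $V_\tau'(r)=(rq'(r)-2\tau)/r\leq (2\tau_{\rho_2}-2\tau)/r\leq -c'_\epsilon<0$, so $V_\tau$ is decreasing there. Hence $V_\tau(r)\geq V_\tau(\rho_2)$ on that interval. The strict convexity and monotonicity bound from the proof of Lemma~\ref{lemma: 0 j j1-} gives $V_\tau(\rho_2)-V_\tau(r_\tau)\geq c\,(r_\tau-\rho_2)^2\geq c\,c_\epsilon^2$. Therefore the annular integral is at most $C_s\rho_2^{2\alpha+2}e^{-nV_\tau(r_\tau)}e^{-cn}$, which relative to $h_{n,j}$ is $\mathcal O(\sqrt n\,e^{-cn})\subset\mathcal O(e^{-cM^2})$. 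Since $\rho_1$ is bounded away from $0$, the factor $r^{2\alpha+1}$ causes no trouble for $\alpha>-1$.

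The only delicate point is uniformity as $j\to n-1$. There $r_\tau\to R$, where the smoothness and strict subharmonicity of $Q$ still hold by Assumption~\ref{Assumption_Q}, so the derivative bounds on $V_\tau$ remain uniform. The $\mathcal O(n^{-2})$ version of the Laplace asymptotics is valid on the whole range $j\geq c_0n$. In fact the estimate is a fixed exponential gap $e^{-cn}$, considerably stronger than the claimed $e^{-cM^2}$. I expect no real obstacle beyond checking that the constant in the quadratic lower bound for $V_\tau(r)-V_\tau(r_\tau)$ is uniform in $\tau\in[\tau_{\rho_2}/(1-\epsilon),1]$. This follows from continuity of $d_2=4\Delta Q(r_\tau)$ and compactness.
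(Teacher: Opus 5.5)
Your proposal is correct and follows essentially the same route as the paper, which simply invokes the tail estimates of Lemmas~\ref{lemma: 0 j j1-} and~\ref{lemma: j1- j g1-} with $\rho_1$ replaced by $\rho_2$: the saddle $r_\tau$ stays away from the excluded annulus, so the removed piece is a negligible tail of the full Laplace integral $h_{n,j}$. Your write-up is more explicit than the paper's brief argument, and it correctly observes that in this range the separation $r_\tau-\rho_2$ is of fixed size, so the relative error is in fact $\mathcal O(\sqrt n\,e^{-cn})$.
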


\begin{proof}[Proof of Lemmas~\ref{lemma:g2,++1leq j leq j2,+} and
\ref{lemma:j2,++1leq j leq n-1}]
The tail arguments in Lemmas~\ref{lemma: 0 j j1-} and
\ref{lemma: j1- j g1-} apply with $\rho_1$ replaced by $\rho_2$.  In both
ranges the saddle remains at least $M/\sqrt n$ from the excluded interval,
so the same estimates give the stated uniform exponential errors.
\end{proof}

\section{Proofs of the hole-probability asymptotics}
\label{section:hole-proofs}
\label{section:proof of hole probability part}
In this section we prove Theorems~\ref{theorem:hole probability of annulus case}, \ref{theorem:hole probability of disk complement case}, and \ref{theorem:hole probability of centered disk case}. We give the annulus case in detail; the disk-complement and centered-disk cases follow by retaining only the relevant endpoint contributions.

For the annulus, introduce the logarithmic ratio
\begin{equation}\label{def of calPns C}
    \log\mathcal{P}_{n,s\lambda,\alpha}^{(\mathrm{an})}:=
    \sum_{j=0}^{n-1}\log\frac{h_{n,j}^{(\mathrm{an})}}{h_{n,j}}
\end{equation}
and decompose it as
\[
\log\mathcal{P}_{n,s\lambda,\alpha}^{(\mathrm{an})}
=S_1^{(\mathrm{an})}+S_2^{(\mathrm{an})}+S_3^{(\mathrm{an})}+S_4^{(\mathrm{an})}+S_5^{(\mathrm{an})},
\]
where 
\begin{align*}
S_1^{(\mathrm{an})}&:=\sum_{j=0}^{j_{1,-}-1}\log\frac{h_{n,j}^{(\mathrm{an})}}{h_{n,j}}, \quad   
S_2^{(\mathrm{an})}:=\sum_{j=j_{1,-}}^{j_{1,+}}\log\frac{h_{n,j}^{(\mathrm{an})}}{h_{n,j}}, \quad
S_3^{(\mathrm{an})}:=\sum_{j=j_{1,+}+1}^{j_{2,-}-1}\log\frac{h_{n,j}^{(\mathrm{an})}}{h_{n,j}},
\\
S_4^{(\mathrm{an})}&:=\sum_{j=j_{2,-}}^{j_{2,+}}\log\frac{h_{n,j}^{(\mathrm{an})}}{h_{n,j}}, \quad
S_5^{(\mathrm{an})}:=\sum_{j=j_{2,+}+1}^{n-1}\log\frac{h_{n,j}^{(\mathrm{an})}}{h_{n,j}}.
\end{align*}
Following \cite[Eq.~(3.40)]{C2021}, split the two edge sums further as
\begin{align*}
S_2^{(\mathrm{an})}&:=S_2^{(1,\mathrm{an})}+S_2^{(2,\mathrm{an})}+S_2^{(3,\mathrm{an})},\qquad
S_4^{(\mathrm{an})}:=S_4^{(1,\mathrm{an})}+S_4^{(2,\mathrm{an})}+S_4^{(3,\mathrm{an})},
\end{align*}
where 
\begin{align*}
S_{2}^{(3,\mathrm{an})}&:=\sum_{j=j_{1,-}}^{g_{1,-}-1}\log\frac{h_{n,j}^{(\mathrm{an})}}{h_{n,j}},\quad
S_{2}^{(2,\mathrm{an})}:=\sum_{j=g_{1,-}}^{g_{1,+}}\log\frac{h_{n,j}^{(\mathrm{an})}}{h_{n,j}},\quad
S_{2}^{(1,\mathrm{an})}:=\sum_{j=g_{1,+}+1}^{j_{1,+}}\log\frac{h_{n,j}^{(\mathrm{an})}}{h_{n,j}},\\
S_4^{(1,\mathrm{an})}&:=\sum_{j=j_{2,-}}^{g_{2,-}-1}\log\frac{h_{n,j}^{(\mathrm{an})}}{h_{n,j}},\quad
S_4^{(2,\mathrm{an})}:=\sum_{j=g_{2,-}}^{g_{2,+}}\log\frac{h_{n,j}^{(\mathrm{an})}}{h_{n,j}},\quad
S_4^{(3,\mathrm{an})}:=\sum_{j=g_{2,+}+1}^{j_{2,+}}\log\frac{h_{n,j}^{(\mathrm{an})}}{h_{n,j}}.
\end{align*}
The next lemma collects the exponentially small contributions away from the two $M/\sqrt n$ transition windows.
\begin{lemma}
\label{lemma: S1245 an error}
There exist constants $c_1,c_2,c_3,c_4>0$, independent of $n$, such that
\begin{align*}
S_1^{(\mathrm{an})}=\mathcal{O}(e^{-c_1(\log n)^2}),\quad
S_2^{(3,\mathrm{an})}=\mathcal{O}(e^{-c_2M^2}),\quad
S_4^{(3,\mathrm{an})}=\mathcal{O}(e^{-c_3M^2}),\quad
S_5^{(\mathrm{an})}=\mathcal{O}(e^{-c_4M^2}).
\end{align*}
\end{lemma}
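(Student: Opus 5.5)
The plan is to sum the uniform ratio bounds of Lemmas~\ref{lemma: 0 j j1-}, \ref{lemma: j1- j g1-}, \ref{lemma:g2,++1leq j leq j2,+} and \ref{lemma:j2,++1leq j leq n-1} over the corresponding index ranges. In each of these ranges the relevant lemma gives
\[
\frac{h_{n,j}^{(\mathrm{an})}}{h_{n,j}}=1+\varepsilon_{n,j},
\]
with $|\varepsilon_{n,j}|\leq Ce^{-c(\log n)^2}$ or $|\varepsilon_{n,j}|\leq Ce^{-cM^2}$, uniformly in $j$. For $n$ large, $|\varepsilon_{n,j}|\leq 1/2$, so $|\log(1+\varepsilon_{n,j})|\leq 2|\varepsilon_{n,j}|$. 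Each sum has at most $n$ terms, which gives
\[
|S|\leq 2n\max_j|\varepsilon_{n,j}|.
\]
The factor $n$ is absorbed by shrinking the constant: $ne^{-cM^2}=e^{-cn^{1/6}+\log n}\leq e^{-(c/2)M^2}$, and similarly $ne^{-c(\log n)^2}\leq e^{-(c/2)(\log n)^2}$.

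For $S_1^{(\mathrm{an})}$ (indices $0\leq j\leq j_{1,-}-1$), the range $0\leq j\leq D_n-1$ and the range $D_n\leq j\leq j_{1,-}-1\subset[D_n,g_{1,-}-1]$ are both covered by Lemma~\ref{lemma: 0 j j1-}, with error $e^{-c(\log n)^2}$. The sum is therefore $\mathcal O(e^{-c_1(\log n)^2})$.

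For $S_2^{(3,\mathrm{an})}$, the range $j_{1,-}\leq j\leq g_{1,-}-1$ is exactly that of Lemma~\ref{lemma: j1- j g1-}. For $S_4^{(3,\mathrm{an})}$ and $S_5^{(\mathrm{an})}$ we use Lemmas~\ref{lemma:g2,++1leq j leq j2,+} and \ref{lemma:j2,++1leq j leq n-1} respectively.

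The only point needing care is that the implied constants in these lemmas are genuinely uniform in $j$. This is most delicate near $j=0$, where $r_\tau\to0$ and the Laplace scale degenerates. There the first part of Lemma~\ref{lemma: 0 j j1-}, which rests on \cite[Lemma~4.1]{ACC2023c}, compares the integral over $[0,\rho_1]$ with the full integral directly: the contribution from $r\geq\rho_1$ is exponentially small in $n$ relative to $h_{n,j}$, since $h_{n,j}\asymp\Gamma(j+\alpha+1)(n\Delta Q(0))^{-j-\alpha-1}$ while $V_\tau(r)-V_\tau(0)\geq c>0$ for $r\geq\rho_1$ and $j\leq D_n$. I would check this bound explicitly, since it is the step most likely to hide a non-uniform constant. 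Everything else is routine summation.
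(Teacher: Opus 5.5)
Your proposal is correct and is the same argument as the paper's: apply Lemmas~\ref{lemma: 0 j j1-}, \ref{lemma: j1- j g1-}, \ref{lemma:g2,++1leq j leq j2,+}, and \ref{lemma:j2,++1leq j leq n-1} term by term, bound $|\log(1+\varepsilon_{n,j})|\leq 2|\varepsilon_{n,j}|$, and absorb the at most $n$ summands by shrinking the constant in the exponent. One small slip in your optional uniformity check near $j=0$: $V_\tau(0)=+\infty$ for $\tau>0$, so the comparison should instead use $q(r)-q(0)\geq q(\rho_1)-q(0)>0$ on the excluded annulus $[\rho_1,\rho_2]$, together with $j\leq D_n$, which keeps the factor $(Cn)^{j+\alpha+1}/\Gamma(j+\alpha+1)$ of size $e^{o(n)}$.
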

\begin{proof}
Apply Lemmas~\ref{lemma: 0 j j1-}, \ref{lemma: j1- j g1-}, \ref{lemma:g2,++1leq j leq j2,+}, and \ref{lemma:j2,++1leq j leq n-1} term by term. The number of summands in each range is at most $n$, and the polynomial factor $n$ is absorbed by decreasing the corresponding positive constant in the exponential. This gives all four estimates.
\end{proof}

It remains to analyze $S_2^{(1,\mathrm{an})}$, $S_2^{(2,\mathrm{an})}$, $S_3^{(\mathrm{an})}$, $S_4^{(1,\mathrm{an})}$, and $S_4^{(2,\mathrm{an})}$. We use the following Euler--Maclaurin formula repeatedly.
\begin{lemma}[{\cite[Lemma~3.4]{C2021}}]
\label{lemma:Riemann sum NEW}
Let $A,a_0,B,b_0$ be bounded functions of $n\in\mathbb N$ such that
\begin{align*}
& a_{n} := An + a_{0} \qquad \mbox{ and } \qquad b_{n} := Bn + b_{0}
\end{align*}
are integers. Assume that $B-A$ is bounded away from zero. Let $f$, independent of $n$, belong to $C^4(I_n)$ for every $n$, where $I_n$ is any interval containing $A,B,a_n/n$, and $b_n/n$. Then
\begin{align}
&  \sum_{j=a_{n}}^{b_{n}}f(\tfrac{j}{n}) = n \int_{A}^{B}f(x)dx + \frac{(1-2a_{0})f(A)+(1+2b_{0})f(B)}{2}  \nonumber \\
& + \frac{(-1+6a_{0}-6a_{0}^{2})f'(A)+(1+6b_{0}+6b_{0}^{2})f'(B)}{12n} \nonumber \\
& + \frac{(-a_{0}+3a_{0}^{2}-2a_{0}^{3})f''(A)+(b_{0}+3b_{0}^{2}+2b_{0}^{3})f''(B)}{12n^{2}} \nonumber \\
& + \bigO \bigg( \frac{\mathfrak{m}_{A,n}(f''')+\mathfrak{m}_{B,n}(f''')}{n^{3}} + \sum_{j=a_{n}}^{b_{n}-1} \frac{\mathfrak{m}_{j,n}(f'''')}{n^{4}} \bigg), \label{sum f asymp gap NEW}
\end{align}
where, for a continuous function $g$ on the relevant intervals,
\begin{align*}
\mathfrak{m}_{A,n}(g) := \max_{x \in [\min\{\frac{a_{n}}{n},A\},\max\{\frac{a_{n}}{n},A\}]}|g(x)|, \quad \mathfrak{m}_{B,n}(g) := \max_{x \in [\min\{\frac{b_{n}}{n},B\},\max\{\frac{b_{n}}{n},B\}]}|g(x)|,
\end{align*}
and for $j \in \{a_{n},\ldots,b_{n}-1\}$, $\mathfrak{m}_{j,n}(g) := \max_{x \in [\frac{j}{n},\frac{j+1}{n}]}|g(x)|$.
\end{lemma}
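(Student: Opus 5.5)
The approach is to reduce the statement to the classical Euler--Maclaurin formula on integer nodes, and then to re-expand all endpoint quantities, which sit at $a_n/n$ and $b_n/n$, around $A$ and $B$. Set $g(x):=f(x/n)$, so that $g^{(k)}(x)=n^{-k}f^{(k)}(x/n)$.

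\textbf{Step 1 (integer Euler--Maclaurin).} For the integers $a_n<b_n$ (here $B-A$ is bounded away from zero), expand to fourth order with the periodic Bernoulli remainder:
\begin{align*}
\sum_{j=a_n}^{b_n}g(j)&=\int_{a_n}^{b_n}g(x)\,dx+\frac{g(a_n)+g(b_n)}{2}+\frac{g'(b_n)-g'(a_n)}{12}-\frac{g'''(b_n)-g'''(a_n)}{720}\\
&\quad+\mathcal O\Big(\sum_{j=a_n}^{b_n-1}\max_{[j,j+1]}|g''''|\Big).
\end{align*}
After substituting $g$, the $g'''$ terms become $\mathcal O\big((\mathfrak m_{A,n}(f''')+\mathfrak m_{B,n}(f'''))/n^3\big)$. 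The last term becomes $n^{-4}\sum_j\mathfrak m_{j,n}(f'''')$. This already produces the two error terms in \eqref{sum f asymp gap NEW}.

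\textbf{Step 2 (shifting the endpoints).} Write
\[
n\int_{a_n/n}^{b_n/n}f=n\int_A^Bf+n\int_B^{B+b_0/n}f-n\int_A^{A+a_0/n}f .
\]
Taylor-expand about $B$, and likewise about $A$ with $b_0$ replaced by $a_0$:
\begin{align*}
n\int_B^{B+b_0/n}f&=b_0f(B)+\frac{b_0^2f'(B)}{2n}+\frac{b_0^3f''(B)}{6n^2}+\mathcal O\big(n^{-3}\mathfrak m_{B,n}(f''')\big),\\
f(b_n/n)&=f(B)+\frac{b_0f'(B)}{n}+\frac{b_0^2f''(B)}{2n^2}+\mathcal O\big(n^{-3}\mathfrak m_{B,n}(f''')\big),\\
\frac{f'(b_n/n)}{n}&=\frac{f'(B)}{n}+\frac{b_0f''(B)}{n^2}+\mathcal O\big(n^{-3}\mathfrak m_{B,n}(f''')\big).
\end{align*}
Here $b_0$ is bounded, so all the Taylor remainders are controlled by $\mathfrak m_{B,n}(f''')$ on the interval between $B$ and $b_n/n$.

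\textbf{Step 3 (collecting coefficients).} At $B$ the coefficients combine as follows:
\begin{align*}
f(B):&\quad b_0+\tfrac12=\tfrac{1+2b_0}{2},\\
n^{-1}f'(B):&\quad \tfrac{b_0^2}{2}+\tfrac{b_0}{2}+\tfrac1{12}=\tfrac{1+6b_0+6b_0^2}{12},\\
n^{-2}f''(B):&\quad \tfrac{b_0^3}{6}+\tfrac{b_0^2}{4}+\tfrac{b_0}{12}=\tfrac{b_0+3b_0^2+2b_0^3}{12}.
\end{align*}
At $A$ the computation is identical, except that the integral shift and the $g'$ boundary term enter with a minus sign. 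This gives $\tfrac{1-2a_0}{2}$, $\tfrac{-1+6a_0-6a_0^2}{12}$ and $\tfrac{-a_0+3a_0^2-2a_0^3}{12}$, which matches \eqref{sum f asymp gap NEW}.

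There is no real obstacle. The only care needed is bookkeeping of signs at the lower endpoint, and checking that the $C^4$ hypothesis on an interval containing $A$, $B$, $a_n/n$ and $b_n/n$ justifies every Taylor remainder used above.
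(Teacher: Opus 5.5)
Your argument is correct. The paper gives no proof of its own here and simply quotes \cite[Lemma~3.4]{C2021}. Your route is the standard derivation of that formula: the fourth-order integer Euler--Maclaurin formula for $g(x)=f(x/n)$ with the periodic-Bernoulli remainder, followed by Taylor re-expansion of $n\int_{a_n/n}^{b_n/n}f$, $f(a_n/n)$, $f(b_n/n)$, $f'(a_n/n)$ and $f'(b_n/n)$ about $A$ and $B$. The coefficient bookkeeping at both endpoints (including the signs at $A$) and the placement of every remainder into the two error terms of \eqref{sum f asymp gap NEW} all check out.
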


Applying Lemma~\ref{lemma:Riemann sum NEW} to the bulk norm asymptotics gives the following summation formula.
\begin{lemma}\label{lemma:general sum}
Define
\begin{align}
\label{def of Lammda us}
\Lambda(u;s)&:=s^2\lambda'(u)^2+s\lambda''(u)+3s\frac{\lambda'(u)}{u}-s\lambda'(u)\frac{\partial_u\Delta Q(u)}{\Delta Q(u)},\\
\label{def of L us}
L(u;\alpha,s)&:=\alpha^2\ell'(u)^2+\alpha \ell''(u)+3\alpha\frac{\ell'(u)}{u}-\alpha \ell'(u)\frac{\partial_u\Delta Q(u)}{\Delta Q(u)}+2\alpha s\lambda'(u)\ell'(u),
\end{align}
Fix $a_*>0$.  Then, uniformly for bounded $A,a_0,B,b_0$ satisfying the
hypotheses of Lemma~\ref{lemma:Riemann sum NEW}, with $a_*\leq A<B\leq1$
and $0\leq An+a_0\leq Bn+b_0\leq n-1$,
\begin{align}
\begin{split}
\sum_{j=An+a_0}^{Bn+b_0}\log h_{n,j} 
&=
H_{1}n^2+H_2n\log n+H_3n+H_5\log n+H_6+\mathcal{O}(n^{-1}),
\end{split}    
\end{align}
where $H_k\equiv H_k(A,a_0,B,b_0)$, for $k\in\{1,2,3,5,6\}$, are defined by
\begin{align}
\begin{split}
H_1&:=    
-
2\int_{r_{A}}^{r_{B}}\Bigl( 
q(u)-uq'(u)\log u
\Bigr)u\Delta Q(u)\,du
\end{split} 
\\
\begin{split}
H_2&:=
-\frac{1}{2}\Bigl(B-A\Bigr)
\end{split}
\\
\begin{split}
H_3&:=    
\Bigl(B-A\Bigr)\log\sqrt{2\pi}
-\frac{1}{2}\Bigl((1-2a_0)V_{A}(r_{A})+(1+2b_0)V_{B}(r_{B})\Bigr)
\\
&\quad
+\int_{r_{A}}^{r_{B}}2\log u\cdot u\Delta Q(u)\,du
+\int_{r_{A}}^{r_{B}}2\mathsf{k}(u)u\Delta Q(u)\,du
\\
&\quad
-\int_{r_{A}}^{r_{B}}u\Delta Q(u)\log \Delta Q(u)\,du
\end{split} 
\\
\begin{split}
H_5&:=    
-\frac{1}{2}(b_0-a_0+1)
\end{split} 
\\
\begin{split}
H_6&:=   
(b_0-a_0+1)\log\sqrt{2\pi}
+\frac{1}{6}(-1+6a_0-6a_0^2)\log r_{A}
+\frac{1}{6}(1+6b_0+6b_0^2)
\log r_{B}\Bigr)
\\
&\quad
+\frac{(1-2a_0)\log r_{A}+(1+2b_0)\log r_{B}}{2}
+\frac{(1-2a_0)\mathsf{k}(r_{A})+(1+2b_0)\mathsf{k}(r_{B})}{2}
\\
&\quad
-\frac{(1-2a_0)\log\Delta Q(r_{A})+(1+2b_0)\log\Delta Q(r_{B})}{4}
\\
&\quad
+\frac{1}{24}\int_{r_{A}}^{r_{B}}\Bigl(\frac{\partial \Delta Q(u)}{\Delta Q(u)}\Bigr)^2u\,du-\frac{1}{16}\bigg[
\frac{r_{B}\partial_u\Delta Q(r_{B})}{\Delta Q(r_{B})}-\frac{r_{A}\partial_u\Delta Q(r_{A})}{\Delta Q(r_{A})}
\bigg]\\
&\quad +\frac{1}{3}\log\frac{\Delta Q(r_{A})}{\Delta Q(r_{B})}+\frac{1}{6}\log\frac{r_{B}}{r_{A}}+\frac{1}{4}\int_{r_{A}}^{r_{B}}u\bigl( \Lambda(u;s)+L(u;\alpha,s)\bigr)\,du.
\end{split} 
\end{align}
\end{lemma}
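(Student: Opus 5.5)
The plan is to insert the bulk norm asymptotics \eqref{def of Dn leq j leq n-1} from Lemma~\ref{lemma: 0 j j1-} into each summand and then apply Lemma~\ref{lemma:Riemann sum NEW} to every resulting smooth function of $\tau=j/n$. Since $A\geq a_*>0$, every index satisfies $j\geq a_*n-\mathcal O(1)$, so the clause after \eqref{def of Dn leq j leq n-1} lets us use the improved error $\mathcal O(n^{-2})$. Uniformly in the range we therefore have
\[
\log h_{n,j}=\tfrac12\log(2\pi)-\tfrac12\log n+\log r_\tau-\tfrac12\log\Delta Q(r_\tau)+\mathsf k(r_\tau)-nV_\tau(r_\tau)+\frac{\mathcal A(r_\tau)}{n}+\mathcal O(n^{-2}).
\]
After summation over at most $n$ indices, the remainder contributes $\mathcal O(n^{-1})$. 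The functions $\tau\mapsto r_\tau$, $V_\tau(r_\tau)$ and $\mathcal A(r_\tau)$ are $C^4$ on $[a_*,1]$, because $r_\tau$ is smooth with $dr_\tau/d\tau=(2r_\tau\Delta Q(r_\tau))^{-1}$ and $\Delta Q>0$ near the droplet. The hypotheses of Lemma~\ref{lemma:Riemann sum NEW} therefore hold with uniformly bounded derivative maxima.

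\textbf{The $n$-independent pieces.} The constant $\tfrac12\log(2\pi)-\tfrac12\log n$ is summed exactly over $(B-A)n+b_0-a_0+1$ terms. This produces the $H_2$ term, the $\log\sqrt{2\pi}$ parts of $H_3$ and $H_6$, and $H_5$. For $f(\tau)=\log r_\tau-\tfrac12\log\Delta Q(r_\tau)+\mathsf k(r_\tau)$, Lemma~\ref{lemma:Riemann sum NEW} gives $n\int_A^Bf\,d\tau$ plus boundary terms of order one. I will change variables $\tau=\tau_u$, $d\tau=2u\Delta Q(u)\,du$, which turns the integral into the $\log u$, $\mathsf k$ and $\log\Delta Q$ integrals in $H_3$. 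The boundary terms $\tfrac12[(1-2a_0)f(A)+(1+2b_0)f(B)]$ become the corresponding endpoint entries of $H_6$.

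\textbf{The leading term $-nV_\tau(r_\tau)$.} Here Lemma~\ref{lemma:Riemann sum NEW} must be applied to $g(\tau)=V_\tau(r_\tau)$ with the prefactor $n$, so the expansion is needed through the $f''$ correction.
\begin{itemize}
\item The integral gives $-n^2\int_A^Bg$. Using $\frac{d}{d\tau}V_\tau(r_\tau)=-2\log r_\tau$ (envelope theorem), integration by parts in the $u$-variable produces $H_1$.
\item The half-endpoint terms give $-\tfrac n2[(1-2a_0)V_A(r_A)+(1+2b_0)V_B(r_B)]$ in $H_3$.
\item The $f'/(12n)$ term, multiplied by $n$, gives the order-one terms $\tfrac16(\cdots)\log r_A$ and $\tfrac16(\cdots)\log r_B$, since $g'=-2\log r_\tau$.
\item The $f''/(12n^2)$ term, multiplied by $n$, is $\mathcal O(n^{-1})$, as are the remaining error terms.
\end{itemize}

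\textbf{The $\mathcal A/n$ term.} Summed over the range, this gives $\int_A^B\mathcal A(r_\tau)\,d\tau+\mathcal O(n^{-1})=\int_{r_A}^{r_B}\mathcal A(u)\,2u\Delta Q(u)\,du+\mathcal O(n^{-1})$. Identifying this with the remaining part of $H_6$ is the main computational step:
\[
\frac1{24}\int\Bigl(\frac{\partial\Delta Q}{\Delta Q}\Bigr)^2u\,du,\quad -\frac1{16}[\cdots]_{r_A}^{r_B},\quad \frac13\log\frac{\Delta Q(r_A)}{\Delta Q(r_B)},\quad \frac16\log\frac{r_B}{r_A},\quad \frac14\int u(\Lambda+L)\,du.
\]
To carry this out, I will express $d_2,d_3,d_4$ through \eqref{def of V tau relationship 1} at $V'_\tau(r_\tau)=0$, namely $d_2=4\Delta Q$, $d_3=4\partial_r\Delta Q-4\Delta Q/r$, and similarly for $d_4$. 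Then $2u\Delta Q\,\mathcal B(u)$ is integrated, with total derivatives such as $\partial_r^2\Delta Q/\Delta Q$ and $\partial_r\Delta Q/\Delta Q$ integrated by parts to give the boundary terms. For the $\mathsf k$-dependent part of $\mathcal A$, expanding $\mathsf k=s\lambda+\alpha\ell$ should reproduce $\tfrac14u(\Lambda+L)$ exactly, with no integration needed: for example, $2u\Delta Q\cdot\mathsf k'^2/(2d_2)=\tfrac14u\,\mathsf k'^2$.

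This bookkeeping of the rational-in-$\Delta Q$ terms and their integration by parts is where I expect the difficulty. In particular, I expect sign and constant errors in matching the $\tfrac13\log\Delta Q$ and $\tfrac16\log r$ boundary terms, which come from the $\tfrac1{12}\frac1{r^2\Delta Q}$ and $-\tfrac{19}{96r}$ pieces of $\mathcal B$, combined with the $f'$ Euler--Maclaurin contribution. I would check these constants against the Ginibre case $Q=|z|^2$, where $h_{n,j}=\Gamma(j+1)/n^{j+1}$ and Stirling's formula gives the expansion explicitly.
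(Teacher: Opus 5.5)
Your proposal is correct and follows the paper's route. You insert the bulk Laplace expansion, using the $\mathcal O(n^{-2})$ error available because $j\ge a_*n-\mathcal O(1)$; you apply Lemma~\ref{lemma:Riemann sum NEW} termwise in $\tau=j/n$; and you change variables with $d\tau=2u\Delta Q(u)\,du$.

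The bookkeeping you were worried about closes cleanly. $H_1$ follows by direct substitution, since $V_{\tau_u}(u)=q(u)-uq'(u)\log u$, so no integration by parts is needed. For $H_6$, write $\partial_u^2\Delta Q/\Delta Q=(\partial_u\Delta Q/\Delta Q)'+(\partial_u\Delta Q/\Delta Q)^2$. Then the integral of $2u\Delta Q\,\mathcal B$ alone produces the $\tfrac1{24}$, $-\tfrac1{16}$, $\tfrac13$ and $\tfrac16$ terms, through the coefficients $\tfrac{5}{48}-\tfrac1{16}=\tfrac1{24}$ and $\tfrac1{16}-\tfrac{19}{48}=-\tfrac13$. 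These are separate from the $f'$ Euler--Maclaurin $\log r_A$ and $\log r_B$ terms and do not combine with them.
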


\begin{proof}
Insert the uniform expansion \eqref{def of Dn leq j leq n-1} into the sum and expand the logarithm through order $n^{-1}$. Apply Lemma~\ref{lemma:Riemann sum NEW} separately to the resulting functions of $\tau=j/n$. The change of variables $\tau=\frac12 r q'(r)$ gives $d\tau=2r\Delta Q(r)\,dr$ and converts the bulk integrals to the displayed radial integrals. The endpoint terms in Euler--Maclaurin give the $a_0$- and $b_0$-dependent contributions. Finally, the uniform remainder in \eqref{def of Dn leq j leq n-1}, summed over $O(n)$ indices, is $\mathcal O(n^{-1})$. Collecting equal powers of $n$ and $\log n$ yields $H_1,H_2,H_3,H_5$, and $H_6$ as stated.
\end{proof}

\subsection{Asymptotics of \texorpdfstring{$S_3$}{S3 (annular)}: the oscillatory contribution}
Define
\begin{align}
    \begin{split}
\label{def of tilde Theta n}
    \widetilde{\Theta}_n&:=\sum_{j=0}^{\infty}\log\Bigl(1+\Bigl(\frac{\rho_1}{\rho_2}\Bigr)^{2(j+\theta_{\star}-1)}\frac{\sigma_{\star}-\tau_{\rho_1}}{\tau_{\rho_2}-\sigma_{\star}}e^{\mathsf{k}(\rho_2)-\mathsf{k}(\rho_1)}\Bigr)
\\
&\quad 
+
\sum_{j=0}^{+\infty}\log\Bigl( 
1+\Bigl(\frac{\rho_1}{\rho_2}\Bigr)^{2(j+1-\theta_{\star}+1)}\frac{\tau_{\rho_2}-\sigma_{\star}}{\sigma_{\star}-\tau_{\rho_1}}e^{\mathsf{k}(\rho_1)-\mathsf{k}(\rho_2)}
\Bigr).
    \end{split}
\end{align} 
The middle range contains two competing hard-endpoint contributions. Their crossover produces the oscillatory term $\widetilde\Theta_n$ in the following expansion.
\begin{lemma}\label{lemma:S3 theta function part}
As $n\to\infty$,
\begin{align}
\label{def of expansion S3 part annulus}
    S_3^{(\mathrm{an})}=\mathsf{F}_1n^2
    +\mathsf{F}_2n\log n 
    +\mathsf{F}_3n
    +\mathsf{F}_5\log n 
    +\mathsf{F}_6
    +\widetilde{\Theta}_{n}+\mathcal{O}\Bigl(\frac{(\log n)^2}{n}\Bigr),
\end{align}
where 
\begin{align*}
\mathsf{F}_1&:=
\frac{(q(\rho_2)-q(\rho_1))^2}{4\log\frac{\rho_2}{\rho_1}}+A_{1,+}q(\rho_1)-A_{1,+}^2\log\rho_1-B_{2,-}q(\rho_2)+B_{2,-}^2\log\rho_2
\\
&\quad
+2\int_{r_{A_{1,+}}}^{r_{B_{2,-}}}\Bigl( 
q(u)-uq'(u)\log u
\Bigr)u\Delta Q(u)\,du, 
\\
\mathsf{F}_2&:=
-\frac{1}{2}(B_{2,-}-A_{1,+}),
\\
\mathsf{F}_3&:=
(2\log \rho_2+\mathsf{k}(\rho_2))B_{2,-}-(2\log \rho_1+\mathsf{k}(\rho_1))A_{1,+}
-(B_{2,-}-A_{1,+})\log\sqrt{2\pi}
\\
&\quad
-\frac{2\theta_{2,-}^{(n,\epsilon)}-1}{2}\Bigl(q(\rho_2)-2B_{2,-}\log\rho_2\Bigr)
-\frac{2\theta_{1,+}^{(n,\epsilon)}-1}{2}\Bigl( q(\rho_1)-2A_{1,+}\log\rho_1\Bigr)
\\
&\quad+
B_{2,-}-A_{1,+}
+(A_{1,+}-\tau_{\rho_1})\log (A_{1,+}-\tau_{\rho_1})
+(\tau_{\rho_2}-B_{2,-})\log (\tau_{\rho_2}-B_{2,-})
\\
&\quad
+\frac{1}{2}(2\theta_{1,+}^{(n,\epsilon)}-1)V_{A_{1,+}}(r_{A_{1,+}})
+\frac{1}{2}(2\theta_{2,-}^{(n,\epsilon)}-1)V_{B_{2,-}}(r_{B_{2,-}})
\\
&\quad
-\int_{r_{A_{1,+}}}^{r_{B_{2,-}}}2\log u\cdot u\Delta Q(u)\,du
-\int_{r_{A_{1,+}}}^{r_{B_{2,-}}}2\mathsf{k}(u)u\Delta Q(u)\,du
\\
&\quad
+\int_{r_{A_{1,+}}}^{r_{B_{2,-}}}u\Delta Q(u)\log \Delta Q(u)\,du
\\
&\quad -(\sigma_{\star}
-\tau_{\rho_1})\log (\sigma_{\star}-\tau_{\rho_1})
-(\tau_{\rho_2}-\sigma_{\star})\log (\tau_{\rho_2}-\sigma_{\star})
-\Bigl(2\log \frac{\rho_2}{\rho_1}+\mathsf{k}(\rho_2)-\mathsf{k}(\rho_1)\Bigr)\sigma_{\star}, 
\\
    \mathsf{F}_5&:=
\frac{1}{2}(1-\theta_{2,-}^{(n,\epsilon)}-\theta_{1,+}^{(n,\epsilon)}),
\\
\mathsf{F}_6&:= 
 \frac{\rho_1^2\Delta Q(\rho_1)}{\sigma_{\star}-\tau_{\rho_1}}
+\frac{\rho_2^2\Delta Q(\rho_2)}{\tau_{\rho_2}-\sigma_{\star}}
+\frac{\rho_1^2\Delta Q(\rho_1)}{\tau_{\rho_1}}
-\frac{\rho_2^2\Delta Q(\rho_2)}{\tau_{\rho_2}}
-\frac{\rho_1^2\Delta Q(\rho_1)}{\epsilon\tau_{\rho_1}}
-\frac{\rho_2^2\Delta Q(\rho_2)}{\epsilon\tau_{\rho_2}}
\\
&\quad
-\Bigl(\frac{1}{2}\rho_1\mathsf{k}'(\rho_1)+1\Bigr)\log(\sigma_{\star}-\tau_{\rho_1})
+\Bigl(\frac{1}{2}\rho_2\mathsf{k}'(\rho_2)+1\Bigr)\log(\tau_{\rho_2}-\sigma_{\star})
\\
&\quad
+\Bigl(\frac{1}{2}\rho_1\mathsf{k}'(\rho_1)+1\Bigr)\log(A_{1,+}-\tau_{\rho_1})
-\Bigl(\frac{1}{2}\rho_2\mathsf{k}'(\rho_2)+1\Bigr)\log(\tau_{\rho_2}-B_{2,-})
\\
&\quad 
-(\theta_{2,-}^{(n,\epsilon)}+\theta_{1,+}^{(n,\epsilon)}-1)\log\sqrt{2\pi}
+(2\log\rho_1+\mathsf{k}(\rho_1))\theta_{1,+}^{(n,\epsilon)}
-(2\log\rho_1+\mathsf{k}(\rho_1))\theta_{\star}
\\
&\quad
+(2\log\rho_2+\mathsf{k}(\rho_2))\theta_{2,-}^{(n,\epsilon)}
+(2\log\rho_2+\mathsf{k}(\rho_2))(\theta_{\star}-1)
\\
&\quad +\frac{(-1+6(-\theta_{1,+}^{(n,\epsilon)}+1)-6(-\theta_{1,+}^{(n,\epsilon)}+1)^2)+(1-6\theta_{\star}+6\theta_{\star}^2)}{6}\log\rho_1
\\
&\quad +
\frac{(-1+6(-\theta_{\star}+1)-6(-\theta_{\star}+1)^2)+(1+6(\theta_{2,-}^{(n,\epsilon)}-1)+6(\theta_{2,-}^{(n,\epsilon)}-1)^2)}{6}\log\rho_2
\\
&\quad  -\frac{(2\theta_{1,+}^{(n,\epsilon)}-1)\log(A_{1,+}-\tau_{\rho_1})-(2\theta_{\star}-1)\log(\sigma_{\star}-\tau_{\rho_1})}{2}
\\
&\quad -\frac{(2\theta_{\star}-1)\log(\tau_{\rho_2}-\sigma_{\star})+(2\theta_{2,-}^{(n,\epsilon)}-1)\log(\tau_{\rho_2}-B_{2,-})}{2}
\\
&\quad
-\frac{1}{6}
(-1+6(1-\theta_{1,+}^{(n,\epsilon)})-6(1-\theta_{1,+}^{(n,\epsilon)})^2)\log r_{A_{1,+}}
\\
&\quad
-\frac{1}{6}(1+6(\theta_{2,-}^{(n,\epsilon)}-1)+6(\theta_{2,-}^{(n,\epsilon)}-1)^2)\log r_{B_{2,-}}
\\
&\quad
-\frac{(2\theta_{1,+}^{(n,\epsilon)}-1)\log r_{A_{1,+}}+(2\theta_{2,-}^{(n,\epsilon)}-1)\log r_{B_{2,-}}}{2}
\\
&\quad
-\frac{(2\theta_{1,+}^{(n,\epsilon)}-1)\mathsf{k}(r_{A_{1,+}})+(2\theta_{2,-}^{(n,\epsilon)}-1)\mathsf{k}(r_{B_{2,-}})}{2}
\\
&\quad
+\frac{(2\theta_{1,+}^{(n,\epsilon)}-1)\log\Delta Q(r_{A_{1,+}})+(2\theta_{2,-}^{(n,\epsilon)}-1)\log\Delta Q(r_{B_{2,-}})}{4}
\\
&\quad
-\frac{1}{24}\int_{r_{A_{1,+}}}^{r_{B_{2,-}}}\Bigl(\frac{\partial_u \Delta Q(u)}{\Delta Q(u)}\Bigr)^2u\,du
+\frac{1}{16}\bigg[
\frac{r_{B_{2,-}}\partial_u\Delta Q(r_{B_{2,-}})}{\Delta Q(r_{B_{2,-}})}-\frac{r_{A_{1,+}}\partial_u\Delta Q(r_{A_{1,+}})}{\Delta Q(r_{A_{1,+}})}
\bigg]\\
&\quad -\frac{1}{3}\log\frac{\Delta Q(r_{A_{1,+}})}{\Delta Q(r_{B_{2,-}})}
-\frac{1}{6}\log\frac{r_{B_{2,-}}}{r_{A_{1,+}}}
-\frac{1}{4}\int_{r_{A_{1,+}}}^{r_{B_{2,-}}}u\bigl( \Lambda(u;s)+L(u;\alpha,s)\bigr)\,du.
\end{align*}   
\end{lemma}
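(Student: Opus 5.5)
We write $S_3^{(\mathrm{an})}=\sum_{j}\log h^{(\mathrm{an})}_{n,j}-\sum_j\log h_{n,j}$ over $j_{1,+}+1\le j\le j_{2,-}-1$, and treat the two sums separately. We parametrize the endpoints as $j_{1,+}+1=A_{1,+}n+\theta_{1,+}^{(n,\epsilon)}$ and $j_{2,-}-1=B_{2,-}n+\theta_{2,-}^{(n,\epsilon)}$ with bounded offsets. The splitting point is $j_\star=n\sigma_\star$, with $\lfloor j_\star\rfloor=j_\star-\theta_\star$ (so $\theta_\star\in[0,1)$ is the fractional part). We then split the sum into $j\le\lfloor j_\star\rfloor$ and $j\ge\lfloor j_\star\rfloor+1$.

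\textbf{The unconstrained norms.} The sum $\sum\log h_{n,j}$ is handled directly by Lemma~\ref{lemma:general sum} with $(A,a_0,B,b_0)=(A_{1,+},\theta_{1,+}^{(n,\epsilon)},B_{2,-},\theta_{2,-}^{(n,\epsilon)})$. This accounts for the terms in $\mathsf F_k$ involving integrals over $[r_{A_{1,+}},r_{B_{2,-}}]$, $V_{A_{1,+}}(r_{A_{1,+}})$, $\log\Delta Q(r_{\cdot})$, $\Lambda$ and $L$, all with the opposite sign.

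\textbf{The constrained norms.} For $\sum\log h^{(\mathrm{an})}_{n,j}$ we insert Lemmas~\ref{lemma: j1+ j jdiamond} and~\ref{lemma: jstar leq j leq j2-}. On the lower half, $\log h^{(\mathrm{an})}_{n,j}$ equals
\[
\log(\rho_1^2/n)+\mathsf k(\rho_1)-nV_{\tau}(\rho_1)-\log(\tau-\tau_{\rho_1})-\mathfrak a_1/n,
\]
plus the logarithm of the correction factor $1+\frac{\tau-\tau_{\rho_1}}{\tau_{\rho_2}-\tau}e^{\mathsf k(\rho_2)-\mathsf k(\rho_1)}(\rho_1/\rho_2)^{2(j_\star-j-1)}$. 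The upper half is symmetric. The smooth parts are functions of $\tau=j/n$, since $nV_\tau(\rho_1)=nq(\rho_1)-2j\log\rho_1$. We apply the Euler--Maclaurin Lemma~\ref{lemma:Riemann sum NEW} on $[A_{1,+},\sigma_\star]$ and $[\sigma_\star,B_{2,-}]$, with offsets $\theta_{1,+}^{(n,\epsilon)}$, $-\theta_\star$ and $1-\theta_\star$, $\theta_{2,-}^{(n,\epsilon)}$.

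\begin{itemize}
\item The quadratic functions $nq(\rho_k)-2n\tau\log\rho_k$ are summed exactly. Together with $\sigma_\star=(q(\rho_2)-q(\rho_1))/(2\log(\rho_2/\rho_1))$, this produces the $n^2$ term $(q(\rho_2)-q(\rho_1))^2/(4\log\frac{\rho_2}{\rho_1})$ and the boundary terms $A_{1,+}q(\rho_1)-A_{1,+}^2\log\rho_1$, etc.
\item The terms $-\log n$ times the number of summands give the $n\log n$ and $\log n$ contributions, i.e.\ $\mathsf F_2$ and parts of $\mathsf F_5$.
\item $\log(\tau-\tau_{\rho_1})$ integrates to the $x\log x$ terms. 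Its Euler--Maclaurin $f'$ corrections, together with $\sum\mathfrak a_1/n$, give the order-one quantities $\rho_k^2\Delta Q(\rho_k)/(\cdot)$ and the $\log(\sigma_\star-\tau_{\rho_1})$ terms. Here we use $v_{2,1}=4\Delta Q(\rho_1)-\eta_1/\rho_1$ to integrate $\int\mathfrak a_1\,d\tau$ explicitly, and the antiderivative of $1/(\tau-\tau_{\rho_1})^2$ yields the $\epsilon$-dependent constants.
\end{itemize}

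\textbf{The correction factors (main obstacle).} The correction factors are where the difficulty lies. Write $j=\lfloor j_\star\rfloor-k$ with $k\ge0$, so that $j_\star-j-1=k+\theta_\star-1$. The prefactor $\frac{\tau-\tau_{\rho_1}}{\tau_{\rho_2}-\tau}$ equals $\frac{\sigma_\star-\tau_{\rho_1}}{\tau_{\rho_2}-\sigma_\star}+\mathcal O(k/n)$.

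Because the geometric factor $(\rho_1/\rho_2)^{2k}$ decays, we cut at $k\le K=C\log n$ and bound the tail by $\mathcal O(n^{-1})$. Replacing the prefactor by its value at $\sigma_\star$ costs $\sum_k(k/n)(\rho_1/\rho_2)^{2k}=\mathcal O(1/n)$. The sum over $k$ then extends to infinity and gives the first series of $\widetilde\Theta_n$. The upper half similarly gives the second series, with $j=\lfloor j_\star\rfloor+1+k$.

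We must also check the cross errors. The relative $\mathcal O(n^{-1}(\rho_1/\rho_2)^{2|j-j_\star|})$ remainders in the two lemmas sum to $\mathcal O(1/n)$. The $\mathcal O(n^{-2})$ remainders, summed over $O(n)$ indices, also give $\mathcal O(1/n)$. Since $\log$ of the factors is bounded by $\mathcal O(\log n)$ uniformly near the crossover, the total error is $\mathcal O((\log n)^2/n)$.

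\textbf{Bookkeeping.} The remaining work is to collect terms. The constants $\pm(2\log\rho_k+\mathsf k(\rho_k))\theta$ and the $\log\rho_k$ polynomial terms in $\theta_\star$ arise from the Euler--Maclaurin endpoint contributions at $\sigma_\star$ with offsets $-\theta_\star$ and $1-\theta_\star$. These must be matched carefully against the stated $\mathsf F_3$ and $\mathsf F_6$.
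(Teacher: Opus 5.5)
Your route is the paper's: numerator minus denominator, Lemma~\ref{lemma:general sum} for $\sum\log h_{n,j}$, Lemmas~\ref{lemma: j1+ j jdiamond} and~\ref{lemma: jstar leq j leq j2-} on either side of $\lfloor j_\star\rfloor$, Lemma~\ref{lemma:Riemann sum NEW} on the smooth parts, and localization plus freezing of the crossover factors at $\sigma_\star$ to produce the two series of $\widetilde\Theta_n$. Your freezing bound $\mathcal O(1/n)$ is even stronger than needed. The gap is in the bookkeeping you defer, which is the substance here because the lemma \emph{is} the explicit formula: two of the inputs you fix make that matching fail.

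\textbf{Endpoint offsets.} With the paper's definitions, $j_{1,+}=\lfloor nA_{1,+}\rfloor=nA_{1,+}-\theta_{1,+}^{(n,\epsilon)}$ and $j_{2,-}=\lceil nB_{2,-}\rceil=nB_{2,-}+\theta_{2,-}^{(n,\epsilon)}$. Hence
\[
j_{1,+}+1=nA_{1,+}+\bigl(1-\theta_{1,+}^{(n,\epsilon)}\bigr),\qquad j_{2,-}-1=nB_{2,-}+\bigl(\theta_{2,-}^{(n,\epsilon)}-1\bigr).
\]
Your $nA_{1,+}+\theta_{1,+}^{(n,\epsilon)}$ is not an integer in general, and your $nB_{2,-}+\theta_{2,-}^{(n,\epsilon)}$ equals $j_{2,-}$, not $j_{2,-}-1$. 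Your offsets $-\theta_\star$ and $1-\theta_\star$ at $\sigma_\star$ are fine. The correct outer values $a_0=1-\theta_{1,+}^{(n,\epsilon)}$ and $b_0=\theta_{2,-}^{(n,\epsilon)}-1$, used in the denominator and at the outer endpoints of both numerator halves, generate the factors $1-2a_0=2\theta_{1,+}^{(n,\epsilon)}-1$ and $1+2b_0=2\theta_{2,-}^{(n,\epsilon)}-1$ that appear throughout $\mathsf F_3$ and $\mathsf F_6$. For instance, the numerator has $n(B_{2,-}-A_{1,+})+\theta_{1,+}^{(n,\epsilon)}+\theta_{2,-}^{(n,\epsilon)}-1$ summands, each carrying $-\log n$, while $H_5=-\frac12(\theta_{1,+}^{(n,\epsilon)}+\theta_{2,-}^{(n,\epsilon)}-1)$. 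Their difference is exactly $\mathsf F_5$. Your offsets give $-\frac12(\theta_{2,-}^{(n,\epsilon)}-\theta_{1,+}^{(n,\epsilon)}+1)$ instead. The constants $(2\log\rho_1+\mathsf k(\rho_1))(\theta_{1,+}^{(n,\epsilon)}-\theta_\star)$, $(2\log\rho_2+\mathsf k(\rho_2))(\theta_{2,-}^{(n,\epsilon)}+\theta_\star-1)$ and the $\log\rho_k$-polynomials in $\mathsf F_6$ come from the same counts, so they would also be off.

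\textbf{Sign of $v_{2,1}$.} In this range $V_\tau'(\rho_1)=\frac{2}{\rho_1}(\tau_{\rho_1}-\tau)=-\eta_1$. Then \eqref{def of V tau relationship 1} gives $v_{2,1}=4\Delta Q(\rho_1)-V_\tau'(\rho_1)/\rho_1=4\Delta Q(\rho_1)+\eta_1/\rho_1$. The minus sign you wrote is correct only at $\rho_2$, where $\eta_2=+V_\tau'(\rho_2)$. With the right sign,
\[
\mathfrak a_1=\frac{\rho_1^2\Delta Q(\rho_1)}{(\tau-\tau_{\rho_1})^2}+\frac{\frac12\rho_1\mathsf k'(\rho_1)+1}{\tau-\tau_{\rho_1}}.
\]
So $-\frac1n\sum_j\mathfrak a_1$ contributes $\frac{\rho_1^2\Delta Q(\rho_1)}{\sigma_\star-\tau_{\rho_1}}-\frac{\rho_1^2\Delta Q(\rho_1)}{A_{1,+}-\tau_{\rho_1}}-\bigl(\frac12\rho_1\mathsf k'(\rho_1)+1\bigr)\log\frac{\sigma_\star-\tau_{\rho_1}}{A_{1,+}-\tau_{\rho_1}}$. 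This is what appears in $\mathsf F_6$, once you write $A_{1,+}-\tau_{\rho_1}=\epsilon\tau_{\rho_1}/(1-\epsilon)$. With your sign, the coefficient $\frac12\rho_1\mathsf k'(\rho_1)+1$ would become $\frac12\rho_1\mathsf k'(\rho_1)$.

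Relatedly, the constants $\rho_k^2\Delta Q(\rho_k)/(\cdot)$ and $(\frac12\rho_k\mathsf k'(\rho_k)+1)\log(\cdot)$ come entirely from $\frac1n\sum\mathfrak a_1$ and $\frac1n\sum\widetilde{\mathfrak a}_1$. The order-one part of $\sum_j\log(\tau_j-\tau_{\rho_1})$ is only its $f$-endpoint term $\frac{2\theta-1}{2}\log(\cdot)$; its $f'$ terms are $\mathcal O(1/n)$. With these corrections your plan reproduces the paper's computation.
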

To prove Lemma~\ref{lemma:S3 theta function part}, decompose $S_3^{(\mathrm{an})}$ as
\begin{equation}
\label{def of S3 annulus two parts}
S_3^{(\mathrm{an})}=S_3^{(\mathrm{top},\mathrm{an})}-S_3^{(\mathrm{bot},\mathrm{an})},
\end{equation}
where 
\[
S_3^{(\mathrm{top},\mathrm{an})}:=\sum_{j=j_{1,+}+1}^{j_{2,-}-1}\log h_{n,j}^{(\mathrm{an})}, \qquad
S_3^{(\mathrm{bot},\mathrm{an})}:=\sum_{j=j_{1,+}+1}^{j_{2,-}-1}\log h_{n,j}.
\]
Recall that $j_{\star}:=n\sigma_{\star}$. 
Let 
\[
\theta_{\star}:=j_{\star}-\lfloor j_{\star}\rfloor,\qquad A_{1,+}:=\frac{\tau_{\rho_1}}{1-\epsilon},\qquad B_{2,-}:=\frac{\tau_{\rho_2}}{1+\epsilon}, 
\]
and define
\begin{equation*}
    \theta_{1,+}^{(n,\epsilon)}:=\frac{n\tau_{\rho_1}}{1-\epsilon}-\Bigl\lfloor \frac{n\tau_{\rho_1}}{1-\epsilon}\Bigr\rfloor,\qquad \theta_{2,-}^{(n,\epsilon)}:=\Bigl\lceil \frac{n\tau_{\rho_2}}{1+\epsilon}\Bigr\rceil-\frac{n\tau_{\rho_2}}{1+\epsilon},
\end{equation*}
Then
\[
j_{1,+}+1=nA_{1,+}+1-\theta_{1,+}^{(n,\epsilon)},\qquad
j_{2,-}-1=nB_{2,-}+\theta_{2,-}^{(n,\epsilon)}-1.
\]
The expansion of $S_3^{(\mathrm{bot},\mathrm{an})}$ follows from Lemma~\ref{lemma:general sum} with these parameters. It remains to analyze the numerator sum, which we split at the switching index $j_\star$:
\[
   S_3^{(\mathrm{top},\mathrm{an})}=
    \sum_{j=j_{1,+}+1}^{\lfloor j_{\star} \rfloor}\log h_{n,j}^{(\mathrm{an})}
    +
    \sum_{j=\lfloor j_{\star} \rfloor+1}^{j_{2,-}-1}\log h_{n,j}^{(\mathrm{an})}.
\]
\begin{proof}[Proof of Lemma~\ref{lemma:S3 theta function part}]
Lemma~\ref{lemma: j1+ j jdiamond} gives
\begin{align*}
\sum_{j=j_{1,+}+1}^{\lfloor j_{\star} \rfloor}\log h_{n,j}^{(\mathrm{an})} &=
\bigl(-nq(\rho_1)+2\log \rho_1+\mathsf{k}(\rho_1)-\log n\bigr)\sum_{j=j_{1,+}+1}^{\lfloor j_{\star}\rfloor}1
+
(2\log\rho_1)n \sum_{j=j_{1,+}+1}^{\lfloor j_{\star}\rfloor}\frac{j}{n}
\\
&\quad-\sum_{j=j_{1,+}+1}^{\lfloor j_{\star}\rfloor}\log(\tau_j-\tau_{\rho_1})
+\sum_{j=j_{1,+}+1}^{\lfloor j_{\star}\rfloor}\log\Bigl(1+\frac{\tau_j-\tau_{\rho_1}}{\tau_{\rho_2}-\tau_j}e^{\mathsf{k}(\rho_2)-\mathsf{k}(\rho_1)}\Bigl(\frac{\rho_1}{\rho_2} \Bigr)^{2(j_{\star}-j-1)}\Bigr)
\\
&\quad
-\frac{1}{n}\sum_{j=j_{1,+}+1}^{\lfloor j_{\star}\rfloor}\mathfrak{a}_{1}+\sum_{j=j_{1,+}+1}^{\lfloor j_{\star}\rfloor}\mathcal{O}\Bigl(\frac{1}{n^2}+\frac{1}{n}\Bigl(\frac{\rho_1}{\rho_2}\Bigr)^{2(j_{\star}-j)}\Bigr).
\end{align*}
Since $\tau_j-\tau_{\rho_1}\geq\epsilon\tau_{\rho_1}/(1-\epsilon)>0$, Lemma~\ref{lemma:Riemann sum NEW}, with $A=A_{1,+}$, $a_0=1-\theta_{1,+}^{(n,\epsilon)}$, $B=\sigma_\star$, and $b_0=-\theta_\star$, yields
\begin{align*}
&\quad (2\log\rho_1)n\sum_{j=j_{1,+}+1}^{\lfloor j_{\star}\rfloor}\frac{j}{n}
\\
&=n^2(\sigma_{\star}^2-A_{1,+}^2)\log\rho_1+\log\rho_1\Bigl((2\theta_{1,+}^{(n,\epsilon)}-1)A_{1,+}-(2\theta_{\star}-1)\sigma_{\star}\Bigr) n
\\
&\quad+\frac{(-1+6(-\theta_{1,+}^{(n,\epsilon)}+1)-6(-\theta_{1,+}^{(n,\epsilon)}+1)^2)+(1-6\theta_{\star}+6\theta_{\star}^2)}{6}\log\rho_1+\mathcal{O}(\frac{1}{n}),
\\
&\quad \sum_{j=j_{1,+}+1}^{\lfloor j_{\star}\rfloor}\log(\tau_j-\tau_{\rho_1})
\\
&=
\Bigl( 
-\sigma_{\star}+A_{1,+}+(\sigma_{\star}-\tau_{\rho_1})\log (\sigma_{\star}-\tau_{\rho_1})-(A_{1,+}-\tau_{\rho_1})\log (A_{1,+}-\tau_{\rho_1})
\Bigr)n
\\
&\quad +\frac{(1-2(-\theta_{1,+}^{(n,\epsilon)}+1))\log(A_{1,+}-\tau_{\rho_1})+(1-2\theta_{\star})\log(\sigma_{\star}-\tau_{\rho_1})}{2}
\\
&\quad +\frac{1}{12n}\Bigl(
\frac{-1+6(-\theta_{1,+}^{(n,\epsilon)}+1)-6(-\theta_{1,+}^{(n,\epsilon)}+1)^2)}{A_{1,+}-\tau_{\rho_1}}
+
\frac{1-6\theta_{\star}+6\theta_{\star}^2}{\sigma_{\star}-\tau_{\rho_1}}
\Bigr)+\mathcal{O}(\frac{1}{n^2}), 
\\
&\quad \frac{1}{n}\sum_{j=j_{1,+}+1}^{\lfloor j_{\star}\rfloor}\mathfrak{a}_1
\\
&=
-\frac{\rho_1^2\Delta Q(\rho_1)}{\sigma_{\star}-\tau_{\rho_1}}+\frac{\rho_1^2\Delta Q(\rho_1)}{A_{1,+}-\tau_{\rho_1}}
+
\Bigl(\log(\sigma_{\star}-\tau_{\rho_1})-\log(A_{1,+}-\tau_{\rho_1})\Bigr)\Bigl(\frac{1}{2}\rho_1\mathsf{k}'(\rho_1)+1\Bigr)
\\
&\quad 
+\frac{(1-2(-\theta_{1,+}^{(n,\epsilon)}+1))\mathfrak{a}_1(A_{1,+})+(1-2\theta_{\star})\mathfrak{a}_1(\sigma_{\star})}{2n}
+\mathcal{O}(\frac{1}{n^2}).
\end{align*}
As in \cite[Eq.~(3.31) and Lemma~3.6]{C2021}, split the error sum a distance $M'\log n$ from $j_\star$ to obtain
\begin{align}
\begin{split}
\label{def of error osc part1}
&\quad\sum_{j=j_{1,+}+1}^{\lfloor j_{\star}\rfloor}\mathcal{O}\Bigl(\frac{1}{n^2}+\frac{1}{n}\Bigl(\frac{\rho_1}{\rho_2}\Bigr)^{2(j_{\star}-j)}\Bigr)
\\
&=
\sum_{j=j_{1,+}+1}^{\lfloor j_{\star}\rfloor-\lfloor M'\log n\rfloor}\mathcal{O}\Bigl(\frac{1}{n^2}+\frac{1}{n}\Bigl(\frac{\rho_1}{\rho_2}\Bigr)^{2(j_{\star}-j)}\Bigr)
+
\sum_{j=\lfloor j_{\star}\rfloor-\lfloor M'\log n\rfloor+1}^{\lfloor j_{\star}\rfloor}\mathcal{O}\Bigl(\frac{1}{n^2}+\frac{1}{n}\Bigl(\frac{\rho_1}{\rho_2}\Bigr)^{2(j_{\star}-j)}\Bigr)
\\
&=
\mathcal{O}(\frac{\log n}{n}), \qquad n\to+\infty.
\end{split}    
\end{align}
Let 
\begin{equation}
\label{def of osc part1 v1}
\mathcal{S}_{0}:=\sum_{j=j_{1,+}+1}^{\lfloor j_{\star}\rfloor}\log\Bigl(1+\frac{\tau_j-\tau_{\rho_1}}{\tau_{\rho_2}-\tau_j}e^{\mathsf{k}(\rho_2)-\mathsf{k}(\rho_1)}\Bigl(\frac{\rho_1}{\rho_2} \Bigr)^{2(j_{\star}-j-1)}\Bigr).
\end{equation}
After changing the summation index, the same geometric-tail estimate gives
\begin{align}
\begin{split}
\label{def of osc part1 v2}
\mathcal{S}_0&=\sum_{j=0}^{\lfloor j_{\star}\rfloor-j_{1,+}}\log\Bigl(1+\frac{-\frac{j}{n}-\frac{\theta_{\star}}{n}+\sigma_{\star}-\tau_{\rho_1}}{\tau_{\rho_2}+\frac{j}{n}+\frac{\theta_{\star}}{n}-\sigma_{\star}}e^{\mathsf{k}(\rho_2)-\mathsf{k}(\rho_1)}\Bigl(\frac{\rho_1}{\rho_2}\Bigr)^{2(j+\theta_{\star}-1)}\Bigr)
\\
&=
\sum_{j=0}^{\lfloor j_{\star}\rfloor-j_{1,+}}\log\Bigl(1+\Bigl(\frac{\rho_1}{\rho_2}\Bigr)^{2(j+\theta_{\star}-1)}f_0\Bigl(\frac{j+\theta_\star}{n}\Bigr)\Bigr)
+\mathcal{O}(\frac{\log n}{n}),
\end{split}    
\end{align}
where 
\[
f_0(x):=\frac{-x+\sigma_{\star}-\tau_{\rho_1}}{\tau_{\rho_2}+x-\sigma_{\star}}e^{\mathsf{k}(\rho_2)-\mathsf{k}(\rho_1)}.
\]
Since $f_0$ is smooth near zero and the remaining factor decays geometrically in $j$, the argument of \cite[Lemma~3.6]{C2021} gives
\begin{equation}
\mathcal{S}_0=\sum_{j=0}^{\infty}\log\Bigl(1+\Bigl(\frac{\rho_1}{\rho_2}\Bigr)^{2(j+\theta_{\star}-1)}f_0(0)\Bigr)
+\mathcal{O}(\frac{(\log n)^2}{n}),\qquad n\to+\infty.  
\end{equation}
The range $\lfloor j_\star\rfloor+1\leq j\leq j_{2,-}-1$ is treated symmetrically, using Lemma~\ref{lemma: jstar leq j leq j2-}. Subtracting the denominator expansion from Lemma~\ref{lemma:general sum} and collecting coefficients gives \eqref{def of expansion S3 part annulus}.
This completes the proof. 
\end{proof}
For $x\in\R$, $\rho\in(0,1)$, and $a>0$, define 
\[
\Theta(x;\rho,a)
:=x(x-1)\log\rho+x\log a
+\sum_{j=0}^{\infty}\log\Bigl(1+a\rho^{2(j+x)}\Bigr)
+\sum_{j=0}^{\infty}\log\Bigl(1+a^{-1}\rho^{2(j+1-x)}\Bigr). 
\]
Before analyzing the remaining sums, we rewrite \eqref{def of tilde Theta n} in terms of a Jacobi theta function.
\begin{lemma}[{\cite[Lemma~3.28]{C2021}}]
\label{lemma: lemma tilde Theta to Theta}
We have 
\begin{align*}
\Theta(x;\rho,a)
&=\frac{1}{2}\log\pi+\frac{1}{2}\log a-\frac{1}{4}\log\rho-\frac{1}{2}\log\log(\rho^{-1})
+\frac{(\log a)^2}{4\log(\rho^{-1})}
\\
&\quad
-\sum_{j=1}^{\infty}\log(1-\rho^{2j})+\log\theta\Bigl(x+\frac{\log (a\rho)}{2\log\rho}\Bigr|\frac{\pi i}{\log(\rho^{-1})}\Bigr). 
\end{align*}
\end{lemma}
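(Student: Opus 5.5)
The plan is to use the Jacobi triple product to turn the two infinite sums in $\Theta$ into a theta function with nome $\rho$. A modular transformation then converts it into the theta function with parameter $\pi i/\log(\rho^{-1})$ appearing in the statement. Throughout, $x\in\R$, $\rho\in(0,1)$, and $a>0$, so all products converge absolutely and all logarithms are real.

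\emph{Step 1 (triple product).} Recall the triple product in the paper's convention \eqref{def:Jacobi-theta-convention}, with $q=e^{\pi i\tau}$:
\[
\theta(z\mid\tau)=\prod_{m\geq1}(1-q^{2m})(1+q^{2m-1}w)(1+q^{2m-1}w^{-1}),\qquad w=e^{2\pi i z}.
\]
Take $q=\rho$, i.e. $\tau_0:=\frac{i}{\pi}\log(\rho^{-1})$, and $w:=a\rho^{2x-1}$, i.e.
\[
z_0:=\frac{\log a+(2x-1)\log\rho}{2\pi i}.
\]
With $m=j+1$ one has $\rho^{2m-1}w=a\rho^{2(j+x)}$ and $\rho^{2m-1}w^{-1}=a^{-1}\rho^{2(j+1-x)}$. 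Hence
\[
\sum_{j\geq0}\log\bigl(1+a\rho^{2(j+x)}\bigr)+\sum_{j\geq0}\log\bigl(1+a^{-1}\rho^{2(j+1-x)}\bigr)
=\log\theta(z_0\mid\tau_0)-\sum_{m\geq1}\log(1-\rho^{2m}).
\]
This already produces the term $-\sum_{j\geq1}\log(1-\rho^{2j})$ in the statement. Here $\theta(z_0\mid\tau_0)>0$, being a convergent product of positive factors.

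\emph{Step 2 (modular transformation).} Apply the standard inversion formula
\[
\theta(z\mid\tau)=(-i\tau)^{-1/2}e^{-\pi i z^2/\tau}\,\theta\bigl(z/\tau\mid-1/\tau\bigr).
\]
The ingredients are as follows.
\begin{itemize}
\item The new parameter is $-1/\tau_0=\pi i/\log(\rho^{-1})$, as required.
\item The prefactor is $(-i\tau_0)^{-1/2}=\sqrt{\pi/\log(\rho^{-1})}$, which gives $\tfrac12\log\pi-\tfrac12\log\log(\rho^{-1})$.
\item The new argument is $z_0/\tau_0=\frac{\log a+(2x-1)\log\rho}{2\log\rho}=x-\tfrac12+\frac{\log a}{2\log\rho}$.
\end{itemize}
The target argument is $x+\frac{\log(a\rho)}{2\log\rho}=x+\tfrac12+\frac{\log a}{2\log\rho}$. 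The two differ by exactly $1$, so they agree by the period-$1$ invariance $\theta(z+1\mid\tau)=\theta(z\mid\tau)$.

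\emph{Step 3 (the Gaussian factor).} Compute the exponent:
\[
-\pi i z_0^2/\tau_0=\frac{\bigl(\log a+(2x-1)\log\rho\bigr)^2}{4\log(\rho^{-1})}.
\]
Expanding the square gives $\frac{(\log a)^2}{4\log(\rho^{-1})}$, the cross term $-\frac{(2x-1)\log a}{2}$, and $-\frac{(2x-1)^2\log\rho}{4}$. Since $\log(\rho^{-1})=-\log\rho$, the last term equals $-\bigl(x(x-1)+\tfrac14\bigr)\log\rho$, and the cross term equals $-x\log a+\tfrac12\log a$. Adding the explicit prefactor $x(x-1)\log\rho+x\log a$ from the definition of $\Theta$ cancels the $x$-dependence exactly. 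What remains is $\tfrac12\log a-\tfrac14\log\rho+\frac{(\log a)^2}{4\log(\rho^{-1})}$, which together with Steps 1–2 is precisely the claimed identity.

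The main obstacle is bookkeeping rather than analysis: the conventions for $\tau$ and $z$ must be fixed consistently, and the square root in the inversion formula must use the principal branch. This is unambiguous because $-i\tau_0>0$. I would double-check the result by setting $a=1$, $x=\tfrac12$, where both sides reduce to an elementary symmetric product identity.
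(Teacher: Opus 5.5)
Your proof is correct: the triple-product identification with nome $\rho$ and $w=a\rho^{2x-1}$, the inversion formula with principal square root, the unit shift of the argument, and the cancellation of the $x$-dependent Gaussian terms against $x(x-1)\log\rho+x\log a$ all check out. The paper gives no proof of this lemma and only cites \cite{C2021}, but your route (Jacobi triple product followed by $\theta(z;\tau)=(-i\tau)^{-1/2}e^{-\pi i z^2/\tau}\theta(z/\tau;-1/\tau)$) is the same pair of tools the paper itself uses when it evaluates $\Sigma_n$ in the proof of the $E_6^{(\mathrm{an})}$ expansion.
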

Lemma~\ref{lemma: lemma tilde Theta to Theta} gives
\begin{align}
\begin{split}
    \label{def of tilde Theta rewritten}
\widetilde{\Theta}_n&
=
\frac{1}{2}\log\pi
+\frac{1}{2}\log\Bigl(\frac{\sigma_{\star}-\tau_{\rho_1}}{\tau_{\rho_2}-\sigma_{\star}}\Bigr) 
+\frac{s}{2}(\lambda(\rho_2)-\lambda(\rho_1))
-\frac{1}{4}\log\frac{\rho_1}{\rho_2}
-\frac{1}{2}\log\log\frac{\rho_2}{\rho_1}
\\
&\quad
+\frac{1}{4\log\frac{\rho_2}{\rho_1}}
\Bigl(\log\Bigl(\frac{\sigma_{\star}-\tau_{\rho_1}}{\tau_{\rho_2}-\sigma_{\star}}e^{s(\lambda(\rho_2)-\lambda(\rho_1))}\Bigr)\Bigr)^2
-\sum_{j=1}^{\infty}\log\Bigl(1-\Bigl(\frac{\rho_1}{\rho_2}\Bigr)^{2j}\Bigr)
\\
&\quad
+\log\theta
\Bigl( 
n\sigma_{\star}-\alpha+\frac{1}{2}
+
\frac{\log(\frac{\sigma_{\star}-\tau_{\rho_1}}{\tau_{\rho_2}-\sigma_{\star}})}{2\log\frac{\rho_1}{\rho_2}}
+\frac{s(\lambda(\rho_2)-\lambda(\rho_1))}{2\log\frac{\rho_1}{\rho_2}}
\Bigr|\frac{\pi i}{\log\frac{\rho_2}{\rho_1}}
\Bigr)
\\
&\quad
-(\theta_{\star}-\alpha-1)(\theta_{\star}-\alpha-2)\log\frac{\rho_1}{\rho_2}
-(\theta_{\star}-\alpha-1)
\log\Bigl(\frac{\sigma_{\star}-\tau_{\rho_1}}{\tau_{\rho_2}-\sigma_{\star}}\Bigr) 
\\
&\quad
-s(\lambda(\rho_2)-\lambda(\rho_1))(\theta_{\star}-\alpha-1).
\end{split}
\end{align}

\subsection{Asymptotic expansions of \texorpdfstring{$S_2^{(1,\mathrm{an})}$ and $S_4^{(1,\mathrm{an})}$}{S2(1) and S4(1)}}
We now derive the moderate-deviation expansions for $S_2^{(1,\mathrm{an})}$ and $S_4^{(1,\mathrm{an})}$. Set
\begin{equation}
\label{def of Mjk lambda jk}
M_{j,k}:=\sqrt{n}(\lambda_{j,k}-1),\qquad \lambda_{j,k}:=\frac{\tau_{\rho_k}}{\tau_j},\qquad k=1,2.
\end{equation}
As $r_\tau\to\rho_k$,
\[
\lambda_{j,k}-1=b_1(\rho_k)(\rho_k-r_{\tau})+b_2(\rho_k)(\rho_k-r_{\tau})^2+b_3(\rho_k)(\rho_k-r_{\tau})^3+\mathcal{O}\bigl((\rho_k-r_{\tau})^4 \bigr).
\]
The inverse-function theorem and the regularity of $Q$ allow us to invert this expansion:
\[
\rho_k-r_{\tau}
=\frac{1}{b_1(\rho_k)}(\lambda_{j,k}-1)-\frac{b_2(\rho_k)}{b_1(\rho_k)^3}(\lambda_{j,k}-1)^2
+\frac{2b_2(\rho_k)^2-b_1(\rho_k)b_3(\rho_k)}{b_1(\rho_k)^5}(\lambda_{j,k}-1)^3+\mathcal{O}\bigl((\lambda_{j,k}-1)^4\bigr),
\]
where, with $\mathcal Q(r):=r\Delta Q(r)$,
\begin{align*}
b_1(\rho_k)&:=\frac{2\mathcal{Q}(\rho_k)}{\tau_{\rho_k}},
\\
b_2(\rho_k)&:=\frac{4\mathcal{Q}(\rho_k)^2}{{{\tau_{\rho_k}}^2}}-\frac{\partial_r\mathcal{Q}(\rho_k)}{\tau_{\rho_k}},
\\
b_3(\rho_k)&:=\frac{8\mathcal{Q}(\rho_k)^3}{{\tau_{\rho_k}}^3}-\frac{4\mathcal{Q}(\rho_k)\partial_r\mathcal{Q}(\rho_k)}{{\tau_{\rho_k}}^2}+\frac{\partial_r^2\mathcal{Q}(\rho_k)}{3\tau_{\rho_k}}.
\end{align*}
Consequently,
\begin{equation}
r_{\tau}
=\rho_k-\mathsf{c}_1(\rho_k)\frac{M_{j,k}}{\sqrt{n}}+\mathsf{c}_2(\rho_k)\frac{M_{j,k}^2}{n}-\mathsf{c}_3(\rho_k)\frac{M_{j,k}^3}{n^{3/2}}+\mathcal{O}(\frac{M_{j,k}^4}{n^2}).
\end{equation}
where
\begin{align*}
 \mathsf{c}_1(\rho_k)&:=\tau_{\rho_k}r_{\tau_{\rho_k}}'= \frac{\tau_{\rho_k}}{2\mathcal{Q}(\rho_k)},\qquad
\mathsf{c}_2(\rho_k):=\frac{\tau_{\rho_k}}{8\mathcal{Q}(\rho_k)^3}\bigl(4\mathcal{Q}(\rho_k)^2-\tau_{\rho_k}\mathcal{Q}'(\rho_k)\bigr),
\\
\mathsf{c}_3(\rho_k)&:=\frac{\tau_{\rho_k}}{48\mathcal{Q}(\rho_k)^5}
\Bigl( 
3\tau_{\rho_k}^2\mathcal{Q}'(\rho_k)^2-\tau_{\rho_k}^2\mathcal{Q}(\rho_k)\mathcal{Q}''(\rho_k)
-12\tau_{\rho_k}\mathcal{Q}(\rho_k)^2\mathcal{Q}'(\rho_k)+24\mathcal{Q}(\rho_k)^4
\Bigr), 
\\
\mathsf{c}_4(\rho_k)&:=
 \frac{\tau_{\rho_k}}{384\mathcal{Q}(\rho_k)^7}
 \Bigl( 
 192\mathcal{Q}(\rho_k)^6-144\tau_{\rho_k}\mathcal{Q}(\rho_k)^4\mathcal{Q}'(\rho_k)+72\tau_{\rho_k}^2\mathcal{Q}(\rho_k)^2\mathcal{Q}'(\rho_k)^2
 -15\tau_{\rho_k}^3\mathcal{Q}'(\rho_k)^3
 \\
 &\quad 
 -24\tau_{\rho_k}^2\mathcal{Q}(\rho_k)^3\mathcal{Q}''(\rho_k)
 +10\tau_{\rho_k}^3\mathcal{Q}(\rho_k)\mathcal{Q}'(\rho_k)\mathcal{Q}''(\rho_k)-\tau_{\rho_k}^3\mathcal{Q}(\rho_k)^2\mathcal{Q}^{(3)}(\rho_k)
 \Bigr).     
\end{align*}
Together with Lemma~\ref{lemma:general sum}, these formulas control the denominator sums in $S_2^{(1,\mathrm{an})}$ and $S_4^{(1,\mathrm{an})}$. In particular, for $h\in C^4(\mathbb R)$ satisfying the hypotheses of Lemma~\ref{lemma:Riemann sum NEW},
\begin{align}
\begin{split}
\label{def of ht expansion A1+AM}
\int_{r_{A_{1,+}^{(M)}}}^{r_{A_{1,+}}}
h(t)\,dt
&=
\int_{\rho_1}^{r_{A_{1,+}}}h(t)\,dt
-J_1(\rho_1;h)\frac{M}{\sqrt{n}}
-J_2(\rho_1;h)\frac{M^2}{n}
\\
&\quad
-J_3(\rho_1;h)\frac{M^3}{n^{3/2}}
-J_4(\rho_1;h)\frac{M^4}{n^2}
+\mathcal{O}(\frac{M^5}{n^{5/2}}),
\end{split}
\\
\begin{split}
\label{def of ht expansion B2-BM}
\int_{r_{B_{2,-}}}^{r_{B_{2,-}^{(M)}}}
h(t)\,dt
&=
\int_{r_{B_{2,-}}}^{\rho_2}h(t)\,dt
-J_1(\rho_2;h)\frac{M}{\sqrt{n}}
+
J_2(\rho_2;h)\frac{M^2}{n}
\\
&\quad
-J_3(\rho_2;h)\frac{M^3}{n^{3/2}}
+J_4(\rho_2;h)\frac{M^4}{n^2}
+\mathcal{O}(\frac{M^5}{n^{5/2}}),
\end{split}
\end{align}
where 
\begin{equation}
\label{def of parameters A1M B2M}
    A_{1,+}^{(M)}:=\frac{\tau_{\rho_1}}{1-\frac{M}{\sqrt{n}}},\qquad a_{1,+}^{(M)}:=1-\theta_{1,+}^{(n,M)},\qquad B_{2,-}^{(M)}:=\frac{\tau_{\rho_2}}{1+\frac{M}{\sqrt{n}}},\qquad b_{2,-}^{(M)}:=\theta_{2,-}^{(n,M)}-1,
\end{equation}
and for $k=1,2$, 
\begin{align*}
J_1(\rho_k;h)&:=\mathsf{c}_1(\rho_k)h(\rho_k),\qquad
J_2(\rho_k;h):=\mathsf{c}_2(\rho_k)h(\rho_k)+\frac{1}{2}\mathsf{c}_1(\rho_k)^2h'(\rho_k), 
\\
J_3(\rho_k;h)&:=\mathsf{c}_3(\rho_k)h(\rho_k)+
\mathsf{c}_1(\rho_k)\mathsf{c}_2(\rho_k)h'(\rho_k)
+
\frac{1}{6}\mathsf{c}_1(\rho_k)^3h''(\rho_k),
\\
J_4(\rho_k;h)&:=
\mathsf{c}_4(\rho_k)h(\rho_k)+\frac{1}{2}\Bigl(\mathsf{c}_2(\rho_k)^2+2\mathsf{c}_1(\rho_k)\mathsf{c}_3(\rho_k)\Bigr)h'(\rho_k)
\\
&\quad+\frac{1}{2}\mathsf{c}_1(\rho_k)^2\mathsf{c}_2(\rho_k)h''(\rho_k)+\frac{1}{24}\mathsf{c}_1(\rho_k)^4h^{(3)}(\rho_k).
\end{align*}
We shall also use the expansion
\begin{align*}
\xi_{j,k}&=\sqrt{4n\Delta Q(\rho_k)}(\rho_k-r_{\tau})
-\sqrt{4n}\frac{\partial_r \Delta Q(\rho_k)}{2\sqrt{\Delta Q(\rho_k)}}(\rho_k-r_{\tau})^2\\
&\quad +\sqrt{4n}\frac{2\Delta Q(\rho_k)\partial_r^2\Delta Q(\rho_k)-(\partial_r\Delta Q(\rho_k))^2}{8(\Delta Q(\rho_k))^{3/2}}(\rho_k-r_{\tau})^3+\sqrt{4n}\cdot\mathcal{O}(r_{\tau}-\rho_k)^4,
\end{align*}
as $r_\tau\to\rho_k$. This will be used for the transition sums $S_2^{(2,\mathrm{an})}$ and $S_4^{(2,\mathrm{an})}$ in the next subsection.

\begin{lemma}\label{lemma: hard edge g1+ j j1+ first}
As $n\to\infty$,
\begin{equation}
    \label{def of S2(1) an}
    S_{2}^{(1,\mathrm{an})}=
\mathsf{H}_{1,1}n^2
+\mathsf{H}_{2,1}n\log n
+\mathsf{H}_{3,1}n
+\mathsf{H}_{4,1}\sqrt{n}
+\mathsf{H}_{5,1}\log n
+\mathsf{H}_{6,1}
+\mathcal{O}\Bigl(\frac{M^5}{\sqrt{n}}+\frac{\sqrt{n}}{M^7}\Bigr),
\end{equation}
where 
\begin{align*}
\mathsf{H}_{1,1}&:=
\tau_{\rho_1}q(\rho_1)
-A_{1,+}q(\rho_1)
-\tau_{\rho_1}^2\log \rho_1
+A_{1,+}^2\log \rho_1
\\
&\quad
+2\int_{\rho_1}^{r_{A_{1,+}}}\Bigl(q(u)-uq'(u)\log u\Bigr)u\Delta Q(u)\,du,
\\
\mathsf{H}_{2,1}&:= 
\frac{1}{2}\tau_{\rho_1}-\frac{1}{2}A_{1,+},
\\
\mathsf{H}_{3,1}&:=
(2\log \rho_1+\mathsf{k}(\rho_1))A_{1,+}
+A_{1,+}
-(A_{1,+}-\tau_{\rho_1})\log(A_{1,+}-\tau_{\rho_1})
-A_{1,+}\log\sqrt{2\pi}
\\
&\quad
+\frac{1}{2}(2\theta_{1,+}^{(n,\epsilon)}-1)\bigl(q(\rho_1)-2A_{1,+}\log\rho_1\bigr)
-\frac{1}{2}(2\theta_{1,+}^{(n,\epsilon)}-1)V_{A_{1,+}}(r_{A_{1,+}})
\\
&\quad
+\tau_{\rho_1}\log\sqrt{2\pi}
-\tau_{\rho_1}
-(2\log \rho_1+\mathsf{k}(\rho_1))\tau_{\rho_1}
-2\int_{\rho_1}^{r_{A_{1,+}}}\log u \cdot u\Delta Q(u)\,du
\\
&\quad 
-2\int_{\rho_1}^{r_{A_{1,+}}}\mathsf{k}(u)u\Delta Q(u)\,du
+\int_{\rho_1}^{r_{A_{1,+}}}u\Delta Q(u)\log \Delta Q(u)\,du,
\\
\mathsf{H}_{4,1}&:=  
\frac{\tau_{\rho_1}^3}{6\rho_1^2\Delta Q(\rho_1)}M^3
-\tau_{\rho_1}M
+\tau_{\rho_1}M\log\frac{M\tau_{\rho_1}\sqrt{2\pi}}{\rho_1\sqrt{\Delta Q(\rho_1)}}
\\
&\quad
-\frac{\rho_1^2\Delta Q(\rho_1)}{\tau_{\rho_1}M}
+\frac{5\rho_1^4(\Delta Q(\rho_1))^2}{6\tau_{\rho_1}^3M^3}
-\frac{37\rho_1^6}{15\tau_{\rho_1}^5}\frac{\Delta Q(\rho_1)^3}{M^5},
\\
\mathsf{H}_{5,1}&:=  
\frac{1}{2}\theta_{1,+}^{(n,\epsilon)}
-\frac{\rho_1}{4}\mathsf{k}'(\rho_1)-\frac{3}{4},
\\
\mathsf{H}_{6,1}&:=  
-\Bigl( 
-\frac{\tau_{\rho_1}^3}{2\rho_1^2\Delta Q(\rho_1)}
+\frac{\tau_{\rho_1}^4}{24\rho_1^4\Delta Q(\rho_1)^2}
+\frac{\tau_{\rho_1}^4\partial_r\Delta Q(\rho_1)}{48\rho_1^3\Delta Q(\rho_1)^3}
\Bigr)M^4
\\
&\quad
+\bigg[
\tau_{\rho_1}\log\frac{M\tau_{\rho_1}\sqrt{2\pi}}{\rho_1\sqrt{\Delta Q(\rho_1)}}
+\frac{\tau_{\rho_1}^2}{4\rho_1^2\Delta Q(\rho_1)}
+\frac{\tau_{\rho_1}^2}{4\rho_1\Delta Q(\rho_1)}\mathsf{k}'(\rho_1)
-\frac{\tau_{\rho_1}^2\partial_r\Delta Q(\rho_1)}{8\rho_1\Delta Q(\rho_1)^2}
\bigg]M^2
\\
&\quad
-\frac{\tau_{\rho_1}^2(2\theta_{1,+}^{(n,M)}-1)}{4\rho_1^2\Delta Q(\rho_1)}M^2
-\frac{(2\theta_{1,+}^{(n,M)}-1)}{2}\log(\tau_{\rho_1}M)
+\Bigl(\frac{\rho_1}{2}\mathsf{k}'(\rho_1)+1\Bigr)\log (M\tau_{\rho_1})
\\
&\quad
+(\theta_{1,+}^{(n,M)}-\theta_{1,+}^{(n,\epsilon)})(2\log \rho_1+\mathsf{k}(\rho_1))
 +(\theta_{1,+}^{(n,\epsilon)}-\theta_{1,+}^{(n,M)})\log\sqrt{2\pi}
\\
&\quad
+\frac{(2\theta_{1,+}^{(n,\epsilon)}-1)}{2}\log\Bigl(A_{1,+}-\tau_{\rho_1}\Bigr)
+\frac{\rho_1^2\Delta Q(\rho_1)}{\tau_{\rho_1}\epsilon}
-\Bigl(\frac{\rho_1}{2}\mathsf{k}'(\rho_1)+1\Bigr)\log (A_{1,+}-\tau_{\rho_1})
 \\
&\quad
+\bigl(1-6\theta_{1,+}^{(n,\epsilon)}+6{\theta_{1,+}^{(n,\epsilon)}}^2\bigr)
\frac{\log \rho_1}{6}
-\frac{1}{6}(1-6\theta_{1,+}^{(n,\epsilon)}+6{\theta_{1,+}^{(n,\epsilon)}}^2)
\log r_{A_{1,+}}
\\
&\quad
-\frac{(2\theta_{1,+}^{(n,M)}-1)\log \rho_1+(1-2\theta_{1,+}^{(n,\epsilon)})\log r_{A_{1,+}}}{2}
\\
&\quad
-\frac{(2\theta_{1,+}^{(n,M)}-1)\mathsf{k}(\rho_1)+(1-2\theta_{1,+}^{(n,\epsilon)})\mathsf{k}(r_{A_{1,+}})}{2}
\\
&\quad
+\frac{(2\theta_{1,+}^{(n,M)}-1)\log\Delta Q(\rho_1)+(1-2\theta_{1,+}^{(n,\epsilon)})\log\Delta Q(r_{A_{1,+}})}{4}
\\
&\quad
-\frac{1}{24}\int_{\rho_1}^{r_{A_{1,+}}}\Bigl(\frac{\partial \Delta Q(u)}{\Delta Q(u)}\Bigr)^2u\,du
+\frac{1}{16}\bigg[
\frac{r_{A_{1,+}}\partial_u\Delta Q(r_{A_{1,+}})}{\Delta Q(r_{A_{1,+}})}-\frac{\rho_1\partial_u\Delta Q(\rho_1)}{\Delta Q(\rho_1)}
\bigg]\\
&\quad -\frac{1}{3}\log\frac{\Delta Q(\rho_1)}{\Delta Q(r_{A_{1,+}})}
-\frac{1}{6}\log\frac{r_{A_{1,+}}}{\rho_1}
-\frac{1}{4}\int_{\rho_1}^{r_{A_{1,+}}}u\bigl( \Lambda(u;s)+L(u;\alpha,s)\bigr)\,du,
\end{align*}
where $\Lambda(u;s)$ and $L(u;\alpha,s)$ are given by \eqref{def of Lammda us} and \eqref{def of L us}, respectively.
\end{lemma}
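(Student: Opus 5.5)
We will write $S_2^{(1,\mathrm{an})}$ as a difference of a numerator and a denominator sum over $g_{1,+}+1\le j\le j_{1,+}$. We expand each one and collect powers of $n$, $\log n$ and $M$.

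\emph{Denominator.} For the denominator $\sum\log h_{n,j}$ we apply Lemma~\ref{lemma:general sum} directly. The parameters are $A=A_{1,+}^{(M)}$, $a_0=a_{1,+}^{(M)}$, $B=A_{1,+}$, $b_0=-\theta_{1,+}^{(n,\epsilon)}$. Since $A_{1,+}^{(M)}$ depends on $n$ through $M/\sqrt n$, we then re-expand every $r_{A_{1,+}^{(M)}}$-dependent quantity in powers of $M/\sqrt n$. This uses \eqref{def of ht expansion A1+AM} with the coefficients $\mathsf c_k(\rho_1)$ and $J_k(\rho_1;h)$, applied to $h(u)=(q(u)-uq'(u)\log u)u\Delta Q(u)$, $\log u\cdot u\Delta Q(u)$, $\mathsf k(u)u\Delta Q(u)$, and so on. Multiplying the $n^2$ and $n$ coefficients by $M^k n^{-k/2}$ produces the $M$-dependent pieces at orders $n^{3/2}$, $n$, $\sqrt n$ and $1$. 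Truncating at $M^5/n^{5/2}$ leaves an error $n^2M^5n^{-5/2}=M^5/\sqrt n$.

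\emph{Numerator.} By Lemma~\ref{lemma: g1+leq j leq j+1},
\[
\log h_{n,j}^{(\mathrm{an})}=\log(2\rho_1e^{\mathsf k(\rho_1)})-nq(\rho_1)+2j\log\rho_1-\log n-\log\eta_1+\log\bigl(1-\mathfrak a_1/n+\mathfrak a_2/n^2-\mathfrak a_3/n^3\bigr)+\mathcal O(M^{-8}).
\]
Here $\eta_1=\frac2{\rho_1}(\tau-\tau_{\rho_1})$. The explicit terms are summed exactly or by Lemma~\ref{lemma:Riemann sum NEW}, using $f(x)=\log(x-\tau_{\rho_1})$ on $[A_{1,+}^{(M)},A_{1,+}]$. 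Near the left endpoint $f^{(k)}\sim (M/\sqrt n)^{-k}$, so the Euler--Maclaurin remainders must be tracked in $M$.

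The correction $\log(1-\mathfrak a_1/n+\dots)$ is expanded as a series in $1/(n\eta_1)$, using that $n\eta_1\gtrsim M\sqrt n$. Each term is of the form $n^{-k}(\tau-\tau_{\rho_1})^{-p}$ and is summed by Euler--Maclaurin. The leading piece $\rho_1^2\Delta Q(\rho_1)/(n(\tau-\tau_{\rho_1})^2)$ summed from $\tau_{\rho_1}M/\sqrt n$ gives the $-\rho_1^2\Delta Q/(\tau_{\rho_1}M)\cdot\sqrt n$ term. The $1/\eta_1$ piece gives the $(\tfrac{\rho_1}{2}\mathsf k'+1)\log$ terms, contributing to $\mathsf H_{5,1}$ and $\mathsf H_{6,1}$. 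The higher terms $\mathfrak a_2,\mathfrak a_3$ give the $M^{-3}$ and $M^{-5}$ coefficients in $\mathsf H_{4,1}$. Truncating at $\eta_1^{-8}n^{-4}$ and summing over $\sqrt n M$ indices gives $\sum M^{-8}\cdot$(weight) $\lesssim \sqrt n/M^7$, which is the second error term.

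\emph{Assembly.} In the difference, the $M$-dependent terms at orders $n^{3/2}$ and $n$ must cancel, and the $M$-dependent parts at orders $\sqrt n$ and $1$ survive as displayed. Those surviving terms will later cancel against $S_2^{(2,\mathrm{an})}$. The main obstacle is precisely this bookkeeping: consistently expanding $\log(A_{1,+}^{(M)}-\tau_{\rho_1})$, $V_{A^{(M)}}(r_{A^{(M)}})$ and the Euler--Maclaurin boundary terms (including those with $\theta_{1,+}^{(n,M)}$) to order $M^4/n^2$, and checking that the mixed terms match. I would do this first with the $n^{3/2}$ and $n$ orders to confirm cancellation, then extract $\mathsf H_{4,1}$ and $\mathsf H_{6,1}$. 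The remaining coefficients $\mathsf H_{1,1},\dots,\mathsf H_{3,1}$ then follow by using $d\tau=2r\Delta Q\,dr$ to convert the sums to radial integrals between $\rho_1$ and $r_{A_{1,+}}$.
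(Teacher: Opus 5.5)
Your proposal is correct and follows the paper's proof essentially step for step. You take numerator minus denominator, expand the numerator via Lemma~\ref{lemma: g1+leq j leq j+1} with the logarithm Taylor-expanded to third order, and sum it by Lemma~\ref{lemma:Riemann sum NEW} with $A=A_{1,+}^{(M)}$, $a_0=1-\theta_{1,+}^{(n,M)}$, $B=A_{1,+}$, $b_0=-\theta_{1,+}^{(n,\epsilon)}$; the denominator is handled by Lemma~\ref{lemma:general sum} together with the moving-endpoint expansion \eqref{def of ht expansion A1+AM}, exactly as in the paper.

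Two points of phrasing only. First, the controlling small parameter is $(n\eta_1^2)^{-1}\lesssim M^{-2}$, since each $\mathfrak a_k/n^k$ is of size $(n\eta_1^2)^{-k}$, rather than $(n\eta_1)^{-1}$. Second, the range contains $\asymp n$ indices, not $\sqrt n M$, so the $\sqrt n/M^7$ error comes from comparing $n^{-4}\sum_j\eta_1(j)^{-8}$ with $n^{-3}\int_{\tau_{\rho_1}M/\sqrt n}^{\infty}y^{-8}\,dy$.
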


\begin{proof}
Recall that $\eta_1=\frac{2}{\rho_1}(\tau_j-\tau_{\rho_1})$. Taking logarithms in Lemma~\ref{lemma: g1+leq j leq j+1} and using the uniform lower bound $n\eta_1^2\gtrsim M^2$ gives
\begin{align*}
&\quad \sum_{j=g_{1,+}+1}^{j_{1,+}}\log h_{n,j}^{(\mathrm{an})}
\\
&=
\Bigl(\frac{n\tau_{\rho_1}}{1-\epsilon}-\theta_{1,+}^{(n,\epsilon)}-\frac{n\tau_{\rho_1}}{1-\frac{M}{\sqrt{n}}}+\theta_{1,+}^{(n,M)}\Bigr)\cdot
\Bigl( 
2\log\rho_1+\mathsf{k}(\rho_1)-nq(\rho_1)-\log n
\Bigr)
+
(2\log\rho_1)n\sum_{j=g_{1,+}+1}^{j_{1,+}}\frac{j}{n}
\\
&\quad -\sum_{j=g_{1,+}+1}^{j_{1,+}}\log(\tau_j-\tau_{\rho_1})
+\sum_{j=g_{1,+}+1}^{j_{1,+}}\Bigl(
-\frac{\mathfrak{a}_1}{n}+\frac{2\mathfrak{a}_2-\mathfrak{a}_1^2}{2n^2}-\frac{\mathfrak{a}_1^3-3\mathfrak{a}_1\mathfrak{a}_2+3\mathfrak{a}_3}{3n^3}\Bigr)+\mathcal{O}(\frac{\sqrt{n}}{M^7}). 
\end{align*}    
where $\mathfrak a_k$, $k=1,2,3$, are given by \eqref{def of mathfraka1}--\eqref{def of mathfraka3}. Apply Lemma~\ref{lemma:Riemann sum NEW} with $A=A_{1,+}^{(M)}$, $a_0=1-\theta_{1,+}^{(n,M)}$, $B=A_{1,+}$, and $b_0=-\theta_{1,+}^{(n,\epsilon)}$. We obtain
\begin{align*}
&\quad -\frac{1}{n}\sum_{j=g_{1,+}+1}^{j_{1,+}}\mathfrak{a}_1
\\
&=
-\frac{\rho_1^2\Delta Q(\rho_1)}{\tau_{\rho_1}M}\sqrt{n}
+\frac{\rho_1^2\Delta Q(\rho_1)}{\tau_{\rho_1}\epsilon}
-\Bigl(\frac{\rho_1}{2}\mathsf{k}'(\rho_1)+1\Bigr)\log (A_{1,+}-\tau_{\rho_1})
\\
&\quad 
+\Bigl(\frac{\rho_1}{2}\mathsf{k}'(\rho_1)+1\Bigr)(\log \tau_{\rho_1}+\log M)
-\frac{1}{2}\Bigl(\frac{\rho_1}{2}\mathsf{k}'(\rho_1)+1\Bigr)\log n+\mathcal{O}(M^{-2}),
\\
&\quad \sum_{j=g_{1,+}+1}^{j_{1,+}}
\frac{-\mathfrak{a}_1^2+2\mathfrak{a}_2}{2n^2}
\\
&=
\frac{5\rho_1^4(\Delta Q(\rho_1))^2}{6\tau_{\rho_1}^3M^3}\sqrt{n}+\mathcal{O}(M^{-2}),
\\
&\quad -\sum_{j=g_{1,+}+1}^{j_{1,+}}\frac{\mathfrak{a}_1^3-3\mathfrak{a}_1\mathfrak{a}_2+3\mathfrak{a}_3}{3n^3}
\\
&=-\frac{37\rho_1^6}{15\tau_{\rho_1}^5}\frac{\Delta Q(\rho_1)^3}{M^5}\sqrt{n}+\mathcal{O}(M^{-4}), 
\end{align*}
\begin{align*}
&\quad(2\log \rho_1)n\sum_{j=g_{1,+}+1}^{j_{1,+}}\frac{j}{n}
\\
&=
-n^2\tau_{\rho_1}^2\log \rho_1
-Mn^{3/2}\cdot 2\tau_{\rho_1}^2\log \rho_1
-M^2n\cdot 3\tau_{\rho_1}^2\log \rho_1
-M^3\sqrt{n}\cdot 4\tau_{\rho_1}^2\log \rho_1
\\
&\quad
-M^4\cdot 5\tau_{\rho_1}^2\log \rho_1
+\tau_{\rho_1}n(2\theta_{1,+}^{(n,M)}-1)\log \rho_1
+\tau_{\rho_1}M\sqrt{n}(2\theta_{1,+}^{(n,M)}-1)\log \rho_1
\\
&\quad
+\tau_{\rho_1}M^2(2\theta_{1,+}^{(n,M)}-1)\log \rho_1
+n^2A_{1,+}^2\log \rho_1
-nA_{1,+}(2\theta_{1,+}^{(n,\epsilon)}-1)\log \rho_1
\\
&\quad
+\bigg( 
\bigl(-1+6(-\theta_{1,+}^{(n,M)}+1)-6(-\theta_{1,+}^{(n,M)}+1)^2\bigr)
+
\bigl(1-6\theta_{1,+}^{(n,\epsilon)}+6{\theta_{1,+}^{(n,\epsilon)}}^2\bigr)
\bigg)\frac{\log \rho_1}{6}
+\mathcal{O}\Bigl(\frac{M^5}{\sqrt{n}}\Bigr), 
\end{align*}
and 
\begin{align*}
&\quad -\sum_{j=g_{1,+}+1}^{j_{1,+}}\log(\tau_j-\tau_{\rho_1})
\\
&=
-\tau_{\rho_1}n-\tau_{\rho_1}M\sqrt{n}-\tau_{\rho_1}M^2
-\frac{1}{2}\tau_{\rho_1}M\sqrt{n}\log n+\tau_{\rho_1}M\sqrt{n}\log(\tau_{\rho_1}M)+\tau_{\rho_1}M^2
+\tau_{\rho_1}M^2\log(\tau_{\rho_1}M)
\\
&\quad 
-\frac{1}{2}\tau_{\rho_1}M^2\log n
+\frac{(2\theta_{1,+}^{(n,M)}-1)}{4}\log n
-\frac{(2\theta_{1,+}^{(n,M)}-1)}{2}\log(\tau_{\rho_1}M)
\\
&\quad
+nA_{1,+}
-n(A_{1,+}-\tau_{\rho_1})\log(A_{1,+}-\tau_{\rho_1})
+\frac{(2\theta_{1,+}^{(n,\epsilon)}-1)}{2}\log\Bigl(A_{1,+}-\tau_{\rho_1}\Bigr)
+\mathcal{O}(\frac{M^3\log n}{\sqrt{n}}). 
\end{align*}
Finally, apply Lemma~\ref{lemma:general sum} to the denominator and use \eqref{def of ht expansion A1+AM} to expand its moving endpoint. After grouping equal powers of $n$, $\log n$, and $M$, the coefficients are exactly $\mathsf H_{1,1},\ldots,\mathsf H_{6,1}$ displayed above. The accumulated remainder is $\mathcal O(M^5n^{-1/2}+\sqrt n\,M^{-7})$, which proves \eqref{def of S2(1) an}.

\end{proof}

\begin{lemma}\label{lemma: S41 an asymptotics}
As $n\to\infty$,
\begin{equation}
\label{def of S4(1) an}    
S_{4}^{(1,\mathrm{an})}=
\mathsf{H}_{1,2}n^2
+\mathsf{H}_{2,2}n\log n
+\mathsf{H}_{3,2}n
+\mathsf{H}_{4,2}\sqrt{n}
+\mathsf{H}_{5,2}\log n
+\mathsf{H}_{6,2}
+\mathcal{O}\Bigl(\frac{M^5}{\sqrt{n}}+\frac{\sqrt{n}}{M^7}\Bigr),
\end{equation}
where 
\begin{align*}
\mathsf{H}_{1,2}&:= 
-\tau_{\rho_2}q(\rho_2)
+B_{2,-}q(\rho_2)
+\tau_{\rho_2}^2\log \rho_2
-B_{2,-}^2\log\rho_2
\\
&\quad
+2\int_{r_{B_{2,-}}}^{\rho_2}\Bigl(q(u)-uq'(u)\log u
\Bigr)u\Delta Q(u)\,du,
\\
\mathsf{H}_{2,2}&:= 
-\frac{1}{2}\tau_{\rho_2}
+\frac{1}{2}B_{2,-},
\\
\mathsf{H}_{3,2}&:=  
-B_{2,-}(2\log\rho_2+\mathsf{k}(\rho_2))
 -B_{2,-}
-(\tau_{\rho_2}-B_{2,-})\log(\tau_{\rho_2}-B_{2,-})
 +B_{2,-}\log\sqrt{2\pi}
 \\
 &\quad
 +\frac{1}{2}(2\theta_{2,-}^{(n,\epsilon)}-1)(q(\rho_2)-2B_{2,-}\log\rho_2) 
-\frac{1}{2}(2\theta_{2,-}^{(n,\epsilon)}-1)
V_{B_{2,-}}(r_{B_{2,-}})
\\
&\quad
 +\tau_{\rho_2}
-\tau_{\rho_2}\log\sqrt{2\pi}  
+(2\log\rho_2+\mathsf{k}(\rho_2))\tau_{\rho_2}
-2\int_{r_{B_{2,-}}}^{\rho_2}\log u\cdot u\Delta Q(u)\,du
\\
&\quad
-2\int_{r_{B_{2,-}}}^{\rho_2}\mathsf{k}(u)u\Delta Q(u)\,du
+\int_{r_{B_{2,-}}}^{\rho_2}u\Delta Q(u)\log \Delta Q(u)\,du,
\\
\mathsf{H}_{4,2}&:= 
\frac{\tau_{\rho_2}^3}{6\rho_2^2\Delta Q(\rho_2)}M^3
 -\tau_{\rho_2}M
+\tau_{\rho_2}M\log \frac{\tau_{\rho_2}M\sqrt{2\pi}}{\rho_2\sqrt{\Delta Q(\rho_2)}}
     \\
     &\quad
     -\frac{\rho_2^2\Delta Q(\rho_2)}{\tau_{\rho_2}M}
     +\frac{5\rho_2^4\Delta Q(\rho_2)^2}{6\tau_{\rho_2}^3M^3}
     -\frac{37\rho_2^6\Delta Q(\rho_2)^3}{15\tau_{\rho_2}^5M^5}, 
\\
\mathsf{H}_{5,2}&:=  
\frac{1}{2}\theta_{2,-}^{(n,\epsilon)}
+\frac{1}{4}+\frac{1}{4}\rho_2\mathsf{k}'(\rho_2), 
\\
\mathsf{H}_{6,2}&:= 
\Bigl( 
-\frac{\tau_{\rho_2}^3}{2\rho_2^2\Delta Q(\rho_2)}
+\frac{\tau_{\rho_2}^4}{24\rho_2^4\Delta Q(\rho_2)^2}
+\frac{\tau_{\rho_2}^4\partial_r\Delta Q(\rho_2)}{48\rho_2^3\Delta Q(\rho_2)^3}
\Bigr)M^4
-(2\theta_{2,-}^{(n,M)}-1)
\frac{\tau_{\rho_2}^2}{4\rho_2^2\Delta Q(\rho_2)}M^2
\\
&\quad
+
\Bigl( 
-\tau_{\rho_2}\log\Bigl(\frac{M\tau_{\rho_2}\sqrt{2\pi}}{\rho_2\sqrt{\Delta Q(\rho_2)}}\Bigr)
-\frac{\tau_{\rho_2}^2}{4\rho_2^2\Delta Q(\rho_2)}
-\frac{\tau_{\rho_2}^2}{4\rho_2\Delta Q(\rho_2)}\mathsf{k}'(\rho_2)
+\frac{\tau_{\rho_2}^2\partial_r\Delta Q(\rho_2)}{8\rho_2\Delta Q(\rho_2)^2}
\Bigr)M^2
\\
&\quad
-\frac{1}{2}(2\theta_{2,-}^{(n,M)}-1)\log(\tau_{\rho_2}M)
-\Bigl(1+\frac{1}{2}\rho_2\mathsf{k}'(\rho_2)\Bigr)\log(\tau_{\rho_2}M)
\\
&\quad
+(2\log\rho_2+\mathsf{k}(\rho_2))(\theta_{2,-}^{(n,M)}-\theta_{2,-}^{(n,\epsilon)})
   -(\theta_{2,-}^{(n,M)}-\theta_{2,-}^{(n,\epsilon)})\log\sqrt{2\pi}
\\
&\quad
+\frac{1}{2}(2\theta_{2,-}^{(n,\epsilon)}-1)\log(\tau_{\rho_2}-B_{2,-})
+\frac{\rho_2^2\Delta Q(\rho_2)}{\epsilon\tau_{\rho_2}}
+\Bigl(1+\frac{1}{2}\rho_2\mathsf{k}'(\rho_2)\Bigr)\log(\tau_{\rho_2}-B_{2,-})
    \\
&\quad
 +
\frac{-1+6\theta_{2,-}^{(n,\epsilon)}-6{\theta_{2,-}^{(n,\epsilon)}}^2}{6}\log\rho_2
-\frac{1}{6}
(-1+6\theta_{2,-}^{(n,\epsilon)}-6{\theta_{2,-}^{(n,\epsilon)}}^2)
\log r_{B_{2,-}}
\\
&\quad
-\frac{(1-2\theta_{2,-}^{(n,\epsilon)})\log r_{B_{2,-}}+(2\theta_{2,-}^{(n,M)}-1)\log \rho_2}{2}
\\
&\quad
-\frac{(1-2\theta_{2,-}^{(n,\epsilon)})\mathsf{k}(r_{B_{2,-}})+(2\theta_{2,-}^{(n,M)}-1)\mathsf{k}(\rho_2)}{2}
\\
&\quad
+\frac{(1-2\theta_{2,-}^{(n,\epsilon)})\log\Delta Q(r_{B_{2,-}})+((2\theta_{2,-}^{(n,M)}-1))\log\Delta Q(\rho_2)}{4}
\\
&\quad
-\frac{1}{24}\int_{r_{B_{2,-}}}^{\rho_2}\Bigl(\frac{\partial_u \Delta Q(u)}{\Delta Q(u)}\Bigr)^2u\,du
+\frac{1}{16}\bigg[
\frac{\rho_2\partial_u\Delta Q(\rho_2)}{\Delta Q(\rho_2)}-\frac{r_{B_{2,-}}\partial_u\Delta Q(r_{B_{2,-}})}{\Delta Q(r_{B_{2,-}})}
\bigg]\\
&\quad -\frac{1}{3}\log\frac{\Delta Q(r_{B_{2,-}})}{\Delta Q(\rho_2)}
-\frac{1}{6}\log\frac{\rho_2}{r_{B_{2,-}}}
-\frac{1}{4}\int_{r_{B_{2,-}}}^{\rho_2}u\bigl( \Lambda(u;s)+L(u;\alpha,s)\bigr)\,du.  
\end{align*}
\end{lemma}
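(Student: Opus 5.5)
The proof mirrors that of Lemma~\ref{lemma: hard edge g1+ j j1+ first}, with $\rho_1$ replaced by $\rho_2$ and the orientation reversed: the relevant indices $j_{2,-}\leq j\leq g_{2,-}-1$ now lie \emph{below} $n\tau_{\rho_2}$. We write
\[
S_4^{(1,\mathrm{an})}=\sum_{j=j_{2,-}}^{g_{2,-}-1}\log h_{n,j}^{(\mathrm{an})}-\sum_{j=j_{2,-}}^{g_{2,-}-1}\log h_{n,j}
\]
and expand the two sums separately. For the numerator we take logarithms in Lemma~\ref{lemma:j2- j g2--1}:
\[
\log h_{n,j}^{(\mathrm{an})}=\log(2\rho_2)+\mathsf k(\rho_2)-nq(\rho_2)+2j\log\rho_2-\log n-\log\eta_2+\log\Bigl(1+\tfrac{\widetilde{\mathfrak a}_1}{n}+\tfrac{\widetilde{\mathfrak a}_2}{n^2}+\tfrac{\widetilde{\mathfrak a}_3}{n^3}\Bigr)+\mathcal O(M^{-8}).
\]
Since $\eta_2=\frac{2}{\rho_2}(\tau_{\rho_2}-\tau_j)$, the term $-\log\eta_2$ becomes $-\log(\tau_{\rho_2}-\tau_j)+\log\rho_2$ up to constants. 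We then expand the last logarithm as
\[
\frac{\widetilde{\mathfrak a}_1}{n}+\frac{2\widetilde{\mathfrak a}_2-\widetilde{\mathfrak a}_1^2}{2n^2}+\frac{\widetilde{\mathfrak a}_1^3-3\widetilde{\mathfrak a}_1\widetilde{\mathfrak a}_2+3\widetilde{\mathfrak a}_3}{3n^3}.
\]
The truncation error is $\mathcal O(\eta_2^{-8}n^{-4})$ per term. We have $n\eta_2^2\gtrsim M^2$ on this range, and the number of indices is $\asymp n$, but the remainder is concentrated near $g_{2,-}$. A sum-to-integral comparison in $x=\tau_{\rho_2}-\tau_j\geq \tau_{\rho_2}M/\sqrt n$ therefore bounds the accumulated error by $\mathcal O(\sqrt n\,M^{-7})$.

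Next we apply Lemma~\ref{lemma:Riemann sum NEW} with $A=B_{2,-}$, $a_0=\theta_{2,-}^{(n,\epsilon)}$, $B=B_{2,-}^{(M)}$ and $b_0=b_{2,-}^{(M)}=\theta_{2,-}^{(n,M)}-1$ to each of the following smooth functions of $\tau$:
\begin{itemize}
\item the linear function $2\tau\log\rho_2$, which gives $n^2\log\rho_2\,(B^2-A^2)$ plus boundary terms;
\item $\log(\tau_{\rho_2}-\tau)$;
\item the rational functions coming from the $\widetilde{\mathfrak a}_k$.
\end{itemize}
The right endpoint $B_{2,-}^{(M)}=\tau_{\rho_2}(1+M/\sqrt n)^{-1}$ is then expanded in powers of $M/\sqrt n$ through order $M^4/n^2$. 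This produces the $M^3\sqrt n,\dots,M^4$ terms, and $\log(\tau_{\rho_2}-B^{(M)}_{2,-})$ produces $\log(\tau_{\rho_2}M)-\tfrac12\log n$. The leading parts of the $\widetilde{\mathfrak a}_k$ are
\begin{itemize}
\item $\widetilde{\mathfrak a}_1\approx -v_{2,2}\eta_2^{-2}+(\mathsf k'(\rho_2)+\rho_2^{-1})\eta_2^{-1}$ with $v_{2,2}\approx4\Delta Q(\rho_2)$, which yields $-\rho_2^2\Delta Q(\rho_2)\sqrt n/(\tau_{\rho_2}M)$, the $\epsilon$-dependent constant $\rho_2^2\Delta Q(\rho_2)/(\epsilon\tau_{\rho_2})$, and the logarithmic terms with coefficient $1+\tfrac12\rho_2\mathsf k'(\rho_2)$;
\item the second and third orders, which contribute only the $M^{-3}\sqrt n$ and $M^{-5}\sqrt n$ terms plus $\mathcal O(M^{-2})$.
\end{itemize}
These computations are the same as in Lemma~\ref{lemma: hard edge g1+ j j1+ first}. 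The one change is that $\tau_{\rho_2}-\tau$ replaces $\tau-\tau_{\rho_1}$, which flips several signs; for instance, $\mathsf H_{6,2}$ contains $-\log(\tau_{\rho_2}M)$ factors.

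For the denominator we use Lemma~\ref{lemma:general sum} with the same $A,a_0,B,b_0$. The endpoint $r_{B^{(M)}_{2,-}}$ is expanded around $\rho_2$ via \eqref{def of ht expansion B2-BM}, using the coefficients $\mathsf c_k(\rho_2)$ and $J_k(\rho_2;h)$, for the integrands $h$ appearing in $H_1,H_3,H_6$. The boundary values $V_B(r_B)$, $\log r_B$, $\mathsf k(r_B)$ and $\log\Delta Q(r_B)$ are expanded similarly. The identity $\tau=\tfrac12rq'(r)$ converts the $n^2$ and $n$ contributions; for example, $\int_{r_{B}}^{\rho_2}(q-uq'\log u)u\Delta Q\,du$ combines with $-\tau_{\rho_2}q(\rho_2)+\tau_{\rho_2}^2\log\rho_2$ from the numerator to give $\mathsf H_{1,2}$.

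The main obstacle is the bookkeeping at orders $\sqrt n$ and $1$. The $M$-dependent terms from numerator and denominator must be matched, and only their combination gives the displayed $M^3,M,M\log M,M^{-1},\dots$ structure in $\mathsf H_{4,2}$ and the $M^4,M^2$ terms in $\mathsf H_{6,2}$. These $M$-dependent pieces are expected to cancel later against $S_4^{(2,\mathrm{an})}$, so I would verify them by checking that they are exactly the $\rho_1\to\rho_2$, reversed-orientation analogues of $\mathsf H_{4,1}$ and $\mathsf H_{6,1}$, tracking the sign of each odd power of $M/\sqrt n$ through \eqref{def of ht expansion B2-BM}. The remainder $\mathcal O(M^5/\sqrt n)$ comes from truncating these endpoint expansions at fourth order.
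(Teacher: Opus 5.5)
Your proposal is correct and follows the paper's proof essentially step by step. You take logarithms in Lemma~\ref{lemma:j2- j g2--1}, expand the logarithm to third order in the same way, and apply Lemma~\ref{lemma:Riemann sum NEW} with $A=B_{2,-}$, $a_0=\theta_{2,-}^{(n,\epsilon)}$, $B=B_{2,-}^{(M)}$, $b_0=\theta_{2,-}^{(n,M)}-1$; the denominator is handled by Lemma~\ref{lemma:general sum} with the moving endpoint expanded via \eqref{def of ht expansion B2-BM}, and your error accounting, $\mathcal O(\sqrt n\,M^{-7})$ from the summed $\eta_2^{-8}n^{-4}$ remainders and $\mathcal O(M^5/\sqrt n)$ from the endpoint truncation, matches the paper's.
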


\begin{proof}

Taking logarithms in Lemma~\ref{lemma:j2- j g2--1} gives
\begin{align*}
 \sum_{j=j_{2,-}}^{g_{2,-}-1}\log h_{n,j}^{(\mathrm{an})}  
&=
 \sum_{j=j_{2,-}}^{g_{2,-}-1}
 \Bigl( 
2\log\rho_2+\mathsf{k}(\rho_2)-nq(\rho_2)-\log n
 \Bigr)
  +2n\log\rho_2\sum_{j=j_{2,-}}^{g_{2,-}-1} \frac{j}{n}
 \\
 &\quad
-\sum_{j=j_{2,-}}^{g_{2,-}-1}\log(\tau_{\rho_2}-\tau)
 + \sum_{j=j_{2,-}}^{g_{2,-}-1}
 \Bigl( 
\frac{\widetilde{\mathfrak{a}}_1}{n}+\frac{2\widetilde{\mathfrak{a}}_2-\widetilde{\mathfrak{a}}_1^2}{2n^2}+\frac{\widetilde{\mathfrak{a}}_1^3-3\widetilde{\mathfrak{a}}_1\widetilde{\mathfrak{a}}_2+3\widetilde{\mathfrak{a}}_3}{3n^3}
 \Bigr)+\mathcal{O}(\frac{\sqrt{n}}{M^7}),
\end{align*}
where $\widetilde{\mathfrak a}_k$, $k=1,2,3$, are given by \eqref{def of tilde frak a1}--\eqref{def of tilde frak a3}. Apply Lemma~\ref{lemma:Riemann sum NEW} with $A=B_{2,-}$, $a_0=\theta_{2,-}^{(n,\epsilon)}$, $B=B_{2,-}^{(M)}$, and $b_0=\theta_{2,-}^{(n,M)}-1$. We obtain
\begin{align*}
2n\log\rho_2\sum_{j=j_{2,-}}^{g_{2,-}-1}
 \frac{j}{n}
 &=
 n^2\tau_{\rho_2}^2\log \rho_2
-n^{3/2}\cdot 2M\tau_{\rho_2}^2\log \rho_2
+n\cdot 3M^2\tau_{\rho_2}^2\log \rho_2
-\sqrt{n}\cdot 4M^3\tau_{\rho_2}^2\log \rho_2
\\
&\quad
+5M^4\tau_{\rho_2}^2\log \rho_2
 + \tau_{\rho_2}(2\theta_{2,-}^{(n,M)}-1)n\log\rho_2
 - \tau_{\rho_2}(2\theta_{2,-}^{(n,M)}-1)M\sqrt{n}\log\rho_2
 \\
 &\quad
  + \tau_{\rho_2}(2\theta_{2,-}^{(n,M)}-1)M^2\log\rho_2
 -n^2B_{2,-}^2\log\rho_2
 -(2\theta_{2,-}^{(n,\epsilon)}-1)B_{2,-}n\log\rho_2
 \\
 &\quad
 +
\frac{-1+6\theta_{2,-}^{(n,\epsilon)}-6{\theta_{2,-}^{(n,\epsilon)}}^2+1+6(\theta_{2,-}^{(n,M)}-1)+6(\theta_{2,-}^{(n,M)}-1)^2}{6}\log\rho_2
+\mathcal{O}(\frac{M^5}{\sqrt{n}}),
\end{align*}

\begin{align*}
-\sum_{j=j_{2,-}}^{g_{2,-}-1}\log(\tau_{\rho_2}-\tau)  
&=
\tau_{\rho_2}n-\tau_{\rho_2}M\sqrt{n}+\tau_{\rho_2}M^2
+
\tau_{\rho_2}M\sqrt{n}\log(\tau_{\rho_2}M)-\frac{1}{2}\tau_{\rho_2}M\sqrt{n}\log n
\\
&\quad
-\tau_{\rho_2}M^2-\tau_{\rho_2}M^2\log(\tau_{\rho_2}M)+\frac{1}{2}\tau_{\rho_2} M^2\log n
+\mathcal{O}(\frac{M^3\log n}{\sqrt{n}})
\\
&\quad
-B_{2,-}n
-n(\tau_{\rho_2}-B_{2,-})\log(\tau_{\rho_2}-B_{2,-})
\\
&\quad
+\frac{(2\theta_{2,-}^{(n,\epsilon)}-1)\log(\tau_{\rho_2}-B_{2,-})-(2\theta_{2,-}^{(n,M)}-1)\log(\tau_{\rho_2}-B_{2,-}^{(M)})}{2},
\end{align*}

\begin{align*}
\sum_{j=j_{2,-}}^{g_{2,-}-1}
\frac{\widetilde{\mathfrak{a}}_1}{n} 
&=-\frac{\rho_2^2\Delta Q(\rho_2)}{\tau_{\rho_2}M}\sqrt{n}
+\frac{1}{2}\Bigl(1+\frac{1}{2}\rho_2\mathsf{k}'(\rho_2)\Bigr)\log n
-\frac{\rho_2^2\Delta Q(\rho_2)}{\tau_{\rho_2}}
-\Bigl(1+\frac{1}{2}\rho_2\mathsf{k}'(\rho_2)\Bigr)\log(\tau_{\rho_2}M)
\\
&\quad
+\frac{\rho_2^2\Delta Q(\rho_2)}{\tau_{\rho_2}-B_{2,-}}+\Bigl(1+\frac{1}{2}\rho_2\mathsf{k}'(\rho_2)\Bigr)\log(\tau_{\rho_2}-B_{2,-})
+\mathcal{O}(M^{-2}),
\end{align*}
and 
\begin{align*}
\sum_{j=j_{2,-}}^{g_{2,-}-1}
\frac{2\widetilde{\mathfrak{a}}_2-\widetilde{\mathfrak{a}}_1^2}{2n^2}
&=\frac{5\rho_2^4\Delta Q(\rho_2)^2}{6\tau_{\rho_2}^3M^3}\sqrt{n}+\mathcal{O}(M^{-2}),
\\
\sum_{j=j_{2,-}}^{g_{2,-}-1}
\frac{\widetilde{\mathfrak{a}}_1^3-3\widetilde{\mathfrak{a}}_1\widetilde{\mathfrak{a}}_2+3\widetilde{\mathfrak{a}}_3}{3n^3}
&=-\frac{37\rho_2^6\Delta Q(\rho_2)^3}{15\tau_{\rho_2}^5M^5}\sqrt{n}+\mathcal{O}(M^{-4}).
\end{align*}
Apply Lemma~\ref{lemma:general sum} to the denominator and use \eqref{def of ht expansion B2-BM} to expand the moving endpoint. Collecting equal powers of $n$, $\log n$, and $M$ gives $\mathsf H_{1,2},\ldots,\mathsf H_{6,2}$ as displayed in the statement. The accumulated error is $\mathcal O(M^5n^{-1/2}+\sqrt n\,M^{-7})$, proving \eqref{def of S4(1) an}.
\end{proof}

\subsection{Transition sums and completion of the annulus case}
It remains to analyze $S_2^{(2,\mathrm{an})}$ and $S_4^{(2,\mathrm{an})}$. For $k\in\{1,2\}$, define
\[
\theta_{k,-}^{(n,M)}:=g_{k,-}-\frac{n\tau_{\rho_k}}{1+\frac{M}{\sqrt{n}}}=\Bigl\lceil\frac{n\tau_{\rho_k}}{1+\frac{M}{\sqrt{n}}}\Bigr\rceil-\frac{n\tau_{\rho_k}}{1+\frac{M}{\sqrt{n}}},
\quad
\theta_{k,+}^{(n,M)}:=\frac{n\tau_{\rho_k}}{1-\frac{M}{\sqrt{n}}}   -g_{k,+}=\frac{n\tau_{\rho_k}}{1-\frac{M}{\sqrt{n}}}-\Bigl\lfloor\frac{n\tau_{\rho_k}}{1-\frac{M}{\sqrt{n}}}\Bigr\rfloor.
\]
We use the following modification of \cite[Lemma~2.7]{C2021 FH}. Recall from \eqref{def of Mjk lambda jk} that $-M\leq M_{j,k}\leq M$ for $g_{k,-}\leq j\leq g_{k,+}$. Moreover,
\begin{equation}
M_{j-1,k}-M_{j,k}=\frac{1}{\tau_{\rho_k}}\frac{1}{\sqrt{n}}+\frac{2M_{j,k}}{\tau_{\rho_k}}\frac{1}{n}+\Bigl(\frac{1}{{\tau_{\rho_k}}^2}+\frac{M_{j,k}^2}{\tau_{\rho_k}}\Bigr)\frac{1}{n^{3/2}}+\mathcal{O}(M_{j,k}n^{-2}).
\end{equation}
\begin{lemma}[{A modification of \cite[Lemma~2.7]{C2021 FH}}]\label{lemma:Riemann sum}
Let $h\in C^3(\mathbb R)$ and $k\in\{1,2\}$. As $n\to\infty$,
\begin{align}
& \sum_{j=g_{k,-}}^{g_{k,+}}h(M_{j,k}) = \tau_{\rho_k} \int_{-M}^{M} h(t) dt \; \sqrt{n} \nonumber \\
& - 2 \tau_{\rho_k} \int_{-M}^{M} th(t) dt + \bigg( \frac{1}{2}-\theta_{k,-}^{(n,M)} \bigg)h(M)+ \bigg( \frac{1}{2}-\theta_{k,+}^{(n,M)} \bigg)h(-M) \nonumber \\
& + \frac{1}{\sqrt{n}}\bigg[ 3\tau_{\rho_k} \int_{-M}^{M}t^{2}h(t)dt + \bigg( \frac{1}{12}+\frac{\theta_{k,-}^{(n,M)}(\theta_{k,-}^{(n,M)}-1)}{2} \bigg)\frac{h'(M)}{\tau_{\rho_k}} \nonumber \\
&\qquad - \bigg( \frac{1}{12}+\frac{\theta_{k,+}^{(n,M)}(\theta_{k,+}^{(n,M)}-1)}{2} \bigg)\frac{h'(-M)}{\tau_{\rho_k}} \bigg] \nonumber \\
& + \bigO\Bigg(  \frac{1}{n^{3/2}} \sum_{j=g_{k,-}+1}^{g_{k,+}} \bigg( (1+|M_{j,k}|^{3}) \tilde{\mathfrak{m}}_{j,n}(h) + (1+M_{j,k}^{2})\tilde{\mathfrak{m}}_{j,n}(h') \nonumber \\
&\qquad\qquad + (1+|M_{j,k}|) \tilde{\mathfrak{m}}_{j,n}(h'') + \tilde{\mathfrak{m}}_{j,n}(h''') \bigg)   \Bigg), \label{sum f asymp 2}
\end{align}
where, for $\tilde{h} \in C(\mathbb{R})$ and $j \in \{g_{k,-}+1,\ldots,g_{k,+}\}$, we define $\tilde{\mathfrak{m}}_{j,n}(\tilde{h}) := \max_{x \in [M_{j,k},M_{j-1,k}]}|\tilde{h}(x)|$.
\end{lemma}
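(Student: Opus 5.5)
The statement to prove is the Riemann-sum formula of Lemma~\ref{lemma:Riemann sum} for $\sum_{j=g_{k,-}}^{g_{k,+}}h(M_{j,k})$. I would follow the proof of \cite[Lemma~2.7]{C2021 FH}. The idea is to treat the sum as a Riemann sum on the nonuniform grid $\{M_{j,k}\}$, after changing variables from the index $j$ to the continuous variable $t=\sqrt n(\tau_{\rho_k}/\tau-1)$.

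\textbf{Step 1: the grid.} Fix $k$ and write $\tau=j/n$. Then $j=n\tau_{\rho_k}/(1+t/\sqrt n)$, so $j$ is a smooth decreasing function of $t$:
\[
j(t)=n\tau_{\rho_k}\Bigl(1-\frac{t}{\sqrt n}+\frac{t^2}{n}-\frac{t^3}{n^{3/2}}+\cdots\Bigr).
\]
Hence $-dj/dt=\tau_{\rho_k}\bigl(\sqrt n-2t+3t^2/\sqrt n+\mathcal O(t^3/n)\bigr)$. This is consistent with the recorded expansion of the spacing $M_{j-1,k}-M_{j,k}$.

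The endpoint indices are pinned by the definitions. Since $g_{k,-}=n\tau_{\rho_k}/(1+M/\sqrt n)+\theta_{k,-}^{(n,M)}$, the index $g_{k,-}$ corresponds to $t=M$ shifted by $\theta_{k,-}^{(n,M)}$ index units. Likewise $g_{k,+}$ corresponds to $t=-M$ shifted by $\theta_{k,+}^{(n,M)}$ units in the opposite direction.

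\textbf{Step 2: Euler--Maclaurin in the index.} Let $F(x):=h(M(x))$, where $M(x)=\sqrt n(n\tau_{\rho_k}/x-1)$ is a function of the real index $x$. I would apply the Euler--Maclaurin formula with offsets to $\sum_{j=a}^{b}F(j)$. This is exactly the mechanism of Lemma~\ref{lemma:Riemann sum NEW}, but with step one in $x$ and real endpoints $x_\pm$ that are not integers.

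The formula gives
\[
\sum_{j}F(j)=\int_{x_-}^{x_+}F\,dx+\text{(boundary terms in }F(x_\pm),\,F'(x_\pm))+\text{remainder}.
\]
Here $x_-=n\tau_{\rho_k}/(1+M/\sqrt n)$ and $x_+=n\tau_{\rho_k}/(1-M/\sqrt n)$. The offsets $\theta_{k,\mp}^{(n,M)}$ enter through the Bernoulli polynomials $B_1(\theta)=\theta-\tfrac12$ and $B_2(\theta)/2=\tfrac1{12}+\tfrac{\theta(\theta-1)}2$, evaluated at the fractional offsets.

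\textbf{Step 3: convert each piece to the $t$ variable.}
\begin{itemize}
\item The integral term becomes $\int_{-M}^{M}h(t)\,(-dj/dt)\,dt$. Expanding $-dj/dt$ gives the three displayed integrals, of orders $\sqrt n$, $1$ and $n^{-1/2}$, with weights $\tau_{\rho_k}$, $-2\tau_{\rho_k}t$ and $3\tau_{\rho_k}t^2$.
\item The first boundary terms give $(\tfrac12-\theta)h(\pm M)$. Note the orientation flip: the smallest index corresponds to $t=M$.
\item The derivative boundary terms use $F'(x)=h'(t)\,dt/dx=-h'(t)/(\tau_{\rho_k}\sqrt n)+\cdots$. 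This produces the $n^{-1/2}$ terms with the factor $1/\tau_{\rho_k}$ and the signs as displayed.
\end{itemize}

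\textbf{Step 4: the remainder (main obstacle).} What remains is to bound the Euler--Maclaurin remainder and the truncation errors of the Taylor expansions in $t/\sqrt n$. The difficulty is uniformity, because $M=n^{1/12}$ grows with $n$.

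I would handle this by bounding, on each cell $[j-1,j]$, the third derivative of $F$ by the chain rule. Each $x$-derivative costs a factor $n^{-1/2}$, together with polynomial factors coming from derivatives of $M(x)$. This yields the local maxima $\tilde{\mathfrak m}_{j,n}$ of $h,h',h'',h'''$ weighted by $(1+|M_{j,k}|^{3-i})$, times $n^{-3/2}$, which is the stated error term. Expansion errors of size $t^3/n$ multiplying $h$ are absorbed into the same sum.

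The bookkeeping that needs the most care is the sign and orientation conventions at the two endpoints. The index increases as $t$ decreases, so $\theta_{k,-}^{(n,M)}$ is attached to $h(M)$ and $\theta_{k,+}^{(n,M)}$ to $h(-M)$.
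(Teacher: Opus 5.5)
Your proposal is correct, but it takes a different route from the paper. The paper, following \cite[Lemma~2.7]{C2021 FH}, works directly in the $t$-variable on the nonuniform grid. It expands the mesh width $M_{j-1,k}-M_{j,k}$, applies a trapezoidal/Taylor expansion on each cell $[M_{j,k},M_{j-1,k}]$, and lets the endpoint terms telescope; the $th(t)$ and $t^2h(t)$ integrals come from a second expansion of the mesh width. You instead apply the standard offset Euler--Maclaurin formula in the uniform index variable to $F(x)=h(M(x))$ and change variables only at the end. In your version the entire nonuniformity is carried by the single Jacobian $-dx/dt=\sqrt n\,\tau_{\rho_k}(1+t/\sqrt n)^{-2}$.

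Your route buys transparency. The $\theta_{k,\pm}^{(n,M)}$-coefficients fall out of $B_1$ and $B_2$, the orientation at the endpoints is immediate, and the Euler--Maclaurin remainder is better than required. Indeed, up to constants, the chain rule gives per cell $n^{-3/2}\tilde{\mathfrak m}_{j,n}(h''')+n^{-2}\tilde{\mathfrak m}_{j,n}(h'')+n^{-5/2}\tilde{\mathfrak m}_{j,n}(h')$.

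For the same reason, the polynomial weights in the stated remainder do not come from the chain rule as your Step~4 says. Derivatives of $M(x)$ only contribute factors $(1+t/\sqrt n)^k=1+\mathcal O(M/\sqrt n)$. The weights come from two other sources.
\begin{itemize}
\item Truncating the Jacobian gives the weight $|t|^3$ on $h$.
\item The endpoint terms contribute the rest. $F'(x_\pm)$ carries the factors $(1\pm M/\sqrt n)^2$, and shifting from the integer endpoints $g_{k,\pm}$ to the real points $x_\pm$ costs $\max|F''|$ near $x_\pm$.
\end{itemize}
The endpoint terms leave point values such as $M|h'(\pm M)|/n$ and $|h''(\pm M)|/n$, which are not literally of the summed form. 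You must absorb them, for example via $|h'(M)|\le\int_{M-1}^{M}(|h'|+|h''|)\,dt$ followed by conversion of the integrals to cell sums. That conversion is where the weights $(1+M_{j,k}^2)$ and $(1+|M_{j,k}|)$ are actually used.

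The paper's cell-by-cell argument produces the remainder directly in the $\tilde{\mathfrak m}_{j,n}$ form, with the weights supplied by the mesh-width expansion. Its price is heavier local bookkeeping.
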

\begin{proof}
Expand the mesh width $M_{j-1,k}-M_{j,k}$ by the formula preceding the lemma and apply the trapezoidal rule on every interval $[M_{j,k},M_{j-1,k}]$. Summing the local expansions telescopes the endpoint terms. A second Taylor expansion of the mesh width produces the integrals of $th(t)$ and $t^2h(t)$; the third-order Taylor remainders are bounded by the final line of \eqref{sum f asymp 2}. This is exactly the proof of \cite[Lemma~2.7]{C2021 FH}, with the present value of the mesh parameter $\tau_{\rho_k}$.
\end{proof}

\begin{lemma}\label{lemma:sum of g1- g1+ part}
As $n\to\infty$,
\begin{equation}
S_2^{(2,\mathrm{an})}
 =
G_{4,2}\sqrt{n}+G_{6,2}+\mathcal{O}(\frac{M^5}{\sqrt{n}})+\mathcal{O}(\frac{\sqrt{n}}{M^7}),
\end{equation}
where 
\begin{align*}
G_{4,2}
&=
\sqrt{2}\rho_1\sqrt{\Delta Q(\rho_1)}\,\mathcal{I}
-
\frac{\tau_{\rho_1}^3}{6\rho_1^2\Delta Q(\rho_1)}M^3
+\tau_{\rho_1}(M-M\log M)
-\tau_{\rho_1}M\log\Bigl(\frac{\tau_{\rho_1}\sqrt{2\pi}}{\rho_1\sqrt{\Delta Q(\rho_1)}}\Bigr)
\\
&\quad
+\frac{\rho_1^2\Delta Q(\rho_1)}{\tau_{\rho_1}}\frac{1}{M}
-\frac{5\rho_1^4(\Delta Q(\rho_1))^2}{6\tau_{\rho_1}^3}\frac{1}{M^3}
+\frac{37\rho_1^6(\Delta Q(\rho_1))^3}{15\tau_{\rho_1}^5}\frac{1}{M^5}, 
\end{align*}
\begin{align*}
G_{6,2}
&:= 
-\frac{1}{2}\rho_1\mathsf{k}'(\rho_1)\log(2\sqrt{\pi})
-\frac{\rho_1^2\Delta Q(\rho_1)}{\tau_{\rho_1}}
+\mathcal{J}(\rho_1)
-
\Bigl( 
1+\frac{\rho_1}{2}\mathsf{k}'(\rho_1)
\Bigr)
\log\Bigl(\frac{M\tau_{\rho_1}}{\sqrt{2}\rho_1\sqrt{\Delta Q(\rho_1)}}\Bigr)
\\
&\quad
+
\bigg(\theta_{1,+}^{(n,M)}-\frac{1}{2} \bigg)
\Bigl( 
\frac{\tau_{\rho_1}^2}{2\rho_1^2\Delta Q(\rho_1)}M^2
+\log\Bigl(\frac{M\tau_{\rho_1}\sqrt{2\pi}}{\rho_1\sqrt{\Delta Q(\rho_1)}}\Bigr)
\Bigr)
\\
&\quad
+
\Bigl( 
-\frac{\tau_{\rho_1}^3}{2\rho_1^2\Delta Q(\rho_1)}
+\frac{\tau_{\rho_1}^4}{24\rho_1^4\Delta Q(\rho_1)^2}
+\frac{\tau_{\rho_1}^4\partial_r\Delta Q(\rho_1)}{48\rho_1^3\Delta Q(\rho_1)^3}
\Bigr)M^4
\\
&\quad
+\Bigl( 
-\tau_{\rho_1}\log\Bigl(\frac{M\tau_{\rho_1}\sqrt{2\pi}}{\rho_1\sqrt{\Delta Q(\rho_1)}}\Bigr)
-\frac{\tau_{\rho_1}^2}{4\rho_1^2\Delta Q(\rho_1)}
-\frac{\tau_{\rho_1}^2}{4\rho_1\Delta Q(\rho_1)}\mathsf{k}'(\rho_1)
+\frac{\tau_{\rho_1}^2\partial_r\Delta Q(\rho_1)}{8\rho_1\Delta Q(\rho_1)^2}
\Bigr)M^2.
\end{align*}
\end{lemma}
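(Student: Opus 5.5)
The plan is to insert the transition-window asymptotics of Lemma~\ref{lemma:g1- j g1+} into $S_2^{(2,\mathrm{an})}=\sum_{j=g_{1,-}}^{g_{1,+}}\log(h_{n,j}^{(\mathrm{an})}/h_{n,j})$. We then convert the sum over $j$ into a Riemann sum in the variable $M_{j,1}$ and evaluate it with Lemma~\ref{lemma:Riemann sum}.

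\textbf{Step 1: local ratio.} Throughout the window $|M_{j,1}|\le M$, so the exponentially small tails are $\mathcal O(e^{-cM^2})$ and can be absorbed. We divide $h_\#(\xi_{j,1})$ by the full Laplace expansion \eqref{def of Dn leq j leq n-1}; the prefactors $r_\tau e^{\mathsf k(r_\tau)}e^{-nV_\tau(r_\tau)}/\sqrt{n\Delta Q(r_\tau)}$ cancel. This leaves
\[
\log\frac{h_{n,j}^{(\mathrm{an})}}{h_{n,j}}=\log\Bigl(\tfrac12\mathrm{erfc}\bigl(-\tfrac{\xi_{j,1}}{\sqrt2}\bigr)\Bigr)+\frac{1}{\sqrt n}\frac{\mathfrak c_1(\xi_{j,1})}{\sqrt{\pi/2}\,\mathrm{erfc}(-\xi_{j,1}/\sqrt2)}+\frac1n(\cdots)+\mathcal O\Bigl(\frac{1+|\xi_{j,1}|^{8}}{n^{3/2}}\Bigr),
\]
with $\mathfrak c_1,\mathfrak c_2$ divided by the erfc factor and the $\mathcal A/n$ term of $h_{n,j}$ subtracted at order $n^{-1}$. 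Next we express $\xi_{j,1}$ through $M_{j,1}$, using the expansion of $\rho_1-r_\tau$ in $\lambda_{j,1}-1$ and the expansion of $\xi_{j,k}$ recorded before Lemma~\ref{lemma: hard edge g1+ j j1+ first}. At leading order this gives $\xi_{j,1}/\sqrt2\approx \sqrt2\,\rho_1\sqrt{\Delta Q(\rho_1)}\,\mathsf c_1(\rho_1)M_{j,1}\cdot(\ldots)$, i.e.\ $-\xi_{j,1}/\sqrt2=y+\mathcal O(M^2/\sqrt n)$ with $y$ linear in $M_{j,1}$ (sign chosen so that $y<0$ is the interior side). Taylor expanding $\Phi$ in the $n^{-1/2}$ correction produces, at order $n^{-1/2}$, the polynomial-times-$\Phi'$ integrands that define $\alpha_{\mathrm{in}},\beta_{\mathrm{in}},\alpha_{\mathrm{out}},\beta_{\mathrm{out}}$.

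\textbf{Step 2: summation.} We apply Lemma~\ref{lemma:Riemann sum} with $h(t)=\Phi(\kappa t)$, where $\kappa:=\tau_{\rho_1}^{-1}\rho_1\sqrt{2\Delta Q(\rho_1)}$ up to normalisation. We also apply it to the $n^{-1/2}$ correction functions. The leading term is $\tau_{\rho_1}\sqrt n\int_{-M}^{M}\Phi(\kappa t)\,dt$.
\begin{itemize}
\item On $t\to-\infty$ (the interior), $\Phi\to0$ rapidly.
\item On the other side, $\Phi(x)=-x^2-\log(2\sqrt\pi x)+\log(1-\tfrac1{2x^2}+\cdots)$. We split $\Phi=\log(\sqrt\pi\,xe^{x^2}\mathrm{erfc}x)-x^2-\log(2\sqrt\pi x)$. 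The first part integrates to $\mathcal I$ times the Jacobian $\sqrt2\rho_1\sqrt{\Delta Q(\rho_1)}$, up to the tail truncation $\int_{\kappa M}^\infty$. That tail is expanded asymptotically to give the $1/M,1/M^3,1/M^5$ terms.
\item The explicit parts $-x^2-\log(2\sqrt\pi x)$ integrate exactly to the $M^3$, $M\log M$ and $M$ terms of $G_{4,2}$.
\end{itemize}
At order one we collect the following contributions:
\begin{itemize}
\item the $-2\tau_{\rho_1}\int th$ term;
\item the boundary terms $(\tfrac12-\theta^{(n,M)}_{1,\pm})h(\pm M)$, where $h(-M)\approx0$ and $h(M)$ yields the $\theta_{1,+}^{(n,M)}$ line;
\item the $\sqrt n\cdot n^{-1/2}$ correction integrals, which give $\mathcal J(\rho_1)$ together with the $\mathsf k'$ and $\log$ terms after subtracting their large-$t$ asymptotics (the $\tfrac{1}{1+x}$ and $x(2x^2-3)$ counterterms);
\item the $M^2,M^4$ polynomial remainders coming from the nonlinear change of variables $\xi\leftrightarrow M$.
\end{itemize}

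\textbf{Error control.} The error term in Lemma~\ref{lemma:Riemann sum} is bounded by $n^{-3/2}\cdot\sqrt nM\cdot M^{3}\cdot\mathrm{poly}(M)=\mathcal O(M^5/\sqrt n)$. The truncation of the asymptotic series of the tail integrals produces $\mathcal O(\sqrt n M^{-7})$.

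\textbf{Main obstacle.} I expect the main difficulty to be the order-one bookkeeping: correctly expanding $\xi_{j,1}$ to third order in $M_{j,1}/\sqrt n$, since quadratic terms feed into $\sqrt n\cdot M^2/\sqrt n$ contributions. A second issue is that the divergent-in-$M$ pieces must exactly cancel those of $\mathsf H_{4,1},\mathsf H_{6,1}$ from Lemma~\ref{lemma: hard edge g1+ j j1+ first}. I would use this cancellation as the consistency check guiding the computation. Concretely, I would first verify that the $M^3$, $M\log M$ and $1/M^k$ terms of $G_{4,2}$ are the negatives of those in $\mathsf H_{4,1}$, and only then fix the order-one constants.
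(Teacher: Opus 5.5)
Your plan follows the paper's proof. You divide Lemma~\ref{lemma:g1- j g1+} by \eqref{def of Dn leq j leq n-1} and split the logarithm into the erfc profile plus an $n^{-1/2}$ correction; that correction also absorbs the quadratic term of $\xi_{j,1}$ in $M_{j,1}$, which gives the paper's $h_{0,2}$ and $h_{1,2}$. You then sum with Lemma~\ref{lemma:Riemann sum} and regularize the integrals. Your evaluation of $\tau_{\rho_1}\sqrt n\int_{-M}^{M}h$ splits at the origin and expands the tail $\int_{\beta M}^{\infty}\log(1-\tfrac{1}{2y^2}+\tfrac{3}{4y^4}-\cdots)\,dy$. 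This is slightly cleaner than the paper's split at $-1$, and it reproduces $G_{4,2}$ exactly.

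\textbf{The error: orientation.} With $M_{j,1}=\sqrt n(\tau_{\rho_1}/\tau_j-1)$, the interior side is $M_{j,1}\to+\infty$: there $j<n\tau_{\rho_1}$, the saddle lies in $\mathbb D_{\rho_1}$, and the ratio tends to $1$. Moreover $-\xi_{j,1}/\sqrt2=-\beta M_{j,1}+\mathcal{O}(M^2/\sqrt n)$ with $\beta=\tau_{\rho_1}/(\sqrt2\rho_1\sqrt{\Delta Q(\rho_1)})$.
\begin{itemize}
\item The function to feed into Lemma~\ref{lemma:Riemann sum} is therefore $h(t)=\Phi(-\beta t)$. It is exponentially small at $t=M$, and $h(-M)$ is the value paired with $(\tfrac12-\theta_{1,+}^{(n,M)})$.
\item Your $h(t)=\Phi(\kappa t)$ with $\kappa>0$ is the mirror image. (Also, your $\kappa$ is $1/\beta$, the Jacobian factor, rather than the argument scale.)
\item Used in Lemma~\ref{lemma:Riemann sum} as stated, it makes the surviving endpoint term carry $\theta_{1,-}^{(n,M)}$, not $\theta_{1,+}^{(n,M)}$ as you assert.
\item It also flips the sign of $-2\tau_{\rho_1}\int_{-M}^{M}t\,h(t)\,dt$. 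For instance, its $M^4$ part becomes $+\frac{\tau_{\rho_1}^3}{4\rho_1^2\Delta Q(\rho_1)}M^4$ instead of $-\frac{\tau_{\rho_1}^3}{4\rho_1^2\Delta Q(\rho_1)}M^4$. The $M^4$ coefficient of $G_{6,2}$ then comes out wrong, and its cancellation against $\mathsf H_{6,1}$ fails.
\end{itemize}
(If you instead meant $t=-M_{j,1}$, the endpoint pairing is right but the middle term must read $+2\tau_{\rho_1}\int t\,h$.) Your planned check on $G_{4,2}$ cannot detect this, because $\int_{-M}^{M}h$ is invariant under $t\mapsto-t$; only the order-one terms see it.

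\textbf{Smaller points.} Two sources of $\mathcal{O}(M^5/\sqrt n)$ are missing from your error count. First, the $n^{-1}$ terms of the local ratio grow like $M_{j,1}^4$ on the exterior side and are summed over $\mathcal{O}(M\sqrt n)$ indices. Second, the explicit $n^{-1/2}$ bracket of Lemma~\ref{lemma:Riemann sum} applied to $h_{0,2}$ contributes $\frac{3\tau_{\rho_1}}{\sqrt n}\int_{-M}^{M}t^2h_{0,2}(t)\,dt$, which is of order $M^5/\sqrt n$. The lemma's own $\mathcal{O}$-term is only $\mathcal{O}(M^6/n)$. This also lightens your stated main obstacle: $\xi_{j,1}$ is needed only to second order in $M_{j,1}/\sqrt n$, since the cubic term contributes $\frac1n\sum_j\Phi'\cdot M_{j,1}^3=\mathcal{O}(M^5/\sqrt n)$.
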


\begin{proof}
Combining the transition expansion in Lemma~\ref{lemma:g1- j g1+} with the full norm expansion in Lemma~\ref{lemma: 0 j j1-}, and then expanding the logarithm, gives
\begin{align*}
S_2^{(2,\mathrm{an})}
&=
 \sum_{j=g_{1,-}}^{g_{1,+}}\log\frac{\frac{1}{2}\mathrm{erfc}\Bigl( -\frac{\xi_{j,1}}{\sqrt{2}}\Bigr)
+\frac{\mathfrak{c}_1(\xi_{j,1})}{\sqrt{2\pi}\sqrt{n}}+\frac{\mathfrak{c}_2(\xi_{j,1})}{\sqrt{2\pi}n}+\mathcal{O}(\frac{\xi_{j,1}^8}{n^{3/2}})}{1+\frac{\mathcal{A}(r_{\tau})}{n}+\mathcal{O}(\frac{1}{n^2})}
\\
&=
\sum_{j=g_{1,-}}^{g_{1,+}}
h_{0,2}(M_{j,1})
+
\frac{1}{\sqrt{n}}
\sum_{j=g_{1,-}}^{g_{1,+}}
h_{1,2}(M_{j,1})
+
\frac{1}{n}
\sum_{j=g_{1,-}}^{g_{1,+}}
h_{2,2}(M_{j,1})
+
\mathcal{O}(M^{9}n^{-1}),
\end{align*}
where 
\begin{align}
h_{0,2}(x)&:=
\log\Bigl( 
\frac{1}{2}\mathrm{erfc}\Bigl(-\frac{\tau_{\rho_1}x}{\sqrt{2}\rho_1\sqrt{\Delta Q(\rho_1)}}\Bigr)
\Bigr),   
\\
\begin{split}
h_{1,2}(x)&:=
\frac{e^{-\frac{\tau_{\rho_1}^2x^2}{2\rho_1^2\Delta Q(\rho_1)}}}{\sqrt{2\pi}\mathrm{erfc}(-\frac{\tau_{\rho_1}x}{\sqrt{2}\rho_1\sqrt{\Delta Q(\rho_1)}})}
\Bigl(
\mathfrak{h}_{1,2}^{(1)}(\rho_1)x^2
+\mathfrak{h}_{1,2}^{(2)}(\rho_1)
\Bigr),
\end{split}
\\
h_{2,2}(x)&:=
\frac{e^{-\frac{\tau_{\rho_1}^2x^2}{2\rho_1^2\Delta Q(\rho_1)}}}{\sqrt{2\pi}\mathrm{erfc}(-\frac{\tau_{\rho_1}x}{\sqrt{2}\rho_1\sqrt{\Delta Q(\rho_1)}})}\mathfrak{P}_1(x)
-
\frac{e^{-\frac{\tau_{\rho_1}^2x^2}{\rho_1^2\Delta Q(\rho_1)}}}{\pi\mathrm{erfc}(-\frac{\tau_{\rho_1}x}{\sqrt{2}\rho_1\sqrt{\Delta Q(\rho_1)}})^2}\mathfrak{P}_2(x)
+
\mathfrak{P}_{\mathrm{c}}(\rho_1),
\end{align}
where 
\begin{align}
\mathfrak{h}_{1,2}^{(1)}(\rho_1)&:=-\frac{2\tau_{\rho_1}}{\rho_1\sqrt{\Delta Q(\rho_1)}}
+\frac{\tau_{\rho_1}^2}{3\rho_1^3\Delta Q(\rho_1)^{3/2}}
+\frac{\tau_{\rho_1}^2\partial_r\Delta Q(\rho_1)}{6\rho_1^2\Delta Q(\rho_1)^{5/2}},
\\
\mathfrak{h}_{1,2}^{(2)}(\rho_1)&:=\frac{\partial_r\Delta Q(\rho_1)}{3\Delta Q(\rho_1)^{3/2}}-\frac{\mathsf{k}'(\rho_1)}{\sqrt{\Delta Q(\rho_1)}}-\frac{4}{3\rho_1\sqrt{\Delta Q(\rho_1)}}.
\end{align}
Here $\mathfrak P_1$ is an odd polynomial with monomials $x^5,x^3,x$, $\mathfrak P_2$ is an even polynomial with monomials $x^4,x^2$, and $\mathfrak P_{\mathrm c}(\rho_1)$ is independent of $x$. Their explicit coefficients are not needed below; direct expansion of the preceding ratio shows that $h_{2,2}(x)$ is exponentially small as $x\to+\infty$ and is $\mathcal O(x^4)$ as $x\to-\infty$. The functions $h_{0,2}$ and $h_{1,2}$ are likewise exponentially small as $x\to+\infty$.

Set $\beta:=\frac{\tau_{\rho_1}}{\sqrt2\rho_1\sqrt{\Delta Q(\rho_1)}}$. As $x\to-\infty$,
\begin{align*}
h_{0,2}(x)&=-\frac{\tau_{\rho_1}^2}{2\rho_1^2\Delta Q(\rho_1)}x^2-\log(-x)-\log\Bigl(\frac{\tau_{\rho_1}\sqrt{2\pi}}{\rho_1\sqrt{\Delta Q(\rho_1)}}\Bigr)
\\
&\quad-\frac{\rho_1^2\Delta Q(\rho_1)}{\tau_{\rho_1}^2}x^{-2}+\frac{5\rho_1^4(\Delta Q(\rho_1))^2}{2\tau_{\rho_1}^4}x^{-4}-\frac{37\rho_1^6(\Delta Q(\rho_1))^3}{3\tau_{\rho_1}^6}x^{-6}+\mathcal{O}(x^{-8}), 
\end{align*}
and 
\begin{align*}
h_{1,2}(x)&
=\widetilde{\mathfrak{h}}_{1,2}^{(1)}(\rho_1)x^3
+\widetilde{\mathfrak{h}}_{1,2}^{(2)}(\rho_1)x
+\widetilde{\mathfrak{h}}_{1,2}^{(3)}(\rho_1)\frac{\beta^2x}{1+\beta^2x^2}+\mathcal{O}(x^{-2}),\qquad \beta>0,
\end{align*}
where 
\begin{align*}
\widetilde{\mathfrak{h}}_{1,2}^{(1)}(\rho_1)&:= \frac{\tau_{\rho_1}^2}{\rho_1^2\Delta Q(\rho_1)}-\frac{\tau_{\rho_1}^3}{6\rho_1^4\Delta Q(\rho_1)^2}-\frac{\tau_{\rho_1}^3\partial_r\Delta Q(\rho_1)}{12\rho_1^3\Delta Q(\rho_1)^3},
\\
\widetilde{\mathfrak{h}}_{1,2}^{(2)}(\rho_1)&:= 1+\frac{\tau_{\rho_1}}{2\rho_1^2\Delta Q(\rho_1)}+\frac{\tau_{\rho_1}}{2\rho_1\Delta Q(\rho_1)}\mathsf{k}'(\rho_1)-\frac{\tau_{\rho_1}\partial_r\Delta Q(\rho_1)}{4\rho_1\Delta Q(\rho_1)^2},
\\
\widetilde{\mathfrak{h}}_{1,2}^{(3)}(\rho_1)&:= \frac{1}{\tau_{\rho_1}}-\frac{2\rho_1^2\Delta Q(\rho_1)}{\tau_{\rho_1}^2}+\frac{\rho_1}{2\tau_{\rho_1}}\mathsf{k}'(\rho_1).
 \end{align*}
By Lemma~\ref{lemma:Riemann sum}, we have 
\begin{align*}
\sum_{j=g_{1,-}}^{g_{1,+}}
h_{0,2}(M_{j,1})
&=
\tau_{\rho_1} \int_{-M}^{M} h_{0,2}(t) dt \; \sqrt{n} - 2 \tau_{\rho_1} \int_{-M}^{M} th_{0,2}(t) dt
\\
&\quad
+ \bigg( \frac{1}{2}-\theta_{1,-}^{(n,M)} \bigg)h_{0,2}(M)+ \bigg( \frac{1}{2}-\theta_{1,+}^{(n,M)} \bigg)h_{0,2}(-M) 
+\mathcal{O}(\frac{M^5}{\sqrt{n}}).
\end{align*}
Note that as $n\to+\infty$, we have 
\begin{align*}
&\quad\tau_{\rho_1}\int_{-M}^{M} h_{0,2}(x) dx
\\
&=
\tau_{\rho_1}\int_{-1}^{+\infty} h_{0,2}(x) \,dx
\\
&\quad+
\tau_{\rho_1}\int_{-\infty}^{-1}\Big\{( h_{0,2}(x)-\Bigl(-\frac{\tau_{\rho_1}^2}{2\rho_1^2\Delta Q(\rho_1)}x^2-\log(-x)-\log\Bigl(\frac{\tau_{\rho_1}\sqrt{2\pi}}{\rho_1\sqrt{\Delta Q(\rho_1)}}\Bigr)\Bigr) \Bigr\}\,dx
\\
&\quad
+
\frac{\tau_{\rho_1}^3}{6\rho_1^2\Delta Q(\rho_1)}-\tau_{\rho_1}+\tau_{\rho_1}\log\Bigl(\frac{\tau_{\rho_1}\sqrt{2\pi}}{\rho_1\sqrt{\Delta Q(\rho_1)}}\Bigr)
-
\frac{\tau_{\rho_1}^3}{6\rho_1^2\Delta Q(\rho_1)}M^3
+\tau_{\rho_1}(M-M\log M)
\\
&\quad
-\tau_{\rho_1}M\log\Bigl(\frac{\tau_{\rho_1}\sqrt{2\pi}}{\rho_1\sqrt{\Delta Q(\rho_1)}}\Bigr)
+\frac{\rho_1^2\Delta Q(\rho_1)}{\tau_{\rho_1}}\frac{1}{M}
-\frac{5\rho_1^4(\Delta Q(\rho_1))^2}{6\tau_{\rho_1}^3}\frac{1}{M^3}
\\
&\quad+\frac{37\rho_1^6(\Delta Q(\rho_1))^3}{15\tau_{\rho_1}^5}\frac{1}{M^5}
+\mathcal{O}(\frac{1}{M^7}). 
\end{align*}
For the semi-hard contribution, direct substitution of the definitions gives
\begin{align*}
&\quad \tau_{\rho_1}\int_{-1}^{+\infty} h_{0,2}(x) \,dx
+
\tau_{\rho_1}\int_{-\infty}^{-1}\Big\{( h_{0,2}(x)-\Bigl(-\frac{\tau_{\rho_1}^2}{2\rho_1^2\Delta Q(\rho_1)}x^2-\log(-x)-\log\Bigl(\frac{\tau_{\rho_1}\sqrt{2\pi}}{\rho_1\sqrt{\Delta Q(\rho_1)}}\Bigr)\Bigr) \Bigr\}\,dx
\\
&=
\sqrt{2}\rho_1\sqrt{\Delta Q(\rho_1)}\,\mathcal{I}
+\tau_{\rho_1}
-\frac{\tau_{\rho_1}^3}{6\rho_1^2\Delta Q(\rho_1)}
-\tau_{\rho_1}\log\Bigl(\frac{\tau_{\rho_1}\sqrt{2\pi}}{\rho_1\sqrt{\Delta Q(\rho_1)}}\Bigr). 
\end{align*}
Substitution in the leading integral gives $G_{4,2}$. To identify $G_{6,2}$, first write
\begin{align*}
&\quad-2 \tau_{\rho_1}\int_{-M}^{M} th_{0,2}(t) dt
\\
&=
-2 \tau_{\rho_1}\int_{-1}^{M} th_{0,2}(t) dt
+\frac{\tau_{\rho_1}^3}{4\rho_1^2\Delta Q(\rho_1)}
-\frac{\tau_{\rho_1}}{2}
+\tau_{\rho_1}\log\Bigl(\frac{\tau_{\rho_1}\sqrt{2\pi}}{\rho_1\sqrt{\Delta Q(\rho_1)}}\Bigr)
\\
&\quad
-2\tau_{\rho_1}
\int_{-M}^{-1}\left\{\begin{aligned}&xh_{0,2}(x) 
-\Bigl( 
-\frac{\tau_{\rho_1}^2}{2\rho_1^2\Delta Q(\rho_1)}x^3-x\log(-x)-x\log\Bigl(\frac{\tau_{\rho_1}\sqrt{2\pi}}{\rho_1\sqrt{\Delta Q(\rho_1)}}\Bigr)
\\
&\quad-\frac{\rho_1^2\Delta Q(\rho_1)}{\tau_{\rho_1}^2}\frac{\beta^2x}{1+\beta^2x^2}
\Bigr)\end{aligned}
\right\}\,dx
\\
&\quad
-\frac{\tau_{\rho_1}^3}{4\rho_1^2\Delta Q(\rho_1)}M^4
-\frac{\tau_{\rho_1}}{2}(2M^2\log M-M^2)
-\tau_{\rho_1}M^2\log\Bigl(\frac{\tau_{\rho_1}\sqrt{2\pi}}{\rho_1\sqrt{\Delta Q(\rho_1)}}\Bigr)
\\
&\quad
-\frac{2\rho_1^2\Delta Q(\rho_1)}{\tau_{\rho_1}}\log M
+\frac{\rho_1^2\Delta Q(\rho_1)}{\tau_{\rho_1}}\log \frac{1+\beta^2}{\beta^2}
+\mathcal{O}(M^{-1})
\\
&=
\frac{2\rho_1^2\Delta Q(\rho_1)}{\tau_{\rho_1}}
\int_{-\infty}^0 2y\log\Bigl(\frac{1}{2}\mathrm{erfc}(y)\Bigr)\, dy
\\
&\quad
+\frac{2\rho_1^2\Delta Q(\rho_1)}{\tau_{\rho_1}}
\int_0^{\infty}
\Bigl(
2y\log\Bigl(\frac{1}{2}\mathrm{erfc}(y)\Bigr)+2y^3+2y\log y+2y\log(2\sqrt{\pi})+\frac{y}{1+y^2}
\Bigr)\,dy
\\
&\quad
-\frac{2\rho_1^2\Delta Q(\rho_1)}{\tau_{\rho_1}}
\log\frac{M\tau_{\rho_1}}{\sqrt{2}\rho_1\sqrt{\Delta Q(\rho_1)}}
-\frac{\tau_{\rho_1}^3}{4\rho_1^2\Delta Q(\rho_1)}M^4
-\frac{\tau_{\rho_1}}{2}(2M^2\log M-M^2)
\\
&\quad 
-\tau_{\rho_1}M^2\log\Bigl(\frac{\tau_{\rho_1}\sqrt{2\pi}}{\rho_1\sqrt{\Delta Q(\rho_1)}}\Bigr)
+\mathcal{O}(M^{-1}).
\end{align*}
Moreover,
\[
\bigg( \frac{1}{2}-\theta_{1,+}^{(n,M)} \bigg)h_{0,2}(-M)
=
\bigg( \frac{1}{2}-\theta_{1,+}^{(n,M)} \bigg)
\Bigl( 
-\frac{\tau_{\rho_1}^2}{2\rho_1^2\Delta Q(\rho_1)}M^2-\log M-\log\Bigl(\frac{\tau_{\rho_1}\sqrt{2\pi}}{\rho_1\sqrt{\Delta Q(\rho_1)}}\Bigr)
+
\mathcal{O}(M^{-2})
\Bigr).
\]

By Lemma~\ref{lemma:Riemann sum}, we have 
\begin{align*}
&\quad \frac{1}{\sqrt{n}}
\sum_{j=g_{1,-}}^{g_{1,+}}
h_{1,2}(M_{j,1})
\\
&=
-\frac{\tau_{\rho_1}}{4}\mathfrak{n}_{1,2}^{(1)}(\rho_1)M^4
-\frac{\tau_{\rho_1}}{2}\mathfrak{n}_{1,2}^{(2)}(\rho_1)M^2
-\tau_{\rho_1}\mathfrak{n}_{1,2}^{(3)}(\rho_1)\log\beta
-\tau_{\rho_1}\mathfrak{n}_{1,2}^{(3)}(\rho_1)\log M
+\mathcal{O}(\frac{M^5}{\sqrt{n}})
\\
&\quad 
+\frac{2\rho_1^2\Delta Q(\rho_1)}{\tau_{\rho_1}}\int_{-\infty}^{0}\frac{e^{-y^2}}{\sqrt{\pi}\mathrm{erfc}(y)}
\Bigl( 
\mathfrak{H}_1(\rho_1)y^2+\mathfrak{H}_2(\rho_1)
\Bigr)\,dy
\\
&\quad 
+\frac{2\rho_1^2\Delta Q(\rho_1)}{\tau_{\rho_1}}
\int_0^{\infty}
\Bigl\{
\frac{e^{-y^2}}{\sqrt{\pi}\mathrm{erfc}(y)}\Bigl(\mathfrak{H}_1(\rho_1)y^2+\mathfrak{H}_2(\rho_1)\Bigr)
+\widetilde{\mathfrak{H}}_1(\rho_1)y^3+\widetilde{\mathfrak{H}}_2(\rho_1)y+\widetilde{\mathfrak{H}}_3(\rho_1)\frac{y}{1+y^2}
\Bigr\}\,dy,
\end{align*}
where 
\begin{align*}
\mathfrak{H}_1(\rho_1)&:=-2+\frac{\tau_{\rho_1}}{3\rho_1^2\Delta Q(\rho_1)}+\frac{\tau_{\rho_1}\partial_r\Delta Q(\rho_1)}{6\rho_1\Delta Q(\rho_1)^2},
\\
\mathfrak{H}_2(\rho_1)&:=
\frac{\tau_{\rho_1}\partial_r\Delta Q(\rho_1)}{6\rho_1\Delta Q(\rho_1)^2}
-\frac{\tau_{\rho_1}\mathsf{k}'(\rho_1)}{2\rho_1\Delta Q(\rho_1)}
-\frac{2\tau_{\rho_1}}{3\rho_1^2\Delta Q(\rho_1)}, 
\\
\widetilde{\mathfrak{H}}_1(\rho_1)&:=2-\frac{\tau_{\rho_1}}{3\rho_1^2\Delta Q(\rho_1)}-\frac{\tau_{\rho_1}\partial_r\Delta Q(\rho_1)}{6\rho_1\Delta Q(\rho_1)^2}
\\
\widetilde{\mathfrak{H}}_2(\rho_1)&:=1+\frac{\tau_{\rho_1}}{2\rho_1^2\Delta Q(\rho_1)}+\frac{\tau_{\rho_1}}{2\rho_1\Delta Q(\rho_1)}\mathsf{k}'(\rho_1)-\frac{\tau_{\rho_1}\partial_r\Delta Q(\rho_1)}{4\rho_1\Delta Q(\rho_1)^2}
\\
\widetilde{\mathfrak{H}}_3(\rho_1)&:=-1+\frac{\tau_{\rho_1}}{2\rho_1^2\Delta Q(\rho_1)}+\frac{\tau_{\rho_1}}{4\rho_1\Delta Q(\rho_1)}\mathsf{k}'(\rho_1),
\\
\mathfrak{n}_{1,2}^{(1)}(\rho_1)&:= \frac{\tau_{\rho_1}^2}{\rho_1^2\Delta Q(\rho_1)}-\frac{\tau_{\rho_1}^3}{6\rho_1^4\Delta Q(\rho_1)^2}-\frac{\tau_{\rho_1}^3\partial_r\Delta Q(\rho_1)}{12\rho_1^3\Delta Q(\rho_1)^3},
\\
\mathfrak{n}_{1,2}^{(2)}(\rho_1)&:= 1+\frac{\tau_{\rho_1}}{2\rho_1^2\Delta Q(\rho_1)}+\frac{\tau_{\rho_1}}{2\rho_1\Delta Q(\rho_1)}\mathsf{k}'(\rho_1)-\frac{\tau_{\rho_1}\partial_r\Delta Q(\rho_1)}{4\rho_1\Delta Q(\rho_1)^2},
\\
\mathfrak{n}_{1,2}^{(3)}(\rho_1)&:= 
\frac{1}{\tau_{\rho_1}}-\frac{2\rho_1^2\Delta Q(\rho_1)}{\tau_{\rho_1}^2}+\frac{\rho_1}{2\tau_{\rho_1}}\mathsf{k}'(\rho_1).
 \end{align*}
After the change of variables $y=-\beta x$, these two integrals simplify to
\begin{align*}
&\quad
\frac{2\rho_1^2\Delta Q(\rho_1)}{\tau_{\rho_1}}\int_{-\infty}^{0}\frac{e^{-y^2}}{\sqrt{\pi}\mathrm{erfc}(y)}
\Bigl( 
\mathfrak{H}_1(\rho_1)y^2+\mathfrak{H}_2(\rho_1)
\Bigr)\,dy
\\
&\quad 
+\frac{2\rho_1^2\Delta Q(\rho_1)}{\tau_{\rho_1}}
\int_0^{\infty}
\Bigl\{
\frac{e^{-y^2}}{\sqrt{\pi}\mathrm{erfc}(y)}\Bigl(\mathfrak{H}_1(\rho_1)y^2+\mathfrak{H}_2(\rho_1)\Bigr)
+\widetilde{\mathfrak{H}}_1(\rho_1)y^3+\widetilde{\mathfrak{H}}_2(\rho_1)y+\widetilde{\mathfrak{H}}_3(\rho_1)\frac{y}{1+y^2}
\Bigr\}\,dy
\\
&=
-\frac{1}{2}\rho_1\mathsf{k}'(\rho_1)\log(2\sqrt{\pi})
-\frac{\rho_1^2\Delta Q(\rho_1)}{\tau_{\rho_1}}
+\mathcal{J}(\rho_1). 
\end{align*}
Finally, the stated growth bounds for $h_{2,2}$ and Lemma~\ref{lemma:Riemann sum} imply
$n^{-1}\sum_{j=g_{1,-}}^{g_{1,+}}h_{2,2}(M_{j,1})=\mathcal O(M^5/\sqrt n)$. Combining these contributions gives $G_{6,2}$ and the asserted uniform remainder.
\end{proof}
The outer transition sum has the corresponding expansion.
\begin{lemma}\label{lemma:sum of g2- part g2+part}
As $n\to\infty$,
\begin{equation}
 S_4^{(2,\mathrm{an})}
 =
 G_{4,4}\sqrt{n}+G_{6,4}+\mathcal{O}(\frac{M^5}{\sqrt{n}})+\mathcal{O}(\frac{\sqrt{n}}{M^7}),
\end{equation}
where 
\begin{align*}
G_{4,4}&:=    
\sqrt{2}\rho_2\sqrt{\Delta Q(\rho_2)}\,\mathcal{I}
-\frac{\tau_{\rho_2}^3}{6\rho_2^2\Delta Q(\rho_2)}M^3
+\tau_{\rho_2}(M-M\log M)-\tau_{\rho_2}M\log\Bigl(\frac{\tau_{\rho_2}\sqrt{2\pi}}{\rho_2\sqrt{\Delta Q(\rho_2)}}\Bigr)
\\
&\quad 
+\frac{\rho_2^2\Delta Q(\rho_2)}{\tau_{\rho_2}}\frac{1}{M}
-\frac{5\rho_2^4(\Delta Q(\rho_2))^2}{6\tau_{\rho_2}^3}\frac{1}{M^3}
+\frac{37\rho_2^6(\Delta Q(\rho_2))^3}{15\tau_{\rho_2}^5}\frac{1}{M^5},
\\
G_{6,4}&:=
\frac{1}{2}\rho_2\mathsf{k}'(\rho_2)\log(2\sqrt{\pi})
+
\frac{\rho_2^2\Delta Q(\rho_2)}{\tau_{\rho_2}}
-\mathcal{J}(\rho_2)
+
\Bigl( 
1+\frac{\rho_2}{2}\mathsf{k}'(\rho_2)
\Bigr)
\log\Bigl(\frac{M\tau_{\rho_2}}{\sqrt{2}\rho_2\sqrt{\Delta Q(\rho_2)}}\Bigr)
\\
&\quad
+\Bigl(
\frac{\tau_{\rho_2}^3}{2\rho_2^2\Delta Q(\rho_2)}
-\frac{\tau_{\rho_2}^4}{24\rho_2^4\Delta Q(\rho_2)^2}
-\frac{\tau_{\rho_2}^4\partial_r\Delta Q(\rho_2)}{48\rho_2^3\Delta Q(\rho_2)^3}
\Bigr)M^4
\\
&\quad
+\Bigl( 
\tau_{\rho_2}\log\Bigl(\frac{M\tau_{\rho_2}\sqrt{2\pi}}{\rho_2\sqrt{\Delta Q(\rho_2)}}\Bigr)
+\frac{\tau_{\rho_2}^2}{4\rho_2^2\Delta Q(\rho_2)}
+\frac{\tau_{\rho_2}^2}{4\rho_2\Delta Q(\rho_2)}\mathsf{k}'(\rho_2)
-\frac{\tau_{\rho_2}^2\partial_r\Delta Q(\rho_2)}{8\rho_2\Delta Q(\rho_2)^2}
\Bigr)M^2
\\
&\quad
+\bigg(\theta_{2,-}^{(n,M)} -\frac{1}{2} \bigg)
\Bigl(\frac{\tau_{\rho_2}^2}{2\rho_2^2\Delta Q(\rho_2)}M^2
+\log\Bigl(\frac{M\tau_{\rho_2}\sqrt{2\pi}}{\rho_2\sqrt{\Delta Q(\rho_2)}}\Bigr)\Bigr).
\end{align*}

\end{lemma}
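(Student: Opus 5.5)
The proof will mirror that of Lemma~\ref{lemma:sum of g1- g1+ part}, with the roles of the inner and outer sides of the wall interchanged. The first step is to use Lemma~\ref{lemma:g2- j g2+} together with the full-norm expansion \eqref{def of Dn leq j leq n-1}. This gives, uniformly for $g_{2,-}\le j\le g_{2,+}$,
\[
\log\frac{h_{n,j}^{(\mathrm{an})}}{h_{n,j}}=\log\Bigl(1-\frac{h_\#(\xi_{j,2})}{h_{n,j}}\Bigr)+\mathcal O(e^{-cM^2}).
\]
Since $h_\#/h_{n,j}$ equals $\tfrac12\mathrm{erfc}(-\xi_{j,2}/\sqrt2)$ plus corrections of order $n^{-1/2}$ and $n^{-1}$, its complement is $\tfrac12\mathrm{erfc}(\xi_{j,2}/\sqrt2)$ up to the same corrections, with the signs of $\mathfrak c_1,\mathfrak c_2$ reversed. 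Next I express $\xi_{j,2}$ in terms of $M_{j,2}$, using the inversion of $\lambda_{j,2}-1$ and the expansion of $\xi_{j,k}$ stated before Lemma~\ref{lemma: hard edge g1+ j j1+ first}. Expanding the logarithm then yields
\[
S_4^{(2,\mathrm{an})}=\sum_j h_{0,4}(M_{j,2})+n^{-1/2}\sum_j h_{1,4}(M_{j,2})+n^{-1}\sum_j h_{2,4}(M_{j,2})+\mathcal O(M^9n^{-1}),
\]
where $h_{0,4}(x)=h_{0,2}$ with $\rho_1$ replaced by $\rho_2$ and $x$ replaced by $-x$. In words, $h_{0,4}(x)=\log\bigl(\tfrac12\mathrm{erfc}(\beta_2 x)\bigr)$ with $\beta_2=\tau_{\rho_2}/(\sqrt2\rho_2\sqrt{\Delta Q(\rho_2)})$. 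This function is exponentially small as $x\to-\infty$ and grows quadratically as $x\to+\infty$, which is the mirror image of the $\rho_1$ case.

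I then apply Lemma~\ref{lemma:Riemann sum} with $k=2$. The leading term is $\tau_{\rho_2}\sqrt n\int_{-M}^{M}h_{0,4}$. I subtract the large-$x$ asymptotics
\[
-\beta_2^2x^2-\log x-\log\bigl(\sqrt{2\pi}\,\tau_{\rho_2}/(\rho_2\sqrt{\Delta Q(\rho_2)})\bigr)-\rho_2^2\Delta Q(\rho_2)\tau_{\rho_2}^{-2}x^{-2}+\cdots
\]
and substitute $y=\beta_2x$. This identifies the constant $\sqrt2\rho_2\sqrt{\Delta Q(\rho_2)}\,\mathcal I$, and the explicit $M$-dependent polynomial and log terms then give $G_{4,4}$. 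Since the growing side of $h_{0,4}$ is now at $+M$, the boundary term of Lemma~\ref{lemma:Riemann sum} that survives is $(\tfrac12-\theta_{2,-}^{(n,M)})h_{0,4}(M)$. This accounts for the $(\theta_{2,-}^{(n,M)}-\tfrac12)(\cdots)$ term in $G_{6,4}$. The moment term $-2\tau_{\rho_2}\int t\,h_{0,4}$ changes sign relative to the $\rho_1$ case because the integrand is reflected. This sign flip explains why $G_{6,4}$ carries $+\rho_2^2\Delta Q(\rho_2)/\tau_{\rho_2}$, $-\mathcal J(\rho_2)$, and the reversed signs on the $M^4$, $M^2$, and $\log M$ coefficients.

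For the $n^{-1/2}$ sum I proceed exactly as for $h_{1,2}$. I write $h_{1,4}$ as a Gaussian-over-erfc kernel times a quadratic polynomial, and subtract its cubic, linear, and $y/(1+y^2)$ asymptotics as $x\to+\infty$. After the substitution $y=\beta_2x$, the resulting constants combine with the $\int y\log(\tfrac12\mathrm{erfc})$ constant from the moment term into $\tfrac12\rho_2\mathsf k'(\rho_2)\log(2\sqrt\pi)+\rho_2^2\Delta Q(\rho_2)/\tau_{\rho_2}-\mathcal J(\rho_2)$, with $\alpha_{\mathrm{in/out}}$ and $\beta_{\mathrm{in/out}}$ arising as before. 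The $n^{-1}$ sum is $\mathcal O(M^5/\sqrt n)$ by the same polynomial growth bound. The truncation at $\pm M$ produces the $\mathcal O(\sqrt n/M^7)$ error, coming from the $x^{-8}$ remainder in the asymptotics of $h_{0,4}$.

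The main obstacle is bookkeeping the signs under the reflection $x\mapsto-x$. The mesh of $M_{j,2}$ still has the fixed orientation given by the formula for $M_{j-1,k}-M_{j,k}$, and the denominator correction $\mathcal A(r_\tau)/n$ has no counterpart at the hard side. So I would first verify that the odd-moment and $h'(\pm M)$ contributions indeed flip sign, which produces $-\mathcal J(\rho_2)$ rather than $+\mathcal J(\rho_2)$. I would test this against the cancellation with $\mathsf H_{6,2}$ from Lemma~\ref{lemma: S41 an asymptotics}: the sum of the two must be independent of $M$, as it is for $\rho_1$.
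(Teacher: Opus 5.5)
Your proposal follows the paper's route: the paper proves this lemma by repeating the $\rho_1$ transition-window argument of Lemma~\ref{lemma:sum of g1- g1+ part} at $\rho_2$ with the orientation reversed. Your reflected profile $h_{0,4}(x)=\log\bigl(\tfrac12\mathrm{erfc}(\beta_2x)\bigr)$, the surviving boundary term $(\tfrac12-\theta_{2,-}^{(n,M)})h_{0,4}(M)$, and the sign change of $-2\tau_{\rho_2}\int t\,h_{0,4}(t)\,dt$ are exactly what that repetition produces.

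One correction to your sign bookkeeping. The reversal of $\rho_2^2\Delta Q(\rho_2)/\tau_{\rho_2}$, of $\mathcal J(\rho_2)$, and of the $M^4$, $M^2$, $\log M$ coefficients does not come from the moment term alone; the $h'(\pm M)$ terms are only $\mathcal O(M/\sqrt n)$ and play no role. It also requires $h_{1,4}(x)=-h_{1,2}(-x)$ (with $\rho_1\to\rho_2$), an overall sign change caused by $\mathfrak c_1\mapsto-\mathfrak c_1$ together with the opposite sign of $\partial_\xi\log\bigl(\tfrac12\mathrm{erfc}(\xi/\sqrt2)\bigr)$ in the $\xi\leftrightarrow M_{j,2}$ correction. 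This is needed because reflection alone leaves $\int_{-M}^{M}h_{1}(x)\,dx$ unchanged.
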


\begin{proof}
Repeat the proof of Lemma~\ref{lemma:sum of g1- g1+ part} at the outer endpoint $\rho_2$. The local coordinate has the opposite orientation, so the odd endpoint corrections change sign, whereas the even corrections do not. Applying Lemma~\ref{lemma:Riemann sum} to the resulting functions $h_{0,2}$, $h_{1,2}$, and $h_{2,2}$, with every occurrence of $\rho_1$ and $\tau_{\rho_1}$ replaced by $\rho_2$ and $\tau_{\rho_2}$, gives the displayed expressions for $G_{4,4}$ and $G_{6,4}$. The tail bounds are unchanged and remain uniform, yielding the error $\mathcal O(M^5n^{-1/2})+\mathcal O(\sqrt n\,M^{-7})$.
\end{proof}

We can now assemble the annulus expansion.
\begin{proof}[Proof of Theorem~\ref{theorem:hole probability of annulus case}]
Start from the five-range decomposition following \eqref{def of calPns C}. Lemma~\ref{lemma: S1245 an error} removes $S_1^{(\mathrm{an})}$, $S_2^{(3,\mathrm{an})}$, $S_4^{(3,\mathrm{an})}$, and $S_5^{(\mathrm{an})}$. For the remaining terms, insert Lemma~\ref{lemma:S3 theta function part}, the theta-function identity \eqref{def of tilde Theta rewritten}, Lemmas~\ref{lemma: hard edge g1+ j j1+ first} and \ref{lemma:sum of g1- g1+ part} at $\rho_1$, and Lemmas~\ref{lemma: S41 an asymptotics} and \ref{lemma:sum of g2- part g2+part} at $\rho_2$.

The auxiliary cutoff $M=n^{1/12}$ disappears upon addition: the terms proportional to $M^4$, $M^3\sqrt n$, $M^2$, $M\sqrt n$, $\log M$, and the inverse odd powers of $M$ cancel pairwise between the moderate-deviation and transition ranges at each endpoint. The remaining errors are $o(1)$. Collecting the coefficients of $n^2$, $n\log n$, $n$, $\sqrt n$, $\log n$, and $1$, and comparing them with the coefficient catalogue, gives precisely the expansion stated in Theorem~\ref{theorem:hole probability of annulus case}.
\end{proof}

\subsection{Proof of Theorem~\ref{theorem:hole probability of disk complement case}}
The disk-complement calculation retains only the inner hard endpoint from the annulus analysis. We therefore record the necessary modifications and refer to the preceding uniform estimates whenever they apply verbatim. Decompose
\[
  \log\mathcal{P}_{n,s\lambda,\alpha}^{(\mathrm{dc})}
  =
  S_1^{(\mathrm{dc})}+S_2^{(\mathrm{dc})}+S_3^{(\mathrm{dc})},
\]
where 
\[
    S_1^{(\mathrm{dc})}:=\sum_{j=0}^{j_{1,-}-1}\log\frac{h_{n,j}^{(\mathrm{dc})}}{h_{n,j}}, 
    \qquad
    S_2^{(\mathrm{dc})}:=\sum_{j=j_{1,-}}^{j_{1,+}}\log\frac{h_{n,j}^{(\mathrm{dc})}}{h_{n,j}}, 
    \qquad
    S_3^{(\mathrm{dc})}:=\sum_{j=j_{1,+}+1}^{n-1}\log\frac{h_{n,j}^{(\mathrm{dc})}}{h_{n,j}}.
\]
Split the edge range further as
\[
S_2^{(1,\mathrm{dc})}:=\sum_{j=j_{1,-}}^{g_{1,-}-1}\log\frac{h_{n,j}^{(\mathrm{dc})}}{h_{n,j}},\quad
S_2^{(2,\mathrm{dc})}:=\sum_{j=g_{1,-}}^{g_{1,+}}\log\frac{h_{n,j}^{(\mathrm{dc})}}{h_{n,j}},\quad
S_2^{(3,\mathrm{dc})}:=\sum_{j=g_{1,+}+1}^{j_{1,+}}\log\frac{h_{n,j}^{(\mathrm{dc})}}{h_{n,j}}.
\]
The analogues of Lemmas~\ref{lemma: S1245 an error}, \ref{lemma: hard edge g1+ j j1+ first}, and \ref{lemma:sum of g1- g1+ part} control $S_1^{(\mathrm{dc})}$ and $S_2^{(\mathrm{dc})}$. The only new term is the macroscopic tail $S_3^{(\mathrm{dc})}$.
\begin{lemma}
\label{lemma:asymptotic expansion of S3dc}
As $n\to\infty$,
\begin{equation}
S_3^{(\mathrm{dc})}=\mathsf{B}_{1}n^2
+
\mathsf{B}_{2}n\log n
+
\mathsf{B}_{3}n
+
\mathsf{B}_{5}\log n
+
\mathsf{B}_{6}
+
\mathcal{O}(n^{-1}),
\end{equation}
where 
\begin{align*}
\mathsf{B}_{1}&=-(1-A_{1,+})q(\rho_1)+(1-A_{1,+}^2)\log\rho_1+2\int_{r_{A_{1,+}}}^{R}\Bigl(
q(u)-uq'(u)\log u
\Bigr)u\Delta Q(u)\,du,
\\
\mathsf{B}_{2}&=-\frac{1}{2}(1-A_{1,+}),
\\
\mathsf{B}_{3}&=
(1-A_{1,+})\bigl(2\log\rho_1+\mathsf{k}(\rho_1)\bigr)
-(\theta_{1,+}^{(n,\epsilon)}-1)q(\rho_1)
+(2\theta_{1,+}^{(n,\epsilon)}-1)A_{1,+}\log\rho_1-\log\rho_1
\\
&\quad 
+
1-A_{1,+}-(1-\tau_{\rho_1})\log(1-\tau_{\rho_1})+(A_{1,+}-\tau_{\rho_1})\log(A_{1,+}-\tau_{\rho_1})
\\
&\quad
-\Bigl(1-A_{1,+}\Bigr)\log\sqrt{2\pi}
+\frac{1}{2}\Bigl((2\theta_{1,+}^{(n,\epsilon)}-1)V_{A_{1,+}}(r_{A_{1,+}})-V_{1}(R)\Bigr)
\\
&\quad
-\int_{r_{A_{1,+}}}^{R}2\log u\cdot u\Delta Q(u)\,du
-\int_{r_{A_{1,+}}}^{R}2\mathsf{k}(u)u\Delta Q(u)\,du
\\
&\quad
+\int_{r_{A_{1,+}}}^{R}u\Delta Q(u)\log \Delta Q(u)\,du,
\\
\mathsf{B}_{5}&=\frac{1}{2}(1-\theta_{1,+}^{(n,\epsilon)}),
\\
\mathsf{B}_{6}&=(\theta_{1,+}^{(n,\epsilon)}-1)\bigl(2\log\rho_1+\mathsf{k}(\rho_1)\bigr)
+\frac{\log\rho_1}{6}\Bigl(-1+6(1-\theta_{1,+}^{(n,\epsilon)})-6(1-\theta_{1,+}^{(n,\epsilon)})^2 \Bigr)+\frac{\log\rho_1}{6}
\\
&\quad 
-\frac{2\theta_{1,+}^{(n,\epsilon)}-1}{2}\log(A_{1,+}-\tau_{\rho_1})+\frac{1}{2}\log(1-\tau_{\rho_1})
\\
&\quad 
+\frac{\rho_1^2\Delta Q(\rho_1)}{1-\tau_{\rho_1}}-\frac{\rho_1^2\Delta Q(\rho_1)}{A_{1,+}-\tau_{\rho_1}}
-\Bigl(\frac{1}{2}\rho_1\mathsf{k}'(\rho_1)+1\Bigr)\Bigl(\log(1-\tau_{\rho_1})-\log(A_{1,+}-\tau_{\rho_1})\Bigr)
\\
&\quad 
+(1-\theta_{1,+}^{(n,\epsilon)})\log\sqrt{2\pi}
-\frac{1}{6}(-1+6(1-\theta_{1,+}^{(n,\epsilon)})-6(1-\theta_{1,+}^{(n,\epsilon)})^2)\log r_{A_{1,+}}
-\frac{1}{6}\log R
\\
&\quad
-\frac{(2\theta_{1,+}^{(n,\epsilon)}-1)\log r_{A_{1,+}}-\log R}{2}
-\frac{(2\theta_{1,+}^{(n,\epsilon)}-1)\mathsf{k}(r_{A_{1,+}})-\mathsf{k}(R)}{2}
\\
&\quad
+\frac{(2\theta_{1,+}^{(n,\epsilon)}-1)\log\Delta Q(r_{A_{1,+}})-\log\Delta Q(R)}{4}
\\
&\quad
-\frac{1}{24}\int_{r_{A_{1,+}}}^{R}\Bigl(\frac{\partial \Delta Q(u)}{\Delta Q(u)}\Bigr)^2u\,du
+\frac{1}{16}\bigg[
\frac{R\partial_u\Delta Q(R)}{\Delta Q(R)}-\frac{r_{A_{1,+}}\partial_u\Delta Q(r_{A_{1,+}})}{\Delta Q(r_{A_{1,+}})}
\bigg]\\
&\quad 
-\frac{1}{3}\log\frac{\Delta Q(r_{A_{1,+}})}{\Delta Q(R)}
-\frac{1}{6}\log\frac{R}{r_{A_{1,+}}}
-\frac{1}{4}\int_{r_{A_{1,+}}}^{R}u\bigl( \Lambda(u;s)+L(u;\alpha,s)\bigr)\,du.
\end{align*}
\end{lemma}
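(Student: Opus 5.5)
The plan is to write $S_3^{(\mathrm{dc})}=S_3^{(\mathrm{top},\mathrm{dc})}-S_3^{(\mathrm{bot},\mathrm{dc})}$, where
\[
S_3^{(\mathrm{top},\mathrm{dc})}:=\sum_{j=j_{1,+}+1}^{n-1}\log h_{n,j}^{(\mathrm{dc})},
\qquad
S_3^{(\mathrm{bot},\mathrm{dc})}:=\sum_{j=j_{1,+}+1}^{n-1}\log h_{n,j}.
\]
This mirrors the treatment of $S_3^{(\mathrm{an})}$. The difference is that no second hard endpoint competes, so there is no switching index $j_\star$ and no theta term.

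For the denominator, we apply Lemma~\ref{lemma:general sum} with $A=A_{1,+}$, $a_0=1-\theta_{1,+}^{(n,\epsilon)}$, $B=1$ and $b_0=-1$. This is admissible because $A_{1,+}$ is bounded away from $0$ and from $1$ by \eqref{def of tau k ast epsilon inequality}. It produces $n^2,n\log n,n,\log n,1$ terms with $r_B=R$, which accounts for all the $R$-dependent pieces of $\mathsf B_k$ (for example $V_1(R)$, $\log R$, $\mathsf k(R)$, $\log\Delta Q(R)$, and the $\Lambda+L$ integral up to $R$). Its error is $\mathcal O(n^{-1})$, since $j\ge c_0 n$ throughout.

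For the numerator, we first establish the analogue of Lemma~\ref{lemma: g1+leq j leq j+1}, with $\eta_1$ now bounded below, uniformly for $j_{1,+}+1\le j\le n-1$:
\[
h_{n,j}^{(\mathrm{dc})}=\frac{\rho_1^2e^{\mathsf k(\rho_1)}e^{-nV_\tau(\rho_1)}}{n(\tau_j-\tau_{\rho_1})}\Bigl(1-\frac{\mathfrak a_1}{n}+\mathcal O(n^{-2})\Bigr).
\]
The argument uses \eqref{def of Vtau r leq rho1ast}: $V_\tau'\le -c_\epsilon$ on $(0,\rho_1]$, so the interior tail is $\mathcal O(e^{-cM\sqrt n})$ relative to the main term. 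Since $\tau_j-\tau_{\rho_1}\ge \epsilon\tau_{\rho_1}/(1-\epsilon)$, all $\mathfrak a_k$ are $\mathcal O(1)$, and the endpoint Laplace expansion is needed only to relative order $n^{-1}$. Note that $\tau_j$ runs up to $(n-1)/n<1$, so $1-\tau_{\rho_1}>0$ remains away from zero at the upper end.

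Taking logarithms, the numerator splits into five pieces. The first is $(-nq(\rho_1)+2\log\rho_1+\mathsf k(\rho_1)-\log n)$ times the number of terms. The second is $2\log\rho_1\sum_j j$. The third is $-\sum_j\log(\tau_j-\tau_{\rho_1})$. The fourth is $-\frac1n\sum_j\mathfrak a_1(\tau_j)$. The fifth is the $\mathcal O(n^{-1})$ remainder.

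Each sum is evaluated with Lemma~\ref{lemma:Riemann sum NEW} on $[A_{1,+},1]$, using the same $a_0,b_0$ as above. The quantity $\sum_j\log(\tau_j-\tau_{\rho_1})$ yields the $(1-\tau_{\rho_1})\log(1-\tau_{\rho_1})$ and $(A_{1,+}-\tau_{\rho_1})\log(A_{1,+}-\tau_{\rho_1})$ terms at order $n$, together with the $\frac12\log(1-\tau_{\rho_1})$ and $\theta_{1,+}^{(n,\epsilon)}$-dependent boundary terms at order one. For $\frac1n\sum_j\mathfrak a_1$, we use
\[
\mathfrak a_1=\frac{v_{2,1}\rho_1^2}{4(\tau-\tau_{\rho_1})^2}+\frac{\rho_1\mathsf k'(\rho_1)+1}{2(\tau-\tau_{\rho_1})},
\]
together with $v_{2,1}=4\Delta Q(\rho_1)-V_\tau'(\rho_1)/\rho_1$ from \eqref{def of V tau relationship 1}. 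Integrating then gives the $\rho_1^2\Delta Q(\rho_1)/(1-\tau_{\rho_1})-\rho_1^2\Delta Q(\rho_1)/(A_{1,+}-\tau_{\rho_1})$ and $(\frac12\rho_1\mathsf k'(\rho_1)+1)\log$ contributions in $\mathsf B_6$.

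Subtracting the denominator from the numerator and collecting the powers of $n$ then gives $\mathsf B_1,\dots,\mathsf B_6$. The main obstacle is purely bookkeeping: the $V_\tau'(\rho_1)/\rho_1$ part of $v_{2,1}$ must combine with the $1/(\tau-\tau_{\rho_1})$ term so that the coefficient of the logarithm is exactly $\frac12\rho_1\mathsf k'(\rho_1)+1$. In addition, the Euler--Maclaurin boundary terms at $B=1$, $b_0=-1$ (for example $(1+2b_0)/2=-\tfrac12$) must be tracked so that they produce the displayed $-\log\rho_1$, $+\frac{\log\rho_1}{6}$, $-\frac16\log R$ and $\frac12\log(1-\tau_{\rho_1})$ terms. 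The step of identifying the relevant sums requires nothing beyond the lemmas already proved.
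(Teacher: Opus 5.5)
Your proposal is correct and follows the paper's own proof. You split $S_3^{(\mathrm{dc})}$ into constrained and unconstrained sums and apply Lemma~\ref{lemma:general sum} on $[A_{1,+},1]$ with $a_0=1-\theta_{1,+}^{(n,\epsilon)}$, $b_0=-1$ to the denominator. For the numerator you use the $\rho_1$-endpoint Laplace expansion of Lemma~\ref{lemma: j1+ j jdiamond} with the $\rho_2$ contribution deleted, followed by Lemma~\ref{lemma:Riemann sum NEW}. Your bookkeeping also checks out: $v_{2,1}=4\Delta Q(\rho_1)+\eta_1/\rho_1$ gives $\mathfrak a_1=\rho_1^2\Delta Q(\rho_1)(\tau-\tau_{\rho_1})^{-2}+(\tfrac12\rho_1\mathsf k'(\rho_1)+1)(\tau-\tau_{\rho_1})^{-1}$, which reproduces the corresponding terms of $\mathsf B_6$ exactly.
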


\begin{proof}
For $j\geq j_{1,+}+1$, the constrained norm is dominated by the endpoint $\rho_1$. The endpoint Laplace expansion used in Lemma~\ref{lemma: j1+ j jdiamond}, with the $\rho_2$ contribution deleted, is uniform because $\tau_j-\tau_{\rho_1}$ is bounded away from zero. Taking logarithms and applying Lemma~\ref{lemma:Riemann sum NEW} on $[A_{1,+},1]$ gives the expansion of the numerator. Apply Lemma~\ref{lemma:general sum} on the same interval to the unconstrained denominator, whose upper endpoint is $r(1)=R$. Subtracting the two expansions yields $\mathsf B_1,\mathsf B_2,\mathsf B_3,\mathsf B_5$, and $\mathsf B_6$ as displayed. All local remainders are uniform in $j$ and sum to $\mathcal O(n^{-1})$.
\end{proof}

\begin{proof}[Proof of Theorem~\ref{theorem:hole probability of disk complement case}]
Combine Lemma~\ref{lemma:asymptotic expansion of S3dc} with the disk-complement analogues of Lemmas~\ref{lemma: S1245 an error}, \ref{lemma: hard edge g1+ j j1+ first}, and \ref{lemma:sum of g1- g1+ part}. As in the annulus case, all $M$-dependent terms cancel between the moderate-deviation and transition ranges, and the remaining errors are $o(1)$ for $M=n^{1/12}$. Grouping the coefficients by powers of $n$ and $\log n$ gives the expansion in Theorem~\ref{theorem:hole probability of disk complement case}.
\end{proof}

\subsection{Proof of Theorem~\ref{theorem:hole probability of centered disk case}}
For the centered-disk hole, decompose
\[
  \log\mathcal{P}_{n,s\lambda,\alpha}^{(\mathrm{cd})}
  =S_1^{(\mathrm{cd})} +S_2^{(\mathrm{cd})}+S_3^{(\mathrm{cd})}, 
\]
where 
\[
S_1^{(\mathrm{cd})}:=\sum_{j=0}^{j_{2,-}-1}\log\frac{h_{n,j}^{(\mathrm{cd})}}{h_{n,j}}, \qquad
S_2^{(\mathrm{cd})}:=\sum_{j=j_{2,-}}^{j_{2,+}}\log\frac{h_{n,j}^{(\mathrm{cd})}}{h_{n,j}},\qquad
S_3^{(\mathrm{cd})}:=\sum_{j=j_{2,+}+1}^{n-1}\log\frac{h_{n,j}^{(\mathrm{cd})}}{h_{n,j}}. 
\]
The annulus argument applies directly to $S_2^{(\mathrm{cd})}$ and $S_3^{(\mathrm{cd})}$. The range $S_1^{(\mathrm{cd})}$ requires a separate treatment because the unconstrained denominator includes the small-index norms governed by the behavior of the potential at the origin. Set
\[
    \theta_{D_n}:=D_n-n^{\frac{1}{6}}=\bigl\lceil n^{\frac{1}{6}}\bigr\rceil-n^{\frac{1}{6}}, 
\]
so that $D_n=n^{1/6}+\theta_{D_n}$.

\begin{proof}[Proof of Theorem~\ref{theorem:hole probability of centered disk case}]
First write
\[
    S_1^{(\mathrm{cd})}=S_1^{(1,\mathrm{cd})}-S_1^{(2,\mathrm{cd})}-S_1^{(3,\mathrm{cd})},
\]
where 
\[
    S_1^{(1,\mathrm{cd})}=\sum_{j=0}^{j_{2,-}-1}\log h_{n,j}^{(\mathrm{cd})},\qquad S_1^{(2,\mathrm{cd})}=\sum_{j=0}^{D_n-1}\log h_{n,j},\qquad
    S_1^{(3,\mathrm{cd})}=\sum_{j=D_n}^{j_{2,-}-1}\log h_{n,j}. 
\]
The small-index contribution $S_1^{(2,\mathrm{cd})}$ was computed in \cite[Lemma~4.2]{ACC2023c}:
\begin{align*}
S_1^{(2,\mathrm{cd})}&=-D_n nq(0)-\frac{D_n(D_n+1+2\alpha)}{2}\log (n\Delta Q(0))
+\frac{1}{2}D_n^2\log D_n
\\
&\quad -\frac{3}{4}D_n^2+\alpha D_n\log D_n
+D_n\Bigl(s\lambda(0)+\frac{\log (2\pi)}{2}-\alpha\Bigr)
\\
&\quad 
+\frac{6\alpha^2-1}{12}\log D_n+\frac{\alpha}{2}\log (2\pi)+\zeta'(-1)-\log G(1+\alpha)+\mathcal{O}\Bigl(\frac{(\log n)^3}{n^{\frac{1}{12}}}\Bigr), 
\end{align*}
where $\zeta(s)$ is Riemann's zeta function.
To expand $S_1^{(3,\mathrm{cd})}$, apply Lemma~\ref{lemma:Riemann sum NEW} with $A=A_n=D_n/n$, $a_0=0$, $B=B_{2,-}$, and $b_0=\theta_{2,-}^{(n,\epsilon)}-1$; see also \cite[Remark~3.5]{C2021}. The uniform norm expansion \eqref{def of Dn leq j leq n-1} gives
\begin{align*}
S_1^{(3)} & = \widetilde{H}_{1}n^2+\widetilde{H}_2n\log n+\widetilde{H}_3n+\widetilde{H}_5\log n+\widetilde{H}_6+\mathcal{O}(n^{-\frac{1}{12}}),  
\end{align*}
where $\widetilde H_k\equiv\widetilde H_k(A,a_0,B,b_0)$, $k\in\{1,2,3,5,6\}$, are given by
\begin{align*}
n^2\widetilde{H}_1
&=n^2\widetilde{h}_1
+nD_nq(0)+\frac{3D_n^2}{4}-\frac{D_n^2}{2}\log\frac{D_n}{n\Delta Q(0)}+\mathcal{O}(n^{-\frac{1}{12}}),
\\
\widetilde{H}_2n\log n&=
\widetilde{h}_2n\log n+\frac{1}{2}D_n\log n, 
\\
n\widetilde{H}_3&=
n\widetilde{h}_{3}
+\frac{D_n}{2}\log \Delta Q(0)
+\bigl(\alpha-\frac{1}{2}\log 2\pi-s\lambda(0)\bigr)D_n
-\alpha D_n\log\frac{D_n}{n\Delta Q(0)}
+\mathcal{O}(\frac{D_n^{3/2}\log n}{n^{1/2}}),
\\
\widetilde{H}_5&:=    
-\frac{1}{2}\theta_{2,-}^{(n,\epsilon)}, 
\qquad
\widetilde{H}_6=\widetilde{h}_6  
-\frac{6\alpha^2-1}{12}\log D_n
+\frac{6\alpha^2-1}{12}\log n
+\mathcal{O}(\frac{D_n^{1/2}\log n}{n}).
\end{align*}
Here $\widetilde h_k$, $k\in\{1,2,3,5,6\}$, are given by
\begin{align*}
\widetilde{h}_{1}
&=
-2\int_{0}^{r_{B_{2,-}}}\Bigl(
q(s)-sq'(s)\log s
\Bigr)s\Delta Q(s)\,ds, 
\qquad
\widetilde{h}_2:=-\frac{1}{2}B_{2,-},
\qquad
\widetilde h_5:=-\frac12\theta_{2,-}^{(n,\epsilon)},
\\
    \widetilde{h}_3&:=
        B_{2,-}\log\sqrt{2\pi}-\frac{1}{2}q(0)
-\frac{1}{2}(2\theta_{2,-}^{(n,\epsilon)}-1)
\Bigl( 
q(r_{B_{2,-}})-2B_{2,-}\log r_{B_{2,-}}
\Bigr)
\\
&\quad
+\int_{0}^{r_{B_{2,-}}}2\mathsf{k}(u) u\Delta Q(u)\,du
-\int_{0}^{r_{B_{2,-}}}u\Delta Q(u)\log \Delta Q(u)\,du
+\int_{0}^{r_{B_{2,-}}}2\log u\cdot u\Delta Q(u)\,du, 
\\
 \widetilde{h}_6 
 &=
 \theta_{2,-}^{(n,\epsilon)}\log\sqrt{2\pi}
+\frac{1}{6}(1+6(\theta_{2,-}^{(n,\epsilon)}-1)+6(\theta_{2,-}^{(n,\epsilon)}-1)^2)
\log r_{B_{2,-}}
\\
&\quad
+\frac{1}{2}(2\theta_{2,-}^{(n,\epsilon)}-1)\log r_{B_{2,-}}
+\frac{1}{2}(2\theta_{2,-}^{(n,\epsilon)}-1)\mathsf{k}(r_{B_{2,-}})
+\frac{\alpha}{2}(1+\alpha)\log\Delta Q(0)
-\alpha s\lambda(0)
\\
&\quad
-\frac{1}{4}(2\theta_{2,-}^{(n,\epsilon)}-1)\log\Delta Q(r_{B_{2,-}})
+\frac{1}{24}\int_{0}^{r_{B_{2,-}}}\Bigl(\frac{\partial \Delta Q(u)}{\Delta Q(u)}\Bigr)^2u\,du
-\frac{1}{16}
\frac{r_{B_{2,-}}\partial_u\Delta Q(r_{B_{2,-}})}{\Delta Q(r_{B_{2,-}})}
\\
&\quad
+\Bigl(\alpha(\alpha+1)+\frac{1}{6}\Bigr)\log r_{B_{2,-}}
-\Bigl(\frac{\alpha}{2}+\frac{1}{3}\Bigr)\log\Delta Q(r_{B_{2,-}})
+\frac{s^2}{4}\int_{0}^{r_{B_{2,-}}}u\lambda'(u)^2\,du
\\
&\quad
-\frac{s}{4}\int_{0}^{r_{B_{2,-}}}u\lambda'(u)\frac{\partial_u\Delta Q(u)}{\Delta Q(u)}\,du
+\frac{s}{4}r_{B_{2,-}}\lambda'(r_{B_{2,-}})
+s(\alpha +\frac{1}{2})\lambda(r_{B_{2,-}}).
\end{align*}
Substitution gives
\begin{align*}
S_1^{(3,\mathrm{cd})}
&=\widetilde{h}_{1}n^2+\widetilde{h}_2n\log n+\widetilde{h}_3n
+\Bigl(\widetilde{h}_5+\frac{6\alpha^2-1}{12}\Bigr)\log n
+\widetilde{h}_6
\\
&\quad
+nD_nq(0)
+\frac{D_n(D_n+1+2\alpha)}{2}\log (n\Delta Q(0))
-\frac{D_n^2}{2}\log D_n
\\
&\quad
+\frac{3D_n^2}{4}
-\alpha D_n\log D_n
-\bigl(s\lambda(0)+\frac{1}{2}\log 2\pi-\alpha\bigr)D_n
-\frac{6\alpha^2-1}{12}\log D_n
+\mathcal{O}(n^{-\frac{1}{12}}). 
\end{align*}
Adding the small- and intermediate-index contributions, all terms involving $D_n$ cancel, and we obtain
\begin{align*}
S_1^{(2,\mathrm{cd})}+S_1^{(3,\mathrm{cd})}
&=
\widetilde{h}_{1}n^2+\widetilde{h}_2n\log n+\widetilde{h}_3n
+\Bigl(\widetilde{h}_5+\frac{6\alpha^2-1}{12}\Bigr)\log n
+\widetilde{h}_6
\\
&\quad 
+\frac{\alpha}{2}\log (2\pi)+\zeta'(-1)
-\log G(1+\alpha)+\mathcal{O}\Bigl(\frac{(\log n)^3}{n^{\frac{1}{12}}}\Bigr).    
\end{align*}
For the constrained numerator, Lemma~\ref{lemma:Riemann sum NEW} gives
\begin{align*}
S_1^{(1,\mathrm{cd})}
&=
\mathsf{D}_1n^2+\mathsf{D}_2n\log n+\mathsf{D}_3n+\mathsf{D}_5\log n+\mathsf{D}_6+\mathcal{O}(n^{-1}),
\end{align*}
where 
\begin{align*}
\mathsf{D}_1&=-B_{2,-}q(\rho_2)+B_{2,-}^2\log\rho_2, 
\qquad
\mathsf{D}_2=-B_{2,-},
\\
\mathsf{D}_3&=2B_{2,-}\log\rho_2
+B_{2,-}\mathsf{k}(\rho_2)
-\theta_{2,-}^{(n,\epsilon)}q(\rho_2)
+(2\theta_{2,-}^{(n,\epsilon)}-1)B_{2,-}\log\rho_2
\\
&\quad
+B_{2,-}+(\tau_{\rho_2}-B_{2,-})\log(\tau_{\rho_2}-B_{2,-})-\tau_{\rho_2}\log\tau_{\rho_2},
\\
\mathsf{D}_5&=-\theta_{2,-}^{(n,\epsilon)}, 
\\
\mathsf{D}_6&=
\Bigl(\frac{1}{2}\rho_2\mathsf{k}'(\rho_2)+1\Bigr)\log\tau_{\rho_2}
+\frac{\rho_2^2\Delta Q(\rho_2)}{\tau_{\rho_2}}
+\theta_{2,-}^{(n,\epsilon)}(2\log\rho_2+\mathsf{k}(\rho_2))
\\
&\quad+\theta_{2,-}^{(n,\epsilon)}(\theta_{2,-}^{(n,\epsilon)}-1)\log\rho_2
-\frac{1}{2}\log\tau_{\rho_2}
\\
&\quad
-\frac{1}{2}(2\theta_{2,-}^{(n,\epsilon)}-1)\log(\tau_{\rho_2}-B_{2,-})
-\Bigl(\frac{1}{2}\rho_2\mathsf{k}'(\rho_2)+1\Bigr)\log(\tau_{\rho_2}-B_{2,-})
-\frac{\rho_2^2\Delta Q(\rho_2)}{\tau_{\rho_2}-B_{2,-}}. 
\end{align*}
Finally, combine this expansion with the centered-disk analogues of Lemmas~\ref{lemma: S1245 an error}, \ref{lemma: S41 an asymptotics}, and \ref{lemma:sum of g2- part g2+part}. The $D_n$-dependent terms have already canceled in the preceding display, while the $M$-dependent terms cancel between the moderate-deviation and transition ranges at $\rho_2$. Since $M=n^{1/12}$, all remaining errors are $o(1)$. Collecting coefficients gives exactly the expansion in Theorem~\ref{theorem:hole probability of centered disk case}.

\end{proof}

\section{Proofs of the counting statistics asymptotics}
\label{section:counting-proofs}
\label{section:proof of calen part}

\subsection{Proof of Theorem~\ref{theorem:counting statistics of disk complement case}}
We begin with the disk-complement geometry.  Throughout this subsection,
$\tau=\tau_j=j/n$ and $M=n^{1/12}$.  Unless stated otherwise, all error
estimates are uniform over the displayed index ranges, for the fixed
admissible merging parameters $t_1,\ldots,t_{3m}$, and locally uniformly for
$\vec{\boldsymbol s}_{\mathrm{dc}}\in\mathbb R^{3m}$.  Recall that
\begin{equation}
    \Omega_{\ell}^{(m)}:=\sum_{j=\ell}^{m+1}\omega_j^{(m)}
    =\begin{cases}
        e^{s_{\ell}+\cdots+s_{m}}, & \mbox{if } \ell\leq m, \\
        1, & \mbox{if } \ell=m+1. 
    \end{cases}
\end{equation}
The six ranges below separate the inner bulk transition at $a_1$, the
macroscopic region between $a_1$ and the hard wall, the saddle transition at
$\rho_1$ (which couples the semi-hard and hard-edge observation scales), and
the hard-edge tail.  Accordingly, we decompose
\[
\log \mathcal{E}_{n,s\lambda,\alpha}^{(\mathrm{dc})}=
\sum_{j=0}^{n-1}\log\Big( 
1+\sum_{\ell=1}^{3m}\omega_{\ell}F_{n,j,\ell}^{(\mathrm{dc})}
\Bigr)=\sum_{k=0}^5 E_k^{(\mathrm{dc})}, 
\]
where 
\begin{align*}
E_0^{(\mathrm{dc})}&:=\sum_{j=0}^{D_n-1}\log\Big( 
1+\sum_{\ell=1}^{3m}\omega_{\ell}F_{n,j,\ell}^{(\mathrm{dc})}
\Bigr), \qquad
E_1^{(\mathrm{dc})}:=\sum_{j=D_n}^{g_{a_1,-}-1}\log\Big( 
1+\sum_{\ell=1}^{3m}\omega_{\ell}F_{n,j,\ell}^{(\mathrm{dc})}
\Bigr), \\
E_2^{(\mathrm{dc})}&:=\sum_{j=g_{a_1,-}}^{g_{a_1,+}}\log\Big( 
1+\sum_{\ell=1}^{3m}\omega_{\ell}F_{n,j,\ell}^{(\mathrm{dc})}
\Bigr),\qquad
E_3^{(\mathrm{dc})}:=\sum_{j=g_{a_1,+}+1}^{g_{1,-}-1}\log\Big( 
1+\sum_{\ell=1}^{3m}\omega_{\ell}F_{n,j,\ell}^{(\mathrm{dc})}
\Bigr), \\
E_4^{(\mathrm{dc})}&:=\sum_{j=g_{1,-}}^{g_{1,+}}\log\Big( 
1+\sum_{\ell=1}^{3m}\omega_{\ell}F_{n,j,\ell}^{(\mathrm{dc})}
\Bigr),\qquad
E_5^{(\mathrm{dc})}:=\sum_{j=g_{1,+}+1}^{n-1}\log\Big( 
1+\sum_{\ell=1}^{3m}\omega_{\ell}F_{n,j,\ell}^{(\mathrm{dc})}
\Bigr). 
\end{align*}
For $k=1,2$, define the auxiliary bulk cutoffs
\[
g_{a_k,-}:=\Bigl\lceil\frac{n\tau_{a_k}}{1+M/\sqrt n}\Bigr\rceil,
\qquad
g_{a_k,+}:=\Bigl\lfloor\frac{n\tau_{a_k}}{1-M/\sqrt n}\Bigr\rfloor,
\]
and their fractional endpoint corrections
\begin{align*}
\theta_{a_k,-}^{(n,M)}&:=g_{a_k,-}
-\frac{n\tau_{a_k}}{1+\frac{M}{\sqrt{n}}}
=
\Bigl\lceil\frac{n\tau_{a_k}}{1+\frac{M}{\sqrt{n}}}\Bigr\rceil
-\frac{n\tau_{a_k}}{1+\frac{M}{\sqrt{n}}}, 
\\
\theta_{a_k,+}^{(n,M)}&:=\frac{n\tau_{a_k}}{1-\frac{M}{\sqrt{n}}}-g_{a_k,+}
=\frac{n\tau_{a_k}}{1-\frac{M}{\sqrt{n}}}-\Bigl\lfloor\frac{n\tau_{a_k}}{1-\frac{M}{\sqrt{n}}}\Bigr\rfloor,  
\end{align*}
Only the case $k=1$ is used in the present proof.  The hard-wall cutoffs
$g_{k,\pm}$ and their endpoint corrections $\theta_{k,\pm}^{(n,M)}$ were
defined above and are used without change.

\begin{lemma}[Bulk and pre-hard-edge ranges]
\label{lemma:bulk case asymptotic expansion}
Let $\{r_{\ell}\}_{\ell=1}^m$ be as in
\eqref{def of mergin radii inside bulk}.  Then the following estimates hold.
In each exponentially small estimate, the error is uniform over all indices
in the displayed summation range.
\begin{itemize}
    \item There exists $c>0$, independent of $n$, such that, as
    $n\to+\infty$,
\[
E_0^{(\mathrm{dc})}
=D_n\log\Omega_1^{(3m)}+\mathcal{O}(e^{-c(\log n)^2}). 
\]
    \item 
    There exists $c>0$, independent of $n$, such that, as
    $n\to+\infty$,
    \begin{equation}
E_1^{(\mathrm{dc})}
=(g_{a_1,-}-D_n)\log\Omega_1^{(3m)}+\mathcal{O}(e^{-c(\log n)^2}). 
    \end{equation}
    \item As $n\to+\infty$,
\[
 E_2^{(\mathrm{dc})} 
 =
 C_{E_2^{(\mathrm{dc})} }^{(3)}n 
 + C_{E_2^{(\mathrm{dc})}}^{(4)} \sqrt{n} 
 + C_{E_2^{(\mathrm{dc})} }^{(5)}\log n
 + C_{E_2^{(\mathrm{dc})}}^{(6)} 
 + \widetilde{C}_{E_2^{(\mathrm{dc})} }^{(M)} 
 + \mathcal{O}(n^{-\frac{1}{12}}), 
\]
where   
\begin{align*}
C_{E_2^{(\mathrm{dc})}}^{(3)}
&:=0, 
\quad
C_{E_2^{(\mathrm{dc})}}^{(4)}
:=C_4^{\#(\mathrm{b,in})},
\quad
C_{E_2^{(\mathrm{dc})}}^{(5)}:=0,
\\
C_{E_2^{(\mathrm{dc})}}^{(6)}
&:=
\mathcal{D}_6^{(\mathrm{b},\mathrm{in})}
-
\Bigl( 
\frac{1}{2}a_1\mathsf{k}'(a_1)+1
\Bigr)
\log\frac{\Omega_{1}^{(3m)}}{\Omega_{m+1}^{(3m)}}, 
\\
\widetilde{C}_{E_2^{(\mathrm{dc})}}^{(M)} 
&:=
\sqrt{n}M\tau_{a_1}\log\frac{\Omega_{1}^{(3m)}}{\Omega_{m+1}^{(3m)}}
-
\tau_{a_1}M^2
\log\frac{\Omega_{1}^{(3m)}}{\Omega_{m+1}^{(3m)}}
\\
&\quad
+ \bigg( \frac{1}{2}-\theta_{a_1,-}^{(n,M)} \bigg)\log\frac{\Omega_{1}^{(3m)}}{\Omega_{m+1}^{(3m)}}
+
(g_{a_1,+}-g_{a_1,-}+1)\log \Omega_{m+1}^{(3m)}.
\end{align*}  
    \item 
    There exists $c>0$, independent of $n$, such that, as
    $n\to+\infty$,
    \begin{equation}
E_3^{(\mathrm{dc})}
=(g_{1,-}-g_{a_{1},+}-1)\log\Omega_{m+1}^{(3m)}+\mathcal{O}(e^{-c(\log n)^2}). 
    \end{equation}
\end{itemize}
Consequently, as $n\to+\infty$,
\begin{align*}
&\quad E_0^{(\mathrm{dc})}
+
E_1^{(\mathrm{dc})}
+
\widetilde{C}_{E_2^{(\mathrm{dc})}}^{(M)}
+
E_3^{(\mathrm{dc})}
\\
&=
n\tau_{a_1}\log\frac{\Omega_1^{(3m)}}{\Omega_{m+1}^{(3m)}}
+ \frac{1}{2}\log\frac{\Omega_{1}^{(3m)}}{\Omega_{m+1}^{(3m)}}
+(\theta_{1,-}^{(n,M)}+n\tau_{\rho_1}-\tau_{\rho_1}M\sqrt{n}+M^2\tau_{\rho_1})\log\Omega_{m+1}^{(3m)}
+\mathcal{O}(M^3n^{-1/2}).
\end{align*}
\end{lemma}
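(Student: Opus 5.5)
The plan is to read off the behaviour of each ratio $F^{(\mathrm{dc})}_{n,j,\ell}=h^{(\mathrm{dc})}_{n,j,\ell}/h^{(\mathrm{dc})}_{n,j}$ in each index range from the position of the saddle $r_\tau$ relative to the radii $r_\ell$, and then sum with the telescoping weights. The identity that drives everything is $\sum_{\ell\ge p}\omega_\ell^{(3m)}+1=\Omega_p^{(3m)}$. Hence if $F_{n,j,\ell}=1+o(1)$ for $\ell\ge p$ and $F_{n,j,\ell}=o(1)$ for $\ell<p$, the summand is $\log\Omega_p^{(3m)}$ up to the error.

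For $E_0^{(\mathrm{dc})}$ and $E_1^{(\mathrm{dc})}$ we have $j<g_{a_1,-}$, so $r_\tau\le a_1-cM/\sqrt n$ (and $r_\tau\approx0$ for $j<D_n$). All $r_\ell$, including the bulk radii $a_1+t_\ell/\sqrt{2n\Delta Q(a_1)}$, lie at distance $\gtrsim M/\sqrt n$ to the right of the saddle. The complementary mass $\int_{r_\ell}^{\rho_1}$ is therefore exponentially small by the tail bounds of Lemma~\ref{lemma: 0 j j1-}, namely $e^{-c(\log n)^2}$ for $j<D_n$ via \eqref{def of 0 leq j leq Dn}, and $e^{-cM^2}\le e^{-c(\log n)^2}$ otherwise. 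So $F_{n,j,\ell}=1+\mathcal O(e^{-c(\log n)^2})$ for every $\ell$, the summand equals $\log\Omega_1^{(3m)}$, and the two sums follow after multiplying by the number of terms. $E_3^{(\mathrm{dc})}$ is handled the same way. There $r_\tau$ lies between $a_1+cM/\sqrt n$ and $\rho_1-cM/\sqrt n$, so $F_{n,j,\ell}$ is exponentially small for the bulk indices $\ell\le m$. For $\ell>m$ (semi-hard and hard radii, all within $\mathcal O(n^{-1/2})$ of $\rho_1$) we get $F=1+$ exponentially small by Lemma~\ref{lemma: j1- j g1-}. The summand is thus $\log\Omega_{m+1}^{(3m)}$.

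The main work is $E_2^{(\mathrm{dc})}$. I would rewrite $F_{n,j,\ell}$ for $\ell\le m$ using the Laplace expansion of Lemma~\ref{lemma: 0 j j1-} truncated at $r_\ell$, exactly as in Lemma~\ref{lemma:g1- j g1+} but with $\rho_1$ replaced by $r_\ell$. This gives $F_{n,j,\ell}=\tfrac12\mathrm{erfc}\bigl(-\xi_{j,\ell}/\sqrt2\bigr)+\mathfrak c_1/\sqrt n+\mathfrak c_2/n+\dots$ with $\xi_{j,\ell}=\sqrt{nd_2}(r_\ell-r_\tau)$, while the ratios for $\ell>m$ are $1$ up to exponentially small errors. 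Using the $M_{j}$-coordinate $M_j=\sqrt n(\tau_{a_1}/\tau_j-1)$ and its expansion (the analogue of the $\mathsf c_k(\rho_k)$ expansion, now at $a_1$), $\xi_{j,\ell}$ becomes $\sqrt2(x-t_\ell)$-type arguments in a rescaled $x$. I would then expand the logarithm as $h_0(M_j)+n^{-1/2}h_1(M_j)+\mathcal O(M^{c}/n)$, where $h_0=\log\mathcal H_1^{(\mathrm{b,in})}$ for $x>0$ and $\log\Omega_1+\log\mathcal H_2^{(\mathrm{b,in})}$ on the other side after the $\mathrm{erfc}$ reflection. Lemma~\ref{lemma:Riemann sum}, whose proof carries over with $\tau_{\rho_k}$ replaced by $\tau_{a_1}$, is then applied.

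The $\sqrt n$ term $\tau_{a_1}\int_{-M}^M h_0$ splits into $\sqrt n M\tau_{a_1}\log(\Omega_1/\Omega_{m+1})+(2M\tau_{a_1}+\dots)\log\Omega_{m+1}$ plus the convergent integral, which gives $C_4^{\#(\mathrm{b,in})}$ after the substitution $x\mapsto$ rescaled variable. The $\mathcal O(1)$ terms, namely $-2\tau_{a_1}\int t h_0$, $\tau\int h_1$ and the endpoint terms $(\tfrac12-\theta)h_0(\pm M)$, give $\mathcal D_6^{(\mathrm{b,in})}$, the $-(\tfrac12 a_1\mathsf k'(a_1)+1)\log(\Omega_1/\Omega_{m+1})$ correction (from the $\mathsf k'$ and $1/r$ parts of $c_1$), and the $M^2$ and $\theta$ pieces collected in $\widetilde C^{(M)}_{E_2}$.

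For the final identity, I would sum the counts $D_n+(g_{a_1,-}-D_n)=g_{a_1,-}$ and $(g_{a_1,+}-g_{a_1,-}+1)+(g_{1,-}-g_{a_1,+}-1)=g_{1,-}-g_{a_1,-}$. I would then expand $g_{a_1,-}=n\tau_{a_1}-\tau_{a_1}M\sqrt n+\tau_{a_1}M^2+\theta_{a_1,-}+\mathcal O(M^3/\sqrt n)$ and similarly $g_{1,-}$, so that the $M$-terms in $\log(\Omega_1/\Omega_{m+1})$ cancel against $\widetilde C^{(M)}_{E_2}$. The expected obstacle is the bookkeeping in $E_2$: verifying that the growth of $h_0$ at $\pm M$ is exactly compensated and that the $\mathfrak c_1$ corrections integrate to the stated $\mathcal G_1,\widetilde{\mathcal G}_1$ terms.
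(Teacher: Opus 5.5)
Your proposal is correct and follows essentially the same route as the paper. It uses off-saddle tail bounds together with the telescoping identity $1+\sum_{\ell=p}^{3m}\omega_\ell=\Omega_p^{(3m)}$ for $E_0^{(\mathrm{dc})}$, $E_1^{(\mathrm{dc})}$, $E_3^{(\mathrm{dc})}$; for $E_2^{(\mathrm{dc})}$ it uses the Laplace expansion truncated at $r_\ell$ in the coordinate $M_j(a_1)$, Lemma~\ref{lemma:Riemann sum}, the $\mathrm{erfc}$ reflection and integration by parts; and for the final identity it expands $g_{a_1,-}$ and $g_{1,-}$ (the only slip is cosmetic: on the $\mathcal H_1^{(\mathrm{b,in})}$ side your $h_0$ should also carry the constant $\log\Omega_{m+1}^{(3m)}$, as your later splitting of $\tau_{a_1}\int_{-M}^{M}h_0$ already assumes).
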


\begin{proof}
The estimate for $E_0^{(\mathrm{dc})}$ follows directly from
Lemma~\ref{lemma: 0 j j1-}.  We next consider $E_1^{(\mathrm{dc})}$.
If $D_n\leq j\leq g_{a_1,-}-1$, then the unique critical point of
$V_\tau$ lies in $(0,r_1)$.  Indeed, Assumption~\ref{Assumption_Q} and
$(rq'(r))'=4r\Delta Q(r)>0$ give
\[
\tau_{r_1}-\frac{\tau_{a_1}}{1+\frac{M}{\sqrt{n}}}=\frac12 r_1q'(r_1)-\frac{\tau_{a_1}}{1+\frac{M}{\sqrt{n}}}>c'\delta_n,
\]
for some $c'>0$, uniformly throughout this index range.  The argument of
Lemma~\ref{lemma: 0 j j1-} therefore yields, uniformly for
$\ell=1,\ldots,3m$,
\[
F_{n,j,\ell}^{(\mathrm{dc})}=1+\mathcal{O}(e^{-c(\log n)^2}),   
\]
with $c>0$ independent of $n$ and $j$.  Substitution into the definition of
$E_1^{(\mathrm{dc})}$ gives
\[
E_1^{(\mathrm{dc})}
=(g_{a_1,-}-D_n)\log\Omega_1^{(3m)}+\mathcal{O}(e^{-c(\log n)^2}).
\]

We now turn to $E_3^{(\mathrm{dc})}$.  If
$g_{a_1,+}+1\leq j\leq g_{1,-}-1$, the unique critical point of
$V_\tau$ lies outside $[0,r_\ell]$ for every $\ell=1,\ldots,m$.  The same
off-saddle estimate gives, uniformly over these $j$ and $\ell$,
\[
h_{n,j,\ell}^{(\mathrm{dc})}
=2e^{-nV_{\tau}(r_{\tau})}
\int_0^{r_{\ell}}r e^{\mathsf{k}(r)}e^{-n(V_{\tau}(r)-V_{\tau}(r_{\tau}))}\,dr
=e^{-nV_{\tau}(r_{\tau})}\cdot\mathcal{O}(e^{-c(\log n)^2}),
\]
for some $c>0$.  For $\ell=m+1,\ldots,3m$, the argument of
Lemma~\ref{lemma: 0 j j1-} similarly gives
$F_{n,j,\ell}^{(\mathrm{dc})}=1+\mathcal{O}(e^{-c(\log n)^2})$ whenever
$D_n\leq j\leq g_{a_1,-}-1$ or
$g_{a_1,+}+1\leq j\leq g_{1,-}$.  Hence
\begin{equation}
E_3^{(\mathrm{dc})}
=  (g_{1,-}-g_{a_{1},+}-1)\log\Omega_{m+1}^{(3m)}+\mathcal{O}(e^{-c(\log n)^2}).
\end{equation}
The factor of at most $n$ arising when these uniform pointwise errors are
summed is absorbed by decreasing $c$; thus the total error retains the form
$\mathcal{O}(e^{-c(\log n)^2})$.

It remains to analyze the bulk transition
$g_{a_1,-}\leq j\leq g_{a_1,+}$.  For these indices and
$\ell=m+1,\ldots,3m$, the unique critical point of $V_\tau$ lies in
$(0,r_\ell)$.  Consequently, uniformly in the transition window,
\begin{equation}
\label{def of pre sum ga1- ga1+}
    \sum_{j=g_{a_1,-}}^{g_{a_1,+}}
        \log\Big( 
1+\sum_{\ell=1}^{3m}\omega_{\ell}F_{n,j,\ell}^{(\mathrm{dc})}
\Bigr)
=
    \sum_{j=g_{a_1,-}}^{g_{a_1,+}}
        \log\Big( 
\sum_{\ell=1}^{m}\omega_{\ell}F_{n,j,\ell}^{(\mathrm{dc})}
+\Omega_{m+1}^{(3m)}+\mathcal{O}(e^{-cM^2})
\Bigr).
\end{equation}
Let
\begin{equation}
\label{def of Mja}
M_{j}(a):=\sqrt{n}(\lambda_{j}(a)-1),\qquad \lambda_{j}(a):=\frac{\tau_{a}}{\tau_j},
\end{equation}
so that $-M\leq M_j(a)\leq M$ in the corresponding transition window.
A Taylor expansion about $r_\tau=a$ gives
\[
\lambda_{j}(a)-1=b_1(a)(a-r_{\tau})+b_2(a)(a-r_{\tau})^2
+\mathcal{O}\bigl((a-r_{\tau})^3 \bigr),
\]
where we recall $\mathcal{Q}(a):=a\Delta Q(a)$ and 
\begin{align*}
b_1(a)&:=\frac{2\mathcal{Q}(a)}{\tau_{a}},
\quad
b_2(a):=\frac{4\mathcal{Q}(a)^2}{\tau_{a}^2}-\frac{\mathcal{Q}'(a)}{\tau_{a}}.
\end{align*}
Since $b_1(a)>0$, the inverse function theorem applies uniformly in a fixed
neighborhood of $a$.  Reverting the preceding series yields
\begin{align*}
a-r_{\tau}
&=\frac{1}{b_1(a)}(\lambda_{j}(a)-1)
-\frac{b_2(a)}{b_1(a)^3}(\lambda_{j}(a)-1)^2
+\mathcal{O}\bigl((\lambda_{j}(a)-1)^3\bigr),
\end{align*}
and hence
\begin{align*}
r_{\tau}
&=
a
-\mathsf{b}_1(a)\frac{M_j(a)}{n^{\frac{1}{2}}}
+\mathsf{b}_2(a)
\frac{M_j(a)^2}{n}
+\mathcal{O}\Bigl(\frac{M_j(a)^3}{n^{\frac{3}{2}}}\Bigr),
\end{align*}
where 
\begin{align*}
\mathsf{b}_1(a)&:=\frac{\tau_a}{2\mathcal{Q}(a)}, 
\quad
\mathsf{b}_2(a):=\frac{\tau_a}{2\mathcal{Q}(a)}-\frac{\tau_a^2\mathcal{Q}'(a)}{8\mathcal{Q}(a)^3}.
\end{align*}

For $g_{a_1,-}\leq j\leq g_{a_1,+}$, we have
$|\tau-\tau_{a_1}|=\mathcal{O}(M/\sqrt n)$ and
$|r_\ell-r_\tau|=\mathcal{O}(M/\sqrt n)$, uniformly for
$\ell=1,\ldots,m$.  The truncated Laplace expansion below is therefore
uniform in the entire transition window; in particular, it does not require
a separate ordering of $r_\ell$ and $r_\tau$.  With
$\delta_n'=(\log n)/\sqrt n$, write
\begin{align*}
\int_0^{r_{\ell}}2re^{\mathsf{k}(r)}e^{-nV_{\tau}(r)}\,dr
=\int_{0}^{r_{\tau}-\delta_n'}2re^{\mathsf{k}(r)}e^{-nV_{\tau}(r)}\,dr
+\int_{r_{\tau}-\delta_n'}^{r_{\ell}} 2re^{\mathsf{k}(r)}e^{-nV_{\tau}(r)}\,dr.   
\end{align*}
The contribution below $r_\tau-\delta_n'$ is
$e^{-nV_{\tau}(r_{\tau})}\mathcal{O}(e^{-c(\log n)^2})$, uniformly in $j$
and $\ell$.  The interval from this lower cutoff to $r_\ell$ is contained in
an $\mathcal{O}(M/\sqrt n)$-neighborhood of the saddle, on which the local
Laplace expansion is uniform.  As in Lemma~\ref{lemma: 0 j j1-}, it gives
\begin{align*}
&\quad 
\int_{r_{\tau}-\delta_n'}^{r_{\ell}} 2re^{\mathsf{k}(r)}e^{-nV_{\tau}(r)}\,dr
\\
&=
\frac{2r_{\tau}e^{\mathsf{k}(r_{\tau})}e^{-nV_{\tau}(r_{\tau})}}{\sqrt{nd_2}}\int_{-\sqrt{d_2n}\delta_n'}^{\xi_{j,\ell}(a_1)}
e^{-\frac{1}{2}u^2}
\Bigl( 
1+\frac{c_1(u)}{\sqrt{n}}+\frac{c_2(u)}{n}+\frac{c_3(u)}{n^{3/2}}+\mathcal{O}(\frac{c_4(u)}{n^2})
\Bigr)\,du,
\end{align*}
where $d_p=V_\tau^{(p)}(r_\tau)$ and $c_1,c_2,c_3$ are defined in
\eqref{def of c_1}--\eqref{def of c_3}; the polynomial $c_4$ is specified
immediately after those definitions.  Moreover,
\begin{equation}
 \xi_{j,\ell}(a_1)
 :=
\sqrt{4n \Delta Q(r_{\tau})}(r_{\ell}-r_{\tau})
 =
 \sqrt{4n\Delta Q(r_{\tau})}(a_1-r_{\tau})+\frac{\sqrt{2\Delta Q(r_{\tau})}}{\sqrt{\Delta Q(a_1)}}t_{\ell}. 
\end{equation}
The lower limit may be replaced by $-\infty$ at a cost
$\mathcal{O}(e^{-c(\log n)^2})$ relative to the prefactor.  All polynomial
coefficients in the integrand are bounded uniformly for
$|\xi_{j,\ell}(a_1)|=\mathcal{O}(M)$.  Substitution into
\eqref{def of pre sum ga1- ga1+} now gives
\begin{align*}
&\quad 
    \sum_{j=g_{a_1,-}}^{g_{a_1,+}}
        \log\Big( 
\sum_{\ell=1}^{m}\omega_{\ell}F_{n,j,\ell}^{(\mathrm{dc})}
+\Omega_{m+1}^{(3m)}+\mathcal{O}(e^{-cM^2})
\Bigr)
\\
&=
(g_{a_1,+}-g_{a_1,-}+1)\log \Omega_{m+1}^{(3m)}
\\
&\quad
+
\sum_{j=g_{a_1,-}}^{g_{a_1,+}}\log\mathscr{B}_1(\tfrac{\tau_{a_1}}{a_1\sqrt{2\Delta Q(a_1)}}M_j(a_1))
+\frac{1}{\sqrt{n}}\sum_{j=g_{a_1,-}}^{g_{a_1,+}}
\mathscr{B}_{2}(\tfrac{\tau_{a_1}}{a_1\sqrt{2\Delta Q(a_1)}}M_j(a_1))
+
\frac{1}{n}\sum_{j=g_{a_1,-}}^{g_{a_1,+}}
\mathcal{O}(M^4),
\end{align*}
The remainder in the logarithmic expansion is uniform: after the displayed
terms are subtracted, its absolute value is bounded by
$Cn^{-1}(1+|M_j(a_1)|^4)$.  This follows by inserting the uniform Laplace
expansion above and using that the logarithmic argument stays bounded away
from zero for the fixed real parameters under consideration.  On the
positive tail the corresponding coefficient is exponentially small; on the
negative tail the nondecaying terms cancel, leaving the stated polynomial
bound.  Since the transition window contains $\mathcal{O}(M\sqrt n)$
indices, its accumulated contribution is $\mathcal{O}(M^5n^{-1/2})$.
Here, for $x\in\mathbb R$, define
\begin{align*}
\mathscr{B}_1(x)&:= 1+\frac{1}{\Omega_{m+1}^{(3m)}}\sum_{\ell=1}^{m}
\frac{\omega_{\ell}}{2}\erfc(-t_{\ell}-x), 
\\
\mathscr{B}_{2}(x)
&:=
\frac{1}{\mathscr{B}_1(x)}
\frac{1}{\Omega_{m+1}^{(3m)}}
\sum_{\ell=1}^{m}
\frac{\omega_{\ell}e^{-(t_{\ell}+x)^2}}{6a_1\sqrt{2\Delta Q(a_1)}\sqrt{\pi}}
\bigl( 
\mathfrak{b}_2t_{\ell}^2+\mathfrak{b}_1(x)t_{\ell}+\mathfrak{b}_0(x)
\bigr), 
\end{align*}
where 
\begin{align*}
\mathfrak{b}_2&:=
-2\Bigl(2+\frac{a_1\partial_r\Delta Q(a_1)}{\Delta Q(a_1)}\Bigr)
+
3\Bigl(1+\frac{a_1\partial_r\Delta Q(a_1)}{\Delta Q(a_1)}\Bigr), 
\quad
\mathfrak{b}_1(x):=-\Bigl(2+\frac{a_1\partial_r\Delta Q(a_1)}{\Delta Q(a_1)}\Bigr)x, 
\\
\mathfrak{b}_0(x)&:=
\widetilde{\mathfrak{b}}_0(x)
+
6\Bigl(
2+\frac{a_1\partial_r\Delta Q(a_1)}{\Delta Q(a_1)}
\Bigr)x^2
-\frac{12a_1^2\Delta Q(a_1)}{\tau_{a_1}}x^2
-3a_1\mathsf{k}'(a_1)-6, 
\\
\widetilde{\mathfrak{b}}_0(x)
&:=
-5\Bigl(
2+\frac{a_1\partial_r\Delta Q(a_1)}{\Delta Q(a_1)}
\Bigr)x^2
+2+\frac{a_1\partial_r\Delta Q(a_1)}{\Delta Q(a_1)}. 
\end{align*}
Since
\[
j=\frac{n\tau_{a_1}}{1+M_j(a_1)/\sqrt n},
\qquad
\frac{dj}{dM_j(a_1)}
=-\tau_{a_1}\sqrt n\bigl(1+M_j(a_1)/\sqrt n\bigr)^{-2},
\]
the Jacobian contributes the factor $-2x$ at the next order.  Applying
Lemma~\ref{lemma:Riemann sum} to each of the preceding sums yields
\begin{align*}
&\quad 
    \sum_{j=g_{a_1,-}}^{g_{a_1,+}}
        \log\Big( 
\sum_{\ell=1}^{m}\omega_{\ell}F_{n,j,\ell}^{(\mathrm{dc})}
+\Omega_{m+1}^{(3m)}+\mathcal{O}(e^{-cM^2})
\Bigr)
\\
&=
\tau_{a_1}\int_{-M}^{M}\log \mathscr{B}_{1}(\tfrac{\tau_{a_1}}{a_1\sqrt{2\Delta Q(a_1)}}x)\,dx
\,\sqrt{n}
\\
&\quad+
\tau_{a_1}
\int_{-M}^{M}
\Bigl( 
-2x\log \mathscr{B}_{1}(\tfrac{\tau_{a_1}}{a_1\sqrt{2\Delta Q(a_1)}}x)
+\mathscr{B}_{2}(\tfrac{\tau_{a_1}}{a_1\sqrt{2\Delta Q(a_1)}}x)
\Bigr)\,dx
\\
&\quad
+ \bigg( \frac{1}{2}-\theta_{a_1,-}^{(n,M)} \bigg)\log \mathscr{B}_{1}(\tfrac{\tau_{a_1}}{a_1\sqrt{2\Delta Q(a_1)}}M)+ \bigg( \frac{1}{2}-\theta_{a_1,+}^{(n,M)} \bigg)\log \mathscr{B}_{1}(-\tfrac{\tau_{a_1}}{a_1\sqrt{2\Delta Q(a_1)}}M)
\\
&\quad
+
(g_{a_1,+}-g_{a_1,-}+1)\log \Omega_{m+1}^{(3m)}
+
\mathcal{O}(M^5n^{-1/2}). 
\end{align*}
We next evaluate the endpoint terms and the two integrals.  The Gaussian-tail
asymptotics imply
\begin{align*}
&\quad 
 \bigg( \frac{1}{2}-\theta_{a_1,-}^{(n,M)} \bigg)\log\mathscr{B}_{1}(\tfrac{\tau_{a_1}}{a_1\sqrt{2\Delta Q(a_1)}}M)+ \bigg( \frac{1}{2}-\theta_{a_1,+}^{(n,M)} \bigg)\log\mathscr{B}_{1}(-\tfrac{\tau_{a_1}}{a_1\sqrt{2\Delta Q(a_1)}}M)
 \\
 &=\bigg( \frac{1}{2}-\theta_{a_1,-}^{(n,M)} \bigg)\log\Bigl( 
1+\frac{1}{\Omega_{m+1}^{(3m)}}\sum_{\ell=1}^{m}\omega_{\ell}
\Bigr)+\mathcal{O}(e^{-cM^2}),\qquad M\to+\infty, 
\end{align*}
for some $c>0$. 
Similarly, splitting the leading integral at the origin and using its two
limits gives
\begin{align*}
&\quad
\tau_{a_1}\int_{-M}^{M}\log \mathscr{B}_{1}(\tfrac{\tau_{a_1}}{a_1\sqrt{2\Delta Q(a_1)}}x)\,dx
\\
&=
a_1\sqrt{2\Delta Q(a_1)}\int_{-\infty}^{0}
\log\Bigl( 
1+\frac{1}{\Omega_{m+1}^{(3m)}}\sum_{\ell=1}^{m}\frac{\omega_{\ell}}{2}
\erfc(-x-t_{\ell})
\Bigr)\,dx
\\
&\quad
+
a_1\sqrt{2\Delta Q(a_1)}\int_{0}^{+\infty}
\bigg[
\log\Bigl( 
1+\frac{1}{\Omega_{m+1}^{(3m)}}\sum_{\ell=1}^{m}\frac{\omega_{\ell}}{2}
\erfc(-x-t_{\ell})
\Bigr)
-
\log\Bigl(1+\frac{1}{\Omega_{m+1}^{(3m)}}\sum_{\ell=1}^{m}\omega_{\ell}\Bigr)
\bigg]
\,dx
\\
&\quad
+
M\tau_{a_1}\log\Bigl(1+\frac{1}{\Omega_{m+1}^{(3m)}}\sum_{\ell=1}^{m}\omega_{\ell}\Bigr)
+\mathcal{O}(e^{-cM^2}).
\end{align*}
The identity $\frac{\omega_{\ell}}{\Omega_{m+1}^{(3m)}}
=
e^{s_{\ell}+\cdots +s_{m}}-e^{s_{\ell+1}+\cdots +s_{m}}
=
(e^{s_{\ell}}-1)\exp\Bigl( 
\sum_{p=\ell+1}^{m}s_{p}
\Bigr)$ and the change of variables $x\mapsto-x$ identify the first limiting
integral as
\[
\int_{-\infty}^{0}
\log\Bigl( 
1+\frac{1}{\Omega_{m+1}^{(3m)}}\sum_{\ell=1}^{m}\frac{\omega_{\ell}}{2}
\erfc(-x-t_{\ell})
\Bigr)\,dx
=
\int_0^{+\infty}
\log\Bigl[
1+\sum_{\ell=1}^{m}\frac{e^{s_{\ell}}-1}{2}
\exp\Bigl( 
\sum_{p=\ell+1}^{m}s_{p}
\Bigr)
\erfc(x-t_{\ell})
\Bigr]\,dx. 
\]
For the second limiting integral, use
$1+\frac{1}{\Omega_{m+1}^{(3m)}}\sum_{\ell=1}^{m}\omega_{\ell}
=\Omega_{1}^{(3m)}/\Omega_{m+1}^{(3m)}$,
$\omega_{\ell}=-\Omega_1^{(3m)}\bigl( 
e^{-s_{\ell}}-1
\bigr)
\exp\Bigl( 
-\sum_{p=1}^{\ell-1}s_{p}
\Bigr)$, and $\erfc(-x)=2-\erfc(x)$.  These identities give
\begin{align*}
&\quad \log\Bigl( 
1+\frac{1}{\Omega_{m+1}^{(3m)}}\sum_{\ell=1}^{m}\frac{\omega_{\ell}}{2}
\erfc(-x-t_{\ell})
\Bigr)
-
\log\Bigl(1+\frac{\sum_{\ell=1}^{m}\omega_{\ell}}{\Omega_{m+1}^{(3m)}}\Bigr)
\\
&=
\log\Bigl[ 
1+\sum_{\ell=1}^{m}\frac{e^{-s_{\ell}}-1}{2}
\exp\Bigl( 
-\sum_{p=1}^{\ell-1}s_{p}
\Bigr)
\erfc(x+t_{\ell})
\Bigr]. 
\end{align*} 
For the next-order integral, direct substitution gives
\begin{align*}
&\quad\tau_{a_1}
\int_{-M}^{M}
\Bigl( 
\mathscr{B}_{2}(\tfrac{\tau_{a_1}}{a_1\sqrt{2\Delta Q(a_1)}}x)
-2x\log \mathscr{B}_{1}(\tfrac{\tau_{a_1}}{a_1\sqrt{2\Delta Q(a_1)}}x)
\Bigr)\,dx
\\
&=
a_1\sqrt{2\Delta Q(a_1)}
\int_{-\tfrac{\tau_{a_1}}{a_1\sqrt{2\Delta Q(a_1)}}M}^{\tfrac{\tau_{a_1}}{a_1\sqrt{2\Delta Q(a_1)}}M}
\bigg[
\frac{1}{\mathscr{B}_1(x)}
\frac{1}{\Omega_{m+1}^{(3m)}}
\sum_{\ell=1}^{m}
\frac{\omega_{\ell}e^{-(t_{\ell}+x)^2}}{6a_1\sqrt{2\Delta Q(a_1)}\sqrt{\pi}}
\Bigl\{
\bigl( 
\mathfrak{b}_2t_{\ell}^2+\mathfrak{b}_1(x)t_{\ell}+\widetilde{\mathfrak{b}}_0(x)
\bigr)
\\
&\quad
+6\Bigl(
2+\frac{a_1\partial_r\Delta Q(a_1)}{\Delta Q(a_1)}
\Bigr)x^2
-\frac{12a_1^2\Delta Q(a_1)}{\tau_{a_1}}x^2
-3a_1\mathsf{k}'(a_1)-6
\Bigr\}
-2\tfrac{a_1\sqrt{2\Delta Q(a_1)}}{\tau_{a_1}}x\log \mathscr{B}_{1}(x)
\bigg]
\,dx
\\
&=
a_1\sqrt{2\Delta Q(a_1)}
\int_{-\tfrac{\tau_{a_1}}{a_1\sqrt{2\Delta Q(a_1)}}M}^{\tfrac{\tau_{a_1}}{a_1\sqrt{2\Delta Q(a_1)}}M}
\\
&\quad
\times
\bigg[
\frac{1}{\mathscr{B}_1(x)}
\frac{1}{\Omega_{m+1}^{(3m)}}
\sum_{\ell=1}^{m}
\frac{\omega_{\ell}e^{-(t_{\ell}+x)^2}}{6a_1\sqrt{2\Delta Q(a_1)}\sqrt{\pi}}
\Bigl\{
\bigl( 
\mathfrak{b}_2t_{\ell}^2+\mathfrak{b}_1(x)t_{\ell}+\widetilde{\mathfrak{b}}_0(x)
\bigr)
+6\Bigl(
2+\frac{a_1\partial_r\Delta Q(a_1)}{\Delta Q(a_1)}
\Bigr)x^2
\Bigr\}
\bigg]
\,dx
\\
&\quad
-
\tau_{a_1}M^2
\log\Bigl( 
1+\frac{1}{\Omega_{m+1}^{(3m)}}\sum_{\ell=1}^{m}\omega_{\ell}
\Bigr)
-
\Bigl( 
\frac{1}{2}a_1\mathsf{k}'(a_1)+1
\Bigr)
\log\Bigl( 
1+\frac{1}{\Omega_{m+1}^{(3m)}}\sum_{\ell=1}^{m}\omega_{\ell}
\Bigr)
+\mathcal{O}(e^{-cM^2}),
\end{align*}
for some $c>0$ independent of $n$.  Moreover,
\begin{align*}
&\quad a_1\sqrt{2\Delta Q(a_1)}
\int_{-\tfrac{\tau_{a_1}}{a_1\sqrt{2\Delta Q(a_1)}}M}^{\tfrac{\tau_{a_1}}{a_1\sqrt{2\Delta Q(a_1)}}M}
\frac{1}{\mathscr{B}_1(x)}
\frac{1}{\Omega_{m+1}^{(3m)}}
\sum_{\ell=1}^{m}
\frac{\omega_{\ell}e^{-(t_{\ell}+x)^2}}{6a_1\sqrt{2\Delta Q(a_1)}\sqrt{\pi}}
6\Bigl(
2+\frac{a_1\partial_r\Delta Q(a_1)}{\Delta Q(a_1)}
\Bigr)x^2\,dx
\\
&=
2\Bigl(
2+\frac{a_1\partial_r\Delta Q(a_1)}{\Delta Q(a_1)}
\Bigr)
\int_{0}^{+\infty}
x\Bigl( 
\log\mathcal{H}_1^{(\mathrm{b},\mathrm{in})}(x) - \log\mathcal{H}_2^{(\mathrm{b},\mathrm{in})}(x)
\Bigr)\,dx
+
\mathcal{O}(e^{-cM^2}). 
\end{align*}
Combining these identities with the Riemann-sum expansion identifies the
$\sqrt n$ and constant terms as
$C_{E_2^{(\mathrm{dc})}}^{(4)}$ and
$C_{E_2^{(\mathrm{dc})}}^{(6)}$, respectively, and produces the displayed
cutoff contribution $\widetilde C_{E_2^{(\mathrm{dc})}}^{(M)}$.  The resulting expansion is
\begin{align*}
E_2^{(\mathrm{dc})}
&=C_4^{\#(\mathrm{b,in})}\sqrt n
+\mathcal D_6^{(\mathrm{b},\mathrm{in})}
-\Bigl(\frac12a_1\mathsf k'(a_1)+1\Bigr)
 \log\frac{\Omega_1^{(3m)}}{\Omega_{m+1}^{(3m)}}
+\widetilde C_{E_2^{(\mathrm{dc})}}^{(M)}
+\mathcal O\bigl(M^5n^{-1/2}+e^{-cM^2}\bigr).
\end{align*}
Because
$M=n^{1/12}$,
\[
M^5n^{-1/2}=n^{-1/12},
\qquad e^{-cM^2}=o(n^{-N})\quad\text{for every fixed }N>0.
\]
This proves the expansion of $E_2^{(\mathrm{dc})}$.  Finally, substituting
the definitions of the cutoff indices in the already established formulas
for $E_0^{(\mathrm{dc})}$, $E_1^{(\mathrm{dc})}$, and
$E_3^{(\mathrm{dc})}$ gives the last identity in the lemma.
\end{proof}

\begin{lemma}[Hard-wall transition and hard-edge tail]
\label{lemma:semi hard edge case asymptotic expansion disck complement}
Let $\{r_{\ell}\}_{\ell=m+1}^{2m}$ be as in
\eqref{def of merging radii inside semi hard}, and let
$\{r_{\ell}\}_{\ell=2m+1}^{3m}$ be as in
\eqref{def of merging radii inside hard}.  Then the following estimates hold;
all remainders are uniform over their defining index ranges.
\begin{description}
    \item[\textup{(i) Critical hard-wall window.}] As $n\to+\infty$,
\begin{equation}
 E_4^{(\mathrm{dc})}
 =
 C_{E_4^{(\mathrm{dc})}}^{(3)}n +C_{E_4^{(\mathrm{dc})}}^{(4)}\sqrt{n}+C_{E_4^{(\mathrm{dc})}}^{(5)}\log n+C_{E_4^{(\mathrm{dc})}}^{(6)}+\widetilde{C}_{E_4^{(\mathrm{dc})}}^{(M)}+\mathcal{O}(n^{-\frac{1}{12}}),
\end{equation}    
where 
\begin{align*}
C_{E_4^{(\mathrm{dc})}}^{(3)}&:=0,
\quad
C_{E_4^{(\mathrm{dc})}}^{(4)}:=C_4^{\#(\mathrm{se,in})},
\quad
C_{E_4^{(\mathrm{dc})}}^{(5)}:=0, 
\\
C_{E_4^{(\mathrm{dc})}}^{(6)}&:=
\Bigl(\frac{\rho_1\partial_r\Delta Q(\rho_1)}{\Delta Q(\rho_1)}+2\Bigr)
\\
&\quad\times\int_{-\infty}^{+\infty}\bigg[
2x\Bigl(\log \mathcal{H}_1^{(\mathrm{se},\mathrm{in})}(x)-\mathbf{1}_{(-\infty,0)}(x)\sum_{j=m+1}^{2m}s_j\Bigr)
+
\mathcal{H}_2^{(\mathrm{se},\mathrm{in})}(x)
\bigg]
\,dx
\\
&\quad 
+
3\Bigl(1+\frac{\rho_1\partial_r\Delta Q(\rho_1)}{\Delta Q(\rho_1)}\Bigr)
\int_{-\infty}^{+\infty}
\widetilde{\mathcal{H}}_2^{(\mathrm{se},\mathrm{in})}(x)\,dx
\\
&\quad 
-\frac{4\rho_1^2\Delta Q(\rho_1)\mathsf{T}_{1}^{(2m+1,3m)}(\tau_{\rho_1})}{1+\mathsf{T}_{0}^{(2m+1,3m)}(\tau_{\rho_1})}
\int_{-\infty}^{+\infty}
\bigg[
\frac{\frac{e^{-x^2}}{\sqrt{\pi}\erfc(x)}}{\mathcal{H}_1^{(\mathrm{se},\mathrm{in})}(x)}
-
\Bigl( 
x+\frac{x}{2(1+x^2)}
\Bigr)
\mathbf{1}_{[0,+\infty)}(x)
\bigg]
\,dx
\\
&\quad 
-\Bigl(\frac{1}{2}\rho_1\mathsf{k}'(\rho_1)+1\Bigr)\log\Bigl(\frac{\Omega_{m+1}^{(3m)}}{\Omega_{2m+1}^{(3m)}}\Bigr)
+\rho_1^2\Delta Q(\rho_1)\log (2\rho_1^2\Delta Q(\rho_1))
\frac{\mathsf{T}_{1}^{(2m+1,3m)}(\tau_{\rho_1})}{1+\mathsf{T}_{0}^{(2m+1,3m)}(\tau_{\rho_1})} 
,\\
\widetilde{C}_{E_4^{(\mathrm{dc})}}^{(M)}
&:=
\tau_{\rho_1}M\sqrt{n}
\log\frac{\Omega_{m+1}^{(3m)}}{\Omega_{2m+1}^{(3m)}}
+\Bigl(\frac{1}{2}-\theta_{1,-}^{(n,M)}\Bigr)\log\frac{\Omega_{m+1}^{(3m)}}{\Omega_{2m+1}^{(3m)}}
-\tau_{\rho_1}M^2\log\Bigl(\frac{\Omega_{m+1}^{(3m)}}{\Omega_{2m+1}^{(3m)}}\Bigr)
\\
&\quad
+(2\tau_{\rho_1}M\sqrt{n}-\theta_{1,+}^{(n,M)}-\theta_{1,-}^{(n,M)}+1)\log\Omega_{2m+1}^{(3m)}
\\
&\quad 
-\big(\tau_{\rho_1}^2M^2
+2\rho_1^2\Delta Q(\rho_1)\log(M\tau_{\rho_1})
\big)
\frac{\mathsf{T}_{1}^{(2m+1,3m)}(\tau_{\rho_1})}{1+\mathsf{T}_{0}^{(2m+1,3m)}(\tau_{\rho_1})}.
\end{align*}
    \item[\textup{(ii) Hard-edge tail.}] As $n\to+\infty$,
\begin{equation}
 E_5^{(\mathrm{dc})}
 =
 C_{ E_5^{(\mathrm{dc})}}^{(3)}n+C_{ E_5^{(\mathrm{dc})}}^{(4)}\sqrt{n}+C_{ E_5^{(\mathrm{dc})}}^{(5)}\log n +C_{ E_5^{(\mathrm{dc})}}^{(6)}+\widetilde{C}_{ E_5^{(\mathrm{dc})}}^{(M)}+\mathcal{O}(n^{-\frac{1}{12}}),
\end{equation}
where 
\begin{align*}
C_{ E_5^{(\mathrm{dc})}}^{(3)}
&:=
\int_{\tau_{\rho_1}}^1
\log\bigl(1+\mathsf{T}_{0}^{(2m+1,3m)}(x)\bigr)\,dx, 
\quad
C_{ E_5^{(\mathrm{dc})}}^{(4)}
:=0,
\\
C_{ E_5^{(\mathrm{dc})}}^{(5)}
&:=
-
\frac{\rho_1^2\Delta Q(\rho_1)\mathsf{T}_{1}^{(2m+1,3m)}(\tau_{\rho_1})}{1+\mathsf{T}_{0}^{(2m+1,3m)}(\tau_{\rho_1})},
\\
C_{ E_5^{(\mathrm{dc})}}^{(6)}
&:=
-
\frac{1}{2}\log\bigl(1+\mathsf{T}_{0}^{(2m+1,3m)}(1)\bigr)
\\
&\quad 
+
\int_{\tau_{\rho_1}}^{1}
\Bigl(
\mathcal{H}_1^{(\mathrm{h,in})}(x)
+\frac{2\rho_1^2\Delta Q(\rho_1)}{x-\tau_{\rho_1}}\frac{\mathsf{T}_{1}^{(2m+1,3m)}(\tau_{\rho_1})}{1+\mathsf{T}_{0}^{(2m+1,3m)}(\tau_{\rho_1})}\Bigr)\,dx
\\
&\quad 
-
2\rho_1^2\Delta Q(\rho_1)
\frac{\mathsf{T}_{1}^{(2m+1,3m)}(\tau_{\rho_1})}{1+\mathsf{T}_{0}^{(2m+1,3m)}(\tau_{\rho_1})}
\log(1-\tau_{\rho_1})
\\
&\quad 
+\Bigl(\frac{1}{2}\rho_1\mathsf{k}'(\rho_1)+1\Bigr)
\Bigl(\log(1+\mathsf{T}_0^{(2m+1,3m)}(1))-\log(1+\mathsf{T}_0^{(2m+1,3m)}(\tau_{\rho_1}))
\Bigr)
\\
&\quad
-
\bigl(2\rho_1^2\Delta Q(\rho_1)-\tau_{\rho_1}\bigr)
\int_{\tau_{\rho_1}}^{1}
\frac{\mathsf{T}_{2}^{(2m+1,3m)}(x)}{1+\mathsf{T}_{0}^{(2m+1,3m)}(x)}\,dx,
\\
\widetilde{C}_{ E_5^{(\mathrm{dc})}}^{(M)}
&:=
-\tau_{\rho_1}M\sqrt{n}\log\bigl(1+\mathsf{T}_{0}^{(2m+1,3m)}(\tau_{\rho_1})\bigr)
\\
&\quad
-M^2\tau_{\rho_1}\left(\begin{aligned}& 
\log\bigl(1+\mathsf{T}_{0}^{(2m+1,3m)}(\tau_{\rho_1})\bigr)
\\
&\quad-\tau_{\rho_1}
\frac{\mathsf{T}_{1}^{(2m+1,3m)}(\tau_{\rho_1})}{1+\mathsf{T}_{0}^{(2m+1,3m)}(\tau_{\rho_1})}
\end{aligned}\right)
\\
&\quad
+
\Bigl(\theta_{1,+}^{(n,M)}-\frac{1}{2}\Bigr)
\log\bigl(1+\mathsf{T}_{0}^{(2m+1,3m)}(\tau_{\rho_1})\bigr)
\\
&\quad
+
2\rho_1^2\Delta Q(\rho_1)
\frac{\mathsf{T}_{1}^{(2m+1,3m)}(\tau_{\rho_1})}{1+\mathsf{T}_{0}^{(2m+1,3m)}(\tau_{\rho_1})}\log(M\tau_{\rho_1}).
\end{align*}
    
\end{description}
\end{lemma}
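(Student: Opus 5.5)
For both ranges, the plan is to mirror the treatment of $E_2^{(\mathrm{dc})}$ in Lemma~\ref{lemma:bulk case asymptotic expansion}. In each summand I would rewrite $F_{n,j,\ell}^{(\mathrm{dc})}=h_{n,j,\ell}^{(\mathrm{dc})}/h_{n,j}^{(\mathrm{dc})}$ using the norm asymptotics already proved for the inner wall. The denominator comes from Lemma~\ref{lemma:g1- j g1+} in the window $g_{1,-}\le j\le g_{1,+}$, and from Lemmas~\ref{lemma: g1+leq j leq j+1} and~\ref{lemma: j1+ j jdiamond} (with the $\rho_2$ contribution deleted) for $j>g_{1,+}$. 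The numerators split by regime:
\begin{itemize}
\item For $\ell\le m$ the saddle lies far above $r_\ell$, so $F_{n,j,\ell}^{(\mathrm{dc})}=\mathcal O(e^{-cM^2})$ and these terms drop out.
\item For semi-hard $\ell$ we have $\rho_1-r_\ell=t_\ell/\sqrt{2n\Delta Q(\rho_1)}$. The numerator is the same truncated Gaussian integral as in Lemma~\ref{lemma:g1- j g1+}, with upper limit $\xi_{j,1}-\sqrt2\,t_\ell$.
\item For hard-edge $\ell$ we have $\rho_1-r_\ell=\rho_1t_\ell/n$. The numerator equals $h_{n,j}^{(\mathrm{dc})}$ minus the integral over $[r_\ell,\rho_1]$. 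Expanding $V_\tau$ at $\rho_1$ gives that integral as $e^{-nV_\tau(\rho_1)}\cdot 2\rho_1^2\,n^{-1}(1-e^{-2t_\ell(\tau-\tau_{\rho_1})})(1+\ldots)/(\tau-\tau_{\rho_1})$.
\end{itemize}

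In range (ii), $j\ge g_{1,+}+1$, the endpoint $\rho_1$ dominates and $\eta_1=\frac2{\rho_1}(\tau-\tau_{\rho_1})\gtrsim M/\sqrt n$. Every semi-hard ratio equals $1+\mathcal O(e^{-cM^2})$, and each hard ratio equals $e^{-2t_\ell(\tau-\tau_{\rho_1})}$ up to corrections $n^{-1}(\cdots)$ built from $\mathfrak a_1$ and from the next Taylor terms in $t_\ell$. Summing the telescoping weights, the summand becomes
\[
\log\bigl(1+\mathsf T_0^{(2m+1,3m)}(\tau)\bigr)+\frac1n\,\mathcal H_1^{(\mathrm{h,in})}(\tau)+\ldots,
\]
where the $1/(\tau-\tau_{\rho_1})$ singularity of $\mathcal H_1^{(\mathrm{h,in})}$ carries the coefficient $\mathsf T_1/(1+\mathsf T_0)$ at $\tau_{\rho_1}$. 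I would then apply Lemma~\ref{lemma:Riemann sum NEW} on $[A^{(M)}_{1,+},1]$ with $a_0=1-\theta^{(n,M)}_{1,+}$ and $b_0=-1$:
\begin{itemize}
\item The integral of $\log(1+\mathsf T_0)$ produces $C^{(3)}_{E_5}$, and expanding its lower limit $A^{(M)}_{1,+}=\tau_{\rho_1}(1+M/\sqrt n+M^2/n+\ldots)$ produces the $M\sqrt n$ and $M^2$ terms.
\item The Euler--Maclaurin endpoint terms give the $-\tfrac12\log(1+\mathsf T_0(1))$ and $(\theta_{1,+}^{(n,M)}-\tfrac12)$ contributions.
\item The $1/n$ sum of the singular coefficient is handled by subtracting its pole, as in the $\mathfrak a_1$-sum in Lemma~\ref{lemma: hard edge g1+ j j1+ first}. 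This yields the $\log n$ coefficient $C^{(5)}_{E_5}$ together with the $\log(M\tau_{\rho_1})$ and $\log(1-\tau_{\rho_1})$ terms.
\end{itemize}
The error terms of relative size $(n\eta_1^2)^{-1}$ sum to $\mathcal O(\sqrt n M^{-3})$ at worst. I therefore expect to need the second-order terms, as with $\mathfrak a_2$ in Lemma~\ref{lemma: hard edge g1+ j j1+ first}, to bring the remainder down to $\mathcal O(\sqrt n M^{-7})$.

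In range (i), the summand is a function of $M_{j,1}$ up to $\mathcal O(n^{-1}M^4)$. Write $x$ for the rescaled variable $\xi/\sqrt2$. The leading term is $\log\mathcal H_1^{(\mathrm{se,in})}(x)$ plus $\log\Omega^{(3m)}_{2m+1}$, with the hard weights entering at the leading order through $\sum\omega_\ell=\Omega_{2m+1}-1$. At order $n^{-1/2}$ there are two corrections. One is $\mathcal H_2^{(\mathrm{se,in})}$ and $\widetilde{\mathcal H}^{(\mathrm{se,in})}_2$, coming from $\mathfrak c_1$ and the $t_\ell$ shift. The other is a hard-edge correction $-\sqrt2\rho_1\sqrt{\Delta Q}\,\frac{\mathsf T_1}{1+\mathsf T_0}\frac{e^{-x^2}}{\sqrt\pi\,\mathrm{erfc}(x)\mathcal H_1}$, coming from the Mills-ratio factor $2(\rho_1-r_\tau)$-density at the wall. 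Lemma~\ref{lemma:Riemann sum} converts the sum into $\tau_{\rho_1}\sqrt n\int_{-M}^M$, then $-2\tau_{\rho_1}\int x(\cdot)$, plus endpoint terms. I would then change variables to $x$ and subtract the tail asymptotics: the constant $\sum s_j$ as $x\to-\infty$, and $x+\frac{x}{2(1+x^2)}$ for the Mills term as $x\to+\infty$. This gives $C^{(4)}_{E_4}=C_4^{\#(\mathrm{se,in})}$, $C^{(6)}_{E_4}$, and exactly the displayed $\widetilde C^{(M)}_{E_4}$.

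The main obstacle is bookkeeping: the $M$-dependent pieces of (i) and (ii) must be exactly complementary, and the $\log(2\rho_1^2\Delta Q(\rho_1))$ constant must arise from matching the Mills-ratio tail with the pole of $\mathcal H_1^{(\mathrm{h,in})}$. I would check this by verifying that $\widetilde C^{(M)}_{E_4}+\widetilde C^{(M)}_{E_5}$ is $M$-independent, mirroring the hole-probability cancellation.
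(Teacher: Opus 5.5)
Your overall route is the paper's (regime-by-regime treatment of the numerators, the inner-wall norm lemmas for the denominators, Lemma~\ref{lemma:Riemann sum} with tail subtraction in the window, Lemma~\ref{lemma:Riemann sum NEW} with pole subtraction in the tail). However, two steps in part~(ii) fail as written.

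\emph{Semi-hard ratios in the tail.} These ratios do not tend to $1$. For $j\ge g_{1,+}+1$ the mass of $h_{n,j}^{(\mathrm{dc})}$ sits in a boundary layer of width $(n\eta_1)^{-1}\lesssim (M\sqrt n)^{-1}$ at $\rho_1$. Hence for $m<\ell\le 2m$ one has $F_{n,j,\ell}^{(\mathrm{dc})}\approx e^{-n\eta_1(\rho_1-r_\ell)}\le e^{-cMt_\ell}$. This is the paper's $\mathcal O(e^{-c'M})$, exponential rather than Gaussian. It is exactly what reduces the summand to $\log(1+\mathsf T_0^{(2m+1,3m)}(\tau))+\cdots$. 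With your claim the leading term would instead be $\log\bigl(1+\Omega_{m+1}^{(3m)}-\Omega_{2m+1}^{(3m)}+\mathsf T_0^{(2m+1,3m)}(\tau)\bigr)$.

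\emph{Remainder count in the tail.} Your count is off by one order. Crude per-summand terms of size $(n\eta_1^2)^{-k}$ sum to $\mathcal O(\sqrt n\,M^{1-2k})$. Truncating after the second order therefore leaves $\sqrt n M^{-5}=n^{1/12}$, which diverges. The paper keeps the endpoint expansions through $\mathfrak a_3$ and $\mathsf a_3$, i.e.\ $\mathscr H_{\rho_1,1},\mathscr H_{\rho_1,2},\mathscr H_{\rho_1,3}$. The remainder $\mathcal O(\eta_1^{-8}n^{-4})$ then sums to $\sqrt nM^{-7}=n^{-1/12}$. The summation bound from \cite{ACCL1} shows the second- and third-order sums are only $\mathcal O(M^{-2})$ and $\mathcal O(M^{-4})$, so they can be discarded. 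Alternatively, write $F_{n,j,\ell}^{(\mathrm{dc})}=1-\int_{r_\ell}^{\rho_1}/h_{n,j}^{(\mathrm{dc})}$. The deficit carries the factor $1-e^{-2t_\ell(\tau-\tau_{\rho_1})}=\mathcal O(\eta_1)$, which gains one power of $\eta_1$ at every order. Then the first order plus an $\mathcal O(n^{-2}\eta_1^{-3})$ remainder already suffices.

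\emph{Hard-edge constant in the window.} In part~(i) your hard-edge correction is too small by a factor $2$. The deficit is $\int_{r_\ell}^{\rho_1}\approx 2\rho_1^2t_\ell n^{-1}e^{\mathsf k(\rho_1)}e^{-nV_\tau(\rho_1)}$. Dividing by $h_{n,j}^{(\mathrm{dc})}$ from Lemma~\ref{lemma:g1- j g1+} gives, in your orientation, $F_{n,j,\ell}^{(\mathrm{dc})}=1-\frac{2\rho_1\sqrt{2\Delta Q(\rho_1)}}{\sqrt{\pi n}}\,t_\ell\,\frac{e^{-x^2}}{\erfc(x)}+\cdots$. With your constant, the following all come out halved:
\begin{itemize}
\item the coefficient $4\rho_1^2\Delta Q(\rho_1)$ in $C^{(6)}_{E_4^{(\mathrm{dc})}}$;
\item the term $-\tau_{\rho_1}^2M^2$ in $\widetilde C^{(M)}_{E_4^{(\mathrm{dc})}}$;
\item the term $-2\rho_1^2\Delta Q(\rho_1)\log(M\tau_{\rho_1})$ in $\widetilde C^{(M)}_{E_4^{(\mathrm{dc})}}$.
\end{itemize}
Similarly, the tail-range deficit has prefactor $\rho_1^2e^{\mathsf k(\rho_1)}$, not $2\rho_1^2$; otherwise the hard ratio is not $e^{-2t_\ell(\tau-\tau_{\rho_1})}$.

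\emph{Your consistency check.} The proposed check would fail even with correct constants. One finds
\[
\widetilde C^{(M)}_{E_4^{(\mathrm{dc})}}+\widetilde C^{(M)}_{E_5^{(\mathrm{dc})}}
=\bigl(\tau_{\rho_1}(M\sqrt n-M^2)+\tfrac12-\theta_{1,-}^{(n,M)}\bigr)\log\Omega_{m+1}^{(3m)}.
\]
Its $M$- and $\theta_{1,-}^{(n,M)}$-dependent part is removed only after adding the $g_{1,-}$ part of the count $(g_{1,-}-g_{a_1,+}-1)\log\Omega_{m+1}^{(3m)}$ in $E_3^{(\mathrm{dc})}$. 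That cancellation belongs to the proof of Theorem~\ref{theorem:counting statistics of disk complement case}. The lemma itself only asserts the two separate expansions.
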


\begin{proof}
\emph{Proof of (i).}
By definition,
\[
E_4^{(\mathrm{dc})}
=\sum_{j=g_{1,-}}^{g_{1,+}}\log\Big(
1+\sum_{\ell=1}^{m}\omega_{\ell}F_{n,j,\ell}^{(\mathrm{dc})}
+\sum_{\ell=m+1}^{2m}\omega_{\ell}F_{n,j,\ell}^{(\mathrm{dc})}
+\sum_{\ell=2m+1}^{3m}\omega_{\ell}F_{n,j,\ell}^{(\mathrm{dc})}
\Bigr). 
\]
The denominator of $F_{n,j,\ell}^{(\mathrm{dc})}$ was expanded in
Lemma~\ref{lemma:g1- j g1+}; it remains to analyze the numerator.
For $\ell=1,\ldots,m$, the off-saddle estimate used in
Lemma~\ref{lemma: 0 j j1-} gives, uniformly for
$g_{1,-}\leq j\leq g_{1,+}$,
\begin{align*}
\int _{0}^{r_{\ell}}2r e^{\mathsf{k}(r)}e^{-nV_{\tau}(r)}\,dr
&=
2e^{\mathsf{k}(r_{\tau})}e^{-nV_{\tau}(r_{\tau})}
\int_{0}^{r_{\ell}}r e^{\mathsf{k}(r)-\mathsf{k}(r_{\tau})}
e^{-n(V_{\tau}(r)-V_{\tau}(r_{\tau}))}\,dr
\\
&=
2e^{\mathsf{k}(r_{\tau})}e^{-nV_{\tau}(r_{\tau})}\cdot \mathcal{O}(e^{-cM^2}). 
\end{align*}
Combining this estimate with Lemma~\ref{lemma:g1- j g1+}, we obtain
\[
\sum_{\ell=1}^{m}\omega_{\ell}F_{n,j,\ell}^{(\mathrm{dc})}
=
\sum_{\ell=1}^m \omega_{\ell}\mathcal{O}(e^{-cM^2})=\mathcal{O}(e^{-cM^2}). 
\]
It remains to treat the semi-hard radii
$\ell=m+1,\ldots,2m$ and the hard-edge radii
$\ell=2m+1,\ldots,3m$.  We apply the local calculation from
Lemma~\ref{lemma:bulk case asymptotic expansion} to the numerator, with
$a_1$ replaced by $\rho_1$, and use Lemma~\ref{lemma:g1- j g1+} for the
denominator.  This gives
\begin{align*}
&\quad\sum_{j=g_{1,-}}^{g_{1,+}}\log\Big(
1+\sum_{\ell=1}^{m}\omega_{\ell}F_{n,j,\ell}^{(\mathrm{dc})}
+\sum_{\ell=m+1}^{2m}\omega_{\ell}F_{n,j,\ell}^{(\mathrm{dc})}
+\sum_{\ell=2m+1}^{3m}\omega_{\ell}F_{n,j,\ell}^{(\mathrm{dc})}
\Bigr)
\\
&=
\sum_{j=g_{1,-}}^{g_{1,+}}\log\Omega_{2m+1}^{(3m)}
+
\sum_{j=g_{1,-}}^{g_{1,+}}
\widetilde{\mathscr{U}}_{1}(\tfrac{\tau_{\rho_1}M_j(\rho_1)}{\rho_1\sqrt{2\Delta Q(\rho_1)}})
\\
&\quad
+\frac{1}{\sqrt{n}}
\sum_{j=g_{1,-}}^{g_{1,+}}
\frac{1}{\mathscr{U}_{1}(\tfrac{\tau_{\rho_1}M_j(\rho_1)}{\rho_1\sqrt{2\Delta Q(\rho_1)}})}
\bigg[
\mathscr{U}_{2}^{(\mathrm{se},\mathrm{in})}(\tfrac{\tau_{\rho_1}M_j(\rho_1)}{\rho_1\sqrt{2\Delta Q(\rho_1)}})
+
\mathscr{U}_{2}^{(\mathrm{h,in})}(\tfrac{\tau_{\rho_1}M_j(\rho_1)}{\rho_1\sqrt{2\Delta Q(\rho_1)}})
\bigg]
+
\mathcal{O}(M^5n^{-1/2}), 
\end{align*}
where
\begin{align*}
\mathscr{U}_{1}(x)&:=  
1+\frac{1}{\Omega_{2m+1}^{(3m)}}\sum_{\ell=m+1}^{2m}\omega_{\ell}\frac{\erfc(t_{\ell}-x)}{\erfc(-x)}, 
\quad
\widetilde{\mathscr{U}}_{1}(x)
:=\log \mathscr{U}_{1}(x), 
\\
\mathscr{U}_{2}^{(\mathrm{se},\mathrm{in})}(x)
&:=
\frac{1}{\Omega_{2m+1}^{(3m)}}
\sum_{\ell=m+1}^{2m}\omega_{\ell}
\Bigl(
\mathfrak{u}_{2,\ell}(x)t_{\ell}^2
+
\mathfrak{u}_{1,\ell}(x)t_{\ell}
+
\mathfrak{u}_{0,\ell}(x)
\Bigr), 
\\
\mathscr{U}_{2}^{(\mathrm{h,in})}(x)
&:=
-
\frac{2\rho_1\sqrt{2\Delta Q(\rho_1)}}{\sqrt{\pi}}
\sum_{\ell=2m+1}^{3m}
\frac{\omega_{\ell}t_{\ell}}{\Omega_{2m+1}^{(3m)}}
\frac{e^{-x^2}}{\erfc(-x)}. 
\end{align*}
Here 
\begin{align*}
\mathfrak{u}_{2,\ell}(x)&:=
\frac{1}{3\rho_1\sqrt{2\Delta Q(\rho_1)}\sqrt{\pi}}
\frac{e^{-(t_{\ell}-x)^2}}{\erfc(-x)}
\bigg[
-2\Bigl(2+\frac{\rho_1\partial_r\Delta Q(\rho_1)}{\Delta Q(\rho_1)}\Bigr)
+3\Bigl(1+\frac{\rho_1\partial_r\Delta Q(\rho_1)}{\Delta Q(\rho_1)}\Bigr)
\bigg], 
\\
\mathfrak{u}_{1,\ell}(x)&:=
\frac{1}{3\rho_1\sqrt{2\Delta Q(\rho_1)}\sqrt{\pi}}
\Bigl(2+\frac{\rho_1\partial_r\Delta Q(\rho_1)}{\Delta Q(\rho_1)}\Bigr)
x
\frac{e^{-(t_{\ell}-x)^2}}{\erfc(-x)}, 
\\
\mathfrak{u}_{0,\ell}(x)&:=
\frac{1}{3\rho_1\sqrt{2\Delta Q(\rho_1)}\sqrt{\pi}}
\bigg[
\frac{\erfc(t_{\ell}-x)e^{-x^2}}{\erfc(-x)^2}
-\frac{e^{-(t_{\ell}-x)^2}}{\erfc(-x)}
\bigg]
\bigg[
5\Bigl( 
\frac{\rho_1\partial_r\Delta Q(\rho_1)}{\Delta Q(\rho_1)}+2
\Bigr)
x^2
\\
&\quad 
-\Bigl( \frac{\rho_1\partial_r\Delta Q(\rho_1)}{\Delta Q(\rho_1)}+2\Bigr)
-6\Bigl( 
\frac{\rho_1\partial_r\Delta Q(\rho_1)}{\Delta Q(\rho_1)}+2
\Bigr)x^2
+\frac{12\rho_1^2\Delta Q(\rho_1)}{\tau_{\rho_1}}x^2
+6\Bigl(\frac{1}{2}\rho_1\mathsf{k}'(\rho_1)+1\Bigr)
\bigg].
\end{align*}
We apply Lemma~\ref{lemma:Riemann sum} to each sum.  For the leading
correction, this gives
\begin{align*}
&\quad \sum_{j=g_{1,-}}^{g_{1,+}}
\widetilde{\mathscr{U}}_{1}(\tfrac{\tau_{\rho_1}M_j(\rho_1)}{\rho_1\sqrt{2\Delta Q(\rho_1)}})
\\
&=
\rho_1\sqrt{2\Delta Q(\rho_1)}\int_{-\frac{\tau_{\rho_1}M}{\rho_1\sqrt{2\Delta Q(\rho_1)}}}^{\frac{\tau_{\rho_1}M}{\rho_1\sqrt{2\Delta Q(\rho_1)}}}
\widetilde{\mathscr{U}}_1(-x)\,dx\,\sqrt{n}
\\
&\quad+\rho_1\sqrt{2\Delta Q(\rho_1)}
\int_{-\frac{\tau_{\rho_1}M}{\rho_1\sqrt{2\Delta Q(\rho_1)}}}^{\frac{\tau_{\rho_1}M}{\rho_1\sqrt{2\Delta Q(\rho_1)}}}
\frac{\rho_1\sqrt{2\Delta Q(\rho_1)}}{\tau_{\rho_1}}
2x\,\widetilde{\mathscr{U}}_1(-x)\,dx
\\
&\quad 
+\Bigl(\frac{1}{2}-\theta_{1,-}^{(n,M)}\Bigr)\widetilde{\mathscr{U}}_{1}(\tfrac{\tau_{\rho_1}M}{\rho_1\sqrt{2\Delta Q(\rho_1)}})
+\Bigl(\frac{1}{2}-\theta_{1,+}^{(n,M)}\Bigr)
\widetilde{\mathscr{U}}_{1}(-\tfrac{\tau_{\rho_1}M}{\rho_1\sqrt{2\Delta Q(\rho_1)}})
+\mathcal{O}(M^5n^{-1/2}). 
\end{align*}
Splitting the integral at the origin and using the two tail limits of
$\widetilde{\mathscr U}_1$ gives
\begin{align*}
&\quad\rho_1\sqrt{2\Delta Q(\rho_1)}\int_{-\frac{\tau_{\rho_1}M}{\rho_1\sqrt{2\Delta Q(\rho_1)}}}^{\frac{\tau_{\rho_1}M}{\rho_1\sqrt{2\Delta Q(\rho_1)}}}
\widetilde{\mathscr{U}}_1(-x)\,dx\,\sqrt{n}
\\
&=
\rho_1\sqrt{2\Delta Q(\rho_1)}\int_{0}^{+\infty}
\log\Bigl[1+\frac{1}{\Omega_{2m+1}^{(3m)}}\sum_{\ell=m+1}^{2m}\omega_{\ell}\frac{\erfc(t_{\ell}+x)}{\erfc(x)}
\Bigr]
\,dx\,\sqrt{n}
\\
&
+\rho_1\sqrt{2\Delta Q(\rho_1)}\int_{-\infty}^{0}
\bigg\{
\log\Bigl[1+\frac{1}{\Omega_{2m+1}^{(3m)}}\sum_{\ell=m+1}^{2m}\omega_{\ell}\frac{\erfc(t_{\ell}+x)}{\erfc(x)}
\Bigr]
-\log\Bigl( 
1+\frac{1}{\Omega_{2m+1}^{(3m)}}\sum_{\ell=m+1}^{2m}\omega_{\ell}
\Bigr)
\bigg\}
\,dx\,\sqrt{n}
\\
&
+
\tau_{\rho_1}M\sqrt{n}
\log\Bigl( 
1+\frac{1}{\Omega_{2m+1}^{(3m)}}\sum_{\ell=m+1}^{2m}\omega_{\ell}
\Bigr). 
\end{align*}
Equivalently, in terms of the semi-hard functional defined above,
\begin{align*}
&\quad \rho_1\sqrt{2\Delta Q(\rho_1)}\int_{-\frac{\tau_{\rho_1}M}{\rho_1\sqrt{2\Delta Q(\rho_1)}}}^{\frac{\tau_{\rho_1}M}{\rho_1\sqrt{2\Delta Q(\rho_1)}}}
\widetilde{\mathscr{U}}_1(-x)\,dx\,\sqrt{n}
\\
&=
\rho_1\sqrt{2\Delta Q(\rho_1)}\int_{-\infty}^{+\infty}
\bigg(
\log\mathcal{H}_1^{(\mathrm{se},\mathrm{in})}(x)
-
\mathbf{1}_{(-\infty,0)}
\sum_{j=m+1}^{2m}s_{j}
\bigg)
\,dx\,\sqrt{n}
+
\tau_{\rho_1}M\sqrt{n}
\log\frac{\Omega_{m+1}^{(3m)}}{\Omega_{2m+1}^{(3m)}},
\end{align*}
and 
\begin{align*}
&\quad \Bigl(\frac{1}{2}-\theta_{1,-}^{(n,M)}\Bigr)\widetilde{\mathscr{U}}_{1}(\tfrac{\tau_{\rho_1}M}{\rho_1\sqrt{2\Delta Q(\rho_1)}})
+\Bigl(\frac{1}{2}-\theta_{1,+}^{(n,M)}\Bigr)
\widetilde{\mathscr{U}}_{1}(-\tfrac{\tau_{\rho_1}M}{\rho_1\sqrt{2\Delta Q(\rho_1)}})
\\
&=\Bigl(\frac{1}{2}-\theta_{1,-}^{(n,M)}\Bigr)\log\frac{\Omega_{m+1}^{(3m)}}{\Omega_{2m+1}^{(3m)}}
+
\mathcal{O}(e^{-cM^2}),
\end{align*}
for some $c>0$ independent of $n$.  It remains to combine the next-order
Jacobian correction with the $n^{-1/2}$ terms in the local expansion.  The
relevant expression is
\begin{align*}
&\quad \rho_1\sqrt{2\Delta Q(\rho_1)}
\int_{-\frac{\tau_{\rho_1}M}{\rho_1\sqrt{2\Delta Q(\rho_1)}}}^{\frac{\tau_{\rho_1}M}{\rho_1\sqrt{2\Delta Q(\rho_1)}}}
\frac{\rho_1\sqrt{2\Delta Q(\rho_1)}}{\tau_{\rho_1}}
2x\widetilde{\mathscr{U}}_1(-x)\,dx
\\
&\quad 
+\frac{1}{\sqrt{n}}
\sum_{j=g_{1,-}}^{g_{1,+}}
\frac{1}{\mathscr{U}_{1}(\tfrac{\tau_{\rho_1}M_j(\rho_1)}{\rho_1\sqrt{2\Delta Q(\rho_1)}})}
\left[\begin{aligned}&
\mathscr{U}_{2}^{(\mathrm{se},\mathrm{in})}(\tfrac{\tau_{\rho_1}M_j(\rho_1)}{\rho_1\sqrt{2\Delta Q(\rho_1)}})
\\
&+
\mathscr{U}_{2}^{(\mathrm{h,in})}(\tfrac{\tau_{\rho_1}M_j(\rho_1)}{\rho_1\sqrt{2\Delta Q(\rho_1)}})
\end{aligned}\right]
+\mathcal{O}(M^5n^{-1/2}).
\end{align*}
Another application of Lemma~\ref{lemma:Riemann sum} yields
\begin{align*}
 &\quad 
 \frac{1}{\sqrt{n}}
\sum_{j=g_{1,-}}^{g_{1,+}}
\frac{1}{\mathscr{U}_{1}(\tfrac{\tau_{\rho_1}M_j(\rho_1)}{\rho_1\sqrt{2\Delta Q(\rho_1)}})}
\left[\begin{aligned}&
\mathscr{U}_{2}^{(\mathrm{se},\mathrm{in})}(\tfrac{\tau_{\rho_1}M_j(\rho_1)}{\rho_1\sqrt{2\Delta Q(\rho_1)}})
\\
&+
\mathscr{U}_{2}^{(\mathrm{h,in})}(\tfrac{\tau_{\rho_1}M_j(\rho_1)}{\rho_1\sqrt{2\Delta Q(\rho_1)}})
\end{aligned}\right]
+\mathcal{O}(M^5n^{-1/2})
\\
&=
\rho_1\sqrt{2\Delta Q(\rho_1)}
\int_{-\frac{\tau_{\rho_1}M}{\rho_1\sqrt{2\Delta Q(\rho_1)}}}^{\frac{\tau_{\rho_1}M}{\rho_1\sqrt{2\Delta Q(\rho_1)}}}
\frac{\mathscr{U}_{2}^{(\mathrm{se},\mathrm{in})}(-x)}{\mathscr{U}_{1}(-x)}\,dx
\\
&\quad+
\rho_1\sqrt{2\Delta Q(\rho_1)}
\int_{-\frac{\tau_{\rho_1}M}{\rho_1\sqrt{2\Delta Q(\rho_1)}}}^{\frac{\tau_{\rho_1}M}{\rho_1\sqrt{2\Delta Q(\rho_1)}}}
\frac{\mathscr{U}_{2}^{(\mathrm{h},\mathrm{in})}(-x)}{\mathscr{U}_{1}(-x)}\,dx
+\mathcal{O}(M^5n^{-1/2}).
\end{align*}
Note that 
\begin{align*}
&\quad 
\rho_1\sqrt{2\Delta Q(\rho_1)}
\int_{-\frac{\tau_{\rho_1}M}{\rho_1\sqrt{2\Delta Q(\rho_1)}}}^{\frac{\tau_{\rho_1}M}{\rho_1\sqrt{2\Delta Q(\rho_1)}}}
\frac{\mathscr{U}_{2}^{(\mathrm{se},\mathrm{in})}(-x)}{\mathscr{U}_{1}(-x)}\,dx
\\
&=
\Bigl(\frac{\rho_1\partial_r\Delta Q(\rho_1)}{\Delta Q(\rho_1)}+2\Bigr)
\int_{-\frac{\tau_{\rho_1}M}{\rho_1\sqrt{2\Delta Q(\rho_1)}}}^{\frac{\tau_{\rho_1}M}{\rho_1\sqrt{2\Delta Q(\rho_1)}}}
\frac{1}{\mathcal{H}_1^{(\mathrm{se},\mathrm{in})}(x)}
\sum_{\ell=m+1}^{2m}\omega_{\ell}^{(2m)}t_{\ell}^2\frac{-2e^{-(t_{\ell}+x)^2}}{3\sqrt{\pi}\erfc(x)}
\,dx
\\
&
+
3\Bigl(1+\frac{\rho_1\partial_r\Delta Q(\rho_1)}{\Delta Q(\rho_1)}\Bigr)
\int_{-\frac{\tau_{\rho_1}M}{\rho_1\sqrt{2\Delta Q(\rho_1)}}}^{\frac{\tau_{\rho_1}M}{\rho_1\sqrt{2\Delta Q(\rho_1)}}}
\frac{1}{\mathcal{H}_1^{(\mathrm{se},\mathrm{in})}(x)}
\sum_{\ell=m+1}^{2m}\omega_{\ell}^{(2m)}t_{\ell}^2\frac{e^{-(t_{\ell}+x)^2}}{3\sqrt{\pi}\erfc(x)}
\,dx
\\
&
-
\Bigl(\frac{\rho_1\partial_r\Delta Q(\rho_1)}{\Delta Q(\rho_1)}+2\Bigr)
\int_{-\frac{\tau_{\rho_1}M}{\rho_1\sqrt{2\Delta Q(\rho_1)}}}^{\frac{\tau_{\rho_1}M}{\rho_1\sqrt{2\Delta Q(\rho_1)}}}
\frac{1}{\mathcal{H}_1^{(\mathrm{se},\mathrm{in})}(x)}
\sum_{\ell=m+1}^{2m}\omega_{\ell}^{(2m)}t_{\ell}\frac{xe^{-(t_{\ell}+x)^2}}{3\sqrt{\pi}\erfc(x)}
\,dx
\\
&
+
\Bigl(\frac{\rho_1\partial_r\Delta Q(\rho_1)}{\Delta Q(\rho_1)}+2\Bigr)
\int_{-\frac{\tau_{\rho_1}M}{\rho_1\sqrt{2\Delta Q(\rho_1)}}}^{\frac{\tau_{\rho_1}M}{\rho_1\sqrt{2\Delta Q(\rho_1)}}}
\frac{1}{\mathcal{H}_1^{(\mathrm{se},\mathrm{in})}(x)}
\sum_{\ell=m+1}^{2m}
\frac{\omega_{\ell}^{(2m)}}{3\sqrt{\pi}}
\\
&\quad\times\bigg[
\frac{\erfc(t_{\ell}+x)e^{-x^2}}{\erfc(x)^2}-\frac{e^{-(t_{\ell}+x)^2}}{\erfc(x)}
\bigg]
\bigl(5x^2-1)
\,dx
\\
&
+
\int_{-\frac{\tau_{\rho_1}M}{\rho_1\sqrt{2\Delta Q(\rho_1)}}}^{\frac{\tau_{\rho_1}M}{\rho_1\sqrt{2\Delta Q(\rho_1)}}}
\frac{1}{\mathcal{H}_1^{(\mathrm{se},\mathrm{in})}(x)}
\sum_{\ell=m+1}^{2m}
\frac{\omega_{\ell}^{(2m)}}{3\sqrt{\pi}}
\bigg[
\frac{\erfc(t_{\ell}+x)e^{-x^2}}{\erfc(x)^2}-\frac{e^{-(t_{\ell}+x)^2}}{\erfc(x)}
\bigg]
\\
&\quad \times
\bigg[
\frac{12\rho_1^2\Delta Q(\rho_1)}{\tau_{\rho_1}}x^2+6\Bigl(\frac{1}{2}\rho_1\mathsf{k}'(\rho_1)+1\Bigr)-6\Bigl(\frac{\rho_1\partial_r\Delta Q(\rho_1)}{\Delta Q(\rho_1)}+2\Bigr)x^2
\bigg]
\,dx,
\end{align*}
The identity
\begin{align*}
\partial_x\log\Bigl( 
1+\sum_{\ell=m+1}^{2m}\omega_{\ell}^{(2m)}
\frac{\erfc(t_{\ell}+x)}{\erfc(x)}
\Bigr)
=
\frac{\sum_{\ell=m+1}^{2m}\omega_{\ell}^{(2m)}\Bigl( 
\frac{2\erfc(t_{\ell}+x)e^{-x^2}}{\sqrt{\pi}\erfc(x)^2}
-
\frac{2e^{-(t_{\ell}+x)^2}}{\sqrt{\pi}\erfc(x)}
\Bigr)}{1+\sum_{\ell=m+1}^{2m}\omega_{\ell}^{(2m)}
\frac{\erfc(t_{\ell}+x)}{\erfc(x)}},
\end{align*}
then permits an integration by parts, yielding
\begin{align*}
&\quad 
\frac{1}{6}
\int_{-\frac{\tau_{\rho_1}M}{\rho_1\sqrt{2\Delta Q(\rho_1)}}}^{\frac{\tau_{\rho_1}M}{\rho_1\sqrt{2\Delta Q(\rho_1)}}}
\frac{1}{\mathcal{H}_1^{(\mathrm{se},\mathrm{in})}(x)}
\sum_{\ell=m+1}^{2m}
\omega_{\ell}^{(2m)}
\bigg[
\frac{2\erfc(t_{\ell}+x)e^{-x^2}}{\sqrt{\pi}\erfc(x)^2}-\frac{2e^{-(t_{\ell}+x)^2}}{\sqrt{\pi}\erfc(x)}
\bigg]
\\
&\quad \times
\bigg[
\frac{12\rho_1^2\Delta Q(\rho_1)}{\tau_{\rho_1}}x^2+6\Bigl(\frac{1}{2}\rho_1\mathsf{k}'(\rho_1)+1\Bigr)-6\Bigl(\frac{\rho_1\partial_r\Delta Q(\rho_1)}{\Delta Q(\rho_1)}+2\Bigr)x^2
\bigg]
\,dx
\\
&
=
-\tau_{\rho_1}M^2\log\Bigl(\frac{\Omega_{m+1}^{(3m)}}{\Omega_{2m+1}^{(3m)}}\Bigr)
-\Bigl(\frac{1}{2}\rho_1\mathsf{k}'(\rho_1)+1\Bigr)\log\Bigl(\frac{\Omega_{m+1}^{(3m)}}{\Omega_{2m+1}^{(3m)}}\Bigr)
\\
&\quad 
-
\rho_1\sqrt{2\Delta Q(\rho_1)}
\int_{-\frac{\tau_{\rho_1}M}{\rho_1\sqrt{2\Delta Q(\rho_1)}}}^{\frac{\tau_{\rho_1}M}{\rho_1\sqrt{2\Delta Q(\rho_1)}}}
\frac{\rho_1\sqrt{2\Delta Q(\rho_1)}}{\tau_{\rho_1}}
2x\widetilde{\mathscr{U}}_1(-x)
\,dx
\\
&\quad 
+
\Bigl(\frac{\rho_1\partial_r\Delta Q(\rho_1)}{\Delta Q(\rho_1)}+2\Bigr)
\int_{-\infty}^{+\infty}
2x\Bigl(\widetilde{\mathscr{U}}_1(-x)-\mathbf{1}_{(-\infty,0)}(x)\sum_{j=m+1}^{2m}s_j\Bigr)
\,dx
+\mathcal{O}(e^{-cM^2}),
\end{align*}
for some $c>0$ independent of $n$.  To treat the hard-edge contribution, we
use the tail expansions
\[
\mathscr{U}_{1}(-x)=  
1+\frac{1}{\Omega_{2m+1}^{(3m)}}\sum_{\ell=m+1}^{2m}\omega_{\ell}\frac{\erfc(t_{\ell}+x)}{\erfc(x)}
=
\begin{cases}
1+\mathcal{O}(e^{-c_{t_{\ell}}x}),     & \mbox{if } x\to+\infty, \\
1+\frac{1}{\Omega_{2m+1}^{(3m)}}\sum_{\ell=m+1}^{2m}\omega_{\ell}+\mathcal{O}(e^{-x^2}),     & \mbox{if } x\to-\infty, 
\end{cases}
\]
where $c_{t_{\ell}}>0$ depends on $t_{\ell}>0$ for all $\ell=m+1,\dots,2m$, and 
\begin{align*}
\mathscr{U}_{2}^{(\mathrm{h,in})}(-x)
&=
-
\frac{2\rho_1\sqrt{2\Delta Q(\rho_1)}}{\sqrt{\pi}}
\frac{\mathsf{T}_{1}^{(2m+1,3m)}(\tau_{\rho_1})}{1+\mathsf{T}_{0}^{(2m+1,3m)}(\tau_{\rho_1})}\frac{e^{-x^2}}{\erfc(x)}
\\
&=
-
\frac{2\rho_1\sqrt{2\Delta Q(\rho_1)}}{\sqrt{\pi}}
\frac{\mathsf{T}_{1}^{(2m+1,3m)}(\tau_{\rho_1})}{1+\mathsf{T}_{0}^{(2m+1,3m)}(\tau_{\rho_1})}
\begin{cases}
    \sqrt{\pi}x+\frac{\sqrt{\pi}}{2x}-\frac{\sqrt{\pi}}{2x^3}+\mathcal{O}(x^{-4}), & \mbox{if } x\to+\infty, \\
    \mathcal{O}(e^{-x^2}), & \mbox{if } x\to-\infty,
\end{cases}
\end{align*}
It follows that
\begin{align*}
&\quad
\rho_1\sqrt{2\Delta Q(\rho_1)}
\int_{-\frac{\tau_{\rho_1}M}{\rho_1\sqrt{2\Delta Q(\rho_1)}}}^{\frac{\tau_{\rho_1}M}{\rho_1\sqrt{2\Delta Q(\rho_1)}}}
\frac{\mathscr{U}_{2}^{(\mathrm{h,in})}(-x)}{\mathscr{U}_{1}(-x)}\,dx
\\
&=
-4\rho_1^2\Delta Q(\rho_1)
\frac{\mathsf{T}_{1}^{(2m+1,3m)}(\tau_{\rho_1})}{1+\mathsf{T}_{0}^{(2m+1,3m)}(\tau_{\rho_1})}
\int_{-\infty}^{+\infty}
\bigg[
\frac{\frac{e^{-x^2}}{\sqrt{\pi}\erfc(x)}}{\mathcal{H}_1^{(\mathrm{se,in})}(x)}
-
\Bigl( 
x+\frac{x}{2(1+x^2)}
\Bigr)
\mathbf{1}_{[0,+\infty)}(x)
\bigg]
\,dx
\\
&
-\bigg(\tau_{\rho_1}^2M^2
+2\rho_1^2\Delta Q(\rho_1)\log( M\tau_{\rho_1})
-\rho_1^2\Delta Q(\rho_1)\log (2\rho_1^2\Delta Q(\rho_1))
\bigg)
\frac{\mathsf{T}_{1}^{(2m+1,3m)}(\tau_{\rho_1})}{1+\mathsf{T}_{0}^{(2m+1,3m)}(\tau_{\rho_1})}
+\mathcal{O}(M^{-2}).
\end{align*}

Collecting the preceding formulas gives precisely
$C_{E_4^{(\mathrm{dc})}}^{(4)}$,
$C_{E_4^{(\mathrm{dc})}}^{(6)}$, and
$\widetilde C_{E_4^{(\mathrm{dc})}}^{(M)}$ in part~(i).  The accumulated
local error is $\mathcal O(M^5n^{-1/2})$, while the last tail truncation is
$\mathcal O(M^{-2})$.  With $M=n^{1/12}$, both are
$\mathcal O(n^{-1/12})$ or smaller.  This proves part~(i).

\medskip

\emph{Proof of (ii).}
By definition,
\begin{align*}
E_5^{(\mathrm{dc})}&=  
\sum_{j=g_{1,+}+1}^{n-1}\log\Big(
1+\sum_{\ell=1}^{m}\omega_{\ell}F_{n,j,\ell}^{(\mathrm{dc})}
+\sum_{\ell=m+1}^{2m}\omega_{\ell}F_{n,j,\ell}^{(\mathrm{dc})}
+\sum_{\ell=2m+1}^{3m}\omega_{\ell}F_{n,j,\ell}^{(\mathrm{dc})}
\Bigr).
\end{align*}
The contribution
$\sum_{\ell=1}^{m}\omega_{\ell}F_{n,j,\ell}^{(\mathrm{dc})}$ is
exponentially small, uniformly throughout this range.
We first show that the semi-hard contribution is negligible.  For
$g_{1,+}+1\leq j\leq n-1$, $\ell=m+1,\ldots,2m$, and
$r_{\ell}=\rho_1-t_{\ell}/\sqrt{2n\Delta Q(\rho_1)}$, decompose
\begin{align*}
&\quad h_{n,j,\ell}^{(\mathrm{dc})}
=\int_0^{r_{\ell}}2r e^{-nV_{\tau}(r)}e^{\mathsf{k}(r)}\,dr
\\
&=\int_0^{\rho_1-\delta_n}2r e^{-nV_{\tau}(r)}e^{\mathsf{k}(r)}\,dr
+\int_{\rho_1-\delta_n}^{\rho_1}2r e^{-nV_{\tau}(r)}e^{\mathsf{k}(r)}\,dr
-\int_{\rho_1-\frac{t_{\ell}}{\sqrt{2n\Delta Q(\rho_1)}}}^{\rho_1}2r e^{-nV_{\tau}(r)}e^{\mathsf{k}(r)}\,dr.
\end{align*}
The first two terms are controlled by
Lemma~\ref{lemma: g1+leq j leq j+1}.  The same endpoint expansion gives
\[
2e^{\mathsf{k}(\rho_1)}e^{-nV_{\tau}(\rho_1)}\int_{\rho_1-\frac{t_{\ell}}{\sqrt{2n\Delta Q(\rho_1)}}}^{\rho_1}re^{\mathsf{k}(r)-\mathsf{k}(\rho_1)}e^{-n(V_{\tau}(r)-V_{\tau}(\rho_1))}\,dr
=h_{n,j}^{\#,\mathrm{hard}}[\mathbb{D}_{\rho_1}^{\rm c}]\cdot\Bigl(1+\mathcal{O}(e^{-cM})\Bigr),
\]
for some $c>0$ independent of the displayed indices.  Therefore,
\begin{align*}
E_5^{(\mathrm{dc})}&=  
\sum_{j=g_{1,+}+1}^{n-1}\log\Big(
1+\sum_{\ell=1}^{m}\omega_{\ell}\cdot \mathcal{O}(e^{-cM^2})
+\sum_{\ell=m+1}^{2m}\omega_{\ell}\cdot \mathcal{O}(e^{-c'M})
+\sum_{\ell=2m+1}^{3m}\omega_{\ell}F_{n,j,\ell}^{(\mathrm{dc})}
\Bigr). 
\end{align*}
It remains to analyze the hard-edge radii.  For
$\ell=2m+1,\ldots,3m$ and $g_{1,+}+1\leq j\leq n-1$, write
\[
h_{n,j,\ell}^{(\mathrm{dc})}
=\int_0^{\rho_1-\delta_n}2r e^{-nV_{\tau}(r)}e^{\mathsf{k}(r)}\,dr
+\int_{\rho_1-\delta_n}^{\rho_1}2r e^{-nV_{\tau}(r)}e^{\mathsf{k}(r)}\,dr
-\int_{\rho_1-\frac{\rho_1t_{\ell}}{n}}^{\rho_1}2r e^{-nV_{\tau}(r)}e^{\mathsf{k}(r)}\,dr.
\]
The first two terms were estimated in
Lemma~\ref{lemma: g1+leq j leq j+1}.  A direct adaptation of the endpoint
calculation in that lemma gives
\begin{align*}
&\quad 2e^{\mathsf{k}(\rho_1)}e^{-nV_{\tau}(\rho_1)}\int_{\rho_1-\frac{\rho_1t_{\ell}}{n}}^{\rho_1}re^{\mathsf{k}(r)-\mathsf{k}(\rho_1)}e^{-n(V_{\tau}(r)-V_{\tau}(\rho_1))}\,dr
\\
&=
\frac{2\rho_1e^{\mathsf{k}(\rho_1)}e^{-nV_{\tau}(\rho_1)}}{\eta_1 n}\int_0^{\rho_1t_{\ell}}
e^{-u}
\Bigl(1-\frac{a_1(\eta_1^{-1}u)}{n}+\frac{a_2(\eta_1^{-1}u)}{n^2}-\frac{a_3(\eta_1^{-1}u)}{n^3}+\mathcal{O}(\frac{a_4(\eta_1^{-1}u)}{n^4})\Bigr)
\,du.
\end{align*}
Combining this with the second term yields
\[
h_{n,j,\ell}^{(\mathrm{dc})}
=\frac{2\rho_1e^{\mathsf{k}(\rho_1)}e^{-nV_{\tau}(\rho_1)}}{\eta_1 n}
e^{-\varphi_{\ell}\eta_1}
\Bigl( 
1
-\frac{\mathsf{a}_1(\varphi_{\ell};\eta_1)}{n}
+\frac{\mathsf{a}_2(\varphi_{\ell};\eta_1)}{n^2}
-\frac{\mathsf{a}_3(\varphi_{\ell};\eta_1)}{n^3}
+\mathcal{O}\Bigl(\frac{1}{\eta_1^{8}n^{4}}\Bigr)
\Bigr),
\]
where, with $\varphi_{\ell}:=\rho_1t_{\ell}$, the coefficients $\mathsf{a}_k(\varphi_{\ell};\eta_1)$, $k=1,2,3$, are defined by 
\begin{align*}
\mathsf{a}_1(\varphi_{\ell};\eta_1)&:=
\Bigl(\mathcal{V}_2(\rho_1)+\frac{\eta_1}{\rho_1}\Bigr)\frac{1}{\eta_1^2}\mathsf{e}_2(\varphi_{\ell}\eta_1)+\Bigl(\mathsf{k}'(\rho_1)+\frac{1}{\rho_1}\Bigr)\frac{1}{\eta_1}\mathsf{e}_1(\varphi_{\ell}\eta_1),
\\
\mathsf{a}_2(\varphi_{\ell};\eta_1)&:=
\Bigl(\mathcal{V}_2(\rho_1)+\frac{\eta_1}{\rho_1}\Bigr)^2
\frac{3\mathsf{e}_4(\varphi_{\ell}\eta_1)}{\eta_1^4}
+
\Bigl\{
3\Bigl(\mathcal{V}_2(\rho_1)+\frac{\eta_1}{\rho_1}\Bigr)
\Bigl(\mathsf{k}'(\rho_1)+\frac{1}{\rho_1}\Bigr)
+\mathcal{V}_3(\rho_1)-\frac{2\eta_1}{\rho_1^2}
\Bigr\}
\frac{\mathsf{e}_3(\varphi_{\ell}\eta_1)}{\eta_1^3}
\\
&\quad 
+\Bigl(\frac{2\mathsf{k}'(\rho_1)}{\rho_1}+\mathsf{k}'(\rho_1)^2+\mathsf{k}''(\rho_1)\Bigr)
\frac{\mathsf{e}_2(\varphi_{\ell}\eta_1)}{\eta_1^2}, 
\\
\mathsf{a}_3(\varphi_{\ell};\eta_1)&:=
\Bigl(\mathcal{V}_2(\rho_1)+\frac{\eta_1}{\rho_1}\Bigr)^3
\frac{15\mathsf{e}_6(\varphi_{\ell}\eta_1)}{\eta_1^6}
\\
&\quad 
+
\Bigl\{
3\Bigl(\mathcal{V}_2(\rho_1)+\frac{\eta_1}{\rho_1}\Bigr)^2\Bigl(\mathsf{k}'(\rho_1)+\frac{1}{\rho_1}\Bigr)
+
2\Bigl(\mathcal{V}_2(\rho_1)+\frac{\eta_1}{\rho_1}\Bigr)\Bigl(\mathcal{V}_3(\rho_1)-\frac{2\eta_1}{\rho_1^2}\Bigr)
\Bigr\}\frac{5\mathsf{e}_5(\varphi_{\ell}\eta_1)}{\eta_1^5}
\\
&\quad 
+
\mathcal{O}\Bigl( 
\frac{\mathsf{e}_4(\varphi_{\ell}\eta_1)}{\eta_1^4}(1+\eta_1)
+
\frac{\mathsf{e}_3(\varphi_{\ell}\eta_1)}{\eta_1^3}
\Bigr),
\end{align*}
where, consistently with \eqref{def of V tau relationship 1},
\[
\mathcal V_2(\rho_1):=4\Delta Q(\rho_1),
\qquad
\mathcal V_3(\rho_1):=4\partial_r\Delta Q(\rho_1)
-\frac{4\Delta Q(\rho_1)}{\rho_1}.
\]
Thus $V_\tau''(\rho_1)=\mathcal V_2(\rho_1)+\eta_1/\rho_1$ and
$V_\tau^{(3)}(\rho_1)=\mathcal V_3(\rho_1)-2\eta_1/\rho_1^2$.  Recall that
$\eta_1=\frac{2}{\rho_1}(\tau-\tau_{\rho_1})$, and define the truncated
exponential polynomials by
\begin{equation}
    \label{def of mathsf ek}
\int_{x}^{+\infty}u^k e^{-u}\,du=k!e^{-x}\mathsf{e}_k(x),\qquad
    \mathsf{e}_{k,\ell}(x):=\sum_{p=\ell}^k \frac{x^p}{p!}, \qquad
    \mathsf{e}_{k}(x)=\mathsf{e}_{k,0}(x), 
\end{equation}
i.e., $\mathsf{e}_{k}(x)=1+\mathsf{e}_{k,1}(x)$. 
Define 
\begin{align*}
\mathsf{a}_{1,1}(\varphi_{\ell};\eta_1)&:=
\Bigl(\mathcal{V}_2(\rho_1)+\frac{\eta_1}{\rho_1}\Bigr)\frac{1}{\eta_1^2}\mathsf{e}_{2,1}(\varphi_{\ell}\eta_1)+\Bigl(\mathsf{k}'(\rho_1)+\frac{1}{\rho_1}\Bigr)\frac{1}{\eta_1}\mathsf{e}_{1,1}(\varphi_{\ell}\eta_1),
\\
\mathsf{a}_{2,1}(\varphi_{\ell};\eta_1)&:=
\Bigl(\mathcal{V}_2(\rho_1)+\frac{\eta_1}{\rho_1}\Bigr)^2
\frac{3\mathsf{e}_{4,1}(\varphi_{\ell}\eta_1)}{\eta_1^4}
\\
&\quad+
\Bigl\{
3\Bigl(\mathcal{V}_2(\rho_1)+\frac{\eta_1}{\rho_1}\Bigr)
\Bigl(\mathsf{k}'(\rho_1)+\frac{1}{\rho_1}\Bigr)
+\mathcal{V}_3(\rho_1)-\frac{2\eta_1}{\rho_1^2}
\Bigr\}
\frac{\mathsf{e}_{3,1}(\varphi_{\ell}\eta_1)}{\eta_1^3}
\\
&\quad 
+\Bigl(\frac{2\mathsf{k}'(\rho_1)}{\rho_1}+\mathsf{k}'(\rho_1)^2+\mathsf{k}''(\rho_1)\Bigr)
\frac{\mathsf{e}_{2,1}(\varphi_{\ell}\eta_1)}{\eta_1^2}, 
\\
\mathsf{a}_{3,1}(\varphi_{\ell};\eta_1)&:=
\Bigl(\mathcal{V}_2(\rho_1)+\frac{\eta_1}{\rho_1}\Bigr)^3
\frac{15\mathsf{e}_{6,1}(\varphi_{\ell}\eta_1)}{\eta_1^6}
\\
&\quad 
+
\left\{\begin{aligned}&
3\Bigl(\mathcal{V}_2(\rho_1)+\frac{\eta_1}{\rho_1}\Bigr)^2\Bigl(\mathsf{k}'(\rho_1)+\frac{1}{\rho_1}\Bigr)
+
\\
&\quad 2\Bigl(\mathcal{V}_2(\rho_1)+\frac{\eta_1}{\rho_1}\Bigr)\Bigl(\mathcal{V}_3(\rho_1)-\frac{2\eta_1}{\rho_1^2}\Bigr)
\end{aligned}\right\}\frac{5\mathsf{e}_{5,1}(\varphi_{\ell}\eta_1)}{\eta_1^5}
\\
&\quad 
+
\mathcal{O}\Bigl( 
\frac{\mathsf{e}_{4,1}(\varphi_{\ell}\eta_1)}{\eta_1^4}(1+\eta_1)
+
\frac{\mathsf{e}_{3,1}(\varphi_{\ell}\eta_1)}{\eta_1^3}
\Bigr). 
\end{align*}
Recall that $\mathsf{T}_{j}^{(n,m)}(\tau_{\rho_1};\vec{s},\vec{t})
=\sum_{\ell=n}^{m}\omega_{\ell}t_{\ell}^j$.  Summing the preceding local
expansion over $\ell=2m+1,\ldots,3m$ gives
\[
\sum_{\ell={2m+1}}^{3m}\omega_{\ell}F_{n,j,\ell}^{(\mathrm{dc})}
=
\mathsf{T}_{0}^{(2m+1,3m)}(j/n)
-\frac{1}{n}\mathscr{H}_{\rho_1,1}
+\frac{1}{n^2}\mathscr{H}_{\rho_1,2}
-\frac{1}{n^3}\mathscr{H}_{\rho_1,3}
+\mathcal{O}(\frac{1}{\eta_1^8n^4}), 
\]
where 
\begin{align*}
\mathsf{T}_{0}^{(2m+1,3m)}(j/n)
&=
\sum_{\ell={2m+1}}^{3m}\omega_{\ell}
e^{-t_{\ell}\rho_1\eta_1}, 
\quad
\mathscr{H}_{\rho_1,1}
=
\sum_{\ell={2m+1}}^{3m}\omega_{\ell}
e^{-t_{\ell}\rho_1\eta_1}\mathsf{a}_{1,1}(\varphi_{\ell};\eta_1), 
\\
\mathscr{H}_{\rho_1,2}
&=
\sum_{\ell={2m+1}}^{3m}\omega_{\ell}
e^{-t_{\ell}\rho_1\eta_1}\Bigl(
\mathsf{a}_{2,1}(\varphi_{\ell};\eta_1)
-\mathfrak{a}_1\mathsf{a}_{1,1}(\varphi_{\ell};\eta_1)
\Bigr), 
\\
\mathscr{H}_{\rho_1,3}
&=
\sum_{\ell={2m+1}}^{3m}\omega_{\ell}
e^{-t_{\ell}\rho_1\eta_1}
\Bigl(\mathsf{a}_{3,1}(\varphi_{\ell};\eta_1)
+
\mathfrak{a}_1^2\mathsf{a}_{1,1}(\varphi_{\ell};\eta_1)
-
\mathfrak{a}_2\mathsf{a}_{1,1}(\varphi_{\ell};\eta_1)
-
\mathfrak{a}_1\mathsf{a}_{2,1}(\varphi_{\ell};\eta_1)
\Bigr). 
\end{align*}
Set $\psi_j:=\rho_1\eta_1=2(j/n-\tau_{\rho_1})$.  Taylor expansion of the
logarithm then gives
\begin{align*}
&\quad E_5^{(\mathrm{dc})}\\
&=  
\sum_{j=g_{1,+}+1}^{n-1}
\log\bigl(1+\mathsf{T}_{0}^{(2m+1,3m)}(j/n)\bigr)
-\frac{1}{n}
\sum_{j=g_{1,+}+1}^{n-1}
\frac{\mathscr{H}_{\rho_1,1}}{1+\mathsf{T}_{0}^{(2m+1,3m)}(j/n)}
\\
&
+\frac{1}{n^2}\sum_{j=g_{1,+}+1}^{n-1}\Bigl( 
\frac{\mathscr{H}_{\rho_1,2}}{1+\mathsf{T}_{0}^{(2m+1,3m)}(j/n)}-\frac{\mathscr{H}_{\rho_1,1}^2}{2(1+\mathsf{T}_{0}^{(2m+1,3m)}(j/n))^2}
\Bigr)
\\
&
-\frac{1}{n^3}
\sum_{j=g_{1,+}+1}^{n-1}
\Bigl( 
\frac{\mathscr{H}_{\rho_1,3}}{1+\mathsf{T}_{0}^{(2m+1,3m)}(j/n)}
-\frac{\mathscr{H}_{\rho_1,1}\mathscr{H}_{\rho_1,2}}{(1+\mathsf{T}_{0}^{(2m+1,3m)}(j/n))^2}
\\
&\qquad+\frac{\mathscr{H}_{\rho_1,1}^3}{3(1+\mathsf{T}_{0}^{(2m+1,3m)}(j/n))^3}
\Bigr)
+\mathcal{O}(n^{1/2}M^{-7}).
\end{align*}
Since $M=n^{1/12}$, the final remainder is $\mathcal O(n^{-1/12})$.
Recall that
$\theta_{1,+}^{(n,M)}=n\tau_{\rho_1}/(1-M/\sqrt n)-g_{1,+}$.
Apply Lemma~\ref{lemma:Riemann sum NEW} with
\[
A=\frac{\tau_{\rho_1}}{1-M/\sqrt n},\qquad
a_0=1-\theta_{1,+}^{(n,M)},\qquad B=1,\qquad b_0=-1.
\]
We obtain
\begin{align*}
&\quad\sum_{j=g_{1,+}+1}^{n-1}
\log\bigl(1+\mathsf{T}_{0}^{(2m+1,3m)}(j/n)\bigr)
\\
&=
n\int_{\frac{\tau_{\rho_1}}{1-\frac{M}{\sqrt{n}}}}^1
\log\bigl(1+\mathsf{T}_{0}^{(2m+1,3m)}(x)\bigr)\,dx
\\
&\quad
+
\Bigl(\theta_{1,+}^{(n,M)}-\frac{1}{2}\Bigr)
\log\bigl(1+\mathsf{T}_{0}^{(2m+1,3m)}(\tau_{\rho_1})\bigr)
-
\frac{1}{2}\log\bigl(1+\mathsf{T}_{0}^{(2m+1,3m)}(1)\bigr)
+
\mathcal{O}(Mn^{-1/2}).
\end{align*}
Writing
$f_{\mathsf T_0}(x):=\log(1+\mathsf T_0^{(2m+1,3m)}(x))$ and expanding
the moving lower endpoint gives
\begin{align*}
&\quad n\int_{\frac{\tau_{\rho_1}}{1-\frac{M}{\sqrt{n}}}}^1
\log\bigl(1+\mathsf{T}_{0}^{(2m+1,3m)}(x)\bigr)\,dx
\\
&=
n\int_{\tau_{\rho_1}}^1
f_{\mathsf{T}_0}(x)\,dx
-\tau_{\rho_1}M\sqrt{n}f_{\mathsf{T}_0}(\tau_{\rho_1})
-M^2\tau_{\rho_1}\Bigl( 
f_{\mathsf{T}_0}(\tau_{\rho_1})
+\frac{\tau_{\rho_1}}{2}f_{\mathsf{T}_0}'(\tau_{\rho_1})\Bigr)
+\mathcal{O}(M^3n^{-1/2}). 
\end{align*}
Let 
\begin{align*}
\mathscr{F}_1(x)&:=
\frac{2\rho_1^2\Delta Q(\rho_1)}{x-\tau_{\rho_1}}
\frac{\mathsf{T}_{1}^{(2m+1,3m)}(x)}{1+\mathsf{T}_{0}^{(2m+1,3m)}(x)}
+
x
\frac{\mathsf{T}_{2}^{(2m+1,3m)}(x)}{1+\mathsf{T}_{0}^{(2m+1,3m)}(x)}, 
\\
\mathscr{F}_2(x)&:=
\frac{\mathsf{T}_{1}^{(2m+1,3m)}(x)}{1+\mathsf{T}_{0}^{(2m+1,3m)}(x)}, 
\qquad
\mathscr{F}_3(x):=
\frac{\mathsf{T}_{2}^{(2m+1,3m)}(x)}{1+\mathsf{T}_{0}^{(2m+1,3m)}(x)}. 
\end{align*}
The same lemma, applied to the first correction term, gives
\begin{align*}
&\quad \frac{1}{n}\sum_{j=g_{1,+}+1}^{n-1}
\frac{\mathscr{H}_{\rho_1,1}}{1+\mathsf{T}_{0}^{(2m+1,3m)}(j/n)}
\\
&=
\int_{\frac{\tau_{\rho_1}}{1-\frac{M}{\sqrt{n}}}}^{1}
\mathscr{F}_1(x)\,dx
+
2\Bigl(\frac{1}{2}\rho_1\mathsf{k}'(\rho_1)+1\Bigr)
\int_{\frac{\tau_{\rho_1}}{1-\frac{M}{\sqrt{n}}}}^{1}
\mathscr{F}_2(x)
\,dx
\\
&\quad+
\bigl(2\rho_1^2\Delta Q(\rho_1)-\tau_{\rho_1}\bigr)
\int_{\frac{\tau_{\rho_1}}{1-\frac{M}{\sqrt{n}}}}^{1}
\mathscr{F}_3(x)
\,dx
\\
&\quad 
+
\frac{1}{n}\Bigl(\theta_{1,+}^{(n,M)}-\frac{1}{2}\Bigr)\mathscr{F}_{1}(\tfrac{\tau_{\rho_1}}{1-\frac{M}{\sqrt{n}}})
-\frac{1}{2n}\mathscr{F}_{1}(1)
\\
&\quad+
\frac{2}{n}\Bigl(\frac{1}{2}\rho_1\mathsf{k}'(\rho_1)+1\Bigr)\Bigl[
\Bigl(\theta_{1,+}^{(n,M)}-\frac{1}{2}\Bigr)\mathscr{F}_{2}(\tfrac{\tau_{\rho_1}}{1-\frac{M}{\sqrt{n}}})
-\frac{1}{2}\mathscr{F}_{2}(1)
\Bigr]
\\
&\quad 
+\frac{1}{n}\bigl(2\rho_1^2\Delta Q(\rho_1)-\tau_{\rho_1}\bigr)
\Bigl[
\Bigl(\theta_{1,+}^{(n,M)}-\frac{1}{2}\Bigr)\mathscr{F}_{3}(\tfrac{\tau_{\rho_1}}{1-\frac{M}{\sqrt{n}}})
-\frac{1}{2}\mathscr{F}_{3}(1)
\Bigr]
+\mathcal{O}(n^{-2}). 
\end{align*}
Separating its singular part at $x=\tau_{\rho_1}$, we find
\begin{align*}
\int_{\frac{\tau_{\rho_1}}{1-\frac{M}{\sqrt{n}}}}^{1}
\mathscr{F}_1(x)\,dx
&=
\int_{\tau_{\rho_1}}^{1}
\Bigl(\mathscr{F}_1(x)
-\frac{2\rho_1^2\Delta Q(\rho_1)}{x-\tau_{\rho_1}}\frac{\mathsf{T}_{1}^{(2m+1,3m)}(\tau_{\rho_1})}{1+\mathsf{T}_{0}^{(2m+1,3m)}(\tau_{\rho_1})}\Bigr)\,dx
\\
&\quad 
+
2\rho_1^2\Delta Q(\rho_1)
\frac{\mathsf{T}_{1}^{(2m+1,3m)}(\tau_{\rho_1})}{1+\mathsf{T}_{0}^{(2m+1,3m)}(\tau_{\rho_1})}
\Bigl( 
\log(1-\tau_{\rho_1})-\log\Bigl(\frac{M\tau_{\rho_1}}{\sqrt{n}}\Bigr)
\Bigr)
+
\mathcal{O}(Mn^{-1/2}), 
\\
\int_{\frac{\tau_{\rho_1}}{1-\frac{M}{\sqrt{n}}}}^{1}
\mathscr{F}_2(x)
\,dx
&
=
-\frac{1}{2}\log(1+\mathsf{T}_0^{(2m+1,3m)}(1))
+\frac{1}{2}\log(1+\mathsf{T}_0^{(2m+1,3m)}(\tau_{\rho_1}))
+\mathcal{O}(Mn^{-1/2}), 
\\
\int_{\frac{\tau_{\rho_1}}{1-\frac{M}{\sqrt{n}}}}^{1}
\mathscr{F}_3(x)\,dx
&=
\int_{\tau_{\rho_1}}^{1}
\frac{\mathsf{T}_{2}^{(2m+1,3m)}(x)}{1+\mathsf{T}_{0}^{(2m+1,3m)}(x)}\,dx
+
\mathcal{O}(Mn^{-1/2}). 
\end{align*}
We finally use the uniform summation bound \cite[proof of Lemma~2.6]{ACCL1}: for
$A,B>1$,
\[
\sum_{j=g_{1,+}+1}^{n-1}\mathcal{O}\Bigl(\frac{1}{n^A(j/n-\tau_{\rho_1})^B}\Bigr)
=
\mathcal{O}\Bigl(\frac{1}{n^{A-(B+1)/2}M^{B-1}}\Bigr),
\]
which yields
\begin{align*}
&\quad \frac{1}{n^2}\sum_{j=g_{1,+}+1}^{n-1}\Bigl( 
\frac{\mathscr{H}_{\rho_1,2}}{1+\mathsf{T}_{0}^{(2m+1,3m)}(j/n)}-\frac{\mathscr{H}_{\rho_1,1}^2}{2(1+\mathsf{T}_{0}^{(2m+1,3m)}(j/n))^2}
\Bigr)
\\
&=
\frac{2\rho_1^4(\Delta Q(\rho_1))^2}{\tau_{\rho_1}^2M^2}
\frac{\mathsf{T}_{1}^{(2m+1,3m)}(\tau_{\rho_1})}{1+\mathsf{T}_{0}^{(2m+1,3m)}(\tau_{\rho_1})}
+\mathcal{O}(M^{-1}n^{-1/2}),
\end{align*}
and 
\begin{align*}
 &\quad 
 -\frac{1}{n^3}
\sum_{j=g_{1,+}+1}^{n-1}
\Bigl( 
\frac{\mathscr{H}_{\rho_1,3}}{1+\mathsf{T}_{0}^{(2m+1,3m)}(j/n)}
-\frac{\mathscr{H}_{\rho_1,1}\mathscr{H}_{\rho_1,2}}{(1+\mathsf{T}_{0}^{(2m+1,3m)}(j/n))^2}
+\frac{\mathscr{H}_{\rho_1,1}^3}{3(1+\mathsf{T}_{0}^{(2m+1,3m)}(j/n))^3}
\Bigr)
\\
&=
-
\frac{5\rho_1^6\Delta Q(\rho_1)^3}{\tau_{\rho_1}^4M^4}
\frac{\mathsf{T}_{1}^{(2m+1,3m)}(\tau_{\rho_1})}{1+\mathsf{T}_{0}^{(2m+1,3m)}(\tau_{\rho_1})}
+\mathcal{O}(M^{-3}n^{-1/2}). 
\end{align*}
Substitution of these estimates into the logarithmic expansion identifies the
$n$, $\log n$, and constant contributions as
$C_{E_5^{(\mathrm{dc})}}^{(3)}$,
$C_{E_5^{(\mathrm{dc})}}^{(5)}$, and
$C_{E_5^{(\mathrm{dc})}}^{(6)}$, respectively; there is no $\sqrt n$ term.
The moving-endpoint terms give
$\widetilde C_{E_5^{(\mathrm{dc})}}^{(M)}$.  The omitted terms displayed
above are bounded by combinations of
$M^3n^{-1/2}$, $M^{-2}$, $M^{-4}$, and $M^{-8}$.  For $M=n^{1/12}$ these
are all $\mathcal O(n^{-1/12})$ or smaller.  This proves part~(ii) and the
lemma.
\end{proof}

\begin{proof}[Proof of Theorem~\ref{theorem:counting statistics of disk complement case}]
It remains only to match the three artificial cutoffs.  From
Lemmas~\ref{lemma:bulk case asymptotic expansion}
and~\ref{lemma:semi hard edge case asymptotic expansion disck complement},
the cutoff-dependent pieces cancel algebraically as follows:
\begin{align*}
&E_0^{(\mathrm{dc})}+E_1^{(\mathrm{dc})}
+\widetilde C_{E_2^{(\mathrm{dc})}}^{(M)}+E_3^{(\mathrm{dc})}
+\widetilde C_{E_4^{(\mathrm{dc})}}^{(M)}
+\widetilde C_{E_5^{(\mathrm{dc})}}^{(M)}
\\
&\qquad=
n\mu_Q[\mathbb D_{a_1}]\sum_{\ell=1}^{m}s_\ell
+n\mu_Q[\mathbb D_{\rho_1}]\sum_{\ell=m+1}^{3m}s_\ell
+\frac12\sum_{\ell=1}^{3m}s_\ell
+\mathcal O(M^3n^{-1/2}).
\end{align*}
The $n$ term in this identity, together with
$C_{E_5^{(\mathrm{dc})}}^{(3)}n$, is exactly
$C_{3,\#}^{(\mathrm{dc})}n$.  The remaining $\sqrt n$, $\log n$, and
constant terms are, by their definitions,
$C_{4,\#}^{(\mathrm{dc})}\sqrt n$,
$C_{5,\#}^{(\mathrm{dc})}\log n$, and
$C_{6,\#}^{(\mathrm{dc})}$, respectively.  Finally,
$M^3n^{-1/2}=n^{-1/4}$ is absorbed by the
$\mathcal O(n^{-1/12})$ remainders in the two lemmas.  This proves the
theorem, with the uniformity stated at the beginning of the subsection.
\end{proof}

\subsection{Proof of Theorem~\ref{theorem:counting statistics of annulus case}}
We now prove Theorem~\ref{theorem:counting statistics of annulus case}.  The
argument is organized according to the location of the saddle relative to the
two hard boundaries and to the three auxiliary radii $a_1,a_2$, and $R$.
Thus the bulk, transition, hard-edge, and soft-edge contributions can be
estimated in their natural ranges and then matched at the artificial cutoffs.
We again decompose the logarithm, now into thirteen consecutive index
ranges:
\[
\log \mathcal{E}_{n,s\lambda,\alpha}^{(\mathrm{an})}=
\sum_{j=0}^{n-1}\log\Big( 
1+\sum_{\ell=1}^{7m}\omega_{\ell}F_{n, j,\ell}^{(\mathrm{an})}
\Bigr)
=
\sum_{k=0}^{12}E_k^{(\mathrm{an})},
\]
where $F_{n,j,\ell}^{(\mathrm{an})}$ is defined in
\eqref{def of norming constant + counting on annulus}, and
\begin{align*}
E_0^{(\mathrm{an})}&:=\sum_{j=0}^{D_n-1}\log\Big( 
1+\sum_{\ell=1}^{7m}\omega_{\ell}F_{n,j,\ell}^{(\mathrm{an})}
\Bigr), \qquad
E_1^{(\mathrm{an})}:=\sum_{j=D_n}^{g_{a_1,-}-1}\log\Big( 
1+\sum_{\ell=1}^{7m}\omega_{\ell}F_{n,j,\ell}^{(\mathrm{an})}
\Bigr), \\
E_2^{(\mathrm{an})}&:=\sum_{j=g_{a_1,-}}^{g_{a_1,+}}\log\Big( 
1+\sum_{\ell=1}^{7m}\omega_{\ell}F_{n,j,\ell}^{(\mathrm{an})}
\Bigr),\qquad
E_3^{(\mathrm{an})}:=\sum_{j=g_{a_1,+}+1}^{g_{1,-}-1}\log\Big( 
1+\sum_{\ell=1}^{7m}\omega_{\ell}F_{n,j,\ell}^{(\mathrm{an})}
\Bigr), \\
E_4^{(\mathrm{an})}&:=\sum_{j=g_{1,-}}^{g_{1,+}}\log\Big( 
1+\sum_{\ell=1}^{7m}\omega_{\ell}F_{n,j,\ell}^{(\mathrm{an})}
\Bigr),\qquad
E_5^{(\mathrm{an})}:=\sum_{j=g_{1,+}+1}^{j_{1,+}}\log\Big( 
1+\sum_{\ell=1}^{7m}\omega_{\ell}F_{n,j,\ell}^{(\mathrm{an})}
\Bigr), \\
E_6^{(\mathrm{an})}&:=\sum_{j=j_{1,+}+1}^{j_{2,-}-1}\log\Big( 
1+\sum_{\ell=1}^{7m}\omega_{\ell}F_{n,j,\ell}^{(\mathrm{an})}
\Bigr),\qquad
E_7^{(\mathrm{an})}:=\sum_{j=j_{2,-}}^{g_{2,-}-1}\log\Big( 
1+\sum_{\ell=1}^{7m}\omega_{\ell}F_{n,j,\ell}^{(\mathrm{an})}
\Bigr), 
\\
E_8^{(\mathrm{an})}&:=\sum_{j=g_{2,-}}^{g_{2,+}}\log\Big( 
1+\sum_{\ell=1}^{7m}\omega_{\ell}F_{n,j,\ell}^{(\mathrm{an})}
\Bigr),\qquad
E_9^{(\mathrm{an})}:=\sum_{j=g_{2,+}+1}^{g_{a_2,-}-1}\log\Big( 
1+\sum_{\ell=1}^{7m}\omega_{\ell}F_{n,j,\ell}^{(\mathrm{an})}
\Bigr), \\
E_{10}^{(\mathrm{an})}&:=\sum_{j=g_{a_2,-}}^{g_{a_2,+}}\log\Big( 
1+\sum_{\ell=1}^{7m}\omega_{\ell}F_{n,j,\ell}^{(\mathrm{an})}
\Bigr),\qquad
E_{11}^{(\mathrm{an})}:=\sum_{j=g_{a_2,+}+1}^{g_{R,-}-1}\log\Big( 
1+\sum_{\ell=1}^{7m}\omega_{\ell}F_{n,j,\ell}^{(\mathrm{an})}
\Bigr), 
\\
E_{12}^{(\mathrm{an})}&:=\sum_{j=g_{R,-}}^{n-1}\log\Big( 
1+\sum_{\ell=1}^{7m}\omega_{\ell}F_{n,j,\ell}^{(\mathrm{an})}
\Bigr).
\end{align*}
Here the one-sided soft-edge cutoff and its rounding displacement are
\[
g_{R,-}:=\Bigl\lceil\frac{n}{1+M/\sqrt n}\Bigr\rceil,
\qquad
\theta_{R,-}^{(n,M)}:=g_{R,-}-\frac{n}{1+M/\sqrt n}\in[0,1).
\]
With these endpoint conventions, the ranges are pairwise disjoint and their
union is $\{0,\ldots,n-1\}$.  In particular, every
rounding correction is attached to exactly one endpoint.  Throughout this
subsection, $M=n^{1/12}$ as fixed above.  Unless stated otherwise, all error
bounds are uniform when the deformation parameters range over compact subsets
of the admissible parameter set; the positive constant $c$ may change from one
occurrence to the next.
Most of the ranges are governed by the same local estimates as in the
disk-complement case.  We record every contribution needed in the final
summation and give details for the ranges in which the second hard boundary or
the matching point $\sigma_\star$ changes the argument.
\begin{lemma}\label{lemma:asymptotic expansion E0 annulus}
There exists $c>0$ such that, as $n\to\infty$,
\[
    E_{0}^{(\mathrm{an})}
    =
    D_n\log \Omega_{1}^{(7m)}+\mathcal{O}(e^{-cM^2}). 
\]
\end{lemma}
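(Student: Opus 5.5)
The plan is to show that for every $0\le j\le D_n-1$ and every $\ell=1,\dots,7m$ we have $F^{(\mathrm{an})}_{n,j,\ell}=1+\mathcal O(e^{-cM^2})$, uniformly in $j$ and $\ell$. Once this holds, the telescoping identity
\[
1+\sum_{\ell=1}^{7m}\omega_\ell=\Omega_1^{(7m)}=e^{s_1+\cdots+s_{7m}}
\]
gives
\[
\log\Bigl(1+\sum_{\ell}\omega_\ell F^{(\mathrm{an})}_{n,j,\ell}\Bigr)=\log\Omega_1^{(7m)}+\mathcal O(e^{-cM^2}).
\]
Here we use that $\Omega_1^{(7m)}$ is bounded away from zero for the fixed (or locally uniform) parameters, and that the $\omega_\ell$ are bounded. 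Summing over the $D_n\le 2n^{1/6}$ indices and absorbing the polynomial factor into the exponential by shrinking $c$ then yields the claim.

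The key step is the localisation of the norms for small $j$. By \eqref{def of 0 leq j leq Dn} and the first part of Lemma~\ref{lemma: 0 j j1-}, for $j\le D_n-1$ the mass of $r\mapsto 2re^{\mathsf k(r)}e^{-nV_{\tau_j}(r)}$ lives at radial scale $\sqrt{(j+1)/n}\lesssim n^{-5/12}$. This is far inside every $r_\ell$, since each $r_\ell$ is at distance $\ge a_1/2$ from the origin for large $n$; indeed all $r_\ell\ge a_1-\mathcal O(n^{-1/2})$. The numerator $h^{(\mathrm{an})}_{n,j,\ell}$ and the denominator $h^{(\mathrm{an})}_{n,j}$ differ by at most the integral over $[a_1/2,\infty)$, intersected with the allowed set. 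So it suffices to bound
\[
\int_{a_1/2}^\infty 2re^{\mathsf k(r)}e^{-nV_{\tau_j}(r)}\,dr\Big/h_{n,j}.
\]

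To bound this ratio, note that for $\tau=j/n\le 2n^{-5/6}$ the function $V_\tau$ is increasing on $[a_1/4,\infty)$, with $V_\tau(r)-V_\tau(a_1/4)\ge c_0>0$ for $r\ge a_1/2$. This follows from $rq'(r)\ge 2\tau_{a_1/4}>2\tau$ on $[a_1/4,\infty)$ (by Assumption~\ref{Assumption_Q} and convexity of $t\mapsto q(e^t)$). Also, $V_\tau(a_1/4)\ge q(0)+c_1-2\tau\log(a_1/4)$ by strict subharmonicity near $0$. Comparing with the lower bound $h_{n,j}\gtrsim \Gamma(j+\alpha+1)(n\Delta Q(0))^{-j-\alpha-1}e^{-nq(0)}$, we need to control the factors $\Gamma(j+\alpha+1)\ge e^{-Cj\log j}$ and $n^{-j}$. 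Both are $e^{\mathcal O(n^{1/6}\log n)}$, while the far-region integral carries a factor $e^{-nc_1}$. The infinite tail is handled as in Lemma~\ref{lemma: g1+leq j leq j+1}, using Assumption~\ref{Assumption 1} with a fixed $p$. The resulting relative error is $e^{-cn}\le e^{-cM^2}$.

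The only mildly delicate point is the uniformity of these estimates down to $j=0$, where no interior saddle exists and the weight has the $|z|^{2\alpha}$ singularity. I would handle this directly through the explicit Gamma-function asymptotics \eqref{def of 0 leq j leq Dn}, rather than through the Laplace method. For complex parameters in a small polydisc, the same bound applies because $|\Omega_1^{(7m)}|$ stays bounded below.
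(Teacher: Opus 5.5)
Your proposal is correct and follows the same route as the paper. Both arguments show that every cumulative ratio $F^{(\mathrm{an})}_{n,j,\ell}$ is exponentially close to $1$ for $0\le j<D_n$, use $1+\sum_{\ell}\omega_\ell=\Omega_1^{(7m)}$, and sum over the $D_n$ indices.

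Your explicit comparison of the integral over $[a_1/2,\infty)$ with the Gamma-function asymptotics for $h_{n,j}$ is worth keeping. It gives a relative error $\mathcal O(e^{-cn})$, which justifies the claimed $\mathcal O(e^{-cM^2})$ bound more directly than the paper does: the small-$j$ estimates the paper cites are stated only with error $\mathcal O(e^{-c(\log n)^2})$.
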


\begin{proof}
In this range the tail beyond the innermost counting radius is
exponentially small, uniformly in $j$.  Thus every cumulative ratio
$F_{n,j,\ell}^{(\mathrm{an})}$ is exponentially close to one.  The bulk
estimate used in Lemma~\ref{lemma:bulk case asymptotic expansion} gives,
term by term,
\[
 \log\Bigl(1+\sum_{\ell=1}^{7m}\omega_\ell
 F_{n,j,\ell}^{(\mathrm{an})}\Bigr)
 =\log\Omega_1^{(7m)}+\mathcal O(e^{-cM^2}).
\]
Summing over $0\le j<D_n$ preserves the asserted error, since the polynomial
factor $D_n$ is absorbed by decreasing $c$ if necessary.
\end{proof}

\begin{lemma}\label{lemma:asymptotic expansion E1 annulus}
There exists $c>0$ such that, as $n\to\infty$,
\[
E_{1}^{(\mathrm{an})}
=(g_{a_1,-}-D_n)\log \Omega_{1}^{(7m)}+\mathcal{O}(e^{-cM^2}).
\]
\end{lemma}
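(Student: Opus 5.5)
The plan is to copy the argument for $E_1^{(\mathrm{dc})}$ in Lemma~\ref{lemma:bulk case asymptotic expansion}, with the annular denominator $h_{n,j}^{(\mathrm{an})}$ in place of $h_{n,j}^{(\mathrm{dc})}$. Fix $D_n\leq j\leq g_{a_1,-}-1$. The definition of $g_{a_1,-}$, together with the strict monotonicity of $r\mapsto rq'(r)$ (with derivative $4r\Delta Q(r)>0$), gives a uniform lower bound on the distance from the saddle to the innermost radius:
\[
\tau_{r_1}-\tau_j\geq c'\delta_n .
\]
Equivalently, $r_1-r_\tau\gtrsim M/\sqrt n$, since $r_1=a_1+\mathcal O(n^{-1/2})$. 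Hence the saddle lies strictly inside $(0,r_1)$, and therefore inside $(0,r_\ell)$ for every $\ell$. It is also at distance bounded below from $\rho_1$, since $a_1<\rho_1$ is fixed.

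The first step treats the denominator. By Lemma~\ref{lemma: 0 j j1-}, uniformly in this range,
\[
h_{n,j}^{(\mathrm{an})}=h_{n,j}\bigl(1+\mathcal O(e^{-c(\log n)^2})\bigr).
\]
The second step treats the numerators. For $\ell\leq 3m$, write
\[
h_{n,j,\ell}^{(\mathrm{an})}=h_{n,j}^{(\mathrm{an})}-\int_{r_\ell}^{\rho_1}2re^{\mathsf k}e^{-nV_\tau}\,dr-\int_{\rho_2}^{\infty}2re^{\mathsf k}e^{-nV_\tau}\,dr .
\]
For $3m<\ell\leq 7m$, write
\[
h_{n,j,\ell}^{(\mathrm{an})}=h_{n,j}^{(\mathrm{an})}-\int_{r_\ell}^{\infty}2re^{\mathsf k}e^{-nV_\tau}\,dr .
\]
In both cases the subtracted integrals lie on intervals beyond $r_\tau+cM/\sqrt n$. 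The quadratic lower bound $V_\tau(r)-V_\tau(r_\tau)\geq c(r-r_\tau)^2$ near the saddle, monotonicity of $V_\tau$ away from it, and the $p$-trick at infinity based on Assumption~\ref{Assumption 1} (exactly as in Lemmas~\ref{lemma: 0 j j1-} and \ref{lemma: j1- j g1-}) bound these integrals by
\[
e^{-nV_\tau(r_\tau)}\,\mathcal O(e^{-cM^2}).
\]
Since the Laplace prefactor of $h_{n,j}$ is only polynomially small relative to $e^{-nV_\tau(r_\tau)}$ (by \eqref{def of Dn leq j leq n-1}), we get
\[
F_{n,j,\ell}^{(\mathrm{an})}=1+\mathcal O(e^{-cM^2})
\]
uniformly in $j$ and $\ell$.

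The third step substitutes this into the summand. Using $\sum_{\ell=1}^{7m}\omega_\ell=\Omega_1^{(7m)}-1$, each summand equals $\log\Omega_1^{(7m)}+\mathcal O(e^{-cM^2})$, because $\Omega_1^{(7m)}$ is bounded away from zero for the fixed parameters. Summing over at most $n$ indices gives the stated formula; the factor $n$ is absorbed by shrinking $c$.

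The main obstacle is the uniformity at the upper end of the range, $j$ close to $g_{a_1,-}$, where the gap $r_1-r_\tau$ is only of order $M/\sqrt n$ and the merging parameters $t_\ell$ shift $r_1$ by $\mathcal O(n^{-1/2})$. I would handle this by choosing the constant $c'$ explicitly, using $t_1$ fixed and $M\to\infty$, so that $\sqrt n\,(r_1-r_\tau)\geq M/C$ for large $n$. This yields the Gaussian factor $e^{-cM^2}$ rather than the $e^{-c(\log n)^2}$ available deeper in the range.
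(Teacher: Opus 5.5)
Your proposal is correct and follows essentially the same route as the paper. The paper reuses the disk-complement bulk estimate: the saddle $r_\tau$ lies at distance $\gtrsim M/\sqrt n$ below $r_1$, so every $F_{n,j,\ell}^{(\mathrm{an})}$ equals $1+\mathcal O(e^{-cM^2})$ uniformly, and summing $\log\Omega_1^{(7m)}$ over the $g_{a_1,-}-D_n$ indices gives the claim.

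One harmless slip is in your last paragraph. With $M^2=n^{1/6}$, $e^{-cM^2}$ is a \emph{stronger} bound than $e^{-c(\log n)^2}$, and deeper in the range the gap to $r_1$ is macroscopic, so the error there is even smaller. The $e^{-c(\log n)^2}$ from Lemma~\ref{lemma: 0 j j1-} enters your argument only as a factor $1+o(1)$ in the denominator and does not limit the final estimate.
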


\begin{proof}
The same uniform bulk estimate as in the proof of
Lemma~\ref{lemma:asymptotic expansion E0 annulus} applies for
$D_n\le j<g_{a_1,-}$.  There are $g_{a_1,-}-D_n$ indices in this range, and
summing the pointwise estimate gives the stated formula.
\end{proof}

\begin{lemma}\label{lemma:asymptotic expansion E2 annulus}
As $n\to\infty$,
\[
E_{2}^{(\mathrm{an})}
 =
 C_{E_{2}^{(\mathrm{an})}}^{(1)}n 
 + C_{E_{2}^{(\mathrm{an})}}^{(2)} \sqrt{n} 
 + C_{E_{2}^{(\mathrm{an})}}^{(3)}\log n
 + C_{E_{2}^{(\mathrm{an})}}^{(4)} 
 + \widetilde{C}_{E_{2}^{(\mathrm{an})}}^{(M)} 
 + \mathcal{O}(n^{-\frac{1}{12}}), 
\]
where   
\begin{align*}
C_{E_{2}^{(\mathrm{an})}}^{(1)}
&:=0, 
\quad
C_{E_{2}^{(\mathrm{an})}}^{(2)}
:=C_4^{\#(\mathrm{b,in})},
\quad
C_{E_{2}^{(\mathrm{an})}}^{(3)}
:=0,
\quad
C_{E_{2}^{(\mathrm{an})}}^{(4)}
:=
\mathcal{D}_6^{(\mathrm{b},\mathrm{in})}
-
\Bigl( 
\frac{1}{2}a_1\mathsf{k}'(a_1)+1
\Bigr)
\log\frac{\Omega_{1}^{(7m)}}{\Omega_{m+1}^{(7m)}}, 
\\
\widetilde{C}_{E_{2}^{(\mathrm{an})}}^{(M)} 
&:=
\sqrt{n}M\tau_{a_1}\log\frac{\Omega_{1}^{(7m)}}{\Omega_{m+1}^{(7m)}}
-
\tau_{a_1}M^2
\log\frac{\Omega_{1}^{(7m)}}{\Omega_{m+1}^{(7m)}}
\\
&\quad
+ \bigg( \frac{1}{2}-\theta_{a_1,-}^{(n,M)} \bigg)\log\frac{\Omega_{1}^{(7m)}}{\Omega_{m+1}^{(7m)}}
+
(2\tau_{a_1}M\sqrt{n}-\theta_{a_1,+}^{(n,M)}-\theta_{a_1,-}^{(n,M)}+1)\log \Omega_{m+1}^{(7m)}.
\end{align*}  

\end{lemma}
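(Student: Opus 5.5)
The plan is to reduce Lemma~\ref{lemma:asymptotic expansion E2 annulus} to the bulk-transition computation already carried out for $E_2^{(\mathrm{dc})}$ in Lemma~\ref{lemma:bulk case asymptotic expansion}. The only differences are the normalizing denominator $h_{n,j}^{(\mathrm{an})}$ and the larger set of counting radii.

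First I control the denominator. For $g_{a_1,-}\le j\le g_{a_1,+}$ the saddle $r_\tau$ sits within $\mathcal O(M/\sqrt n)$ of $a_1<\rho_1$, so the argument behind Lemma~\ref{lemma: 0 j j1-} (its second range $D_n\le j\le g_{1,-}-1$) gives
\[
h_{n,j}^{(\mathrm{an})}=h_{n,j}\bigl(1+\mathcal O(e^{-c(\log n)^2})\bigr)
\]
uniformly. The denominators of the disk-complement and annular ratios therefore agree with $h_{n,j}$ up to exponentially small relative errors.

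Next I classify the radii.
\begin{itemize}
\item For $\ell=m+1,\dots,3m$ the radii lie near $\rho_1$, at distance $\gtrsim 1$ from $a_1$, so the off-saddle estimate gives $F_{n,j,\ell}^{(\mathrm{an})}=1+\mathcal O(e^{-cM^2})$.
\item For $\ell=3m+1,\dots,7m$ the numerator is $h_{n,j}^{(\mathrm{an})}$ minus the integral over $[r_\ell,\infty)$. Since $r_\ell\ge\rho_2>a_1+c$, that integral is exponentially small relative to $h_{n,j}$, so again $F_{n,j,\ell}^{(\mathrm{an})}=1+\mathcal O(e^{-cM^2})$.
\end{itemize}
Using $\sum_{\ell\ge m+1}\omega_\ell^{(7m)}+\omega_{7m+1}=\Omega_{m+1}^{(7m)}$, the summand reduces to
\[
\log\Bigl(\sum_{\ell=1}^m\omega_\ell F_{n,j,\ell}^{(\mathrm{an})}+\Omega_{m+1}^{(7m)}+\mathcal O(e^{-cM^2})\Bigr),
\]
and for $\ell\le m$ we have $F_{n,j,\ell}^{(\mathrm{an})}=F_{n,j,\ell}^{(\mathrm{dc})}(1+\mathcal O(e^{-cM^2}))$. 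This is exactly \eqref{def of pre sum ga1- ga1+} with $\Omega_{m+1}^{(3m)}$ replaced by $\Omega_{m+1}^{(7m)}$.

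I then repeat the rest of the $E_2^{(\mathrm{dc})}$ proof verbatim:
\begin{itemize}
\item Apply the local Laplace expansion with $\xi_{j,\ell}(a_1)$.
\item Form the functions $\mathscr B_1,\mathscr B_2$, now normalized by $\Omega_{m+1}^{(7m)}$.
\item Apply Lemma~\ref{lemma:Riemann sum} with mesh parameter $\tau_{a_1}$.
\item Take the tail limits, use the identities relating $\omega_\ell/\Omega_{m+1}^{(7m)}$ to $(e^{s_\ell}-1)e^{s_{\ell+1}+\dots+s_m}$, and integrate by parts.
\end{itemize}
The ratios $\omega_\ell/\Omega_{m+1}^{(7m)}$ for $\ell\le m$ depend only on $s_1,\dots,s_m$. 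Hence the $\sqrt n$ coefficient $C_4^{\#(\mathrm{b,in})}$ and $\mathcal D_6^{(\mathrm{b,in})}$ are unchanged, while $\log(\Omega_1^{(3m)}/\Omega_{m+1}^{(3m)})=\log(\Omega_1^{(7m)}/\Omega_{m+1}^{(7m)})=\sum_{\ell\le m}s_\ell$.

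The last term of $\widetilde C^{(M)}$ comes from counting indices:
\[
g_{a_1,+}-g_{a_1,-}+1=\frac{n\tau_{a_1}}{1-M/\sqrt n}-\frac{n\tau_{a_1}}{1+M/\sqrt n}-\theta_{a_1,+}^{(n,M)}-\theta_{a_1,-}^{(n,M)}+1=2\tau_{a_1}M\sqrt n-\theta_{a_1,+}^{(n,M)}-\theta_{a_1,-}^{(n,M)}+1+\mathcal O(M^3n^{-1/2}).
\]
The $\mathcal O(M^3n^{-1/2})$ here is absorbed by the stated error.

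The main obstacle is justifying uniformity of the exponential smallness for the outer radii, in particular the soft-edge radii $r_\ell>R$ and the split numerators for $\ell\ge 3m+1$, where the numerator also contains the $[0,\rho_1]$ piece. I would handle this by the decomposition $h_{n,j,\ell}^{(\mathrm{an})}=h_{n,j}^{(\mathrm{an})}-\int_{r_\ell}^\infty$. The remaining integral is bounded using the monotonicity of $V_\tau$ on $[\rho_2,\infty)$ together with the growth condition in Assumption~\ref{Assumption_Q}\ref{Assumption 1}, exactly as in the tail bound of Lemma~\ref{lemma:j2- j g2--1}.
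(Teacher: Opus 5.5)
Your proposal is correct and follows the paper's route. Like the paper, you reduce to the disk-complement bulk-transition computation of Lemma~\ref{lemma:bulk case asymptotic expansion}: the groups $\ell\ge m+1$ saturate to $\Omega_{m+1}^{(7m)}$, and the ratios $\omega_\ell/\Omega_{m+1}^{(7m)}$ for $\ell\le m$ are unchanged, so $C_4^{\#(\mathrm{b,in})}$ and $\mathcal D_6^{(\mathrm{b,in})}$ carry over. You also spell out three points the paper leaves implicit: the comparison $h_{n,j}^{(\mathrm{an})}=h_{n,j}(1+\mathcal O(e^{-c(\log n)^2}))$, the tail bound for the outer radii, and the expansion of $g_{a_1,+}-g_{a_1,-}+1$ that produces the last term of $\widetilde C_{E_2^{(\mathrm{an})}}^{(M)}$.
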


\begin{proof}
On $g_{a_1,-}\le j\le g_{a_1,+}$, use the transition coordinate
$M_j(a_1)=\sqrt n(\tau_{a_1}/\tau_j-1)$ from
\eqref{def of Mja}.  The bulk-transition expansion proved in
Lemma~\ref{lemma:bulk case asymptotic expansion} applies term by term: the
second hard boundary is exponentially far from the saddle, so its only
non-negligible effect is to replace the disk-complement constants
$\Omega_q^{(3m)}$ by $\Omega_q^{(7m)}$.  Applying the two-sided
Euler--Maclaurin formula to the resulting functions of $M_j(a_1)$ produces
$C_4^{\#(\mathrm{b,in})}$ and $\mathcal D_6^{(\mathrm{b,in})}$.
Expanding the two integer endpoints $g_{a_1,\pm}$ gives exactly
$\widetilde C_{E_2^{(\mathrm{an})}}^{(M)}$.  The local remainder is uniform
throughout the window and, after summation with $M=n^{1/12}$, is
$\mathcal O(n^{-1/12})$.
\end{proof}

\begin{lemma}\label{lemma:asymptotic expansion E3 annulus}
There exists $c>0$ such that, as $n\to\infty$,
\[
E_{3}^{(\mathrm{an})}=(g_{1,-}-g_{a_1,+}-1)\log\Omega_{m+1}^{(7m)}+\mathcal{O}(e^{-cM^2}). 
\]
\end{lemma}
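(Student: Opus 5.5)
The plan is to show that for every index $g_{a_1,+}+1\le j\le g_{1,-}-1$ and every counting radius the cumulative ratio $F_{n,j,\ell}^{(\mathrm{an})}$ is exponentially close to either $0$ or $1$. The summand then equals $\log\Omega_{m+1}^{(7m)}$ up to a uniform error, and summing finishes the proof. This is the annular analogue of the $E_3^{(\mathrm{dc})}$ estimate in Lemma~\ref{lemma:bulk case asymptotic expansion}.

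\emph{Location of the saddle.} Since $j\ge g_{a_1,+}+1$ and $j\le g_{1,-}-1$, we have $\tau_{a_1}(1-M/\sqrt n)^{-1}<\tau<\tau_{\rho_1}(1+M/\sqrt n)^{-1}$. Strict monotonicity of $r\mapsto rq'(r)$ with derivative $4r\Delta Q(r)>0$ then gives $r_\tau-a_1\ge cM/\sqrt n$ and $\rho_1-r_\tau\ge cM/\sqrt n$, uniformly in $j$. In particular $r_\tau$ lies in $(0,\rho_1)$, at distance $\gtrsim M/\sqrt n$ from both $a_1$ and the wall. By Lemmas~\ref{lemma: 0 j j1-} and \ref{lemma: j1- j g1-}, the denominator $h_{n,j}^{(\mathrm{an})}$ equals $h_{n,j}(1+\mathcal O(e^{-cM^2}))$. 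The outer component $\int_{\rho_2}^\infty$ is negligible relative to $e^{-nV_\tau(r_\tau)}$, by the same tail estimate as there.

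\emph{Numerators.} For the inner bulk radii $\ell=1,\dots,m$, we have $r_\ell=a_1+\mathcal O(n^{-1/2})<r_\tau-cM/\sqrt n$. The quadratic lower bound $V_\tau(r)-V_\tau(r_\tau)\ge c(r-r_\tau)^2$ near the saddle, together with monotonicity away from it, gives $h_{n,j,\ell}^{(\mathrm{an})}=e^{-nV_\tau(r_\tau)}\mathcal O(e^{-cM^2})$, so $F_{n,j,\ell}^{(\mathrm{an})}=\mathcal O(e^{-cM^2})$. For $\ell=m+1,\dots,3m$ the semi-hard and hard radii satisfy $r_\ell=\rho_1-\mathcal O(n^{-1/2})>r_\tau+cM/\sqrt n$. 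The omitted piece $\int_{r_\ell}^{\rho_1}$ is likewise $\mathcal O(e^{-cM^2})$ relative to the prefactor, so $F_{n,j,\ell}^{(\mathrm{an})}=1+\mathcal O(e^{-cM^2})$. For $\ell=3m+1,\dots,7m$ the numerator contains all of $\int_0^{\rho_1}$ plus part of the negligible outer component. Hence again $F_{n,j,\ell}^{(\mathrm{an})}=1+\mathcal O(e^{-cM^2})$.

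\emph{Conclusion.} Inserting these estimates gives
\[
1+\sum_{\ell=1}^{7m}\omega_\ell F_{n,j,\ell}^{(\mathrm{an})}=1+\sum_{\ell=m+1}^{7m}\omega_\ell+\mathcal O(e^{-cM^2})=\Omega_{m+1}^{(7m)}+\mathcal O(e^{-cM^2}),
\]
using the telescoping identity $\sum_{\ell=m+1}^{7m+1}\omega_\ell=\Omega_{m+1}^{(7m)}$. Since $\Omega_{m+1}^{(7m)}$ is bounded away from zero for the fixed (or locally uniform) parameters, taking logarithms gives a pointwise error $\mathcal O(e^{-cM^2})$. Summing over at most $n$ indices only requires decreasing $c$, which yields the stated formula.

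\emph{Main obstacle.} The only delicate point is the uniformity of the Gaussian tail bound for the windows at distance exactly of order $M/\sqrt n$, near both ends of the range. I would handle it exactly as in the proofs of Lemma~\ref{lemma: j1- j g1-} and Lemma~\ref{lemma:bulk case asymptotic expansion}. There, $V_\tau''(r_\tau)=4\Delta Q(r_\tau)$ is bounded below uniformly because $r_\tau$ stays in a compact subset of $(0,\rho_1]$ where $\Delta Q>0$, and the Taylor expansion converts the distance $cM/\sqrt n$ into the exponent $e^{-cM^2}$.
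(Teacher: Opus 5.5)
Your proposal is correct and follows the paper's argument; the paper just compresses it to one line. The saddle stays at distance $\gtrsim M/\sqrt n$ from both adjacent windows, so each $F_{n,j,\ell}^{(\mathrm{an})}$ is within $\mathcal O(e^{-cM^2})$ of $0$ (for $\ell\le m$) or $1$ (for $\ell>m$), and telescoping plus summation finish the proof. One small point: for $j<j_{1,-}$, Lemma~\ref{lemma: 0 j j1-} only states the denominator comparison with error $e^{-c(\log n)^2}$, but your numerator bounds need only the lower bound $h_{n,j}^{(\mathrm{an})}\gtrsim n^{-1/2}e^{-nV_\tau(r_\tau)}$, which that lemma supplies, so the final $\mathcal O(e^{-cM^2})$ error stands.
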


\begin{lemma}\label{lemma:asymptotic expansion E4 annulus}
As $n\to\infty$,
\[
 E_4^{(\mathrm{an})}
 =
 C_{E_4^{(\mathrm{an})}}^{(3)}n 
 +C_{E_4^{(\mathrm{an})}}^{(4)}\sqrt{n}
 +C_{E_4^{(\mathrm{an})}}^{(5)}\log n
 +C_{E_4^{(\mathrm{an})}}^{(6)}
 +\widetilde{C}_{E_4^{(\mathrm{an})}}^{(M)}
 +\mathcal{O}(n^{-\frac{1}{12}}),
\]   
where 
\begin{align*}
C_{E_4^{(\mathrm{an})}}^{(3)}&:=0,
\quad
C_{E_4^{(\mathrm{an})}}^{(4)}:=C_4^{\#(\mathrm{se,in})},
\quad
C_{E_4^{(\mathrm{an})}}^{(5)}:=0, 
\\
C_{E_4^{(\mathrm{an})}}^{(6)}&:=
\mathcal{D}_6^{(\mathrm{se},\mathrm{in})}
-\Bigl(\frac{1}{2}\rho_1\mathsf{k}'(\rho_1)+1\Bigr)\log\Bigl(\frac{\Omega_{m+1}^{(7m)}}{\Omega_{2m+1}^{(7m)}}\Bigr) 
\\
&\quad-
\frac{2\rho_1^2\Delta Q(\rho_1)\mathsf{T}_{1}^{(2m+1,3m)}(\tau_{\rho_1})\log \tau_{\rho_1}}{1+\mathsf{T}_{0}^{(2m+1,3m)}(\tau_{\rho_1})+\widehat{\mathsf{T}}_{0}^{(3m+1,7m)}(\tau_{\rho_2})},
\\
\widetilde{C}_{E_4^{(\mathrm{an})}}^{(M)}
&:=
\tau_{\rho_1}M\sqrt{n}
\log\frac{\Omega_{m+1}^{(7m)}}{\Omega_{2m+1}^{(7m)}}
+\Bigl(\frac{1}{2}-\theta_{1,-}^{(n,M)}\Bigr)\log\frac{\Omega_{m+1}^{(7m)}}{\Omega_{2m+1}^{(7m)}}
-\tau_{\rho_1}M^2\log\Bigl(\frac{\Omega_{m+1}^{(7m)}}{\Omega_{2m+1}^{(7m)}}\Bigr)
\\
&
+(2\tau_{\rho_1}M\sqrt{n}-\theta_{1,+}^{(n,M)}-\theta_{1,-}^{(n,M)}+1)\log\Omega_{2m+1}^{(7m)}
\\
&\quad-
\frac{\big(\tau_{\rho_1}^2M^2
+2\rho_1^2\Delta Q(\rho_1)\log M
\big)\mathsf{T}_{1}^{(2m+1,3m)}(\tau_{\rho_1})}{1+\mathsf{T}_{0}^{(2m+1,3m)}(\tau_{\rho_1})+\widehat{\mathsf{T}}_{0}^{(3m+1,7m)}(\tau_{\rho_2})}.
\end{align*}
\end{lemma}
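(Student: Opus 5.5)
The approach is to repeat the proof of part~(i) of Lemma~\ref{lemma:semi hard edge case asymptotic expansion disck complement}, with two changes. The annular denominator $h_{n,j}^{(\mathrm{an})}$ replaces $h_{n,j}^{(\mathrm{dc})}$, and the outer counting radii $\ell=3m+1,\ldots,7m$ contribute an additional constant. On $g_{1,-}\le j\le g_{1,+}$, Lemma~\ref{lemma:g1- j g1+} gives $h_{n,j}^{(\mathrm{an})}=h_\#(\xi_{j,1})(1+\mathcal O(e^{-cM^2}))$, so the denominator is the same as in the disk-complement case.

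\textbf{Cumulative ratios.} For $\ell\ge 3m+1$ the numerator $h_{n,j,\ell}^{(\mathrm{an})}$ includes the whole inner component $\int_0^{\rho_1}$, while the outer piece $\int_{\rho_2}^{r_\ell}$ is $e^{-cn}$-small relative to it. This follows from the estimate \eqref{def of inequality Vtau r Vtau rho1ast} used in Lemma~\ref{lemma: g1+leq j leq j+1}. Hence $F_{n,j,\ell}^{(\mathrm{an})}=1+\mathcal O(e^{-cM^2})$ for these $\ell$, and their total weight is $\sum_{\ell=3m+1}^{7m}\omega_\ell=\widehat{\mathsf T}_0^{(3m+1,7m)}(\tau_{\rho_2})$. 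The bulk radii $\ell\le m$ are $\mathcal O(e^{-cM^2})$ by the off-saddle bound. Therefore the summand equals
\[
\log\Bigl(\Omega_{2m+1}^{(7m)}+\sum_{\ell=m+1}^{2m}\omega_\ell F_{n,j,\ell}^{(\mathrm{an})}+\sum_{\ell=2m+1}^{3m}\omega_\ell(F_{n,j,\ell}^{(\mathrm{an})}-1)\Bigr)+\mathcal O(e^{-cM^2}),
\]
where $\Omega_{2m+1}^{(7m)}=1+\mathsf T_0^{(2m+1,3m)}(\tau_{\rho_1})+\widehat{\mathsf T}_0^{(3m+1,7m)}(\tau_{\rho_2})$ follows from the telescoping of $\omega_\ell$.

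\textbf{Local expansion and summation.} Next I expand the semi-hard ratios with the truncated Laplace expansion at $r_\ell=\rho_1-t_\ell/\sqrt{2n\Delta Q(\rho_1)}$, and the hard ratios with the $\mathcal O(n^{-1})$ endpoint integral. This reproduces the functions $\mathscr U_1$, $\mathscr U_2^{(\mathrm{se,in})}$ and $\mathscr U_2^{(\mathrm{h,in})}$, now normalized by $\Omega_{2m+1}^{(7m)}$. I then apply Lemma~\ref{lemma:Riemann sum} in the variable $M_j(\rho_1)$. The $\sqrt n$ coefficient is $C_4^{\#(\mathrm{se,in})}$. The constant collects the Jacobian term $2x\log\mathscr U_1$, the integration by parts through $\partial_x\log\mathcal H_1^{(\mathrm{se,in})}$, and the hard-edge term. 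Together these give $\mathcal D_6^{(\mathrm{se,in})}$, whose hard coefficient already has the annular denominator $1+\mathsf T_0+\widehat{\mathsf T}_0$. The cutoff terms $\tau_{\rho_1}M\sqrt n\log(\Omega_{m+1}/\Omega_{2m+1})$, the $\theta^{(n,M)}_{1,\pm}$ endpoint terms, and $\tau_{\rho_1}M^2\log(\cdot)$ arise exactly as in the disk-complement computation.

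\textbf{Main obstacle.} The delicate step is the hard-edge contribution $\int \mathscr U_2^{(\mathrm{h,in})}/\mathscr U_1$. Its integrand grows like $x+\frac{1}{2x}$ as $x\to+\infty$, so it must be regularized. This produces $-\tau_{\rho_1}^2M^2-2\rho_1^2\Delta Q(\rho_1)\log(M\tau_{\rho_1})+\rho_1^2\Delta Q(\rho_1)\log(2\rho_1^2\Delta Q(\rho_1))$ times $\mathsf T_1/(\ldots)$. In the statement the $\log\tau_{\rho_1}$ part has been moved from $\widetilde C^{(M)}$ into $C^{(6)}$, which accounts for the extra $-2\rho_1^2\Delta Q\,\mathsf T_1\log\tau_{\rho_1}/(\ldots)$ term. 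I will check this split carefully, together with the sign convention relative to $\mathcal D_6^{(\mathrm{se,in})}$. For the error terms, the accumulated local remainders are $\mathcal O(M^5n^{-1/2})$ and the tail truncation costs $\mathcal O(M^{-2})$; both are $\mathcal O(n^{-1/12})$.
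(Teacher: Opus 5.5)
Your proposal is correct and follows essentially the same route as the paper. The paper also reduces to part~(i) of Lemma~\ref{lemma:semi hard edge case asymptotic expansion disck complement}: on this window the outer groups $\ell\ge 3m+1$ are saturated, so they only adjoin $\widehat{\mathsf T}_0^{(3m+1,7m)}(\tau_{\rho_2})$ to the denominators, and Euler--Maclaurin in $M_j(\rho_1)$ with the $\theta_{1,\pm}^{(n,M)}$ endpoint corrections then gives the stated terms.

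One small point. The gap estimate you cite was stated only for $g_{1,+}<j\le j_{1,+}$. The same argument works on $g_{1,-}\le j\le g_{1,+}$, because there $\tau_j<\sigma_\star$, so $V_{\tau_j}(\rho_2)-V_{\tau_j}(\rho_1)=2(\sigma_\star-\tau_j)\log\frac{\rho_2}{\rho_1}$ is bounded below by a positive constant.
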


\begin{proof}[Proofs of Lemmas~\ref{lemma:asymptotic expansion E3 annulus}
and~\ref{lemma:asymptotic expansion E4 annulus}]
For $g_{a_1,+}<j<g_{1,-}$ the saddle remains a fixed mesoscopic distance from
both adjacent transition windows.  The unwanted terms are therefore
$\mathcal O(e^{-cM^2})$ uniformly in $j$, and summation gives the formula for
$E_3^{(\mathrm{an})}$.

On $g_{1,-}\le j\le g_{1,+}$, set
$x=\tau_{\rho_1}M_j(\rho_1)/(\rho_1\sqrt{2\Delta Q(\rho_1)})$ and apply the critical
hard-wall expansion in part~(i) of
Lemma~\ref{lemma:semi hard edge case asymptotic expansion disck complement}.
The outer groups are already saturated on this scale; hence the annular
formula is obtained by adjoining the constant
$\widehat{\mathsf T}_0^{(3m+1,7m)}(\tau_{\rho_2})$ to the denominators in
that expansion.  Euler--Maclaurin in the variable $x$ yields the displayed
$n^{1/2}$ and constant terms.  Its lower- and upper-endpoint corrections are
the terms containing $\theta_{1,-}^{(n,M)}$ and
$\theta_{1,+}^{(n,M)}$, respectively.  The local remainder is uniform
throughout the window; with $M=n^{1/12}$ its sum is
$\mathcal O(n^{-1/12})$.  This proves both assertions.
\end{proof}

\begin{lemma}\label{lemma:asymptotic expansion E5 annulus}
As $n\to\infty$,
\[
    E_{5}^{(\mathrm{an})}
    =
    C_{ E_5^{(\mathrm{an})}}^{(3)} n
    +
    C_{ E_5^{(\mathrm{an})}}^{(4)} \sqrt{n}
    +
    C_{ E_5^{(\mathrm{an})}}^{(5)} \log n
    +
    C_{ E_5^{(\mathrm{an})}}^{(6)}
    +
    \widetilde{C}_{ E_5^{(\mathrm{an})}}^{(M)} 
    +
    \mathcal{O}(M^3n^{-1/2}), 
\]  
where 
\begin{align*}
C_{ E_5^{(\mathrm{an})}}^{(3)} 
&:=
\int_{\tau_{\rho_1}}^{\frac{\tau_{\rho_1}}{1-\epsilon}}
\log\bigl(1+\mathsf{T}_{0}^{(2m+1,3m)}(x)+\widehat{\mathsf{T}}_0^{(3m+1,7m)}(\tau_{\rho_2})\bigr)\,dx,
\quad
C_{ E_5^{(\mathrm{an})}}^{(4)} 
:=0,
\\
C_{ E_5^{(\mathrm{an})}}^{(5)} 
&:=
-
\rho_1^2\Delta Q(\rho_1)
\frac{\mathsf{T}_{1}^{(2m+1,3m)}(\tau_{\rho_1})}{1+\mathsf{T}_{0}^{(2m+1,3m)}(\tau_{\rho_1})+\widehat{\mathsf{T}}_{0}^{(3m+1,7m)}(\tau_{\rho_2})},
\\
C_{ E_5^{(\mathrm{an})}}^{(6)} 
&:=
\int_{\tau_{\rho_1}}^{\frac{\tau_{\rho_1}}{1-\epsilon}}
\bigg[-\frac{\frac{2\rho_1^2\Delta Q(\rho_1)}{x-\tau_{\rho_1}}\mathsf{T}_{1}^{(2m+1,3m)}(x)+x\mathsf{T}_{2}^{(2m+1,3m)}(x)}{1+\mathsf{T}_{0}^{(2m+1,3m)}(x)+\widehat{\mathsf{T}}_{0}^{(3m+1,7m)}(\tau_{\rho_2})}
\\
&\quad 
+\frac{2\rho_1^2\Delta Q(\rho_1)}{x-\tau_{\rho_1}}\frac{\mathsf{T}_{1}^{(2m+1,3m)}(\tau_{\rho_1})}{1+\mathsf{T}_{0}^{(2m+1,3m)}(\tau_{\rho_1})+\widehat{\mathsf{T}}_{0}^{(3m+1,7m)}(\tau_{\rho_2})}\bigg]\,dx
\\
&\quad
-\bigl(2\rho_1^2\Delta Q(\rho_1)-\tau_{\rho_1}\bigr)
\int_{\tau_{\rho_1}}^{\frac{\tau_{\rho_1}}{1-\epsilon}}
\frac{\mathsf{T}_{2}^{(2m+1,3m)}(x)}{1+\mathsf{T}_{0}^{(2m+1,3m)}(x)+\widehat{\mathsf{T}}_{0}^{(3m+1,7m)}(\tau_{\rho_2})}\,dx
\\
&\quad
+\Bigl(\frac{1}{2}\rho_1\mathsf{k}'(\rho_1)+1\Bigr)
\log(1+\mathsf{T}_0^{(2m+1,3m)}(\tfrac{\tau_{\rho_1}}{1-\epsilon})+\widehat{\mathsf{T}}_{0}^{(3m+1,7m)}(\tau_{\rho_2}))
\\
&\quad
-\Bigl(\frac{1}{2}\rho_1\mathsf{k}'(\rho_1)+1\Bigr)
\log(1+\mathsf{T}_0^{(2m+1,3m)}(\tau_{\rho_1})+\widehat{\mathsf{T}}_{0}^{(3m+1,7m)}(\tau_{\rho_2}))
\\
&\quad
+\Bigl(\frac{1}{2}-\theta_{1,+}^{(n,\epsilon)}\Bigr)\log\bigl(1+\mathsf{T}_{0}^{(2m+1,3m)}(\tfrac{\tau_{\rho_1}}{1-\epsilon})+\widehat{\mathsf{T}}_0^{(3m+1,7m)}(\tau_{\rho_2})\bigr)
\\
&\quad 
-
2\rho_1^2\Delta Q(\rho_1)
\frac{\mathsf{T}_{1}^{(2m+1,3m)}(\tau_{\rho_1})}{1+\mathsf{T}_{0}^{(2m+1,3m)}(\tau_{\rho_1})+\widehat{\mathsf{T}}_{0}^{(3m+1,7m)}(\tau_{\rho_2})}
\log(\tfrac{\epsilon}{1-\epsilon}),
\\
\widetilde{C}_{ E_5^{(\mathrm{an})}}^{(M)} 
&:=
2\rho_1^2\Delta Q(\rho_1)
\frac{\mathsf{T}_{1}^{(2m+1,3m)}(\tau_{\rho_1})}{1+\mathsf{T}_{0}^{(2m+1,3m)}(\tau_{\rho_1})+\widehat{\mathsf{T}}_{0}^{(3m+1,7m)}(\tau_{\rho_2})}
\log M
\\
&\quad
-\tau_{\rho_1}M\sqrt{n}\log\bigl(1+\mathsf{T}_{0}^{(2m+1,3m)}(\tau_{\rho_1})+\widehat{\mathsf{T}}_0^{(3m+1,7m)}(\tau_{\rho_2})\bigr)
\\
&\quad
-M^2\tau_{\rho_1}
\left(\begin{aligned}&
\log\bigl(1+\mathsf{T}_{0}^{(2m+1,3m)}(\tau_{\rho_1})+\widehat{\mathsf{T}}_0^{(3m+1,7m)}(\tau_{\rho_2})\bigr)
\\
&\quad-\frac{\tau_{\rho_1}\mathsf{T}_{1}^{(2m+1,3m)}(\tau_{\rho_1})}{1+\mathsf{T}_{0}^{(2m+1,3m)}(\tau_{\rho_1})+\widehat{\mathsf{T}}_0^{(3m+1,7m)}(\tau_{\rho_2})}
\end{aligned}\right)
\\
&\quad
+
\Bigl(\theta_{1,+}^{(n,M)}-\frac{1}{2}\Bigr)
\log\bigl(1+\mathsf{T}_{0}^{(2m+1,3m)}(\tau_{\rho_1})+\widehat{\mathsf{T}}_0^{(3m+1,7m)}(\tau_{\rho_2})\bigr). 
\end{align*}
\end{lemma}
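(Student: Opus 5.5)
The plan is to mirror the proof of part (ii) of Lemma~\ref{lemma:semi hard edge case asymptotic expansion disck complement} (the disk-complement hard-edge tail). The changes come from the second hard wall and from truncating the range at $j_{1,+}$ instead of $n-1$.

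\textbf{Pointwise expansion of the summand.} For $g_{1,+}+1\le j\le j_{1,+}$, Lemma~\ref{lemma: g1+leq j leq j+1} shows that the annular norm is dominated by the endpoint $\rho_1$. Since $j\le j_{1,+}$ and $\tau_{\rho_1}/(1-\epsilon)<\sigma_\star$, the $\rho_2$-endpoint contribution is smaller by a factor $e^{-c_\epsilon n}$. I would then sort the ratios $F^{(\mathrm{an})}_{n,j,\ell}$ by group.
\begin{itemize}
\item The bulk ratios ($\ell\le m$) are $\mathcal O(e^{-cM^2})$.
\item The semi-hard ratios ($m<\ell\le 2m$) are $1-\mathcal O(e^{-cM})$. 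This is the same computation as in the disk-complement case: the excluded piece near $\rho_1$ is tiny because $\eta_1\sqrt n\gtrsim M$.
\item Every outer ratio ($\ell>3m$) equals $1$ up to $\mathcal O(e^{-c_\epsilon n})$, because $h_{n,j,\ell}^{(\mathrm{an})}$ already contains the whole inner component.
\end{itemize}
Using $\sum_{\ell=3m+1}^{7m}\omega_\ell=\widehat{\mathsf T}_0^{(3m+1,7m)}(\tau_{\rho_2})$ (which equals $e^{s_{3m+1}+\dots+s_{7m}}-1$) and grouping the middle block as well, the argument of the logarithm becomes
\[
1+\widehat{\mathsf T}_0^{(3m+1,7m)}(\tau_{\rho_2})+\sum_{\ell=2m+1}^{3m}\omega_\ell F^{(\mathrm{an})}_{n,j,\ell}+\mathcal O(e^{-cM}).
\]
For the hard radii I reuse the endpoint Laplace expansion with the $e^{-\varphi_\ell\eta_1}$ factor and the truncated exponential polynomials $\mathsf e_k$. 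This gives
\[
\mathsf T_0^{(2m+1,3m)}(j/n)-\mathscr H_{\rho_1,1}/n+\mathscr H_{\rho_1,2}/n^2-\mathscr H_{\rho_1,3}/n^3+\mathcal O(\eta_1^{-8}n^{-4}).
\]
I then Taylor expand the logarithm around $1+\mathsf T_0^{(2m+1,3m)}+\widehat{\mathsf T}_0^{(3m+1,7m)}(\tau_{\rho_2})$. Every denominator in the disk-complement proof thus acquires the additive constant $\widehat{\mathsf T}_0^{(3m+1,7m)}(\tau_{\rho_2})$.

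\textbf{Summation.} I apply Lemma~\ref{lemma:Riemann sum NEW} with
\[
A=\tau_{\rho_1}/(1-M/\sqrt n),\quad a_0=1-\theta_{1,+}^{(n,M)},\quad B=A_{1,+},\quad b_0=-\theta_{1,+}^{(n,\epsilon)}.
\]
\begin{itemize}
\item The leading log-sum gives $n\int_A^{A_{1,+}}$ plus the endpoint terms. The upper endpoint produces the $(\tfrac12-\theta_{1,+}^{(n,\epsilon)})\log(\cdots)$ term.
\item Expanding the moving lower limit to second order gives the $M\sqrt n$ and $M^2$ parts of $\widetilde C^{(M)}_{E_5^{(\mathrm{an})}}$.
\item The $1/n$ correction is split into $\mathscr F_1,\mathscr F_2,\mathscr F_3$ (with modified denominators). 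Because of the singular part $2\rho_1^2\Delta Q(\rho_1)/(x-\tau_{\rho_1})$, its integral from $\tau_{\rho_1}/(1-M/\sqrt n)$ yields $\log(A_{1,+}-\tau_{\rho_1})-\log(M\tau_{\rho_1}/\sqrt n)$. This produces the $\log n$ coefficient $C^{(5)}$, the $\log M$ term, and $\log(\epsilon/(1-\epsilon))$ (the $\log\tau_{\rho_1}$ pieces cancel against those written in $C^{(6)}_{E_4^{(\mathrm{an})}}$).
\item $\mathscr F_2$ integrates exactly to half a difference of logarithms, since $\partial_x\log(1+\mathsf T_0)=-2\mathsf T_1$ and the constant $\widehat{\mathsf T}_0$ does not affect this identity.
\end{itemize}

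\textbf{Remainders and the main obstacle.} The higher corrections $n^{-2}\sum(\cdots)$ and $n^{-3}\sum(\cdots)$ are controlled by the summation bound from \cite[proof of Lemma~2.6]{ACCL1}. Their $M^{-2}$ and $M^{-4}$ leading pieces either cancel later with $E_4^{(\mathrm{an})}$ or are absorbed. I expect the main obstacle to be bookkeeping: getting the endpoint terms at $A_{1,+}$ correct and confirming that the stated remainder is $\mathcal O(M^3n^{-1/2})$. This requires checking that the $M^{-k}$ terms have been attributed to $\widetilde C^{(M)}$ or deferred, and that the accumulated $\mathcal O(\eta_1^{-8}n^{-4})$ error is $\mathcal O(\sqrt n M^{-7})$. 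The latter is small but not $\mathcal O(M^3n^{-1/2})$, so I would verify whether the intended error is really $\mathcal O(n^{-1/12})$.
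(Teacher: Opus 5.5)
Your proposal follows the paper's proof essentially step for step. You use the same one-sided variant of part~(ii) of Lemma~\ref{lemma:semi hard edge case asymptotic expansion disck complement}, the same Euler--Maclaurin data $A,a_0,B,b_0$, the same splitting into $\mathscr F_1,\mathscr F_2,\mathscr F_3$ with $\widehat{\mathsf T}_0^{(3m+1,7m)}(\tau_{\rho_2})$ added to every denominator, and the same summation bound from \cite{ACCL1}. One step, however, is stated incorrectly.

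\textbf{The semi-hard ratios are small, not close to one.} For $g_{1,+}<j\le j_{1,+}$, the norm $h_{n,j}^{(\mathrm{an})}$ is carried by an $\mathcal O(1/(n\eta_1))=\mathcal O(1/(M\sqrt n))$ neighbourhood of $\rho_1$, while $\rho_1-r_\ell\asymp t_\ell/\sqrt n$ for $m<\ell\le 2m$.
\begin{itemize}
\item So the excluded piece $\int_{r_\ell}^{\rho_1}$ is essentially the whole norm, and $F_{n,j,\ell}^{(\mathrm{an})}=\mathcal O(e^{-c\sqrt n\,\eta_1 t_\ell})=\mathcal O(e^{-cM})$. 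This is what the disk-complement proof shows, where these terms enter as $\omega_\ell\cdot\mathcal O(e^{-c'M})$.
\item If they were $1-\mathcal O(e^{-cM})$, the logarithm would carry the extra constant $\sum_{\ell=m+1}^{2m}\omega_\ell=\Omega_{m+1}^{(7m)}-\Omega_{2m+1}^{(7m)}$. The leading term would then no longer be $\log(1+\mathsf T_0^{(2m+1,3m)}+\widehat{\mathsf T}_0^{(3m+1,7m)}(\tau_{\rho_2}))$.
\item The $M\sqrt n$ terms would also fail to cancel against $E_3^{(\mathrm{an})}$ and $E_4^{(\mathrm{an})}$, since that cancellation requires the argument to take the value $\Omega_{2m+1}^{(7m)}$ at $x=\tau_{\rho_1}$.
\end{itemize}
The argument of the logarithm you display is the right one, but it follows from the smallness of these ratios, not from the claim you wrote.

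Two smaller points:
\begin{itemize}
\item The $\log\tau_{\rho_1}$ terms cancel inside $E_5^{(\mathrm{an})}$ itself, because $A_{1,+}-\tau_{\rho_1}=\tau_{\rho_1}\epsilon/(1-\epsilon)$. Nothing is left to cancel against $C^{(6)}_{E_4^{(\mathrm{an})}}$ at this stage.
\item The identity you need is $\partial_x\mathsf T_0^{(2m+1,3m)}=-2\mathsf T_1^{(2m+1,3m)}$, so that $\mathscr F_2=-\tfrac12\partial_x\log(\cdots)$.
\end{itemize}

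\textbf{The remainder.} Your suspicion is justified, and the paper's own proof has the same limitation.
\begin{itemize}
\item The paper records the error of the Taylor-expanded sum as $\mathcal O(M^{-8})$. Summing the pointwise $\mathcal O(\eta_1^{-8}n^{-4})$ over the range actually gives $\mathcal O(\sqrt n\,M^{-7})$, which is what the paper itself writes in the disk-complement case.
\item The two higher correction sums contribute the nonzero term $\frac{2\rho_1^4\Delta Q(\rho_1)^2}{\tau_{\rho_1}^2M^2}\cdot\frac{\mathsf T_1^{(2m+1,3m)}(\tau_{\rho_1})}{1+\mathsf T_0^{(2m+1,3m)}(\tau_{\rho_1})+\widehat{\mathsf T}_0^{(3m+1,7m)}(\tau_{\rho_2})}$ and its $M^{-4}$ analogue. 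These appear neither in $\widetilde C^{(M)}_{E_5^{(\mathrm{an})}}$ nor within $\mathcal O(M^3n^{-1/2})$.
\end{itemize}
With $M=n^{1/12}$, the argument delivers $\mathcal O(n^{-1/12})$, not $\mathcal O(n^{-1/4})$. The lemma should be read with that remainder, and this is all that the assembly in Theorem~\ref{theorem:counting statistics of annulus case} uses.
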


\begin{proof}
This is a one-sided variant of the proof of
Lemma~\ref{lemma:semi hard edge case asymptotic expansion disck complement}.
The relevant index range is $g_{1,+}+1\le j\le j_{1,+}$, where
\[
g_{1,+}+1=\frac{n\tau_{\rho_1}}{1-M/\sqrt n}
+1-\theta_{1,+}^{(n,M)},\qquad
j_{1,+}=\frac{n\tau_{\rho_1}}{1-\epsilon}
-\theta_{1,+}^{(n,\epsilon)}.
\]
Set
\[
A=\frac{\tau_{\rho_1}}{1-M/\sqrt n},\quad
a_0=1-\theta_{1,+}^{(n,M)},\quad
B=\frac{\tau_{\rho_1}}{1-\epsilon},\quad
b_0=-\theta_{1,+}^{(n,\epsilon)}.
\]
The uniform hard-edge expansion of the summand, followed by the Taylor
expansion of the logarithm, gives
\begin{align*}
&\quad E_5^{(\mathrm{an})}\\
&=  
\sum_{j=An+a_0}^{Bn+b_0}
\log\bigl(1+\mathsf{T}_{0}^{(2m+1,3m)}(j/n)+\widehat{\mathsf{T}}_0^{(3m+1,7m)}(\tau_{\rho_2})\bigr)
\\
&\quad-\frac{1}{n}
\sum_{j=An+a_0}^{Bn+b_0}
\frac{\mathscr{H}_{\rho_1,1}}{1+\mathsf{T}_{0}^{(2m+1,3m)}(j/n)+\widehat{\mathsf{T}}_0^{(3m+1,7m)}(\tau_{\rho_2})}
\\
&
+\frac{1}{n^2}\sum_{j=An+a_0}^{Bn+b_0}\Bigl( 
\frac{\mathscr{H}_{\rho_1,2}}{1+\mathsf{T}_{0}^{(2m+1,3m)}(j/n)+\widehat{\mathsf{T}}_0^{(3m+1,7m)}(\tau_{\rho_2})}\\
&\qquad-\frac{\mathscr{H}_{\rho_1,1}^2}{2(1+\mathsf{T}_{0}^{(2m+1,3m)}(j/n)+\widehat{\mathsf{T}}_0^{(3m+1,7m)}(\tau_{\rho_2}))^2}
\Bigr)
\\
&
-\frac{1}{n^3}
\sum_{j=An+a_0}^{Bn+b_0}
\Bigl( 
\frac{\mathscr{H}_{\rho_1,3}}{1+\mathsf{T}_{0}^{(2m+1,3m)}(j/n)+\widehat{\mathsf{T}}_0^{(3m+1,7m)}(\tau_{\rho_2})}
\\
&\qquad-\frac{\mathscr{H}_{\rho_1,1}\mathscr{H}_{\rho_1,2}}{(1+\mathsf{T}_{0}^{(2m+1,3m)}(j/n)+\widehat{\mathsf{T}}_0^{(3m+1,7m)}(\tau_{\rho_2}))^2}
\\
&\quad
+\frac{\mathscr{H}_{\rho_1,1}^3}{(1+\mathsf{T}_{0}^{(2m+1,3m)}(j/n)+\widehat{\mathsf{T}}_0^{(3m+1,7m)}(\tau_{\rho_2}))^3}
\Bigr)
+\mathcal{O}(M^{-8}).
\end{align*}
The error is uniform over the full index range; the denominator stays bounded
away from zero under the standing admissibility assumptions.  For the leading
sum, Euler--Maclaurin together with expansion of the moving lower endpoint
gives
\begin{align*}
&\quad\sum_{j=An+a_0}^{Bn+b_0}
\log\bigl(1+\mathsf{T}_{0}^{(2m+1,3m)}(j/n)+\widehat{\mathsf{T}}_0^{(3m+1,7m)}(\tau_{\rho_2})\bigr)
\\
&=
n\int_{\tau_{\rho_1}}^{\frac{\tau_{\rho_1}}{1-\epsilon}}
\log\bigl(1+\mathsf{T}_{0}^{(2m+1,3m)}(x)+\widehat{\mathsf{T}}_0(\tau_{\rho_2})\bigr)\,dx
\\
&\quad-\tau_{\rho_1}M\sqrt{n}\log\bigl(1+\mathsf{T}_{0}^{(2m+1,3m)}(\tau_{\rho_1})+\widehat{\mathsf{T}}_0^{(3m+1,7m)}(\tau_{\rho_2})\bigr)
\\
&\quad
-M^2\tau_{\rho_1}
\left(\begin{aligned}&
\log\bigl(1+\mathsf{T}_{0}^{(2m+1,3m)}(\tau_{\rho_1})+\widehat{\mathsf{T}}_0^{(3m+1,7m)}(\tau_{\rho_2})\bigr)
\\
&\quad-\frac{\tau_{\rho_1}\mathsf{T}_{1}^{(2m+1,3m)}(\tau_{\rho_1})}{1+\mathsf{T}_{0}^{(2m+1,3m)}(\tau_{\rho_1})+\widehat{\mathsf{T}}_0^{(3m+1,7m)}(\tau_{\rho_2})}
\end{aligned}\right)
\\
&\quad
+
\Bigl(\theta_{1,+}^{(n,M)}-\frac{1}{2}\Bigr)
\log\bigl(1+\mathsf{T}_{0}^{(2m+1,3m)}(\tau_{\rho_1})+\widehat{\mathsf{T}}_0^{(3m+1,7m)}(\tau_{\rho_2})\bigr)
\\
&\quad
+\Bigl(\frac{1}{2}-\theta_{1,+}^{(n,\epsilon)}\Bigr)\log\bigl(1+\mathsf{T}_{0}^{(2m+1,3m)}(\tfrac{\tau_{\rho_1}}{1-\epsilon})+\widehat{\mathsf{T}}_0^{(3m+1,7m)}(\tau_{\rho_2})\bigr)
+
\mathcal{O}(Mn^{-1/2})
\end{align*}
For brevity, write
\begin{align*}
\mathscr{F}_1(x)&:=
\frac{2\rho_1^2\Delta Q(\rho_1)}{x-\tau_{\rho_1}}
\frac{\mathsf{T}_{1}^{(2m+1,3m)}(x)}{1+\mathsf{T}_{0}^{(2m+1,3m)}(x)+\widehat{\mathsf{T}}_{0}^{(3m+1,7m)}(\tau_{\rho_2})}
\\
&\quad+
x\frac{\mathsf{T}_{2}^{(2m+1,3m)}(x)}{1+\mathsf{T}_{0}^{(2m+1,3m)}(x)+\widehat{\mathsf{T}}_{0}^{(3m+1,7m)}(\tau_{\rho_2})}, 
\\
\mathscr{F}_2(x)&:=
\frac{\mathsf{T}_{1}^{(2m+1,3m)}(x)}{1+\mathsf{T}_{0}^{(2m+1,3m)}(x)+\widehat{\mathsf{T}}_{0}^{(3m+1,7m)}(\tau_{\rho_2})}, 
\\
\mathscr{F}_3(x)&:=
\frac{\mathsf{T}_{2}^{(2m+1,3m)}(x)}{1+\mathsf{T}_{0}^{(2m+1,3m)}(x)+\widehat{\mathsf{T}}_{0}^{(3m+1,7m)}(\tau_{\rho_2})}. 
\end{align*}
Applying Lemma~\ref{lemma:Riemann sum NEW} to the first correction term yields
\begin{align*}
&\quad \frac{1}{n}\sum_{j=An+a_0}^{Bn+b_0}
\frac{\mathscr{H}_{\rho_1,1}}{1+\mathsf{T}_{0}^{(2m+1,3m)}(j/n)+\widehat{\mathsf{T}}_{0}^{(3m+1,7m)}(\tau_{\rho_2})}
\\
&=
\int_{\frac{\tau_{\rho_1}}{1-\frac{M}{\sqrt{n}}}}^{\frac{\tau_{\rho_1}}{1-\epsilon}}
\mathscr{F}_1(x)\,dx
+
\Bigl(\frac{1}{2}\rho_1\mathsf{k}'(\rho_1)+1\Bigr)
\int_{\frac{\tau_{\rho_1}}{1-\frac{M}{\sqrt{n}}}}^{\frac{\tau_{\rho_1}}{1-\epsilon}}
2\mathscr{F}_2(x)
\,dx
\\
&\quad
+
\bigl(2\rho_1^2\Delta Q(\rho_1)-\tau_{\rho_1}\bigr)
\int_{\frac{\tau_{\rho_1}}{1-\frac{M}{\sqrt{n}}}}^{\frac{\tau_{\rho_1}}{1-\epsilon}}
\mathscr{F}_3(x)
\,dx
+\mathcal{O}(M^{-1}n^{-1/2}). 
\end{align*}
We isolate the singular contribution at the lower endpoint by subtracting
its limiting coefficient.  This gives
\begin{align*}
\int_{\frac{\tau_{\rho_1}}{1-\frac{M}{\sqrt{n}}}}^{\frac{\tau_{\rho_1}}{1-\epsilon}}
\mathscr{F}_1(x)\,dx
&=
\int_{\tau_{\rho_1}}^{\frac{\tau_{\rho_1}}{1-\epsilon}}
\Bigl(\mathscr{F}_1(x)
-\frac{2\rho_1^2\Delta Q(\rho_1)}{x-\tau_{\rho_1}}\frac{\mathsf{T}_{1}^{(2m+1,3m)}(\tau_{\rho_1})}{1+\mathsf{T}_{0}^{(2m+1,3m)}(\tau_{\rho_1})+\widehat{\mathsf{T}}_{0}^{(3m+1,7m)}(\tau_{\rho_2})}\Bigr)\,dx
\\
&\quad 
+
\frac{\rho_1^2\Delta Q(\rho_1)\,\mathsf{T}_{1}^{(2m+1,3m)}(\tau_{\rho_1})}{1+\mathsf{T}_{0}^{(2m+1,3m)}(\tau_{\rho_1})+\widehat{\mathsf{T}}_{0}^{(3m+1,7m)}(\tau_{\rho_2})}
\log n
\\
&\quad
+
\frac{2\rho_1^2\Delta Q(\rho_1)\,\mathsf{T}_{1}^{(2m+1,3m)}(\tau_{\rho_1})}{1+\mathsf{T}_{0}^{(2m+1,3m)}(\tau_{\rho_1})+\widehat{\mathsf{T}}_{0}^{(3m+1,7m)}(\tau_{\rho_2})}
\Bigl( 
\log(\tfrac{\epsilon}{1-\epsilon})-\log M
\Bigr)
+
\mathcal{O}(Mn^{-1/2}), 
\\
\int_{\frac{\tau_{\rho_1}}{1-\frac{M}{\sqrt{n}}}}^{\frac{\tau_{\rho_1}}{1-\epsilon}}
\mathscr{F}_2(x)
\,dx
&
=
-\frac{1}{2}\log(1+\mathsf{T}_0^{(2m+1,3m)}(\tfrac{\tau_{\rho_1}}{1-\epsilon})+\widehat{\mathsf{T}}_{0}^{(3m+1,7m)}(\tau_{\rho_2}))
\\
&\quad
+\frac{1}{2}\log(1+\mathsf{T}_0^{(2m+1,3m)}(\tau_{\rho_1})+\widehat{\mathsf{T}}_{0}^{(3m+1,7m)}(\tau_{\rho_2}))
+\mathcal{O}(Mn^{-1/2}), 
\\
\int_{\frac{\tau_{\rho_1}}{1-\frac{M}{\sqrt{n}}}}^{\frac{\tau_{\rho_1}}{1-\epsilon}}
\mathscr{F}_3(x)\,dx
&=
\int_{\tau_{\rho_1}}^{\frac{\tau_{\rho_1}}{1-\epsilon}}
\frac{\mathsf{T}_{2}^{(2m+1,3m)}(x)}{1+\mathsf{T}_{0}^{(2m+1,3m)}(x)+\widehat{\mathsf{T}}_{0}^{(3m+1,7m)}(\tau_{\rho_2})}\,dx
+
\mathcal{O}(Mn^{-1/2}). 
\end{align*}
For the remaining two correction sums, we use the tail estimate from
\cite[proof of Lemma~2.6]{ACCL1}: for $A,B>1$,
\[
\sum_{j=g_{1,+}+1}^{n-1}\mathcal{O}\Bigl(\frac{1}{n^A(j/n-\tau_{\rho_1})^B}\Bigr)
=
\mathcal{O}\Bigl(\frac{1}{n^{A-(B+1)/2}M^{B-1}}\Bigr),
\]
and hence
\begin{align*}
&\quad \frac{1}{n^2}\sum_{j=An+a_0}^{Bn+b_0}\Bigl( 
\frac{\mathscr{H}_{\rho_1,2}}{1+\mathsf{T}_{0}^{(2m+1,3m)}(j/n)+\widehat{\mathsf{T}}_{0}^{(3m+1,7m)}(\tau_{\rho_2})}\\
&\qquad-\frac{\mathscr{H}_{\rho_1,1}^2}{2(1+\mathsf{T}_{0}^{(2m+1,3m)}(j/n)+\widehat{\mathsf{T}}_{0}^{(3m+1,7m)}(\tau_{\rho_2}))^2}
\Bigr)
\\
&=
\frac{2\rho_1^4(\Delta Q(\rho_1))^2}{\tau_{\rho_1}^2M^2}
\frac{\mathsf{T}_{1}^{(2m+1,3m)}(\tau_{\rho_1})}{1+\mathsf{T}_{0}^{(2m+1,3m)}(\tau_{\rho_1})+\widehat{\mathsf{T}}_{0}^{(3m+1,7m)}(\tau_{\rho_2})}
+\mathcal{O}(M^{-1}n^{-1/2}),
\end{align*}
and 
\begin{align*}
 &\quad 
 -\frac{1}{n^3}
\sum_{j=An+a_0}^{Bn+b_0}
\Bigl( 
\frac{\mathscr{H}_{\rho_1,3}}{1+\mathsf{T}_{0}^{(2m+1,3m)}(j/n)+\widehat{\mathsf{T}}_{0}^{(3m+1,7m)}(\tau_{\rho_2})}
\\
&\qquad-\frac{\mathscr{H}_{\rho_1,1}\mathscr{H}_{\rho_1,2}}{(1+\mathsf{T}_{0}^{(2m+1,3m)}(j/n)+\widehat{\mathsf{T}}_{0}^{(3m+1,7m)}(\tau_{\rho_2}))^2}
\\
&\quad
+\frac{\mathscr{H}_{\rho_1,1}^3}{(1+\mathsf{T}_{0}^{(2m+1,3m)}(j/n)+\widehat{\mathsf{T}}_{0}^{(3m+1,7m)}(\tau_{\rho_2}))^3}
\Bigr)
\\
&=
-
\frac{5\rho_1^6\Delta Q(\rho_1)^3}{\tau_{\rho_1}^4M^4}
\frac{\mathsf{T}_{1}^{(2m+1,3m)}(\tau_{\rho_1})}{1+\mathsf{T}_{0}^{(2m+1,3m)}(\tau_{\rho_1})+\widehat{\mathsf{T}}_{0}^{(3m+1,7m)}(\tau_{\rho_2})}
+\mathcal{O}(M^{-3}n^{-1/2}). 
\end{align*}
Combining the preceding Euler--Maclaurin formula, the regularized endpoint
integral, and these two tail estimates gives the coefficients in the
statement.  The logarithmic endpoint singularity produces the $\log n$ and
$\log M$ terms, and the remaining powers of $M$ match those from the adjacent
hard-edge window.  With $M=n^{1/12}$, all omitted contributions lie within the
stated remainder.  This completes the proof.
\end{proof}

\begin{lemma}\label{lemma:asymptotic expansion E6 annulus}
As $n\to\infty$,
 \begin{align}
    E_{6}^{(\mathrm{an})}
    =
C_{E_{6}^{(\mathrm{an})}}^{(3)}n
+
C_{E_{6}^{(\mathrm{an})}}^{(4)}\sqrt{n}
+
C_{E_{6}^{(\mathrm{an})}}^{(5)}\log n
+
C_{E_{6}^{(\mathrm{an})}}^{(6)}
+
C_{E_{6}^{(\mathrm{an})}}^{(\mathrm{osc})}
+\mathcal{O}\Bigl(\frac{(\log n)^2}{n}\Bigr),    
\end{align}   
where 
\begin{align*}
C_{E_{6}^{(\mathrm{an})}}^{(3)}
&:=
\int_{\frac{\tau_{\rho_1}}{1-\epsilon}}^{\sigma_{\star}}
\log\Bigl(1+\mathsf{T}_{0}^{(2m+1,3m)}(x)+\widehat{\mathsf{T}}_0^{(3m+1,7m)}(\tau_{\rho_2})\Bigr)\,dx
\\
&\quad
+
\int_{\sigma_{\star}}^{\frac{\tau_{\rho_2}}{1+\epsilon}}
\log\Bigl(1-\widehat{\mathsf{T}}_0^{(3m+1,4m)}(x)+\widehat{\mathsf{T}}_0^{(3m+1,7m)}(\tau_{\rho_2})\Bigr)\,dx,
\quad
C_{E_{6}^{(\mathrm{an})}}^{(4)}
:=0,
\quad
C_{E_{6}^{(\mathrm{an})}}^{(5)}
:=0,
\\
C_{E_{6}^{(\mathrm{an})}}^{(6)}
&:=
\Bigl(\theta_{1,+}^{(n,\epsilon)}-\frac{1}{2}\Bigr)
\log\Bigl(1+\mathsf{T}_{0}^{(2m+1,3m)}(\tfrac{\tau_{\rho_1}}{1-\epsilon})+\widehat{\mathsf{T}}_0^{(3m+1,7m)}(\tau_{\rho_2})\Bigr)
\\
&\quad 
+\Bigl(\frac{1}{2}-\theta_{\star}\Bigr)
\log\Bigl(1+\mathsf{T}_{0}^{(2m+1,3m)}(\sigma_{\star})+\widehat{\mathsf{T}}_0^{(3m+1,7m)}(\tau_{\rho_2})\Bigr)
\\
&\quad 
+
\int_{\frac{\tau_{\rho_1}}{1-\epsilon}}^{\sigma_{\star}}
\bigg[
-\frac{\frac{2\rho_1^2\Delta Q(\rho_1)}{x-\tau_{\rho_1}}
\mathsf{T}_{1}^{(2m+1,3m)}(x)
+x\mathsf{T}_{2}^{(2m+1,3m)}(x)
}{
1+\mathsf{T}_{0}^{(2m+1,3m)}(x)+\widehat{\mathsf{T}}_0^{(3m+1,7m)}(\tau_{\rho_2})}
\\
&\quad 
+
\frac{2\rho_1^2\Delta Q(\rho_1)}{x-\tau_{\rho_1}}
\frac{\mathsf{T}_{1}^{(2m+1,3m)}(\tau_{\rho_1})}{1+\mathsf{T}_{0}^{(2m+1,3m)}(\tau_{\rho_1})+\widehat{\mathsf{T}}_0^{(3m+1,7m)}(\tau_{\rho_2})}
\bigg]
\,dx
\\
&\quad 
-
\frac{2\rho_1^2\Delta Q(\rho_1)\mathsf{T}_{1}^{(2m+1,3m)}(\tau_{\rho_1})}{1+\mathsf{T}_{0}^{(2m+1,3m)}(\tau_{\rho_1})+\widehat{\mathsf{T}}_0^{(3m+1,7m)}(\tau_{\rho_2})}
\Bigl( 
\log\bigl(\sigma_{\star}-\tau_{\rho_1}\bigr)-\log\bigl(\frac{\epsilon}{1-\epsilon}\bigr)
-\log \tau_{\rho_1}
\Bigr)
\\
&\quad 
+
\Bigl(\frac{1}{2}\rho_1\mathsf{k}'(\rho_1)+1\Bigr)
\log\Bigl(1+\mathsf{T}_{0}^{(2m+1,3m)}(\sigma_{\star})+\widehat{\mathsf{T}}_0^{(3m+1,7m)}(\tau_{\rho_2})\Bigr)
\\
&\quad
-
\Bigl(\frac{1}{2}\rho_1\mathsf{k}'(\rho_1)+1\Bigr)
\log\Bigl(1+\mathsf{T}_{0}^{(2m+1,3m)}(\tfrac{\tau_{\rho_1}}{1-\epsilon})+\widehat{\mathsf{T}}_0^{(3m+1,7m)}(\tau_{\rho_2})\Bigr)
\\
&\quad 
-
\bigl(2\rho_1^2\Delta Q(\rho_1)-\tau_{\rho_1}\bigr)
\int_{\frac{\tau_{\rho_1}}{1-\epsilon}}^{\sigma_{\star}}
\frac{
\mathsf{T}_{2}^{(2m+1,3m)}(x)
}{
1+\mathsf{T}_{0}^{(2m+1,3m)}(x)+\widehat{\mathsf{T}}_0^{(3m+1,7m)}(\tau_{\rho_2})}\,dx
\\
&\quad
+
\Bigl(\theta_{\star}-\frac{1}{2}\Bigr)
\log\Bigl(1-\widehat{\mathsf{T}}_0^{(3m+1,4m)}(\sigma_{\star})+\widehat{\mathsf{T}}_0^{(3m+1,7m)}(\tau_{\rho_2})\Bigr)
\\
&\quad 
+\Bigl(\theta_{2,-}^{(n,\epsilon)}-\frac{1}{2}\Bigr)
\log\Bigl(1-\widehat{\mathsf{T}}_0^{(3m+1,4m)}(\tfrac{\tau_{\rho_2}}{1+\epsilon})+\widehat{\mathsf{T}}_0^{(3m+1,7m)}(\tau_{\rho_2})\Bigr)
\\
&\quad 
+\int_{\sigma_{\star}}^{\frac{\tau_{\rho_2}}{1+\epsilon}}
\bigg[
\frac{\frac{2\rho_2^2\Delta Q(\rho_2)}{\tau_{\rho_2}-x}
\widehat{\mathsf{T}}_{1}^{(3m+1,4m)}(x)
+x\widehat{\mathsf{T}}_{2}^{(3m+1,4m)}(x)}{1-\widehat{\mathsf{T}}_0^{(3m+1,4m)}(x)+\widehat{\mathsf{T}}_0^{(3m+1,7m)}(\tau_{\rho_2})}
\\
&\quad 
-
\frac{2\rho_2^2\Delta Q(\rho_2)}{\tau_{\rho_2}-x}
\frac{
\widehat{\mathsf{T}}_{1}^{(3m+1,4m)}(\tau_{\rho_2})
}{1-\widehat{\mathsf{T}}_0^{(3m+1,4m)}(\tau_{\rho_2})+\widehat{\mathsf{T}}_0^{(3m+1,7m)}(\tau_{\rho_2})}
\bigg]\,dx
\\
&\quad
+
\frac{2\rho_2^2\Delta Q(\rho_2)
\widehat{\mathsf{T}}_{1}^{(3m+1,4m)}(\tau_{\rho_2})
}{1-\widehat{\mathsf{T}}_0^{(3m+1,4m)}(\tau_{\rho_2})+\widehat{\mathsf{T}}_0^{(3m+1,7m)}(\tau_{\rho_2})}
\Bigl( 
-\log\bigl(\frac{\epsilon}{1+\epsilon}\bigr)
-\log\tau_{\rho_2}
+\log\bigl(\tau_{\rho_2}-\sigma_{\star}\bigr)
\Bigr)
\\
&\quad 
+\Bigl(\frac{1}{2}\rho_2\mathsf{k}'(\rho_2)+1\Bigr)
\log\Bigl(1-\widehat{\mathsf{T}}_0^{(3m+1,4m)}(\tfrac{\tau_{\rho_2}}{1+\epsilon})+\widehat{\mathsf{T}}_0^{(3m+1,7m)}(\tau_{\rho_2})\Bigr)
\\
&\quad
-\Bigl(\frac{1}{2}\rho_2\mathsf{k}'(\rho_2)+1\Bigr)\log\Bigl(1-\widehat{\mathsf{T}}_0^{(3m+1,4m)}(\sigma_{\star})+\widehat{\mathsf{T}}_0^{(3m+1,7m)}(\tau_{\rho_2})\Bigr)
\\
&\quad 
+\bigl( 
2\rho_2^2\Delta Q(\rho_2)-\tau_{\rho_2}
\bigr)
\int_{\sigma_{\star}}^{\frac{\tau_{\rho_2}}{1+\epsilon}}
\frac{\widehat{\mathsf{T}}_{2}^{(3m+1,4m)}(x)}{1-\widehat{\mathsf{T}}_0^{(3m+1,4m)}(x)+\widehat{\mathsf{T}}_0^{(3m+1,7m)}(\tau_{\rho_2})}\,dx, 
\\
C_{E_{6}^{(\mathrm{an})}}^{(\mathrm{osc})}    
&:=
(\theta_{\star}-\alpha-1)\log\mathsf{Q}-\frac{\log \mathsf{Q}}{2}\Bigl(1+\frac{s(\lambda(\rho_1)-\lambda(\rho_2))}{\log \frac{\rho_1}{\rho_2}}+\frac{2\log (\tfrac{\sigma_2}{\sigma_1})+\log \mathsf{Q}}{2\log \frac{\rho_1}{\rho_2}}\Bigr)
    \\
    &\quad 
    +\log \frac{\theta( \tfrac{\log (\tfrac{\sigma_2}{\sigma_1})+\log \mathsf{Q}}{2\log \tfrac{\rho_2}{\rho_1}}+\frac{s(\lambda(\rho_1)-\lambda(\rho_2))}{2\log\frac{\rho_2}{\rho_1}}+n\sigma_{\star}-\alpha +\frac{1}{2}; \frac{\pi i }{\log \frac{\rho_2}{\rho_1}})}{\theta( \tfrac{\log (\tfrac{\sigma_2}{\sigma_1})}{2\log \tfrac{\rho_2}{\rho_1}}+\frac{s(\lambda(\rho_1)-\lambda(\rho_2))}{2\log\frac{\rho_2}{\rho_1}}+n\sigma_{\star}-\alpha+\frac{1}{2}; \frac{\pi i }{\log \frac{\rho_2}{\rho_1}})}.
\end{align*}
\end{lemma}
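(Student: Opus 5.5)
The plan is to split $E_6^{(\mathrm{an})}$ at the switching index $\lfloor j_\star\rfloor$, exactly as in the proof of Lemma~\ref{lemma:S3 theta function part}:
\[
E_6^{(\mathrm{an})}=\sum_{j=j_{1,+}+1}^{\lfloor j_\star\rfloor}(\cdots)+\sum_{j=\lfloor j_\star\rfloor+1}^{j_{2,-}-1}(\cdots).
\]
In the first range we use Lemma~\ref{lemma: j1+ j jdiamond}. Let $\kappa_j:=\frac{\tau_j-\tau_{\rho_1}}{\tau_{\rho_2}-\tau_j}e^{\mathsf k(\rho_2)-\mathsf k(\rho_1)}(\rho_1/\rho_2)^{2(j_\star-j-1)}$. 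The denominator $h^{(\mathrm{an})}_{n,j}$ equals the $\rho_1$-endpoint term times $(1+\kappa_j)$.

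For the numerators, each $h_{n,j,\ell}^{(\mathrm{an})}$ is computed by the same endpoint Laplace expansions:
\begin{itemize}
\item For $\ell\le 2m$, and for the radii inside the outer bulk, semi-hard and soft groups, the contribution reduces to a full endpoint integral, up to $\mathcal O(e^{-cM^2})$.
\item For the inner hard radii $\ell\in[2m+1,3m]$, the $\rho_1$-endpoint integral is truncated at $\rho_1(1-t_\ell/n)$. This gives the factor $e^{-2t_\ell(\tau_j-\tau_{\rho_1})}$, with $1/n$ corrections as in part~(ii) of Lemma~\ref{lemma:semi hard edge case asymptotic expansion disck complement}.
\item For the outer hard radii $\ell\in[3m+1,4m]$, the $\rho_2$-endpoint integral from $\rho_2$ to $\rho_2(1+t_\ell/n)$ gives $1-e^{-2t_\ell(\tau_{\rho_2}-\tau_j)}$.
\end{itemize}
Summing with the weights $\omega_\ell$ and using the telescoping identities from the proof of Theorem~\ref{theorem:annular-multivariate-clt}, we expect
\[
1+\sum_\ell\omega_\ell F^{(\mathrm{an})}_{n,j,\ell}
=\frac{\bigl(1+\mathsf T_0^{(2m+1,3m)}(\tau_j)+\widehat{\mathsf T}_0^{(3m+1,7m)}(\tau_{\rho_2})\bigr)+\kappa_j\bigl(1-\widehat{\mathsf T}_0^{(3m+1,4m)}(\tau_j)+\widehat{\mathsf T}_0^{(3m+1,7m)}(\tau_{\rho_2})\bigr)}{1+\kappa_j}\bigl(1+\mathcal O(n^{-1})\bigr),
\]
with explicit $1/n$ terms. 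The second range is symmetric, based on Lemma~\ref{lemma: jstar leq j leq j2-}: the roles of $\rho_1$ and $\rho_2$ are exchanged and the geometric factor becomes $(\rho_1/\rho_2)^{2(j-j_\star+1)}$.

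Next we factor out the dominant bracket: the inner one for $j\le\lfloor j_\star\rfloor$ and the outer one for $j>\lfloor j_\star\rfloor$. What remains is
\[
\log\frac{1+\mathsf Q^{\pm1}\kappa_j^{\pm1}}{1+\kappa_j^{\pm1}},
\]
evaluated near $j_\star$ up to $\mathcal O(|j-j_\star|/n)$ errors; this is where $\mathsf Q$ as in \eqref{def of mathsf Q} enters.

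The dominant-bracket sums are smooth functions of $j/n$ on ranges of macroscopic length. We apply Lemma~\ref{lemma:Riemann sum NEW} on $[A_{1,+},\sigma_\star]$ with $b_0=-\theta_\star$, and on $[\sigma_\star,B_{2,-}]$. This produces $C^{(3)}_{E_6^{(\mathrm{an})}}$ and the boundary terms carrying $\theta_{1,+}^{(n,\epsilon)}$, $\theta_\star$ and $\theta_{2,-}^{(n,\epsilon)}$. The $1/n$ corrections are of the same type as $\mathscr H_{\rho_1,1}$ in Lemma~\ref{lemma:asymptotic expansion E5 annulus} and its $\rho_2$ analogue. Integrated, they give the regularized integrals of $\mathcal H_1^{(\mathrm{h,in})}$ and $\mathcal H_1^{(\mathrm{h,out})}$, the $\widetilde{\mathcal H}_2$ integrals, and the logarithmic terms in $\log(\sigma_\star-\tau_{\rho_1})$ and $\log(\tau_{\rho_2}-\sigma_\star)$. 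Since $\tau_j-\tau_{\rho_1}$ is bounded below here, there is no $\log n$ or $\sqrt n$ term. We must also use that the $\kappa_j$-dependent parts of the hole norms already appear in $\widetilde\Theta_n$, and handle them consistently.

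The main obstacle is the oscillatory part. Following the argument of \cite[Lemma~3.6]{C2021} used for $\mathcal S_0$, the geometric decay allows us to replace $\tau_j$ by $\sigma_\star$ at the cost $\mathcal O((\log n)^2/n)$. The two remaining sums then become
\[
\widetilde\Theta_n\big|_{a\mapsto a\mathsf Q}-\widetilde\Theta_n, \qquad a=\tfrac{\sigma_1}{\sigma_2}e^{\mathsf k(\rho_2)-\mathsf k(\rho_1)},
\]
together with linear counting terms $\log\mathsf Q$ coming from the shift of summation ranges. We then apply Lemma~\ref{lemma: lemma tilde Theta to Theta} to both $\Theta(x;\rho,a\mathsf Q)$ and $\Theta(x;\rho,a)$ with $\rho=\rho_1/\rho_2$ and $x=\theta_\star-\alpha-1$ type shifts, exactly as in \eqref{def of tilde Theta rewritten}. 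In the difference, the $-\sum\log(1-\rho^{2j})$ and $\log\pi$ terms cancel. What remains is $\tfrac12\log\mathsf Q$, the quadratic terms $\frac{(\log a\mathsf Q)^2-(\log a)^2}{4\log(\rho_2/\rho_1)}$, the $x\log\mathsf Q$ term giving $(\theta_\star-\alpha-1)\log\mathsf Q$, and the theta ratio. Rearranging these yields $C^{(\mathrm{osc})}_{E_6^{(\mathrm{an})}}$. The delicate bookkeeping is to track the $\theta_\star$ shifts, the $\mathsf k=s\lambda+2\alpha\log r$ contributions (which produce the $-\alpha$ and $s$ terms in the theta argument), and the sign conventions consistently; we would check the final formula by setting $\mathsf Q=1$.
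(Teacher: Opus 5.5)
Your route is the paper's. You split at $\lfloor j_\star\rfloor$ and write the summand as $\frac{D_{\mathrm{in}}+\kappa_jD_{\mathrm{out}}}{1+\kappa_j}$, where $D_{\mathrm{in}}(x)=1+\mathsf T_0^{(2m+1,3m)}(x)+\widehat{\mathsf T}_0^{(3m+1,7m)}(\tau_{\rho_2})$ and $D_{\mathrm{out}}(x)=1-\widehat{\mathsf T}_0^{(3m+1,4m)}(x)+\widehat{\mathsf T}_0^{(3m+1,7m)}(\tau_{\rho_2})$. This agrees with the paper's $D_j-\frac{R_j}{1+R_j}A_j$, since $A_j=D_{\mathrm{in}}-D_{\mathrm{out}}$. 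You then peel off the dominant bracket, sum it with Lemma~\ref{lemma:Riemann sum NEW}, freeze the transition series at $\sigma_\star$ at cost $\mathcal O((\log n)^2/n)$, and pass to theta functions.

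The one difference in method is at the end. The paper writes the two frozen series directly as a ratio of two Jacobi triple products and applies the modular transformation once. You apply Lemma~\ref{lemma: lemma tilde Theta to Theta} twice and subtract. The two are equivalent, because the $\log\pi$, $\log\log$, $-\frac14\log\rho$ and $\sum\log(1-\rho^{2j})$ terms cancel. Note also that the $-\widetilde\Theta_n$ is simply the factor $(1+\kappa_j)^{-1}$ coming from the denominator $h^{(\mathrm{an})}_{n,j}$ of $F^{(\mathrm{an})}_{n,j,\ell}$. Nothing from the hole-ratio expansion has to be reconciled inside $E_6^{(\mathrm{an})}$.

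There is, however, a concrete sign error in the central identification.

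\textbf{Direction of the substitution.} By \eqref{def of mathsf Q}, $\mathsf Q=D_{\mathrm{in}}(\sigma_\star)/D_{\mathrm{out}}(\sigma_\star)$. Factoring out $D_{\mathrm{in}}$ on the left range therefore gives $\log\frac{1+\mathsf Q^{-1}\kappa_j}{1+\kappa_j}$. On the right range it gives $\log\frac{1+\mathsf Q\kappa'_j}{1+\kappa'_j}$, with $\kappa'_j$ the ratio from Lemma~\ref{lemma: jstar leq j leq j2-}. So the frozen series equal $\widetilde\Theta_n|_{a\mapsto a/\mathsf Q}-\widetilde\Theta_n$, not $\widetilde\Theta_n|_{a\mapsto a\mathsf Q}-\widetilde\Theta_n$.

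\textbf{Consequence of your direction.} With $a\mapsto a\mathsf Q$, the $\frac12\log a$ term of Lemma~\ref{lemma: lemma tilde Theta to Theta} gives $+\frac12\log\mathsf Q$. The cross term of the quadratic and the theta shift also change sign, and the linear term becomes $-(\theta_\star-\alpha-1)\log\mathsf Q$. You would obtain $C^{(\mathrm{osc})}_{E_6^{(\mathrm{an})}}$ with $\mathsf Q$ replaced by $\mathsf Q^{-1}$. The residuals you list mix the two conventions.

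\textbf{Correct bookkeeping.} Put $\rho=\rho_1/\rho_2$, $a'=\frac{\sigma_1}{\sigma_2}e^{s(\lambda(\rho_2)-\lambda(\rho_1))}$ and $x=\theta_\star-\alpha-1$. Then $\widetilde\Theta_n=\Theta(x;\rho,a')-x(x-1)\log\rho-x\log a'$, so the difference is $\Theta(x;\rho,a'/\mathsf Q)-\Theta(x;\rho,a')+x\log\mathsf Q$. This yields exactly:
\begin{itemize}
\item $-\frac12\log\mathsf Q$;
\item the quadratic term $\frac{(\log(a'/\mathsf Q))^2-(\log a')^2}{4\log(\rho_2/\rho_1)}$;
\item $+(\theta_\star-\alpha-1)\log\mathsf Q$;
\item the theta shift $+\frac{\log\mathsf Q}{2\log(\rho_2/\rho_1)}$.
\end{itemize}

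\textbf{No further linear terms.} This $x\log\mathsf Q$ is the only linear term in $C^{(\mathrm{osc})}_{E_6^{(\mathrm{an})}}$. The Euler--Maclaurin endpoint terms at $\sigma_\star$, namely $(\frac12-\theta_\star)\log D_{\mathrm{in}}(\sigma_\star)+(\theta_\star-\frac12)\log D_{\mathrm{out}}(\sigma_\star)$, already sit in $C^{(6)}_{E_6^{(\mathrm{an})}}$. There are no additional ``shift of summation range'' $\log\mathsf Q$ terms; adding them would double count.

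Your proposed check at $\mathsf Q=1$ cannot detect any of this, since every $\mathsf Q$-dependent term vanishes there. Compare the terms of first order in $\log\mathsf Q$ instead.
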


The central annular range contains the point at which the two boundary
contributions have equal exponential weight.  We therefore split it at
$j_\star=n\sigma_\star$:
$E_{6}^{(\mathrm{an})}=E_{6}^{(1,\mathrm{an})}+E_{6}^{(2,\mathrm{an})}$, where
\[
E_{6}^{(1,\mathrm{an})}=\sum_{j=j_{1,+}+1}^{\lfloor j_{\star}\rfloor}\log\Big( 
1+\sum_{\ell=1}^{7m}\omega_{\ell}F_{n,j,\ell}^{(\mathrm{an})}
\Bigr) ,\qquad   
E_{6}^{(2,\mathrm{an})}=\sum_{j=\lfloor j_{\star}\rfloor+1}^{j_{2,-}-1}\log\Big( 
1+\sum_{\ell=1}^{7m}\omega_{\ell}F_{n,j,\ell}^{(\mathrm{an})}
\Bigr),
\]
and
\[
j_{\star}:=n\sigma_{\star},\qquad
\theta_{\star}:=j_{\star}-\lfloor j_{\star}\rfloor\in[0,1).
\]
This convention assigns the integer $\lfloor j_\star\rfloor$ to the left
range and makes the two endpoint corrections complementary.

\begin{lemma}
\label{lemma:asymptotics E6(1)(2) counting osc}
As $n\to\infty$,
\[
    E_{6}^{(1,\mathrm{an})}
   =
    C_{E_{6}^{(1,\mathrm{an})}}^{(3)}n 
    +C_{E_{6}^{(1,\mathrm{an})}}^{(4)}\sqrt{n}
    +C_{E_{6}^{(1,\mathrm{an})}}^{(5)}\log n
    +C_{E_{6}^{(1,\mathrm{an})}}^{(6)}
    +\widehat{C}_{E_{6}^{(1,\mathrm{an})}}^{(6)}
    +\mathcal{O}\Bigl(\frac{(\log n)^2}{n}\Bigr), 
\] 
where 
\begin{align*}
C_{E_{6}^{(1,\mathrm{an})}}^{(3)}
&:=
\int_{\frac{\tau_{\rho_1}}{1-\epsilon}}^{\sigma_{\star}}
\log\Bigl(1+\mathsf{T}_{0}^{(2m+1,3m)}(x)+\widehat{\mathsf{T}}_0^{(3m+1,7m)}(\tau_{\rho_2})\Bigr)\,dx,
\quad
C_{E_{6}^{(1,\mathrm{an})}}^{(4)}
:=0,
\quad
C_{E_{6}^{(1,\mathrm{an})}}^{(5)}
:=0,
\\
C_{E_{6}^{(1,\mathrm{an})}}^{(6)}   
&:=
\Bigl(\theta_{1,+}^{(n,\epsilon)}-\frac{1}{2}\Bigr)
\log\Bigl(1+\mathsf{T}_{0}^{(2m+1,3m)}(\tfrac{\tau_{\rho_1}}{1-\epsilon})+\widehat{\mathsf{T}}_0^{(3m+1,7m)}(\tau_{\rho_2})\Bigr)
\\
&\quad 
+\Bigl(\frac{1}{2}-\theta_{\star}\Bigr)
\log\Bigl(1+\mathsf{T}_{0}^{(2m+1,3m)}(\sigma_{\star})+\widehat{\mathsf{T}}_0^{(3m+1,7m)}(\tau_{\rho_2})\Bigr)
\\
&\quad 
-
\int_{\frac{\tau_{\rho_1}}{1-\epsilon}}^{\sigma_{\star}}
\\
&\qquad\times\frac{\frac{2\rho_1^2\Delta Q(\rho_1)}{x-\tau_{\rho_1}}
\mathsf{T}_{1}^{(2m+1,3m)}(x)
+
x
\mathsf{T}_{2}^{(2m+1,3m)}(x)
}{
1+\mathsf{T}_{0}^{(2m+1,3m)}(x)+\widehat{\mathsf{T}}_0^{(3m+1,7m)}(\tau_{\rho_2})}\,dx
\\
&\quad 
+
\Bigl(\frac{1}{2}\rho_1\mathsf{k}'(\rho_1)+1\Bigr)
\log\Bigl(1+\mathsf{T}_{0}^{(2m+1,3m)}(\sigma_{\star})+\widehat{\mathsf{T}}_0^{(3m+1,7m)}(\tau_{\rho_2})\Bigr)
\\
&\quad
-
\Bigl(\frac{1}{2}\rho_1\mathsf{k}'(\rho_1)+1\Bigr)
\log\Bigl(1+\mathsf{T}_{0}^{(2m+1,3m)}(\tfrac{\tau_{\rho_1}}{1-\epsilon})+\widehat{\mathsf{T}}_0^{(3m+1,7m)}(\tau_{\rho_2})\Bigr)
\\
&\quad 
-
\bigl(2\rho_1^2\Delta Q(\rho_1)-\tau_{\rho_1}\bigr)
\int_{\frac{\tau_{\rho_1}}{1-\epsilon}}^{\sigma_{\star}}
\frac{
\mathsf{T}_{2}^{(2m+1,3m)}(x)
}{
1+\mathsf{T}_{0}^{(2m+1,3m)}(x)+\widehat{\mathsf{T}}_0^{(3m+1,7m)}(\tau_{\rho_2})}\,dx, 
\\
\widehat{C}_{E_{6}^{(1,\mathrm{an})}}^{(6)}
&:=
\sum_{j=0}^{+\infty}
\log\left[\begin{aligned}&
1-\frac{\mathsf{T}_{0}^{(2m+1,3m)}(\sigma_{\star})+\widehat{\mathsf{T}}_0^{(3m+1,4m)}(\sigma_{\star})}{\mathsf{T}_{0}^{(2m+1,3m)}(\sigma_{\star})
+\widehat{\mathsf{T}}_0^{(3m+1,4m)}(\tau_{\rho_2})
+\Omega_{4m+1}^{(7m)}}
\\
&\qquad\times\frac{\frac{\sigma_{\star}-\tau_{\rho_1}}{\tau_{\rho_2}-\sigma_{\star}}e^{\mathsf{k}(\rho_2)-\mathsf{k}(\rho_1)}(\tfrac{\rho_1}{\rho_2})^{2(\theta_{\star}+j-1)}}{1+\frac{\sigma_{\star}-\tau_{\rho_1}}{\tau_{\rho_2}-\sigma_{\star}}e^{\mathsf{k}(\rho_2)-\mathsf{k}(\rho_1)}(\tfrac{\rho_1}{\rho_2})^{2(\theta_{\star}+j-1)}}
\end{aligned}\right], 
\end{align*}
and 
\[
    E_{6}^{(2,\mathrm{an})}
   =
    C_{E_{6}^{(2,\mathrm{an})}}^{(3)}n 
    +C_{E_{6}^{(2,\mathrm{an})}}^{(4)}\sqrt{n}
    +C_{E_{6}^{(2,\mathrm{an})}}^{(5)}\log n
    +C_{E_{6}^{(2,\mathrm{an})}}^{(6)}
    +\widehat{C}_{E_{6}^{(2,\mathrm{an})}}^{(6)}
    +\mathcal{O}\Bigl(\frac{(\log n)^2}{n}\Bigr), 
\] 
where 
\begin{align*}
C_{E_{6}^{(2,\mathrm{an})}}^{(3)}
&:=
\int_{\sigma_{\star}}^{\frac{\tau_{\rho_2}}{1+\epsilon}}
\log\Bigl(1-\widehat{\mathsf{T}}_0^{(3m+1,4m)}(x)+\widehat{\mathsf{T}}_0^{(3m+1,7m)}(\tau_{\rho_2})\Bigr)\,dx,
\quad
C_{E_{6}^{(2,\mathrm{an})}}^{(4)}
:=0,
\quad
C_{E_{6}^{(2,\mathrm{an})}}^{(5)}
:=0,
\\
C_{E_{6}^{(2,\mathrm{an})}}^{(6)}  
&:=
\Bigl(\theta_{\star}-\frac{1}{2}\Bigr)
\log\Bigl(1-\widehat{\mathsf{T}}_0^{(3m+1,4m)}(\sigma_{\star})+\widehat{\mathsf{T}}_0^{(3m+1,7m)}(\tau_{\rho_2})\Bigr)
\\
&\quad 
+\Bigl(\theta_{2,-}^{(n,\epsilon)}-\frac{1}{2}\Bigr)
\log\Bigl(1-\widehat{\mathsf{T}}_0^{(3m+1,4m)}(\tfrac{\tau_{\rho_2}}{1+\epsilon})+\widehat{\mathsf{T}}_0^{(3m+1,7m)}(\tau_{\rho_2})\Bigr)
\\
&\quad 
+\int_{\sigma_{\star}}^{\frac{\tau_{\rho_2}}{1+\epsilon}}
\frac{\frac{2\rho_2^2\Delta Q(\rho_2)}{\tau_{\rho_2}-x}
\widehat{\mathsf{T}}_{1}^{(3m+1,4m)}(x)
+x\widehat{\mathsf{T}}_{2}^{(3m+1,4m)}(x)}{1-\widehat{\mathsf{T}}_0^{(3m+1,4m)}(x)+\widehat{\mathsf{T}}_0^{(3m+1,7m)}(\tau_{\rho_2})}\,dx
\\
&\quad 
+\Bigl(\frac{1}{2}\rho_2\mathsf{k}'(\rho_2)+1\Bigr)
\log\Bigl(1-\widehat{\mathsf{T}}_0^{(3m+1,4m)}(\tfrac{\tau_{\rho_2}}{1+\epsilon})+\widehat{\mathsf{T}}_0^{(3m+1,7m)}(\tau_{\rho_2})\Bigr)
\\
&\quad
-\Bigl(\frac{1}{2}\rho_2\mathsf{k}'(\rho_2)+1\Bigr)\log\Bigl(1-\widehat{\mathsf{T}}_0^{(3m+1,4m)}(\sigma_{\star})+\widehat{\mathsf{T}}_0^{(3m+1,7m)}(\tau_{\rho_2})\Bigr)
\\
&\quad 
+\bigl( 
2\rho_2^2\Delta Q(\rho_2)-\tau_{\rho_2}
\bigr)
\int_{\sigma_{\star}}^{\frac{\tau_{\rho_2}}{1+\epsilon}}
\frac{\widehat{\mathsf{T}}_{2}^{(3m+1,4m)}(x)}{1-\widehat{\mathsf{T}}_0^{(3m+1,4m)}(x)+\widehat{\mathsf{T}}_0^{(3m+1,7m)}(\tau_{\rho_2})}\,dx, 
\\
\widehat{C}_{E_{6}^{(2,\mathrm{an})}}^{(6)}  
&:=
\sum_{j=0}^{+\infty}
\log\left[\begin{aligned}&
1+\frac{\mathsf{T}_0^{(2m+1,3m)}(\sigma_{\star})+\widehat{\mathsf{T}}_0^{(3m+1,4m)}(\sigma_{\star})}{1-\widehat{\mathsf{T}}_0^{(3m+1,4m)}(\sigma_{\star})+\widehat{\mathsf{T}}_0^{(3m+1,7m)}(\tau_{\rho_2})}
\\
&\qquad\times\frac{
\frac{\tau_{\rho_2}-\sigma_{\star}}{\sigma_{\star}-\tau_{\rho_1}}
e^{\mathsf{k}(\rho_1)-\mathsf{k}(\rho_2)}
(\tfrac{\rho_1}{\rho_2})^{2(j-\theta_{\star}+1+1)}
}{
1+\frac{\tau_{\rho_2}-\sigma_{\star}}{\sigma_{\star}-\tau_{\rho_1}}e^{\mathsf{k}(\rho_1)-\mathsf{k}(\rho_2)}( \tfrac{\rho_1}{\rho_2})^{2(j-\theta_{\star}+1+1)}
}
\end{aligned}\right].
\end{align*}
\end{lemma}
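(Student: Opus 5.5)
We treat $E_6^{(1,\mathrm{an})}$ in detail; $E_6^{(2,\mathrm{an})}$ follows by the symmetric argument with $\rho_1\leftrightarrow\rho_2$, using Lemma~\ref{lemma: jstar leq j leq j2-} in place of Lemma~\ref{lemma: j1+ j jdiamond}.

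The first step is to obtain a uniform two-endpoint expansion for every numerator $h_{n,j,\ell}^{(\mathrm{an})}$ in the range $j_{1,+}+1\le j\le\lfloor j_\star\rfloor$. We reuse the four-piece decomposition \eqref{def of decomposition middle hole}. The bulk and semi-hard radii $\ell\le 2m$ at $\rho_1$ contribute only $\mathcal O(e^{-cM})$ relative corrections, as in part~(ii) of Lemma~\ref{lemma:semi hard edge case asymptotic expansion disck complement}, so they saturate to the full $\rho_1$-endpoint mass. The inner hard radii $2m+1\le\ell\le 3m$ truncate the $\rho_1$-endpoint integral to $[\rho_1(1-t_\ell/n),\rho_1]$. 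This produces the factor $1-e^{-t_\ell\rho_1\eta_1}\bigl(1-\mathsf a_{1,1}/n+\dots\bigr)$, with the same coefficients $\mathsf a_{k,1}$ as in the disk-complement tail.

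The outer radii $\ell\ge 3m+1$ always contain the whole $\rho_1$ mass together with a piece of the $\rho_2$-endpoint mass. For the outer hard radii this piece is truncated at $\rho_2(1+t_\ell/n)$, giving $1-e^{-2t_\ell(\tau_{\rho_2}-\tau)}$. For the outer semi-hard, bulk and soft radii the piece is the full $\rho_2$ mass up to $\mathcal O(e^{-cM})$.

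Writing $w_j:=\frac{\tau_j-\tau_{\rho_1}}{\tau_{\rho_2}-\tau_j}e^{\mathsf k(\rho_2)-\mathsf k(\rho_1)}(\rho_1/\rho_2)^{2(j_\star-j-1)}$ for the relative weight of $\rho_2$ appearing in Lemma~\ref{lemma: j1+ j jdiamond}, the argument of the logarithm then becomes
\[
\frac{1+\mathsf T_0^{(2m+1,3m)}(\tau_j)+\widehat{\mathsf T}_0^{(3m+1,7m)}(\tau_{\rho_2}) + w_j\bigl(\Omega_{4m+1}^{(7m)}+\widehat{\mathsf T}_0^{(3m+1,4m)}(\tau_{\rho_2})-\widehat{\mathsf T}_0^{(3m+1,4m)}(\tau_j)\bigr)}{1+w_j}
\]
times $(1+\mathcal O(n^{-1}))$ corrections. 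To confirm the shape of this formula, we check the telescoping identity $1+\sum_{\ell\le 7m}\omega_\ell=\Omega_1^{(7m)}$ together with the partial sums defining $\Omega^{(7m)}_{q}$.

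Next we factor out the $w$-independent part $1+\mathsf T_0^{(2m+1,3m)}(\tau_j)+\widehat{\mathsf T}_0^{(3m+1,7m)}(\tau_{\rho_2})$. The logarithm then splits into a smooth function of $\tau_j$ plus $\log\bigl(1-\kappa(\tau_j)\,w_j/(1+w_j)\bigr)$. The second term decays geometrically once $\lfloor j_\star\rfloor-j\gg\log n$.

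For the smooth part we proceed as in Lemma~\ref{lemma:asymptotic expansion E5 annulus}: expand the logarithm through order $n^{-1}$ using $\mathscr H_{\rho_1,1}$, and apply Lemma~\ref{lemma:Riemann sum NEW} with $A=\tau_{\rho_1}/(1-\epsilon)$, $a_0=1-\theta_{1,+}^{(n,\epsilon)}$, $B=\sigma_\star$ and $b_0=-\theta_\star$. Because $\tau-\tau_{\rho_1}$ is bounded below here, there is no singular endpoint and no $\log n$ term. The $1/n$ corrections integrate to the $\mathscr F_1,\mathscr F_2,\mathscr F_3$ integrals, and $\int\mathscr F_2$ is evaluated as an exact logarithm. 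Together with the Euler--Maclaurin boundary terms this yields $C^{(3)}$ and $C^{(6)}$.

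For the oscillatory part we shift the index $j\mapsto\lfloor j_\star\rfloor-j$ and freeze the smooth prefactors at $\sigma_\star$. Following the treatment of $\mathcal S_0$ in Lemma~\ref{lemma:S3 theta function part}, we split at distance $M'\log n$ from $j_\star$: the frozen-coefficient error on the near part is $\mathcal O((\log n)^2/n)$ and the far tail is negligible. This gives the infinite sum $\widehat C^{(6)}_{E_6^{(1,\mathrm{an})}}$. In the frozen prefactor we substitute $\kappa(\sigma_\star)=1-\frac{\Omega_{4m+1}^{(7m)}+\widehat{\mathsf T}_0^{(3m+1,4m)}(\tau_{\rho_2})-\widehat{\mathsf T}_0^{(3m+1,4m)}(\sigma_\star)}{1+\mathsf T_0^{(2m+1,3m)}(\sigma_\star)+\widehat{\mathsf T}_0^{(3m+1,7m)}(\tau_{\rho_2})}$ and rewrite it in the displayed form using $\widehat{\mathsf T}_0^{(3m+1,7m)}(\tau_{\rho_2})=\widehat{\mathsf T}_0^{(3m+1,4m)}(\tau_{\rho_2})+\Omega^{(7m)}_{4m+1}-1$.

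The main obstacle is the first step: getting the uniform joint expansion for all $7m$ radii simultaneously in the presence of the competing second endpoint, while checking that the $\mathcal O\bigl(n^{-1}(\rho_1/\rho_2)^{2(j_\star-j)}\bigr)$ cross-errors sum only to $\mathcal O(\log n/n)$. The bookkeeping that reduces the ratio to the single scalar $\kappa(\sigma_\star)$ is where errors are most likely to enter.
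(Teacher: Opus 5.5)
Your proposal is correct and follows the paper's proof essentially step for step. You use the same reduction of the summand to
$D_j\bigl(1-\kappa(\tau_j)\tfrac{w_j}{1+w_j}\bigr)-\tfrac{\mathscr H_{\rho_1,1}}{n}$ with $D_j=1+\mathsf T_0^{(2m+1,3m)}(\tau_j)+\widehat{\mathsf T}_0^{(3m+1,7m)}(\tau_{\rho_2})$, the same Euler--Maclaurin parameters for the smooth part, and the same freezing at $\sigma_\star$ with a cutoff at distance $\mathcal O(\log n)$ from $j_\star$ for the theta-type series.

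One wording fix: for $\ell\le 2m$ it is the removed piece $[r_\ell,\rho_1]$ that carries the full $\rho_1$-endpoint mass, so $F_{n,j,\ell}^{(\mathrm{an})}$ itself is negligible there. Your displayed formula correctly uses exactly this.
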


\begin{proof}[Proof of Lemma~\ref{lemma:asymptotics E6(1)(2) counting osc}]
For $j_{1,+}+1\le j\le j_{2,-}-1$ and $\rho_1<r<\rho_2$, we have
\[
    V_{\tau}'(r)\leq\frac{1}{r}\Bigl(\frac{2\tau_{\rho_2}}{1+\epsilon}-\frac{2\tau_{\rho_1}}{1-\epsilon}
    +\frac{2\epsilon\tau_{\rho_2}}{1+\epsilon}\Bigr), 
    \quad
    V_{\tau}'(r)\geq -\frac{1}{r}\Bigl(\frac{2\epsilon\tau_{\rho_1}}{1-\epsilon}+\frac{2\tau_{\rho_2}}{1+\epsilon}-\frac{2\tau_{\rho_1}}{1-\epsilon}\Bigr).
\]
Together with the strict monotonicity of $r q'(r)$, these bounds imply that
$V_\tau$ has a unique critical point in $(\rho_1,\rho_2)$.  The Laplace
estimates used in Lemmas~\ref{lemma: j1+ j jdiamond}
and~\ref{lemma: jstar leq j leq j2-} are therefore uniform throughout the
present range.  In particular, for every
$j_{1,+}+1\le j\le j_{2,-}-1$,
\begin{align*}
\log\Bigl(1+\sum_{\ell=1}^{7m}\omega_\ell
F_{n,j,\ell}^{(\mathrm{an})}\Bigr)
&=
\log\bigg(
\sum_{\ell=2m+1}^{3m}\omega_{\ell}F_{n,j,\ell}^{(\mathrm{an})}
+\sum_{\ell=3m+1}^{4m}\omega_{\ell}F_{n,j,\ell}^{(\mathrm{an})}
+\Omega_{4m+1}^{(7m)}\bigg)+\mathcal{O}(e^{-c\sqrt{n}}).
\end{align*}
for a constant $c>0$ independent of $j$.  We first treat
$j_{1,+}+1\le j\le\lfloor j_\star\rfloor$.  Apply the endpoint Laplace
expansion of Lemma~\ref{lemma: j1+ j jdiamond} at $\rho_2$; its remainder is
uniform throughout this index range.  This gives
\begin{align*}
2\int_{\rho_2}^{\rho_2+\frac{\rho_2t_{\ell}}{n}}re^{\mathsf{k}(r)}e^{-nV_{\tau}(r)}\,dr 
&=
\frac{2\rho_2e^{\mathsf{k}(\rho_2)}e^{-nV_{\tau}(\rho_1)}}{\eta_2n}
\Bigl(\frac{\rho_1}{\rho_2}\Bigr)^{2(j_{\star}-j)}
\bigg[
1-e^{-\rho_2t_{\ell}\eta_2}
+
e^{-\rho_2t_{\ell}\eta_2}\frac{\widetilde{\mathsf{a}}_1(\kappa_{\ell})}{n}
\\
&\quad
+e^{-\rho_2t_{\ell}\eta_2}\frac{\widetilde{\mathsf{a}}_2(\kappa_{\ell})}{n^2}
+e^{-\rho_2t_{\ell}\eta_2}\frac{\widetilde{\mathsf{a}}_3(\kappa_{\ell})}{n^3}
+\mathcal{O}\Bigl(\frac{1}{n^4\eta_2^8}\Bigr)
\bigg],
\end{align*}
where, with $\tau_j:=j/n$, we set
\[
\eta_2\equiv\eta_2(j):=V_{\tau_j}'(\rho_2)
=\frac{2}{\rho_2}(\tau_{\rho_2}-\tau_j),
\qquad
\kappa_{\ell}:=\rho_2\eta_2t_{\ell}
=2(\tau_{\rho_2}-\tau_j)t_{\ell}.
\]
We also write
\[
\mathcal V_2(\rho_2):=4\Delta Q(\rho_2),\qquad
\mathcal V_3(\rho_2):=4\partial_r\Delta Q(\rho_2)-\frac{4\Delta Q(\rho_2)}{\rho_2}.
\]
Thus $V_{\tau_j}''(\rho_2)=\mathcal V_2(\rho_2)-\eta_2/\rho_2$ and
$V_{\tau_j}^{(3)}(\rho_2)=\mathcal V_3(\rho_2)+2\eta_2/\rho_2^2$.  For $k\ge0$, let
\begin{align*}
\widetilde{\mathsf{a}}_{1,k}(\kappa_{\ell})
&:=-\Bigl(\mathcal{V}_2(\rho_2)-\frac{\eta_2}{\rho_2}\Bigr)\frac{1}{\eta_2^2}\widetilde{\mathsf{e}}_{2,k}(\kappa_{\ell})+\Bigl(\mathsf{k}'(\rho_2)+\frac{1}{\rho_2}\Bigr)\frac{1}{\eta_2}\widetilde{\mathsf{e}}_{1,k}(\kappa_{\ell}),
\\
\widetilde{\mathsf{a}}_{2,k}(\kappa_{\ell})
&:=
\Bigl(\mathcal{V}_2(\rho_2)-\frac{\eta_2}{\rho_2}\Bigr)^2
\frac{3\widetilde{\mathsf{e}}_{4,k}(\kappa_{\ell})}{\eta_2^4}
-
\Bigl\{
3\Bigl(\mathcal{V}_2(\rho_2)-\frac{\eta_2}{\rho_2}\Bigr)
\Bigl(\mathsf{k}'(\rho_2)+\frac{1}{\rho_2}\Bigr)
+\mathcal{V}_3(\rho_2)+\frac{2\eta_2}{\rho_2^2}
\Bigr\}
\frac{\widetilde{\mathsf{e}}_{3,k}(\kappa_{\ell})}{\eta_2^3}
\\
&\quad 
+\Bigl(\frac{2\mathsf{k}'(\rho_2)}{\rho_2}+\mathsf{k}'(\rho_2)^2+\mathsf{k}''(\rho_2)\Bigr)
\frac{\widetilde{\mathsf{e}}_{2,k}(\kappa_{\ell})}{\eta_2^2}, 
\\
\widetilde{\mathsf{a}}_{3,k}(\kappa_{\ell})
&:=
-\Bigl(\mathcal{V}_2(\rho_2)-\frac{\eta_2}{\rho_2}\Bigr)^3
\frac{15\widetilde{\mathsf{e}}_{6,k}(\kappa_{\ell})}{\eta_2^6}
\\
&\quad 
+
\Bigl\{
3\Bigl(\mathcal{V}_2(\rho_2)-\frac{\eta_2}{\rho_2}\Bigr)^2\Bigl(\mathsf{k}'(\rho_2)+\frac{1}{\rho_2}\Bigr)
+
2\Bigl(\mathcal{V}_2(\rho_2)-\frac{\eta_2}{\rho_2}\Bigr)\Bigl(\mathcal{V}_3(\rho_2)+\frac{2\eta_2}{\rho_2^2}\Bigr)
\Bigr\}\frac{5\widetilde{\mathsf{e}}_{5,k}(\kappa_{\ell})}{\eta_2^5}
\\
&\quad 
+
\mathcal{O}\Bigl( 
\frac{\widetilde{\mathsf{e}}_{4,k}(\kappa_{\ell})}{\eta_2^4}(1+\eta_2)
+
\frac{\widetilde{\mathsf{e}}_{3,k}(\kappa_{\ell})}{\eta_2^3}
\Bigr), 
\end{align*} 
where
$\widetilde{\mathsf e}_{p,k}(x):=e^x-\mathsf e_{p,k}(x)$,
$\mathsf e_{p,k}$ is defined in \eqref{def of mathsf ek}, and
$\widetilde{\mathsf a}_{p}:=\widetilde{\mathsf a}_{p,0}$ for $p=1,2,3$.
To make the ratio expansion used below explicit, put
\begin{align*}
D_j&:=\mathsf T_0^{(2m+1,3m)}(\tau_j)
+\widehat{\mathsf T}_0^{(3m+1,4m)}(\tau_{\rho_2})
+\Omega_{4m+1}^{(7m)},\\
A_j&:=\mathsf T_0^{(2m+1,3m)}(\tau_j)
+\widehat{\mathsf T}_0^{(3m+1,4m)}(\tau_j),\\
R_j&:=\frac{\tau_j-\tau_{\rho_1}}{\tau_{\rho_2}-\tau_j}
e^{\mathsf k(\rho_2)-\mathsf k(\rho_1)}
\left(\frac{\rho_1}{\rho_2}\right)^{2(j_\star-j-1)},\\
N_j&:=\sum_{\ell=2m+1}^{3m}\omega_\ell
e^{-2t_\ell(\tau_j-\tau_{\rho_1})}
\mathsf a_{1,1}(\varphi_\ell;\eta_1).
\end{align*}
The endpoint expansions above and
Lemma~\ref{lemma: j1+ j jdiamond} give, uniformly for
$j_{1,+}+1\le j\le\lfloor j_\star\rfloor$,
\[
\sum_{\ell=2m+1}^{3m}\omega_\ell F_{n,j,\ell}^{(\mathrm{an})}
+\sum_{\ell=3m+1}^{4m}\omega_\ell F_{n,j,\ell}^{(\mathrm{an})}
+\Omega_{4m+1}^{(7m)}
=D_j-\frac{R_j}{1+R_j}A_j-\frac{N_j}{n}+\mathcal R_{n,j},
\]
where the error made by replacing the exact first-order denominator by $D_j$
is geometrically localized at $j_\star$ and
\[
\sum_{j=j_{1,+}+1}^{\lfloor j_\star\rfloor}|\mathcal R_{n,j}|
=\mathcal O\!\left(\frac{(\log n)^2}{n}\right).
\]
Since $D_j$ and the logarithmic argument are uniformly bounded away from
zero, Taylor's formula yields
\begin{align*}
&\quad \log\!\left(D_j-\frac{R_j}{1+R_j}A_j-\frac{N_j}{n}
+\mathcal R_{n,j}\right)
\\
&=\log D_j+\log\!\left(1-\frac{A_j}{D_j}\frac{R_j}{1+R_j}\right)
-\frac{N_j}{nD_j}+r_{n,j},
\\
&\quad\sum_j|r_{n,j}|=\mathcal O\!\left(\frac{(\log n)^2}{n}\right).
\end{align*}
Substituting this logarithmic expansion into the sum over the left range
gives, as $n\to\infty$,
\begin{align*}
&\quad\sum_{j=j_{1,+}+1}^{\lfloor j_{\star}\rfloor}    
\log
\Bigl(\sum_{\ell=2m+1}^{3m}\omega_{\ell}F_{n,j,\ell}^{(\mathrm{an})}
+\sum_{\ell=3m+1}^{4m}\omega_{\ell}F_{n,j,\ell}^{(\mathrm{an})}
+\Omega_{4m+1}^{(7m)}
\Bigr)
\\
&=
\sum_{j=j_{1,+}+1}^{\lfloor j_{\star}\rfloor}
\log\Bigl(\mathsf{T}_{0}^{(2m+1,3m)}(j/n)
+\widehat{\mathsf{T}}_0^{(3m+1,4m)}(\tau_{\rho_2})
+\Omega_{4m+1}^{(7m)}\Bigr)
\\
&\quad
+\sum_{j=j_{1,+}+1}^{\lfloor j_{\star}\rfloor}
\log\left[\begin{aligned}&
1-\frac{\mathsf{T}_{0}^{(2m+1,3m)}(j/n)+\widehat{\mathsf{T}}_0^{(3m+1,4m)}(j/n)}{\mathsf{T}_{0}^{(2m+1,3m)}(j/n)
+\widehat{\mathsf{T}}_0^{(3m+1,4m)}(\tau_{\rho_2})
+\Omega_{4m+1}^{(7m)}}
\\
&\qquad\times\frac{\frac{\tau-\tau_{\rho_1}}{\tau_{\rho_2}-\tau}e^{\mathsf{k}(\rho_2)-\mathsf{k}(\rho_1)}(\tfrac{\rho_1}{\rho_2})^{2(j_{\star}-j-1)}}{1+\frac{\tau-\tau_{\rho_1}}{\tau_{\rho_2}-\tau}e^{\mathsf{k}(\rho_2)-\mathsf{k}(\rho_1)}(\tfrac{\rho_1}{\rho_2})^{2(j_{\star}-j-1)}}
\end{aligned}\right]
\\
&\quad
-\frac{1}{n}
\sum_{j=j_{1,+}+1}^{\lfloor j_{\star}\rfloor}
\frac{
\sum_{\ell={2m+1}}^{3m}\omega_{\ell}
e^{-2t_{\ell}(j/n-\tau_{\rho_1})}
\mathsf{a}_{1,1}(\varphi_{\ell};\eta_1)
}{\mathsf{T}_{0}^{(2m+1,3m)}(j/n)
+\widehat{\mathsf{T}}_0^{(3m+1,4m)}(\tau_{\rho_2})
+\Omega_{4m+1}^{(7m)}}
+
\sum_{j=j_{1,+}+1}^{\lfloor j_{\star}\rfloor}
\mathcal{O}\Bigl(\frac{1}{n^2}+\frac{1}{n}\Bigl(\frac{\rho_1}{\rho_2}\Bigr)^{2(j_{\star}-j)}\Bigr)
\end{align*}
The last sum is
$\mathcal O(n^{-1})+\mathcal O(n^{-1}\sum_{k\ge0}(\rho_1/\rho_2)^{2k})
=\mathcal O(n^{-1})$.  We apply
Lemma~\ref{lemma:Riemann sum NEW} to the smooth sums.  For the first term on
the right-hand side,
\begin{align*}
&\quad 
\sum_{j=j_{1,+}+1}^{\lfloor j_{\star}\rfloor}
\log\Bigl(1+\mathsf{T}_{0}^{(2m+1,3m)}(j/n)
+\widehat{\mathsf{T}}_0^{(3m+1,7m)}(\tau_{\rho_2})\Bigr)
\\
&=n\int_{\frac{\tau_{\rho_1}}{1-\epsilon}}^{\sigma_{\star}}
\log\Bigl(1+\mathsf{T}_{0}^{(2m+1,3m)}(x)+\widehat{\mathsf{T}}_0^{(3m+1,7m)}(\tau_{\rho_2})\Bigr)\,dx
\\
&\quad 
+\Bigl(\theta_{1,+}^{(n,\epsilon)}-\frac{1}{2}\Bigr)
\log\Bigl(1+\mathsf{T}_{0}^{(2m+1,3m)}(\tfrac{\tau_{\rho_1}}{1-\epsilon})+\widehat{\mathsf{T}}_0^{(3m+1,7m)}(\tau_{\rho_2})\Bigr)
\\
&\quad 
+\Bigl(\frac{1}{2}-\theta_{\star}\Bigr)
\log\Bigl(1+\mathsf{T}_{0}^{(2m+1,3m)}(\sigma_{\star})+\widehat{\mathsf{T}}_0^{(3m+1,7m)}(\tau_{\rho_2})\Bigr)
+\mathcal{O}(n^{-1}), 
\end{align*}
and 
\begin{align*}
  &\quad 
\sum_{j=j_{1,+}+1}^{\lfloor j_{\star}\rfloor}
\log\bigg[
1-\frac{\mathsf{T}_{0}^{(2m+1,3m)}(j/n)+\widehat{\mathsf{T}}_0^{(3m+1,4m)}(j/n)}{\mathsf{T}_{0}^{(2m+1,3m)}(j/n)
+\widehat{\mathsf{T}}_0^{(3m+1,4m)}(\tau_{\rho_2})
+\Omega_{4m+1}^{(7m)}}
\frac{\frac{\tau-\tau_{\rho_1}}{\tau_{\rho_2}-\tau}e^{\mathsf{k}(\rho_2)-\mathsf{k}(\rho_1)}(\tfrac{\rho_1}{\rho_2})^{2(j_{\star}-j-1)}}{1+\frac{\tau-\tau_{\rho_1}}{\tau_{\rho_2}-\tau}e^{\mathsf{k}(\rho_2)-\mathsf{k}(\rho_1)}(\tfrac{\rho_1}{\rho_2})^{2(j_{\star}-j-1)}}
\bigg]
\\
&=
\sum_{j=0}^{+\infty}
\log\bigg[
1-\frac{\mathsf{T}_{0}^{(2m+1,3m)}(\sigma_{\star})+\widehat{\mathsf{T}}_0^{(3m+1,4m)}(\sigma_{\star})}{\mathsf{T}_{0}^{(2m+1,3m)}(\sigma_{\star})
+\widehat{\mathsf{T}}_0^{(3m+1,4m)}(\tau_{\rho_2})
+\Omega_{4m+1}^{(7m)}}
\frac{\frac{\sigma_{\star}-\tau_{\rho_1}}{\tau_{\rho_2}-\sigma_{\star}}e^{\mathsf{k}(\rho_2)-\mathsf{k}(\rho_1)}(\tfrac{\rho_1}{\rho_2})^{2(\theta_{\star}+j-1)}}{1+\frac{\sigma_{\star}-\tau_{\rho_1}}{\tau_{\rho_2}-\sigma_{\star}}e^{\mathsf{k}(\rho_2)-\mathsf{k}(\rho_1)}(\tfrac{\rho_1}{\rho_2})^{2(\theta_{\star}+j-1)}}
\bigg]
\\
&\quad 
+\mathcal{O}\Bigl(\frac{(\log n)^2}{n}\Bigr). 
\end{align*}
To justify the second identity, truncate the sum at $K\log n$.  On this range
the coefficient functions may be replaced by their values at $\sigma_\star$
with total error $\mathcal O((\log n)^2/n)$; beyond it, the geometric factor
$(\rho_1/\rho_2)^{2j}$ gives a smaller tail.  Extending the truncated sum to
$\mathbb N_0$ therefore has the same error.  This is the shifted-index version
of the argument in \cite{ACCL2}; see also the proof of
Lemma~\ref{lemma:S3 theta function part}.

For the final correction sum, first rewrite
\begin{align*}
&\quad 
-\frac{1}{n}
\sum_{j=j_{1,+}+1}^{\lfloor j_{\star}\rfloor}
\frac{
\sum_{\ell={2m+1}}^{3m}\omega_{\ell}
e^{-2t_{\ell}(j/n-\tau_{\rho_1})}
\mathsf{a}_{1,1}(\varphi_{\ell};\eta_1)
}{
1+\mathsf{T}_{0}^{(2m+1,3m)}(j/n)+\widehat{\mathsf{T}}_0^{(3m+1,7m)}(\tau_{\rho_2})}
\\
&=
-\frac{1}{n}
\sum_{j=j_{1,+}+1}^{\lfloor j_{\star}\rfloor}
\frac{\frac{2\rho_1^2\Delta Q(\rho_1)}{j/n-\tau_{\rho_1}}
\mathsf{T}_{1}^{(2m+1,3m)}(j/n)
+
\frac{j}{n}
\mathsf{T}_{2}^{(2m+1,3m)}(j/n)
}{
1+\mathsf{T}_{0}^{(2m+1,3m)}(j/n)+\widehat{\mathsf{T}}_0^{(3m+1,7m)}(\tau_{\rho_2})}
\\
&\quad 
-\frac{1}{n}
\sum_{j=j_{1,+}+1}^{\lfloor j_{\star}\rfloor}
\frac{2\Bigl(\frac{1}{2}\rho_1\mathsf{k}'(\rho_1)+1\Bigr)
\mathsf{T}_{1}^{(2m+1,3m)}(j/n)
+
\bigl(2\rho_1^2\Delta Q(\rho_1)-\tau_{\rho_1}\bigr)
\mathsf{T}_{2}^{(2m+1,3m)}(j/n)
}{
1+\mathsf{T}_{0}^{(2m+1,3m)}(j/n)+\widehat{\mathsf{T}}_0^{(3m+1,7m)}(\tau_{\rho_2})},
\end{align*}
and then apply Lemma~\ref{lemma:Riemann sum NEW}; this gives
\begin{align*}
&\quad 
-\frac{1}{n}
\sum_{j=j_{1,+}+1}^{\lfloor j_{\star}\rfloor}
\frac{
\sum_{\ell={2m+1}}^{3m}\omega_{\ell}
e^{-2t_{\ell}(j/n-\tau_{\rho_1})}
\mathsf{a}_{1,1}(\varphi_{\ell};\eta_1)
}{
1+\mathsf{T}_{0}^{(2m+1,3m)}(j/n)+\widehat{\mathsf{T}}_0^{(3m+1,7m)}(\tau_{\rho_2})}
\\
&=
-
\int_{\frac{\tau_{\rho_1}}{1-\epsilon}}^{\sigma_{\star}}
\frac{\frac{2\rho_1^2\Delta Q(\rho_1)}{x-\tau_{\rho_1}}
\mathsf{T}_{1}^{(2m+1,3m)}(x)
+
x\mathsf{T}_{2}^{(2m+1,3m)}(x)
}{
1+\mathsf{T}_{0}^{(2m+1,3m)}(x)+\widehat{\mathsf{T}}_0^{(3m+1,7m)}(\tau_{\rho_2})}\,dx
\\
&\quad 
+
\Bigl(\frac{1}{2}\rho_1\mathsf{k}'(\rho_1)+1\Bigr)
\log\Bigl(1+\mathsf{T}_{0}^{(2m+1,3m)}(\sigma_{\star})+\widehat{\mathsf{T}}_0^{(3m+1,7m)}(\tau_{\rho_2})\Bigr)
\\
&\quad
-
\Bigl(\frac{1}{2}\rho_1\mathsf{k}'(\rho_1)+1\Bigr)
\log\Bigl(1+\mathsf{T}_{0}^{(2m+1,3m)}(\tfrac{\tau_{\rho_1}}{1-\epsilon})+\widehat{\mathsf{T}}_0^{(3m+1,7m)}(\tau_{\rho_2})\Bigr)
\\
&\quad 
-
\bigl(2\rho_1^2\Delta Q(\rho_1)-\tau_{\rho_1}\bigr)
\int_{\frac{\tau_{\rho_1}}{1-\epsilon}}^{\sigma_{\star}}
\frac{
\mathsf{T}_{2}^{(2m+1,3m)}(x)
}{
1+\mathsf{T}_{0}^{(2m+1,3m)}(x)+\widehat{\mathsf{T}}_0^{(3m+1,7m)}(\tau_{\rho_2})}\,dx
+\mathcal{O}(n^{-1}).
\end{align*}
The right-hand range, $\lfloor j_\star\rfloor+1\le j\le j_{2,-}-1$, is treated
in the same way, now expanding from the outer boundary.  As $n\to\infty$,
\begin{align*}
&\quad \sum_{j=\lfloor j_{\star}\rfloor+1}^{j_{2,-}-1}
\log\bigg[
\sum_{\ell=2m+1}^{3m}\omega_{\ell}F_{n,j,\ell}
+\sum_{\ell=3m+1}^{4m}\omega_{\ell}F_{n,j,\ell}
+\Omega_{4m+1}^{(7m)}
\bigg]
\\
&=
\sum_{j=\lfloor j_{\star}\rfloor+1}^{j_{2,-}-1}
\log\Bigl(1-\widehat{\mathsf{T}}_0^{(3m+1,4m)}(j/n)+\widehat{\mathsf{T}}_0^{(3m+1,7m)}(\tau_{\rho_2})\Bigr)
\\
&\quad 
+\sum_{j=\lfloor j_{\star}\rfloor+1}^{j_{2,-}-1}
\log\bigg[
1+\frac{\mathsf{T}_0^{(2m+1,3m)}(j/n)+\widehat{\mathsf{T}}_0^{(3m+1,4m)}(j/n)}{1-\widehat{\mathsf{T}}_0^{(3m+1,4m)}(j/n)+\widehat{\mathsf{T}}_0^{(3m+1,7m)}(\tau_{\rho_2})}
\frac{
\frac{\tau_{\rho_2}-\tau}{\tau-\tau_{\rho_1}}
e^{\mathsf{k}(\rho_1)-\mathsf{k}(\rho_2)}
(\tfrac{\rho_1}{\rho_2})^{2(j-j_{\star}+1)}
}{
1+\frac{\tau_{\rho_2}-\tau}{\tau-\tau_{\rho_1}}e^{\mathsf{k}(\rho_1)-\mathsf{k}(\rho_2)}( \tfrac{\rho_1}{\rho_2})^{2(j-j_{\star}+1)}
}
\bigg]
\\
&\quad 
+\frac{1}{n}\sum_{j=\lfloor j_{\star}\rfloor+1}^{j_{2,-}-1}
\sum_{\ell=3m+1}^{4m}\omega_{\ell}
\frac{
e^{-2t_{\ell}(\tau_{\rho_2}-j/n)}
\widetilde{\mathsf{a}}_{1,0}(\kappa_{\ell})
-\widetilde{\mathfrak{a}}_1(1-e^{-2t_{\ell}(\tau_{\rho_2}-j/n)})
}{
1-\widehat{\mathsf{T}}_0^{(3m+1,4m)}(j/n)+\widehat{\mathsf{T}}_0^{(3m+1,7m)}(\tau_{\rho_2})
}
\\
&\quad 
+
\sum_{j=\lfloor j_{\star}\rfloor+1}^{j_{2,-}-1}
\mathcal{O}\Bigl(\frac{1}{n}(\tfrac{\rho_1}{\rho_2})^{2(j-j_{\star})}+\frac{1}{n^2}\Bigr)
\end{align*}
The displayed remainder is summable by the same geometric-series estimate.
Applying Lemma~\ref{lemma:Riemann sum NEW} to the smooth terms gives
\begin{align*}
&\quad 
\sum_{j=\lfloor j_{\star}\rfloor+1}^{j_{2,-}-1}
\log\Bigl(1-\widehat{\mathsf{T}}_0^{(3m+1,4m)}(j/n)+\widehat{\mathsf{T}}_0^{(3m+1,7m)}(\tau_{\rho_2})\Bigr)
\\
&=
n\int_{\sigma_{\star}}^{\frac{\tau_{\rho_2}}{1+\epsilon}}
\log\Bigl(1-\widehat{\mathsf{T}}_0^{(3m+1,4m)}(x)+\widehat{\mathsf{T}}_0^{(3m+1,7m)}(\tau_{\rho_2})\Bigr)\,dx
\\
&\quad 
+
\Bigl(\theta_{\star}-\frac{1}{2}\Bigr)
\log\Bigl(1-\widehat{\mathsf{T}}_0^{(3m+1,4m)}(\sigma_{\star})+\widehat{\mathsf{T}}_0^{(3m+1,7m)}(\tau_{\rho_2})\Bigr)
\\
&\quad 
+\Bigl(\theta_{2,-}^{(n,\epsilon)}-\frac{1}{2}\Bigr)
\log\Bigl(1-\widehat{\mathsf{T}}_0^{(3m+1,4m)}(\tfrac{\tau_{\rho_2}}{1+\epsilon})+\widehat{\mathsf{T}}_0^{(3m+1,7m)}(\tau_{\rho_2})\Bigr)
+\mathcal{O}(n^{-1}),
\end{align*}
and 
\begin{align*}
&\quad 
\sum_{j=\lfloor j_{\star}\rfloor+1}^{j_{2,-}-1}
\log\left[\begin{aligned}&
1+\frac{\mathsf{T}_0^{(2m+1,3m)}(j/n)+\widehat{\mathsf{T}}_0^{(3m+1,4m)}(j/n)}{1-\widehat{\mathsf{T}}_0^{(3m+1,4m)}(j/n)+\widehat{\mathsf{T}}_0^{(3m+1,7m)}(\tau_{\rho_2})}
\\
&\qquad\times\frac{
\frac{\tau_{\rho_2}-\tau}{\tau-\tau_{\rho_1}}
e^{\mathsf{k}(\rho_1)-\mathsf{k}(\rho_2)}
(\tfrac{\rho_1}{\rho_2})^{2(j-j_{\star}+1)}
}{
1+\frac{\tau_{\rho_2}-\tau}{\tau-\tau_{\rho_1}}e^{\mathsf{k}(\rho_1)-\mathsf{k}(\rho_2)}( \tfrac{\rho_1}{\rho_2})^{2(j-j_{\star}+1)}
}
\end{aligned}\right]
\\
&=
\sum_{j=0}^{+\infty}
\log\left[\begin{aligned}&
1+\frac{\mathsf{T}_0^{(2m+1,3m)}(\sigma_{\star})+\widehat{\mathsf{T}}_0^{(3m+1,4m)}(\sigma_{\star})}{1-\widehat{\mathsf{T}}_0^{(3m+1,4m)}(\sigma_{\star})+\widehat{\mathsf{T}}_0^{(3m+1,7m)}(\tau_{\rho_2})}
\\
&\qquad\times\frac{
\frac{\tau_{\rho_2}-\sigma_{\star}}{\sigma_{\star}-\tau_{\rho_1}}
e^{\mathsf{k}(\rho_1)-\mathsf{k}(\rho_2)}
(\tfrac{\rho_1}{\rho_2})^{2(j-\theta_{\star}+1+1)}
}{
1+\frac{\tau_{\rho_2}-\sigma_{\star}}{\sigma_{\star}-\tau_{\rho_1}}e^{\mathsf{k}(\rho_1)-\mathsf{k}(\rho_2)}( \tfrac{\rho_1}{\rho_2})^{2(j-\theta_{\star}+1+1)}
}
\end{aligned}\right]
+\mathcal{O}\Bigl(\frac{(\log n)^2}{n}\Bigr). 
\end{align*}
For the last correction term, it is convenient to abbreviate
\[
\mathscr D_{\mathrm{out}}(x)
:=1-\widehat{\mathsf{T}}_0^{(3m+1,4m)}(x)
  +\widehat{\mathsf{T}}_0^{(3m+1,7m)}(\tau_{\rho_2}).
\]
Then
\begin{align*}
&\quad 
\frac{1}{n}\sum_{j=\lfloor j_{\star}\rfloor+1}^{j_{2,-}-1}
\sum_{\ell=3m+1}^{4m}\omega_{\ell}
\frac{
e^{-2t_{\ell}(\tau_{\rho_2}-j/n)}
\widetilde{\mathsf{a}}_{1,0}(\kappa_{\ell})
-\widetilde{\mathfrak{a}}_1(1-e^{-2t_{\ell}(\tau_{\rho_2}-j/n)})
}{
\mathscr D_{\mathrm{out}}(j/n)
}
\\
&=\frac{1}{n}\sum_{j=\lfloor j_{\star}\rfloor+1}^{j_{2,-}-1}
\frac{1}{\mathscr D_{\mathrm{out}}(j/n)}
\\[-2pt]
&\qquad\times
\sum_{\ell=3m+1}^{4m}\omega_{\ell}e^{-2t_{\ell}(\tau_{\rho_2}-j/n)}
\bigg[
\frac{2\rho_2^2\Delta Q(\rho_2)}{\tau_{\rho_2}-j/n}t_{\ell}
+t_{\ell}^2\frac{j}{n}
-\bigl(\rho_2\mathsf{k}'(\rho_2)+2\bigr)t_{\ell}
+
\bigl( 
2\rho_2^2\Delta Q(\rho_2)-\tau_{\rho_2}
\bigr)t_{\ell}^2
\bigg], 
\end{align*}
and apply Lemma~\ref{lemma:Riemann sum NEW} to obtain
\begin{align*}
&\quad
\frac{1}{n}\sum_{j=\lfloor j_{\star}\rfloor+1}^{j_{2,-}-1}
\frac{1}{\mathscr D_{\mathrm{out}}(j/n)}
\\[-2pt]
&\qquad\times
\sum_{\ell=3m+1}^{4m}\omega_{\ell}e^{-2t_{\ell}(\tau_{\rho_2}-j/n)}
\bigg[
\frac{2\rho_2^2\Delta Q(\rho_2)}{\tau_{\rho_2}-j/n}t_{\ell}
+t_{\ell}^2\frac{j}{n}
-\bigl(\rho_2\mathsf{k}'(\rho_2)+2\bigr)t_{\ell}
+
\bigl( 
2\rho_2^2\Delta Q(\rho_2)-\tau_{\rho_2}
\bigr)t_{\ell}^2
\bigg]
\\
&=
\int_{\sigma_{\star}}^{\frac{\tau_{\rho_2}}{1+\epsilon}}
\frac{\frac{2\rho_2^2\Delta Q(\rho_2)}{\tau_{\rho_2}-x}
\widehat{\mathsf{T}}_{1}^{(3m+1,4m)}(x)
+x\widehat{\mathsf{T}}_{2}^{(3m+1,4m)}(x)}{\mathscr D_{\mathrm{out}}(x)}\,dx
\\
&\quad 
+\Bigl(\frac{1}{2}\rho_2\mathsf{k}'(\rho_2)+1\Bigr)
\log\mathscr D_{\mathrm{out}}\Bigl(\tfrac{\tau_{\rho_2}}{1+\epsilon}\Bigr)
\\
&\quad
-\Bigl(\frac{1}{2}\rho_2\mathsf{k}'(\rho_2)+1\Bigr)
\log\mathscr D_{\mathrm{out}}(\sigma_{\star})
\\
&\quad 
+\bigl( 
2\rho_2^2\Delta Q(\rho_2)-\tau_{\rho_2}
\bigr)
\int_{\sigma_{\star}}^{\frac{\tau_{\rho_2}}{1+\epsilon}}
\frac{\widehat{\mathsf{T}}_{2}^{(3m+1,4m)}(x)}{\mathscr D_{\mathrm{out}}(x)}\,dx
+\mathcal{O}(n^{-1}).
\end{align*}
The $\theta_\star$ endpoint terms in the two half-ranges have opposite
orientations, as required by the convention at $\lfloor j_\star\rfloor$.
Collecting the integral, endpoint, and absolutely convergent transition-series
contributions proves both expansions.
\end{proof}

It remains to combine the two transition series and identify their sum.

\begin{proof}[Proof of Lemma~\ref{lemma:asymptotic expansion E6 annulus}]
By Lemma~\ref{lemma:asymptotics E6(1)(2) counting osc}, all integral and
endpoint contributions already agree with those in the statement.  It remains
to simplify the sum of the two absolutely convergent transition series.  We
follow \cite[Lemma~2.7]{ACCL2} and set
\begin{align*}
\Sigma_n
&:=
\sum_{j=0}^{+\infty}
\log\bigg[
1-\frac{\mathsf{T}_{0}^{(2m+1,3m)}(\sigma_{\star})+\widehat{\mathsf{T}}_0^{(3m+1,4m)}(\sigma_{\star})}{1+\mathsf{T}_{0}^{(2m+1,3m)}(\sigma_{\star})
+\widehat{\mathsf{T}}_0^{(3m+1,7m)}(\tau_{\rho_2})}
\frac{\frac{\sigma_{\star}-\tau_{\rho_1}}{\tau_{\rho_2}-\sigma_{\star}}e^{\mathsf{k}(\rho_2)-\mathsf{k}(\rho_1)}(\tfrac{\rho_1}{\rho_2})^{2(\theta_{\star}+j-1)}}{1+\frac{\sigma_{\star}-\tau_{\rho_1}}{\tau_{\rho_2}-\sigma_{\star}}e^{\mathsf{k}(\rho_2)-\mathsf{k}(\rho_1)}(\tfrac{\rho_1}{\rho_2})^{2(\theta_{\star}+j-1)}}
\bigg]
\\
&\quad 
+
\sum_{j=0}^{+\infty}
\log\bigg[
1+\frac{\mathsf{T}_0^{(2m+1,3m)}(\sigma_{\star})+\widehat{\mathsf{T}}_0^{(3m+1,4m)}(\sigma_{\star})}{1-\widehat{\mathsf{T}}_0^{(3m+1,4m)}(\sigma_{\star})+\widehat{\mathsf{T}}_0^{(3m+1,7m)}(\tau_{\rho_2})}
\frac{
\frac{\tau_{\rho_2}-\sigma_{\star}}{\sigma_{\star}-\tau_{\rho_1}}
e^{\mathsf{k}(\rho_1)-\mathsf{k}(\rho_2)}
(\tfrac{\rho_1}{\rho_2})^{2(j-\theta_{\star}+1+1)}
}{
1+\frac{\tau_{\rho_2}-\sigma_{\star}}{\sigma_{\star}-\tau_{\rho_1}}e^{\mathsf{k}(\rho_1)-\mathsf{k}(\rho_2)}( \tfrac{\rho_1}{\rho_2})^{2(j-\theta_{\star}+1+1)}
}
\bigg]. 
\end{align*}
We claim that
\begin{align}
\begin{split}
\label{def of Sigma characterization 2}
\Sigma_n
&=
(\theta_{\star}-\alpha-1)\log\mathsf{Q}-\frac{\log \mathsf{Q}}{2}\Bigl(1+\frac{2\log v+\log \mathsf{Q}}{2\log w}\Bigr)
    \\
    &\quad 
    +\log \frac{\theta( \tfrac{\log (\tfrac{\sigma_2}{\sigma_1})+\log \mathsf{Q}}{2\log \tfrac{\rho_2}{\rho_1}}+\frac{s(\lambda(\rho_1)-\lambda(\rho_2))}{2\log\frac{\rho_2}{\rho_1}}+n\sigma_{\star}-\alpha +\frac{1}{2}; \frac{\pi i }{\log \frac{\rho_2}{\rho_1}})}{\theta( \tfrac{\log (\tfrac{\sigma_2}{\sigma_1})}{2\log \tfrac{\rho_2}{\rho_1}}+\frac{s(\lambda(\rho_1)-\lambda(\rho_2))}{2\log\frac{\rho_2}{\rho_1}}+n\sigma_{\star}-\alpha+\frac{1}{2}; \frac{\pi i }{\log \frac{\rho_2}{\rho_1}})}.     
\end{split}    
\end{align}
To prove \eqref{def of Sigma characterization 2}, introduce the shorthand
\[
v:=\frac{\tau_{\rho_2}-\sigma_{\star}}{\sigma_{\star}-\tau_{\rho_1}}
e^{s(\lambda(\rho_1)-\lambda(\rho_2))}=\frac{\sigma_2}{\sigma_1}e^{s(\lambda(\rho_1)-\lambda(\rho_2))}>0,\qquad
w:=\frac{\rho_1}{\rho_2}\in(0,1).
\]
The definitions of $\mathsf k$, $v$, and $w$ imply
\begin{align*}
&\frac{\sigma_{\star}-\tau_{\rho_1}}{\tau_{\rho_2}-\sigma_{\star}}
e^{\mathsf{k}(\rho_2)-\mathsf{k}(\rho_1)}
\Bigl(\frac{\rho_1}{\rho_2}\Bigr)^{2(\theta_{\star}+j-1)}
=v^{-1}w^{2(\theta_{\star}+j-1-\alpha)},
\\
&\frac{\tau_{\rho_2}-\sigma_{\star}}{\sigma_{\star}-\tau_{\rho_1}}
e^{\mathsf{k}(\rho_1)-\mathsf{k}(\rho_2)}
\Bigl(\frac{\rho_1}{\rho_2}\Bigr)^{2(j-\theta_{\star}+2)}
=v w^{2(j-\theta_{\star}+\alpha+2)}.
\end{align*}
Hence, after reindexing the two series, $\Sigma_n$ becomes
\[
\Sigma_n
=
\sum_{\ell=1}^{+\infty}
\log
\frac{(1+w^{2\ell-1}\tfrac{v \mathsf{Q}}{w^{2(\theta_{\star}-\alpha-1)-1}})(1+w^{2\ell-1}\tfrac{w^{2(\theta_{\star}-\alpha-1)-1}}
{v\mathsf{Q}})}{(1+w^{2\ell-1}\tfrac{v}{w^{2(\theta_{\star}-\alpha-1)-1}})
(1+w^{2\ell-1}\tfrac{w^{2(\theta_{\star}-\alpha-1)-1}}{v})}
=
\log \frac{\theta(\tfrac{1}{2\pi i}\log(\tfrac{v\mathsf{Q}}{w^{2(\theta_{\star}-\alpha-1)-1}});\tfrac{\log w}{\pi i})}{\theta(\tfrac{1}{2\pi i}\log(\tfrac{v}{w^{2(\theta_{\star}-\alpha-1)-1}});\tfrac{\log w}{\pi i})},
\]
Because $w\in(0,1)$, we have
$\operatorname{Im}(\log w/(\pi i))>0$, so the products and series above
converge absolutely.  The Jacobi triple-product identity
\[
\theta(z;\tau)
=\prod_{\ell=1}^{+\infty}(1-e^{2\ell \pi i\tau})(1+e^{(2\ell-1)\pi i \tau+2\pi iz})(1+e^{(2\ell-1)\pi i \tau-2\pi iz })
\]
together with the modular transformation
$\theta(z;\tau)=(-i\tau)^{-1/2}e^{-\pi i z^2/\tau}
\theta(z/\tau;-1/\tau)$ transforms the last display into
\eqref{def of Sigma characterization 2}.  Adding the two expansions from
Lemma~\ref{lemma:asymptotics E6(1)(2) counting osc} now gives every term in the
statement, including $C_{E_6^{(\mathrm{an})}}^{(\mathrm{osc})}$.  This
completes the proof.
\end{proof}

\begin{lemma}\label{lemma:asymptotic expansion E7 annulus}
For $3m+1\le\ell\le4m$, let $r_\ell$ be defined by
\eqref{def of merging radii outside hard}, and for
$4m+1\le\ell\le5m$, let $r_\ell$ be defined by
\eqref{def of merging radii outside semi hard}.  As $n\to\infty$,
 \begin{align}
    E_{7}^{(\mathrm{an})}
    =
    C_{E_{7}^{(\mathrm{an})}}^{(3)}n 
 +C_{E_{7}^{(\mathrm{an})}}^{(4)}\sqrt{n}
 +C_{E_{7}^{(\mathrm{an})}}^{(5)}\log n
 +C_{E_{7}^{(\mathrm{an})}}^{(6)}
 +\widetilde{C}_{E_{7}^{(\mathrm{an})}}^{(M)}
 +\mathcal{O}(n^{-\frac{1}{12}}),
\end{align}  
where 
\begin{align*}
C_{E_{7}^{(\mathrm{an})}}^{(3)}
&:=
\int_{\frac{\tau_{\rho_2}}{1+\epsilon}}^{\tau_{\rho_2}}
\log\Bigl(1-\widehat{\mathsf{T}}_0^{(3m+1,4m)}(x)+\widehat{\mathsf{T}}_0^{(3m+1,7m)}(\tau_{\rho_2})\Bigr)\,dx,
\quad
C_{E_{7}^{(\mathrm{an})}}^{(4)}
:=0,
\\
C_{E_{7}^{(\mathrm{an})}}^{(5)}
&:=\frac{\rho_2^2\Delta Q(\rho_2)\widehat{\mathsf{T}}_{1}^{(3m+1,4m)}(\tau_{\rho_2})}{1-\widehat{\mathsf{T}}_0^{(3m+1,4m)}(\tau_{\rho_2})+\widehat{\mathsf{T}}_0^{(3m+1,7m)}(\tau_{\rho_2})},
\\
C_{E_{7}^{(\mathrm{an})}}^{(6)}
&:=
\int_{\frac{\tau_{\rho_2}}{1+\epsilon}}^{\tau_{\rho_2}}
\bigg[
\frac{\frac{2\rho_2^2\Delta Q(\rho_2)}{\tau_{\rho_2}-x}
\widehat{\mathsf{T}}_{1}^{(3m+1,4m)}(x)
+x\widehat{\mathsf{T}}_{2}^{(3m+1,4m)}(x)}{1-\widehat{\mathsf{T}}_0^{(3m+1,4m)}(x)+\widehat{\mathsf{T}}_0^{(3m+1,7m)}(\tau_{\rho_2})}
\\
&\quad 
-\frac{2\rho_2^2\Delta Q(\rho_2)}{\tau_{\rho_2}-x}
\frac{
\widehat{\mathsf{T}}_{1}^{(3m+1,4m)}(\tau_{\rho_2})}{1-\widehat{\mathsf{T}}_0^{(3m+1,4m)}(\tau_{\rho_2})+\widehat{\mathsf{T}}_0^{(3m+1,7m)}(\tau_{\rho_2})}
\bigg] \,dx 
\\
&\quad 
+\bigl( 
2\rho_2^2\Delta Q(\rho_2)-\tau_{\rho_2}
\bigr)\int_{\frac{\tau_{\rho_2}}{1+\epsilon}}^{\tau_{\rho_2}}
\frac{\widehat{\mathsf{T}}_{2}^{(3m+1,4m)}(x)}{1-\widehat{\mathsf{T}}_0^{(3m+1,4m)}(x)+\widehat{\mathsf{T}}_0^{(3m+1,7m)}(\tau_{\rho_2})}\,dx
\\
&
\quad
+
\Bigl(\frac{1}{2}\rho_2\mathsf{k}'(\rho_2)+1\Bigr)
\log\Bigl(1-\widehat{\mathsf{T}}_0^{(3m+1,4m)}(\tau_{\rho_2})+\widehat{\mathsf{T}}_0^{(3m+1,7m)}(\tau_{\rho_2})\Bigr)
\\
&\quad
-
\Bigl(\frac{1}{2}\rho_2\mathsf{k}'(\rho_2)+1\Bigr)
\log\Bigl(1-\widehat{\mathsf{T}}_0^{(3m+1,4m)}(\tfrac{\tau_{\rho_2}}{1+\epsilon})+\widehat{\mathsf{T}}_0^{(3m+1,7m)}(\tau_{\rho_2})\Bigr)
\\
&\quad 
+\frac{2\rho_2^2\Delta Q(\rho_2)\widehat{\mathsf{T}}_{1}^{(3m+1,4m)}(\tau_{\rho_2})}{1-\widehat{\mathsf{T}}_0^{(3m+1,4m)}(\tau_{\rho_2})+\widehat{\mathsf{T}}_0^{(3m+1,7m)}(\tau_{\rho_2})}
\log\Bigl(\frac{\epsilon}{1+\epsilon}\Bigr)
\\
&\quad 
-\Bigl(\theta_{2,-}^{(n,\epsilon)}-\frac{1}{2}\Bigr)\log\Bigl(1-\widehat{\mathsf{T}}_0^{(3m+1,4m)}(\tfrac{\tau_{\rho_2}}{1+\epsilon})+\widehat{\mathsf{T}}_0^{(3m+1,7m)}(\tau_{\rho_2})\Bigr),
\\
\widetilde{C}_{E_{7}^{(\mathrm{an})}}^{(M)}
&:=
-\frac{2\rho_2^2\Delta Q(\rho_2)\widehat{\mathsf{T}}_{1}^{(3m+1,4m)}(\tau_{\rho_2})}{1-\widehat{\mathsf{T}}_0^{(3m+1,4m)}(\tau_{\rho_2})+\widehat{\mathsf{T}}_0^{(3m+1,7m)}(\tau_{\rho_2})}
\log M
\\
&\quad 
-\frac{2\rho_2^4(\Delta Q(\rho_2))^2}{\tau_{\rho_2}^2M^2}
\frac{\widehat{\mathsf{T}}_{1}^{(3m+1,4m)}(\tau_{\rho_2})}{1-\widehat{\mathsf{T}}_0^{(3m+1,4m)}(\tau_{\rho_2})+\widehat{\mathsf{T}}_0^{(3m+1,7m)}(\tau_{\rho_2})}
\\
&\quad 
+\frac{5\rho_2^6(\Delta Q(\rho_2))^3}{\tau_{\rho_2}^4M^4}
\frac{\widehat{\mathsf{T}}_{1}^{(3m+1,4m)}(\tau_{\rho_2})}{1-\widehat{\mathsf{T}}_0^{(3m+1,4m)}(\tau_{\rho_2})+\widehat{\mathsf{T}}_0^{(3m+1,7m)}(\tau_{\rho_2})}
\\
&\quad 
-M\tau_{\rho_2}\sqrt{n}
\log\Bigl(1-\widehat{\mathsf{T}}_0^{(3m+1,4m)}(\tau_{\rho_2})+\widehat{\mathsf{T}}_0^{(3m+1,7m)}(\tau_{\rho_2})\Bigr)
\\
&\quad 
-M^2\tau_{\rho_2}\left\{\begin{aligned}&
\frac{\tau_{\rho_2}\widehat{\mathsf{T}}_1^{(3m+1,4m)}(\tau_{\rho_2})}{1-\widehat{\mathsf{T}}_0^{(3m+1,4m)}(\tau_{\rho_2})+\widehat{\mathsf{T}}_0^{(3m+1,7m)}(\tau_{\rho_2})}
\\
&\quad-
\log\Bigl(1-\widehat{\mathsf{T}}_0^{(3m+1,4m)}(\tau_{\rho_2})+\widehat{\mathsf{T}}_0^{(3m+1,7m)}(\tau_{\rho_2})\Bigr)
\end{aligned}\right\}
\\
&\quad 
+\Bigl(\theta_{2,-}^{(n,M)}-\frac{1}{2}\Bigr)
\log\Bigl(1-\widehat{\mathsf{T}}_0^{(3m+1,4m)}(\tau_{\rho_2})+\widehat{\mathsf{T}}_0^{(3m+1,7m)}(\tau_{\rho_2})\Bigr). 
\end{align*}
\end{lemma}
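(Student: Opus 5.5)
This is the outer-wall mirror image of Lemma~\ref{lemma:asymptotic expansion E5 annulus}, and I would prove it by redoing that one-sided hard-edge computation at the endpoint $\rho_2$. The relevant range is $j_{2,-}\le j\le g_{2,-}-1$. Here Lemma~\ref{lemma:j2- j g2--1} shows that $h^{(\mathrm{an})}_{n,j}$ is dominated by the right endpoint $\rho_2$, with relative error $\mathcal O(M^{-8})$. The inner component $[0,\rho_1]$ contributes only $\mathcal O(e^{-cn})$.

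\emph{Step 1: localise the counting ratios.} I classify $F^{(\mathrm{an})}_{n,j,\ell}$ uniformly in $j$ by where $r_\ell$ lies.
\begin{itemize}
\item For $\ell\le 3m$, $r_\ell<\rho_1$, so the numerator lives on the inner disk and is exponentially negligible.
\item For $\ell\ge 5m+1$, $r_\ell$ lies beyond $\rho_2+\delta_n$, so $F=1+\mathcal O(e^{-cM^2})$. These terms together give $\Omega^{(7m)}_{5m+1}$.
\item For the semi-hard radii $4m+1\le\ell\le 5m$, we have $r_\ell-\rho_2\asymp n^{-1/2}\gg(n\eta_2)^{-1}$. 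The endpoint expansion then gives $F=1+\mathcal O(e^{-cM})$, as in the semi-hard step of part~(ii) of Lemma~\ref{lemma:semi hard edge case asymptotic expansion disck complement}.
\item For the hard radii $3m+1\le\ell\le4m$, I integrate the endpoint expansion of Lemma~\ref{lemma: j1+ j jdiamond} over $[\rho_2,\rho_2(1+t_\ell/n)]$. With $\kappa_\ell=2t_\ell(\tau_{\rho_2}-\tau_j)$, this gives
\[
F=1-e^{-\kappa_\ell}\Bigl(1+\tfrac{\widetilde{\mathsf a}_{1,1}}{n}+\dots\Bigr).
\]
\end{itemize}
Summing with the weights $\omega_\ell$ shows that the logarithm's argument equals
\[
1-\widehat{\mathsf T}^{(3m+1,4m)}_0(j/n)+\widehat{\mathsf T}^{(3m+1,7m)}_0(\tau_{\rho_2})+n^{-1}\widehat{\mathscr H}_1+n^{-2}\widehat{\mathscr H}_2+n^{-3}\widehat{\mathscr H}_3+\mathcal O(\eta_2^{-8}n^{-4}).
\]

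\emph{Step 2: expand and sum.} I Taylor-expand the logarithm to third order. I then apply Lemma~\ref{lemma:Riemann sum NEW} with $A=\tau_{\rho_2}/(1+\epsilon)$, $a_0=\theta^{(n,\epsilon)}_{2,-}$, $B=\tau_{\rho_2}/(1+M/\sqrt n)$ and $b_0=\theta^{(n,M)}_{2,-}-1$.
\begin{itemize}
\item \emph{Leading sum.} It gives $n\int\log(\cdots)$ over $[A,B]$. Expanding the moving upper endpoint $B=\tau_{\rho_2}-\tau_{\rho_2}M/\sqrt n+\tau_{\rho_2}M^2/n+\dots$ produces $C^{(3)}_{E_7}$ together with the $M\sqrt n$ and $M^2$ pieces of $\widetilde C^{(M)}$. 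The Euler--Maclaurin boundary terms give the two $\theta_{2,-}$ corrections.
\item \emph{First-order sum.} I write $n^{-1}\sum\widehat{\mathscr H}_1/\mathscr D_{\mathrm{out}}$ through the three functions $\widehat{\mathscr F}_1,\widehat{\mathscr F}_2,\widehat{\mathscr F}_3$, exactly as in the $E_6^{(2)}$ computation in Lemma~\ref{lemma:asymptotics E6(1)(2) counting osc}.
\begin{itemize}
\item The $\widehat{\mathscr F}_2$ integral is an exact derivative, $\tfrac12\partial_x\log\mathscr D_{\mathrm{out}}$. It yields the two $(\tfrac12\rho_2\mathsf k'(\rho_2)+1)\log$ terms.
\item The singular integrand $2\rho_2^2\Delta Q(\rho_2)\widehat{\mathsf T}_1(x)/((\tau_{\rho_2}-x)\mathscr D_{\mathrm{out}})$ I regularise by subtracting its value at $x=\tau_{\rho_2}$. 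Integrating the subtracted part against $dx/(\tau_{\rho_2}-x)$ up to $\tau_{\rho_2}-\tau_{\rho_2}M/\sqrt n$ gives the term $+\rho_2^2\Delta Q(\rho_2)\widehat{\mathsf T}_1/\mathscr D_{\mathrm{out}}\cdot\log n$, which is $C^{(5)}$. It also gives $-2\rho_2^2\Delta Q(\rho_2)(\cdots)\log M$ and the $\log(\epsilon/(1+\epsilon))$ constant.
\end{itemize}
\item \emph{Higher-order sums.} The $n^{-2}$ and $n^{-3}$ sums are treated with the ACCL1 tail bound. Their only surviving contributions are the $M^{-2}$ and $M^{-4}$ terms listed in $\widetilde C^{(M)}_{E_7}$, with signs opposite to the inner case because $\eta_2$ carries the opposite orientation.
\end{itemize}

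\emph{Expected main obstacle.} The hardest part is bookkeeping, not analysis. I must check that the correction terms for the outer edge come with the right signs: $\widetilde{\mathfrak a}_1$ enters with a plus sign, and $v_{2,2}$ with a minus sign. I must also check that the $\widehat{\mathsf T}$-weighted combination of $\widetilde{\mathsf a}_{k,1}$ collapses to $\widehat{\mathscr F}_1,\widehat{\mathscr F}_2,\widehat{\mathscr F}_3$; this uses $e^{\kappa}-\mathsf e_{p}(\kappa)$ and integrates by parts in the same way as at $\rho_1$. Finally, I would verify the signs by checking cancellation of every $M$-dependent term against the $E_8$ window. The accumulated remainders are $\mathcal O(M^3n^{-1/2}+M^{-8}+M\,n^{-1/2})$, which is $\mathcal O(n^{-1/12})$ for $M=n^{1/12}$.
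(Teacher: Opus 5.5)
Your proposal is correct and follows the paper's proof essentially step for step: the same localization of the ratios $F^{(\mathrm{an})}_{n,j,\ell}$, the endpoint Laplace expansion at $\rho_2$ for the hard radii, the third-order Taylor expansion of the logarithm, Lemma~\ref{lemma:Riemann sum NEW} with exactly your choice of $A,a_0,B,b_0$, subtraction of the $(\tau_{\rho_2}-x)^{-1}$ singularity to produce the $\log n$, $\log M$ and $\log\frac{\epsilon}{1+\epsilon}$ terms, and the tail bound of \cite{ACCL1} for the $n^{-2}$ and $n^{-3}$ sums.

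Two bookkeeping slips need fixing. First, since $\partial_x\widehat{\mathsf T}_0^{(3m+1,4m)}=2\widehat{\mathsf T}_1^{(3m+1,4m)}$, your $\widehat{\mathscr F}_2$ equals $-\tfrac12\partial_x\log\mathscr D_{\mathrm{out}}$, not $+\tfrac12\partial_x\log\mathscr D_{\mathrm{out}}$; this sign is what turns the coefficient $-(\rho_2\mathsf k'(\rho_2)+2)$ into $+(\tfrac12\rho_2\mathsf k'(\rho_2)+1)$ in front of $\log\mathscr D_{\mathrm{out}}(\tau_{\rho_2})$. Second, the local $\mathcal O(\eta_2^{-8}n^{-4})$ errors sum to $\mathcal O(\sqrt n\,M^{-7})$, not $\mathcal O(M^{-8})$; for $M=n^{1/12}$ this is precisely the $\mathcal O(n^{-1/12})$ in the statement.
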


\begin{proof}
We use the outer-boundary analogue of
Lemma~\ref{lemma:semi hard edge case asymptotic expansion disck complement}.
Uniformly for $j_{2,-}\le j<g_{2,-}$ and
$3m+1\le\ell\le4m$, the endpoint Laplace expansion gives
\begin{align*}
F_{n,j,\ell}^{(\mathrm{an})}
&=
1-e^{-\rho_2\eta_2t_{\ell}}+\frac{1}{n}
\bigl(
e^{-\rho_2\eta_2t_{\ell}}
\widetilde{\mathsf{a}}_{1}(\kappa_{\ell})
-\widetilde{\mathfrak{a}}_1(1-e^{-\rho_2\eta_2t_{\ell}})
\bigr)
\\
&\quad 
+\frac{1}{n^2}\Bigl(
e^{-\rho_2\eta_2t_{\ell}}
\widetilde{\mathsf{a}}_{2}(\kappa_{\ell})
-(1-e^{-\rho_2\eta_2t_{\ell}})(\widetilde{\mathfrak{a}}_2-\widetilde{\mathfrak{a}}_1^2)
-e^{-\rho_2\eta_2t_{\ell}}\widetilde{\mathfrak{a}}_1
\widetilde{\mathsf{a}}_{1}(\kappa_{\ell})
\Bigr)
\\
&\quad 
+\frac{1}{n^3}\Bigl(
e^{-\rho_2\eta_2t_{\ell}}
\widetilde{\mathsf{a}}_{3}(\kappa_{\ell})
-\bigl( 
\widetilde{\mathfrak{a}}_3-2\widetilde{\mathfrak{a}}_1\widetilde{\mathfrak{a}}_2+\widetilde{\mathfrak{a}}_1^3
\bigr)\bigl(1-e^{-\rho_2\eta_2t_{\ell}}\bigr)
\\
&\quad 
-e^{-\rho_2\eta_2t_{\ell}}
\widetilde{\mathfrak{a}}_1
\widetilde{\mathsf{a}}_{2}(\kappa_{\ell})
-e^{-\rho_2\eta_2t_{\ell}}\bigl(\widetilde{\mathfrak{a}}_2-\widetilde{\mathfrak{a}}_1^2\bigr)
\widetilde{\mathsf{a}}_{1}(\kappa_{\ell})
\Bigr)
+\mathcal{O}\Bigl(\frac{1}{\eta_2^8n^4}\Bigr). 
\end{align*}
After summing against the weights $\omega_\ell$, this may be written as
\[
\Omega_{4m+1}^{(7m)}+\sum_{\ell=3m+1}^{4m}\omega_{\ell}F_{n,j,\ell}^{(\mathrm{an})}
=
1-\widehat{\mathsf{T}}_0^{(3m+1,4m)}(j/n)+\widehat{\mathsf{T}}_0^{(3m+1,7m)}(\tau_{\rho_2})
+
\frac{\widehat{\mathscr{H}}_{\rho_2,1}}{n}
+
\frac{\widehat{\mathscr{H}}_{\rho_2,2}}{n^2}
+
\frac{\widehat{\mathscr{H}}_{\rho_2,3}}{n^3}
+
\mathcal{O}\Bigl(\frac{1}{\eta_2^8n^4}\Bigr), 
\]
where 
\begin{align*}
\widehat{\mathscr{H}}_{\rho_2,1}
&=
\sum_{\ell=3m+1}^{4m}\omega_{\ell}
\Bigl\{
e^{-\rho_2\eta_2t_{\ell}}
\widetilde{\mathsf{a}}_{1}(\kappa_{\ell})
-\widetilde{\mathfrak{a}}_1(1-e^{-\rho_2\eta_2t_{\ell}})
\Bigr\}, 
\\
\widehat{\mathscr{H}}_{\rho_2,2}
&=
\sum_{\ell=3m+1}^{4m}\omega_{\ell}
\Bigl\{
e^{-\rho_2\eta_2t_{\ell}}
\widetilde{\mathsf{a}}_{2}(\kappa_{\ell})
-(1-e^{-\rho_2\eta_2t_{\ell}})(\widetilde{\mathfrak{a}}_2-\widetilde{\mathfrak{a}}_1^2)
-e^{-\rho_2\eta_2t_{\ell}}\widetilde{\mathfrak{a}}_1 
\widetilde{\mathsf{a}}_{1}(\kappa_{\ell})
\Bigr\},
\\
\widehat{\mathscr{H}}_{\rho_2,3}
&=\sum_{\ell=3m+1}^{4m}\omega_{\ell}
\Bigl\{e^{-\rho_2\eta_2t_{\ell}}
\widetilde{\mathsf{a}}_{3}(\kappa_{\ell})
-\bigl( 
\widetilde{\mathfrak{a}}_3-2\widetilde{\mathfrak{a}}_1\widetilde{\mathfrak{a}}_2+\widetilde{\mathfrak{a}}_1^3
\bigr)\bigl(1-e^{-\rho_2\eta_2t_{\ell}}\bigr)
\\
&\quad 
-e^{-\rho_2\eta_2t_{\ell}}
\widetilde{\mathfrak{a}}_1
\widetilde{\mathsf{a}}_{2}(\kappa_{\ell})
-e^{-\rho_2\eta_2t_{\ell}}\bigl(\widetilde{\mathfrak{a}}_2-\widetilde{\mathfrak{a}}_1^2\bigr)
\widetilde{\mathsf{a}}_{1}(\kappa_{\ell})
\Bigr\}.
\end{align*} 
The logarithmic Taylor expansion is uniform because its leading denominator is
bounded away from zero on the admissible parameter set.  Hence
\begin{align*}
&\quad \sum_{j=j_{2,-}}^{g_{2,-}-1}\log\Bigl(\Omega_{4m+1}^{(7m)}+\sum_{\ell=3m+1}^{4m}\omega_{\ell}F_{n,j,\ell}^{(\mathrm{an})}\Bigr)
\\
&=
\sum_{j=j_{2,-}}^{g_{2,-}-1}
\log\Bigl(1-\widehat{\mathsf{T}}_0^{(3m+1,4m)}(j/n)+\widehat{\mathsf{T}}_0^{(3m+1,7m)}(\tau_{\rho_2})\Bigr)
\\
&\quad+
\frac{1}{n}
\sum_{j=j_{2,-}}^{g_{2,-}-1}\frac{\widehat{\mathscr{H}}_{\rho_2,1}}{1-\widehat{\mathsf{T}}_0^{(3m+1,4m)}(j/n)+\widehat{\mathsf{T}}_0^{(3m+1,7m)}(\tau_{\rho_2})}
\\
&\quad 
+
\frac{1}{n^2}\sum_{j=j_{2,-}}^{g_{2,-}-1}
\bigg[
\frac{\widehat{\mathscr{H}}_{\rho_2,2}}{1-\widehat{\mathsf{T}}_0^{(3m+1,4m)}(j/n)+\widehat{\mathsf{T}}_0^{(3m+1,7m)}(\tau_{\rho_2})}
\\
&\qquad-\frac{\widehat{\mathscr{H}}_{\rho_2,1}^2}{2(1-\widehat{\mathsf{T}}_0^{(3m+1,4m)}(j/n)+\widehat{\mathsf{T}}_0^{(3m+1,7m)}(\tau_{\rho_2}))^2}
\bigg]
\\
&\quad
+\frac{1}{n^3}\sum_{j=j_{2,-}}^{g_{2,-}-1}
\bigg[
\frac{\widehat{\mathscr{H}}_{\rho_2,3}}{1-\widehat{\mathsf{T}}_0^{(3m+1,4m)}(j/n)+\widehat{\mathsf{T}}_0^{(3m+1,7m)}(\tau_{\rho_2})}
\\
&\qquad-\frac{\widehat{\mathscr{H}}_{\rho_2,1}\widehat{\mathscr{H}}_{\rho_2,2}}{(1-\widehat{\mathsf{T}}_0^{(3m+1,4m)}(j/n)+\widehat{\mathsf{T}}_0^{(3m+1,7m)}(\tau_{\rho_2}))^2}
\\
&\quad 
+\frac{\widehat{\mathscr{H}}_{\rho_2,1}^3}{3(1-\widehat{\mathsf{T}}_0^{(3m+1,4m)}(j/n)+\widehat{\mathsf{T}}_0^{(3m+1,7m)}(\tau_{\rho_2}))^3}
\bigg]+\mathcal{O}\Bigl(\frac{\sqrt{n}}{M^7}\Bigr),
\end{align*}
where the last error follows by summing the uniform local remainder.
Indeed, $\eta_2(j)=2(\tau_{\rho_2}-j/n)/\rho_2$ is smallest at
$j=g_{2,-}-1$, where it is bounded below by a positive constant times
$M/\sqrt n$.  Comparison with the integral of the reciprocal eighth
power gives
$n^{-4}\sum_{j=j_{2,-}}^{g_{2,-}-1}\eta_2(j)^{-8}
=\mathcal O(\sqrt n/M^7)$.  Applying
Lemma~\ref{lemma:Riemann sum NEW} to the leading term gives
\begin{align*}
&\quad \sum_{j=j_{2,-}}^{g_{2,-}-1}
\log\Bigl(1-\widehat{\mathsf{T}}_0^{(3m+1,4m)}(j/n)+\widehat{\mathsf{T}}_0^{(3m+1,7m)}(\tau_{\rho_2})\Bigr)
\\
&=
n\int_{\frac{\tau_{\rho_2}}{1+\epsilon}}^{\tau_{\rho_2}}
\log\Bigl(1-\widehat{\mathsf{T}}_0^{(3m+1,4m)}(x)+\widehat{\mathsf{T}}_0^{(3m+1,7m)}(\tau_{\rho_2})\Bigr)\,dx
\\
&\quad 
-M\tau_{\rho_2}\sqrt{n}
\log\Bigl(1-\widehat{\mathsf{T}}_0^{(3m+1,4m)}(\tau_{\rho_2})+\widehat{\mathsf{T}}_0^{(3m+1,7m)}(\tau_{\rho_2})\Bigr)
\\
&\quad 
-M^2\tau_{\rho_2}\left\{\begin{aligned}&
\frac{\tau_{\rho_2}\widehat{\mathsf{T}}_1^{(3m+1,4m)}(\tau_{\rho_2})}{1-\widehat{\mathsf{T}}_0^{(3m+1,4m)}(\tau_{\rho_2})+\widehat{\mathsf{T}}_0^{(3m+1,7m)}(\tau_{\rho_2})}
\\
&\quad-
\log\Bigl(1-\widehat{\mathsf{T}}_0^{(3m+1,4m)}(\tau_{\rho_2})+\widehat{\mathsf{T}}_0^{(3m+1,7m)}(\tau_{\rho_2})\Bigr)
\end{aligned}\right\}
\\
&\quad 
-\Bigl(\theta_{2,-}^{(n,\epsilon)}-\frac{1}{2}\Bigr)\log\Bigl(1-\widehat{\mathsf{T}}_0^{(3m+1,4m)}(\tfrac{\tau_{\rho_2}}{1+\epsilon})+\widehat{\mathsf{T}}_0^{(3m+1,7m)}(\tau_{\rho_2})\Bigr)
\\
&\quad 
+\Bigl(\theta_{2,-}^{(n,M)}-\frac{1}{2}\Bigr)
\log\Bigl(1-\widehat{\mathsf{T}}_0^{(3m+1,4m)}(\tau_{\rho_2})+\widehat{\mathsf{T}}_0^{(3m+1,7m)}(\tau_{\rho_2})\Bigr)
+\mathcal{O}(M^3n^{-1/2})
\end{align*}
The endpoint terms in this formula retain both rounding corrections and will
cancel the corresponding terms from $E_6^{(2,\mathrm{an})}$ and
$E_8^{(\mathrm{an})}$.  For the first logarithmic correction, define
\begin{align*}
\widetilde{\mathscr{F}}_1(x)&:=
\frac{\frac{2\rho_2^2\Delta Q(\rho_2)}{\tau_{\rho_2}-x}
\widehat{\mathsf{T}}_{1}^{(3m+1,4m)}(x)
+x\widehat{\mathsf{T}}_{2}^{(3m+1,4m)}(x)}{1-\widehat{\mathsf{T}}_0^{(3m+1,4m)}(x)+\widehat{\mathsf{T}}_0^{(3m+1,7m)}(\tau_{\rho_2})}, 
\\
\widetilde{\mathscr{F}}_2(x)&:=
\frac{\widehat{\mathsf{T}}_{1}^{(3m+1,4m)}(x)}{1-\widehat{\mathsf{T}}_0^{(3m+1,4m)}(x)+\widehat{\mathsf{T}}_0^{(3m+1,7m)}(\tau_{\rho_2})},
\quad
\widetilde{\mathscr{F}}_3(x):=
\frac{\widehat{\mathsf{T}}_{2}^{(3m+1,4m)}(x)}{1-\widehat{\mathsf{T}}_0^{(3m+1,4m)}(x)+\widehat{\mathsf{T}}_0^{(3m+1,7m)}(\tau_{\rho_2})}.
\end{align*}
Euler--Maclaurin gives
\begin{align*}
&\quad \frac{1}{n}
\sum_{j=j_{2,-}}^{g_{2,-}-1}\frac{\widehat{\mathscr{H}}_{\rho_2,1}}{1-\widehat{\mathsf{T}}_0^{(3m+1,4m)}(j/n)+\widehat{\mathsf{T}}_0^{(3m+1,7m)}(\tau_{\rho_2})}
\\
&=\int_{\frac{\tau_{\rho_2}}{1+\epsilon}}^{\frac{\tau_{\rho_2}}{1+\frac{M}{\sqrt{n}}}}
\widetilde{\mathscr{F}}_1(x)\,dx
-(\rho_2\mathsf{k}'(\rho_2)+2)
\int_{\frac{\tau_{\rho_2}}{1+\epsilon}}^{\frac{\tau_{\rho_2}}{1+\frac{M}{\sqrt{n}}}}\widetilde{\mathscr{F}}_2(x)\,dx
+\bigl( 
2\rho_2^2\Delta Q(\rho_2)-\tau_{\rho_2}
\bigr)\int_{\frac{\tau_{\rho_2}}{1+\epsilon}}^{\frac{\tau_{\rho_2}}{1+\frac{M}{\sqrt{n}}}}\widetilde{\mathscr{F}}_3(x)\,dx
\\
&\quad 
+\frac{1}{n}\Bigl(\frac{1}{2}-\theta_{2,-}^{(n,\epsilon)}\Bigr)
\widetilde{\mathscr{F}}_1(\tfrac{\tau_{\rho_2}}{1+\epsilon})
+\frac{1}{n}\Bigl(\theta_{2,-}^{(n,M)}-\frac{1}{2}\Bigr)
\widetilde{\mathscr{F}}_1(\tfrac{\tau_{\rho_2}}{1+\frac{M}{\sqrt{n}}})
+\frac{1}{n}\Bigl(\frac{1}{2}-\theta_{2,-}^{(n,\epsilon)}\Bigr)
\widetilde{\mathscr{F}}_2(\tfrac{\tau_{\rho_2}}{1+\epsilon})
\\
&\quad 
+\frac{1}{n}\Bigl(\theta_{2,-}^{(n,M)}-\frac{1}{2}\Bigr)
\widetilde{\mathscr{F}}_2(\tfrac{\tau_{\rho_2}}{1+\frac{M}{\sqrt{n}}})
+\frac{1}{n}\Bigl(\frac{1}{2}-\theta_{2,-}^{(n,\epsilon)}\Bigr)
\widetilde{\mathscr{F}}_3(\tfrac{\tau_{\rho_2}}{1+\epsilon})
+\frac{1}{n}\Bigl(\theta_{2,-}^{(n,M)}-\frac{1}{2}\Bigr)
\widetilde{\mathscr{F}}_3(\tfrac{\tau_{\rho_2}}{1+\frac{M}{\sqrt{n}}})
+\mathcal{O}(n^{-2}). 
\end{align*}
To extract the logarithmic divergence at $x=\tau_{\rho_2}$, subtract the
limiting singular coefficient.  This yields
\begin{align*}
&\quad \int_{\frac{\tau_{\rho_2}}{1+\epsilon}}^{\frac{\tau_{\rho_2}}{1+\frac{M}{\sqrt{n}}}}
\widetilde{\mathscr{F}}_1(x)\,dx
\\
&=
\int_{\frac{\tau_{\rho_2}}{1+\epsilon}}^{\frac{\tau_{\rho_2}}{1+\frac{M}{\sqrt{n}}}}
\Bigl(\widetilde{\mathscr{F}}_1(x)
-\frac{2\rho_2^2\Delta Q(\rho_2)}{\tau_{\rho_2}-x}
\frac{
\widehat{\mathsf{T}}_{1}^{(3m+1,4m)}(\tau_{\rho_2})}{1-\widehat{\mathsf{T}}_0^{(3m+1,4m)}(\tau_{\rho_2})+\widehat{\mathsf{T}}_0^{(3m+1,7m)}(\tau_{\rho_2})}
\Bigr) \,dx 
\\
&\quad 
+
\frac{2\rho_2^2\Delta Q(\rho_2)\widehat{\mathsf{T}}_{1}^{(3m+1,4m)}(\tau_{\rho_2})}{1-\widehat{\mathsf{T}}_0^{(3m+1,4m)}(\tau_{\rho_2})+\widehat{\mathsf{T}}_0^{(3m+1,7m)}(\tau_{\rho_2})}
\int_{\frac{\tau_{\rho_2}}{1+\epsilon}}^{\frac{\tau_{\rho_2}}{1+\frac{M}{\sqrt{n}}}}
\frac{1}{\tau_{\rho_2}-x}
\,dx 
\\
&=
\int_{\frac{\tau_{\rho_2}}{1+\epsilon}}^{\tau_{\rho_2}}
\Bigl(\widetilde{\mathscr{F}}_1(x)
-\frac{2\rho_2^2\Delta Q(\rho_2)}{\tau_{\rho_2}-x}
\frac{
\widehat{\mathsf{T}}_{1}^{(3m+1,4m)}(\tau_{\rho_2})}{1-\widehat{\mathsf{T}}_0^{(3m+1,4m)}(\tau_{\rho_2})+\widehat{\mathsf{T}}_0^{(3m+1,7m)}(\tau_{\rho_2})}
\Bigr) \,dx 
\\
&\quad
+
\frac{2\rho_2^2\Delta Q(\rho_2)\widehat{\mathsf{T}}_{1}^{(3m+1,4m)}(\tau_{\rho_2})}{1-\widehat{\mathsf{T}}_0^{(3m+1,4m)}(\tau_{\rho_2})+\widehat{\mathsf{T}}_0^{(3m+1,7m)}(\tau_{\rho_2})}
\Bigl( 
\frac{1}{2}\log n +\log\Bigl(\frac{\epsilon\tau_{\rho_2}}{1+\epsilon}\Bigr)-\log (\tau_{\rho_2}M)
\Bigr)
+\mathcal{O}(Mn^{-1/2}).
\end{align*}
For $k=2,3$, the integrands are regular at the endpoint, and therefore
\[
\int_{\frac{\tau_{\rho_2}}{1+\epsilon}}^{\frac{\tau_{\rho_2}}{1+\frac{M}{\sqrt{n}}}}
\widetilde{\mathscr{F}}_k(x)\,dx
=
\int_{\frac{\tau_{\rho_2}}{1+\epsilon}}^{\tau_{\rho_2}}
\widetilde{\mathscr{F}}_k(x)\,dx
+\mathcal{O}(Mn^{-1/2}), 
\]
Combining these identities, we obtain
\begin{align*}
&\quad \frac{1}{n}
\sum_{j=j_{2,-}}^{g_{2,-}-1}\frac{\widehat{\mathscr{H}}_{\rho_2,1}}{1-\widehat{\mathsf{T}}_0^{(3m+1,4m)}(j/n)+\widehat{\mathsf{T}}_0^{(3m+1,7m)}(\tau_{\rho_2})}
\\
&=
\int_{\frac{\tau_{\rho_2}}{1+\epsilon}}^{\tau_{\rho_2}}
\Bigl(\widetilde{\mathscr{F}}_1(x)
-\frac{2\rho_2^2\Delta Q(\rho_2)}{\tau_{\rho_2}-x}
\frac{
\widehat{\mathsf{T}}_{1}^{(3m+1,4m)}(\tau_{\rho_2})}{1-\widehat{\mathsf{T}}_0^{(3m+1,4m)}(\tau_{\rho_2})+\widehat{\mathsf{T}}_0^{(3m+1,7m)}(\tau_{\rho_2})}
\Bigr) \,dx 
\\
&\quad
+
\frac{\rho_2^2\Delta Q(\rho_2)\widehat{\mathsf{T}}_{1}^{(3m+1,4m)}(\tau_{\rho_2})\log n}{1-\widehat{\mathsf{T}}_0^{(3m+1,4m)}(\tau_{\rho_2})+\widehat{\mathsf{T}}_0^{(3m+1,7m)}(\tau_{\rho_2})}
\\
&
\quad
+
\Bigl(\frac{1}{2}\rho_2\mathsf{k}'(\rho_2)+1\Bigr)
\bigg[
\log\Bigl(1-\widehat{\mathsf{T}}_0^{(3m+1,4m)}(\tau_{\rho_2})+\widehat{\mathsf{T}}_0^{(3m+1,7m)}(\tau_{\rho_2})\Bigr)
\\
&\quad 
-
\log\Bigl(1-\widehat{\mathsf{T}}_0^{(3m+1,4m)}(\tfrac{\tau_{\rho_2}}{1+\epsilon})+\widehat{\mathsf{T}}_0^{(3m+1,7m)}(\tau_{\rho_2})\Bigr)
\bigg]
+\bigl( 
2\rho_2^2\Delta Q(\rho_2)-\tau_{\rho_2}
\bigr)\int_{\frac{\tau_{\rho_2}}{1+\epsilon}}^{\tau_{\rho_2}}\widetilde{\mathscr{F}}_3(x)\,dx
\\
&\quad 
+
\frac{2\rho_2^2\Delta Q(\rho_2)\widehat{\mathsf{T}}_{1}^{(3m+1,4m)}(\tau_{\rho_2})}{1-\widehat{\mathsf{T}}_0^{(3m+1,4m)}(\tau_{\rho_2})+\widehat{\mathsf{T}}_0^{(3m+1,7m)}(\tau_{\rho_2})}
\Bigl( 
\log\Bigl(\frac{\epsilon\tau_{\rho_2}}{1+\epsilon}\Bigr)-\log (\tau_{\rho_2}M)
\Bigr)
+\mathcal{O}(Mn^{-1/2}),
\\
&\quad \frac{1}{n^2}\sum_{j=j_{2,-}}^{g_{2,-}-1}
\bigg[
\frac{\widehat{\mathscr{H}}_{\rho_2,2}}{1-\widehat{\mathsf{T}}_0^{(3m+1,4m)}(j/n)+\widehat{\mathsf{T}}_0^{(3m+1,7m)}(\tau_{\rho_2})}
\\
&\qquad-\frac{\widehat{\mathscr{H}}_{\rho_2,1}^2}{2(1-\widehat{\mathsf{T}}_0^{(3m+1,4m)}(j/n)+\widehat{\mathsf{T}}_0^{(3m+1,7m)}(\tau_{\rho_2}))^2}
\bigg]
\\
&=
-\frac{2\rho_2^4(\Delta Q(\rho_2))^2}{\tau_{\rho_2}^2M^2}
\frac{\widehat{\mathsf{T}}_{1}^{(3m+1,4m)}(\tau_{\rho_2})}{1-\widehat{\mathsf{T}}_0^{(3m+1,4m)}(\tau_{\rho_2})+\widehat{\mathsf{T}}_0^{(3m+1,7m)}(\tau_{\rho_2})}
+\mathcal{O}(M^{-1}n^{-1/2}),
\end{align*}
and 
\begin{align*}
&\quad 
\frac{1}{n^3}\sum_{j=j_{2,-}}^{g_{2,-}-1}
\bigg[
\frac{\widehat{\mathscr{H}}_{\rho_2,3}}{1-\widehat{\mathsf{T}}_0^{(3m+1,4m)}(j/n)+\widehat{\mathsf{T}}_0^{(3m+1,7m)}(\tau_{\rho_2})}
-\frac{\widehat{\mathscr{H}}_{\rho_2,1}\widehat{\mathscr{H}}_{\rho_2,2}}{(1-\widehat{\mathsf{T}}_0^{(3m+1,4m)}(j/n)+\widehat{\mathsf{T}}_0^{(3m+1,7m)}(\tau_{\rho_2}))^2}
\\
&\quad 
+\frac{\widehat{\mathscr{H}}_{\rho_2,1}^3}{3(1-\widehat{\mathsf{T}}_0^{(3m+1,4m)}(j/n)+\widehat{\mathsf{T}}_0^{(3m+1,7m)}(\tau_{\rho_2}))^3}
\bigg]
\\
&
=
\frac{5\rho_2^6(\Delta Q(\rho_2))^3}{\tau_{\rho_2}^4M^4}
\frac{\widehat{\mathsf{T}}_{1}^{(3m+1,4m)}(\tau_{\rho_2})}{1-\widehat{\mathsf{T}}_0^{(3m+1,4m)}(\tau_{\rho_2})+\widehat{\mathsf{T}}_0^{(3m+1,7m)}(\tau_{\rho_2})}
+\mathcal{O}(M^{-3}n^{-1}). 
\end{align*}
Finally, add the leading, first-, second-, and third-order contributions.  The
regularized singular integral supplies the $\log n$ and $\log M$ terms; the
$M^{-2}$ and $M^{-4}$ terms come from the next two Taylor coefficients.  With
$M=n^{1/12}$, the accumulated remainder is $\mathcal O(n^{-1/12})$, proving
the lemma.
\end{proof}

\begin{lemma}\label{lemma:asymptotic expansion E8 annulus}
For $3m+1\le\ell\le4m$, let $r_\ell$ be defined by
\eqref{def of merging radii outside hard}, and for
$4m+1\le\ell\le5m$, let $r_\ell$ be defined by
\eqref{def of merging radii outside semi hard}.  As $n\to\infty$,
 \begin{align}
    E_{8}^{(\mathrm{an})}
    =
C_{E_{8}^{(\mathrm{an})}}^{(3)}n 
 +C_{E_{8}^{(\mathrm{an})}}^{(4)}\sqrt{n}
 +C_{E_{8}^{(\mathrm{an})}}^{(5)}\log n
 +C_{E_{8}^{(\mathrm{an})}}^{(6)}
 +\widetilde{C}_{E_{8}^{(\mathrm{an})}}^{(M)}
 +\mathcal{O}(n^{-\frac{1}{12}}),
\end{align}  
where    
\begin{align*}
C_{E_{8}^{(\mathrm{an})}}^{(3)}
&:=0,
\quad
C_{E_{8}^{(\mathrm{an})}}^{(4)}
:=
C_4^{\#(\mathrm{se,out})},
\quad
C_{E_{8}^{(\mathrm{an})}}^{(5)}
:=0,
\\
C_{E_{8}^{(\mathrm{an})}}^{(6)}
&:=
\mathcal{D}_6^{(\mathrm{se},\mathrm{out})}
+\Bigl(\frac{1}{2}\rho_2\mathsf{k}'(\rho_2)+1\Bigr)\log\frac{\Omega_{5m+1}^{(7m)}}{\Omega_{4m+1}^{(7m)}}
\\
&\quad+\frac{2\rho_2^2\Delta Q(\rho_2)\widehat{\mathsf{T}}_1^{(3m+1,4m)}(\tau_{\rho_2})\log \tau_{\rho_2}}{\Omega_{4m+1}^{(7m)}},
\\
\widetilde{C}_{E_{8}^{(\mathrm{an})}}^{(M)}
&:=
2\tau_{\rho_2}M\sqrt{n}\log\Omega_{4m+1}^{(7m)}
+(-\theta_{2,+}^{(n,M)}-\theta_{2,-}^{(n,M)}+1)\log\Omega_{4m+1}^{(7m)}
\\
&\quad+\tau_{\rho_2}M\sqrt{n}\log\frac{\Omega_{5m+1}^{(7m)}}{\Omega_{4m+1}^{(7m)}}
\\
&
+
\Bigl(\frac{1}{2}-\theta_{2,+}^{(n,M)}\Bigr)\log \frac{\Omega_{5m+1}^{(7m)}}{\Omega_{4m+1}^{(7m)}}
+\tau_{\rho_2}M^2\log\frac{\Omega_{5m+1}^{(7m)}}{\Omega_{4m+1}^{(7m)}}
\\
&\quad+\frac{\widehat{\mathsf{T}}_1^{(3m+1,4m)}(\tau_{\rho_2})\bigl(
\tau_{\rho_2}^2M^2
+2\rho_2^2\Delta Q(\rho_2)\log M\bigr)}{\Omega_{4m+1}^{(7m)}}.
\end{align*}
Here, equivalently,
\begin{align*} 
\mathcal{D}_6^{(\mathrm{se},\mathrm{out})}
&=
\Bigl(2+\frac{\rho_2\partial_r\Delta Q(\rho_2)}{\Delta Q(\rho_2)}\Bigr)
\int_{-\infty}^{+\infty}
\mathcal{H}_{2}^{(\mathrm{se},\mathrm{out})}
(x;\vec{s},\vec{t})
\,dx
\\
&\quad
+3\Bigl(1+\frac{\rho_2\partial_r\Delta Q(\rho_2)}{\Delta Q(\rho_2)}\Bigr)
\int_{-\infty}^{+\infty}
\widetilde{\mathcal{H}}_{2}^{(\mathrm{se},\mathrm{out})}
(x;\vec{s},\vec{t})
\,dx
\\
&\quad 
+2\Bigl(2+\frac{\rho_2\partial_r\Delta Q(\rho_2)}{\Delta Q(\rho_2)}\Bigr)
\int_{-\infty}^{+\infty}
x
\Bigl( \log \mathcal{H}_{1}^{(\mathrm{se},\mathrm{out})}
(x;\vec{s},\vec{t})
+\mathbf{1}_{[0,+\infty)}(x)\sum_{\ell=4m+1}^{5m}s_{\ell}\Bigr)
\,dx
\\
&\quad 
+
4\rho_2^2\Delta Q(\rho_2)
\sum_{\ell=3m+1}^{4m}
\frac{\omega_{\ell}}{\Omega_{4m+1}^{(7m)}}t_{\ell}
\int_{-\infty}^{+\infty}
\Bigl( 
\frac{\frac{e^{-x^2}}{2\sqrt{\pi}(1-\frac{1}{2}\erfc(x))}}{\mathcal{H}_{1}^{(\mathrm{se},\mathrm{out})}
(x;\vec{s},\vec{t})}
+
\Bigl(x+\frac{x}{2(x^2+1)}\Bigr)\mathbf{1}_{(-\infty,0]}(x)
\Bigr)\,dx
\\
&\quad 
-\rho_2^2\Delta Q(\rho_2)\frac{\widehat{\mathsf{T}}_1^{(3m+1,4m)}(\tau_{\rho_2})}{\Omega_{4m+1}^{(7m)}}
\log (2\rho_2^2\Delta Q(\rho_2)).
\end{align*}
\end{lemma}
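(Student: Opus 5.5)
The expansion of $E_8^{(\mathrm{an})}$ will be obtained as the outer-wall mirror image of part~(i) of Lemma~\ref{lemma:semi hard edge case asymptotic expansion disck complement}, i.e.\ of Lemma~\ref{lemma:asymptotic expansion E4 annulus} reflected through $\rho_2$.

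\emph{Step 1: reduce the summand.} On the window $g_{2,-}\le j\le g_{2,+}$ I will first discard the irrelevant groups. The inner groups $\ell\le 3m$ have every ratio $F_{n,j,\ell}^{(\mathrm{an})}$ exponentially close to $1$; the same holds for the outer-bulk and soft radii $5m<\ell\le 7m$. This follows from the off-saddle estimate of Lemma~\ref{lemma: 0 j j1-} together with Lemma~\ref{lemma:g2- j g2+}, which identifies the denominator as $h_{n,j}-h_{\#}(\xi_{j,2})$. Collecting the $\omega_\ell$ then turns the argument of the logarithm into
\[
\Omega_{4m+1}^{(7m)}\Bigl(1-\frac{1}{\Omega_{4m+1}^{(7m)}}\sum_{\ell=4m+1}^{5m}\omega_\ell\bigl(1-F_{n,j,\ell}\bigr)\Bigr)+\sum_{\ell=3m+1}^{4m}\omega_\ell F_{n,j,\ell}+\mathcal O(e^{-cM^2}).
\]
Here the hard-edge ratios ($3m<\ell\le4m$) are $1-\mathcal O(n^{-1/2})$, because the interval $[\rho_2,\rho_2(1+t_\ell/n)]$ carries only an $\mathcal O(n^{-1/2})$ fraction of the mass of a truncated Gaussian of width $n^{-1/2}$.

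\emph{Step 2: local expansion.} For the semi-hard radii $r_\ell=\rho_2+t_\ell/\sqrt{2n\Delta Q(\rho_2)}$, the truncated Laplace expansion of Lemma~\ref{lemma:g1- j g1+}, with $c_1,c_2$ from \eqref{def of c_1}--\eqref{def of c_2}, gives ratios of complementary error functions. Written in the variable $x=-\tau_{\rho_2}M_{j,2}/(\rho_2\sqrt{2\Delta Q(\rho_2)})$, these produce $\mathcal H_1^{(\mathrm{se,out})}$ at leading order and $\mathcal H_2^{(\mathrm{se,out})},\widetilde{\mathcal H}_2^{(\mathrm{se,out})}$ at order $n^{-1/2}$. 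For the hard radii I will expand the numerator $2\int_{\rho_2}^{\rho_2(1+t_\ell/n)}$ directly at the endpoint $\rho_2$. This contributes $-\frac{2\rho_2\sqrt{2\Delta Q(\rho_2)}}{\sqrt\pi}\,t_\ell\,\frac{e^{-x^2}}{2(1-\frac12\erfc(x))}\,n^{-1/2}$, which is the mirror of $\mathscr U_2^{(\mathrm{h,in})}$. Expanding the logarithm then gives a sum $\sum h_0(M_{j,2})+n^{-1/2}\sum h_1(M_{j,2})$ with a uniform remainder of size $n^{-1}(1+|M_{j,2}|^4)$, whose total contribution is $\mathcal O(M^5n^{-1/2})$.

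\emph{Step 3: summation and identification of constants.} Applying Lemma~\ref{lemma:Riemann sum} gives the $\sqrt n$ integral, the Jacobian term $-2x\log\mathcal H_1$, and endpoint corrections involving $\theta_{2,\pm}^{(n,M)}$. The orientation is reversed relative to $\rho_1$, so the saturated side is $x\to-\infty$, where $\mathcal H_1^{(\mathrm{se,out})}\to\Omega_{5m+1}/\Omega_{4m+1}$. This is why the indicator $\mathbf 1_{[0,\infty)}\sum s_\ell$ appears in $C_4^{\#(\mathrm{se,out})}$ and produces the $\tau_{\rho_2}M\sqrt n\log(\Omega_{5m+1}/\Omega_{4m+1})$ cutoff terms. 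The $\rho_2\mathsf k'(\rho_2)$ and $12\rho_2^2\Delta Q/\tau_{\rho_2}$ pieces of $h_1$ will be integrated by parts using $\partial_x\log\mathcal H_1$, exactly as in the $E_4$ computation. This yields the $\tau_{\rho_2}M^2$ and $(\frac12\rho_2\mathsf k'+1)\log(\Omega_{5m+1}/\Omega_{4m+1})$ terms, and it cancels the Jacobian integral against the unregularized $2x\log\mathcal H_1$.

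\emph{Main obstacle.} I expect the hard-edge integral to be the hardest step. As $x\to-\infty$ the integrand grows like $\sqrt\pi|x|+\frac{\sqrt\pi}{2|x|}$. It must be regularized by subtracting $(x+\frac{x}{2(x^2+1)})\mathbf 1_{(-\infty,0]}$, and the truncation at $\pm\tau_{\rho_2}M/(\rho_2\sqrt{2\Delta Q})$ must be tracked to produce the terms $\tau_{\rho_2}^2M^2$ and $2\rho_2^2\Delta Q\log M$, together with the constants $\log(2\rho_2^2\Delta Q(\rho_2))$ and $\log\tau_{\rho_2}$, with correct signs, so that they match $\widetilde C^{(M)}_{E_7^{(\mathrm{an})}}$. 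I will verify the signs by checking the cancellation of the $\log M$ coefficient against Lemma~\ref{lemma:asymptotic expansion E7 annulus}. The remaining errors are $\mathcal O(M^5n^{-1/2}+M^{-2})=\mathcal O(n^{-1/12})$.
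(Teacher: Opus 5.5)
The paper takes the same route you do: discard every group except $3m<\ell\le5m$, expand the semi-hard ratios into the profile $\mathcal H_1^{(\mathrm{se,out})}$ with its $n^{-1/2}$ correction, expand the hard radii at $\rho_2$, sum with Lemma~\ref{lemma:Riemann sum}, integrate by parts, and regularize the hard term on the hole side. The hard-edge block, however, is wrong as you wrote it.

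\textbf{The hard-edge sign.} For $3m<\ell\le4m$, the numerator of $F_{n,j,\ell}^{(\mathrm{an})}$ is $2\int_0^{\rho_1}+2\int_{\rho_2}^{\rho_2(1+t_\ell/n)}$ of $re^{\mathsf k(r)}e^{-nV_\tau(r)}$, and the first part is exponentially negligible. By the very reason you give (the interval carries an $\mathcal O(n^{-1/2})$ fraction of the mass), these ratios are $\mathcal O(n^{-1/2}(1+|x|))$, not $1-\mathcal O(n^{-1/2})$. To leading order, in your variable $x$,
\[
F_{n,j,\ell}^{(\mathrm{an})}=\frac{\rho_2\sqrt{2\Delta Q(\rho_2)}}{\sqrt\pi}\,t_\ell\,\frac{e^{-x^2}}{1-\frac12\erfc(x)}\,n^{-1/2},
\]
which is your expression with the \emph{opposite} sign; this is the paper's $\mathscr V_2^{(\mathrm h)}$. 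The minus sign in $\mathscr U_2^{(\mathrm{h,in})}$ does not carry over by reflection. At $\rho_1$ one has $F_\ell=1-\int_{r_\ell}^{\rho_1}/\int_0^{\rho_1}$, and the hard weights are absorbed into $\Omega_{2m+1}$. At $\rho_2$ the weights $\omega_{3m+1},\dots,\omega_{4m}$ are not in $\Omega_{4m+1}^{(7m)}$; they enter only through this positive correction.

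\textbf{Consequence of the wrong sign.} Every $\widehat{\mathsf T}_1^{(3m+1,4m)}(\tau_{\rho_2})$-term of the lemma comes from $\int\mathscr V_2^{(\mathrm h)}/\mathcal H_1^{(\mathrm{se,out})}$ after the tail subtraction, so with your sign all of them come out negated. These are the $\tau_{\rho_2}^2M^2$ and $2\rho_2^2\Delta Q(\rho_2)\log M$ terms in $\widetilde C^{(M)}_{E_8^{(\mathrm{an})}}$, the $4\rho_2^2\Delta Q(\rho_2)\sum_\ell\cdots$ integral and the $\log(2\rho_2^2\Delta Q(\rho_2))$ term in $\mathcal D_6^{(\mathrm{se,out})}$, and the $\log\tau_{\rho_2}$ term in $C^{(6)}_{E_8^{(\mathrm{an})}}$. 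Since $1-\widehat{\mathsf T}_0^{(3m+1,4m)}(\tau_{\rho_2})+\widehat{\mathsf T}_0^{(3m+1,7m)}(\tau_{\rho_2})=\Omega_{4m+1}^{(7m)}$, Lemma~\ref{lemma:asymptotic expansion E7 annulus} supplies $-\widehat{\mathsf T}_1^{(3m+1,4m)}(\tau_{\rho_2})\bigl(\tau_{\rho_2}^2M^2+2\rho_2^2\Delta Q(\rho_2)\log M\bigr)/\Omega_{4m+1}^{(7m)}$. With your sign these cutoff terms would double instead of cancel. Your planned $\log M$ check would expose this, but Step~2 has to be corrected before the computation can give the stated coefficients.

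\textbf{Two orientation slips.} First, the inner ratios $F_{n,j,\ell}^{(\mathrm{an})}$ with $\ell\le3m$ are exponentially close to $0$, not $1$, because their numerators live in $[0,\rho_1]$, far from the saddle. Your displayed reduction actually uses the value $0$; the value $1$ would replace $\Omega_{4m+1}^{(7m)}$ by $\Omega_1^{(7m)}$. Second, in the integration variable:
\begin{itemize}
\item $\mathcal H_1^{(\mathrm{se,out})}(x)\to\Omega_{5m+1}^{(7m)}/\Omega_{4m+1}^{(7m)}$ as $x\to+\infty$, the bulk side. That is why $C_4^{\#(\mathrm{se,out})}$ carries $\mathbf 1_{[0,\infty)}$, and why $\bigl(\frac12-\theta_{2,+}^{(n,M)}\bigr)\log\bigl(\Omega_{5m+1}^{(7m)}/\Omega_{4m+1}^{(7m)}\bigr)$ appears.
\item $\mathcal H_1^{(\mathrm{se,out})}(x)\to1$ as $x\to-\infty$, the hole side. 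There the hard term grows like $|x|$ and is regularized with $\mathbf 1_{(-\infty,0]}$.
\end{itemize}
Placing the semi-hard saturation at $x\to-\infty$, as your Step~3 states, would mis-assign these tail subtractions and the $M\sqrt n$ and $M^2$ cutoff terms.
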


\begin{proof}  
Split the truncated outer-hard-edge integral as
\begin{align*}
2\int_{\rho_2}^{r_{\ell}}ue^{-nV_{\tau}(u)}e^{\mathsf{k}(u)}\,du
&=
2\int_{\rho_2}^{+\infty}ue^{-nV_{\tau}(u)}e^{\mathsf{k}(u)}\,du
-2\int_{r_{\ell}}^{+\infty}ue^{-nV_{\tau}(u)}e^{\mathsf{k}(u)}\,du, 
\end{align*}
The estimates from Lemma~\ref{lemma:bulk case asymptotic expansion} and
Lemma~\ref{lemma:semi hard edge case asymptotic expansion disck complement}
then show that, for some $c>0$ and uniformly for
$g_{2,-}\le j\le g_{2,+}$,
\begin{align*}
1+\sum_{\ell=1}^{7m}\omega_{\ell}F_{n,j,\ell}
&=
\sum_{\ell=3m+1}^{4m}\omega_{\ell}F_{n,j,\ell}+\sum_{\ell=4m+1}^{5m}\omega_{\ell}F_{n, j,\ell}
+\Omega_{5m+1}^{(7m)}
+\mathcal{O}(e^{-cM^2}). 
\end{align*}
Introduce the leading profile and its first correction by
\begin{align*}
\mathscr{V}_1(x)
&:=1-\frac{1}{\Omega_{4m+1}^{(7m)}}\sum_{\ell=4m+1}^{5m}\omega_{\ell}
\frac{1-\frac{1}{2}\erfc(-t_{\ell}-x)}{1-\frac{1}{2}\erfc(-x)}, 
\quad
\widetilde{\mathscr{V}}_1(x)
:=\log \mathscr{V}_1(x),
\\
\mathscr{V}_2^{(\mathrm{se})}(x)
&:=
\frac{1}{\Omega_{4m+1}^{(7m)}}\sum_{\ell=4m+1}^{5m}\omega_{\ell}
\Bigl( 
\mathfrak{v}_2(x)t_{\ell}^2
+
\mathfrak{v}_1(x)t_{\ell}
+
\mathfrak{v}_0(x)
\Bigr), 
\\
\mathscr{V}_2^{(\mathrm{h})}(x)
&:=
\frac{\rho_2\sqrt{2\Delta Q(\rho_2)}}{\sqrt{\pi}}
\sum_{\ell=3m+1}^{4m}
\frac{\omega_{\ell}}{\Omega_{4m+1}^{(7m)}}t_{\ell}
\frac{e^{-x^2}}{1-\frac{1}{2}\erfc(-x)},
\end{align*}
where the dependence of $\mathfrak v_k(x)$ on the current summation parameter
$t_\ell$ is understood, and
\begin{align*}
\mathfrak{v}_2(x)
&:=
\frac{e^{-(t_{\ell}+x)^2}}{6\rho_2\sqrt{2\Delta Q(\rho_2)}\sqrt{\pi}(1-\frac{1}{2}\erfc(-x))}
\bigg[ 
-2\Bigl(2+\frac{\rho_2\partial_r\Delta Q(\rho_2)}{\Delta Q(\rho_2)}\Bigr)
+3\Bigl(1+\frac{\rho_2\partial_r\Delta Q(\rho_2)}{\Delta Q(\rho_2)}\Bigr)
\bigg], 
\\
\mathfrak{v}_1(x)
&:=
-\frac{2+\frac{\rho_2\partial_r\Delta Q(\rho_2)}{\Delta Q(\rho_2)}}{6\rho_2\sqrt{2\Delta Q(\rho_2)}\sqrt{\pi}}x
\frac{e^{-(t_{\ell}+x)^2}}{1-\frac{1}{2}\erfc(-x)}
,
\\
\mathfrak{v}_0(x)
&:=
\frac{1}{6\rho_2\sqrt{2\Delta Q(\rho_2)}\sqrt{\pi}}
\bigg\{
\frac{1-\frac{1}{2}\erfc(-t_{\ell}-x)}{(1-\frac{1}{2}\erfc(-x))^2}e^{-x^2}
-\frac{e^{-(t_{\ell}+x)^2}}{1-\frac{1}{2}\erfc(-x)}
\bigg\}
\bigg\{
5\Bigl(2+\frac{\rho_2\partial_r\Delta Q(\rho_2)}{\Delta Q(\rho_2)}\Bigr)x^2
\\
&\quad 
-\Bigl(2+\frac{\rho_2\partial_r\Delta Q(\rho_2)}{\Delta Q(\rho_2)}\Bigr)
-6\Bigl(2+\frac{\rho_2\partial_r\Delta Q(\rho_2)}{\Delta Q(\rho_2)}\Bigr)x^2
+\frac{12\rho_2^2\Delta Q(\rho_2)}{\tau_{\rho_2}}x^2
+6\Bigl(\frac{1}{2}\rho_2\mathsf{k}'(\rho_2)+1\Bigr)
\bigg\}. 
\end{align*}
Taking the logarithm of the uniform local expansion and summing over the
transition window gives
\begin{align*}
\sum_{j=g_{2,-}}^{g_{2,+}}\log\Bigl(1+\sum_{\ell=1}^{7m}\omega_{\ell}F_{n,j,\ell}^{(\mathrm{an})}\Bigr)
&=
\sum_{j=g_{2,-}}^{g_{2,+}}\log\Omega_{4m+1}^{(7m)}
+
\sum_{j=g_{2,-}}^{g_{2,+}}\widetilde{\mathscr{V}}_1(\tfrac{\tau_{\rho_2}M_{j}(\rho_2)}{\rho_2\sqrt{2\Delta Q(\rho_2)}})
\\
&\quad
+\frac{1}{\sqrt{n}}\sum_{j=g_{2,-}}^{g_{2,+}}
\frac{\mathscr{V}_2^{(\mathrm{se})}(\tfrac{\tau_{\rho_2}M_{j}(\rho_2)}{\rho_2\sqrt{2\Delta Q(\rho_2)}})+\mathscr{V}_2^{(\mathrm{h})}(\tfrac{\tau_{\rho_2}M_{j}(\rho_2)}{\rho_2\sqrt{2\Delta Q(\rho_2)}})}{\mathscr{V}_1(\tfrac{\tau_{\rho_2}M_{j}(\rho_2)}{\rho_2\sqrt{2\Delta Q(\rho_2)}})}
+\mathcal{O}(M^5n^{-1/2}).
\end{align*}
The constant part is counted exactly:
$\sum_{j=g_{2,-}}^{g_{2,+}}\log\Omega_{4m+1}^{(7m)}
=(g_{2,+}-g_{2,-}+1)\log\Omega_{4m+1}^{(7m)}$.
For the profile term, Lemma~\ref{lemma:Riemann sum} yields
\begin{align*}
&\quad\sum_{j=g_{2,-}}^{g_{2,+}}\widetilde{\mathscr{V}}_1(\tfrac{\tau_{\rho_2}M_j(\rho_2)}{\rho_2\sqrt{2\Delta Q(\rho_2)}})
\\
&=
\rho_2\sqrt{2\Delta Q(\rho_2)}
\int_{-\tfrac{\tau_{\rho_2}M}{\rho_2\sqrt{2\Delta Q(\rho_2)}}}^{\tfrac{\tau_{\rho_2}M}{\rho_2\sqrt{2\Delta Q(\rho_2)}}}
\log \mathscr{V}_1(-x)\,dx\,
\sqrt{n}
\\
&\quad
+\rho_2\sqrt{2\Delta Q(\rho_2)}\int_{-\tfrac{\tau_{\rho_2}M}{\rho_2\sqrt{2\Delta Q(\rho_2)}}}^{\tfrac{\tau_{\rho_2}M}{\rho_2\sqrt{2\Delta Q(\rho_2)}}}
2 \frac{\rho_2\sqrt{2\Delta Q(\rho_2)}}{\tau_{\rho_2}}x
\log \mathscr{V}_1(-x)\,dx
\\
&\quad
+\Bigl(\frac{1}{2}-\theta_{2,-}^{(n,M)}\Bigr)\log \mathscr{V}_1(\tfrac{\tau_{\rho_2}M}{\rho_2\sqrt{2\Delta Q(\rho_2)}})
+\Bigl(\frac{1}{2}-\theta_{2,+}^{(n,M)}\Bigr)\log \mathscr{V}_1(-\tfrac{\tau_{\rho_2}M}{\rho_2\sqrt{2\Delta Q(\rho_2)}})
+\mathcal{O}(M^3n^{-1/2}). 
\end{align*}
The profile has the endpoint behavior
\begin{align*}
\log \mathscr{V}_1(-x)
&=
\begin{cases}
\log\frac{\Omega_{5m+1}^{(7m)}}{\Omega_{4m+1}^{(7m)}},     & \mbox{as } x\to+\infty, \\
\mathcal{O}(e^{cx})   & \mbox{as } x\to-\infty. 
\end{cases}
\end{align*}
Subtracting its limiting constant on the positive half-line makes the integral
absolutely convergent and gives
\begin{align*}
&\quad\rho_2\sqrt{2\Delta Q(\rho_2)}
\int_{-\tfrac{\tau_{\rho_2}M}{\rho_2\sqrt{2\Delta Q(\rho_2)}}}^{\tfrac{\tau_{\rho_2}M}{\rho_2\sqrt{2\Delta Q(\rho_2)}}}
\log \mathscr{V}_1(-x)\,dx
\\
&=
\rho_2\sqrt{2\Delta Q(\rho_2)}
\int_{-\infty}^{+\infty}
\Bigl(\log \mathscr{V}_1( -x)
-\mathbf{1}_{[0,+\infty)}(x)\log\frac{\Omega_{5m+1}^{(7m)}}{\Omega_{4m+1}^{(7m)}}\Bigr)\,dx
+\tau_{\rho_2}M\log\frac{\Omega_{5m+1}^{(7m)}}{\Omega_{4m+1}^{(7m)}}
+\mathcal O(e^{-cM}).
\end{align*}
The two discrete endpoint corrections satisfy
\begin{align*}
\Bigl(\frac{1}{2}-\theta_{2,-}^{(n,M)}\Bigr)\log \mathscr{V}_1(\tfrac{\tau_{\rho_2}M}{\rho_2\sqrt{2\Delta Q(\rho_2)}})
&=
\mathcal{O}(e^{-cM}),\qquad M\to+\infty,
\\
\Bigl(\frac{1}{2}-\theta_{2,+}^{(n,M)}\Bigr)\log \mathscr{V}_1(-\tfrac{\tau_{\rho_2}M}{\rho_2\sqrt{2\Delta Q(\rho_2)}})   
&=
\Bigl(\frac{1}{2}-\theta_{2,+}^{(n,M)}\Bigr)\log \frac{\Omega_{5m+1}^{(7m)}}{\Omega_{4m+1}^{(7m)}}+\mathcal{O}(e^{-cM^2}),\qquad
M\to+\infty.
\end{align*}
Applying Lemma~\ref{lemma:Riemann sum} once more to the first correction gives
\begin{align*}
&\quad
\frac{1}{\sqrt{n}}\sum_{j=g_{2,-}}^{g_{2,+}}
\frac{1}{\mathscr{V}_1(\tfrac{\tau_{\rho_2}M_j(\rho_2)}{\rho_2\sqrt{2\Delta Q(\rho_2)}})}\Bigl[
\mathscr{V}_2^{(\mathrm{se})}(\tfrac{\tau_{\rho_2}M_j(\rho_2)}{\rho_2\sqrt{2\Delta Q(\rho_2)}})+\mathscr{V}_2^{(\mathrm{h})}(\tfrac{\tau_{\rho_2}M_j(\rho_2)}{\rho_2\sqrt{2\Delta Q(\rho_2)}})
\Bigr]
\\
&=
\rho_2\sqrt{2\Delta Q(\rho_2)}
\int_{-\tfrac{\tau_{\rho_2}M}{\rho_2\sqrt{2\Delta Q(\rho_2)}}}^{\tfrac{\tau_{\rho_2}M}{\rho_2\sqrt{2\Delta Q(\rho_2)}}}
\frac{1}{\mathscr{V}_1(-x)}\Bigl[
\mathscr{V}_2^{(\mathrm{se})}(-x)+\mathscr{V}_2^{(\mathrm{h})}(-x)
\Bigr]\,dx
+\mathcal{O}(M^2n^{-1/2}).
\end{align*}
The remaining semi-hard correction is
\begin{align*}
 &\quad
 \rho_2\sqrt{2\Delta Q(\rho_2)}
\int_{-\tfrac{\tau_{\rho_2}M}{\rho_2\sqrt{2\Delta Q(\rho_2)}}}^{\tfrac{\tau_{\rho_2}M}{\rho_2\sqrt{2\Delta Q(\rho_2)}}}
\frac{1}{\mathscr{V}_1(-x)}
\sum_{\ell=4m+1}^{5m}\frac{\omega_{\ell}}{\Omega_{4m+1}^{(7m)}}
\mathfrak{v}_0(-x)
\,dx   
\\
&=
\Bigl(2+\frac{\rho_2\partial_r\Delta Q(\rho_2)}{\Delta Q(\rho_2)}\Bigr)
\\
&\quad\times\int_{-\tfrac{\tau_{\rho_2}M}{\rho_2\sqrt{2\Delta Q(\rho_2)}}}^{\tfrac{\tau_{\rho_2}M}{\rho_2\sqrt{2\Delta Q(\rho_2)}}}
\frac{1}{\mathscr{V}_1(-x)}
\sum_{\ell=4m+1}^{5m}\frac{\omega_{\ell}}{\Omega_{4m+1}^{(7m)}}
\frac{1}{6\sqrt{\pi}}
\frac{1-\frac{1}{2}\erfc(-t_{\ell}+x)}{(1-\frac{1}{2}\erfc(x))^2}e^{-x^2}
(5x^2-1)
\,dx
\\
&\quad
-
\Bigl(2+\frac{\rho_2\partial_r\Delta Q(\rho_2)}{\Delta Q(\rho_2)}\Bigr)
\\
&\quad\times\int_{-\tfrac{\tau_{\rho_2}M}{\rho_2\sqrt{2\Delta Q(\rho_2)}}}^{\tfrac{\tau_{\rho_2}M}{\rho_2\sqrt{2\Delta Q(\rho_2)}}}
\frac{1}{\mathscr{V}_1( -x)}
\sum_{\ell=4m+1}^{5m}\frac{\omega_{\ell}}{\Omega_{4m+1}^{(7m)}}
\frac{1}{6\sqrt{\pi}}
\frac{e^{-(t_{\ell}-x)^2}}{1-\frac{1}{2}\erfc(x)}
(5x^2-1)
\,dx
\\
&\quad
+
\int_{-\tfrac{\tau_{\rho_2}M}{\rho_2\sqrt{2\Delta Q(\rho_2)}}}^{\tfrac{\tau_{\rho_2}M}{\rho_2\sqrt{2\Delta Q(\rho_2)}}}
\left\{\begin{aligned}&
-\Bigl(2+\frac{\rho_2\partial_r\Delta Q(\rho_2)}{\Delta Q(\rho_2)}\Bigr)x^2
+\frac{2\rho_2^2\Delta Q(\rho_2)}{\tau_{\rho_2}}x^2
\\
&\quad+\Bigl(\frac{1}{2}\rho_2\mathsf{k}'(\rho_2)+1\Bigr)
\end{aligned}\right\}
\partial_x\log\mathscr{V}_1(-x)
\,dx,
\end{align*}
Here we used the identity
\begin{align*}
\partial_x\log \mathscr{V}_1(-x)
&=
\frac{1}{\sqrt{\pi}}
\frac{1}{\mathscr{V}_1(-x)}
\sum_{\ell=4m+1}^{5m}\frac{\omega_{\ell}}{\Omega_{4m+1}^{(7m)}}
\bigg(
\frac{1-\frac{1}{2}\erfc(-t_{\ell}+x)}{(1-\frac{1}{2}\erfc(x))^2}e^{-x^2}
-\frac{e^{-(t_{\ell}-x)^2}}{1-\frac{1}{2}\erfc(x)}
\bigg),
\\
\mathscr{V}_1( -x)
&=1-\frac{1}{\Omega_{4m+1}^{(7m)}}\sum_{\ell=4m+1}^{5m}\omega_{\ell}
\frac{1-\frac{1}{2}\erfc(-t_{\ell}+x)}{1-\frac{1}{2}\erfc(x)}, 
\end{align*}
and integration by parts gives
\begin{align*}
&\quad
\int_{-\tfrac{\tau_{\rho_2}M}{\rho_2\sqrt{2\Delta Q(\rho_2)}}}^{\tfrac{\tau_{\rho_2}M}{\rho_2\sqrt{2\Delta Q(\rho_2)}}}
\bigg\{
-\Bigl(2+\frac{\rho_2\partial_r\Delta Q(\rho_2)}{\Delta Q(\rho_2)}\Bigr)x^2
+\frac{2\rho_2^2\Delta Q(\rho_2)}{\tau_{\rho_2}}x^2
+\Bigl(\frac{1}{2}\rho_2\mathsf{k}'(\rho_2)+1\Bigr)
\bigg\}
\partial_x\log\mathscr{V}_1( -x)
\,dx
\\
&=
2\Bigl(2+\frac{\rho_2\partial_r\Delta Q(\rho_2)}{\Delta Q(\rho_2)}\Bigr)
\int_{-\infty}^{+\infty}
x\Bigl( \log\mathscr{V}_1(-x) - \mathbf{1}_{[0,+\infty)}(x)\log\frac{\Omega_{5m+1}^{(7m)}}{\Omega_{4m+1}^{(7m)}}\Bigr)\,dx
+\tau_{\rho_2}M^2\log\frac{\Omega_{5m+1}^{(7m)}}{\Omega_{4m+1}^{(7m)}}
\\
&\quad 
-
\frac{4\rho_2^2\Delta Q(\rho_2)}{\tau_{\rho_2}}
\int_{-\tfrac{\tau_{\rho_2}M}{\rho_2\sqrt{2\Delta Q(\rho_2)}}}^{\tfrac{\tau_{\rho_2}M}{\rho_2\sqrt{2\Delta Q(\rho_2)}}}
x\log\mathscr{V}_1(-x)\,dx
+\Bigl(\frac{1}{2}\rho_2\mathsf{k}'(\rho_2)+1\Bigr)\log\frac{\Omega_{5m+1}^{(7m)}}{\Omega_{4m+1}^{(7m)}}
+\mathcal{O}(e^{-cM}).
\end{align*}
For the outer hard-edge contribution, the same regularization at
$x=-\infty$ gives
\begin{align*}
&\quad 
\rho_2\sqrt{2\Delta Q(\rho_2)}
\int_{-\tfrac{\tau_{\rho_2}M}{\rho_2\sqrt{2\Delta Q(\rho_2)}}}^{\tfrac{\tau_{\rho_2}M}{\rho_2\sqrt{2\Delta Q(\rho_2)}}}
\frac{\mathscr{V}_2^{(\mathrm{h})}(-x)}{\mathscr{V}_1( -x)}\,dx  
\\
&=
4\rho_2^2\Delta Q(\rho_2)
\sum_{\ell=3m+1}^{4m}
\frac{\omega_{\ell}}{\Omega_{4m+1}^{(7m)}}t_{\ell}
\int_{-\infty}^{+\infty}
\Bigl( 
\frac{\frac{e^{-x^2}}{2\sqrt{\pi}(1-\frac{1}{2}\erfc(x))}}{\mathscr{V}_1(-x)}
+
\Bigl(x+\frac{x}{2(x^2+1)}\Bigr)\mathbf{1}_{(-\infty,0]}(x)
\Bigr)\,dx
\\
&\quad
+\frac{\widehat{\mathsf{T}}_1^{(3m+1,4m)}(\tau_{\rho_2})}{\Omega_{4m+1}^{(7m)}}
\Bigl( 
\tau_{\rho_2}^2M^2
+\rho_2^2\Delta Q(\rho_2)\log \Bigl(\frac{M^2\tau_{\rho_2}^2}{2\rho_2^2\Delta Q(\rho_2)}\Bigr)
\bigr)
+\mathcal{O}(M^{-2}).
\end{align*}
Combining the exact count, the regularized leading profile, and the two first
corrections produces the coefficients displayed in the statement.  The
$M\sqrt n$ and $M^2$ terms match the neighboring ranges $E_7^{(\mathrm{an})}$
and $E_9^{(\mathrm{an})}$, while the rounding terms come from the two endpoints
of the window.  The local and Riemann-sum remainders are uniform; with
$M=n^{1/12}$ their total is $\mathcal O(n^{-1/12})$.
\end{proof}


\begin{lemma}\label{lemma:asymptotic expansion E9 annulus}
There exists $c>0$ such that, as $n\to\infty$,
\[
  E_{9}^{(\mathrm{an})}
    =
    (g_{a_2,-}-g_{2,+}-1)\log \Omega_{5m+1}^{(7m)}+\mathcal{O}(e^{-cM^2}). 
\]
\end{lemma}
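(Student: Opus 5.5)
The plan mirrors the proofs of Lemmas~\ref{lemma:asymptotic expansion E3 annulus} and~\ref{lemma:asymptotic expansion E1 annulus}. We show that each summand of $E_9^{(\mathrm{an})}$ equals $\log\Omega_{5m+1}^{(7m)}+\mathcal O(e^{-cM^2})$ uniformly in $g_{2,+}<j<g_{a_2,-}$, and then sum over the $g_{a_2,-}-g_{2,+}-1$ indices. The polynomial number of terms is absorbed by shrinking $c$.

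In this range $\tau_j$ exceeds $\tau_{\rho_2}/(1-M/\sqrt n)$ and lies below $\tau_{a_2}/(1+M/\sqrt n)$. By the strict monotonicity of $rq'(r)$, the saddle $r_{\tau}$ therefore satisfies $r_\tau-\rho_2\geq cM/\sqrt n$ and $r_{5m+1}-r_\tau\geq cM/\sqrt n$. The same holds, with larger margin, for every later radius $r_\ell$, $\ell>5m$, including the soft-edge radii at $R$. By Lemmas~\ref{lemma:g2,++1leq j leq j2,+} and~\ref{lemma:j2,++1leq j leq n-1}, the denominator satisfies $h_{n,j}^{(\mathrm{an})}=h_{n,j}(1+\mathcal O(e^{-cM^2}))$. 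Its mass is concentrated in an $\mathcal O(\log n/\sqrt n)$-window around $r_\tau$, which lies in $(\rho_2,r_{5m+1})$.

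We then classify the numerators $h_{n,j,\ell}^{(\mathrm{an})}$ by the position of $r_\ell$.
\begin{itemize}
\item For $\ell\leq 5m$, all radii lie at or below $\rho_2+t_\ell/\sqrt{2n\Delta Q(\rho_2)}$. The whole integration domain, namely $[0,\rho_1]\cup[\rho_2,r_\ell]$ or $[0,r_\ell]$ with $r_\ell\le\rho_1$, then sits at distance $\gtrsim M/\sqrt n$ below the saddle. The quadratic lower bound for $V_\tau(r)-V_\tau(r_\tau)$ from Lemma~\ref{lemma: 0 j j1-}, combined with the comparison between the two components used in Lemma~\ref{lemma:j2- j g2--1}, gives $F_{n,j,\ell}^{(\mathrm{an})}=\mathcal O(e^{-cM^2})$.
\item For $\ell>5m$, the complement of $[0,\rho_1]\cup[\rho_2,r_\ell]$ in the allowed set is $[r_\ell,\infty)$. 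This set lies at distance $\gtrsim M/\sqrt n$ above the saddle, so by the same tail estimate $F_{n,j,\ell}^{(\mathrm{an})}=1+\mathcal O(e^{-cM^2})$. For the unbounded tail we use the choice of $p$ with $p\varepsilon_0>\alpha+1$, as in Lemma~\ref{lemma:j2- j g2--1}.
\end{itemize}

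Combining these, the argument of the logarithm is $\sum_{\ell=5m+1}^{7m}\omega_\ell+\omega_{7m+1}+\mathcal O(e^{-cM^2})$, which by telescoping equals $\Omega_{5m+1}^{(7m)}+\mathcal O(e^{-cM^2})$. For fixed real parameters (or a small complex polydisc) $\Omega_{5m+1}^{(7m)}$ stays away from zero, so the logarithm expands with the same error.

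The only delicate point is uniformity near the two ends of the range. At $j=g_{2,+}+1$ and $j=g_{a_2,-}-1$ the gap between $r_\tau$ and the relevant radius is only of order $M/\sqrt n$. We handle this exactly as in Lemma~\ref{lemma: j1- j g1-}: the gap $\gtrsim M/\sqrt n$, together with $V_\tau''(r_\tau)\geq c>0$ near $\rho_2$ and $a_2$, gives exponent $\gtrsim nM^2/n=M^2$. This is uniform because $\Delta Q$ is bounded below on $[\rho_2,R]$.
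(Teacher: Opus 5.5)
Your proposal is correct and follows the paper's own argument. The paper likewise proves a uniform pointwise bound $\log\bigl(1+\sum_{\ell}\omega_\ell F_{n,j,\ell}^{(\mathrm{an})}\bigr)=\log\Omega_{5m+1}^{(7m)}+\mathcal O(e^{-cM^2})$ for $g_{2,+}<j<g_{a_2,-}$ via Gaussian tail estimates, since the saddle stays a multiple of $M/\sqrt n$ away from $\rho_2$ and the $a_2$-radii, and then sums over the $g_{a_2,-}-g_{2,+}-1$ indices, absorbing the polynomial factor into $c$; you additionally make explicit the split $F_{n,j,\ell}^{(\mathrm{an})}\approx 0$ for $\ell\le 5m$, $\approx 1$ for $\ell>5m$, and the telescoping identity $1+\sum_{\ell=5m+1}^{7m}\omega_\ell=\Omega_{5m+1}^{(7m)}$.
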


\begin{proof}
For $g_{2,+}<j<g_{a_2,-}$, both adjacent transition variables have absolute
value at least a fixed multiple of $M$.  The Gaussian tail estimate underlying
Lemma~\ref{lemma:bulk case asymptotic expansion} consequently gives, uniformly
in this range,
\[
\log\Bigl(1+\sum_{\ell=1}^{7m}\omega_\ell
F_{n,j,\ell}^{(\mathrm{an})}\Bigr)
=\log\Omega_{5m+1}^{(7m)}+\mathcal O(e^{-cM^2}).
\]
There are $g_{a_2,-}-g_{2,+}-1$ indices.  Summing the uniform estimate and
decreasing $c$ to absorb the polynomial number of summands proves the claim.
\end{proof}

\begin{lemma}\label{lemma:asymptotic expansion E10 annulus}
As $n\to\infty$,
\[
 E_{10}^{(\mathrm{an})}
 =
 C_{E_{10}^{(\mathrm{an})}}^{(3)}n 
 + C_{E_{10}^{(\mathrm{an})}}^{(4)} \sqrt{n} 
 + C_{E_{10}^{(\mathrm{an})}}^{(5)}\log n
 + C_{E_{10}^{(\mathrm{an})}}^{(6)} 
 + \widetilde{C}_{E_{10}^{(\mathrm{an})}}^{(M)} 
 + \mathcal{O}(n^{-\frac{1}{12}}), 
\]
where    
\begin{align*}
C_{E_{10}^{(\mathrm{an})}}^{(3)}
&:=0, 
\quad
C_{E_{10}^{(\mathrm{an})}}^{(4)}
:=C_4^{\#(\mathrm{b,out})},
\quad
C_{E_{10}^{(\mathrm{an})}}^{(5)}
:=0,
\quad
C_{E_{10}^{(\mathrm{an})}}^{(6)}
:=
\mathcal{D}_6^{(\mathrm{b},\mathrm{out})}
-
\Bigl( 
\frac{1}{2}a_2\mathsf{k}'(a_2)+1
\Bigr)
\log\frac{\Omega_{5m+1}^{(7m)}}{\Omega_{6m+1}^{(7m)}}, 
\\
\widetilde{C}_{E_{10}^{(\mathrm{an})}}^{(M)} 
&:=
\sqrt{n}M\tau_{a_2}\log\frac{\Omega_{5m+1}^{(7m)}}{\Omega_{6m+1}^{(7m)}}
-
\tau_{a_2}M^2
\log\frac{\Omega_{5m+1}^{(7m)}}{\Omega_{6m+1}^{(7m)}}
+ \bigg( \frac{1}{2}-\theta_{a_2,-}^{(n,M)} \bigg)\log\frac{\Omega_{5m+1}^{(7m)}}{\Omega_{6m+1}^{(7m)}}
\\
&\quad
+
(2\tau_{a_2}M\sqrt{n}-\theta_{a_2,+}^{(n,M)}-\theta_{a_2,-}^{(n,M)}+1)\log \Omega_{6m+1}^{(7m)}.
\end{align*} 
\end{lemma}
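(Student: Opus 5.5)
This is the outer-bulk analogue of Lemma~\ref{lemma:asymptotic expansion E2 annulus}, and I would prove it by transplanting the proof of the $E_2^{(\mathrm{dc})}$ part of Lemma~\ref{lemma:bulk case asymptotic expansion} from $a_1$ to $a_2$.

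The first step is to reduce the summand on $g_{a_2,-}\le j\le g_{a_2,+}$. Here $\tau_j$ lies within $\mathcal O(M/\sqrt n)$ of $\tau_{a_2}$, and $\tau_{\rho_2}<\tau_{a_2}<1$. So the saddle $r_\tau$ sits a macroscopic distance beyond $\rho_2$ and inside $R$. The off-saddle Laplace estimates of Lemma~\ref{lemma: 0 j j1-} then give three facts, uniformly in $j$:
\begin{itemize}
\item the inner component $\int_0^{\rho_1}$ of $h_{n,j}^{(\mathrm{an})}$ is exponentially negligible, so $h_{n,j}^{(\mathrm{an})}=h_{n,j}(1+\mathcal O(e^{-cM^2}))$;
\item $F^{(\mathrm{an})}_{n,j,\ell}=\mathcal O(e^{-cM^2})$ for $\ell\le 5m$;
\item $F^{(\mathrm{an})}_{n,j,\ell}=1+\mathcal O(e^{-cM^2})$ for $6m+1\le\ell\le7m$.
\end{itemize}
Using the telescoping identity $\sum_{\ell\ge q}\omega_\ell=\Omega_q^{(7m)}$, the summand becomes
\[
\log\Bigl(\Omega_{6m+1}^{(7m)}+\sum_{\ell=5m+1}^{6m}\omega_\ell F^{(\mathrm{an})}_{n,j,\ell}\Bigr)+\mathcal O(e^{-cM^2}).
\]
This has exactly the structure of \eqref{def of pre sum ga1- ga1+}, with $(1,\dots,m)$ replaced by $(5m+1,\dots,6m)$ and $\Omega^{(3m)}_{m+1}$ replaced by $\Omega^{(7m)}_{6m+1}$.

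The second step is the local expansion of each remaining ratio. For $5m+1\le\ell\le6m$, I expand $F_{n,j,\ell}$ about $r_\tau$ using the truncated Laplace expansion with coefficients $c_1,c_2$ from \eqref{def of c_1}--\eqref{def of c_2}. The normalized variable is $\xi_{j,\ell}(a_2)=\sqrt{4n\Delta Q(r_\tau)}(r_\ell-r_\tau)$, which I rewrite through $M_j(a_2)$ via the inversion formula for $r_\tau$ around $a_2$ (coefficients $\mathsf b_1(a_2),\mathsf b_2(a_2)$). This yields the leading profile $\mathscr B_1$ and the first correction $\mathscr B_2$ of the earlier proof, now with $a_2$ and the outer indices. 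The logarithmic remainder is $\mathcal O(n^{-1}(1+|M_j|^4))$, and summing it over $\mathcal O(M\sqrt n)$ indices gives $\mathcal O(M^5n^{-1/2})=\mathcal O(n^{-1/12})$.

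The third step applies Lemma~\ref{lemma:Riemann sum} with $\tau_{a_2}$ in place of $\tau_{\rho_k}$; its proof only uses the mesh expansion, which holds verbatim at $a_2$. This produces:
\begin{itemize}
\item the $\sqrt n$-integral of $\log\mathscr B_1$;
\item the $-2x$ Jacobian term together with the $\mathscr B_2$ integral at order one;
\item the endpoint terms in $\theta_{a_2,\pm}^{(n,M)}$;
\item the exact count $(g_{a_2,+}-g_{a_2,-}+1)\log\Omega^{(7m)}_{6m+1}$.
\end{itemize}
To identify the constants, I split each integral at $0$ and subtract the $x\to+\infty$ limit $\log(\Omega^{(7m)}_{5m+1}/\Omega^{(7m)}_{6m+1})$. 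The identity $\omega_\ell/\Omega^{(7m)}_{6m+1}=(e^{s_\ell}-1)\exp(\sum_{p=\ell+1}^{6m}s_p)$ then turns the two halves into $\log\mathcal H_1^{(\mathrm{b,out})}$ and $\log\mathcal H_2^{(\mathrm{b,out})}$, which gives $C_4^{\#(\mathrm{b,out})}$. The truncation produces the cutoff term $\sqrt n M\tau_{a_2}\log(\Omega_{5m+1}/\Omega_{6m+1})$. For the order-one part, I integrate by parts the $x^2$ and constant pieces of $\mathfrak b_0$ against $\partial_x\log\mathscr B_1$. This yields $\mathcal D_6^{(\mathrm b,\mathrm{out})}$ together with $-(\tfrac12a_2\mathsf k'(a_2)+1)\log(\Omega_{5m+1}/\Omega_{6m+1})$ and the $-\tau_{a_2}M^2$ term. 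Finally, writing $g_{a_2,+}-g_{a_2,-}+1=2\tau_{a_2}M\sqrt n-\theta_{a_2,+}-\theta_{a_2,-}+1+\mathcal O(M^3/\sqrt n)$ assembles $\widetilde C^{(M)}_{E_{10}^{(\mathrm{an})}}$.

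The main obstacle is the order-one bookkeeping: checking that the $M^2$ and logarithmic boundary terms from the integration by parts combine with the Jacobian term into exactly $\mathcal D_6^{(\mathrm b,\mathrm{out})}$, with the stated signs. I would verify this by following the $a_1$ computation line by line, since every formula there depends on $a_1$ only through the local data $a_1,\Delta Q(a_1),\partial_r\Delta Q(a_1),\tau_{a_1},\mathsf k'(a_1)$.
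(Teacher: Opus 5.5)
Your proposal is correct and follows the paper's own argument. The paper likewise reduces the summand on $[g_{a_2,-},g_{a_2,+}]$ to the bulk-transition computation of Lemma~\ref{lemma:bulk case asymptotic expansion}, now with the block $5m+1,\dots,6m$ and the constants $\Omega^{(7m)}$. It then applies the two-sided Euler--Maclaurin formula of Lemma~\ref{lemma:Riemann sum} at $\tau_{a_2}$ and expands $g_{a_2,\pm}$ to produce $\widetilde C^{(M)}_{E_{10}^{(\mathrm{an})}}$, with the same $\mathcal O(M^5n^{-1/2})=\mathcal O(n^{-1/12})$ remainder. Your version is more explicit than the paper's sketch, for example in justifying why the indices $\ell\le5m$ drop out and the indices $\ell\ge6m+1$ saturate.
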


\begin{proof}
On $g_{a_2,-}\le j\le g_{a_2,+}$, use the local coordinate centered at
$a_2$ and reverse the orientation of the bulk-transition coordinate used at
$a_1$.  The uniform local norming-constant expansion from
Lemma~\ref{lemma:bulk case asymptotic expansion}, followed by the two-sided
Euler--Maclaurin formula, gives $C_4^{\#(\mathrm{b,out})}$ and
$\mathcal D_6^{(\mathrm{b,out})}$.  Expanding $g_{a_2,\pm}$ produces the two
rounding corrections and the $M\sqrt n$ and $M^2$ terms collected in
$\widetilde C_{E_{10}^{(\mathrm{an})}}^{(M)}$.  The local remainder is uniform
on the window and sums to $\mathcal O(n^{-1/12})$ for $M=n^{1/12}$.
\end{proof}

\begin{lemma}\label{lemma:asymptotic expansion E11 annulus}
\label{lemma:asymptotics of E11 counting annulus}
There exists $c>0$ such that, as $n\to\infty$,
\[
E_{11}^{(\mathrm{an})}
=
(g_{R,-}-g_{a_2,+}-1)\log\Omega_{6m+1}^{(7m)}+\mathcal{O}(e^{-cM^2}).
\]
\end{lemma}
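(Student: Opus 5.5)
The estimate is the exact analogue of Lemma~\ref{lemma:asymptotic expansion E9 annulus}, now for the range between the outer bulk window at $a_2$ and the soft-edge window at $R$. We show that each summand equals $\log\Omega_{6m+1}^{(7m)}$ up to an error $\mathcal O(e^{-cM^2})$, uniformly for $g_{a_2,+}<j<g_{R,-}$. The claim then follows by summation.

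\emph{Locating the saddle.} For $g_{a_2,+}+1\le j\le g_{R,-}-1$, the definitions of the cutoffs give
\[
\frac{\tau_{a_2}}{1-M/\sqrt n}\le \tau_j\le \frac{1}{1+M/\sqrt n}.
\]
Since $(rq'(r))'=4r\Delta Q(r)$ is bounded below near $[a_2,R]$, the inverse function theorem gives $r_\tau-a_2\ge cM/\sqrt n$ and $R-r_\tau\ge cM/\sqrt n$, uniformly in $j$. It also gives $r_\tau>\rho_2+c$. Hence the saddle of $V_\tau$ lies in the interior of the outer component $(\rho_2,\infty)$.

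\emph{Estimating the ratios $F_{n,j,\ell}^{(\mathrm{an})}$.} We treat the radii in groups, using the quadratic lower bound $V_\tau(r)-V_\tau(r_\tau)\ge c(r-r_\tau)^2$ near the saddle and the tail estimates of Lemmas~\ref{lemma: 0 j j1-} and~\ref{lemma:j2- j g2--1}.
\begin{itemize}
\item Radii $r_\ell<a_2+\mathcal O(n^{-1/2})$, i.e.\ $\ell\le 6m$, including the inner component $[0,\rho_1]$. The numerator $h_{n,j,\ell}^{(\mathrm{an})}$ integrates only over $\{r\le r_\ell\}$, which lies at distance $\ge cM/\sqrt n$ from $r_\tau$. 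It is therefore $e^{-nV_\tau(r_\tau)}\mathcal O(e^{-cM^2})$.
\item The same reasoning shows that the denominator $h_{n,j}^{(\mathrm{an})}$ equals $h_{n,j}(1+\mathcal O(e^{-cM^2}))$, as in Lemma~\ref{lemma:g2,++1leq j leq j2,+}.
\item Consequently $F_{n,j,\ell}^{(\mathrm{an})}=\mathcal O(e^{-cM^2})$ for $\ell\le6m$.
\item Soft-edge radii $r_\ell=R+\mathcal O(n^{-1/2})$, i.e.\ $6m+1\le\ell\le7m$. Now $r_\ell-r_\tau\ge cM/\sqrt n$. The complementary integral over $\{r>r_\ell\}$ is $\mathcal O(e^{-cM^2})$ relative to the Gaussian mass, by Assumption~\ref{Assumption_Q}\ref{Assumption 1} and the same argument with a fixed power $p$ as in Lemma~\ref{lemma:j2- j g2--1}. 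Hence $F_{n,j,\ell}^{(\mathrm{an})}=1+\mathcal O(e^{-cM^2})$.
\end{itemize}

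\emph{Telescoping.} Substituting these estimates and using the telescoping identity $1+\sum_{\ell=6m+1}^{7m}\omega_\ell=\Omega_{6m+1}^{(7m)}$ gives
\[
\log\Big(1+\sum_{\ell=1}^{7m}\omega_\ell F_{n,j,\ell}^{(\mathrm{an})}\Big)=\log\Omega_{6m+1}^{(7m)}+\mathcal O(e^{-cM^2}).
\]
This holds uniformly in $j$, since $\Omega_{6m+1}^{(7m)}$ is bounded away from zero for fixed real parameters.

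\emph{Summation.} There are $g_{R,-}-g_{a_2,+}-1$ indices in the range. Summing the pointwise estimate, the factor $n$ multiplying the error is absorbed by decreasing $c$.

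The only delicate step is the soft-edge tail. The distance $R-r_\tau$ is merely of order $M/\sqrt n$, not macroscopic, so the tail bound must use the quadratic lower bound near the saddle rather than a macroscopic gap. This is the same mechanism already used for the window boundaries in Lemma~\ref{lemma:bulk case asymptotic expansion}.
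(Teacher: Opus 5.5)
Your proposal is correct and follows the same route as the paper. You show that each summand equals $\log\Omega_{6m+1}^{(7m)}+\mathcal O(e^{-cM^2})$ uniformly in $j$, using Gaussian tail bounds (the saddle stays at least $cM/\sqrt n$ from the $a_2$-bulk radii and from the soft-edge radii), and then sum over at most $n$ indices. You also supply details the paper leaves implicit, namely the saddle location, the quadratic lower bound, the Assumption~\ref{Assumption_Q}\ref{Assumption 1} tail control and the telescoping identity. One small citation fix: for $j>j_{2,+}$ the denominator comparison is Lemma~\ref{lemma:j2,++1leq j leq n-1} rather than Lemma~\ref{lemma:g2,++1leq j leq j2,+}, though the argument is identical.
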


\begin{proof}
For $g_{a_2,+}<j<g_{R,-}$, the saddle is separated by at least $M$ scaled
units from both the $a_2$ bulk window and the soft-edge window at $R$.
Consequently, the same Gaussian tail bound used in
Lemma~\ref{lemma:bulk case asymptotic expansion} gives, uniformly in $j$,
\[
\log\Bigl(1+\sum_{\ell=1}^{7m}\omega_\ell
F_{n,j,\ell}^{(\mathrm{an})}\Bigr)
=\log\Omega_{6m+1}^{(7m)}+\mathcal O(e^{-cM^2}).
\]
The interval contains $g_{R,-}-g_{a_2,+}-1$ indices.  Summing and absorbing
this polynomial factor into the exponential completes the proof.
\end{proof}

It remains to analyze the one-sided soft-edge window.

\begin{lemma}\label{lemma:asymptotic expansion E12 annulus}
\label{lemma:asymptotics of E12 counting annulus}
As $n\to\infty$,
\[
    E_{12}^{(\mathrm{an})}
    =
    C_{E_{12}^{(\mathrm{an})}}^{(3)}n
    +
    C_{E_{12}^{(\mathrm{an})}}^{(4)}\sqrt{n}
    +
    C_{E_{12}^{(\mathrm{an})}}^{(5)}\log n
    +
    C_{E_{12}^{(\mathrm{an})}}^{(6)}
    +
    \widetilde{C}_{E_{12}^{(\mathrm{an})}}^{(M)}
    +
    \mathcal{O}(n^{-\frac{1}{12}}),
\]  
where 
\begin{align*}
C_{E_{12}^{(\mathrm{an})}}^{(3)}
&:=0, \quad
C_{E_{12}^{(\mathrm{an})}}^{(4)}
:=C_4^{\#(\mathrm{s})},
\quad
C_{E_{12}^{(\mathrm{an})}}^{(5)}
:=0, 
\quad
C_{E_{12}^{(\mathrm{an})}}^{(6)}
:=
\mathcal{D}_6^{(\mathrm{s})}
-\Bigl(\frac{1}{2}R\,\mathsf{k}'(R)+1\Bigr)\log\Omega_{6m+1}^{(7m)},
\\
\widetilde{C}_{E_{12}^{(\mathrm{an})}}^{(M)}
&:=
\Bigl(\frac{1}{2}-\theta_{R,-}^{(n,M)}\Bigr)\log\Omega_{6m+1}^{(7m)}
+M\sqrt{n}\log\Omega_{6m+1}^{(7m)}
-M^2\log\Omega_{6m+1}^{(7m)}.
\end{align*}
\end{lemma}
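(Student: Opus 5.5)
The plan is to treat the range $g_{R,-}\le j\le n-1$ as a one-sided analogue of the bulk transition in Lemma~\ref{lemma:asymptotic expansion E2 annulus}, with coordinate $M_j(R)=\sqrt n(1/\tau_j-1)$ from \eqref{def of Mja} at $a=R$, $\tau_R=1$. This coordinate ranges over $[0,M]$, up to the rounding at $j=n-1$.

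First we reduce the summand. Throughout this range the saddle $r_\tau$ lies within $\mathcal O(M/\sqrt n)$ of $R$. Hence all radii $r_\ell$ with $\ell\le 6m$, including $\rho_1$, $\rho_2$, $a_1$ and $a_2$, are a macroscopic distance inside the saddle. The off-saddle estimate of Lemma~\ref{lemma: 0 j j1-} therefore gives $F_{n,j,\ell}^{(\mathrm{an})}=1+\mathcal O(e^{-cM^2})$ for $\ell\le 6m$, uniformly in $j$; the hard-wall excision removes only an exponentially small part of $h_{n,j}^{(\mathrm{an})}$. The summand becomes $\log\bigl(1+\sum_{\ell=6m+1}^{7m}\omega_\ell F_{n,j,\ell}^{(\mathrm{an})}\bigr)+\mathcal O(e^{-cM^2})$. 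Since $\sum_{\ell\le 6m}\omega_\ell=\Omega_1^{(7m)}-\Omega_{6m+1}^{(7m)}$, the constant $1$ combines with these terms into $\Omega_{6m+1}^{(7m)}$.

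For the soft-edge radii we use the truncated Laplace expansion at $r_\tau$, as in the proof of Lemma~\ref{lemma:bulk case asymptotic expansion}, with upper limit $\xi_{j,\ell}=\sqrt{4n\Delta Q(r_\tau)}(r_\ell-r_\tau)$. Dividing by the full expansion of Lemma~\ref{lemma: 0 j j1-} gives
\[
F_{n,j,\ell}^{(\mathrm{an})}=\tfrac12\erfc(-t_\ell+x)+n^{-1/2}(\cdots)+\mathcal O(M^4/n),\qquad x=\tfrac{R M_j(R)\sqrt{2\Delta Q(R)}}{\,\cdots\,}.
\]
The logarithm then expands as $\log\mathcal H_1^{(\mathrm s)}(x)+n^{-1/2}\bigl(\mathcal G_1^{(\mathrm s)}+\mathcal G_2^{(\mathrm s)}\text{-type terms}\bigr)$.

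Next we sum over $j$. We apply the one-sided version of Lemma~\ref{lemma:Riemann sum}, obtained by trapezoidal summation on $[0,M]$, with the mesh expansion $M_{j-1}-M_j\approx n^{-1/2}+2M_j/n$ at $\tau_R=1$. The integral term gives $\sqrt n\int\log\mathcal H_1^{(\mathrm s)}$. The Jacobian produces the $2x\log\mathcal H_1^{(\mathrm s)}$ term. The endpoint at $M_j=M$ is deep in the bulk, where $\log\mathcal H_1^{(\mathrm s)}\to\sum_{\ell=6m+1}^{7m} s_\ell=\log\Omega_{6m+1}^{(7m)}$, and it contributes the rounding term $(\tfrac12-\theta_{R,-}^{(n,M)})\log\Omega_{6m+1}^{(7m)}$. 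The endpoint $j=n-1$, where $M_j\approx0$, contributes the last term of $\mathcal D_6^{(\mathrm s)}$: the half-weight $\tfrac12$ together with the $(\tfrac12R\mathsf k'(R)+\tfrac12)$ coefficient of $\log\bigl(1+\sum\tfrac{\omega_\ell}{2}\erfc(-t_\ell)\bigr)$, after integration by parts. We then subtract the limit constant to regularize at $-\infty$, which yields $C_4^{\#(\mathrm s)}$ plus the cutoff terms $M\sqrt n\log\Omega_{6m+1}^{(7m)}-M^2\log\Omega_{6m+1}^{(7m)}$. As in Lemma~\ref{lemma:semi hard edge case asymptotic expansion disck complement}(i), an integration by parts using $\partial_x\log\mathcal H_1^{(\mathrm s)}$ converts the $x^2$- and $\mathsf k'$-weighted correction terms into the $\int x(\log\mathcal H_1^{(\mathrm s)}-\sum s_\ell)$ term. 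The residual $-(\tfrac12R\mathsf k'(R)+1)\log\Omega_{6m+1}^{(7m)}$ comes from the boundary at the bulk end.

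The main obstacle is the endpoint $j=n-1$. Euler--Maclaurin must be handled with a fixed, non-rounded endpoint $b_0=-1$ at $\tau=1$, and the $n^{-1/2}$ correction coefficients must be tracked to order one there. This is where the $\log(1+\sum\tfrac{\omega_\ell}{2}\erfc(-t_\ell))$ term arises. I would follow \cite{ACC2023c,CL2023} for the soft-edge bookkeeping. All remainders are $\mathcal O(M^5n^{-1/2})=\mathcal O(n^{-1/12})$.
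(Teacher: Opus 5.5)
Your plan follows the paper's proof. The steps match: reduce to the soft-edge summand, apply the truncated Laplace expansion, use the one-sided Euler--Maclaurin formula with the $-\tfrac12 f(0)$ term at $j=n-1$, regularize by subtracting $\log\Omega_{6m+1}^{(7m)}$, and integrate by parts. As you indicate, the endpoint term $-\tfrac12\log\mathcal H_1^{(\mathrm s)}(0)$ combines with the boundary term $(\tfrac12R\mathsf k'(R)+1)\log\mathcal H_1^{(\mathrm s)}(0)$ to give the last term of $\mathcal D_6^{(\mathrm s)}$. The $-M^2\log\Omega_{6m+1}^{(7m)}$ term comes from the boundary term of that integration by parts, where the Jacobian term cancels, not from regularizing the leading integral.

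One step is wrong as written. For $\ell\le 6m$ the radii lie a macroscopic distance below the saddle. The truncated integrals defining $F_{n,j,\ell}^{(\mathrm{an})}$ therefore miss essentially all the mass, so $F_{n,j,\ell}^{(\mathrm{an})}=\mathcal O(e^{-cM^2})$, not $1+\mathcal O(e^{-cM^2})$.

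This is what gives the reduced summand you display, $\log\bigl(1+\sum_{\ell=6m+1}^{7m}\omega_\ell F_{n,j,\ell}^{(\mathrm{an})}\bigr)$, and no recombination with $\sum_{\ell\le6m}\omega_\ell$ is needed. The constant $\Omega_{6m+1}^{(7m)}=1+\sum_{\ell=6m+1}^{7m}\omega_\ell$ enters only as the limit of this summand at the bulk end $M_j\approx M$, where the soft-edge ratios tend to one.

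Taken literally, your claim $F\approx 1$ would give $\log\bigl(1+\Omega_1^{(7m)}-\Omega_{6m+1}^{(7m)}+\sum_{\ell>6m}\omega_\ell F_{n,j,\ell}^{(\mathrm{an})}\bigr)$. That matches neither the neighbouring range $E_{11}^{(\mathrm{an})}$ nor the stated coefficients. With this corrected, the rest of your argument goes through as in the paper.
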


We first record the one-sided Euler--Maclaurin formula needed at the outer
edge.  Apply Lemma~\ref{lemma:Riemann sum} with $a$ replaced by $R$ and with
the coordinate in \eqref{def of Mja}; see also
\cite[Eqs.~(2.26), (2.27)]{C2021 FH}, \cite[Eq.~(3.49)]{C2021}, and
\cite[Lemma~3.4]{CL2023}.  If $f\in C^3(\mathbb R)$ and
$f,f',f'',f'''$ are bounded, then, uniformly for $M=o(\sqrt n)$,
\begin{align*}
\sum_{j=g_{R,-}}^{n-1}f(M_j)
&=\int_{0}^{M_{g_{R,-}}}f(t)\,dt\,\sqrt{n}
-2\int_{0}^{M_{g_{R,-}}}tf(t)\,dt
+\frac{f(M_{g_{R,-}})-f(0)}{2}
+\mathcal{O}(M^3n^{-1/2}). 
\end{align*}
At the lower endpoint,
\[
M_{g_{R,-}}(R)=M-\frac{\theta_{R,-}^{(n,M)}}{\sqrt n}
-\frac{2M\theta_{R,-}^{(n,M)}}{n}
+\mathcal O(M^2n^{-3/2}).
\]
Substitution into the preceding formula, followed by Taylor expansion of the
two endpoint terms, gives
\begin{align}
\begin{split}
\label{def of one sided Riemann sum}
\sum_{j=g_{R,-}}^{n-1}f(M_j)
&=
\int_0^{M}f(t)\,dt\,\sqrt{n}
-2\int_0^{M}tf(t)\,dt
\\
&\quad 
+\Bigl(\frac{1}{2}-\theta_{R,-}^{(n,M)}\Bigr)f(M)
-\frac{1}{2}f(0)
+\mathcal{O}(M^3n^{-1/2}). 
\end{split}
\end{align}
The error is controlled by a constant times the bounded $C^3$ norm of $f$;
thus the formula applies uniformly to the profiles below.
\begin{proof}[Proof of Lemma~\ref{lemma:asymptotics of E12 counting annulus}]
For $g_{R,-}\le j\le n-1$, every contribution except the soft-edge one is
separated from its transition window by at least $M$ scaled units.  The
corresponding Gaussian tail estimates therefore imply, uniformly in $j$, that
for some $c>0$,
\[
\log\Big( 
1+\sum_{\ell=1}^{7m}\omega_{\ell}F_{n, j,\ell}^{(\mathrm{an})}
\Bigr)
=
\log\Bigl( 
1+\sum_{\ell=6m+1}^{7m}\omega_{\ell}F_{n,j,\ell}^{(\mathrm{an})}
\Bigr)
+\mathcal{O}(e^{-cM^2}). 
\]
The uniform soft-edge expansion of the norming constant and the Taylor
expansion of the logarithm now give
\begin{align*}
&\quad \sum_{j=g_{R,-}}^{n-1}
\log\Big( 
1+\sum_{\ell=1}^{7m}\omega_{\ell}F_{n, j,\ell}^{(\mathrm{an})}
\Bigr)
\\
&=
\sum_{j=g_{R,-}}^{n-1}\log\mathscr{B}^{(\rm s)}_1(\tfrac{M_j(R)}{R\sqrt{2\Delta Q(R)}})
+\frac{1}{\sqrt{n}}\sum_{j=g_{R,-}}^{n-1}
\mathscr{B}_{2}^{(\rm s)}(\tfrac{M_j(R)}{R\sqrt{2\Delta Q(R)}})
+
\mathcal{O}\Bigl(\frac{M^5}{\sqrt{n}}\Bigr), 
\end{align*}
where, for $x\in\mathbb R$, the functions
$\mathscr B_1^{(\mathrm s)}$ and $\mathscr B_2^{(\mathrm s)}$ are defined by
\begin{align*}
\mathscr{B}_1^{(\rm s)}(x)&:= 
1+\sum_{\ell=6m+1}^{7m}
\frac{\omega_{\ell}}{2}\erfc(-t_{\ell}-x), 
\\
\mathscr{B}_{2}^{(\rm s)}(x)
&:=
\frac{1}{\mathscr{B}_1^{(\rm s)}(x)}
\sum_{\ell=6m+1}^{7m}
\frac{\omega_{\ell}e^{-(t_{\ell}+x)^2}}{6R\sqrt{2\Delta Q(R)}\sqrt{\pi}}
\bigl( 
\mathfrak{b}_2t_{\ell}^2+\mathfrak{b}_1(x)t_{\ell}+\mathfrak{b}_0(x)
\bigr). 
\end{align*}
Here 
\begin{align*}
\mathfrak{b}_2&:=
-2\Bigl(2+\frac{R\partial_r\Delta Q(R)}{\Delta Q(R)}\Bigr)
+
3\Bigl(1+\frac{R\partial_r\Delta Q(R)}{\Delta Q(R)}\Bigr)
=:\widetilde{\mathfrak{b}}_2+\widetilde{\mathfrak{b}}_2',
\\
\mathfrak{b}_1(\tfrac{x}{R\sqrt{2\Delta Q(R)}})&:=
-\Bigl(2+\frac{R\partial_r\Delta Q(R)}{\Delta Q(R)}\Bigr)\frac{x}{R\sqrt{2\Delta Q(R)}}, 
\\
\mathfrak{b}_0(\tfrac{x}{R\sqrt{2\Delta Q(R)}})&:=
-5\Bigl(
2+\frac{R\partial_r\Delta Q(R)}{\Delta Q(R)}
\Bigr)
\Bigl(\frac{x}{R\sqrt{2\Delta Q(R)}}\Bigr)^2
+
6\Bigl(
2+\frac{R\partial_r\Delta Q(R)}{\Delta Q(R)}
\Bigr)
\Bigl(\frac{x}{R\sqrt{2\Delta Q(R)}}\Bigr)^2
\\
&\quad
+2+\frac{R\partial_r\Delta Q(R)}{\Delta Q(R)}
-12R^2\Delta Q(R)
\Bigl(\frac{x}{R\sqrt{2\Delta Q(R)}}\Bigr)^2
-3R\mathsf{k}'(R)-6
\\
&=
\widetilde{\mathfrak{b}}_0(\tfrac{x}{R\sqrt{2\Delta Q(R)}})
+
6\Bigl(
2+\frac{R\partial_r\Delta Q(R)}{\Delta Q(R)}
\Bigr)
\Bigl(\frac{x}{R\sqrt{2\Delta Q(R)}}\Bigr)^2
\\
&\quad
-12R^2\Delta Q(R)
\Bigl(\frac{x}{R\sqrt{2\Delta Q(R)}}\Bigr)^2
-3R\mathsf{k}'(R)-6.
\end{align*}
Apply the one-sided formula \eqref{def of one sided Riemann sum} to each sum.
For the leading profile this gives
\begin{align*}
\sum_{j=g_{R,-}}^{n-1}\log\mathscr{B}_1^{(\rm s)}(\tfrac{M_j(R)}{R\sqrt{2\Delta Q(R)}})
&=
\int_0^{M}\log\mathscr{B}_1^{(\rm s)}(\tfrac{x}{R\sqrt{2\Delta Q(R)}})\,dx\,\sqrt{n}
-\int_0^{M}2x\,\log\mathscr{B}_1^{(\rm s)}(\tfrac{x}{R\sqrt{2\Delta Q(R)}})\,dx
\\
&\quad 
+\Bigl(\frac{1}{2}-\theta_{R,-}^{(n,M)}\Bigr)
\log\mathscr{B}_1^{(\rm s)}(\tfrac{M}{R\sqrt{2\Delta Q(R)}})
-\frac{1}{2}
\log\mathscr{B}_1^{(\rm s)}(0)
+\mathcal{O}(M^3n^{-1/2}),  
\end{align*}
and 
\begin{align*}
\frac{1}{\sqrt{n}}\sum_{j=g_{R,-}}^{n-1}
\mathscr{B}_{2}^{(\rm s)}(\tfrac{M_j(R)}{R\sqrt{2\Delta Q(R)}})    
&=    
\int_0^{M}\mathscr{B}_2^{(\rm s)}(\tfrac{x}{R\sqrt{2\Delta Q(R)}})\,dx
+
\mathcal{O}(M^2n^{-1/2}). 
\end{align*}
We next regularize the integrals.  Since the profile approaches
$\Omega_{6m+1}^{(7m)}$ with a Gaussian tail, subtraction of this limit gives
\begin{align*}
&\quad \int_0^{M}\log\mathscr{B}_1^{(\rm s)}(\tfrac{x}{R\sqrt{2\Delta Q(R)}})\,dx
\\
&=R\sqrt{2\Delta Q(R)}\int_{-\infty}^{0}\Bigl(
\log\mathscr{B}_{1}^{(\rm s)}(-x) -\sum_{\ell=6m+1}^{7m}s_{\ell}\Bigr)\,dx
+M\log\Omega_{6m+1}^{(7m)}+\mathcal O(e^{-cM^2}).
\end{align*}
The discrete endpoint correction satisfies
\[
\Bigl(\frac{1}{2}-\theta_{R,-}^{(n,M)}\Bigr)
\log\mathscr{B}_1^{(\rm s)}(\tfrac{M}{R\sqrt{2\Delta Q(R)}})
=
\Bigl(\frac{1}{2}-\theta_{R,-}^{(n,M)}\Bigr)\log\Omega_{6m+1}^{(7m)}+\mathcal{O}(e^{-cM^2}).
\]
We obtain 
\begin{align*}
&\quad \sum_{j=g_{R,-}}^{n-1}\log\mathscr{B}_1^{(\rm s)}(\tfrac{M_j(R)}{R\sqrt{2\Delta Q(R)}})
\\
&=    
R\sqrt{2\Delta Q(R)}\int_{-\infty}^{0}\Bigl(
\log\mathscr{B}_{1}^{(\rm s)}(-x) -\sum_{\ell=6m+1}^{7m}s_{\ell}\Bigr)\,dx\,\sqrt{n}
+M\sqrt{n}\log\Omega_{6m+1}^{(7m)}
\\
&\quad
-\int_0^{M}2x\,\log\mathscr{B}_1^{(\rm s)}(\tfrac{x}{R\sqrt{2\Delta Q(R)}})\,dx
\\
&\quad
+\Bigl(\frac{1}{2}-\theta_{R,-}^{(n,M)}\Bigr)\log\Omega_{6m+1}^{(7m)}
-\frac{1}{2}
\log\mathscr{B}_1^{(\rm s)}(0)
+\mathcal{O}(M^3n^{-1/2}). 
\end{align*}
For the first correction term, the same one-sided formula gives
\begin{align*}
 &\quad
\frac{1}{\sqrt{n}}\sum_{j=g_{R,-}}^{n-1}
\mathscr{B}_{2}^{(\rm s)}(\tfrac{1}{R\sqrt{2\Delta Q(R)}}M_j(R))
\\
&=
\int_{0}^{\tfrac{M}{R\sqrt{2\Delta Q(R)}}}
\frac{1}{\mathscr{B}_1^{(\rm s)}(x)}
\sum_{\ell=6m+1}^{7m}
\frac{\omega_{\ell}e^{-(t_{\ell}+x)^2}}{6\sqrt{\pi}}
\bigl( 
\mathfrak{b}_2t_{\ell}^2+\mathfrak{b}_1(x)t_{\ell}+\mathfrak{b}_0(x)
\bigr)
\,dx
+
\mathcal{O}(Mn^{-1/2}). 
\end{align*}
It remains to simplify the part containing $\mathfrak b_0(x)$.
Using the identity
\[
\partial_x\log\mathscr{B}_1^{(\rm s)}(-x)
=
-\frac{1}{\mathscr{B}_1^{(\rm s)}(-x)}\sum_{\ell=6m+1}^{7m}\frac{\omega_{\ell}e^{-(t_{\ell}-x)^2}}{\sqrt{\pi}},
\]
we obtain 
\begin{align*}
&\quad -2R^2\Delta Q(R)
\int_{-\tfrac{M}{R\sqrt{2\Delta Q(R)}}}^{0}
\frac{1}{\mathscr{B}_1^{(\rm s)}(-x)}
\sum_{\ell=6m+1}^{7m}
\frac{\omega_{\ell}e^{-(t_{\ell}-x)^2}}{\sqrt{\pi}}
x^2
\,dx
\\
&=
-
M^2\log\mathscr{B}_1^{(\rm s)}(\tfrac{M}{R\sqrt{2\Delta Q(R)}})
-
2R^2\Delta Q(R)\int_{-\tfrac{M}{R\sqrt{2\Delta Q(R)}}}^{0}
2x\log\mathscr{B}_1^{(\rm s)}(-x)\,dx,
\end{align*}
and 
\begin{align*}
&\quad
-
\Bigl(\frac{1}{2}R\,\mathsf{k}'(R)+1\Bigr)
\int_{-\tfrac{M}{R\sqrt{2\Delta Q(R)}}}^{0}
\frac{1}{\mathscr{B}_1^{(\rm s)}(-x)}
\sum_{\ell=6m+1}^{7m}
\frac{\omega_{\ell}e^{-(t_{\ell}-x)^2}}{\sqrt{\pi}}
\,dx
\\
&=
\Bigl(\frac{1}{2}R\,\mathsf{k}'(R)+1\Bigr)
\Bigl[\log\mathscr{B}_1^{(\rm s)}(-x)\Bigr]\Bigr|_{x=-\frac{M}{R\sqrt{2\Delta Q(R)}}}^{x=0}
\\
&=
\Bigl(\frac{1}{2}R\,\mathsf{k}'(R)+1\Bigr)
\Bigl( 
\log\mathscr{B}_1^{(\rm s)}(0)
-\log\Omega_{6m+1}^{(7m)}
\Bigr)+\mathcal O(e^{-cM^2}). 
\end{align*}
Also,
\[
-\int_0^{M}2x\,\log\mathscr{B}_1^{(\rm s)}(\tfrac{x}{R\sqrt{2\Delta Q(R)}})\,dx
=
2R^2\Delta Q(R)\int_{-\frac{M}{R\sqrt{2\Delta Q(R)}}}^{0}2x\log\mathscr{B}_1^{(\rm s)}(-x)\,dx, 
\]
Adding these two identities gives
\begin{align*}
&\quad
-\int_0^{M}2x\,\log\mathscr{B}_1^{(\rm s)}(\tfrac{x}{R\sqrt{2\Delta Q(R)}})\,dx
+
\int_{-\tfrac{M}{R\sqrt{2\Delta Q(R)}}}^{0}
\frac{1}{\mathscr{B}_1^{(\rm s)}(-x)}
\sum_{\ell=6m+1}^{7m}
\frac{\omega_{\ell}e^{-(t_{\ell}-x)^2}}{6\sqrt{\pi}}
\mathfrak{b}_0(-x)
\,dx
\\
&=
\Bigl(
2+\frac{R\partial_r\Delta Q(R)}{\Delta Q(R)}
\Bigr)
\int_{-\infty}^{0}
\frac{1}{\mathscr{B}_1^{(\rm s)}(-x)}
\sum_{\ell=6m+1}^{7m}
\frac{\omega_{\ell}e^{-(t_{\ell}-x)^2}}{6\sqrt{\pi}}
(-5x^2+1)
\,dx
\\
&\quad
+6\Bigl(
2+\frac{R\partial_r\Delta Q(R)}{\Delta Q(R)}
\Bigr)
\int_{-\infty}^{0}
\frac{1}{\mathscr{B}_1^{(\rm s)}(-x)}
\sum_{\ell=6m+1}^{7m}
\frac{\omega_{\ell}e^{-(t_{\ell}-x)^2}}{6\sqrt{\pi}}
x^2
\,dx
\\
&\quad
-M^2\log\Omega_{6m+1}^{(7m)}
+\Bigl(\frac{1}{2}R\,\mathsf{k}'(R)+1\Bigr)
\Bigl( 
\log\mathscr{B}_1^{(\rm s)}(0)
-\log\Omega_{6m+1}^{(7m)}
\Bigr)
+\mathcal{O}(e^{-cM^2}),
\end{align*}
for some $c>0$.  A final integration by parts yields
\begin{align*}
 &\quad
 \Bigl(
2+\frac{R\partial_r\Delta Q(R)}{\Delta Q(R)}
\Bigr)
\int_{-\tfrac{M}{R\sqrt{2\Delta Q(R)}}}^{0}
\frac{1}{\mathscr{B}_1^{(\rm s)}(-x)}
\sum_{\ell=6m+1}^{7m}
\frac{\omega_{\ell}e^{-(t_{\ell}-x)^2}}{\sqrt{\pi}}
x^2
\,dx
\\
&=
\Bigl(
2+\frac{R\partial_r\Delta Q(R)}{\Delta Q(R)}
\Bigr)
\int_{-\infty}^{0}
2x 
\Bigl(\log\mathscr{B}_1^{(\rm s)}(-x)
-\sum_{\ell=6m+1}^{7m}s_{\ell}
\Bigr)
\,dx
+\mathcal{O}(e^{-cM^2}),
\end{align*}
for some $c>0$.  Combining these identities with the one-sided
Euler--Maclaurin expansion produces $C_{E_{12}^{(\mathrm{an})}}^{(4)}$ and
$C_{E_{12}^{(\mathrm{an})}}^{(6)}$.  The endpoint terms are precisely
$\widetilde C_{E_{12}^{(\mathrm{an})}}^{(M)}$ and match those from
$E_{11}^{(\mathrm{an})}$.  Finally,
$M^5n^{-1/2}=n^{-1/12}$ for $M=n^{1/12}$, while all other displayed
remainders are smaller.  This proves the lemma.
\end{proof}

We can now assemble the thirteen ranges.

\begin{proof}[Proof of Theorem~\ref{theorem:counting statistics of annulus case}]
Add the expansions in
Lemmas~\ref{lemma:asymptotic expansion E0 annulus}--\ref{lemma:asymptotic expansion E12 annulus}
in the order of the thirteen index ranges.  At each artificial cutoff, the
$M\sqrt n$, $M^2$, $\log M$, inverse powers of $M$, and rounding terms cancel
between the two adjacent ranges.  The two contributions at $j_\star$ combine,
by Lemma~\ref{lemma:asymptotic expansion E6 annulus}, into the oscillatory term
$\mathcal F_{n,\#}$.  The remaining $n$, $\sqrt n$, $\log n$, and constant
terms are exactly $C_{3,\#}^{(\mathrm{an})}$,
$C_{4,\#}^{(\mathrm{an})}$, $C_{5,\#}^{(\mathrm{an})}$, and
$C_{6,\#}^{(\mathrm{an})}$, respectively.  Since only finitely many ranges are
involved, their uniform remainders sum to $\mathcal O(n^{-1/12})$.  This is the
assertion of Theorem~\ref{theorem:counting statistics of annulus case}.
\end{proof}

\subsection{Proof of Theorem~\ref{theorem:counting statistics of centered disk case}}
We now prove Theorem~\ref{theorem:counting statistics of centered disk case}.  The
centered-disk geometry has only an outer hard boundary.  Consequently, after the
initial range of indices has been treated, the remaining local regimes are
identical to the five outer regimes in the annulus case.  We record this
reduction explicitly below; in particular, no inner-boundary contribution is
present.

Starting from \eqref{def of calEncd}, we write
\begin{align*}
\log \mathcal{E}_{n,s\lambda,\alpha}^{(\mathrm{cd})}&=
\sum_{j=0}^{n-1}\log\Big( 
1+\sum_{\ell=3m+1}^{7m}\omega_{\ell}F_{n,j,\ell}^{(\mathrm{cd})}
\Bigr)
=\sum_{k=0}^{5}E_k^{(\mathrm{cd})},
\end{align*}
where
\begin{align*}
E_0^{(\mathrm{cd})}&:=\sum_{j=0}^{g_{2,-}-1}\log\Big( 
1+\sum_{\ell=3m+1}^{7m}\omega_{\ell}F_{n,j,\ell}^{(\mathrm{cd})}
\Bigr), \qquad
E_1^{(\mathrm{cd})}:=\sum_{j=g_{2,-}}^{g_{2,+}}\log\Big( 
1+\sum_{\ell=3m+1}^{7m}\omega_{\ell}F_{n,j,\ell}^{(\mathrm{cd})}
\Bigr),  \\   
E_2^{(\mathrm{cd})}&:=\sum_{j=g_{2,+}+1}^{g_{a_2,-}-1}\log\Big( 
1+\sum_{\ell=3m+1}^{7m}\omega_{\ell}F_{n,j,\ell}^{(\mathrm{cd})}
\Bigr), \qquad
E_3^{(\mathrm{cd})}:=\sum_{j=g_{a_2,-}}^{g_{a_2,+}}\log\Big( 
1+\sum_{\ell=3m+1}^{7m}\omega_{\ell}F_{n,j,\ell}^{(\mathrm{cd})}
\Bigr),
\\
E_4^{(\mathrm{cd})}&:=\sum_{j=g_{a_2,+}+1}^{g_{R,-}-1}\log\Big( 
1+\sum_{\ell=3m+1}^{7m}\omega_{\ell}F_{n,j,\ell}^{(\mathrm{cd})}
\Bigr), \qquad
E_5^{(\mathrm{cd})}:=\sum_{j=g_{R,-}}^{n-1}\log\Big( 
1+\sum_{\ell=3m+1}^{7m}\omega_{\ell}F_{n,j,\ell}^{(\mathrm{cd})}
\Bigr). 
\end{align*}
As usual, an empty sum is understood to be zero.  The only index range not
already covered by the outer part of the annulus analysis is
$E_0^{(\mathrm{cd})}$, because its lower endpoint is $j=0$.

\begin{lemma}\label{lemma:asymptotic expansion E0 counting centered disk}
Let the radii $r_{\ell}$, $\ell=3m+1,\dots,4m$, be defined by
\eqref{def of merging radii outside hard}, and let the radii $r_{\ell}$,
$\ell=4m+1,\dots,5m$, be defined by
\eqref{def of merging radii outside semi hard}.  Then, as $n\to\infty$,
\[
    E_{0}^{(\mathrm{cd})}
    =
    C_{E_{0}^{(\mathrm{cd})}}^{(3)}n 
 +C_{E_{0}^{(\mathrm{cd})}}^{(4)}\sqrt{n}
 +C_{E_{0}^{(\mathrm{cd})}}^{(5)}\log n
 +C_{E_{0}^{(\mathrm{cd})}}^{(6)}
 +\widetilde{C}_{E_{0}^{(\mathrm{cd})}}^{(M)}
 +\mathcal{O}(n^{-\frac{1}{12}}),
\]  
where 
\begin{align*}
C_{E_{0}^{(\mathrm{cd})}}^{(3)}
&:=
\int_{0}^{\tau_{\rho_2}}
\log\Bigl(1-\widehat{\mathsf{T}}_0^{(3m+1,4m)}(x)+\widehat{\mathsf{T}}_0^{(3m+1,7m)}(\tau_{\rho_2})\Bigr)\,dx,
\quad
C_{E_{0}^{(\mathrm{cd})}}^{(4)}
:=0,
\\
C_{E_{0}^{(\mathrm{cd})}}^{(5)}
&:=\frac{\rho_2^2\Delta Q(\rho_2)\widehat{\mathsf{T}}_{1}^{(3m+1,4m)}(\tau_{\rho_2})}{1-\widehat{\mathsf{T}}_0^{(3m+1,4m)}(\tau_{\rho_2})+\widehat{\mathsf{T}}_0^{(3m+1,7m)}(\tau_{\rho_2})},
\\
C_{E_{0}^{(\mathrm{cd})}}^{(6)}
&:=
\int_{0}^{\tau_{\rho_2}}
\bigg[
\frac{\frac{2\rho_2^2\Delta Q(\rho_2)}{\tau_{\rho_2}-x}
\widehat{\mathsf{T}}_{1}^{(3m+1,4m)}(x)
+x\widehat{\mathsf{T}}_{2}^{(3m+1,4m)}(x)}{1-\widehat{\mathsf{T}}_0^{(3m+1,4m)}(x)+\widehat{\mathsf{T}}_0^{(3m+1,7m)}(\tau_{\rho_2})}
\\
&\quad 
-\frac{2\rho_2^2\Delta Q(\rho_2)}{\tau_{\rho_2}-x}
\frac{
\widehat{\mathsf{T}}_{1}^{(3m+1,4m)}(\tau_{\rho_2})}{1-\widehat{\mathsf{T}}_0^{(3m+1,4m)}(\tau_{\rho_2})+\widehat{\mathsf{T}}_0^{(3m+1,7m)}(\tau_{\rho_2})}
\bigg] \,dx 
\\
&\quad 
+\bigl( 
2\rho_2^2\Delta Q(\rho_2)-\tau_{\rho_2}
\bigr)\int_{0}^{\tau_{\rho_2}}
\frac{\widehat{\mathsf{T}}_{2}^{(3m+1,4m)}(x)}{1-\widehat{\mathsf{T}}_0^{(3m+1,4m)}(x)+\widehat{\mathsf{T}}_0^{(3m+1,7m)}(\tau_{\rho_2})}\,dx
\\
&
\quad
+
\Bigl(\frac{1}{2}\rho_2\mathsf{k}'(\rho_2)+1\Bigr)
\log\Bigl(1-\widehat{\mathsf{T}}_0^{(3m+1,4m)}(\tau_{\rho_2})+\widehat{\mathsf{T}}_0^{(3m+1,7m)}(\tau_{\rho_2})\Bigr)
\\
&\quad
-
\Bigl(\frac{1}{2}\rho_2\mathsf{k}'(\rho_2)+1\Bigr)
\log\Bigl(1-\widehat{\mathsf{T}}_0^{(3m+1,4m)}(0)+\widehat{\mathsf{T}}_0^{(3m+1,7m)}(\tau_{\rho_2})\Bigr)
\\
&\quad 
+\frac{1}{2}\log\Bigl(1-\widehat{\mathsf{T}}_0^{(3m+1,4m)}(0)+\widehat{\mathsf{T}}_0^{(3m+1,7m)}(\tau_{\rho_2})\Bigr),
\\
\widetilde{C}_{E_{0}^{(\mathrm{cd})}}^{(M)}
&:=
-\frac{2\rho_2^2\Delta Q(\rho_2)\widehat{\mathsf{T}}_{1}^{(3m+1,4m)}(\tau_{\rho_2})}{1-\widehat{\mathsf{T}}_0^{(3m+1,4m)}(\tau_{\rho_2})+\widehat{\mathsf{T}}_0^{(3m+1,7m)}(\tau_{\rho_2})}
\log M
\\
&\quad 
-M\tau_{\rho_2}\sqrt{n}
\log\Bigl(1-\widehat{\mathsf{T}}_0^{(3m+1,4m)}(\tau_{\rho_2})+\widehat{\mathsf{T}}_0^{(3m+1,7m)}(\tau_{\rho_2})\Bigr)
\\
&\quad 
-M^2\tau_{\rho_2}
\frac{\tau_{\rho_2}\widehat{\mathsf{T}}_1^{(3m+1,4m)}(\tau_{\rho_2})}{1-\widehat{\mathsf{T}}_0^{(3m+1,4m)}(\tau_{\rho_2})+\widehat{\mathsf{T}}_0^{(3m+1,7m)}(\tau_{\rho_2})}
\\
&\quad 
+M^2\tau_{\rho_2}
\log\Bigl(1-\widehat{\mathsf{T}}_0^{(3m+1,4m)}(\tau_{\rho_2})+\widehat{\mathsf{T}}_0^{(3m+1,7m)}(\tau_{\rho_2})\Bigr)
\\
&\quad 
+\Bigl(\theta_{2,-}^{(n,M)}-\frac{1}{2}\Bigr)
\log\Bigl(1-\widehat{\mathsf{T}}_0^{(3m+1,4m)}(\tau_{\rho_2})+\widehat{\mathsf{T}}_0^{(3m+1,7m)}(\tau_{\rho_2})\Bigr). 
\end{align*}
\end{lemma}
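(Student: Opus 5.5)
The plan is to treat $E_0^{(\mathrm{cd})}$ as the centered-disk analogue of the range $E_6^{(2,\mathrm{an})}\cup E_7^{(\mathrm{an})}$ of the annulus analysis, now stretched all the way down to $j=0$ and with no inner hard wall. In the centered-disk geometry the normalizing norm is $h_{n,j}^{(\mathrm{cd})}=\int_{\rho_2}^{\infty}2re^{\mathsf k(r)}e^{-nV_\tau(r)}\,dr$. For every $0\le j\le g_{2,-}-1$ we have $\tau<\tau_{\rho_2}/(1+M/\sqrt n)$, so $V_\tau'(r)>0$ on $[\rho_2,\infty)$ by the convexity of $t\mapsto q(e^t)$, exactly as in the proof of Lemma~\ref{lemma:j2- j g2--1}. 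The saddle at $r_\tau$ is therefore irrelevant. Both $h_{n,j}^{(\mathrm{cd})}$ and the partial integrals $h_{n,j,\ell}^{(\mathrm{cd})}$ are governed by the endpoint $\rho_2$ alone, with $\eta_2=\frac{2}{\rho_2}(\tau_{\rho_2}-\tau)\ge cM/\sqrt n$. The endpoint Laplace expansion is uniform down to $j=0$, because $\eta_2$ is then bounded below by $\frac{2}{\rho_2}\tau_{\rho_2}$ and $r^{2\alpha}$ is smooth at $\rho_2>0$; no small-index analysis of the type used for $S_1^{(\mathrm{cd})}$ is needed.

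First, the soft, bulk and semi-hard ratios ($\ell\ge 4m+1$) are exponentially close to $1$ in this range. For $\ell=4m+1,\dots,5m$ the tail beyond $r_\ell$ is relatively $e^{-c\eta_2\sqrt n}\le e^{-cM}$; for larger $\ell$ it is even smaller. These contribute the constant $\Omega_{4m+1}^{(7m)}$. For the hard-edge radii $3m+1\le\ell\le4m$, the expansion of $F_{n,j,\ell}$ from the proof of Lemma~\ref{lemma:asymptotic expansion E7 annulus} applies verbatim. The summand therefore equals
\[
\log\bigl(1-\widehat{\mathsf T}_0^{(3m+1,4m)}(j/n)+\widehat{\mathsf T}_0^{(3m+1,7m)}(\tau_{\rho_2})\bigr)+\frac{\widehat{\mathscr H}_{\rho_2,1}}{n\,\mathscr D_{\mathrm{out}}(j/n)}+\cdots,
\]
with a remainder summing to $\mathcal O(\sqrt n/M^7)$. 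We then apply Lemma~\ref{lemma:Riemann sum NEW} with $A=0$, $a_0=0$, $B=\tau_{\rho_2}/(1+M/\sqrt n)$ and $b_0=\theta_{2,-}^{(n,M)}-1$.

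The leading sum yields $n\int_0^{\tau_{\rho_2}}$ together with the moving-endpoint terms $-M\tau_{\rho_2}\sqrt n\log\mathscr D_{\mathrm{out}}(\tau_{\rho_2})$, the $M^2$ term, and the rounding term with $\theta_{2,-}^{(n,M)}$. The left endpoint $j=0$ contributes $+\tfrac12\log\mathscr D_{\mathrm{out}}(0)$, which replaces the $\theta_{2,-}^{(n,\epsilon)}$ term of the annulus case. The first correction sum is handled as before:
\begin{itemize}
\item $\widetilde{\mathscr F}_2$ integrates to the difference of logarithms at $\tau_{\rho_2}$ and at $0$, giving the $(\tfrac12\rho_2\mathsf k'(\rho_2)+1)$ terms;
\item $\widetilde{\mathscr F}_3$ is regular;
\item $\widetilde{\mathscr F}_1$ has a $1/(\tau_{\rho_2}-x)$ singularity. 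Regularizing it by subtracting its residue gives $\tfrac12\log n-\log(\tau_{\rho_2}M)+\log\tau_{\rho_2}$, where the final $\log\tau_{\rho_2}$ arises because the lower limit is now $0$ instead of $\tau_{\rho_2}/(1+\epsilon)$. This produces $C^{(5)}$ and the $\log M$ term.
\end{itemize}
The second- and third-order correction sums produce only $M^{-2}$ and $M^{-4}$ terms. By the tail bound of \cite[proof of Lemma~2.6]{ACCL1}, these are assigned to the cutoff bookkeeping (as in Lemma~\ref{lemma:asymptotic expansion E7 annulus}) or are $o(1)$.

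The main obstacle is the uniformity of the endpoint expansion near $j=0$ combined with exact bookkeeping of the endpoint constants, in particular confirming that the $j=0$ Euler--Maclaurin term is exactly $\tfrac12\log\mathscr D_{\mathrm{out}}(0)$. I would verify this first, along with checking that the $\log\tau_{\rho_2}$ terms cancel against those appearing in $C^{(6)}_{E_1^{(\mathrm{cd})}}$ (the analogue of Lemma~\ref{lemma:asymptotic expansion E8 annulus}). With $M=n^{1/12}$, all remaining errors are $\mathcal O(n^{-1/12})$.
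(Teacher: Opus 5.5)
Your proposal is correct and follows the paper's proof: both run the outer-hard-edge computation of Lemma~\ref{lemma:asymptotic expansion E7 annulus} with the lower endpoint of the index range moved to $j=0$. At that endpoint the Laplace expansion stays uniform because $\eta_2$ is bounded below, the left Euler--Maclaurin term (with $A=a_0=0$) replaces the $\theta_{2,-}^{(n,\epsilon)}$ term by $\tfrac12\log\bigl(1-\widehat{\mathsf T}_0^{(3m+1,4m)}(0)+\widehat{\mathsf T}_0^{(3m+1,7m)}(\tau_{\rho_2})\bigr)$, and the regularized $1/(\tau_{\rho_2}-x)$ integral over $[0,\tau_{\rho_2}/(1+M/\sqrt n)]$ gives exactly $\tfrac12\log n-\log M$. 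Your bookkeeping is more explicit than the paper's, and the leftover $M^{-2}$ and $M^{-4}$ terms are indeed absorbed into the $\mathcal O(n^{-1/12})$ remainder.
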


\begin{proof}
The argument is the outer-hard-edge calculation used in the proof of
Lemma~\ref{lemma:asymptotic expansion E7 annulus}, with the lower endpoint of
the index range moved to zero.  More precisely, the endpoint version of
Laplace's method used there gives, uniformly for
$0\leq j\leq g_{2,-}-1$, the same expansion of the logarithmic summand in
powers of $n^{-1}$ and $(\tau_{\rho_2}-j/n)^{-1}$.  The estimates away from
$\tau_{\rho_2}$ are uniform, while the singular part at
$x=\tau_{\rho_2}$ is subtracted explicitly in the first integral defining
$C_{E_0^{(\mathrm{cd})}}^{(6)}$.

Applying the Euler--Maclaurin formula on $[0,\tau_{\rho_2}]$ gives the
$n$-term and the two endpoint corrections displayed above.  Expanding the
upper cutoff $g_{2,-}/n$ about $\tau_{\rho_2}$ produces the
$\sqrt n$-, $\log n$-, and $M$-dependent terms.  The remainder estimates are
unchanged from the proof of Lemma~\ref{lemma:asymptotic expansion E7 annulus}
and, with the standing choice of $M$, are
$\mathcal{O}(n^{-1/12})$.  Collecting these contributions proves the stated
expansion.
\end{proof}

It remains to transfer the five outer-regime estimates from the annulus
case.  The correspondence between the index ranges is
\[
E_k^{(\mathrm{cd})}\longleftrightarrow E_{k+7}^{(\mathrm{an})},
\qquad k=1,\dots,5.
\]

\begin{lemma}\label{lemma:asymptotic expansion E1 counting centered disk}
The expansion of $E_{1}^{(\mathrm{cd})}$ is obtained from
Lemma~\ref{lemma:asymptotic expansion E8 annulus} under the relabeling
\[
E_{8}^{(\mathrm{an})}\mapsto E_{1}^{(\mathrm{cd})},\qquad
C_{E_{8}^{(\mathrm{an})}}^{(q)}\mapsto
C_{E_{1}^{(\mathrm{cd})}}^{(q)},\qquad
\widetilde C_{E_{8}^{(\mathrm{an})}}^{(M)}\mapsto
\widetilde C_{E_{1}^{(\mathrm{cd})}}^{(M)}.
\]
All other quantities, including the remainder, are unchanged.
\end{lemma}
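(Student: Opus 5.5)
The plan is to show that, on the window $g_{2,-}\le j\le g_{2,+}$, the summand $\log\bigl(1+\sum_{\ell=3m+1}^{7m}\omega_\ell F_{n,j,\ell}^{(\mathrm{cd})}\bigr)$ equals the summand of $E_8^{(\mathrm{an})}$ up to an error $\mathcal O(e^{-cM^2})$, uniformly in $j$. Once that is established, we can repeat the rest of the proof of Lemma~\ref{lemma:asymptotic expansion E8 annulus} word for word.

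\textbf{Step 1: the norms agree.} First compare the denominators. By definition, $h_{n,j}^{(\mathrm{cd})}=\int_{\rho_2}^{\infty}2re^{\mathsf k(r)}e^{-nV_\tau(r)}\,dr$. The annular norm $h_{n,j}^{(\mathrm{an})}$ equals this integral plus $\int_0^{\rho_1}$. In the proof of Lemma~\ref{lemma:g2- j g2+}, the inner piece $\int_0^{\rho_1}$ was already shown to be $e^{-nV_\tau(r_\tau)}\mathcal O(e^{-cM^2})$, uniformly in this window, via the bound $V_\tau(r)-V_\tau(\rho_2)>c_\epsilon$ on $[0,\rho_1]$. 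Hence
\[
h_{n,j}^{(\mathrm{an})}=h_{n,j}^{(\mathrm{cd})}\bigl(1+\mathcal O(e^{-cM^2})\bigr).
\]

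\textbf{Step 2: the numerators agree.} For $\ell\ge 3m+1$, the numerator $h_{n,j,\ell}^{(\mathrm{an})}$ differs from $h_{n,j,\ell}^{(\mathrm{cd})}$ by the same inner piece $\int_0^{\rho_1}$. So Step 1 gives $F_{n,j,\ell}^{(\mathrm{an})}=F_{n,j,\ell}^{(\mathrm{cd})}+\mathcal O(e^{-cM^2})$ for every $\ell\ge 3m+1$.

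\textbf{Step 3: the inner groups contribute only a constant.} On the annular side, the inner groups $\ell\le 3m$ satisfy $F_{n,j,\ell}^{(\mathrm{an})}=\mathcal O(e^{-cM^2})$ in this window, by the tail estimate already used at the start of the proof of Lemma~\ref{lemma:asymptotic expansion E8 annulus}. Their coefficients $\omega_\ell$ therefore drop out, and in the annular sum only the combination $\Omega_{4m+1}^{(7m)}$ of the weights with $\ell\ge 4m+1$ survives.

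\textbf{Step 4: the constants match.} The weights $\omega_\ell$ for $\ell\ge 3m+1$ in the centered-disk geometry are, by the convention in Subsection~\ref{subsection:counting statistics functionals} (where $s_\ell=0$ for $\ell\le 3m$), the same as in the annular geometry. Consequently $\Omega_{4m+1}^{(7m)}$, $\Omega_{5m+1}^{(7m)}$ and $\widehat{\mathsf T}_k^{(3m+1,4m)}$ coincide in the two cases. The only point to check is that the constant $1$ in the centered-disk summand is exactly what plays the role of $\Omega_{4m+1}^{(7m)}$ together with the telescoped weights in the annular summand. This follows from $\sum_{\ell=3m+1}^{7m+1}\omega_\ell=\Omega_{3m+1}^{(7m)}$, giving the identity
\[
1+\sum_{\ell=3m+1}^{7m}\omega_\ell F^{(\mathrm{cd})}_{n,j,\ell}
=\Omega_{4m+1}^{(7m)}+\sum_{\ell=3m+1}^{4m}\omega_\ell F^{(\mathrm{cd})}_{n,j,\ell}+\sum_{\ell=4m+1}^{7m}\omega_\ell\bigl(F^{(\mathrm{cd})}_{n,j,\ell}-1\bigr).
\]
The last sum is $\mathcal O(e^{-cM^2})$ for $\ell\ge 5m+1$, since those counting radii lie beyond the saddle at the $M$ scale. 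This is the same reduction used for $E_8^{(\mathrm{an})}$.

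\textbf{Step 5: conclusion.} Since the logarithmic argument is bounded away from zero, the pointwise errors pass through the logarithm. Summing them over the $\mathcal O(M\sqrt n)$ indices in the window costs only $\mathcal O(e^{-c'M^2})$. The remaining sums are literally those in the proof of Lemma~\ref{lemma:asymptotic expansion E8 annulus}, so the coefficients and the remainder carry over under the stated relabeling.

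The main obstacle is Step 4: verifying that the bookkeeping of $\omega_\ell$ and $\Omega$ is identical in the two geometries, including the complex-parameter branch conventions. The analytic estimates themselves are already available from the earlier lemmas.
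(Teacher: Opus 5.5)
Your proposal is correct and is essentially the paper's argument. Write $I_{\mathrm{in}}$, $I_{\mathrm{out}}$, $I_\ell$ for the integrals of $2ue^{\mathsf k(u)}e^{-nV_\tau(u)}$ over $[0,\rho_1]$, $[\rho_2,\infty)$, $[\rho_2,r_\ell]$. The paper likewise writes $F_{n,j,\ell}^{(\mathrm{an})}=(I_{\mathrm{in}}+I_\ell)/(I_{\mathrm{in}}+I_{\mathrm{out}})$ and $F_{n,j,\ell}^{(\mathrm{cd})}=I_\ell/I_{\mathrm{out}}$ for $\ell\ge 3m+1$, bounds the annular numerators with $\ell\le 3m$ by $I_{\mathrm{in}}$, and transfers the $E_8^{(\mathrm{an})}$ expansion because $I_{\mathrm{in}}/I_{\mathrm{out}}$ is exponentially small uniformly on the window, exactly as in your Steps 1--3.

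One point in Step~1 needs tightening. Near $j=g_{2,-}$ the denominator $h_{n,j}^{(\mathrm{cd})}$ is itself only of size $e^{-nV_\tau(r_\tau)}e^{-CM^2}$, up to powers of $n$ and $M$. So an absolute bound $e^{-nV_\tau(r_\tau)}\mathcal O(e^{-cM^2})$ on the inner piece does not by itself give a relative error. Use instead the estimate $V_\tau(r)-V_\tau(\rho_2)>c_\epsilon$ on $[0,\rho_1]$ that you cite, together with $I_{\mathrm{out}}\gtrsim e^{-nV_\tau(\rho_2)}/n$. This gives $I_{\mathrm{in}}/I_{\mathrm{out}}=\mathcal O(n\,e^{-c_\epsilon n})$.
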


\begin{lemma}\label{lemma:asymptotic expansion E2 counting centered disk}
There exists $c>0$ such that, as $n\to\infty$,
\[
E_{2}^{(\mathrm{cd})}
=\bigl(g_{a_2,-}-g_{2,+}-1\bigr)\log\Omega_{5m+1}^{(7m)}
+\mathcal{O}(e^{-cM^2}).
\]
Thus its expansion agrees with that of $E_{9}^{(\mathrm{an})}$ in
Lemma~\ref{lemma:asymptotic expansion E9 annulus}.
\end{lemma}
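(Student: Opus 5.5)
The plan is to transfer the pointwise estimate from the proof of Lemma~\ref{lemma:asymptotic expansion E9 annulus}, after checking that the centered-disk summand in this range is governed by exactly the same saturation mechanism. First I fix $j$ with $g_{2,+}<j<g_{a_2,-}$. By the definitions of the cutoffs, $\tau_j\geq\tau_{\rho_2}/(1-M/\sqrt n)$ and $\tau_j\leq\tau_{a_2}/(1+M/\sqrt n)$. The inverse function theorem for $r\mapsto rq'(r)$ then places the saddle $r_{\tau_j}$ at distance at least $cM/\sqrt n$ from both $\rho_2$ and $a_2$. Since $(rq'(r))'>0$ on a neighbourhood of $[\rho_2,a_2]$, this separation is uniform in $j$.

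Next I estimate each ratio $F_{n,j,\ell}^{(\mathrm{cd})}$ separately. The denominator $h_{n,j}^{(\mathrm{cd})}=\int_{\rho_2}^\infty 2re^{\mathsf k}e^{-nV_\tau}\,dr$ is, by the tail argument of Lemmas~\ref{lemma:g2,++1leq j leq j2,+} and~\ref{lemma: 0 j j1-}, equal to the full-line Laplace integral up to a factor $1+\mathcal O(e^{-cM^2})$. The reason is that $V_\tau$ is decreasing on $[\rho_2,r_\tau]$ and grows quadratically away from the saddle.
\begin{itemize}
\item For $3m+1\le\ell\le5m$ (outer hard and semi-hard radii), $r_\ell\le\rho_2+Cn^{-1/2}$, so the numerator $\int_{\rho_2}^{r_\ell}$ is relatively $\mathcal O(e^{-cM^2})$. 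Hence $F_{n,j,\ell}^{(\mathrm{cd})}=\mathcal O(e^{-cM^2})$.
\item For $5m+1\le\ell\le7m$ (outer bulk and soft radii), $r_\ell\ge r_\tau+cM/\sqrt n$, so $F_{n,j,\ell}^{(\mathrm{cd})}=1+\mathcal O(e^{-cM^2})$.
\end{itemize}
Substituting into the telescoping sum, and recalling that $s_\ell=0$ for $\ell\le3m$ in this geometry, gives
\[
1+\sum_{\ell=3m+1}^{7m}\omega_\ell F_{n,j,\ell}^{(\mathrm{cd})}=1+\sum_{\ell=5m+1}^{7m}\omega_\ell+\mathcal O(e^{-cM^2})=\Omega_{5m+1}^{(7m)}+\mathcal O(e^{-cM^2}).
\]
For real parameters, $\Omega_{5m+1}^{(7m)}=e^{s_{5m+1}+\cdots+s_{7m}}>0$. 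For small complex parameters it stays bounded away from zero, so taking logarithms gives $\log\Omega_{5m+1}^{(7m)}+\mathcal O(e^{-cM^2})$ uniformly in $j$.

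Finally I sum over the $g_{a_2,-}-g_{2,+}-1\le n$ indices and absorb the factor $n$ by shrinking $c$, which is possible because $M^2=n^{1/6}\gg\log n$. This gives the claimed formula, which matches Lemma~\ref{lemma:asymptotic expansion E9 annulus} term by term.

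The main obstacle is making the tail bounds uniform over this long range. Where the counting radius sits near $r_\tau$ only to within $M/\sqrt n$ (the ends of the range), I will use the quadratic lower bound $V_\tau(r)-V_\tau(r_\tau)\ge c(r-r_\tau)^2$ on a fixed neighbourhood. On unbounded intervals I will use the growth condition Assumption~\ref{Assumption_Q}\ref{Assumption 1}, exactly as in Lemma~\ref{lemma:j2- j g2--1}.
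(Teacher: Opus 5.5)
Your proposal is correct, but it takes a different route from the paper's.

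\textbf{The paper's route.} The paper proves the five outer-range lemmas for the centered disk together by a transfer argument. It observes that the index intervals coincide with those defining $E_8^{(\mathrm{an})},\dots,E_{12}^{(\mathrm{an})}$. It then writes
\[
F_{n,j,\ell}^{(\mathrm{an})}=\frac{I_{\mathrm{in}}+I_\ell}{I_{\mathrm{in}}+I_{\mathrm{out}}},
\qquad
F_{n,j,\ell}^{(\mathrm{cd})}=\frac{I_\ell}{I_{\mathrm{out}}}.
\]
Since $I_{\mathrm{in}}/I_{\mathrm{out}}$ is exponentially small on these ranges, the statement for $E_2^{(\mathrm{cd})}$ is inherited from the annular estimate for $E_9^{(\mathrm{an})}$, with the difference absorbed into the remainder.

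\textbf{Your route.} You re-derive the pointwise saturation estimate directly in the centered-disk geometry.
\begin{itemize}
\item The saddle is at distance $\gtrsim M/\sqrt n$ from both $\rho_2$ and $a_2$.
\item Hence the outer hard and semi-hard ratios are $\mathcal O(e^{-cM^2})$, while the outer bulk and soft ratios are $1+\mathcal O(e^{-cM^2})$.
\item The telescoping weights then collapse to $\Omega_{5m+1}^{(7m)}$.
\end{itemize}

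\textbf{Comparison.} For this particular range your argument is more self-contained. It uses no information about the inner component and no comparison with the annulus; in effect it reproves the $E_9^{(\mathrm{an})}$ estimate. The paper's route buys economy: the same normalization comparison also handles the genuinely nontrivial windows at $\rho_2$, $a_2$ and $R$, namely $E_1^{(\mathrm{cd})}$, $E_3^{(\mathrm{cd})}$ and $E_5^{(\mathrm{cd})}$, without redoing the local expansions there.

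\textbf{A small correction.} To identify $h_{n,j}^{(\mathrm{cd})}$ with the full-line integral up to a factor $1+\mathcal O(e^{-cM^2})$, you must discard $\int_0^{\rho_2}$. That needs $V_\tau$ to be decreasing on all of $(0,r_\tau]$, not only on $[\rho_2,r_\tau]$. This follows from the monotonicity of $rq'(r)$ on the smooth neighbourhood of $\overline{\mathbb D}_R$. In fact your ratio bounds only need a lower bound on the denominator. That lower bound comes from an $n^{-1/2}$-neighbourhood of $r_\tau$, which lies inside $[\rho_2,\infty)$ throughout this range.
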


\begin{lemma}\label{lemma:asymptotic expansion E3 counting centered disk}
The expansion of $E_{3}^{(\mathrm{cd})}$ is obtained from
Lemma~\ref{lemma:asymptotic expansion E10 annulus} under the relabeling
\[
E_{10}^{(\mathrm{an})}\mapsto E_{3}^{(\mathrm{cd})},\qquad
C_{E_{10}^{(\mathrm{an})}}^{(q)}\mapsto
C_{E_{3}^{(\mathrm{cd})}}^{(q)},\qquad
\widetilde C_{E_{10}^{(\mathrm{an})}}^{(M)}\mapsto
\widetilde C_{E_{3}^{(\mathrm{cd})}}^{(M)}.
\]
All other quantities, including the remainder, are unchanged.
\end{lemma}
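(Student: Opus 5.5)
The statement is Lemma~\ref{lemma:asymptotic expansion E3 counting centered disk}. I would prove it by checking that, on the index range $g_{a_2,-}\le j\le g_{a_2,+}$, the centered-disk summand
\[
\log\Bigl(1+\sum_{\ell=3m+1}^{7m}\omega_\ell F_{n,j,\ell}^{(\mathrm{cd})}\Bigr)
\]
agrees with the annular summand $\log\bigl(1+\sum_{\ell=1}^{7m}\omega_\ell F_{n,j,\ell}^{(\mathrm{an})}\bigr)$ up to an error $\mathcal O(e^{-cM^2})$ that is uniform in $j$. Once this is in hand, the computation in the proof of Lemma~\ref{lemma:asymptotic expansion E10 annulus} applies verbatim.

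\emph{Comparing the two summands.} The first step is to compare norming constants. In this window the saddle $r_\tau$ lies within $\mathcal O(M/\sqrt n)$ of $a_2$, and $a_2>\rho_2$ is fixed. The decreasing-potential estimate of Lemma~\ref{lemma:j2- j g2--1}, with its roles reversed as in Lemmas~\ref{lemma:g2,++1leq j leq j2,+}--\ref{lemma:j2,++1leq j leq n-1}, therefore gives
\[
h_{n,j}^{(\mathrm{cd})}=h_{n,j}\bigl(1+\mathcal O(e^{-cM^2})\bigr)=h_{n,j}^{(\mathrm{an})}\bigl(1+\mathcal O(e^{-cM^2})\bigr).
\]
The contribution $\int_0^{\rho_1}$ is in fact exponentially small in $n$.

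Next, for $3m+1\le \ell\le 7m$ the numerators $h_{n,j,\ell}^{(\mathrm{cd})}$ and $h_{n,j,\ell}^{(\mathrm{an})}$ differ only by $2\int_0^{\rho_1}ue^{\mathsf k}e^{-nV_\tau}\,du$. This is relatively $\mathcal O(e^{-cn})$ by the same bound. In the annulus, the indices $\ell\le 3m$ have radii at most $\rho_1$ or near $a_1$, so $F_{n,j,\ell}^{(\mathrm{an})}=\mathcal O(e^{-cM^2})$ for them. Those weights are therefore negligible in the annular summand as well.

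Both summands consequently reduce to
\[
\log\Bigl(\Omega_{6m+1}^{(7m)}+\sum_{\ell=5m+1}^{6m}\omega_\ell F_{n,j,\ell}+\mathcal O(e^{-cM^2})\Bigr).
\]
The reason is that the hard and semi-hard radii $r_\ell\in(\rho_2,\rho_2+C/\sqrt n]$ lie $\gg M/\sqrt n$ below the saddle, so $F_{n,j,\ell}=\mathcal O(e^{-cM^2})$ for $3m+1\le\ell\le 5m$. The soft radii lie far beyond the saddle, so $F_{n,j,\ell}=1+\mathcal O(e^{-cM^2})$ for them. The constant $\Omega_{6m+1}^{(7m)}$ is built from the same $\omega_\ell$ in both geometries, since the centered-disk convention only sets $s_\ell=0$ for $\ell\le 3m$. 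The normalization is therefore identical.

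\emph{Conclusion.} Summing over the $\mathcal O(M\sqrt n)$ indices in the window and absorbing the polynomial factor into the exponential gives
\[
E_3^{(\mathrm{cd})}=E_{10}^{(\mathrm{an})}+\mathcal O(e^{-c'M^2}).
\]
With $M=n^{1/12}$ this error is $o(n^{-1/12})$. Lemma~\ref{lemma:asymptotic expansion E10 annulus} then transfers with the stated relabeling: the coefficients $C^{(q)}$, the cutoff term $\widetilde C^{(M)}$ involving $\theta_{a_2,\pm}^{(n,M)}$, and the remainder are all unchanged.

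\emph{Main obstacle.} The only real difficulty is making the exponential bounds uniform over the whole window. I would take these from the tail argument of Lemma~\ref{lemma: 0 j j1-}, using the integrability at infinity supplied by Assumption~\ref{Assumption 1}.
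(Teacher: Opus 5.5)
Your proposal is correct and follows essentially the same route as the paper: the two index windows coincide, and the inner-component integral $2\int_0^{\rho_1}ue^{-nV_\tau(u)}e^{\mathsf{k}(u)}\,du$ is exponentially small relative to $2\int_{\rho_2}^{\infty}ue^{-nV_\tau(u)}e^{\mathsf{k}(u)}\,du$ uniformly on the window. Hence the terms with $\ell\le 3m$ can be dropped and $F_{n,j,\ell}^{(\mathrm{an})}$ replaced by $F_{n,j,\ell}^{(\mathrm{cd})}$ at a cost absorbed into the remainder, so Lemma~\ref{lemma:asymptotic expansion E10 annulus} transfers term by term (your further reduction of both summands to the bulk form near $a_2$ is not needed, but it is harmless).
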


\begin{lemma}\label{lemma:asymptotic expansion E4 centered disk}
There exists $c>0$ such that, as $n\to\infty$,
\[
E_{4}^{(\mathrm{cd})}
=\bigl(g_{R,-}-g_{a_2,+}-1\bigr)\log\Omega_{6m+1}^{(7m)}
+\mathcal{O}(e^{-cM^2}).
\]
Thus its expansion agrees with that of $E_{11}^{(\mathrm{an})}$ in
Lemma~\ref{lemma:asymptotic expansion E11 annulus}.
\end{lemma}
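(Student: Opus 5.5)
The plan is to run the same argument as in Lemma~\ref{lemma:asymptotic expansion E9 annulus} and Lemma~\ref{lemma:asymptotic expansion E11 annulus}, now in the index range $g_{a_2,+}+1\le j\le g_{R,-}-1$ of the centered-disk geometry. In this range the saddle $r_\tau$ lies strictly between $a_2$ and $R$. It is at distance at least a fixed multiple of $M/\sqrt n$ from both $r_{5m+1},\dots,r_{6m}$ and $r_{6m+1},\dots,r_{7m}$, and at a macroscopic distance from $\rho_2$. This follows from the inverse-function argument based on $(rq'(r))'=4r\Delta Q(r)>0$, exactly as in the proofs of Lemma~\ref{lemma: j1- j g1-} and Lemma~\ref{lemma:bulk case asymptotic expansion}.

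First, for $3m+1\le \ell\le 6m$, so that $r_\ell\le a_2+\mathcal O(n^{-1/2})<r_\tau-cM/\sqrt n$, I will bound
\[
F_{n,j,\ell}^{(\mathrm{cd})}=\frac{\int_{\rho_2}^{r_\ell}2ue^{\mathsf k(u)}e^{-nV_\tau(u)}\,du}{\int_{\rho_2}^{\infty}2ue^{\mathsf k(u)}e^{-nV_\tau(u)}\,du}=\mathcal O(e^{-cM^2}).
\]
The numerator is controlled by the off-saddle quadratic lower bound $V_\tau(u)-V_\tau(r_\tau)\ge c(u-r_\tau)^2$ near the saddle together with monotonicity away from it. The denominator equals $h_{n,j}\,(1+\mathcal O(e^{-cM^2}))$, since the excluded disk $\mathbb D_{\rho_2}$ contributes only exponentially small mass, as in Lemma~\ref{lemma:g2,++1leq j leq j2,+}. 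In the same way, for $6m+1\le\ell\le7m$ we have $r_\ell\ge R-\mathcal O(n^{-1/2})>r_\tau+cM/\sqrt n$, and the complementary tail $\int_{r_\ell}^\infty$ is relatively $\mathcal O(e^{-cM^2})$. Hence $F_{n,j,\ell}^{(\mathrm{cd})}=1+\mathcal O(e^{-cM^2})$ for these $\ell$. The tail at infinity is handled as in Lemma~\ref{lemma:j2- j g2--1}, using Assumption~\ref{Assumption_Q}\ref{Assumption 1} with a fixed integer $p$.

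Next I substitute these estimates into the summand. Since $\sum_{\ell=6m+1}^{7m}\omega_\ell+1=\Omega_{6m+1}^{(7m)}$ (telescoping), this gives
\[
1+\sum_{\ell=3m+1}^{7m}\omega_\ell F_{n,j,\ell}^{(\mathrm{cd})}=\Omega_{6m+1}^{(7m)}+\mathcal O(e^{-cM^2})
\]
uniformly in $j$ and locally uniformly in the parameters. Because $\Omega_{6m+1}^{(7m)}=e^{s_{6m+1}+\cdots+s_{7m}}$ stays bounded away from zero, the logarithm is $\log\Omega_{6m+1}^{(7m)}+\mathcal O(e^{-cM^2})$. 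Summing over the $g_{R,-}-g_{a_2,+}-1\le n$ indices and shrinking $c$ to absorb the factor $n$ yields the claim. It coincides with Lemma~\ref{lemma:asymptotic expansion E11 annulus}, since the inner groups, which are absent here, played no role in that range.

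The only step requiring care is the uniformity near the two ends of the range, where $r_\tau$ lies only $\asymp M/\sqrt n$ from $a_2$ or $R$. There I will use the Gaussian tail bound from the truncated Laplace expansion in Lemma~\ref{lemma:bulk case asymptotic expansion}: the relevant scaled distance $\xi$ satisfies $|\xi|\gtrsim M$, so $\operatorname{erfc}$ contributes $e^{-cM^2}$. The constants remain uniform because $\Delta Q$ is bounded below on a neighborhood of $[a_2,R]$.
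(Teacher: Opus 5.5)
Your proof is correct, but for this particular statement it takes a different route from the paper.

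\textbf{The paper's route.} The paper does not estimate the centered-disk summands directly. It treats $E_1^{(\mathrm{cd})},\dots,E_5^{(\mathrm{cd})}$ together by a transfer argument. It observes that their index intervals coincide with those of $E_8^{(\mathrm{an})},\dots,E_{12}^{(\mathrm{an})}$. It then writes $F_{n,j,\ell}^{(\mathrm{an})}-F_{n,j,\ell}^{(\mathrm{cd})}=I_{\mathrm{in}}(I_{\mathrm{out}}-I_\ell)/\bigl(I_{\mathrm{out}}(I_{\mathrm{in}}+I_{\mathrm{out}})\bigr)$. Because $I_{\mathrm{in}}/I_{\mathrm{out}}$ is exponentially small on these ranges, Lemma~\ref{lemma:asymptotic expansion E11 annulus} carries over to $E_4^{(\mathrm{cd})}$ term by term.

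\textbf{Your route.} You run the Gaussian-tail argument of the annular lemma directly on the ratios $F_{n,j,\ell}^{(\mathrm{cd})}$.
\begin{itemize}
\item The saddle sits at scaled distance $\gtrsim M$ from both the $a_2$ and $R$ windows.
\item Monotonicity of $V_\tau$ on either side of $r_\tau$ and the quadratic lower bound near $r_\tau$ make the groups $\ell=3m+1,\dots,6m$ contribute $\mathcal O(e^{-cM^2})$, and the soft group contribute $1+\mathcal O(e^{-cM^2})$.
\item Telescoping then produces $\Omega_{6m+1}^{(7m)}$.
\end{itemize}
The factor $\sqrt n$ from the Laplace lower bound on the denominator must be absorbed by shrinking $c$, just as you absorb the factor $n$ from summation; your argument handles this implicitly.

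\textbf{What each buys.} Your argument is self-contained. It needs neither a comparison of the two normalizations nor the negligibility of the inner annular mass. The paper's argument is more economical on the page: a single comparison transfers all five outer ranges at once, including the nontrivial windows $E_1^{(\mathrm{cd})}$, $E_3^{(\mathrm{cd})}$, $E_5^{(\mathrm{cd})}$. The price is that it rests on the annular lemma, whose own proof is exactly the direct estimate you wrote out.
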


\begin{lemma}\label{lemma:asymptotic expansion E5 counting centered disk}
The expansion of $E_{5}^{(\mathrm{cd})}$ is obtained from
Lemma~\ref{lemma:asymptotic expansion E12 annulus} under the relabeling
\[
E_{12}^{(\mathrm{an})}\mapsto E_{5}^{(\mathrm{cd})},\qquad
C_{E_{12}^{(\mathrm{an})}}^{(q)}\mapsto
C_{E_{5}^{(\mathrm{cd})}}^{(q)},\qquad
\widetilde C_{E_{12}^{(\mathrm{an})}}^{(M)}\mapsto
\widetilde C_{E_{5}^{(\mathrm{cd})}}^{(M)}.
\]
In particular, the expansion includes the $M$-dependent term
$\widetilde C_{E_{5}^{(\mathrm{cd})}}^{(M)}$; all other quantities,
including the remainder, are unchanged.
\end{lemma}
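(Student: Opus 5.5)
The statement is Lemma~\ref{lemma:asymptotic expansion E5 counting centered disk}, the transfer of the soft-edge window from the annulus to the centered disk. The plan is to show that, for $g_{R,-}\le j\le n-1$, the logarithmic summand of $E_5^{(\mathrm{cd})}$ agrees with that of $E_{12}^{(\mathrm{an})}$ up to a uniform error $\mathcal O(e^{-cM^2})$. Once this is known, the proof of Lemma~\ref{lemma:asymptotic expansion E12 annulus} can be repeated verbatim.

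The first step is to compare the two norming constants in this range. For these indices the saddle $r_\tau$ satisfies $R-r_\tau=\mathcal O(M/\sqrt n)$, so it sits at a fixed macroscopic distance from both $\rho_1$ and $\rho_2$. The tail estimates of Lemmas~\ref{lemma: 0 j j1-} and \ref{lemma:j2,++1leq j leq n-1} then give the following:
\begin{itemize}
\item $h_{n,j}^{(\mathrm{cd})}=h_{n,j}(1+\mathcal O(e^{-cM^2}))$;
\item $h_{n,j}^{(\mathrm{an})}=h_{n,j}(1+\mathcal O(e^{-cM^2}))$.
\end{itemize}
Only the unconstrained Laplace region near $r_\tau$ contributes, since the removed pieces $[0,\rho_2]$ and $[\rho_1,\rho_2]$ are exponentially small relative to it.

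The second step treats the numerators. For $\ell=3m+1,\dots,7m$, the quantity $h_{n,j,\ell}^{(\mathrm{cd})}=2\int_{\rho_2}^{r_\ell}$ differs from $h_{n,j,\ell}^{(\mathrm{an})}=2\int_0^{\rho_1}+2\int_{\rho_2}^{r_\ell}$ only by the integral over $[0,\rho_1]$, which is again $e^{-nV_\tau(r_\tau)}\mathcal O(e^{-cM^2})$. It follows that
\[
F_{n,j,\ell}^{(\mathrm{cd})}=F_{n,j,\ell}^{(\mathrm{an})}+\mathcal O(e^{-cM^2}),
\]
uniformly in $j$ and $\ell$.

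In the annulus sum, the indices $\ell\le 3m$ have $r_\ell\le\rho_1$. In this range their ratios $F_{n,j,\ell}^{(\mathrm{an})}$ are exponentially small, as already used in the proof of Lemma~\ref{lemma:asymptotics of E12 counting annulus}. This matches the convention $s_\ell=0$ for $\ell\le3m$ imposed in the centered-disk geometry, under which $\omega_\ell=0$ for $\ell\le 3m$ as well.

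The third step handles the logarithm and the summation. The argument of the logarithm stays bounded away from zero for the parameters under consideration. Taking logarithms therefore gives, uniformly in the range,
\[
\log\Bigl(1+\sum_{\ell=3m+1}^{7m}\omega_\ell F_{n,j,\ell}^{(\mathrm{cd})}\Bigr)=\log\Bigl(1+\sum_{\ell=6m+1}^{7m}\omega_\ell F_{n,j,\ell}^{(\mathrm{an})}\Bigr)+\mathcal O(e^{-cM^2}).
\]
Summing over at most $n$ indices and shrinking $c$ absorbs the factor $n$.

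The right-hand side is exactly the sum whose expansion was derived from the soft-edge profile $\mathscr B_1^{(\mathrm s)},\mathscr B_2^{(\mathrm s)}$, using the one-sided formula \eqref{def of one sided Riemann sum}. That derivation produces $C_4^{\#(\mathrm s)}$, $\mathcal D_6^{(\mathrm s)}$, the endpoint term $\widetilde C^{(M)}$ built from $\theta_{R,-}^{(n,M)}$ and $\Omega_{6m+1}^{(7m)}$, and the remainder $\mathcal O(n^{-1/12})$. These carry over unchanged under the stated relabeling.

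I expect the main obstacle to be uniformity in the first step: the relative bounds on $h_{n,j}^{(\mathrm{cd})}/h_{n,j}$ and on the removed interval must hold uniformly up to $j=n-1$, where $r_\tau\to R$ sits at the soft edge. To get this, I would reuse the convexity bound $V_\tau(r)-V_\tau(r_\tau)\ge c(\delta_n')^2$ from the proof of Lemma~\ref{lemma: 0 j j1-}. Its uniformity for $\tau\in[\tau_{\rho_2}/(1-\epsilon),1]$ follows from the uniform positivity of $\Delta Q$ near $[\rho_2,R]$, guaranteed by Assumption~\ref{Assumption_Q}.
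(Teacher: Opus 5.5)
Your proposal is correct and follows essentially the paper's argument. The paper also shows that on the window $g_{R,-}\le j\le n-1$ the centered-disk and annular summands differ only through the inner-component integral, via $F_{n,j,\ell}^{(\mathrm{an})}-F_{n,j,\ell}^{(\mathrm{cd})}=I_{\mathrm{in}}(I_{\mathrm{out}}-I_\ell)/\bigl(I_{\mathrm{out}}(I_{\mathrm{in}}+I_{\mathrm{out}})\bigr)$ with $I_{\mathrm{in}}/I_{\mathrm{out}}$ exponentially small, and then transfers the soft-edge expansion of $E_{12}^{(\mathrm{an})}$ term by term. Routing the comparison through the unconstrained norm $h_{n,j}$ is only a cosmetic difference. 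Your final display also implicitly needs the terms $3m<\ell\le 6m$ to be negligible; this follows from your second step combined with the annular reduction to $\ell\ge 6m+1$.
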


\begin{proof}[Proof of Lemmas~\ref{lemma:asymptotic expansion E1 counting centered disk}--\ref{lemma:asymptotic expansion E5 counting centered disk}]
The five remaining index intervals in the centered-disk decomposition are, in order,
\[
 [g_{2,-},g_{2,+}],\quad
 [g_{2,+}+1,g_{a_2,-}-1],\quad
 [g_{a_2,-},g_{a_2,+}],\quad
 [g_{a_2,+}+1,g_{R,-}-1],\quad
 [g_{R,-},n-1].
\]
These are exactly the intervals defining, respectively,
$E_8^{(\mathrm{an})},\dots,E_{12}^{(\mathrm{an})}$ in the annulus
decomposition.

It remains to compare the normalizations in
\eqref{def of norming constant + counting on annulus} and
\eqref{def of calEncd}.  For a fixed $j$ in any of these intervals, set
\begin{align*}
I_{\mathrm{in}}(j)&:=2\int_0^{\rho_1}
u e^{-nV_\tau(u)}e^{\mathsf{k}(u)}\,du,&
I_{\mathrm{out}}(j)&:=2\int_{\rho_2}^{\infty}
u e^{-nV_\tau(u)}e^{\mathsf{k}(u)}\,du,\\
I_\ell(j)&:=2\int_{\rho_2}^{r_\ell}
u e^{-nV_\tau(u)}e^{\mathsf{k}(u)}\,du,
&&\ell=3m+1,\dots,7m.
\end{align*}
The definitions of the two ensembles give
\[
F_{n,j,\ell}^{(\mathrm{an})}
=\frac{I_{\mathrm{in}}(j)+I_\ell(j)}
       {I_{\mathrm{in}}(j)+I_{\mathrm{out}}(j)},
\qquad
F_{n,j,\ell}^{(\mathrm{cd})}
=\frac{I_\ell(j)}{I_{\mathrm{out}}(j)},
\qquad \ell=3m+1,\dots,7m.
\]
In particular,
\[
F_{n,j,\ell}^{(\mathrm{an})}-F_{n,j,\ell}^{(\mathrm{cd})}
=\frac{I_{\mathrm{in}}(j)
       \bigl(I_{\mathrm{out}}(j)-I_\ell(j)\bigr)}
       {I_{\mathrm{out}}(j)
       \bigl(I_{\mathrm{in}}(j)+I_{\mathrm{out}}(j)\bigr)}.
\]
For $\ell\leq3m$, the annular numerator is supported on the inner component
and is bounded by $I_{\mathrm{in}}(j)$.  On all five intervals, the comparison
estimates in the proofs of
Lemmas~\ref{lemma:asymptotic expansion E8 annulus}--\ref{lemma:asymptotic expansion E12 annulus}
show that $I_{\mathrm{in}}(j)/I_{\mathrm{out}}(j)$ is exponentially small,
uniformly in $j$.  Hence removing the terms with $\ell\leq3m$ and replacing
$F_{n,j,\ell}^{(\mathrm{an})}$ by $F_{n,j,\ell}^{(\mathrm{cd})}$ changes each
of the five logarithmic sums only by an error absorbed by its stated
remainder.  The local coordinates, cutoff indices, and outer-edge parameters
are otherwise identical.  Thus the annular expansions transfer term by term
under $E_{k+7}^{(\mathrm{an})}\mapsto E_k^{(\mathrm{cd})}$ for
$k=1,\dots,5$.
\end{proof}

\begin{proof}[Proof of Theorem~\ref{theorem:counting statistics of centered disk case}]
Insert the expansions from
Lemmas~\ref{lemma:asymptotic expansion E0 counting centered disk}--\ref{lemma:asymptotic expansion E5 counting centered disk}
into $\log\mathcal E_{n,s\lambda,\alpha}^{(\mathrm{cd})}
=\sum_{k=0}^{5}E_k^{(\mathrm{cd})}$.  The terms depending on the auxiliary
cutoff $M$ and on the fractional parts at the common endpoints cancel
pairwise.  Grouping the remaining terms of orders $n$, $\sqrt n$, $\log n$,
and $1$ gives, respectively, the coefficients in
\eqref{def of Csharp3 cd}--\eqref{def of Csharp6 cd}.  The exponentially small
errors are absorbed into $\mathcal{O}(n^{-1/12})$, and the theorem follows.
\end{proof}

\subsection*{Acknowledgements}
The author is grateful to Sung-Soo Byun, Yong-Woo Lee, and Seong-Mi Seo for insightful discussions.
The author acknowledges support from the European Research Council (ERC), Grant Agreement No. 101115687.

\subsection*{AI use disclosure}
The mathematical content of this paper was completed in December 2025 without the use of generative AI. I used ChatGPT Pro 5.6 Sol Ultra in August 2026 to polish the presentation of the paper.


\footnotesize
\begin{thebibliography}{99}

\bibitem{A2018} K. Adhikari, \emph{Hole probabilities for $\beta$-ensembles and determinantal point processes in the complex plane}, \textit{Electron. J. Probab.} \textbf{23} (2018), Paper No. 48, 21 pp.

\bibitem{AR2017} K. Adhikari and N. K. Reddy, \emph{Hole probabilities for finite and infinite Ginibre ensembles}, \textit{Int. Math. Res. Not. IMRN} (2017), no. 21, 6694--6730.

\bibitem{ABE2023} G. Akemann, S.-S. Byun and M. Ebke, \emph{Universality of the number variance in rotational invariant two-dimensional Coulomb gases}, \textit{J. Stat. Phys.} \textbf{190} (2023), no.~1, Paper No. 9, 34 pp.

\bibitem{ABES2023}
G. Akemann, S.-S. Byun, E. Markus and G. Schehr, \emph{Universality in the number variance and counting statistics of the real and symplectic Ginibre ensemble}, J. Phys. A {\bf 56} (2023), no.~49, Paper No. 495202, 53 pp.; MR4671822

\bibitem{AFLS25}
M. Allard, P.~J. Forrester, S. Lahiry and B.-J. Shen,
\emph{Partition function of 2D Coulomb gases with radially symmetric potentials and a hard wall},
arXiv:2506.14738 (2025).

\bibitem{AL25}
M. Allard and S. Lahiry,
\emph{Birth of a gap: Critical phenomena in 2D Coulomb gas},
arXiv:2509.24529 (2025).

\bibitem{ATW2014}
R. Allez, J.~D. Touboul and G. Wainrib, \emph{Index distribution of the Ginibre ensemble}, J. Phys. A {\bf 47} (2014), 042001

\bibitem{ACC2023a}
Y. Ameur, C. Charlier and J. Cronvall,
\emph{The two-dimensional Coulomb gas: fluctuations through a spectral gap},
\textit{Arch. Ration. Mech. Anal.} \textbf{249} (2025), Paper No.~63.

\bibitem{ACC2023b}
Y. Ameur, C. Charlier and J. Cronvall,
\emph{Random normal matrices: eigenvalue correlations near a hard wall},
\textit{J. Stat. Phys.} \textbf{191} (2024), Paper No.~98.

\bibitem{ACC2023c}
Y. Ameur, C. Charlier and J. Cronvall,
\emph{Free energy and fluctuations in the random normal matrix model with spectral gaps},
\textit{Constr. Approx.} \textbf{63} (2026), 279--335.

\bibitem{ACCL1}
Y. Ameur, C. Charlier, J. Cronvall and J. Lenells,
\emph{Exponential moments for disk counting statistics at the hard edge of random normal matrices},
\textit{J. Spectr. Theory} \textbf{13} (2023), no.~3, 841--902.

\bibitem{ACCL2}
Y. Ameur, C. Charlier, J. Cronvall and J. Lenells,
\emph{Disk counting statistics near hard edges of random normal matrices: the multi-component regime},
\textit{Adv. Math.} \textbf{441} (2024), 109549.

\bibitem{AC2026}
Y. Ameur and J. Cronvall,
\emph{On fluctuations of Coulomb systems and universality of the Heine distribution},
\textit{J. Funct. Anal.} \textbf{290} (2026), no.~6, 111301.

\bibitem{AHM2011}
Y. Ameur, H. Hedenmalm and N. Makarov,
\emph{Fluctuations of eigenvalues of random normal matrices},
\textit{Duke Math. J.} \textbf{159} (2011), no.~1, 31--81.

\bibitem{AHM2015}
Y. Ameur, H. Hedenmalm and N. Makarov,
\emph{Random normal matrices and Ward identities},
\textit{Ann. Probab.} \textbf{43} (2015), no.~3, 1157--1201.

\bibitem{AKS2023} Y. Ameur, N.-G. Kang and S.-M. Seo, \emph{The random normal matrix model: insertion of a point charge}, \textit{Potential Anal.} \textbf{58} (2023), no.~2, 331--372.

\bibitem{AS2021}
S. Armstrong and S. Serfaty, \emph{Local laws and rigidity for Coulomb gases at any temperature}, Ann. Probab. \textbf{49"} (2021), 46--121.

\bibitem{ASZ2014} S. N. Armstrong, S. Serfaty and O. Zeitouni, \emph{Remarks on a constrained optimization problem for the Ginibre ensemble}, \textit{Potential Anal.} \textbf{41} (2014), no. 3, 945--958.





\bibitem{B2025}
S.-S. Byun, \emph{Anomalous free energy expansions of planar Coulomb gases: multi-component and conformal singularity}, arXiv:2508.00316.

\bibitem{BC2022}
S.-S. Byun and C. Charlier,
\emph{On the characteristic polynomial of the eigenvalue moduli of random normal matrices},
\textit{Constr. Approx.} \textbf{62} (2025), 471--521.

\bibitem{BCMS2025}
S.-S. Byun, C. Charlier, P. Moreillon and N. Simm, \emph{Precise large deviations in geometric last passage percolation}, arXiv:2510.17470, (to appear in  Comm. Math. Phys).


\bibitem{BFreview} S.-S. Byun and P. J. Forrester, \emph{Progress on the study of the Ginibre ensembles}, KIAS Springer Ser. Math. \textbf{3} Springer, 2025, 221 pp.

\bibitem{BF25}
S.-S. Byun and P. J. Forrester, \emph{Electrostatic computations for statistical mechanics and random matrix applications}, 2025 Matrix Annals Part II (online).


\bibitem{BFKL25}
S.-S. Byun, P. J. Forrester, A. B. J. Kuijlaars and S. Lahiry, \emph{Orthogonal polynomials in the spherical ensemble with two insertions}, SIAM J. Math. Anal. \textbf{58} (2026), 3472--3509.

\bibitem{BFL25}
S.-S. Byun, P. J. Forrester and S. Lahiry, \emph{Properties of the one-component Coulomb gas on a sphere with two macroscopic external charges}, Pure Appl. Funct. Anal. (to appear), arXiv:2501.05061.

\bibitem{BKS2023} S.-S. Byun, N.-G. Kang and S.-M. Seo, \emph{Partition functions of determinantal and Pfaffian Coulomb gases with radially symmetric potentials}, \textit{Comm. Math. Phys.} \textbf{401} (2023), no.~2, 1627--1663.

\bibitem{BKSY2025}
S.-S. Byun, N.-G. Kang, S.-M. Seo and M. Yang, \emph{Free energy of spherical Coulomb gases with point charges}, J. Lond. Math. Soc. (2) \textbf{112} (2025), e70294.


\bibitem{BL2026}
S.-S. Byun and Y.-W. Lee,
\emph{Disc counting statistics of the real Ginibre ensemble}, arXiv:2608.23156.

\bibitem{BLY2026}
S.-S. Byun, Y.-W. Lee, and E. Yoo,
\emph{Confinement transitions in half-space constrained Riesz gases}, arXiv:2608.11813.


\bibitem{BP2026}
S.-S. Byun and S. Park, \emph{Large gap probabilities of complex and symplectic spherical ensembles with point charges}, J. Funct. Anal. \textbf{290} (2026), 111260.

\bibitem{BSY2025}
S.-S. Byun, S.-M. Seo and M. Yang, \emph{Free energy expansions of a conditional GinUE and large deviations of the smallest eigenvalue of the LUE}, Comm. Pure Appl. Math. \textbf{78} (2025), 2247–-2304.

\bibitem{BYY2026}
S.-S. Byun, M. Yang, and E. Yoo, \emph{Free energy expansion of determinantal Coulomb gases in the quadratic fields with a point charge}, arXiv:2605.29594.

\bibitem{C2019}
C. Charlier, \emph{Asymptotics of Hankel determinants with a one-cut regular potential and Fisher–Hartwig singularities}, Int. Math. Res. Not. \textbf{2019} (2019), 7515–-7576.


\bibitem{C2021 FH}
C. Charlier,
\emph{Asymptotics of determinants with a rotation-invariant weight and discontinuities along circles},
\textit{Adv. Math.} \textbf{408} (2022), 108600.

\bibitem{C2021}
C. Charlier,
\emph{Large gap asymptotics on annuli in the random normal matrix model},
\textit{Math. Ann.} \textbf{388} (2024), 3529--3587.
Detailed equation and lemma references are to the expanded preprint
\url{https://arxiv.org/abs/2110.06908v3}.

\bibitem{C2023}
C. Charlier,
\emph{Hole probabilities and balayage of measures for planar Coulomb gases},
arXiv:2311.15285 (2023).

\bibitem{CG2021}
C. Charlier and R. Gharakhloo, Asymptotics of Hankel determinants with a Laguerre-type or Jacobi-type potential and Fisher-Hartwig singularities, Adv. Math. {\bf 383} (2021), 107672

\bibitem{CL2023}
C. Charlier and J. Lenells,
\emph{Exponential moments for disk counting statistics of random normal matrices in the critical regime},
\textit{Nonlinearity} \textbf{36} (2023), no.~3, 1593--1616.

\bibitem{CMV2016}
F. D. Cunden, F. Mezzadri and P. Vivo, \emph{Large deviations of radial statistics in the two-dimensional one-component plasma}, J. Stat. Phys. \textbf{164} (2016), 1062-–1081.

\bibitem{DDMS2024}
B. De~Bruyne, P. Le Doussal, S. N. Majumdar and S. Schehr, \emph{Linear statistics for Coulomb gases: higher order cumulants}, J. Phys. A {\bf 57} (2024), 155002

\bibitem{DIK}
P. Deift, A. Its and I. Krasovsky,
\emph{Asymptotics of Toeplitz, Hankel, and Toeplitz+Hankel determinants with Fisher--Hartwig singularities},
\textit{Ann. of Math.} (2) \textbf{174} (2011), no.~2, 1243--1299.

\bibitem{FenzlLambert}
M. Fenzl and G. Lambert,
\emph{Precise deviations for disk counting statistics of invariant determinantal processes},
\textit{Int. Math. Res. Not. IMRN} \textbf{2022} (2022), no.~10, 7420--7494.

\bibitem{FisherHartwig} M.E. Fisher and R.E. Hartwig, \emph{Toeplitz determinants: Some applications, theorems, and conjectures}, \textit{Advan. Chem. Phys.} \textbf{15} (1968), 333--353.

\bibitem{ForresterHoleProba} P.J. Forrester, \emph{Some statistical properties of the eigenvalues of complex random matrices}, \textit{Phys. Lett. A} \textbf{169} (1992), no. 1-2, 21--24.

\bibitem{Forrester}
P. J. Forrester, \emph{Log-gases and Random Matrices} (LMS-34), Princeton University Press, Princeton, 2010.

\bibitem{Ginibre} J. Ginibre, \emph{Statistical ensembles of complex, quaternion, and real matrices}, \textit{J. Mathematical Phys.} \textbf{6} (1965), 440--449.

\bibitem{HW2021}
H. Hedenmalm and A. Wennman, \emph{Planar orthogonal polynomials and boundary universality in the random normal matrix model}, Acta Math. \textbf{227} (2021), 309-–406. 

\bibitem{HW2024}
H. Hedenmalm and A. Wennman, \emph{Berezin density and planar orthogonal polynomials}, Trans. Amer. Math. Soc. 377 (2024), 4825–4863. 


\bibitem{HWpreprint}
H. Hedenmalm and A. Wennman, \emph{A global asymptotic expansion of the polynomial Bergman density}, preprint.

\bibitem{JLM1993} B. Jancovici, J. Lebowitz and G. Manificat, \emph{Large charge fluctuations in classical Coulomb systems}, \textit{J. Statist. Phys.} \textbf{72} (1993), no. 3-4, 773--787.

\bibitem{Jo2022}
K. Johansson, \emph{Strong Szeg\"{o} theorem on a Jordan curve}, Toeplitz Operators and Random Matrices in Memory of Harold Widom (Basor, et al., eds.), Operator Theory Advances and Applications, Birkh\"{a}user, Basel, 2022. 


\bibitem{JV2023}
K. Johansson and F. Viklund, \emph{Coulomb gas and the Grunsky operator on a Jordan domain with corners}, Invent. Math. (Online), arXiv:2309.00308.

\bibitem{L et al 2019}
B. Lacroix-A-Chez-Toine, J.~A. Monroy Garz\'{o}n, C.~S. Hidalgo Calva, I. P\'{e}rez Castillo, A. Kundu, S.~N. Majumdar and G. Schehr,
\emph{Intermediate deviation regime for the full eigenvalue statistics in the complex Ginibre ensemble},
\textit{Phys. Rev. E} \textbf{100} (2019), 012137.

\bibitem{L et al 2019 b}
B. Lacroix-A-Chez-Toine, S. N. Majumdar and G. Schehr, \emph{Rotating trapped fermions in two dimensions and the complex Ginibre ensemble: Exact results for the entanglement entropy and number variance}, Phys. Rev. A {\bf 100} (2019), 021602

\bibitem{LS2017}
T. Lebl\'{e} and S. Serfaty, \emph{Large deviation principle for empirical fields of log and Riesz gases}, Invent. Math. \textbf{210} (2017), 645--757.

\bibitem{LMO2024}
M. Levi, J. Marzo and J. Ortega-Cerd\`{a}, \emph{Linear statistics of determinantal point processes and norm representations}, Int. Mat. Res. Not. \text{2024}(19), 12869--12903 (2024).

\bibitem{MMO25}
J. Marzo, L.~D. Molag and J. Ortega-Cerd\'{a},
\emph{Universality for fluctuations of counting statistics of random normal matrices},
\textit{J. Lond. Math. Soc.} (2) \textbf{113} (2026), no.~2, e70462.

\bibitem{Noda2025}
K. Noda,
\emph{Partition functions of two-dimensional Coulomb gases with circular root- and jump-type singularities},
arXiv:2510.00843 (2025).

\bibitem{Noda2026}
K. Noda, \emph{Two-dimensional Coulomb gases with multiple outposts}, 	arXiv:2602.22184 

\bibitem{RV2007}
B. Rider and B. Vir\'{a}g,
\emph{The noise in the circular law and the Gaussian free field},
\textit{Int. Math. Res. Not. IMRN} \textbf{2007} (2007), rnm006, 33 pp.

\bibitem{R2025}
N. Rougerie, \emph{Free-energy variations for determinantal 2D plasmas with holes}, arXiv:2510.01745.

\bibitem{SaTo} E. B. Saff and V. Totik, \emph{Logarithmic Potentials with External Fields},  Grundlehren der Mathematischen Wissenschaften, Springer-Verlag, Berlin, 1997.

\bibitem{Ser2023}
S. Serfaty, \emph{Gaussian fluctuations and free energy expansion for Coulomb gases at any temperature}, Ann. Inst. Henri Poincar\'{e}e Probab. Stat. \textbf{59} (2023), 1074–-1142. 




\bibitem{Ser2024}
S. Serfaty, \emph{Lectures on Coulomb and Riesz Gases}, Amer. Math. Soc. Colloq. Publ. (to appear), arXiv:2407.21194.

\bibitem{Seo} S.-M. Seo, \emph{Edge behavior of two-dimensional Coulomb gases near a hard wall}, \textit{Ann. Henri Poincar\'{e}} \textbf{23} (2022), no. 6, 2247--2275.

\bibitem{SF2026}
B.-J. Shen and P. J. Forrester, 
\emph{Large N expansions of the partition function for Coulomb systems on the surface of a cylinder}, arXiv:2609.12487.


\bibitem{ZW2006}
A. Zabrodin and P. Wiegmann,
\emph{Large-$N$ expansion for the 2D Dyson gas},
\textit{J. Phys. A} \textbf{39} (2006), no.~28, 8933--8963.

\end{thebibliography}
\end{document}